\documentclass[12pt,a4paper]{article}
    \usepackage[cp1251]{inputenc}
    \usepackage[english]{babel}
\usepackage{algorithm}
\usepackage[noend]{algpseudocode}    

    \usepackage{amsmath,amssymb,amsthm}

\usepackage{hyperref,xcolor} 
\usepackage{tikz}
\usepackage{xypic}
\usetikzlibrary{shapes}
\usetikzlibrary{arrows.meta}
    % standart margins
    \oddsidemargin=-0.12in
    \textwidth=6.6in
    \topmargin=-0.66in
    \textheight=10in

    \headheight=0cm
    \headsep=0cm

\righthyphenmin=2
     \usepackage{graphicx}
     \usepackage{wrapfig}
    \DeclareGraphicsRule{.bmp}{bmp}{}{}

\newcommand{\TL}{
\mathcal{L}
}

\newcommand{\TD}{
\mathcal{D}
}
\newcommand{\TMD}{
\mathcal{MD}
}
\newcommand{\TPC}{
\mathcal{PC}
}
\newcommand{\TS}{
\mathcal{S}
}
\newcommand{\TBA}{
\mathcal{BA}
}

\newcommand{\TC}{
\mathcal{C}
}

\newcommand{\TMPC}{
\mathcal{MPC}
}

\newcommand{\TML}{
\mathcal{ML}
}

  \def\algA{{\mathbf{A}}}

  \DeclareMathOperator\Perm{Perm}
  \DeclareMathOperator\CSP{CSP}
  \DeclareMathOperator\PCSP{PCSP}
  
   \DeclareMathOperator\Sg{Sg}

\DeclareMathOperator{\proj}{pr}
  
  \DeclareMathOperator\Image{Im}
    \DeclareMathOperator{\Congruences}{Con}

  \DeclareMathOperator\Pol{Pol}

  \DeclareMathOperator\Clo{Clo}

 \newcommand{\cover}[1]{
{{#1}^{*}}
}

  \DeclareMathOperator{\ConOne}{Con}
  \DeclareMathOperator{\Var}{Var}
  \DeclareMathOperator{\Sol}{Sol}
  \DeclareMathOperator{\TwoTuples}{TwoTuples}
  \DeclareMathOperator{\RightLinked}{RightLinked}
  \DeclareMathOperator{\LeftLinked}{LeftLinked}
  \DeclareMathOperator{\Expanded}{ExpCov}

%macros

\setcounter{MaxMatrixCols}{20}

\def\R{{\mathbb R}}

\renewcommand{\le}{\leqslant}
\renewcommand{\ge}{\geqslant}

\theoremstyle{definition}
\theoremstyle{plain}
\newtheorem{thm}{Theorem}
\newtheorem{nothm}{Informal Claim}

\newtheorem{lem}[thm]{Lemma}

\newtheorem{cor}[thm]{Corollary}

\newtheorem{remark}{Remark}

\newtheorem*{LEMMultiTypeStillStableLEM}{Lemma~\ref{LEMMultiTypeStillStable}}

\newtheorem*{LEMPreserveLinkdnessLEM}{Lemma~\ref{LEMPreserveLinkdness}}

\newtheorem*{LEMMaximalMultExtentionLEM}{Lemma~\ref{LEMMaximalMultExtention}}

\newtheorem*{LEMBridgeTOPCCongruenceLEM}{Lemma~\ref{LEMBridgeTOPCCongruence}}

\newtheorem*{LEMLInearOnTheTopIsEasyLEM}{Lemma~\ref{LEMLInearOnTheTopIsEasy}}

\newtheorem*{THMMainStableIntersectionTHM}{Theorem~\ref{THMMainStableIntersection}}

\newtheorem*{CORPropagateToRelationsCOR}{Corollary~\ref{CORPropagateToRelations}}

\newtheorem*{CORMainStableIntersectionCOR}{Corollary~\ref{CORMainStableIntersection}}

\newtheorem*{CORPropagateMultiplyByCongruenceCOR}{Corollary~\ref{CORPropagateMultiplyByCongruence}}

\newtheorem*{LEMLinearEquivalentConditionsLEM}
{Lemma~\ref{LEMLinearEquivalentConditions}}

\newtheorem*{LEMPCBridgesAreTrivialLEM}
{Lemma~\ref{LEMPCBridgesAreTrivial}}

\newtheorem*{LEMPCOnTheTopIsEasyLEM}
{Lemma~\ref{LEMPCOnTheTopIsEasy}}

\newtheorem*{THMBuildAReductionFromXYSymmetricTHM}{Theorem~\ref{THMBuildAReductionFromXYSymmetric}}

\newtheorem*{THMMainTheoremOnXYSymmetricTHM}
{Theorem~\ref{THMMainTheoremOnXYSymmetric}}

\newtheorem*{LEMNoBridgeBetweenDifferentTypesLEM}
{Lemma~\ref{LEMNoBridgeBetweenDifferentTypes}}

\newtheorem*{LEMIntersectALLLEM}
{Lemma~\ref{LEMIntersectALL}}

\newtheorem*{LEMUbiquityLEM}
{Lemma~\ref{LEMUbiquity}}

\newtheorem*{LEMPropagationLEM}
{Lemma~\ref{LEMPropagation}}

\newtheorem*{CORPropagateFromFactorCOR}
{Corollary~\ref{CORPropagateFromFactor}}

\newtheorem*{LEMMakePerfectCongruenceFromLinkedLEM}{Lemma~\ref{LEMMakePerfectCongruenceFromLinked}}

\newtheorem*{THMExistenceOfReductionTHM}{Theorem~\ref{THMExistenceOfReduction}}

\newtheorem*{THMpropagateXYSymmetricTHM}{Theorem~\ref{THMpropagateXYSymmetric}}

\begin{document}

\title{A simplified proof of the CSP Dichotomy Conjecture and XY-symmetric operations}
\author{Dmitriy Zhuk \thanks{The author is funded by the European Union (ERC, POCOCOP, 101071674). Views and opinions expressed are however those of the author(s) only and do not necessarily reflect those of the European Union or the European Research Council Executive Agency. Neither the European Union nor the granting authority can be held responsible for them.}
}

\maketitle
\small 
{\tableofcontents}
\newpage 
\begin{abstract}
We develop a new theory of strong subalgebras and linear congruences that are defined globally. 
%are always global.
Using this theory we provide a new proof
of the correctness of Zhuk's algorithm for all tractable CSPs on a finite domain, 
and therefore a new simplified proof of the CSP Dichotomy Conjecture.
Additionally, using the new theory we prove that 
composing a weak near-unanimity operation of an odd arity $n$ we can derive 
an $n$-ary operation that is symmetric on all two-element sets. 
%$\{a,b\}$.
Thus, CSP over a constraint language $\Gamma$ on a finite domain is 
tractable if and only if 
there exist infinitely many polymorphisms of $\Gamma$ 
%of any sufficiently large prime arity
that are symmetric on all two-element sets.
%$\{a,b\}$.
\end{abstract}

\section{Introduction}

The Constraint Satisfaction Problem (CSP) is the problem 
of deciding whether a set of constraints has a satisfying assignment. 
In general, the problem is NP-hard (or even undecidable for infinite domains) 
and to obtain tractable cases 
we restrict the set of admissible constraints.
Let $A$ be a finite set and $\Gamma$ be a set of relations on $A$, called 
\emph{the constraint language}.
Then $\CSP(\Gamma)$ is the problem of deciding 
whether a conjunctive formula
\begin{align}R_{1}(\dots)\wedge R_{2}(\dots)\wedge \dots\wedge R_{s}(\dots),\tag{\textasteriskcentered}\end{align}
where $R_{1},\dots,R_{s}\in \Gamma$,
is satisfiable.
It was conjectured that
$\CSP(\Gamma)$ is either in P, or NP-complete \cite{FederVardi}.
In 2017, two independent proofs of this conjecture appeared
\cite{zhuk2020proof,ZhukFVConjecture,BulatovFVConjecture,BulatovProofCSP},
and the conjecture became a theorem.
To formulate it properly, we need two definitions.

%The result of Mckenzie and Mar{\'o}ti~\cite{miklos} allows us to formulate the dichotomy conjecture in the following nice way.
An operation $f$ on a set $A$ is called \emph{a weak near-unanimity (WNU) operation} if it satisfies 
$f(y,x,\ldots,x) = f(x,y,x,\ldots,x) = \dots = f(x,x,\ldots,x,y)$
for all $x,y\in A$.
We say that an operation
$f\colon A^{n}\to A$ \emph{preserves} a relation $R\subseteq A^{m}$
if
$$(a_{1,1},\ldots,a_{1,m}),\dots,(a_{n,1},\ldots,a_{n,m})\in R
\Rightarrow
(f(a_{1,1},\ldots,a_{n,1}),\ldots,f(a_{1,m},\ldots,a_{n,m}))\in R.$$
We say that an operation \emph{preserves a set of relations $\Gamma$} if it preserves every relation in $\Gamma$.
If $f$ preserves $R$ or $\Gamma$, we also say that 
$f$ is \emph{a polymorphism} of $R$ or 
$f$ is  \emph{a polymorphism} of $\Gamma$, 
and write $f\in\Pol(R)$ or $f\in\Pol(\Gamma)$, respectively.

\begin{thm}[\cite{zhuk2020proof,ZhukFVConjecture,BulatovFVConjecture,BulatovProofCSP}]\label{mainthm}
Suppose $\Gamma$ is a finite set of relations on 
a finite set $A$.
Then $\CSP(\Gamma)$ can be solved in polynomial time if there exists a WNU
preserving $\Gamma$;
$\CSP(\Gamma)$ is NP-complete otherwise.
\end{thm}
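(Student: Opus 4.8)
The plan is to prove both halves of Theorem~\ref{mainthm} by reducing each to results whose proofs occupy the body of the paper.

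\textbf{The hardness side.} For the NP-completeness half I would argue the contrapositive: if $\Gamma$ is a finite constraint language admitting no WNU polymorphism, then $\CSP(\Gamma)$ is NP-complete. First I would recall the algebraic dichotomy machinery: the complexity of $\CSP(\Gamma)$ depends only on the clone $\Pol(\Gamma)$ (indeed only on the associated idempotent algebra obtained by adding all constants), since relational clones generated by $\Gamma$ are exactly $\Inv(\Pol(\Gamma))$ and pp-definitions give polynomial-time reductions. Then I would invoke the classical result of Maroti--McKenzie that a finite idempotent algebra omits type $\mathbf{1}$ (in the sense of tame congruence theory) if and only if it has a WNU term operation; equivalently, absence of a WNU polymorphism forces the variety generated by the idempotent reduct to admit the unary type, and by Bulatov--Jeavson--Krokhin this yields a pp-interpretation of $3\text{-}\mathrm{SAT}$ (or of a system of linear equations over a two-element set, or of $E_k$-colouring) in $\Gamma$, hence NP-hardness; membership in NP is trivial since a satisfying assignment is a polynomial-size certificate. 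This direction uses no new machinery from the paper and I would present it briefly, citing \cite{FederVardi} and the algebraic-approach literature.

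\textbf{The tractability side.} This is where the paper's contribution enters and where the real work lies. Assume $\Gamma$ has a WNU polymorphism $\omega$ of some arity. By a standard composition trick one may assume $\omega$ is idempotent and, composing it with itself, that it is a WNU of every arity in an arithmetic progression; in particular one can fix a WNU of a convenient odd arity. The plan is to run Zhuk's algorithm on an instance of $\CSP(\Gamma)$: the algorithm repeatedly (i) propagates constraints via consistency, (ii) if the instance is not linked or decomposes, splits it along a nontrivial strong subalgebra or a linear/affine factor and recurses, and (iii) otherwise reduces each variable's domain to a proper strong subalgebra. Correctness amounts to showing that these reduction steps are sound --- that a solution is not lost --- and this is exactly what the new theory of globally defined strong subalgebras and linear congruences developed in the body of the paper (Lemmas~\ref{LEMMultiTypeStillStable}, \ref{LEMPreserveLinkdness}, \ref{LEMMaximalMultExtention}, \ref{LEMBridgeTOPCCongruence}, and the main stable-intersection Theorem~\ref{THMMainStableIntersection} together with its corollaries) is designed to establish. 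So the proof of this half is: cite Theorem~\ref{THMExistenceOfReduction} (or whichever theorem in the paper packages the correctness of the algorithm) to conclude that Zhuk's algorithm decides $\CSP(\Gamma)$ correctly, and observe that it runs in polynomial time because each recursive call strictly decreases a domain-size potential while the consistency steps are polynomial.

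\textbf{Main obstacle.} The genuine difficulty --- deferred entirely to the paper's body --- is the soundness of the strong-subalgebra reduction when an instance is linked but still admits a proper strong subalgebra on some coordinate: one must show that the intersection over all constraints of the corresponding reductions is itself nonempty and still a valid instance, which is the role of the "stable intersection" results and the propagation lemmas. Relative to that, the present theorem is a bookkeeping statement: the hardness half is folklore given the algebraic approach, and the tractability half is a one-line appeal to the correctness theorem proved later. I would therefore keep this proof short, making explicit only the reduction of $\CSP(\Gamma)$'s complexity to properties of $\Pol(\Gamma)$, the passage to an idempotent WNU of suitable arity, and the citations to the hardness literature on one side and to the paper's algorithm-correctness theorem on the other.
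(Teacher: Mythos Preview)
Your proposal is essentially correct and mirrors the paper's own treatment: the paper does not give a self-contained proof of Theorem~\ref{mainthm} but rather cites the hardness direction to \cite{bulatov2001algebraic,CSPconjecture,miklos} and devotes the body to establishing correctness of Zhuk's algorithm for the tractable side. One correction: the theorem that packages algorithm correctness is not Theorem~\ref{THMExistenceOfReduction} (that one belongs to the XY-symmetric part of the paper) but rather Theorems~\ref{THMCSPDReductionsAreSafe} and~\ref{THMCodimensionOneTheorem} in Section~\ref{SectionCSPMainClaims}, which are the formal versions of Informal Claims~\ref{ICReductionTOStrong} and~\ref{ICCodimensionOne}; together with the existence of strong/linear subuniverses (Lemma~\ref{LEMUbiquity}, playing the role of Informal Claim~\ref{ICExistenceStrong}) these are precisely what the algorithm described in the introduction needs.
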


The NP-hardness for constraint languages without a WNU follows from 
\cite{bulatov2001algebraic,CSPconjecture}
and \cite{miklos}. 
The essential part of each proof
of the CSP Dichotomy Conjecture is 
a polynomial algorithm that works for all tractable cases,
and the tricky and cumbersome part is to show that the algorithm works correctly.

One of the two main ingredients of Zhuk's proof
is the idea of 
strong/linear subagebras that exist in every finite algebra with a WNU term operation. 
We may assume that the domain of each variable $x$ 
in (\textasteriskcentered)
is a subset (subuniverse) $D_{x}$ of $A$,
and each $D_{x}$ has a strong/linear subset.
We prove the existence of a solution (or some properties) 
of the instance by gradually reducing the domains $D_{x}$ of the variables 
to such strong subsets until all the domains are singletons.

The crucial disadvantage 
of this approach is 
that the linear subalgebras we obtain only exist  locally, whereas the properties we want to prove are global.
For example, we could start with a domain $D_{x}=\{0,1,\dots,99\}$ for some variable $x$.
We gradually reduce this domain
$D_{x}\supsetneq D_{x}^{(1)}\supsetneq D_{x}^{(2)}
\supsetneq\dots \supsetneq D_{x}^{(10)}=\{0,1\}$,
and on $\{0,1\}$ our instance is just a system of linear equations 
modulo 2. Nevertheless, the strong properties  
we have on $\{0,1\}$ do not say much about 
the behaviour on the whole domain $\{0,1,\dots,99\}$.
As a result, we are forced to go forward and backward from
global $\{0,1,\dots,99\}$ 
to local $\{0,1\}$, and use a very complicated induction 
to prove most of the claims.

In this paper, we develop a new theory such that 
every reduction is either strong or global. 
Precisely, 
for every domain 
$D_{x}$ we can build 
a sequence 
$D_{x}\supsetneq D_{x}^{(1)}\supsetneq D_{x}^{(2)}
\supsetneq\dots \supsetneq D_{x}^{(s)}=\{a\}$ 
such that for every $i\in\{1,\dots,s\}$ 
\begin{enumerate}
    \item either $D_{x}^{(i+1)}$ is a strong subset %(subuniverse)
    of $D_{x}^{(i)}$, 
    \item or there exists an equivalence relation $\sigma$ on $D_{x}$ satisfying very strong properties such that 
    $D_{x}^{(i+1)}$ is an intersection of $D_{x}^{(i)}$ with a block of 
    this equivalence relation. 
\end{enumerate}
In the above example with 
$D_{x}^{(10)}=\{0,1\}$ we would have an equivalence relation
on $\{0,1,\dots,99\}$ such that $0$ and $1$ are in different blocks of this
equivalence relation, and the linear behaviour on $\{0,1\}$ is due to the properties of the equivalence relation.  

Another good feature of the new approach is that 
whenever we have such a sequence of ``good'' subsets
$D_{s}\subsetneq D_{s-1}\subsetneq\dots\subsetneq D_{1}$, 
we do not care about the types of the subsets in the middle.
We only need to know that such a sequence exists, which we denote by
$D_{s}\lll D_{1}$.

Finally, the new theory connected the two main ideas of Zhuk's proof: the idea of strong subalgebras and the idea of bridges and connectedness. 
Originally, they lived separate lifes.
Using strong subalgebras and reductions we showed that all the relations have the parallelogram property, which gives us an 
irreducible congruence for every constraint and its variable.
Then manipulating with the instance we tried to connect 
the congruences (variables) with bridges.
In the new theory, bridges appear naturally from strong/linear subuniverses: whenever a restriction to strong/linear subuniverses gives an empty set, it immediately gives a bridge 
between congruences (equivalence relations) defining these subuniverses.

Using the new theory we obtain two results presented in the next two subsections.

\subsection{A simplified proof of the CSP Dichotomy Conjecture} 
First, we provide a new proof of the correctness of Zhuk's algorithm. 
Three main statements that imply the correctness are formulated in Section ''Correctness of the algorithm`` 
%from Zhuk's proof of the CSP Dichotomy Conjecture 
in \cite{zhuk2020proof}.
Below we formulate informal analogues of these statements, 
and the formal statements can be found in Section \ref{SectionCSPMainClaims}.

%all the essential statements 
%from Section ''Correctness of the algorithm`` from 
%Zhuk's proof of the CSP Dichotomy Conjecture \cite{zhuk2020proof}.
%and here we formulate them informally. 

\begin{nothm}\label{ICExistenceStrong}
Suppose $\Gamma$ is a constraint language  preserved by 
a WNU operation $w$.
Then each $D_{x}$ of size at least 2 has a strong subset (subuniverse) 
or an equivalence relation $\sigma$ such that 
$D_{x}/\sigma\cong \mathbb Z_{p}$
for some prime $p$.
\end{nothm}

\begin{nothm}\label{ICReductionTOStrong}
Suppose 
\begin{enumerate}
    \item $\Gamma$ is a constraint language  preserved by 
a WNU operation $w$;

\item $\mathcal I$ is a consistent enough (cycle-consistent + irreducible) instance of $\CSP(\Gamma)$;
\item $\mathcal I$ has a solution;
\item $B$ is a strong subset of $D_{x}$, where $x$ is a variable of $\mathcal I$.
\end{enumerate}
Then $\mathcal I$ has a solution with $x\in B$.
\end{nothm}

\begin{nothm}\label{ICCodimensionOne}
Suppose 
\begin{enumerate}
    \item $\Gamma$ is a constraint language  preserved by 
a WNU operation $w$;
\item $\mathcal I$ is a consistent enough (cycle-consistent + irreducible + another one) instance of $\CSP(\Gamma)$ with variables 
$x_1,\dots,x_n$;
\item $D_{x_{i}}$ has no strong subsets %(subuniverses)
for any $i$;
\item $\mathcal I$ is linked, i.e. the following graph is connected: 
the vertices are all pairs $(x_i,a)$, where $a\in D_{x_{i}}$, 
and vertices $(x_i,a)$ and $(x_j,b)$ are adjacent whenever 
there is a constraint in $\mathcal I$ whose projection onto $x_i,x_j$ contains $(a,b)$;
\item $\sigma_{x_i}$ is the minimal equivalence relation on each 
$D_{x_i}$ such that $D_{x_i}/\sigma_{x_i}\cong \mathbb Z_{q_1}\times\dots\times \mathbb Z_{q_{n_i}}$;
\item $\varphi:\mathbb Z_{p_1}\times \dots\times \mathbb Z_{p_{m}}\to D_{x_{1}}/\sigma_{x_1}\times 
\dots\times D_{x_{n}}/\sigma_{x_n}$ is a linear map;
\item if we remove any constraint from $\mathcal I$ 
then the obtained instance has a solution inside $\varphi(\alpha)$ for every $\alpha\in \mathbb Z_{p_1}\times \dots\times \mathbb Z_{p_{m}}$. 
\end{enumerate}
Then 
$\{(a_1,\dots,a_{m})\mid \mathcal I \text{ has a solution in }\varphi(a_1,\dots,a_m)\}$ is either empty, or full, 
or an affine subspace of $\mathbb Z_{p_1}\times \dots\times \mathbb Z_{p_{m}}$ of dimension $m-1$.
\end{nothm}

\newcommand{\Output}{\mbox{Output}}
\newcommand{\Changed}{\mbox{Changed}}
\newcommand{\calC}{\mathbf{C}}
\newcommand{\calD}{\mathbf{D}}
\newcommand{\X}{\mathbf{X}}

\newcommand{\Solve}{\mbox{\textsc{Solve}}}
\newcommand{\ForceConsistency}{\mbox{\textsc{ForceConsistency}}}
\newcommand{\SolveLinear}{\mbox{\textsc{SolveLinear}}}
\newcommand{\ReduceDomain}{\mbox{\textsc{ReduceDomain}}}
\newcommand{\FindClass}{\mbox{\textsc{FindClass}}}
\newcommand{\CheckWeakerInstance}{\mbox{\textsc{CheckWeakerInstance}}}
\newcommand{\SolveLinearCase}{\mbox{\textsc{SolveLinearCase}}}
\newcommand{\RemoveTriv}{\mbox{\textsc{RemoveTrivialities}}}
\newcommand{\FactorizeInstance}{\mbox{\textsc{FactorizeInstance}}}
\newcommand{\Reduce}{\mbox{\textsc{Reduce}}}
\newcommand{\SolveLinearSystem}{\mbox{\textsc{SolveLinearSystem}}}
\newcommand{\WeakenConstraint}{\mbox{\textsc{WeakenConstraint}}}
\newcommand{\CheckAllTuples}{\mbox{\textsc{CheckAllTuples}}}
\newcommand{\SolveNonlinked}{\mbox{\textsc{SolveNonlinked}}}
\newcommand{\FindOneEquationLinked}{\mbox{\textsc{FindOneEquationLinked}}}
\newcommand{\FindOneEquationNonlinked}{\mbox{\textsc{FindOneEquationNonlinked}}}
\newcommand{\FindEquationsForNonlinked}{\mbox{\textsc{FindEquationsNonlinked}}}
\newcommand{\WeakenEveryConstraint}{\mbox{\textsc{WeakenEveryConstraint}}}

Let us explain how Zhuk's algorithm works (for the precise algorithm 
see \cite{zhuk2020proof,ZhukFVConjecture}).
The main function $\Solve$ takes a CSP instance $\mathcal I$
with variables $x_1,\dots,x_n$ as an input
(see the pseudocode).
First, it forces a sufficient level of consistency by 
function $\ForceConsistency$.
If we cannot achieve this, then the instance has no solutions, and we answer ``No''.
Then, if there exists a strong subset $B$ of the domain $D_{x_{i}}$
of some variable $x_{i}$, it reduces the domain
of $x_{i}$ to $B$ by $\ReduceDomain$, and forces the consistency again.
This procedure is justified by Informal Claim \ref{ICReductionTOStrong},
which guarantees that we cannot lose all the solutions when reduce to 
a strong subset.
If there are no strong subsets, then 
Informal Claim \ref{ICExistenceStrong}
implies that 
for every domain $D_{x_{i}}$ of size at least 2 there exists 
an equivalence relation $\sigma_{x_{i}}$ such that $D_{x_{i}}/\sigma_{x_{i}}\cong 
\mathbb Z_{q_1}\times\dots\times \mathbb Z_{q_{n_{i}}}$ for some $n_{i}\ge 1$.
Choose $\sigma_{x_{i}}$ to be minimal and therefore $n_{i}$ to be maximal for every $i$.
This case is solved by a separate function $\SolveLinear$.

\begin{algorithm}
\begin{algorithmic}[1]
\Function{\Solve}{$\mathcal I$}
%    \State{\textbf{Input:} CSP($\Gamma$) instance $\mathcal I=(\X,\calD,\calC)$,
%$\X=(x_1,\ldots,x_n)$, $\calD = (D_1,\ldots,D_n)$}
\Repeat
    \State{$\mathcal I: = \ForceConsistency(\mathcal I)$}
    \If{$\mathcal I = false$} \Return{``No"}
    \EndIf
    \If{$D_{x_i}$ has a strong subset $B$} 
        \State{$\mathcal I:= \ReduceDomain(\mathcal I,x_i,B)$}
    \EndIf
\Until{nothing changed}
\State \Return{$\SolveLinear(\mathcal I)$}
\EndFunction
\end{algorithmic}
\end{algorithm}

Let $\varphi:\mathbb Z_{p_1}\times \dots\times \mathbb Z_{p_m}\to 
D_{x_1}/\sigma_{x_1}\times\dots\times D_{x_n}/\sigma_{x_n}$ be a 
linear map.
By $\varphi^{-1}(\mathcal I)$ we denote 
$\{\alpha\in \mathbb Z_{p_1}\times \dots\times \mathbb Z_{p_{m}}\mid
\mathcal I \text{ has a solution in }\varphi(\alpha)\}.$ 
Calculating $\varphi^{-1}(\mathcal I)$ would solve the instance
$\mathcal I$
because $\mathcal I$ has a solution if and only if 
$\varphi^{-1}(\mathcal I)$ is not empty.
We do not know how to calculate $\varphi^{-1}(\mathcal I)$ but we can do the following calculations:
\begin{enumerate}
    \item[(p0)] For a concrete $\alpha\in \mathbb Z_{p_1}\times \dots\times \mathbb Z_{p_{m}}$ check whether 
    $\varphi^{-1}(\mathcal I)$ contains $\alpha$:
    \begin{enumerate}
        \item reduce each domain $D_{x_{i}}$ to the $i$-th element of 
        $\varphi(\alpha)$, which is a block of 
        $\sigma_{x_{i}}$, and solve CSP on a smaller domain by recursion.
    \end{enumerate}
    \item[(p1)] Check whether $\varphi^{-1}(\mathcal I)=\mathbb Z_{p_1}\times \dots\times \mathbb Z_{p_{m}}$:
    \begin{enumerate}
        \item using (p0) check that $(0,\dots,0)\in \varphi^{-1}(\mathcal I)$;
        \item using (p0) check that $(\underbrace{0,\dots,0,1}_{i},0,\dots,0)\in \varphi^{-1}(\mathcal I)$ for every $i$.        
    \end{enumerate}
    \item[(p2)] Calculate $\varphi^{-1}(\mathcal I)$ if $\mathcal I$ is not linked (see condition 4 in Informal Claim \ref{ICCodimensionOne}):
        \begin{enumerate}
        \item split $\mathcal I$ into linked instances $\mathcal I_1,\dots,\mathcal I_r$ on smaller domains;
        \item using recursion calculate $\varphi^{-1}(\mathcal I_{i})$ for every $i$;
        \item $\varphi^{-1}(\mathcal I) = \varphi^{-1}(\mathcal I_1)\cup 
        \dots\cup\varphi^{-1}(\mathcal I_r)$.
    \end{enumerate}
    \item[(p3)] Calculate $\varphi^{-1}(\mathcal I)$ if 
    the dimension of $\varphi^{-1}(\mathcal I)$ is $m-1$ or $\varphi^{-1}(\mathcal I)$ is empty.
        \begin{enumerate}
        \item using (p1) find $(a_1,\dots,a_m)\notin \varphi^{-1}(\mathcal I)$;
        \item for every $i$ using (p0) find $b_{i}$ such that 
        $(a_1,\dots,a_{i-1},b_{i},a_{i+1},\dots,a_m)\in \varphi^{-1}(\mathcal I)$ if it exists;
        \item let $J$ be the set of all $i\in\{1,2,\dots,m\}$ such that $b_{i}$ exists;
        \item the equation defining $\varphi^{-1}(\mathcal I)$ is 
        $\sum_{i\in J}(y_{i}-a_{i})/(b_{i}-a_{i}) = 1$.
        \footnote{The fact that different variables $y_{i}$ take on values from different 
        fields $\mathbb Z_{p_{i}}$ is not a problem as $J$ may contain 
        only variables on the same field.}
    \end{enumerate}
\end{enumerate}

\begin{algorithm}
\begin{algorithmic}[1]
\Function{\SolveLinear}{$\mathcal I$}
%    \State{\textbf{Input:} CSP($\Gamma$) instance $\mathcal I=(\X,\calD,\calC)$,
%$\X=(x_1,\ldots,x_n)$, $\calD = (D_1,\ldots,D_n)$}
\State{$m:=$ the dimension of 
$D_{x_1}/\sigma_{x_1}\times\dots\times D_{x_n}/\sigma_{x_n}$}
\Comment{$m\ge 0$}
\State{$\varphi:=$ a bijective linear map 
$\mathbb Z_{p_1}\times \dots\times \mathbb Z_{p_{m}}\to D_{x_1}/\sigma_{x_1}\times\dots\times D_{x_n}/\sigma_{x_n}$}
%\Repeat
\While{$\varphi^{-1}(\mathcal I)\neq\mathbb Z_{p_1}\times \dots\times \mathbb Z_{p_{m}}$}    \Comment{using (p1)}
\State{$\mathcal I':=\mathcal I$} 
%    \Repeat    \Comment{remove unnecessary constraint from $\mathcal I'$} 
        \For{$C\in\mathcal I'$} \Comment{remove unnecessary constraint from $\mathcal I'$} 
            \If{$\varphi^{-1}(\mathcal I'\setminus \{C\})\neq  \mathbb Z_{p_1}\times \dots\times \mathbb Z_{p_{m}}$}
             \Comment{using (p1)} 
                \State{$\mathcal I':=\mathcal I'\setminus \{C\}$}
            \EndIf
        \EndFor
%    \Until{nothing changed}    
    \State{$F := \varphi^{-1}(\mathcal I')$} \Comment{using (p2) or (p3)} 
    \If{$F=\varnothing$}  \Return{``No"}
    \EndIf
    \State{$m:=$ the dimension of $F$}
    \State{$\psi:=$ a bijective linear map $\mathbb Z_{p_1}\times \dots\times \mathbb Z_{p_{m}}\to F$}\Comment{$p_1,\dots,p_{m}$ are also updated}
    \State{$\varphi:=\varphi\circ\psi$}    
%\Until{$\varphi^{-1}(\mathcal I)=\mathbb Z_{p}^{m}$}    \Comment{using (p1)} 
\EndWhile
\State \Return{``Yes"}
\EndFunction
\end{algorithmic}
\end{algorithm}

The function $\SolveLinear$ solving the remaining case works as follows
(see the pseudocode).
We start with a bijective linear map 
$\varphi:\mathbb Z_{p_1}\times \dots\times \mathbb Z_{p_{m}}\to 
D_{x_1}/\sigma_{x_1}\times\dots\times D_{x_n}/\sigma_{x_n}$.
We gradually reduce the dimension $m$ maintaining the property that $\Image{\varphi}$
contains all the solutions of $\mathcal I$.
We stop when 
$\varphi^{-1}(\mathcal I) = \mathbb Z_{p_1}\times \dots\times \mathbb Z_{p_{m}}$ or 
$\varphi^{-1}(\mathcal I)$ is empty.
First, we make a copy $\mathcal I'$ of the instance $\mathcal I$ and remove 
all the constraints from $\mathcal I'$
that can be removed so that $\mathcal I'$ 
maintains the property 
$\varphi^{-1}(\mathcal I') \neq \mathbb Z_{p_1}\times \dots\times \mathbb Z_{p_{m}}$.
This property can be checked in polynomial time 
using (p1).
If we cannot remove any other constraint, 
by Informal Claim \ref{ICCodimensionOne} we have one of the following cases:
either $\mathcal I'$ is not linked  and we can calculate 
$\varphi^{-1}(\mathcal I')$ using (p2); 
or $\mathcal I'$ satisfies all the conditions of 
Informal Claim \ref{ICCodimensionOne}, and 
$\varphi^{-1}(\mathcal I')$ has dimension $m-1$ or is empty.
In the second case we can calculate $\varphi^{-1}(\mathcal I')$ using (p3).
Since $\mathcal I'$ was obtained from $\mathcal I$
by removing some constraints, 
we have $\varphi^{-1}(\mathcal I)\subseteq \varphi^{-1}(\mathcal I')$.
Thus, we found a smaller affine subspace $\varphi^{-1}(\mathcal I')$ that still covers all the solutions of $\mathcal I$.
It remains to replace $m$ with the 
dimension of $\varphi^{-1}(\mathcal I')$
and update the linear map $\varphi$.
Since we cannot reduce the dimension $m$
forever, we will eventually stop 
in one of the two cases:
$\varphi^{-1}(\mathcal I) = \mathbb Z_{p_1}\times \dots\times \mathbb Z_{p_{m}}$ or 
$\varphi^{-1}(\mathcal I)$ is empty.
First of them implies the existence of a solution for $\mathcal I$, the second implies that no solutions exist.

\subsection{Existence of XY-symmetric operations}

The second main result of the paper is a proof of the fact that the existence 
of a WNU term operation (polymorphism) implies the existence of a much stronger term operation 
(polymorphism).

An $n$-ary operation $f$ is called
\emph{symmetric on a tuple of variables 
$(x_{i_1},\dots,x_{i_n})$}
if it satisfies the identity 
$f(x_{i_1},\dots,x_{i_n})=f(x_{i_{\sigma(1)}},\dots,x_{i_{\sigma(n)}})$
for every permutation $\sigma$ on $\{1,2,\dots,n\}$.
For instance, 
an operation $f$ is symmetric on $(x,\dots,x,y)$ if and only if 
$f$ is a WNU operation.
An operation is called \emph{XY-symmetric} if
it is symmetric on 
$(\underbrace{x,\dots,x}_{i},y,\dots,y)$ for any $i$.
An operation $f$ is called \emph{idempotent} if
$f(x,x,\dots,x) = x$.

As it follows from the definition, 
an XY-symmetric operation satisfies much more identities than a WNU operation.
Nevertheless, we managed to prove 
that an XY-symmetric operation can always be derived from a WNU operation.
To formulate the precise statement we will need a definition of a clone. %need an extra notation.
A set of operations is called \emph{a clone} if it is closed under composition and contains all projections.
For a set of operations $M$ by $\Clo(M)$ we denote the clone generated by $M$.

\begin{thm}\label{THMMainTheoremOnXYSymmetric}
Suppose $f$ is a WNU operation of an odd arity $n$ on a finite set.
Then there exists an XY-symmetric operation $f'\in\Clo(\{f\})$ of arity $n$.
\end{thm}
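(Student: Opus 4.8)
The plan is to derive the XY-symmetric operation $f'$ from $f$ by a multi-stage composition, using the CSP machinery developed earlier in the paper (strong subalgebras and global linear congruences) to certify that the required identities can be simultaneously satisfied. The key observation is that asking for an $n$-ary term $f'\in\Clo(\{f\})$ that is XY-symmetric is itself an instance of a constraint satisfaction problem over the free algebra: we must find an element of the free algebra $\mathbf F$ generated by $f$ on the $2^n$ generators indexed by $\{x,y\}^n$ (equivalently, on the relevant set of $0/1$-columns) that is invariant under all coordinate permutations in each of the "layers" $(\underbrace{x,\dots,x}_i,y,\dots,y)$. Concretely, I would set up an instance whose variables range over this free algebra and whose constraints are the graphs of the WNU term $w=f$ applied coordinate-wise, and whose solutions are exactly the XY-symmetric derived operations; then invoke the correctness of Zhuk's algorithm (Informal Claims~\ref{ICExistenceStrong}--\ref{ICCodimensionOne}, in their formal versions) to show a solution exists.

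First I would reduce to the idempotent case: replace $f$ by its idempotent reduct on each subuniverse generated by a single element, or more precisely work inside the algebra $\mathbf A$ of all $n$-ary operations in $\Clo(\{f\})$ with the natural structure, where the WNU identity gives idempotency on the diagonal. Second, I would exploit oddness of $n$ crucially: for a WNU operation of odd arity, the "majority-like" behaviour on two-element sets is what allows symmetrization. On any two-element subset $\{0,1\}$, the restriction of an $n$-ary WNU operation either is a projection-free symmetric operation already, or factors through $\mathbb Z_p$; when $p\mid n$ is impossible because $n$ is odd and we are on a $2$-element set (so $p=2$), oddness forces the linear part $x_1+\dots+x_n$ over $\mathbb Z_2$ which is already symmetric. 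This is the point where the hypothesis "odd arity" is used, and it is why the statement fails for even $n$. I would then iterate: having symmetrized on all two-element sets, push the symmetry up through the subalgebra/congruence tower provided by the new theory, using that each layer is either a strong subset (where symmetry propagates by the reduction lemmas) or a block of a global linear congruence with quotient $\mathbb Z_p$ (where symmetry propagates because the linear quotient of an odd-arity WNU is the $\mathbb Z_p$-sum, manifestly symmetric).

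The main obstacle I expect is the simultaneity: symmetrizing on one layer $(\underbrace{x,\dots,x}_i,y,\dots,y)$ by composing with a fixed averaging term may destroy the symmetry already achieved on another layer $(\underbrace{x,\dots,x}_j,y,\dots,y)$, and arity must be preserved at exactly $n$ throughout (so we cannot freely compose to increase arity and then collapse). The way I would handle this is to phrase the whole problem as a single CSP instance over the free algebra $\mathbf F_{\{x,y\}^n}$ — one variable for the would-be value $f'(t)$ on each tuple $t\in\{x,y\}^n$, with equality constraints identifying $t$ with all its permutations within a layer, and with the "closure under $f$" relation as the constraint language — and then apply Theorem~\ref{mainthm} together with the correctness statements to conclude this instance has a solution, because it does have one locally on every two-element set (by the odd-arity analysis above) and the global linear/strong structure developed in this paper lifts local solvability to global solvability. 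Thus the existence of $f'$ reduces to: (i) the base case on two-element sets, handled by oddness; and (ii) an application of the paper's main technical results to the specifically constructed instance, where the hard bookkeeping is in verifying the instance satisfies the cycle-consistency and irreducibility hypotheses of Informal Claims~\ref{ICReductionTOStrong} and~\ref{ICCodimensionOne}.
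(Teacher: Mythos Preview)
Your framing---encode the problem as finding a permutation-constant tuple in a freely generated relation---matches the paper's setup with $R_{\mathbf A_1,\dots,\mathbf A_s}$. But the argument you propose is circular. The formal versions of Informal Claims~\ref{ICReductionTOStrong} and~\ref{ICCodimensionOne} (Theorems~\ref{THMCSPDReductionsAreSafe} and~\ref{THMCodimensionOneTheorem}) say: \emph{if} a consistent instance has a solution, \emph{then} it has one in a strong subset, and in the linear case the solution set has codimension at most one. They do not prove that any given instance has a solution; they describe the shape of the solution set under hypotheses that already presuppose something about it. You cannot invoke ``correctness of Zhuk's algorithm'' to establish existence---the algorithm decides existence, it does not manufacture it. Indeed the paper's proof of Theorem~\ref{THMMainTheoremOnXYSymmetric} never cites Section~\ref{SECTIONCSPDICHOTOMYPROOF}; it lives entirely in Section~\ref{SECTIONXYSYMMETRIC}, proceeding by an induction on domain sizes (Theorem~\ref{THMMainInductive}) that invokes Theorems~\ref{THMExistenceOfReduction}, \ref{THMBuildAReductionFromXYSymmetric}, and~\ref{THMpropagateXYSymmetric} directly.

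Two further points. First, your location of the oddness hypothesis is wrong: it is not used via a $\mathbb Z_2$ argument on two-element subsets (a WNU on $\{0,1\}$ need not be XY-symmetric, so your base case fails as stated), but in Theorem~\ref{ExistenceOfStrongReductionTHM}, Case~1, where one shows the restricted relation is invariant under permutations stabilizing $\alpha$ and under permutations stabilizing $\beta$ for distinct $\alpha,\beta$ in the same permutation class; oddness of $n$ is precisely what guarantees these two stabilizers generate the full symmetric group. Second, and more seriously, your sketch omits the step that actually resolves the simultaneity obstacle you correctly identify. When symmetric reductions of $R_{\mathbf A_1,\dots,\mathbf A_s}$ fail to reach singletons, Theorem~\ref{THMExistenceOfReduction} produces a perfect linear congruence, and the paper then switches to a purely operational argument: embed $\mathbf A$ into $\mathbf B\boxtimes\mathbf Z_p$ (Theorem~\ref{ExistenceOfInjectiveHomomorphismTHM}), and incrementally repair the $\mathbb Z_p$-component of an operation that is XY-symmetric on $\mathbf B$ (Lemma~\ref{MainIncreasingSymmetricity}, Corollary~\ref{MainIncreasingSymmetricityCor}, Theorem~\ref{PropagateXYSYmmetricityToBoxtimesTHM}). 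That operational fix-up in Subsection~\ref{SUBSECTIONFixingAnOperation} is where the hard work lies and has no analogue in your proposal.
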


Theorem \ref{THMMainTheoremOnXYSymmetric} 
extends known characterization of finite Taylor algebras. 

\begin{cor}\label{corTaylorEquivalentConditions}
Suppose $\mathbf A$ is a finite idempotent algebra. Then the following conditions are equivalent:
\begin{enumerate}
    \item $\mathbf A$ is a Taylor algebra (satisfies nontrivial identities);
    \item there does not exist an algebra $\mathbf B\in \mathrm{HS}(\mathbf A)$ of size 2
    whose operations are projections  \cite{bulatov2001algebraic};
    %\item $\mathbf A$ satisfies a nontrivial identity;
    \item $\mathbf A$ has a WNU term operation of any prime arity $p>|A|$ \cite{MarotiMcKenzie};
    \item $\mathbf A$ has a cyclic term operation of any prime arity $p>|A|$, 
    i.e. an operation $c_{p}$ satisfying
    $$c_{p}(x_1,x_2,\dots,x_p) = c_{p}(x_2,x_3,\dots,x_p,x_1) \cite{barto2012absorbing};$$
    \item $\mathbf A$ has a Siggers term operation, i.e an operation $f$
satisfying 
$$f(y,x,y,z) = f(x,y,z,x) \cite{kearnes2014optimal,siggers2010strong};$$
    \item $\mathbf A$ has an XY-symmetric term operation of any prime arity $p>|A|$.
\end{enumerate}
\end{cor}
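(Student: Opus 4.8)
The statement to prove is Corollary~\ref{corTaylorEquivalentConditions}, the list of equivalent characterizations of finite Taylor algebras.

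\textbf{Overall strategy.} The plan is to treat this as a ``cycle of implications'' consolidation: most of the listed equivalences are already in the literature, so the real content added here is that the new Theorem~\ref{THMMainTheoremOnXYSymmetric} lets us insert condition~(6) into the chain. So I would first recall the known equivalences $(1)\Leftrightarrow(2)\Leftrightarrow(3)\Leftrightarrow(4)\Leftrightarrow(5)$, citing \cite{bulatov2001algebraic}, \cite{MarotiMcKenzie}, \cite{barto2012absorbing}, \cite{kearnes2014optimal,siggers2010strong}, and then only need to prove $(3)\Rightarrow(6)\Rightarrow(1)$ (or $(6)\Rightarrow$ any already-established condition).

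\textbf{Key steps in order.}
First, I would establish the easy direction $(6)\Rightarrow(1)$: an XY-symmetric operation of arity $p>1$ is in particular symmetric on $(x,\dots,x,y)$, hence a WNU operation, and a WNU operation is a nontrivial identity, so $\mathbf A$ is Taylor (or: a WNU operation cannot be a projection, so $(6)\Rightarrow(2)$ directly by checking that a projection is not a WNU of arity $\ge 2$).
Second, the main new step $(3)\Rightarrow(6)$: given that $\mathbf A$ has a WNU term operation $w$ of every prime arity $p>|A|$, fix such a prime $p$ (which is automatically odd since $p>|A|\ge 2$ forces $p\ge 3$), and apply Theorem~\ref{THMMainTheoremOnXYSymmetric} to $w$ to obtain an XY-symmetric operation $f'\in\Clo(\{w\})$ of arity $p$. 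Since $\mathbf A$ is idempotent, every term operation built from $w$ is a term operation of $\mathbf A$, so $f'$ is a term operation of $\mathbf A$; this gives $(6)$ for that prime $p$, and since $p>|A|$ was arbitrary among primes, for every prime $p>|A|$.
Third, wrap up: combine with the known chain to conclude all six are equivalent. I should also note the idempotency hypothesis is used precisely to pass from ``$f'\in\Clo(\{w\})$'' to ``$f'$ is a term operation'' and that Theorem~\ref{THMMainTheoremOnXYSymmetric} is stated for a WNU of odd arity, which matches the primes $p>|A|$ here.

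\textbf{Main obstacle.} All the substantive difficulty has been pushed into Theorem~\ref{THMMainTheoremOnXYSymmetric} itself; the corollary is a routine assembly. The only genuine point requiring care is verifying the arity bookkeeping --- that the arity $p$ for which we invoke the theorem is odd (guaranteed since any prime exceeding $|A|\ge 2$ is at least $3$ and primes $>2$ are odd) and that ``any prime $p>|A|$'' in conditions (3), (4), (6) really means the \emph{same} quantifier, so that we do not accidentally prove (6) only for a single prime. A secondary subtlety is making sure the cited equivalences are quoted with the correct idempotency assumptions, since Mar\'oti--McKenzie and the cyclic/Siggers results are usually stated for idempotent algebras, matching our hypothesis.
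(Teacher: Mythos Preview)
Your proposal is correct and matches the paper's approach exactly: the paper presents this corollary without proof, treating conditions (1)--(5) as known from the cited literature and condition (6) as an immediate consequence of Theorem~\ref{THMMainTheoremOnXYSymmetric} applied to the WNU guaranteed by (3). One minor remark: your justification ``since $\mathbf A$ is idempotent, every term operation built from $w$ is a term operation of $\mathbf A$'' misattributes the reason --- $f'\in\Clo(\{w\})$ is a term operation of $\mathbf A$ simply because $w$ already is one and term operations are closed under composition; idempotency plays no role at that step (it is needed instead for the cited equivalences (1)--(5)).
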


Composing a cyclic operation $c_{n}$ of arity $n$ and an XY-symmetric operation $f$
of arity $n$ we can get an operation which is simultaneously 
cyclic and XY-symmetric:

$$
f'(x_1,\dots,x_n) := 
c_{n}(f(x_1,\dots,x_n), 
f(x_2,\dots,x_n,x_1),\dots,
f(x_n,x_1,\dots,x_n)).
$$
Hence, conditions 4 and 6 of Corollary \ref{corTaylorEquivalentConditions} give an infinite sequence of cyclic XY-symmetric operations, which 
are the most symmetric operations known to be in every finite Taylor algebra.

This result cannot be generalized to XYZ-symmetric operations
as witnessed by the following operation on $\{0,1,2\}$:%
$$f(x_1,x_2,x_3) = \begin{cases}
x_1+x_2+x_3 \;(\mathrm{mod} \;\;2),& \text{if $x_{1},x_2,x_3\in\{0,1\}$}\\
2,& \text{if $x_{1}=x_2=x_3=2$}\\
\text{first element different from 2 in $x_1,x_2,x_3$}, & \text{otherwise}
\end{cases}
$$
Note that the clone generated by 
$f$ is a minimal Taylor clone and its operations 
were completely described in \cite{jankovec2023minimalni}.
The following lemma shows that even if we try to generalize 
XY-symmetric operations to some tuples 
with $x$, $y$, and $z$ we fail.

\begin{lem}
$\Clo(\{f\})$ has a WNU operation of any odd arity
but 
$\Clo(\{f\})$ has no operation 
that is symmetric on 
$(\underbrace{x,\dots,x}_{k},\underbrace{y,\dots,y}_{\ell},
\underbrace{z,\dots,z}_{j})$ for some $k,\ell, j\ge 1$.
\end{lem}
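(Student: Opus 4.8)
The plan is to analyse the operation $f$ on $\{0,1,2\}$ directly and exploit its piecewise structure. First I would establish the positive half of the lemma: $\Clo(\{f\})$ contains a WNU operation of every odd arity. The natural approach is to build cyclic (hence WNU) operations by the standard iteration trick. On the Boolean part $\{0,1\}$ the operation $f$ restricts to $x_1+x_2+x_3 \pmod 2$, which is already the affine minority operation, and on the ``otherwise'' branch $f$ behaves like a dual-discriminator-flavoured selection (pick the first non-$2$ coordinate). One checks that $f$ is idempotent and a WNU of arity $3$ by inspecting each of the three cases of the definition. To get arbitrary odd arity, I would either compose $f$ with itself in a balanced tree to raise the arity while preserving the WNU identities, or simpler, recall that any idempotent WNU of arity $3$ generates WNU operations of all arities $\equiv 1 \pmod 2$ via the composition $g(x_1,\dots,x_{2k+1}) := f(g'(x_1,\dots,x_{2k-1}),x_{2k},x_{2k+1})$ type recursions, verifying the WNU identity case-by-case using idempotency. (Alternatively cite that a minimal Taylor clone of this affine-like shape has WNUs of all odd arities, which is implicit in the reference to \cite{jankovec2023minimalni}.)

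The heart of the lemma is the negative part: no operation in $\Clo(\{f\})$ is symmetric on a pattern $(\underbrace{x,\dots,x}_k,\underbrace{y,\dots,y}_\ell,\underbrace{z,\dots,z}_j)$ with $k,\ell,j\ge 1$. The strategy is to find an invariant relation $R\subseteq\{0,1,2\}^t$ preserved by $f$ (hence by all of $\Clo(\{f\})$) that is provably violated by any such ternary-pattern-symmetric operation. The prime suspect is a relation that encodes the affine behaviour on $\{0,1\}$ together with the asymmetry forced by the coordinate $2$. Concretely I would look at the ternary relation on $\{0,1,2\}$
\[
R = \{(a,b,c) \in \{0,1,2\}^3 : \text{if } a,b,c\in\{0,1\} \text{ then } a+b+c \equiv 0 \pmod 2\},
\]
or a suitable subpower of it, and check $f$ preserves it (the Boolean part is preserved because minority preserves the affine relation; the branches involving $2$ are preserved because the selection rule keeps you inside $R$ once some coordinate is $2$). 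Then I would argue: if $g\in\Clo(\{f\})$ were symmetric on $(x,\dots,x,y,\dots,y,z,\dots,z)$, feed it three tuples from $R$ chosen so that the ``$x$-block'', ``$y$-block'' and ``$z$-block'' carry specific patterns of $0$s, $1$s and $2$s; symmetry within each block forces the output to coincide with the output on a permuted input, and one sets up a contradiction by producing two inputs that $R$-relate to different right-hand values but must map to the same value. The parities $k,\ell,j$ enter because on the Boolean part $g$ must be a composition of minorities, so it computes $\sum (\text{something})\pmod 2$ with coefficients determined by $k,\ell,j \bmod 2$; symmetry on the mixed pattern then pins down impossible parity constraints while the value $2$ is used to break ties that a pure-Boolean argument could not.

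The main obstacle I anticipate is pinning down the exact invariant relation and the exact evaluation that yields the contradiction: one must simultaneously use the affine rigidity on $\{0,1\}$ (which controls parities, giving the $k,\ell,j$ dependence) and the genuinely non-affine selection behaviour near $2$ (which is what kills \emph{all} such patterns rather than just some of them). A cleaner route, if the relational bookkeeping gets heavy, is algebraic: identify $\Clo(\{f\})$ with the clone described in \cite{jankovec2023minimalni}, read off from that description a normal form for its operations (presumably: on $\{0,1\}^n$ an affine form, extended by the first-non-$2$ rule), and then a short direct computation on normal forms shows that symmetry on an $(x^k,y^\ell,z^j)$-pattern is incompatible with the first-non-$2$ rule for any $k,\ell,j\ge 1$ — essentially because that rule is inherently order-sensitive between three distinct symbols, whereas WNU-symmetry only ever compares two. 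I would structure the writeup to prove the positive part first, then set up $R$ (or the normal form), verify invariance, and finish with the contradiction, flagging the parity computation as the one place where oddness of the ambient arities (inherited from Theorem~\ref{THMMainTheoremOnXYSymmetric}'s hypotheses on $n$) is not actually needed here — the lemma holds for the specific $f$ regardless.
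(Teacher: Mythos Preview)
Your positive half is fine and matches the paper: define $f_3:=f$ and $f_{2n+1}(x_1,\dots,x_{2n+1}):=f_{2n-1}(f(x_1,x_2,x_3),x_4,\dots,x_{2n+1})$; since $f$ is the minority on $\{0,1\}$ and $\min$ on $\{0,2\},\{1,2\}$, each $f_{2n+1}$ is symmetric on every two-element subset and hence a WNU.

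For the negative half your primary route fails at the first step: the relation
\[
R=\{(a,b,c)\in\{0,1,2\}^3:\ a,b,c\in\{0,1\}\Rightarrow a+b+c\equiv 0\}
\]
is \emph{not} preserved by $f$. Take the three rows $(2,0,1),(0,2,0),(0,0,0)\in R$ and apply $f$ columnwise: you get $(f(2,0,0),f(0,2,0),f(1,0,0))=(0,0,1)\notin R$. So the ``suitable subpower of $R$'' idea is not just under-specified, it is aimed at the wrong invariant. Your fallback to normal forms via the cited description of the clone would work in principle but is heavier than necessary.

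The paper's argument is short and does not use an invariant relation at all. It records three elementary facts about $f$: (i) $f$ preserves $\{0,1\}$; (ii) if some arguments are set to $2$ then the restriction to $\{0,1\}$ of what remains is a projection; (iii) $f$ preserves $\{0,2\}$ and $\{1,2\}$. These three properties propagate through composition, so for \emph{every} $g\in\Clo(\{f\})$, substituting $2$ into a fixed set of arguments and restricting the rest to $\{0,1\}$ yields an idempotent linear Boolean operation. Now suppose $g$ were symmetric on $(x^{k},y^{\ell},z^{j})$. By pigeonhole two of $k,\ell,j$ share a parity; relabel so that $k+\ell$ is even. Plug $z\mapsto 2$ and restrict to $\{0,1\}$: the resulting $h$ is idempotent linear of arity $k+\ell$ and symmetric on $(0^{k},1^{\ell})$, which forces every variable of $h$ to be essential; but an idempotent linear Boolean operation with all variables essential has odd arity. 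Contradiction. That is the whole proof --- the ``order-sensitivity of the first-non-$2$ rule'' you mention plays no explicit role; what matters is only that plugging in $2$'s collapses the clone to idempotent affine on $\{0,1\}$.
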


\begin{proof}
The operation $f$ is conservative (always returns one of the coordinates), and behaves as a linear sum modulo 2 on $\{0,1\}$, as $\mathrm{min}$ on $\{0,2\}$ and $\{1,2\}$.
Let us show how to derive a WNU (and even XY-symmetric) operation
of any odd arity. 
Put 
$$f_{3} := f,\;\;\;\;\;\;\;\;\;
f_{2n+1}(x_1,\dots,x_{2n+1}) := f_{2n-1}(f(x_1,x_2,x_3),x_4,\dots,x_{2n+1}).$$
It follows immediately from the definition that 
$f_{2n+1}$ is symmetric on $\{0,1\}$, on $\{0,2\}$,  and on $\{1,2\}$.
%Therefore composing $f$ we can get a 
%WNU operation of any add arity.

Also, $f$ has the following properties.
Whenever we substitute $0$ or $1$ in $f$
we obtain either $0$, or $1$. 
Whenever we substitute 2 into some arguments, we get an operation whose restriction to $\{0,1\}$ is a linear operation (in fact a projection).
Finally, the operation preserves the sets $\{0,2\}$ and $\{1,2\}$.
These three properties imply that 
if we put 2 to some arguments of a term operation 
$g\in\Clo(\{f\})$ and restrict the obtained operation to $\{0,1\}$ 
we get an idempotent linear operation on $\{0,1\}$.

Assume that $g$ is symmetric on 
$(\underbrace{x,\dots,x}_{k},\underbrace{y,\dots,y}_{\ell},
\underbrace{z,\dots,z}_{j})$ for some $k,\ell, j\ge 1$.
Without loss of generality assume that 
$j$ is odd
and the first variable of $g$ is not dummy. 
Substituting 2 for the last $j$ coordinates of $g$
and restricting the obtained operation to $\{0,1\}$
we must get an idempotent linear operation $h$ of 
an even arity $k+\ell$.
Since $h$ must return 
the same value on all the tuples with $k$ 0s and $\ell$ 1s, all the variables of $h$ are not dummy. Since $h$ has even number of arguments, 
it cannot be idempotent, which gives a contradiction and completes the proof.
\end{proof}

% Notice that 
% $f$ preserves the relation 
% $L = \left(\begin{smallmatrix}    
% 2& 0& 1& 0 & 1 & 0 & 1 & 0 & 1 \\
% 2& 0& 0& 1 & 1 & 0 & 0 & 1 & 1 \\
% 2& 0& 0& 0 & 0 & 1 & 1 & 1 & 1 \\
% 2& 0& 1& 1 & 0 & 1 & 0 & 0 & 1
% \end{smallmatrix}\right)
% $, 
% that is a relation 
% consisting of the tuple $(2,2,2,2)$ and all the tuples 
% $(a_1,a_2,a_3,a_4)\in\{0,1\}^{4}$ such that
% $a_{1}+a_2+a_3+a_4 = 0 (\mod 2)$.
% Without loss of generality assume that 
% $j$ is odd and $k\ge \ell$.
% Notice that in this case $k-\ell$ is even and 
% my $m$ we denote $(k-\ell)/2$.
% Let us build a matrix with 4 rows whose tuples are from $L$.
% The rows of the matrix are
% $0^{k}1^{\ell}2^{j}$,
% $0^{k-\ell}1^{\ell}0^{\ell}2^{j}$,
% $0^{m}1^{m}0^{k-m}1^{\ell-m}2^{j}$, and
% $0^{m}1^{m}1^{k-m}0^{\ell-m}2^{j}$.

% $\begin{matrix}
% 0&\dots&0&0&\dots&0&0
% \end{matrix}$

% Let $c = f(0^{k}1^{\ell}2^{j})$.

%As a corollary of Theorem \ref{THMMainTheoremOnXYSymmetric}, the CSP Dichotomy %Theorem can be formulated as follows:

%\begin{cor}
%Suppose $\Gamma$ is a finite set of relations on 
%a finite set $A$.
%Then $\CSP(\Gamma)$ can be solved in polynomial time if there exists an XY-symmetric %operation preserving $\Gamma$;
%$\CSP(\Gamma)$ is NP-complete otherwise.
%\end{cor}

Notice that both Zhuk's and Bulatov's algorithms for the CSP are not universal in the sense that 
the algorithms work only if the domain is fixed and, therefore, all the algebraic properties are known. It would be great to find a universal algorithm 
for all tractable CSPs.
Recently the importance of symmetric operations was rediscovered 
while studying the limits of universal algorithms for the CSP and its variation, called the Promise CSP \cite{ciardo2023clap,barto2021algebraic,brakensiek2020power}.
For instance, the algorithm known as BLP+AIP solves
$\CSP(\Gamma)$ if and only if %any $N\in\mathbb N$ the language
$\Gamma$ has infinitely many symmetric polymorphisms \cite{brakensiek2020power}.

We believe that Theorem \ref{THMMainTheoremOnXYSymmetric} can be further generalized, 
and finally we will get enough symmetric operations to 
make some universal algorithm work. %solving all tractable CSPs. 
Theorem \ref{THMMainTheoremOnXYSymmetric} already gives us some implications that can be viewed as a tiny step in this direction: 
\begin{itemize}
    \item $\CSP(\Gamma)$ is solvable by BLP+AIP 
    for any multi-sorted language $\Gamma$ on a two-element domain.
    \item if $\PCSP(\mathbb A,\mathbb B)$ is solvable by reducing to 
    a tractable $\CSP(\mathbb C)$, where $\mathbb A\to\mathbb C\to\mathbb B$, $\mathbb C$ is finite, and
    $|A|=2$, then $\PCSP(\mathbb A,\mathbb B)$ is solvable by BLP+AIP
    (see \cite{krokhin2022invitation,barto2021algebraic} for more information about Promise CSP). 
\end{itemize}

To show the second claim we apply Theorem \ref{THMMainTheoremOnXYSymmetric} to 
    $\Pol(\mathbb C)$ and obtain infinitely many XY-symmetric operations on $C$. 
    Composing them with the homomorphisms 
    $\mathbb A\to\mathbb C$ and $\mathbb C\to\mathbb B$
    we obtain infinitely many symmetric polymorphisms 
    $\mathbb A\to \mathbb B$, which implies that BLP+AIP solves 
    $\PCSP(\mathbb A,\mathbb B)$ \cite{brakensiek2020power}.
    
One of the reasons why these two independent results (proof of the CSP Dichotomy Conjecture and the existence of an XY-symmetric operation) appeared in one paper 
is that their proofs have the same flavour.
Even though, the second result has a purely algebraic formulation, 
it is strongly connected to the CSP.
Let us consider the matrix whose rows are all the tuples of length $n$ having exactly two different elements.
We apply a WNU operation to columns of this matrix coordinate-wise deriving new columns till we cannot derive anything new. 
The set of all the derived columns can be viewed as a relation $R$ of some big arity $N$.  
To prove that an XY-symmetric operation can be derived from a WNU we need to show 
that $R$ contains a tuple whose elements corresponding to permutations of the same tuple are equal. This can be written as a CSP instance 
with the constraint $R(x_1,\dots,x_N)$ and many equality constraints
$(x_i=x_j)$, and we need to prove that it has a solution.
Then the proofs of Informal Claims \ref{ICReductionTOStrong} and \ref{ICCodimensionOne} are similar to the proof of Theorem \ref{THMMainTheoremOnXYSymmetric}, only sufficient level of 
consistency is replaced by symmetries of the relation $R$.

\subsection{History and acknowledgements}

The first symmetric operations (WNU) that exist in every Taylor algebra 
appeared in \cite{MarotiMcKenzie}, and the idea was to show that 
every symmetric invariant relation has a constant tuple.
In \cite{zhuk2021strong} I showed the existence of a constant tuple in a symmetric relation
gradually reducing the domain to 
strong subalgebras and keeping the property that the relation is symmetric.
%, which is a  generalization of a ternary absorption
%introduced by Barto and Kozik \cite{}.
It turned out that the only reason why a constant tuple does not exist in 
a symmetric relation is a linear essence inside, 
for instance 
the relation 
$x_1+\dots+x_p = 1$ does not have a constant tuple in $\mathbb Z_{p}$. 
Thus, the existence of a WNU term operation of an arity $n$ %greater than the size of the domain 
is reduced to a pure linear algebra question:
does every affine symmetric subspace of $\mathbb Z_{p}^{n}$
have a constant tuple. % if $n>p$. 
Similarly, we could try to 
show the existence of a 2-WNU operation  (symmetric on $(x,x,y,y,\dots,y)$).
For the proof to work we need to show that 
every relation of arity ${n\choose 2}$ with symmetries coming from $n$-permutations
has a constant tuple.
This question is again 
reduced to a pure linear algebra question:
does every affine (weak) symmetric subspace of $\mathbb Z_{p}^{n\choose 2}$
have a constant tuple.
We worked on it 
together with Libor Barto, Michael Pinsker, and their students
Johanna Brunar
and Martin Boro\v s.
Martin Boro\v s in his master thesis \cite{borovs2023symmetric} 
proved that a constant tuple always exists if and only if 
$n\cdot {n\choose 2}$ is co-prime with $p$.
Unfortunately this beautiful approach did 
not lead to a proof of existence of 2-WNU or XY-symmetric operations
as it turned out that an XY-symmetric operation may exist even if the condition on the arity is not satisfied. 
Nevertheless,
I am very thankful to Libor Barto and Michael Pinsker
for the exciting play with a beautiful linear algebra and 
the ideas I took from this play.

I would also like to thank Stanislav \v Zivn\'y, Lorenzo Ciardo, and 
Tamio-Vesa Nakajima
for very fruitful discussions about 
algorithms for the CSP based on linear programming and their limits. 
My understanding of what operations we need 
for the algorithms to work came to me during my visit of Oxford University.
The impotence of symmetric operations 
for these algorithms %to work 
motivated me to finish 
my research on XY-symmetric operations.

\subsection{Structure of the paper}

In Section \ref{SectionStrongSubalgebras} 
we give definitions and statements of 
the new theory of strong subalgebras.
In Section \ref{SECTIONCSPDICHOTOMYPROOF} 
we use this theory to prove that the algorithm
for the CSP
presented in \cite{zhuk2020proof,ZhukFVConjecture}
works. 
In Section \ref{SECTIONXYSYMMETRIC} 
we show that an XY-symmetric operation can be derived from 
a WNU operation of an odd arity.
In Section \ref{SectionProofStrongSubalebras}
we prove all the statements formulated in 
Section \ref{SectionStrongSubalgebras}.

The main goal of the paper is to show the power of the new theory of 
strong subalgebras but not to provide a shortest proof of 
the CSP Dichotomy Conjecture.
That is why, we formulate and prove all the properties of strong/linear subalgebras for 
arbitrary finite idempotent algebras, even though 
in Sections \ref{SECTIONCSPDICHOTOMYPROOF} and \ref{SECTIONXYSYMMETRIC} we only consider algebras with a special WNU operation. 
Moreover, many definitions and statements 
could be simplified if we consider only Taylor minimal algebras (see \cite{minimaltayloralgebras}), which would be sufficient to prove two main results 
of this paper.
Also, for better readability we always duplicate statements 
if the proof appears in a later section.
For instance, all the statements 
from Section \ref{SectionStrongSubalgebras} 
are formulated again in Section \ref{SectionProofStrongSubalebras}.

\section{Strong/Linear subuniverses}\label{SectionStrongSubalgebras}

In this section we define 
six types of subuniverses and formulate all the necessary properties of these subuniverses.
We start with auxiliary definitions and notations, 
then we define two types of irreducible congruences, and
introduce notations for all types of subuniverses.
In Subsection \ref{SUBSECTIONSTRONGSUBALGEBRASPROPERTIES} 
we give their properties without a proof.
We conclude the section with a few auxiliary statements.

\subsection{Auxiliary definitions}

For a positive  integer $k$ by $[k]$ we denote the set $\{1,2,\dots,k\}$.
An idempotent WNU $w$ is called \emph{special} if 
$$w(x,\dots,x,y)=w(x,\dots,x,w(x,\dots,x,y)).$$
It is not hard to show that for any idempotent WNU $w$ on a finite set there exists a special WNU $w'\in\Clo(w)$
(see Lemma \ref{LEMExistenceOfSpesialWNULemma}).

\textbf{Algebras.}
We denote algebras by bold letters $\mathbf{A},\mathbf{B}, \mathbf C,\dots$, 
their domains by $A,B,C,\dots$, 
and the basic operations
by $f^{\mathbf A},f^{\mathbf B}, g^{\mathbf C},\dots$.
We use standard universal algebraic notions of term operation, subalgebra,  factor algebra, product of algebras,
see~\cite{bergman2011universal}.
We write $\mathbf B\le \mathbf A$ if $\mathbf B$ is a subalgebra of $\mathbf A$.
A congruence is called \emph{nontrivial} if it is 
not the equality relation and not $A^{2}$.
By $0_{\mathbf A}$ we denote the equality relation on $A$, which is 
the 0-congruence on $\mathbf A$.
To avoid overusing of bold symbols 
sometimes we write capital symbol meaning the algebra. 
%\textbf{Linear and PC algebras.}
An algebra $(A;F_{A})$ is called \emph{polynomially complete (PC)}
%if every operation on its unverse is a polynomial operation.
%In other words,
%an algebra $(A;F_{A})$ is polynomially complete
if the clone generated by $F_{A}$ and all constants on $A$ is the clone of all operations on $A$
(see \cite{istinger1979characterization,lausch2000algebra}).

By $\mathcal V_{n}$ we denote the class 
of finite algebras $\mathbf A = (A;w^{\mathbf A})$ 
whose basic operation 
$w^{\mathbf A}$ is an idempotent special WNU operation.
Since we only consider finite algebras, $\mathcal V_{n}$ is not a variety.
For a prime $p$ by $\mathbf Z_{p}$ we denote the algebra whose 
domain is $\{0,1,\dots,p-1\}$ and whose 
basic operation
$w^{\mathbf Z_{p}}$ is $x_1+\dots+x_{n} (mod \;\;p)$.
In the paper every algebra 
$\mathbf Z_{p}$ belongs to $\mathcal V_{n}$ for a fixed $n$, 
hence the algebra $\mathbf Z_{p}$ is uniquely defined.
%In this section and in Section \ref{AuxuliaryProofsSection} 
In this paper we assume 
that every algebra is a finite idempotent algebra having a 
WNU term operation.
Moreover, in Sections \ref{SECTIONCSPDICHOTOMYPROOF}
and \ref{SECTIONXYSYMMETRIC} we usually consider algebras
from $\mathcal V_{n}$.

%An idempotent finite algebra $(A;w_{A})$ is called \emph{linear}
%(similar to affine in  \cite{freese1987commutator})
%if
%it is isomorphic to $(\mathbb{Z}_{p_1}\times\dots\times \mathbb{Z}_{p_s};x_1+\ldots+x_m)$
%for prime numbers $p_{1},\ldots,p_{s}$.
%It is called \emph{$p$-linear} if $p_{1}=\ldots=p_{s}=p$.
%Since 
%$\mathbf{A}/(\sigma\cap\tau)$ is always isomorphic to 
%a subalgebra of 
%$\mathbf{A}/\sigma\times
%\mathbf{A}/\tau$, 
%and since linear algebras are closed under 
%products and subalgebras 
%by Corollary~\ref{LinearAlgebrasAreClosed},
%for every idempotent finite algebra $(B;w_{B})$ there exists a least congruence $\sigma$, called
%\emph{the minimal linear congruence}, such that
%$(B;w_{B})/\sigma$ is linear.

\textbf{Notations.}
A relation $R\subseteq A_{1}\times\dots\times A_{n}$ is called \emph{subdirect} if
for every $i$ the projection of $R$ onto the $i$-th coordinate is $A_{i}$.
A relation $R\subseteq A^{n}$ is called \emph{reflexive} if it contains 
$(a,a,\dots,a)$ for every $a\in A$.
For a relation $R$ by $\proj_{i_1,\ldots,i_{s}}(R)$
we denote the projection of $R$ onto the coordinates
$i_1,\ldots,i_{s}$. 
We write 
$\mathbf R\le_{sd}\mathbf A_{1}\times\dots\times \mathbf A_{m}$ (say that $R$ is a subdirect subalgebra) if 
$R$ is a subdirect relation and $\mathbf R\le\mathbf A_{1}\times\dots\times \mathbf A_{m}$.
For $R\subseteq A^{n}$ by 
$\Sg_{\mathbf A}(R)$ we denote the minimal 
subalgebra of $\mathbf A^{n}$ containing $R$, that is 
the subalgebra of $\mathbf A^{n}$ generated from $R$.

For an equivalence relation $\sigma$ on $A$ and $a\in A$ by $a/\sigma$ we denote the equivalence 
class containing $a$.
For an equivalence relation $\sigma$ on $A$ and 
$B\subseteq A$ denote 
$B/\sigma=\{b/\sigma\mid b\in B\}$.
Similarly, for a relation $R\subseteq A^{n}$ 
denote 
$R/\sigma=\{(b_1/\sigma,\dots,b_{n}/\sigma)\mid (b_1,\dots,b_{n})\in R\}$.
For a binary relation $\sigma$ and $n\ge 2$ by 
$\sigma^{[n]}$ we denote 
the $n$-ary relation 
$\{(a_1,\dots,a_n)\mid \forall i,j\in[n]\colon 
(a_i,a_j)\in\sigma\}$.

For two binary relations $\delta_{1}\subseteq A_{1}\times A_{2}$ 
and $\delta_{2}\subseteq A_{2}\times A_{3}$ 
by $\delta_{1}\circ\delta_2$ we denote the binary 
relation 
$\{(a,b)\mid \exists c\colon (a,c)\in\delta_1\wedge (c,b)\in\delta_2\}$.
Similarly, for $B\subseteq A_{1}$ and $\delta\subseteq A_{1}\times A_{2}$ put 
$B\circ\delta = \{c\mid \exists b\in B \colon (b,c)\in\delta\}$.
For a binary relation $\delta$, we denote
$\delta^{-1}(x,y) = \delta(y,x)$.
A binary subdirect relation $\delta\subseteq A\times B$ 
is called \emph{linked} if
the bipartite graph corresponding to $\delta$ is connected.
A binary subdirect relation $\delta\subseteq A_{1}\times A_{2}$ is called \emph{bijective} 
if $|\delta| = |A_1|= |A_{2}|$.

%for any $a_{0}\in A$ and $b\in B$ 
%there exist $s\in\mathbb N$, $a_1,\dots,a_s\in A$, and 
%$b_{1},\dots,b_{s}\in B$ 
%such that 
%$(a_{i-1},b_i),(a_{i},b_i)\in\delta$ for every $i\in [s]$ and  
%$(b_{s},b)\in \delta$.

\textbf{Parallelogram property and rectangularity.}
We say that an $n$-relation $R$ \emph{has the parallelogram property}
if any permutation of its variables gives a relation $R'$ satisfying
\begin{align*}
\forall \ell\in\{1,2,\dots,n-1\} \;\;\;\;\;\;\;\;\;\;\;\;\;\;\;\;&(a_1,\dots,a_{\ell},b_{\ell+1},\dots,b_{n})\in R'\\
\forall a_1,\dots,a_n,b_1,\dots,b_n
\colon \;\;\;\;&(b_1,\dots,b_{\ell},a_{\ell+1},\dots,a_{n})\in R'\;\;\;\Rightarrow\;\;\; (a_1,\dots,a_n)\in R'.\\
&(b_1,\dots,b_{\ell},b_{\ell+1},\dots,b_{n})\in R'
\end{align*}
Note that the parallelogram property plays an important role in universal algebra (see \cite{agnes} for more details).
We say that \emph{the $i$-th variable of a relation $R$ is rectangular}, if
\begin{align*}
\;\;\;\;\;\;\;\;\;\;\;\;\;\;\;\;&(a_1,\dots,a_{i-1},b_{i},a_{i+1},\dots,a_{n})\in R\\
\forall a_1,\dots,a_n,b_1,\dots,b_n
\colon \;\;\;\;&(b_1,\dots,b_{i-1},a_{i},b_{i+1},\dots,b_{n})\in R\;\;\;\Rightarrow\;\;\; (a_1,\dots,a_n)\in R.\\
&(b_1,\dots,b_{i-1},b_{i},b_{i+1},\dots,b_{n})\in R
\end{align*}
As it follows from the definitions,
if a relation has the parallelogram property then it is rectangular.
\emph{The rectangular closure} of a relation $R$  is 
the minimal rectangular relation $R'$ containing $R$.

\textbf{Irreducible congruences.}
For a relation 
$R\subseteq A_{1}\times\dots\times A_{n}$ and 
a congruence $\sigma$ on $A_{i}$, 
we say that the $i$-th variable of the relation $R$ is 
\emph{stable under $\sigma$}
if $(a_{1},\ldots,a_{n})\in R$ and $(a_{i},b_{i})\in\sigma$
imply
$(a_{1},\ldots,a_{i-1},b_{i},a_{i+1},\ldots,a_{n})\in R$.
We say that a relation is \emph{stable under} $\sigma$ if every variable of this relation is stable under $\sigma$.
We say that a congruence $\sigma$ on $\mathbf A$ is \emph{irreducible} if it cannot be represented as an intersection of other binary subalgebras of $\mathbf A\times\mathbf A$ 
that are stable under $\sigma$.
Equivalently, a congruence is irreducible if 
there are no subalgebras 
$\mathbf S_{1},\mathbf S_{2},\dots,\mathbf S_{k}\le \mathbf A/\sigma \times \mathbf A/\sigma$ 
such that 
$0_{\mathbf A/\sigma} = S_{1}\cap S_{2}\cap \dots\cap S_{k}$ and 
$0_{\mathbf A/\sigma} \neq S_{i}$ for every $i\in[k]$. 
Then for an irreducible congruence $\sigma$ on $\mathbf A$
by $\cover{\sigma}$ we denote the minimal
$\delta\le \mathbf A\times \mathbf A$ such that 
$\delta\supsetneq\sigma$ and $\delta$ is stable under $\sigma$.

\textbf{Bridges.}
Suppose $\sigma_{1}$ and $\sigma_{2}$ are congruences on $\mathbf D_{1}$ and $\mathbf D_{2}$, respectively.
A relation $\delta\le \mathbf D_{1}^{2}\times \mathbf D_{2}^{2}$ is called \emph{a bridge} from $\sigma_{1}$ to $\sigma_{2}$ if the following conditions hold:
\begin{enumerate}
    \item the first two variables of $\delta$ are stable under $\sigma_{1}$,
    \item the last two variables of $\delta$ are stable under $\sigma_{2}$,
\item $\proj_{1,2}(\delta) \supsetneq \sigma_{1}$,
$\proj_{3,4}(\delta) \supsetneq \sigma_{2}$,
\item $(a_{1},a_2,a_{3},a_{4})\in \delta$ implies
$(a_1,a_2)\in \sigma_{1}\Leftrightarrow (a_3,a_4)\in \sigma_{2}.$
\end{enumerate}

An example of a bridge 
is the 
relation 
$\delta=\{(a_{1},a_{2},a_{3},a_{4})\mid
a_{1},a_{2},a_{3},a_{4}\in  \mathbb Z_{4}:
a_{1}-a_{2} = 2 a_{3} - 2 a_{4}\}$.
We can check that 
$\delta$ is a bridge from 
the equality relation (0-congruence) 
and $(mod\;2)$ equivalence relation.
The notion of a bridge is strongly related to other notions in Universal Algebra and Tame Congruence Theory
such as similarity and centralizers
(see \cite{RossSlides} for the detailed comparison).

For a bridge $\delta$ by $\widetilde{\delta}$ we denote 
the binary relation defined by 
$\widetilde{\delta}(x,y) = \delta(x,x,y,y)$.

We can compose a bridge $\delta_1$ from 
$\sigma_0$ to $\sigma_1$ and a bridge $\delta_{2}$ from $\sigma_1$ to 
$\sigma_2$ using the following formula:
$$\delta(x_1,x_2,z_{1},z_{2}) = \exists y_{1}\exists y_{2}\; \delta_{1}(x_{1},x_{2},y_{1},y_{2})\wedge \delta_{2}(y_{1},y_{2},z_{1},z_{2}).$$
We can prove (Lemma \ref{LEMBridgeComposition}) that $\delta$ is a bridge from $\sigma_0$ to $\sigma_2$ 
whenever the congruences $\sigma_1$, $\sigma_2$, and $\sigma_3$ are irreducible.
Moreover, $\widetilde {\delta} = \widetilde {\delta_1}\circ \widetilde {\delta_2}$.

%
%We say that two congruences $\sigma_{1}$ and $\sigma_{2}$ on a set $D$ are \emph{adjacent}
%if there exists a bridge $\rho\in D^{4}$ from $\sigma_{1}$ to $\sigma_{2}$
%such that
%$(a,a,a,a)\in \rho$ for every $a\in D$.

A congruence $\sigma$ on $\mathbf A=(A;w)$ is called \emph{a perfect linear congruence} if it is
irreducible and 
there exists $\zeta\le \mathbf A\times \mathbf A\times \mathbf Z_{p}$ 
such that $\proj_{1,2} \zeta =\sigma^{*}$
and $(a_{1},a_{2},b)\in\zeta$ 
implies that 
$(a_{1},a_{2})\in \sigma\Leftrightarrow (b=0)$.
Such congruences are important for us because we can control relaxation 
of $\sigma$ to 
$\sigma^{*}$ by an additional element from $\mathbf Z_{p}$. 

In our proofs we compose bridges to get a bridge $\delta$ whose 
binary relation $\widetilde {\delta}$ is linked and then apply the following lemma 
that will be proved in Subection \ref{AuxuliaryStrongStatementsSubsection}.

\begin{lem}\label{LEMMakePerfectCongruenceFromLinked}

Suppose $\sigma$ is a irreducible congruence on $\mathbf A\in \mathcal V_{n}$,
$\delta$ is a bridge from
$\sigma$ to $\sigma$ such that $\widetilde{\delta}$ is linked.
Then $\sigma$ is a perfect linear congruence.
\end{lem}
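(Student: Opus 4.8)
The plan is to extract, from the linkedness of $\widetilde{\delta}$, a witness $\zeta\le\mathbf A\times\mathbf A\times\mathbf Z_p$ of the form required in the definition of a perfect linear congruence. First I would set $\delta_0:=\delta$ and iterate the bridge-composition operation from the excerpt, forming $\delta_k$ as the $k$-fold composite of $\delta$ with itself; by Lemma~\ref{LEMBridgeComposition} each $\delta_k$ is again a bridge from $\sigma$ to $\sigma$ and $\widetilde{\delta_k}=\widetilde{\delta}^{\,k}$ (the $k$-fold relational composite). Since $\widetilde{\delta}$ is a linked subdirect binary relation on the finite set $A$, its composite powers stabilise to an equivalence relation $\theta\supseteq\widetilde{\delta}$ that is linked and subdirect, hence $\theta=A^2$ — equivalently, for some $k$ the relation $\widetilde{\delta_k}$ is the full relation $A^2$, while still $(a,b)\in\sigma$ forces the "diagonal-type" behaviour inherited from condition~(4) of a bridge. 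So, passing to this power, I may assume $\widetilde{\delta}=A^2$ from the start.

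Next I would analyse $\delta$ as a subdirect subalgebra of $\mathbf A^2\times\mathbf A^2$ whose first two coordinates are stable under $\sigma$ and last two under $\sigma$, and where $(a_1,a_2,a_3,a_4)\in\delta$ forces $(a_1,a_2)\in\sigma\Leftrightarrow(a_3,a_4)\in\sigma$. Because the first pair is stable under $\sigma$ and the second pair is stable under $\sigma$, $\delta$ descends to a relation on $(\mathbf A/\sigma)^2\times(\mathbf A/\sigma)^2$, so without loss of generality I would work in $\mathbf A/\sigma$ where $\sigma$ becomes the equality $0_{\mathbf A/\sigma}$, which is still irreducible there. In this setting $\widetilde{\delta}$ being full and condition~(4) together say that the "defect" of a pair from the diagonal, measured through $\delta$, behaves additively: composing bridges adds defects. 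Concretely I would look at $\Sg$ of the obvious generating tuples to build, on the quotient, a ternary relation that records a pair together with a $\mathbb Z_p$-label of its defect class. The role of $p$ is that $\sigma^{*}/\sigma$ (equivalently the minimal relation above $0_{\mathbf A/\sigma}$ stable under it) is, for an irreducible congruence in $\mathcal V_n$, forced to carry a prime-cyclic structure — this is exactly the "linear essence" flagged in the introduction and should already be available from the machinery of Section~\ref{SectionStrongSubalgebras} about irreducible congruences and $\cover{\sigma}$; I would cite the relevant classification of irreducible congruences (linear versus PC type) and argue the linked bridge rules out the PC case, leaving a perfect linear congruence with an explicit $\mathbf Z_p$.

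To produce $\zeta$ I would take the bridge $\delta$ (in the full-power, quotient form), consider the subalgebra of $\mathbf A\times\mathbf A\times\mathbf A\times\mathbf A\times\cdots$ generated by iterating, and project onto a pair of $\mathbf A$-coordinates together with a counter living in $\mathbf Z_p$: informally, $(a_1,a_2,b)\in\zeta$ iff there is a $\delta$-chain of length $b$ from a $\sigma$-related pair to $(a_1,a_2)$. Stability of $\delta$'s coordinates under $\sigma$ makes this well-defined modulo $\sigma$; linkedness (fullness of $\widetilde\delta$) makes $\proj_{1,2}\zeta$ as large as $\sigma^{*}$; and condition~(4), iterated, gives exactly $(a_1,a_2)\in\sigma\Leftrightarrow b=0$. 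The main obstacle I anticipate is proving that the counter can be chosen consistently in a fixed $\mathbf Z_p$ — i.e.\ that the defect genuinely forms a $\mathbb Z_p$-torsor and not some larger or non-cyclic group, and that this $p$ is prime; this is where I would lean hardest on the structure theory of irreducible congruences for algebras in $\mathcal V_n$ (the dichotomy between "linear" irreducible congruences modelled on $\mathbf Z_p$ and polynomially complete ones), together with the idempotent special-WNU identity $w(x,\dots,x,y)=w(x,\dots,x,w(x,\dots,x,y))$, which should be what prevents the defect group from degenerating when we pass to high composite powers of $\delta$. Everything else — that composites of bridges are bridges, that $\widetilde{\delta}$-powers stabilise, that $\zeta$ descends to the quotient — is routine given the results already stated.
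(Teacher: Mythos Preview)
Your outline --- compose the bridge until $\widetilde{\delta}=A^2$, then invoke the linear/PC dichotomy to rule out the PC case, then build $\zeta$ --- is broadly the paper's route as well. The first two steps are fine and match the paper exactly (Lemma~\ref{LEMBridgeComposition} for composition, Lemma~\ref{LEMLinearEquivalentConditions} for linearity). The gap is in your third step.

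Your ``defect counter'' --- $(a_1,a_2,b)\in\zeta$ iff there is a $\delta$-chain of length $b$ from a $\sigma$-pair to $(a_1,a_2)$ --- is not a primitive-positive construction and hence not a priori a subalgebra of $\mathbf A^2\times\mathbf Z_p$. Worse, even granting well-definedness modulo~$p$, you have not shown that the third coordinate lives in a single copy of $\mathbb Z_p$ rather than $\mathbb Z_p^k$; a dimension count shows that a relation $\zeta\le(\mathbb Z_p^k)^2\times\mathbb Z_p$ with $\proj_{1,2}\zeta$ full and $(x,y,0)\in\zeta\Leftrightarrow x=y$ forces $k\le 1$, so this really is the heart of the matter, not a detail to be deferred. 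You flag this as ``the main obstacle'' but then only gesture at ``structure theory'' and the special-WNU identity without an argument.

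The paper's machinery does this work concretely. After composing and symmetrising the bridge (so that $\proj_{1,2}(\delta)=\widetilde{\delta}$ and $\delta$ is invariant under swapping the two pairs), Lemma~\ref{LEMNiceBridgeGivesAbelianGroup} identifies $(A/\sigma;\delta/\sigma)$ with $(G;x_1-x_2=x_3-x_4)$ for an abelian group~$G$, and Lemma~\ref{LEMNontrivialReflexiveBridgeImplies} then uses irreducibility of $\sigma$ to force $G$ (block by block) to be an elementary abelian $p$-group for a single prime~$p$. The passage from $\mathbb Z_p^{n_B}$ to $\mathbb Z_p$ and the actual pp-construction of $\zeta$ are carried out in the auxiliary Lemma~\ref{LEMBuildingPerfectCongruence} (in the ``Necessary Claims'' subsection); the $\mathcal V_n$ hypothesis enters there because in $\mathcal V_n$ the algebra $\mathbf Z_p$ is a fixed target and the special-WNU identity pins the module action. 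Your sketch never produces a candidate pp-formula for $\zeta$ nor isolates where irreducibility is used to kill the extra dimensions, so as written it does not close.
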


\subsection{Definition of strong subuniverses}

%A congruence $\sigma$ on $\mathbf A$ is called $\cap$-irreducible if
%there do not exist congruences $\sigma_1,\sigma_2,\dots,\sigma_{k}\supsetneq\sigma$ 
%on $\mathbf A$
%such that 
%$\sigma = \sigma_1\cap\dots\cap \sigma_{k}$.
%For an $\cap$-irreducible congruence $\sigma$ on a set $A$
%by $\sigma^{*}$ we denote the minimal congruence $\delta\supsetneq \sigma$ on $\mathbf A$.

\textbf{(Binary) absorbing subuniverse.}
%Let $\mathbf{B}=(B;F_{B})$ be a subalgebra of $\mathbf{A}=(A;F_{A})$.
We say $B$ is \emph{an absorbing subuniverse} of an algebra 
$\algA$ if 
there exists $t\in \Clo(\mathbf{A})$ such that
$t(B,B,\dots,B,A,B,\dots,B) \subseteq B$ for any position of $A$. Also in this case we say that 
\emph{$B$ absorbs $\algA$ with a term $t$}.

If the operation $t$ can be chosen binary then we say that
$B$ is \emph{a binary absorbing subuniverse} of $\mathbf A$.
To shorten sometimes we will write \emph{BA} instead of 
binary absorbing.
If $t$ can be chosen ternary the we call $B$ 
\emph{a ternary absorbing subuniverse}.
For more information about absorption and its connection to CSP see \cite{barto2017absorption}.

\textbf{Central subuniverse.}
A subuniverse $C$ of $\algA$ is called \emph{central} 
if it is an absorbing subuniverse 
and 
for every $a\in A\setminus C$ 
we have 
$(a,a)\notin \Sg_{\mathbf A}((\{a\}\times C)\cup (C\times \{a\}))$.

Central subuniverses are strongly connected with ternary absorption.

\begin{lem}[\cite{zhuk2021strong}, Corollary 6.11.1]\label{LEMCenterImpliesTernaryAbsorption}
Suppose $B$ is a central subuniverse of $\mathbf A$, 
then $B$ is a ternary absorbing subuniverse of $\mathbf A$.
\end{lem}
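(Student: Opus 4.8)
The plan is to start from an absorbing term of unrestricted arity and compress it to a ternary one, using the centrality condition precisely to defeat the arity obstruction that this compression runs into. Concretely, I would fix a term $s\in\Clo(\mathbf A)$ of some arity $k$ witnessing that $B$ absorbs $\mathbf A$, and set $f(x,y):=s(x,y,\dots,y)$, so that $f(A,B)\subseteq B$ (only the first of the $k$ arguments of $s$ ends up outside $B$) and $f(B,B)\subseteq B$ by idempotence. From $f$ I would form the three ternary terms
$$t^{(1)}(x,y,z)=f(x,f(y,z)),\qquad t^{(2)}(x,y,z)=f(y,f(z,x)),\qquad t^{(3)}(x,y,z)=f(z,f(x,y)),$$
and verify by a short computation that, for each $i$, the term $t^{(i)}$ maps $(A\times B\times B)\cup(B\times A\times B)\cup(B\times B\times A)$ into $B$ except possibly on the single block in which the arbitrary coordinate sits in one fixed position $p_i$, with $p_1,p_2,p_3$ a permutation of $1,2,3$. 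Consequently, for every triple $u$ with at most one coordinate outside $B$, at most one of $t^{(1)}(u),t^{(2)}(u),t^{(3)}(u)$ lies outside $B$.

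If $B$ already absorbed $\mathbf A$ with a \emph{ternary} term, this would finish the proof, since then $t(x,y,z):=s\big(t^{(1)}(x,y,z),t^{(2)}(x,y,z),t^{(3)}(x,y,z)\big)$ is a ternary absorbing term by the previous paragraph. So the whole difficulty is concentrated in replacing the absorbing term by one of arity $3$, and this is exactly where the centrality hypothesis — $(a,a)\notin\Sg_{\mathbf A^2}\big((\{a\}\times B)\cup(B\times\{a\})\big)$ for every $a\in A\setminus B$ — must be used, since an arbitrary absorbing subuniverse need not be ternary absorbing. Following \cite{zhuk2021strong}, I would argue by contradiction: assuming no ternary absorbing term exists, this says that a certain subalgebra $\mathbf S$ of a power $\mathbf A^{I}$ — where $I$ indexes the triples with at most one coordinate outside $B$ and $\mathbf S$ is generated by the three coordinate projections — is disjoint from $B^{I}$; one then exploits that $B$ absorbs $\mathbf A$ together with the WNU operation, via Barto--Kozik-type absorption and linking arguments inside $\mathbf A^{I}$, to extract some $a\in A\setminus B$ with $(a,a)\in\Sg_{\mathbf A^2}\big((\{a\}\times B)\cup(B\times\{a\})\big)$, contradicting centrality.

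The main obstacle is this last step, together with the circularity it has to overcome: a naive attempt to fold a $k$-ary absorbing term into a ternary one forces one to pad with ternary absorbing terms, which is precisely what is being constructed. Everything before it — the passage to $f$ and the behaviour of $t^{(1)},t^{(2)},t^{(3)}$ — is elementary term manipulation; the technical heart, and the sole place where centrality is genuinely used, is breaking this circularity through the absorption argument in a power, as carried out in \cite{zhuk2021strong}.
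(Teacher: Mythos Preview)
The paper does not give its own proof of this lemma; it is quoted verbatim as Corollary~6.11.1 of \cite{zhuk2021strong}. So there is no in-paper argument to compare against, and your proposal likewise defers the substantive step to that same reference.

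On the merits of your sketch: the computation with $t^{(1)},t^{(2)},t^{(3)}$ is correct (they fail exactly on positions $3,1,2$ respectively), but, as you yourself note, it is circular --- it only produces a ternary absorbing term if you already have a ternary absorbing term to feed the three outputs into. Padding the $k$-ary term $s$ with copies of the $t^{(i)}$ does not help either, since a single bad input can make roughly $k/3$ of the inner values leave $B$. So this preamble does not reduce the problem at all; the entire content of the lemma lives in the step you describe only in outline. Your sketch of that step (via the essential-relation characterisation of absorption, Lemma~\ref{LemAbsorptionImpliesEssential} here, and an extraction of some $a\notin B$ with $(a,a)\in\Sg_{\mathbf A^2}\big((\{a\}\times B)\cup(B\times\{a\})\big)$) names the right ingredients, but the actual extraction --- which is where centrality is used and where the work is --- is not carried out. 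As written, the proposal is a pointer to \cite{zhuk2021strong} wrapped in term manipulation that does not advance the argument.
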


In general ternary absorption does not imply 
central subuniverse, but they are equivalent for minimal Taylor algebras (see \cite{minimaltayloralgebras}).
We say that an algebra $\mathbf A$ is \emph{BA and center free} 
if $\mathbf A$ has no proper nonempty binary absorbing subuniverse or 
proper nonempty central subuniverse.

\textbf{Linear and PC congruences.}  There are two different types of 
irreducible congruences.
A congruence $\sigma$ on $\mathbf A$ is called \emph{linear} if
\begin{enumerate}
    \item $\sigma$ is irreducible
    \item $\sigma^{*}$ is a congruence
    \item there exist prime $p$ and $S\le (\sigma^{*})^{[4]}$ 
    such that for any block $B$ of $\sigma^{*}$ 
    there exists $n\ge 0$ with $(B/\sigma;S\cap (B/\sigma)^{4})\cong 
    (\mathbb Z_{p}^{n}; x_1-x_2=x_3-x_4)$.
    %where $n\ge 0$.
\end{enumerate}
Notice that the relation $S$ above is a bridge from 
$\sigma$ to $\sigma$ such that 
$\widetilde S =\proj_{1,2}(S) = \proj_{3,4}(S)= \sigma^{*}$.

An irreducible congruence is called a \emph{PC congruence} if it is not linear. 
%\textbf{PC and Linear.}
Notice that a congruence $\sigma$ is 
an irreducible/PC/linear congruence if and only if 
$0_{\mathbf A/\sigma}$ is an irreducible/PC/Linear congruence.

\begin{lem}\label{LEMLinearEquivalentConditions}

Suppose $\sigma$ is an irreducible congruence on $\mathbf A$.
Then the following conditions are equivalent:
\begin{enumerate}
\item $\sigma$ is a linear congruence;
\item there exists a bridge $\delta$ from $\sigma$ to $\sigma$ such that 
$\widetilde\delta\supsetneq \sigma$.
\end{enumerate}
\end{lem}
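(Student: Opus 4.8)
The plan is to prove the two implications separately; $(1)\Rightarrow(2)$ is a direct unwinding of definitions, and essentially all of the work goes into $(2)\Rightarrow(1)$.

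\emph{Direction $(1)\Rightarrow(2)$.} Assume $\sigma$ is linear and let $S\le(\sigma^{*})^{[4]}$ be the relation from the definition of a linear congruence. By the remark immediately following that definition, $S$ is a bridge from $\sigma$ to $\sigma$ with $\widetilde S=\proj_{1,2}(S)=\proj_{3,4}(S)=\sigma^{*}$. Since $\sigma$ is irreducible, $\cover{\sigma}\supsetneq\sigma$ by the very definition of $\cover{\sigma}$ as the minimal $\sigma$-stable binary subalgebra strictly containing $\sigma$; hence $\widetilde S=\sigma^{*}\supsetneq\sigma$ and we may take $\delta:=S$.

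\emph{Direction $(2)\Rightarrow(1)$, reduction.} First I would reduce to the case $\sigma=0_{\mathbf A}$. By the remark that $\sigma$ is an irreducible (resp.\ PC, linear) congruence iff $0_{\mathbf A/\sigma}$ is, and since factoring the stable first two and last two coordinates of $\delta$ by $\sigma$ yields a bridge $\delta/\sigma$ from $0_{\mathbf A/\sigma}$ to $0_{\mathbf A/\sigma}$ with $\widetilde{\delta/\sigma}=\widetilde\delta/\sigma\supsetneq 0_{\mathbf A/\sigma}$, I may replace $\mathbf A$ by $\mathbf A/\sigma$ and assume $\sigma=0_{\mathbf A}$. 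Then $\widetilde\delta$ is a reflexive subalgebra of $\mathbf A^{2}$ strictly larger than the diagonal, so by irreducibility of $0_{\mathbf A}$ we get $0_{\mathbf A}^{*}\subseteq\widetilde\delta$.

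\emph{Direction $(2)\Rightarrow(1)$, main construction.} The next step produces a bridge whose associated binary relation is a congruence. The relation $\delta^{-1}$ obtained by swapping the first pair and the last pair of coordinates is again a bridge from $0_{\mathbf A}$ to $0_{\mathbf A}$, with $\widetilde{\delta^{-1}}=\widetilde\delta^{-1}$. By the bridge composition lemma (Lemma~\ref{LEMBridgeComposition}), any composition of copies of $\delta$ and $\delta^{-1}$ is again a bridge from $0_{\mathbf A}$ to $0_{\mathbf A}$, and iterating the composition of $\delta$ with $\delta^{-1}$ gives bridges $\delta_{k}$ with $\widetilde{\delta_{k}}=(\widetilde\delta\circ\widetilde\delta^{-1})^{k}$. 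Since $\widetilde\delta\circ\widetilde\delta^{-1}$ is reflexive and symmetric and $\mathbf A$ is finite, for $k$ large this stabilizes to a reflexive, symmetric, transitive subalgebra $E$ of $\mathbf A^{2}$, i.e.\ a congruence on $\mathbf A$ with $0_{\mathbf A}\subsetneq 0_{\mathbf A}^{*}\subseteq E=\widetilde{\delta_{k}}$. For every block $B$ of $E$ of size at least $2$, the restriction $\delta_{k}\cap B^{4}$ is a bridge from $0_{\mathbf B}$ to $0_{\mathbf B}$ on the subalgebra $\mathbf B\le\mathbf A$ (which is still in $\mathcal V_{n}$) with $\widetilde{\delta_{k}\cap B^{4}}=B^{2}$, which is linked; hence Lemma~\ref{LEMMakePerfectCongruenceFromLinked} applies and $0_{\mathbf B}$ is a perfect linear congruence. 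From its witness $\zeta_{B}\le\mathbf B^{2}\times\mathbf Z_{p_{B}}$ one reads off a prime $p_{B}$, an integer $n_{B}\ge 1$, and an isomorphism from $(B;S_{B})$ onto $(\mathbb Z_{p_{B}}^{n_{B}};x_{1}-x_{2}=x_{3}-x_{4})$ for the natural difference relation $S_{B}$.

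\emph{Assembly and the main obstacle.} It remains to package these pieces into the data required by the definition of a linear congruence for $0_{\mathbf A}$: show that $\sigma^{*}=0_{\mathbf A}^{*}$ is a congruence (it lies between $0_{\mathbf A}$ and $E$, and the affine structure on the blocks together with the minimality of the cover should force it to be transitive and symmetric, so that its blocks carry the $\mathbf Z_{p}$ structure from the previous step); show that a single prime $p$ works for all blocks — the admissible primes satisfy $p\mid n-1$ since $\mathbf Z_{p}$ must lie in $\mathcal V_{n}$, and the isomorphisms between distinct blocks supplied by further composed bridges pin down one $p$; and glue the relations $S_{B}$ into a single $S\le(\sigma^{*})^{[4]}$ realizing the required isomorphism block by block. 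I expect this final assembly to be the main obstacle — in particular proving that $0_{\mathbf A}^{*}$ is a congruence and that the per-block affine structures are mutually compatible — whereas the construction of a congruence-valued bridge by iterated composition is a routine finiteness argument and the per-block analysis is handed to us by Lemma~\ref{LEMMakePerfectCongruenceFromLinked}.
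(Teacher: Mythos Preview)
Your $(1)\Rightarrow(2)$ is fine and matches the paper. For $(2)\Rightarrow(1)$ your outline is in the right spirit but has real gaps.

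First, Lemma~\ref{LEMMakePerfectCongruenceFromLinked} is stated only for algebras in $\mathcal V_{n}$, whereas Lemma~\ref{LEMLinearEquivalentConditions} is asserted for arbitrary finite idempotent Taylor algebras; you cannot invoke it here without an extra reduction. More seriously, that lemma requires the congruence in question to be \emph{irreducible}: you would need $0_{\mathbf B}$ to be irreducible on the block $\mathbf B$ of your congruence $E$, and irreducibility of $0_{\mathbf A}$ does not pass to subalgebras. So the per-block step is not justified.

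Second, the part you flag as the ``main obstacle'' --- that $0_{\mathbf A}^{*}$ is a congruence, that a single prime $p$ works across blocks, and that the local structures glue to a single $S\le(\sigma^{*})^{[4]}$ --- is in fact the entire content of the implication, and your sketch does not address it. The paper handles all of this in one stroke: it symmetrizes the bridge by setting
\[
\delta'(x_1,x_2,x_3,x_4)=\exists x_5\exists x_6\;\delta(x_1,x_2,x_5,x_6)\wedge\delta(x_3,x_4,x_5,x_6),
\]
and then applies Lemma~\ref{LEMNontrivialReflexiveBridgeImplies}, which (working directly on $\mathbf A$, not on blocks of an auxiliary congruence) shows that $\sigma^{*}$ is a congruence, that each block of $\sigma^{*}$ modulo $\sigma$ is BA and center free, and --- for a symmetric bridge --- that there is a single prime $p$ with $(B/\sigma;\delta'\cap B^{4})\cong(\mathbb Z_{p}^{n_{B}};x_1-x_2=x_3-x_4)$ for every block $B$. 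That lemma's proof uses Lemma~\ref{LEMBlockOfGoodBridgeDoesNotHaveBAC} for the BA/center-freeness, the Abelian/affine equivalence (Lemmas~\ref{LemBridgeEquivalentToAbelianness}--\ref{LEMNiceBridgeGivesAbelianGroup}) for the group structure, and a pp-definability argument against irreducibility of $\sigma$ to pin down a single prime. Your proposed route via Lemma~\ref{LEMMakePerfectCongruenceFromLinked} would, even if its hypotheses were met, still leave exactly these three points open.
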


Another important fact is that 
there cannot be a bridge between PC and linear congruences.

\begin{lem}\label{LEMNoBridgeBetweenDifferentTypes}

Suppose $\sigma_1$ is a linear congruence, 
$\sigma_2$ is an irreducible congruence, 
$\delta$ is a bridge from $\sigma_1$ to $\sigma_2$.
Then $\sigma_2$ is a also linear congruence.
\end{lem}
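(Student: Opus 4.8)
The statement to prove is Lemma~\ref{LEMNoBridgeBetweenDifferentTypes}: if $\sigma_1$ is linear, $\sigma_2$ is irreducible, and $\delta$ is a bridge from $\sigma_1$ to $\sigma_2$, then $\sigma_2$ is linear. The plan is to use the characterization from Lemma~\ref{LEMLinearEquivalentConditions}, which says $\sigma_2$ is linear if and only if there exists a bridge $\delta'$ from $\sigma_2$ to $\sigma_2$ with $\widetilde{\delta'}\supsetneq \sigma_2$. So the goal is to manufacture such a self-bridge at $\sigma_2$ out of the data we have at $\sigma_1$ and the bridge $\delta$.

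First I would observe that since $\sigma_1$ is linear, by Lemma~\ref{LEMLinearEquivalentConditions} there is a bridge $\delta_1$ from $\sigma_1$ to $\sigma_1$ with $\widetilde{\delta_1}\supsetneq\sigma_1$; in fact from the definition of linear congruence the canonical choice is the relation $S$ coming from the $\mathbb Z_p$-structure, for which $\widetilde S = \sigma_1^{*}$. Next, using the bridge $\delta$ from $\sigma_1$ to $\sigma_2$, I would form the "reversed" bridge $\delta^{-1}$ (swapping the first pair of coordinates with the last pair), which is a bridge from $\sigma_2$ to $\sigma_1$; this uses that the bridge axioms are symmetric under this swap. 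Then I would compose the three bridges in the order $\delta^{-1}$, $\delta_1$, $\delta$ — i.e. go from $\sigma_2$ to $\sigma_1$, then around $\sigma_1$ via $\delta_1$, then back from $\sigma_1$ to $\sigma_2$ — using the composition formula
$$\delta'(x_1,x_2,z_1,z_2) = \exists y_1 y_2 u_1 u_2\; \delta^{-1}(x_1,x_2,y_1,y_2)\wedge \delta_1(y_1,y_2,u_1,u_2)\wedge \delta(u_1,u_2,z_1,z_2),$$
and invoke Lemma~\ref{LEMBridgeComposition} (twice, or its iterated form) to conclude $\delta'$ is a bridge from $\sigma_2$ to $\sigma_2$, together with the identity $\widetilde{\delta'} = \widetilde{\delta^{-1}}\circ\widetilde{\delta_1}\circ\widetilde{\delta}$ for the associated binary relations. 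Here I should note that composition of bridges requires the intermediate congruences to be irreducible, which holds: $\sigma_1$ is irreducible (being linear) and $\sigma_2$ is irreducible by hypothesis.

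The remaining, and main, obstacle is to show that this composed self-bridge $\delta'$ actually satisfies $\widetilde{\delta'}\supsetneq\sigma_2$, i.e. is strictly larger than $\sigma_2$ — otherwise Lemma~\ref{LEMLinearEquivalentConditions} does not apply. The inclusion $\widetilde{\delta'}\supseteq\sigma_2$ follows from the bridge axioms (a self-bridge's $\widetilde{}$ always contains the congruence, since $\proj_{1,2}(\delta')\supseteq\sigma_2$ and the "iff" axiom). For strictness, the key point is that $\widetilde\delta$, viewed as a relation between $A_1/\sigma_1$-classes and $A_2/\sigma_2$-classes, induces a subdirect, and in fact one should argue essentially bijective, correspondence linking the two blocks/linear structures: because $\delta$ is a bridge, $\widetilde\delta$ "transports" the non-$\sigma_1$ pair witnessed inside $\widetilde{\delta_1}$ to a non-$\sigma_2$ pair. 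Concretely, pick $(a,b)\in\widetilde{\delta_1}\setminus\sigma_1$; since $\proj_{1,2}(\delta)\supsetneq\sigma_1$ and $\delta$ relates $\sigma_1$-pairs exactly to $\sigma_2$-pairs, chase $a$ forward through $\widetilde\delta$ and $b$ forward through $\widetilde\delta$ to reach $a',b'$ with $(a',b')\in\widetilde{\delta'}$; one must check $(a',b')\notin\sigma_2$, which is where the bridge axiom "$(u_1,u_2)\in\sigma_1 \Leftrightarrow (z_1,z_2)\in\sigma_2$" is used in the contrapositive direction, combined with the fact that $\widetilde\delta$ does not collapse the relevant pair (using that $\proj_{3,4}(\delta)\supsetneq\sigma_2$ together with irreducibility of $\sigma_2$ to rule out degenerate collapse). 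Packaging this carefully — possibly by first establishing a small lemma that for a bridge $\delta$ from $\sigma_1$ to $\sigma_2$ the relation $\widetilde\delta$ composed with its reverse contains a non-$\sigma_1$ pair on the $\sigma_1$ side iff it does on the $\sigma_2$ side — is the delicate part of the argument; everything else is routine verification of the bridge axioms through the composition formula.
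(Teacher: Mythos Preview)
Your bridge-composition approach is natural and does handle two of the three subcases the paper's own proof distinguishes, but there is a genuine gap in the third. Make $\widetilde\delta$ rectangular and write $\eta_1=\LeftLinked(\widetilde\delta)$, $\eta_2=\RightLinked(\widetilde\delta)$, so that $\widetilde\delta$ induces a bijection $A_1/\eta_1\cong A_2/\eta_2$. If $\eta_2\supsetneq\sigma_2$ then already composing $\delta$ with its reverse gives a self-bridge on $\sigma_2$ whose $\widetilde{\cdot}$ exceeds $\sigma_2$, and you are done; if $\eta_2=\sigma_2$ and $\eta_1=\sigma_1$ then the bijection transports linearity directly. The problematic case is $\eta_2=\sigma_2$ while $\eta_1\supsetneq\sigma_1$, which by irreducibility forces $\eta_1\supseteq\sigma_1^{*}$. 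Here your formula gives $\widetilde{\delta'}=\widetilde\delta^{-1}\circ\widetilde{\delta_1}\circ\widetilde\delta$, but the only witness $\delta_1$ you are guaranteed (the canonical $S$ from the definition of a linear congruence) has $\widetilde{\delta_1}=\sigma_1^{*}\subseteq\eta_1$; then the middle step never leaves an $\eta_1$-block and the outer bijection lands everything back in the same $\sigma_2$-class, so $\widetilde{\delta'}=\sigma_2$. The composite $\delta'$ is still a bridge, but not one with $\widetilde{\delta'}\supsetneq\sigma_2$, so Lemma~\ref{LEMLinearEquivalentConditions} does not apply. Your proposed ``small lemma'' is exactly what fails here: $\widetilde\delta\circ\widetilde\delta^{-1}$ can properly contain $\sigma_1$ while $\widetilde\delta^{-1}\circ\widetilde\delta$ equals $\sigma_2$, so the two sides are not equivalent.

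The paper resolves this third case by an entirely different, indirect route. It first upgrades $\delta$ to a bridge $\omega$ with the extra property $\proj_{1,3}(\omega)=\widetilde\omega=\widetilde\delta$, then assumes for contradiction that $\sigma_2$ is a PC congruence. Under that assumption Lemma~\ref{LEMPCBridgesAreTrivial} forces every reflexive self-bridge on $\sigma_2$ with full projections to have the rigid form $\sigma_2(x_1,x_3)\wedge\sigma_2(x_2,x_4)$ (or its swap); applying this to two auxiliary self-bridges $\xi_1,\xi_2$ built from $\omega$ yields an extra symmetry of $\omega$, which is then leveraged to construct a relation $\zeta$ that is a nontrivial self-bridge on $\sigma_2$ after all --- contradicting PC-ness. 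So the missing idea is not a cleverer composition but the dichotomy ``not linear $\Rightarrow$ PC'' together with the rigidity of PC self-bridges.
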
%LEMNoBridgeBetweenDifferentTypesLEM}

Unlike bridges for linear congruences, bridges from PC congruences are trivial.

\begin{lem}\label{LEMPCBridgesAreTrivial}

Suppose $\sigma$ is a PC congruence on $A$.
%$\widetilde \delta = \sigma$ 
%for any reflexive bridge $\delta$ from $\sigma$ to $\sigma$. 
Then any reflexive bridge $\delta$ from $\sigma$ to $\sigma$
such that $\proj_{1,2}(\delta) = \proj_{3,4}(\delta)=\sigma^{*}$
can be represented as 
$\delta(x_1,x_2,x_3,x_4)= \sigma(x_1,x_3)\wedge \sigma(x_2,x_4)$ or 
$\delta(x_1,x_2,x_3,x_4)= \sigma(x_1,x_4)\wedge \sigma(x_2,x_3)$.
\end{lem}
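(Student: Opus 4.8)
The plan is to argue by contradiction: assume $\delta$ is a reflexive bridge from $\sigma$ to $\sigma$ with $\proj_{1,2}(\delta)=\proj_{3,4}(\delta)=\sigma^{*}$ that does \emph{not} split as one of the two stated conjunctions, and derive that $\sigma$ is a linear congruence, contradicting the assumption that $\sigma$ is a PC congruence. By the observation that a congruence $\sigma$ is PC/linear iff $0_{\mathbf A/\sigma}$ is, we may pass to $\mathbf A/\sigma$ and assume $\sigma=0_{\mathbf A}$; then $\sigma^{*}$ is the unique cover $\delta_0:=0_{\mathbf A}^{*}$ of the equality relation among subalgebras stable under $0_{\mathbf A}$ (here I use irreducibility of $\sigma$). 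So $\delta\le \mathbf A^4$ is reflexive, $\proj_{1,2}(\delta)=\proj_{3,4}(\delta)=\sigma^{*}$, condition~4 of a bridge says $(a_1,a_2,a_3,a_4)\in\delta\Rightarrow(a_1=a_2\Leftrightarrow a_3=a_4)$, and we must show $\delta$ equals $\{a_1=a_3,\ a_2=a_4\}$ or $\{a_1=a_4,\ a_2=a_3\}$.

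The key step is to extract a bridge $\delta'$ from $\sigma$ to $\sigma$ with $\widetilde{\delta'}\supsetneq\sigma$, so that Lemma~\ref{LEMLinearEquivalentConditions} forces $\sigma$ linear. First I would look at $\widetilde\delta(x,y)=\delta(x,x,y,y)$: by reflexivity of $\delta$ it contains the diagonal, and by condition~4 applied with $a_1=a_2=x$ it forces $a_3=a_4$, so $\widetilde\delta$ is itself a reflexive subalgebra of $\mathbf A^2$ stable under $\sigma=0$, hence $\widetilde\delta=0_{\mathbf A}$ — no help directly. So instead I would consider the binary relations obtained by pinning coordinates: for fixed $(a,b)\in\sigma^{*}\setminus\sigma$ (such a pair exists since $\sigma^{*}\supsetneq\sigma$, as $\sigma$ is irreducible hence properly covered), set $\eta_{a,b}:=\{(a_3,a_4)\mid (a,b,a_3,a_4)\in\delta\}$; by condition~4 this is a subalgebra of $\mathbf A^2$ contained in $\sigma^{*}$ and disjoint from $\sigma$, and by $\proj_{3,4}(\delta)=\sigma^{*}$ and the parallelogram/rectangularity machinery it is nonempty. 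The crux is to show that if $\delta$ is not one of the two split forms then, after composing $\delta$ with its converse and with the standard bridges, one produces a bridge $\delta''$ from $\sigma$ to $\sigma$ whose associated $\widetilde{\delta''}$ strictly contains $\sigma$ — concretely, $\delta''(x_1,x_2,z_1,z_2)=\exists y_1 y_2\,\delta(x_1,x_2,y_1,y_2)\wedge\delta(z_1,z_2,y_1,y_2)$ is a bridge from $\sigma$ to $\sigma$ (using the composition lemma and that $\delta^{-1}$ in the last two coordinates is again a bridge), and $\widetilde{\delta''}$ collapses to $\sigma$ precisely when the "columns" $\eta_{a,b}$ behave like graphs of the two coordinate identifications; non-splitting of $\delta$ means some $\eta_{a,b}$ is not such a graph, and that surplus propagates to $\widetilde{\delta''}\supsetneq\sigma$.

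I expect the main obstacle to be exactly this propagation step: turning the \emph{local} failure of $\delta$ to split (one pinned pair $(a,b)$ with an extra element in $\eta_{a,b}$) into the \emph{global} statement that a composed bridge has $\widetilde{\cdot}$ strictly above $\sigma$. This is where I would need to invoke the structural facts available for algebras in $\mathcal V_n$ — in particular that $\sigma^{*}$ is generated over $\sigma$ in a controlled way, that $\mathbf A/\sigma$ being BA-and-center-free is what makes PC congruences rigid, and Lemma~\ref{LEMMakePerfectCongruenceFromLinked} to detect linearity once $\widetilde{\delta''}$ is linked. A secondary technical point is checking that the two candidate conjunctions are genuinely \emph{subalgebras} and are themselves valid bridges (so that "$\delta$ contains one of them" can be upgraded to equality using minimality/maximality of covers and the fact that $\proj_{1,2}(\delta)=\proj_{3,4}(\delta)=\sigma^{*}$ pins the size of $\delta$); this should follow from idempotency and a short rectangularity argument, but it needs to be done carefully so that no third possibility (a "twisted" subdirect relation between the two graphs) survives — and ruling out that twist is essentially the same linear-algebra-flavoured computation as in the proof of Theorem~\ref{THMMainTheoremOnXYSymmetric}.
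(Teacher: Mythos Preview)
Your overall strategy---assume $\delta$ does not split and produce a bridge witnessing linearity---matches the paper's, and your composed relation $\delta''(x_1,x_2,z_1,z_2)=\exists y_1y_2\,\delta(x_1,x_2,y_1,y_2)\wedge\delta(z_1,z_2,y_1,y_2)$ is exactly the $\xi$ the paper starts with. But the key claim in your plan is wrong: $\widetilde{\delta''}$ is \emph{always} equal to $\sigma$, regardless of whether $\delta$ splits. Indeed, $\widetilde{\delta''}(x,z)=\exists y_1y_2\,\delta(x,x,y_1,y_2)\wedge\delta(z,z,y_1,y_2)$; bridge condition~4 forces $y_1=y_2$, so this reduces to $\exists y\,\widetilde\delta(x,y)\wedge\widetilde\delta(z,y)$, and you already argued $\widetilde\delta=\sigma$. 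Thus the ``surplus'' in some column $\eta_{a,b}$ cannot propagate to $\widetilde{\delta''}$; your dichotomy ``$\widetilde{\delta''}$ collapses to $\sigma$ precisely when the columns are graphs'' is false.

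The paper extracts information from $\xi$ differently: it looks at $\RightLinked(\proj_{1,2,3}(\xi))$ and $\RightLinked(\proj_{1,2,4}(\xi))$. When one of these is large ($\supseteq\sigma^{*}$), a separate inductive lemma (Lemma~\ref{LEMPCCongruencePropertyInductiveStep}, proved by induction on $|A|$ using absorption) produces a tuple $(a,a,b,b)\in\xi$ with $a\neq b$, directly contradicting PC. When both are $\sigma$, the paper builds two further auxiliary four-variable relations $\zeta_1,\zeta_2$ from $\delta$ and shows by case analysis that $\proj_{1,3}(\delta)=\sigma$ or $\proj_{1,4}(\delta)=\sigma$ (and symmetrically for coordinate~2), which is exactly the split form. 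Also note: invoking Lemma~\ref{LEMMakePerfectCongruenceFromLinked} or other $\mathcal V_n$-specific facts would weaken the result---the paper proves this lemma for arbitrary finite idempotent Taylor algebras, and it is used in that generality later.
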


\begin{lem}\label{LEMBridgeTOPCCongruence}

Suppose 
$\delta$ is a bridge from 
a PC congruence $\sigma_1$ on $\mathbf A_{1}$ to 
an irreducible congruence $\sigma_2$ on 
$\mathbf A_{2}$, 
$\proj_{1,2}(\delta) = \sigma_1^{*}$, and
$\proj_{3,4}(\delta) = \sigma_2^{*}$.
Then \begin{enumerate}
    \item $\sigma_2$ is a PC congruence;
    \item $\mathbf A_1/\sigma_1\cong \mathbf A_2/\sigma_2$;
    \item $\{(a/\sigma_{1},b/\sigma_{2})\mid (a,b)\in\widetilde \delta\}$ is bijective; 
    \item $\delta(x_1,x_2,x_3,x_4) = 
\widetilde \delta(x_1,x_3)\wedge \widetilde \delta(x_2,x_4)$ or
$\delta(x_1,x_2,x_3,x_4) = 
\widetilde \delta(x_1,x_4)\wedge \widetilde \delta(x_2,x_3)$.
\end{enumerate}
\end{lem}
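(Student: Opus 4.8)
The plan is to leverage the two preceding lemmas — Lemma~\ref{LEMPCBridgesAreTrivial} (bridges from a PC congruence to itself are trivial) and Lemma~\ref{LEMNoBridgeBetweenDifferentTypes} (no bridge between linear and non-linear) — together with bridge composition (the formula $\delta(x_1,x_2,z_1,z_2) = \exists y_1\exists y_2\,\delta_1(x_1,x_2,y_1,y_2)\wedge\delta_2(y_1,y_2,z_1,z_2)$ and the accompanying fact $\widetilde\delta = \widetilde{\delta_1}\circ\widetilde{\delta_2}$). First I would establish item~(1): if $\sigma_2$ were linear, then composing $\delta$ with the bridge witnessing linearity of $\sigma_2$ (and the fact that $\sigma_1,\sigma_2$ are irreducible so composition is legal) would produce a bridge from $\sigma_1$ to a linear congruence; but then Lemma~\ref{LEMNoBridgeBetweenDifferentTypes} applied in the reverse direction — using $\widetilde\delta^{-1}$, which is a bridge from $\sigma_2$ to $\sigma_1$ — forces $\sigma_1$ to be linear, contradicting that $\sigma_1$ is PC. (Alternatively: $\widetilde{\delta}^{-1}\circ\widetilde{\delta}$ composed appropriately would give a nontrivial self-bridge on $\sigma_1$ via Lemma~\ref{LEMLinearEquivalentConditions}, again contradicting PC-ness.) So $\sigma_2$ is PC.

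**The bijectivity and product-decomposition steps.** For items (2)--(4) I would pass to the factor algebras, replacing $\mathbf A_i$ by $\mathbf A_i/\sigma_i$ so that $\sigma_1,\sigma_2$ become $0$-congruences, which is legitimate by the remark that the type of a congruence only depends on $0_{\mathbf A/\sigma}$. Now $\widetilde\delta \subseteq \mathbf A_1/\sigma_1 \times \mathbf A_2/\sigma_2$ is a subdirect relation, and I want to show it is bijective. Consider the composition $\widetilde\delta^{-1}\circ\widetilde\delta$: this is a subdirect binary relation on $\mathbf A_1/\sigma_1$ containing the diagonal, and — crucially — it equals $\widetilde{\delta'}$ where $\delta'$ is the composed bridge from $\sigma_1$ to $\sigma_1$ obtained from $\delta$ and $\delta^{-1}$ (reversing the last two coordinates). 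Since $\sigma_1$ is PC, Lemma~\ref{LEMPCBridgesAreTrivial} forces $\delta'$ — once we check it is reflexive and has the right projections — to have the trivial form, which on the factor means $\widetilde\delta^{-1}\circ\widetilde\delta = 0_{\mathbf A_1/\sigma_1}$ (the identity, modulo handling the ``swapped'' form, which is ruled out by counting or by reflexivity). Hence $\widetilde\delta$ is the graph of an injective map; symmetrically $\widetilde\delta\circ\widetilde\delta^{-1} = 0_{\mathbf A_2/\sigma_2}$, so $\widetilde\delta$ is a bijective subdirect subalgebra, giving $\mathbf A_1/\sigma_1\cong\mathbf A_2/\sigma_2$ and item~(3). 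Finally, item~(4): $\proj_{1,3}(\delta)$ and $\proj_{2,4}(\delta)$ both contain $\widetilde\delta$ and are contained in it (the latter because, on the factor, condition~4 of the bridge definition says $a_1 = a_2 \Leftrightarrow a_3 = a_4$, and subdirectness plus the rectangularity forced by the parallelogram-type structure pins $\delta$ down); combining the two possible matchings of the first pair of coordinates to the second pair gives exactly the stated two forms $\delta(x_1,x_2,x_3,x_4) = \widetilde\delta(x_1,x_3)\wedge\widetilde\delta(x_2,x_4)$ or $\widetilde\delta(x_1,x_4)\wedge\widetilde\delta(x_2,x_3)$.

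**Where the difficulty lies.** The routine parts are the reduction to factor algebras and the bookkeeping with bridge composition. The main obstacle is verifying that the composed relations really are \emph{bona fide} bridges (conditions 1--4 of the bridge definition, in particular $\proj_{1,2}(\delta')\supsetneq\sigma_1$ and the equivalence in condition~4) so that Lemma~\ref{LEMPCBridgesAreTrivial} and Lemma~\ref{LEMNoBridgeBetweenDifferentTypes} actually apply; a bridge that degenerates — say $\widetilde{\delta'}$ collapsing to $\sigma_1$ itself — is exactly the ``trivial'' outcome we want, but one must argue the projection condition survives composition. This is where irreducibility of $\sigma_1$ and $\sigma_2$ is used: it guarantees the composed bridge does not fragment, and it is the hypothesis under which the composition lemma (Lemma~\ref{LEMBridgeComposition}) was stated. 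Once the composed objects are confirmed to be bridges, the three conclusions follow mechanically from the trichotomy ``PC self-bridge is trivial / linear self-bridge is nontrivial / no cross-type bridge.''
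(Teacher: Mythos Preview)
Your plan for items (1)--(3) is correct and coincides with the paper's: apply Lemma~\ref{LEMNoBridgeBetweenDifferentTypes} to the reversed bridge for (1), then compose $\delta$ with its reverse to obtain a self-bridge on $\sigma_1$ and invoke Lemma~\ref{LEMPCBridgesAreTrivial} for (2)--(3). The paper does exactly this, without even passing to the factor.

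For item (4), however, your argument has a genuine gap. The claim that $\proj_{1,3}(\delta)$ and $\proj_{2,4}(\delta)$ are \emph{contained in} $\widetilde\delta$ is not justified by the bridge axioms and is in fact false in the swapped outcome: if $\delta(x_1,x_2,x_3,x_4)=\widetilde\delta(x_1,x_4)\wedge\widetilde\delta(x_2,x_3)$, then $\proj_{1,3}(\delta)$ is the full relation $A_1\times A_2$ (in that case necessarily $\sigma_1^*=A_1^2$), strictly larger than $\widetilde\delta$. The vague appeal to ``rectangularity forced by the parallelogram-type structure'' does not separate the two cases, and no amount of projection bookkeeping on $\delta$ alone will.

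The paper's fix is short and is exactly the ``identification via $\widetilde\delta$'' you are circling around but never write down: once $\widetilde\delta$ is known to be bijective modulo $\sigma_1,\sigma_2$, define
\[
\xi(x_1,x_2,x_5,x_6)\;=\;\exists x_3\,\exists x_4\;\delta(x_1,x_2,x_3,x_4)\wedge\widetilde\delta(x_5,x_3)\wedge\widetilde\delta(x_6,x_4),
\]
a reflexive bridge from $\sigma_1$ to $\sigma_1$ with $\proj_{1,2}(\xi)=\proj_{3,4}(\xi)=\sigma_1^*$. Now Lemma~\ref{LEMPCBridgesAreTrivial} applies directly to $\xi$, and each of its two trivial forms translates (via the bijectivity of $\widetilde\delta$) into one of the two stated forms for $\delta$. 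Replace your projection argument with this construction and the proof is complete.
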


The following claims show the connection of the new definitions 
with the linear and PC subuniverses from the 
original proof of the CSP Dichotomy Conjecture
\cite{zhuk2020proof}.

\begin{lem}\label{LEMLInearOnTheTopIsEasy}

Suppose $\sigma$ is a linear congruence on $\mathbf A\in\mathcal V_{n}$ 
such that 
$\sigma^{*} = A^{2}$.
Then $\mathbf A/\sigma\cong \mathbf Z_{p}$ for some prime $p$.
\end{lem}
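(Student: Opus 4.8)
The plan is to unwind the definition of a linear congruence together with the special structure available when $\sigma^{*}=A^{2}$. By condition (3) in the definition of a linear congruence, there exist a prime $p$ and a bridge $S\le(\sigma^{*})^{[4]}=(A^{2})^{[4]}$ such that for every block $B$ of $\sigma^{*}$ we have $(B/\sigma;\,S\cap(B/\sigma)^{4})\cong(\mathbb Z_{p}^{n};\,x_1-x_2=x_3-x_4)$ for some $n\ge 0$. Since $\sigma^{*}=A^{2}$, there is exactly one block, namely $A$ itself, so $(A/\sigma;\,S\cap(A/\sigma)^{4})\cong(\mathbb Z_{p}^{n};\,x_1-x_2=x_3-x_4)$. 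Thus $A/\sigma$, as a set, is in bijection with $\mathbb Z_{p}^{n}$, and the quaternary relation $S/\sigma$ on it is exactly the graph of the quaternary relation $x_1-x_2=x_3-x_4$ on $\mathbb Z_p^n$. It remains to identify $n$ with $1$ and to show that the algebra structure $w^{\mathbf A/\sigma}$ coincides with addition modulo $p$ on a single coordinate.

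First I would pass to $\mathbf B:=\mathbf A/\sigma$, so that $0_{\mathbf B}$ is a linear congruence (using the final "Notice" before Lemma~\ref{LEMLinearEquivalentConditions}) with $0_{\mathbf B}^{*}=B^{2}$, and the quaternary relation $R:=S/\sigma\le B^{4}$ is (under a fixed bijection $B\cong\mathbb Z_p^n$) the relation $\{(a_1,a_2,a_3,a_4):a_1-a_2=a_3-a_4\}$. Since $\mathbf B\in\mathcal V_{n}$ has an idempotent WNU term operation $w$ preserving $R$, I would read off what operations on $\mathbb Z_p^n$ preserve the relation $a_1-a_2=a_3-a_4$: an idempotent operation preserving this relation must be \emph{affine}, i.e.\ of the form $w(y_1,\dots,y_n)=\sum c_i y_i$ with $\sum c_i=1$ over $\mathbb Z_p$ acting coordinatewise (this is the standard fact that the relation $x_1-x_2=x_3-x_4$ generates, together with idempotency, the affine clone over $\mathbb Z_p$; more precisely $\mathrm{Pol}$ of this relation consists of affine maps, and idempotency forces the affine constant to vanish). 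Hence $\mathbf B$ is term-equivalent to a module over $\mathbb Z_p$, in particular $\mathbf B\cong\mathbb Z_p^{n}$ as an algebra in $\mathcal V_n$ with operation $x_1+\dots+x_n\pmod p$ applied coordinatewise — wait, here I must be careful: $\mathbf B$ carries a \emph{single} $n$-ary WNU, and I need it to be $\mathbf Z_p$, a one-dimensional module.

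The main obstacle, then, is ruling out $n\ge 2$, i.e.\ showing the dimension is exactly $1$. The key point should be \emph{irreducibility} of $\sigma$ (equivalently of $0_{\mathbf B}$): if $n\ge 2$, then $\mathbb Z_p^{n}$ with the affine structure has proper nonzero congruences (the coordinate subspaces), and I would argue that $0_{\mathbf B}=\bigcap_i \theta_i$ where $\theta_i$ is the congruence "agree on all coordinates except possibly the $i$-th", each $\theta_i\ne 0_{\mathbf B}$ and each $\theta_i$ is stable under $0_{\mathbf B}$ (trivially, since $0_{\mathbf B}$ is equality), contradicting irreducibility. More carefully, irreducibility says $0_{\mathbf B}$ cannot be written as an intersection of subalgebras $\mathbf S_i\le\mathbf B\times\mathbf B$ with $\mathbf S_i\ne 0_{\mathbf B}$; taking the $\mathbf S_i=\theta_i$ to be these coordinate-collapsing congruences (which are subalgebras of $\mathbf B^2$) does exactly this when $n\ge 2$. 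Hence $n=1$, so $B/\sigma\cong\mathbb Z_p$ as a set; combined with the affine-clone identification, $w^{\mathbf B}$ is $x_1+\dots+x_n\pmod p$, i.e.\ $\mathbf B\cong\mathbf Z_p$, which is exactly $\mathbf A/\sigma\cong\mathbf Z_p$. I should double-check the case $n=0$: then $|B|=1$, $\sigma^{*}=A^{2}$ forces $\sigma=0_{\mathbf A}$ is already $A^2$, but a congruence equal to $A^2$ is trivial hence not irreducible, so $n=0$ is excluded and $p$ is a genuine prime with $|A/\sigma|=p\ge 2$.
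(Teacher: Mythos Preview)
Your approach is correct and genuinely different from the paper's. The paper does not unwind the dimension-$d$ relational isomorphism at all: it observes that the bridge $S$ from the definition of a linear congruence has $\widetilde S=\sigma^{*}=A^{2}$, which is linked, and then invokes Lemma~\ref{LEMMakePerfectCongruenceFromLinked} to conclude that $\sigma$ is a \emph{perfect} linear congruence. That yields a ternary $\zeta\le\mathbf A\times\mathbf A\times\mathbf Z_{p}$ with $\proj_{1,2}(\zeta)=A^{2}$; fixing any $a\in A$ and setting $\xi(x,z)=\zeta(x,a,z)$ produces a bijective relation witnessing $\mathbf A/\sigma\cong\mathbf Z_{p}$ directly. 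Your route---extract the relational isomorphism with $(\mathbb Z_{p}^{d};x_{1}-x_{2}=x_{3}-x_{4})$, pin down the transported operation, and kill $d\ge 2$ by irreducibility---is more self-contained and avoids the perfect-linear-congruence layer, at the cost of a longer argument.

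Two points in your write-up need tightening. First, an idempotent polymorphism of $x_{1}-x_{2}=x_{3}-x_{4}$ on $\mathbb Z_{p}^{d}$ has the form $\sum_{i}L_{i}y_{i}$ with $\sum_{i}L_{i}=I$ where the $L_{i}$ are \emph{endomorphisms} of $\mathbb Z_{p}^{d}$ (i.e.\ $d\times d$ matrices), not scalars; the ``acting coordinatewise'' claim is premature. You need the WNU identity to force $L_{1}=\cdots=L_{n}=:L$, and then idempotency gives $nL=I$, hence $L=n^{-1}I$ is scalar. Only at that point are your coordinate congruences $\theta_{i}$ genuine subalgebras of $\mathbf B^{2}$, so the irreducibility contradiction goes through. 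Second, after establishing $d=1$ you have $w^{\mathbf B}(y_{1},\dots,y_{n})=n^{-1}(y_{1}+\cdots+y_{n})$ on $\mathbb Z_{p}$, which is \emph{not} isomorphic to $\mathbf Z_{p}=(\mathbb Z_{p};\,y_{1}+\cdots+y_{n})$ unless $n\equiv 1\pmod p$. Here you must invoke the \emph{special} identity (part of $\mathcal V_{n}$): it forces $L^{2}=L$, and since $L$ is invertible this gives $L=I$, hence $n\equiv 1$ and $w^{\mathbf B}=y_{1}+\cdots+y_{n}$ exactly. Finally, the exclusion of $d=0$ is simpler than you wrote: the hypothesis $\sigma^{*}=A^{2}$ already entails $\sigma\subsetneq\sigma^{*}$, so $|A/\sigma|\ge 2$.
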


%\textbf{PC congruence.}

\begin{lem}\label{LEMPCOnTheTopIsEasy}

Suppose $\sigma$ is a PC congruence on $\mathbf A$ and  
$\sigma^{*} = A^{2}$. 
Then $\mathbf A/\sigma$ is a PC algebra.
\end{lem}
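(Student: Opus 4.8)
The plan is to reduce to the case $\sigma=0_{\mathbf A}$ and then read off the conclusion from the classical description of polynomially complete algebras. First I would pass to the quotient: by the remark that $\sigma$ is a PC congruence if and only if $0_{\mathbf A/\sigma}$ is one, and by the order isomorphism between the $\sigma$‑saturated binary subuniverses of $\mathbf A^{2}$ and the binary subuniverses of $(\mathbf A/\sigma)^{2}$ — which sends $\sigma^{*}=\cover{\sigma}$ to $(0_{\mathbf A/\sigma})^{*}$ — the hypothesis $\sigma^{*}=A^{2}$ becomes $(0_{\mathbf A/\sigma})^{*}=(A/\sigma)^{2}$. Replacing $\mathbf A$ by $\mathbf A/\sigma$, it then suffices to show: if $0_{\mathbf A}$ is irreducible, is not a linear congruence, and $0_{\mathbf A}^{*}=A^{2}$, then $\mathbf A$ is polynomially complete.

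Next I would extract the structural content of these hypotheses. Since $0_{\mathbf A}$ is irreducible it has a least binary subuniverse of $\mathbf A^{2}$ strictly above it, namely $0_{\mathbf A}^{*}$; hence $0_{\mathbf A}^{*}=A^{2}$ says precisely that $0_{\mathbf A}$ and $A^{2}$ are the only reflexive binary subuniverses of $\mathbf A^{2}$. In particular $|A|\ge 2$, the only congruences of $\mathbf A$ are $0_{\mathbf A}$ and $A^{2}$, so $\mathbf A$ is simple; and $\mathbf A$ is not affine: otherwise, being simple and idempotent, it would be affine over a finite field, so the quaternary coset relation $S=\{(a_{1},a_{2},a_{3},a_{4})\mid a_{1}-a_{2}=a_{3}-a_{4}\}$ (in the underlying elementary abelian $p$‑group) would be compatible with $\mathbf A$, and, together with the fact that $0_{\mathbf A}^{*}=A^{2}$ is a congruence with single block $A$, this $S$ would verify condition (3) in the definition of a linear congruence, contradicting non‑linearity.

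I would then invoke the characterization of polynomially complete algebras (\cite{istinger1979characterization}; see also \cite{lausch2000algebra}), which via Rosenberg's description of the maximal clones containing all constants says that a finite algebra with at least two elements fails to be polynomially complete exactly when its clone together with all constant operations is contained in $\Pol(\rho)$ for some $\rho$ that is a bounded partial order, a nontrivial equivalence relation, a prime‑affine relation, a central relation, a permutational relation, or an $h$‑regular relation. It remains to rule out each type for $\mathbf A$. Permutational relations contain no diagonal pair and so are not preserved by constants; prime‑affine relations are excluded by non‑affineness, as above. For each of the remaining (totally reflexive) types I would restrict: for a compatible $\rho$ of arity $m$ and any $d_{3},\dots,d_{m}\in A$, the set $\tau:=\{(x,y)\mid (x,y,d_{3},\dots,d_{m})\in\rho\}$ is a subuniverse of $\mathbf A^{2}$ — here one uses that the basic operations of $\mathbf A$ are idempotent — and it is reflexive, hence $\tau\in\{0_{\mathbf A},A^{2}\}$; total reflexivity makes $\tau=0_{\mathbf A}$ impossible once $m\ge 3$ (it would force $\{d_{3}\}\times A\subseteq\tau$), so $\tau=A^{2}$ for all choices of $d_{3},\dots,d_{m}$, i.e.\ $\rho=A^{m}$, a contradiction; and for $m=2$ a reflexive binary $\rho\ne A^{2}$ must equal $0_{\mathbf A}$, which is neither a nontrivial equivalence relation nor a nontrivial bounded partial order. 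Thus $\mathbf A$ preserves none of the obstruction relations and is polynomially complete.

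I expect the main obstacle to be not any single step but the careful bookkeeping: checking that $\sigma^{*}$ genuinely corresponds to $(0_{\mathbf A/\sigma})^{*}$ under factoring, and pinning down that preservation of the coset relation $S$ is exactly condition (3) of the definition of a linear congruence, so that non‑linearity really does forbid affineness. Alternatively one could try to argue internally, replacing Rosenberg's list by Lemma~\ref{LEMPCBridgesAreTrivial} and the paper's bridge calculus, but the route through the classical characterization of polynomial completeness appears to be the most economical.
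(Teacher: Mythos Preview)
Your argument is correct, but the paper takes precisely the internal route you set aside in your final sentence. Rather than invoking Rosenberg's list, the paper uses the relational characterization of polynomial completeness directly: it takes a reflexive $R\le(\mathbf A/\sigma)^m$ of minimal arity that is not a conjunction of equalities, so every proper projection of $R$ is full; picking $(a_1,\dots,a_m)\notin R$ and noting $\{a_i\}<_{\TPC}^{A/\sigma}A/\sigma$ for each $i$ (since $0_{\mathbf A/\sigma}^{*}=(A/\sigma)^2$), one application of Corollary~\ref{CORMainStableIntersection} forces $m=2$ and bijectivity of $R$, whence reflexivity gives $R=0_{\mathbf A/\sigma}$, a contradiction. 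So a single appeal to the paper's intersection machinery replaces your entire case analysis over the Rosenberg classes. Your approach has the virtue of being independent of the paper's framework, but inside the paper the internal argument is decidedly shorter. One minor cleanup on your side: the implication ``simple, idempotent, affine $\Rightarrow$ affine over $\mathbb Z_p^n$'' is true but not immediate; within the paper it is cleaner to argue that affine $\Leftrightarrow$ Abelian (Lemma~\ref{LEMAbelianEqualAffineForWNU}) yields a bridge with $\widetilde\delta=A^2$ (Lemma~\ref{LemBridgeEquivalentToAbelianness}), so $0_{\mathbf A}$ would be linear by Lemma~\ref{LEMLinearEquivalentConditions}, contradicting the PC hypothesis directly.
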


% \begin{lem}\label{LEMNoBridgeBetweenDifferentTypes}
% Suppose $\sigma_1$ is a linear congruence, 
% $\sigma_2$ is an irreducible congruence, 
% $\delta$ is a bridge from $\sigma_1$ to $\sigma_2$.
% Then $\sigma_2$ is a also linear congruence.
% \end{lem}

%A congruence $\sigma$ of $\mathbf A$ is called a \emph{PC congruence for $B\le \mathbf A$} if 
%\begin{enumerate}
%   \item $\sigma$ is irreducible
%%    \item $\sigma^{*}$ is a congruence
 %   \item for any reflexive relation $R\le (\mathbf A/\sigma)\times \dots \times (\mathbf A/\sigma)$ its restriction to $B/%\sigma$ is a conjunction of equalities;
%    \item $B^{2}\subseteq \sigma^{*}$;
%    \item $B/\sigma$ is BA and center-free.
%%    stable under $\sigma$ and satisfying 
%%    $\sigma^{[n]}\le R\le (\sigma^{*})^{[n]}$ 
%%    can be represented by a conjunctive quantifier-free formula
%%    over predicates $\sigma$ and $\sigma^{*}$.
%\end{enumerate}

\textbf{All types of subuniverses.} 
Suppose $\varnothing\neq C\lneq B\le A$. We write 
\begin{itemize}
\item $C<_{\TBA}^{A} B$
if $C$ is a BA subuniverse of $B$.
\item $C<_{\TC}^{A} B$
if $C$ is a central subuniverse of $B$.
\item $C<_{\TD}^{A} B$
if there exists an irreducible  congruence $\sigma$ such that 
\begin{enumerate}
    \item $B^2\subseteq \sigma^{*}$;
    \item $C=B\cap E$ for some block $E$ of $\sigma$;
    \item $B/\sigma$ is BA and center free.
\end{enumerate}
\item $C<_{\TL}^{A} B$ if $C<_{\TD}^{A} B$  and the 
congruence $\sigma$ from the definition of $<_{\TD}^{A}$ is linear.
\item $C<_{\TPC}^{A} B$ if $C<_{\TD}^{A} B$  and the 
congruence $\sigma$ from the definition of $<_{\TD}^{A}$ is a PC congruence.
\item $C<_{\TS}^{A} B$
if there exists a BA and central (simultaneously) subuniverse $D$ in $\mathbf B$
such that $D\le C$.
\end{itemize}

When we want to specify what congruence was used
in the definition we write 
$C<_{T(\sigma)}^{A} B$. 
%meaning 
%that if $\mathcal T\in \{\TPC,\TL,\TD\}$ then 
%$\sigma$ is the corresponding irreducible congruence, 
Sometimes, we also put a congruence there even if $T\in\{\TBA,\TC,\TS\}$, which 
means that $\sigma$ is a full congruence.
%and not relevant to the type of the subuniverse.
%not defined but we usually assume that $\sigma = A^{2}$.
If $C<_{T}^{A} B$ then we say that $C$ is \emph{a subuniverse of $B$ of type $T$}.
Sometimes we also call $B$ 
\emph{a dividing subuniverse} for the type $\TD$,
\emph{a linear subuniverse} for the type $\TL$,
and 
\emph{a PC subuniverse} for the type $\TPC$.
Also, we say that 
$\sigma$ is a \emph{dividing/linear/PC 
congruence for $B\le A$} 
if 
$C<_{\mathcal T(\sigma)}^{A}B$
for some $C$ and $\mathcal T = \TD/\TL/\TPC$.
We say that an algebra $\mathbf A$ is \emph{S-free} if 
there is no $D\le A$ such that $D<_{\TBA}A$ and $D<_{\TC}A$.
Equivalently, 
an algebra $\mathbf A$ is \emph{S-free} if
there does not exist $C<_{\TS}\mathbf A$.

Sometimes instead of 
$C<_{\TBA}^{A} B$,
$C<_{\TC}^{A} B$,
and $C<_{\TS}^{A} B$
we write 
$C<_{\TBA} B$,
$C<_{\TC} B$,
and $C<_{\TS} B$, which is justified because $A$ is irrelevant to the definition.
Also, we write
$C<_{\TBA,\TC} B$ meaning that 
$C<_{\TBA} B$ and $C<_{\TC} B$.

% We say that $C$ is a \emph{linear/PC subuniverse}
% of $B\le A$
% if $C$ is a dividing subuniverse and the corresponding dividing congruence is linear/PC.

% \textbf{Dividing subuniverse.}
% We say that $C$ is a \emph{dividing} subuniverse of $B\le A$
% if there exists an irreducible  congruence $\sigma$ on $A$ such that 
% \begin{enumerate}
%     \item $B^2\subseteq \sigma^{*}$.
%     \item $C=B\cap E$ for some block $E$ of $\sigma$.
%     \item $B/\sigma$ is BA and center free
% \end{enumerate}
% The congruence $\sigma$ is called 
% a \emph{dividing congruence} for $B\le A$.

% There are two different types of 
% irreducible congruences having very different properties,
% and sometimes we need to specify the type of the congruence in the 
% dividing subuniverse.

We write 
$C\lll^{A} B$ if there exist $B_{0},B_{1},\dots,B_{n}\subseteq B$
and $T_{1},\dots,T_{n}\in\{\TBA, \TC, \TS,\TD\}$ 
such that 
$C=B_{n}<_{T_{n}}^{A} B_{n-1}<_{T_{n-1}}^{A}<
\dots <_{T_{2}}^{A}<B_1<_{T_{1}}^{A} B_{0} = B$.
%By $\Len(B\lll^{A} C)$ we denote the minimal 
%$n$ such that the above sequence 
%$B_{1},\dots,B_{n}$ exists.
Notice that $n$ can be 0 and 
the relation $\lll^{A}$ is reflexive.
We say that 
a congruence \emph{comes from 
$C\lll^{A} B$} if it is one of the dividing congruences
used in the sequence $C\lll^{A} B$.
We usually write 
$B\lll A$ instead of $B\lll^{A} A$.
We write $C\le_{T(\sigma)}^{A}B$ if $C=B$ or $C<_{T(\sigma)}^{A} B$.

Let us introduce the types 
$\TML,\TMPC,\TMD$ of subuniverses.
Suppose $T\in\{\TL,\TPC,\TD\}$.
We write 
$C<_{\mathcal {M}T}^{A} B$ if
$C\neq\varnothing$ and 
$C = C_1\cap \dots\cap C_{t}$, 
where $C_i<_{T}^{A} B$ for every $i\in[t]$.

%Similarly, 
%we write 
%$C<_{\TML}^{A} B$ if 
%$C\neq\varnothing$ and 
%$C = C_1\cap \dots\cap C_{t}$, 
%where $C_i<_{L}^{A} B$ for every $i\in[t]$.
%For $T\in\{L,PC\}$ by $\mathcal{M}T$ we mean the corresponding type 
%from $\{\TML,\TMPC\}$.

Notice that we do not allow empty subuniverses and 
the condition
$\varnothing\lll A$ never holds.
Nevertheless,  sometimes we need to allow an empty set.
In this case we add a dot above and write 
$B\dot\lll A$ meaning that 
$B\lll A$ or $B = \varnothing$.
With the same meaning we use dots in the following notations 
$C\dot<_{T}^{A}B$ or
$C\dot\le_{T}^{A}B$.

\subsection{Properties of strong subuniverses}\label{SUBSECTIONSTRONGSUBALGEBRASPROPERTIES}

Recall that all the algebras in the following 
statements are assumed finite idempotent  algebras having a WNU term operation (Taylor).
To avoid listing all the possible types 
in the following lemmas we assume that 
if the type $T$ is not specified then 
%it can be any type 
$T\in\{\TBA,\TC,\TS,\TPC,\TL,\TD\}$.
If we write the type $\mathcal {M}T$ then we 
assume that 
$T\in\{\TPC,\TL,\TD\}$ and, therefore,
$\mathcal{M}T\in\{\TMPC,\TML,\TMD\}$.

\begin{lem}\label{LEMUbiquity}

Suppose $B\lll A$ and $|B|>1$.
Then there exists $C<_{T}^{A} B$, where $T\in \{\TBA, \TC, \TL, \TPC\}$.
\end{lem}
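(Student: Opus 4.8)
The plan is to reduce the statement, which at first sight involves the whole chain $B\lll A$, to a purely internal dichotomy for the subalgebra carried by $B$. Since $B$ is a subuniverse of $\mathbf A$, the algebra $\mathbf B$ on $B$ is again finite, idempotent and Taylor (it inherits a WNU term operation), and $|B|>1$. For the types $\TBA$ and $\TC$ the ambient $A$ is irrelevant, and for the dividing types the relevant congruence can be taken on $\mathbf B$ itself; so it suffices to prove: every finite idempotent Taylor algebra $\mathbf D$ with $|D|>1$ has a proper nonempty subuniverse $C$ with $C<_{\TBA}D$, $C<_{\TC}D$, $C<_{\TL}D$, or $C<_{\TPC}D$. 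Taking $\mathbf D:=\mathbf B$ then finishes the lemma.

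For this internal statement I would argue by cases on $\mathbf D$. If $\mathbf D$ has a proper nonempty binary absorbing subuniverse, use it as $C$ with $T=\TBA$; otherwise, if $\mathbf D$ has a proper nonempty central subuniverse, use it with $T=\TC$. The remaining case is that $\mathbf D$ is BA and center free, and here I want to produce an irreducible congruence $\sigma$ on $\mathbf D$ with $\sigma^{*}=D^{2}$ such that $\mathbf D/\sigma$ is still BA and center free: since every irreducible congruence is by definition linear or PC, any block $C$ of such a $\sigma$ gives $C<_{\TL}D$ or $C<_{\TPC}D$, and $C\ne D$ because $|D|>1$ forces $\sigma$ to be nontrivial. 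A preliminary observation that I would establish first is that BA-and-center-freeness passes to quotients: for the canonical map $\pi\colon\mathbf D\to\mathbf D/\mu$, an absorbing term of a subuniverse $C'\le\mathbf D/\mu$ is still an absorbing term of $\pi^{-1}(C')$, and the center condition pulls back because $\pi$ sends $\Sg_{\mathbf D}((\{a\}\times\pi^{-1}(C'))\cup(\pi^{-1}(C')\times\{a\}))$ into $\Sg_{\mathbf D/\mu}((\{\pi a\}\times C')\cup(C'\times\{\pi a\}))$; hence a proper strong subuniverse of $\mathbf D/\mu$ would lift to one of $\mathbf D$, a contradiction.

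To produce $\sigma$ I would pass to $\mathbf D/\mu$ for a maximal congruence $\mu$: by the preliminary observation this is a simple, BA-and-center-free finite Taylor algebra (and, by Lemmas \ref{LEMLInearOnTheTopIsEasy} and \ref{LEMPCOnTheTopIsEasy}, isomorphic to $\mathbf Z_{p}$ or to a PC algebra, matching the two outcomes). The crux — and the main obstacle — is to check that $0_{\mathbf D/\mu}$ is irreducible with cover $(D/\mu)^{2}$, i.e.\ that there is no proper reflexive subdirect subalgebra $S$ of $(\mathbf D/\mu)^{2}$ with $S\supsetneq 0$. Such an $S$ must be linked (its transitive-symmetric closure is a congruence, hence everything by simplicity, while $S\supsetneq 0$ excludes the zero congruence), and a proper linked reflexive subdirect subalgebra of a square forces a proper binary absorbing or central subuniverse by the absorption theory for Taylor algebras, contradicting BA-and-center-freeness; this is the one place where that hypothesis is genuinely exploited. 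If one prefers to avoid maximal congruences, the same goal can be reached by induction on $|D|$: take $S_{0}$ minimal among reflexive subdirect subalgebras of $\mathbf D^{2}$ strictly above $0_{\mathbf D}$; if $S_{0}$ is a congruence, pass to the strictly smaller BA-and-center-free algebra $\mathbf D/S_{0}$ and invoke the induction hypothesis, and if it is not a congruence, its linkedness together with BA-and-center-freeness yields the required irreducible congruence directly.
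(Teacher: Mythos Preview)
Your reduction ``for the dividing types the relevant congruence can be taken on $\mathbf B$ itself'' is the gap. The superscript $A$ in $C<_{\TD}^{A}B$ is not decorative: the irreducible congruence $\sigma$ must be a congruence on $\mathbf A$, with $\sigma^{*}$ computed inside $\mathbf A\times\mathbf A$. The paper says explicitly that for $\TBA,\TC,\TS$ one may drop the superscript ``because $A$ is irrelevant to the definition'' --- so for the dividing types it \emph{is} relevant --- and every later use (Lemmas~\ref{LEMSelfIntersectionPC}, \ref{LEMIntersectionPCLinearIsGood}, \ref{LEMFactorByDelta}) factors $A$, not $B$, by this congruence. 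A congruence on the subalgebra $\mathbf B$ need not extend to one on $\mathbf A$, so producing an irreducible congruence on $\mathbf B$ (via your maximal $\mu$) only yields $C<_{\TD}^{B}B$, not the required $C<_{\TD}^{A}B$.

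Your internal dichotomy for a simple BA-and-center-free $\mathbf D$ is correct and is exactly the $B=A$ special case of the paper's Lemma~\ref{LEMMainExistenceOfIrreducibleCongruence}. For general $B\lll A$ that lemma takes a maximal congruence $\delta$ on $\mathbf A$ with $|B/\delta|>1$ and must rule out a $\delta$-stable $S\le\mathbf A^{2}$ with $\delta\subsetneq S$ and $B^{2}\not\subseteq S$. Here $\LeftLinked(S)$ is a congruence strictly above $\delta$, so $B^{2}\subseteq\LeftLinked(S)$ by maximality; but to conclude that $S\cap B^{2}$ is itself linked on $B$ (so that Lemma~\ref{LEMLinkedImpliesBACenter} fires) one needs Lemma~\ref{LEMLeftLinkedStayFull}, whose proof genuinely exploits the chain $B\lll A$ through Lemma~\ref{LEMCenterCanBePushedIn} and the intersection machinery of Subsection~\ref{SUBSECTIONIntersectionProperty}. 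When $B=A$ that step is vacuous, which is why your argument goes through there but not in general.
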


\begin{lem}\label{LEMPropagation}

Suppose $f\colon \mathbf A\to \mathbf A'$ is a surjective homomorphism,
then 
\begin{enumerate}
\item[(f)] $C\lll^{A} B \Rightarrow f(C)\lll f(B)$;
\item[(b)] $C'\lll^{A'} B' \Rightarrow f^{-1}(C')\lll f^{-1}(B)$;    
\item[(ft)] $C<_{T(\sigma)}^{A} B\lll A \Longrightarrow 
(f(C)=f(B)
\text{ or } 
f(C)<_{S}f(B)
\text{ or }
f(C)<_{T}^{A'}f(B))$;
\item[(bt)] $C'<_{T(\sigma)}^{A'} B' \Rightarrow f^{-1}(C')<_{T(f^{-1}(\sigma))}^{A'}f^{-1}(B')$;
\item[(fs)] $T\in\{\TBA,\TC, \TS\}$ and $C<_{T}B$ $\Longrightarrow f(C)\le_{T}f(B)$;
%\item[(bs)] $T\in\{BA,C, S\}$ and $C'<_{T}B'$ $\Longrightarrow f^{-1}(C)<_{T}f^{-1}(B)$;
\item[(fm)] %$T\in\{\TMPC,\TML\}$, 
$C\le_{\mathcal{M}T}^{A}B\lll A$ and $f(B)$ is S-free $\Longrightarrow f(C)\le_{\mathcal{M}T}^{A'}f(B)$;
\item[(bm)] %$T\in\{\TMPC,\TML\}$ and 
$C'\le_{\mathcal{M}T}^{A'}B'\lll A'\Longrightarrow f^{-1}(C)\le_{\mathcal{M}T}^{A}f^{-1}(B)$.
\end{enumerate}
\end{lem}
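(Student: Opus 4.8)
The statement to be proved is Lemma~\ref{LEMPropagation}, which collects seven closure properties of the relation $\lll$ and of the individual subuniverse types under a surjective homomorphism $f\colon\mathbf A\to\mathbf A'$. I would prove the seven items essentially in the order they are stated, since the later ones are assembled from the earlier ones; the genuine work is in (ft) and (bt), which handle a single step of a type $T\in\{\TBA,\TC,\TS,\TPC,\TL,\TD\}$, and then (f), (b), (fm), (bm) follow by composing steps.

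First I would dispose of (fs), which is the cleanest: if $B$ absorbs $\mathbf A$ (binary, ternary, or via a simultaneously BA--central subuniverse), then applying $f$ coordinatewise to the absorption term witnesses that $f(B)$ absorbs $f(\mathbf A)=\mathbf A'$, and for central subuniverses one checks that the condition $(a,a)\notin\Sg(\{a\}\times C\cup C\times\{a\})$ is inherited (or becomes vacuous) after factoring, using that $\Sg$ commutes with surjective homomorphisms; the only subtlety is that $f(C)$ may coincide with $f(B)$, which is why the conclusion is $\le_T$ rather than $<_T$. Next, for (bt): given $C'<_{T(\sigma)}^{A'}B'$, I would pull back the witnessing congruence to $f^{-1}(\sigma)$ (a congruence on $\mathbf A$ since $f$ is a homomorphism) and the witnessing block to $f^{-1}(E)$; one must check that irreducibility of $\sigma$ lifts to $f^{-1}(\sigma)$ (equivalently, that $0_{\mathbf A/f^{-1}(\sigma)}=0_{\mathbf A'/\sigma}$ is irreducible, which is immediate since these quotients are isomorphic), that $(f^{-1}(\sigma))^{*}=f^{-1}(\sigma^{*})$, that the quotient $f^{-1}(B')/f^{-1}(\sigma)\cong B'/\sigma$ stays BA and center free, and that linearity/PC-ness is preserved because it is a property of the quotient algebra only. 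For absorbing and central types one pulls back the term/the $\Sg$-condition similarly. Then (b) follows by applying (bt) step by step along the chain $C'\lll^{A'}B'$.

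The hard direction is (ft) and hence (f). Here $C<_{T(\sigma)}^{A}B\lll A$, and I would argue: if $T\in\{\TBA,\TC,\TS\}$ use (fs) directly. If $T=\TD$ with congruence $\sigma$ on the relevant algebra, the image $f(\sigma)$ need not be a congruence, so I would instead push forward $\sigma^{*}$ and take the congruence generated; the three cases in the conclusion ($f(C)=f(B)$, or $f(C)<_{\TS}f(B)$, or $f(C)<_{T}^{A'}f(B)$) correspond exactly to: the block collapses, or the quotient $f(B)/(\text{pushed-forward congruence})$ acquires a BA/central subuniverse (so the dividing structure degenerates into an $\TS$-step), or the dividing structure survives cleanly. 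The linear/PC case then piggybacks on $\TD$ using Lemma~\ref{LEMNoBridgeBetweenDifferentTypes} and Lemmas~\ref{LEMPCBridgesAreTrivial}, \ref{LEMBridgeTOPCCongruence}: the bridge $S$ witnessing linearity (resp.\ the PC-ness) survives surjective images, and by the no-bridge-between-types lemma the image congruence stays in the same class. I expect the main obstacle to be exactly this bookkeeping in (ft): controlling what happens to $\sigma^{*}$, $\sigma$, and the witnessing bridge when the homomorphism identifies elements that lie in different $\sigma$-blocks or different $\sigma^{*}$-blocks, and verifying that "BA and center free" is robust enough that the only way it can fail after factoring is captured by the $\TS$-alternative.

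Finally, (fm) and (bm) for the multi-types $\TML,\TMPC,\TMD$: given $C=\bigcap_i C_i$ with each $C_i<_T^{A}B$, apply (bt)/(ft) to each factor. For (bm) this is immediate since intersections are preserved under preimages. For (fm) the hypothesis that $f(B)$ is S-free is exactly what rules out the $\TS$-alternative in (ft), so each $f(C_i)$ is either all of $f(B)$ or a genuine type-$T$ subuniverse, and $f(C)=\bigcap_i f(C_i)$ up to the usual caveat that the image of an intersection is contained in the intersection of images — here one needs that on the relevant blocks $f$ is injective enough, which again follows from S-freeness forcing the congruences to behave, so that $f(\bigcap C_i)=\bigcap f(C_i)$ and the conclusion $f(C)\le_{\mathcal MT}^{A'}f(B)$ holds.
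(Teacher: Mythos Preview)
Your overall structure is right and matches the paper's: (bt), (b), (fs) are essentially as you say, and (f), (bm) follow by iterating. There are, however, two genuine gaps.

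\textbf{(ft) for the dividing/PC/linear types.} Your plan is to ``push forward $\sigma^{*}$ and take the congruence generated''. This does not work: the pushforward of a congruence is generally not a congruence, and the congruence it generates need not be irreducible, need not have the right $*$, and need not give you back $B/\sigma$ as the quotient. The paper handles this via Lemma~\ref{LEMFactorByDelta}, whose proof is not a straightforward bookkeeping exercise. One looks at $R=\{(a/\sigma,a/\delta)\mid a\in B\}$ (where $\delta=\ker f$) and splits on whether $\LeftLinked(R)=(B/\sigma)^2$. If yes, Lemma~\ref{LEMBAConLeftOrCenterOnRight} produces the $\TS$-alternative (or $f(C)=f(B)$). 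If not, one invokes Lemma~\ref{LEMCongruenceEitherCutOrDoNothing}(4) to get $(\delta\vee(\sigma\cap\omega))\cap\omega=\sigma\cap\omega$ (here $\omega$ is the intersection of all dividing congruences coming from $B\lll A$), so $B/\delta'\cong B/\sigma$ for $\delta'=\delta\vee(\sigma\cap\omega)$; then Lemma~\ref{LEMMainExistenceOfIrreducibleCongruence} supplies an irreducible $\delta''\supseteq\delta'$ that works as the new dividing congruence on $A/\delta$. Type preservation ($\TPC$ vs $\TL$) is then obtained by building a bridge between $\sigma$ and $\delta''$ via Lemma~\ref{LEMBridgeBetweenCongruences} and applying Lemma~\ref{LEMNoBridgeBetweenDifferentTypes}. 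Note that this argument genuinely uses the heavy intersection property (Lemma~\ref{LEMIntersectionPCLinearIsGood}) hidden inside Lemma~\ref{LEMCongruenceEitherCutOrDoNothing}; your sketch does not account for this.

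\textbf{(fm).} You correctly identify the issue that $f(\bigcap C_i)\subseteq\bigcap f(C_i)$ may be strict, but ``S-freeness forcing the congruences to behave'' is not an argument. The paper proceeds by induction on the number $t$ of factors. If some $C_i/\delta=B/\delta$, one replaces $B$ by $C_t$ and each $C_j$ by $C_j\cap C_t$ (using Lemma~\ref{LEMIntersectALL}) and applies the inductive hypothesis. If no $C_i$ collapses, the crucial input is the \emph{Moreover} clause of Lemma~\ref{LEMFactorByDelta}: whenever $C_i/\delta<_T^{A/\delta}B/\delta$, one has $\delta\cap B^2\subseteq\sigma_i\cap B^2$, hence $(C_i\circ\delta)\cap B=C_i$. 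This is exactly what forces $(\bigcap C_i)/\delta=\bigcap(C_i/\delta)$: any $\delta$-class $E$ meeting every $C_i$ satisfies $E\cap B\subseteq C_i$ for each $i$. Without that clause there is no reason for the intersection and the image to commute.
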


\begin{cor}\label{CORPropagationModuloCongruence}

Suppose $\delta$ is a congruence on $\mathbf A$. Then
\begin{enumerate}
\item[(f)] $C\lll^{A} B \Rightarrow C/\delta\lll^{A/\delta} B/\delta$;
\item[(t)] $C<_{T(\sigma)}^{A} B\lll A \Longrightarrow 
(C/\delta=B/\delta
\text{ or } 
C/\delta<_{\TS}B/\delta
\text{ or }
C/\delta<_{T}^{A/\delta}B/\delta)$;
\item[(s)] $T\in\{\TBA,\TC, \TS\}$ and $C<_{T}B$ $\Longrightarrow C/\delta\le_{T}B/\delta$;
\item[(m)] %$T\in\{\TMPC,\TML\}$, 
$C\le_{\mathcal{M}T}^{A}B\lll A$ and $B/\delta$ is S-free $\Longrightarrow C/\delta\le_{\mathcal{M}T}^{A/\delta}B/\delta$.
\end{enumerate}
\end{cor}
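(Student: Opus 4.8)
\textbf{Proof plan for Corollary~\ref{CORPropagationModuloCongruence}.}
The plan is to derive this corollary directly from Lemma~\ref{LEMPropagation} by specializing to the canonical surjective homomorphism $f\colon \mathbf A\to\mathbf A/\delta$, $a\mapsto a/\delta$. Under this map $f(C)=C/\delta$, $f(B)=B/\delta$, $f(A)=A/\delta$, so each of the four claims (f), (t), (s), (m) is the image-direction part of the corresponding clause of Lemma~\ref{LEMPropagation} read off for this particular $f$. Concretely: part (f) is an instance of (f) of Lemma~\ref{LEMPropagation}; part (s) is an instance of (fs); part (m) is an instance of (fm) (the hypothesis ``$B/\delta$ is S-free'' matches ``$f(B)$ is S-free'' verbatim); and part (t) is an instance of (ft), where one only has to note that for the canonical map the subuniverse type named in (ft) as $T$ with congruence $f(\sigma)$ becomes the type over $A/\delta$ with congruence $\sigma/\delta$, i.e.\ $C/\delta<_{T}^{A/\delta}B/\delta$. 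So the first step is simply to state that $f$ is surjective (immediate, since every class is nonempty) and idempotent algebras are closed under quotients, so $\mathbf A/\delta$ again lies in the class under consideration and Lemma~\ref{LEMPropagation} applies.

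The second and only substantive step is the bookkeeping needed to see that the ``$f^{-1}(\sigma)$ on $\mathbf A'$'' appearing in Lemma~\ref{LEMPropagation} plays the role of ``$\sigma/\delta$ on $\mathbf A/\delta$'' here, and more importantly that the conclusion of (ft) collapses from three alternatives involving $f(C)<_S f(B)$ and $f(C)<_T^{A'} f(B)$ into exactly the three alternatives stated in (t). Since $f(C)<_{\TS}f(B)$ and $f(C)<_{\TS}^{A/\delta}f(B)$ are the same assertion ($A$ being irrelevant for type $\TS$, as remarked in the text), there is genuinely nothing to reconcile: (t) is (ft) rewritten with $f$ the quotient map. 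One should also observe that the hypothesis $B\lll A$ in (ft) and (fm) is carried over unchanged as the hypothesis $B\lll A$ in (t) and (m).

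I would then remark that, symmetrically, the ``backward'' clauses (b), (bt), (bm) of Lemma~\ref{LEMPropagation} are not reproduced here because $\delta$ being a congruence on $\mathbf A$ makes the quotient map the natural object of interest; if wanted, $f^{-1}$ of a subuniverse of $\mathbf A/\delta$ is just a $\delta$-saturated subuniverse of $\mathbf A$, but that is not part of the statement. The whole proof is thus one sentence per item plus the observation about $\mathbf A/\delta\in\mathcal V_n$ (or, in the general setting, $\mathbf A/\delta$ finite idempotent Taylor).

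\textbf{Main obstacle.} There is essentially no obstacle: the corollary is a pure specialization, and the only thing to be careful about is the type/congruence relabelling in (t) versus (ft) and confirming that ``$f(B)$ S-free'' in (fm) is literally the hypothesis ``$B/\delta$ S-free'' of (m). If anything forced extra work it would be checking that $\mathbf A/\delta$ still has a WNU term operation so that Lemma~\ref{LEMPropagation}'s standing assumptions are met — but this is immediate since WNU identities are preserved by homomorphic images.
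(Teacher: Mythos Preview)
Your proposal is correct and matches the paper's intended approach: the paper states this result as a corollary immediately following Lemma~\ref{LEMPropagation} without giving any separate proof, so the intended argument is precisely the specialization to the quotient map $f\colon\mathbf A\to\mathbf A/\delta$ that you describe. Your bookkeeping observations (that $f(B)$ S-free is literally $B/\delta$ S-free, and that the $\TS$-alternative is ambient-independent) are exactly the trivial checks that justify calling it a corollary.
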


\begin{cor}\label{CORPropagateFromFactor}
 Suppose $\delta$ is a congruence on $\mathbf A$,
$\mathbf B,\mathbf C\le \mathbf A$. Then 
\begin{enumerate}
    \item[(f)] $C/\delta\lll^{A/\delta}
B/\delta \Longleftrightarrow C\circ\delta\lll^{A}
B\circ \delta$;
    \item[(t)] $C/\delta<_{T}^{A/\delta}
B/\delta \Longleftrightarrow C\circ\delta<_{T}^{A}
B\circ \delta$.
\end{enumerate}
\end{cor}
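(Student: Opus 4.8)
\textbf{Proof proposal for Corollary~\ref{CORPropagateFromFactor}.}

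The plan is to deduce this corollary from Corollary~\ref{CORPropagationModuloCongruence} together with Lemma~\ref{LEMPropagation} applied to a suitable surjective homomorphism. The key observation is that the map $\pi\colon B\circ\delta \to (B\circ\delta)/\delta$ sending each element to its $\delta$-class is a surjective homomorphism of algebras, and that it identifies the subuniverse $C\circ\delta$ of $\mathbf B\circ\delta$ with the subuniverse $C/\delta$ of $(B\circ\delta)/\delta$; indeed $\pi(C\circ\delta) = (C\circ\delta)/\delta = C/\delta$ because $C\circ\delta$ is a union of $\delta$-classes, and conversely $\pi^{-1}(C/\delta) = C\circ\delta$ for the same reason. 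Thus $C\circ\delta$ and $C/\delta$ are ``the same'' subuniverse viewed in two ways, and likewise for $B\circ\delta$ versus $B/\delta$. The two directions of each biconditional then become, respectively, the ``push forward'' and ``pull back'' parts of the earlier statements.

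For part (f): the implication $C\circ\delta\lll^{A} B\circ\delta \Rightarrow C/\delta\lll^{(B\circ\delta)/\delta} B/\delta$ is exactly Corollary~\ref{CORPropagationModuloCongruence}(f) applied inside the algebra $\mathbf B\circ\delta$ with the restricted congruence $\delta\cap (B\circ\delta)^2$ (note $\lll^{A/\delta}$ here only refers to subsets of $B/\delta$, so the ambient algebra is irrelevant). For the converse, $C/\delta\lll^{A/\delta} B/\delta \Rightarrow C\circ\delta\lll^{A} B\circ\delta$, I would apply Lemma~\ref{LEMPropagation}(b) to the homomorphism $\pi\colon \mathbf A\to \mathbf A/\delta$: since $\pi^{-1}(C/\delta)=C\circ\delta$ and $\pi^{-1}(B/\delta)=B\circ\delta$, part (b) gives precisely $C\circ\delta\lll^{A} B\circ\delta$. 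For part (t) the strategy is the same: the forward direction is Corollary~\ref{CORPropagationModuloCongruence}(t), except that (t) allows the conclusion to degrade to ``$C/\delta=B/\delta$'' or ``$C/\delta<_{\TS}B/\delta$''; here these degenerate cases are ruled out because $C\circ\delta\subsetneq B\circ\delta$ forces $C/\delta\subsetneq B/\delta$, and one must additionally argue the type does not drop to $\TS$ — this is where I would need the hypothesis that the relevant factor is BA and center free (which is built into the type $\TD$ and hence into $\TL$ and $\TPC$), so that a $\TS$-reduction is impossible. The backward direction of (t) is Lemma~\ref{LEMPropagation}(bt) applied to $\pi$, giving $C\circ\delta<_{T(\pi^{-1}(\sigma))}^{A}B\circ\delta$.

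The main obstacle I anticipate is the careful bookkeeping in the forward direction of (t): Corollary~\ref{CORPropagationModuloCongruence}(t) only yields a trichotomy, and I must show that under the standing hypotheses the type $T$ is genuinely preserved rather than collapsing to $\TS$ or to equality. Equality is immediate from $C\subsetneq B$, but excluding $\TS$ requires knowing that when $C<_{T}^{A}B$ with $T\in\{\TL,\TPC,\TD\}$, the factor $B/\sigma$ is BA and center free, and that this property is inherited by $(B/\sigma)/(\text{image of }\delta)$ — or, more cleanly, arguing directly that a simultaneously BA and central subuniverse in the quotient would pull back to one downstairs, contradicting the type assumption. A secondary, purely notational subtlety is matching the superscripts: the relation $\lll^{A/\delta}$ in the statement really only depends on $B/\delta$ and its subsets, so I should remark at the outset that one may freely replace the ambient algebra $A/\delta$ by $(B\circ\delta)/\delta$ and $A$ by $B\circ\delta$, after which everything is a direct invocation of the already-proved propagation results.
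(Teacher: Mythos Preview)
Your approach for the direction $\Rightarrow$ (i.e., $C/\delta<_T^{A/\delta} B/\delta \Longrightarrow C\circ\delta<_T^{A} B\circ\delta$, and likewise for $\lll$) is the same as the paper's: both use Lemma~\ref{LEMReverseHomomorphism}/Lemma~\ref{LEMPropagation}(b,bt) applied to the quotient map $\pi\colon A\to A/\delta$.

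The gap is in your treatment of the direction $\Leftarrow$ for part~(t). You want to invoke Corollary~\ref{CORPropagationModuloCongruence}(t), but that statement carries the hypothesis $B\lll A$ (here: $B\circ\delta\lll A$), which is \emph{not} assumed in Corollary~\ref{CORPropagateFromFactor}. Your proposed fix --- replacing the ambient algebra $A$ by $B\circ\delta$ --- does not help: for $T\in\{\TD,\TL,\TPC\}$ the superscript in $<_T^{A}$ records the algebra on which the witnessing irreducible congruence $\sigma$ lives, and passing from $<_T^{A}$ to $<_T^{B\circ\delta}$ would require $\sigma\cap(B\circ\delta)^2$ to be irreducible on $\mathbf B\circ\delta$, which is not automatic. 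So you cannot get the trichotomy started. (Your argument for excluding the $\TS$ branch of the trichotomy would actually work --- pulling back a BA subuniverse and pushing it to $(B\circ\delta)/\sigma$ contradicts BA-freeness --- but only once the trichotomy is available.)

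The paper sidesteps this entirely. For $T\in\{\TBA,\TC,\TS\}$ it applies Lemma~\ref{LEMBACenterSImplyFactor} directly to $C\circ\delta\le_T B\circ\delta$, obtaining $C/\delta\le_T B/\delta$; strictness follows because $\delta$-saturated sets with the same $\delta$-quotient are equal. For $T\in\{\TPC,\TL,\TD\}$ the paper observes that the relation follows straight from the definition, without routing through the trichotomy of Corollary~\ref{CORPropagationModuloCongruence}(t) and hence without needing $B\circ\delta\lll A$. Part~(f) is then obtained by applying (t) along the chain, rather than by appealing to Corollary~\ref{CORPropagationModuloCongruence}(f) (which inherits the same hidden $B\lll A$ issue through its proof via (ft)).
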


\begin{cor}\label{CORPropagateMultiplyByCongruence}
 
Suppose $\delta$ is a congruence on $\mathbf A$. Then 
\begin{enumerate}
\item[(f)] $C\lll^{A}B  \Rightarrow C\circ \delta \lll^{A} B\circ \delta$;
    \item[(t)] $C<_{T(\sigma)}^{A}B\lll^{A}A\Longrightarrow (C\circ\delta= B\circ \delta \text{ or }
    C\circ\delta<_{\TS}^{A} B\circ \delta \text{ or }
    C\circ\delta<_{T}^{A} B\circ \delta)$;
    \item[(e)] $\delta\subseteq \sigma$ and $C<_{T(\sigma)}^{A}B\lll^{A}A\Longrightarrow 
    C\circ\delta<_{T}^{A} B\circ \delta$.
\end{enumerate}
\end{cor}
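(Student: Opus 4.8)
The plan is to push everything through the quotient map $\mathbf A\to\mathbf A/\delta$ and use the two preceding corollaries. Recall Corollary~\ref{CORPropagationModuloCongruence}: factoring by $\delta$ preserves $\lll$ and, for a single step $C<_{T(\sigma)}^A B\lll^A A$, produces the trichotomy $C/\delta=B/\delta$, or $C/\delta<_{\TS}B/\delta$, or $C/\delta<_T^{A/\delta}B/\delta$; and Corollary~\ref{CORPropagateFromFactor}: $C/\delta\lll^{A/\delta}B/\delta\Leftrightarrow C\circ\delta\lll^A B\circ\delta$ and $C/\delta<_T^{A/\delta}B/\delta\Leftrightarrow C\circ\delta<_T^A B\circ\delta$ for any of the six types. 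I will also use the elementary facts that the $\delta$-closure of a subuniverse is a subuniverse, that $C\circ\delta$ is precisely the union in $\mathbf A$ of the $\delta$-blocks listed by $C/\delta$, and consequently that $C/\delta=B/\delta$ forces $C\circ\delta=B\circ\delta$.

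For (f): Corollary~\ref{CORPropagationModuloCongruence}(f) turns $C\lll^A B$ into $C/\delta\lll^{A/\delta}B/\delta$, and then Corollary~\ref{CORPropagateFromFactor}(f) turns that into $C\circ\delta\lll^A B\circ\delta$. For (t): since $B\lll^A A$, Corollary~\ref{CORPropagationModuloCongruence}(t) applies and gives one of the three alternatives above; in the first case $C\circ\delta=B\circ\delta$, and in the remaining two cases Corollary~\ref{CORPropagateFromFactor}(t) (used with the type $\TS$, resp.\ with $T$) yields $C\circ\delta<_{\TS}^A B\circ\delta$, resp.\ $C\circ\delta<_T^A B\circ\delta$. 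This is exactly the asserted trichotomy.

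For (e) the congruence $\sigma$ is genuine (so here $T\in\{\TD,\TL,\TPC\}$; when $T\in\{\TBA,\TC,\TS\}$ and $\sigma=A^2$ the hypothesis $\delta\subseteq\sigma$ is vacuous and only the weaker conclusion~(t) survives). I would show that the \emph{same} $\sigma$ witnesses $C\circ\delta<_{T(\sigma)}^A B\circ\delta$. Let $E$ be the $\sigma$-block with $C=B\cap E$. Because $\delta\subseteq\sigma$, each $\delta$-block meeting $C$ lies in $E$, so $C\circ\delta\subseteq E$; conversely if $x\in(B\circ\delta)\cap E$ with $x\in b/\delta$, $b\in B$, then $b\in x/\sigma=E$, hence $b\in C$ and $x\in C\circ\delta$. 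Thus $C\circ\delta=(B\circ\delta)\cap E$ (condition~2 in the definition of $<_{\TD}^A$), and since any $b\in B\setminus E$ lies in $(B\circ\delta)\setminus E$ we also get $\varnothing\ne C\circ\delta\lneq B\circ\delta$. For condition~1, stability of $\sigma^{*}$ under $\sigma$ together with $\delta\subseteq\sigma$ gives $\delta\circ\sigma^{*}\circ\delta=\sigma^{*}$, and then $B^2\subseteq\sigma^{*}$ yields $(B\circ\delta)^2\subseteq\sigma^{*}$. Finally $(B\circ\delta)/\sigma=B/\sigma$ as subalgebras of $\mathbf A/\sigma$ (they have the same underlying set), so it is BA and center free (condition~3), and $\sigma$ is still irreducible, and linear or PC as needed.

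I do not expect a real obstacle: (f) and (t) are bookkeeping on top of Corollaries~\ref{CORPropagationModuloCongruence} and~\ref{CORPropagateFromFactor}, and the only genuine computation in (e) is the identity $\delta\circ\sigma^{*}\circ\delta=\sigma^{*}$ together with the block computation $C\circ\delta=(B\circ\delta)\cap E$, each of which is one line. The point to watch is conceptual rather than technical: (e) must be read for the dividing types, where $\sigma$ is an actual congruence, since for the absorbing/central/$\TS$ types a large $\delta$ can collapse $C\circ\delta$ onto $B\circ\delta$ and the sharp statement fails, which is precisely why the corollary only claims the weaker~(t) in those cases.
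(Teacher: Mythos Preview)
Your arguments for (f) and (t) are exactly the paper's: compose Corollary~\ref{CORPropagationModuloCongruence} with Corollary~\ref{CORPropagateFromFactor}.

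For (e) your route differs from the paper's. The paper simply invokes (t) and then eliminates the alternative $C\circ\delta<_{\TS}^{A}B\circ\delta$ by observing that it would force a BA and central subuniverse in $B/\sigma$, contradicting the definition of a dividing congruence. You instead verify directly that the \emph{same} $\sigma$ witnesses $C\circ\delta<_{T(\sigma)}^{A}B\circ\delta$, checking the three clauses of the definition by hand. Both approaches are valid; yours buys two things. First, it explicitly dispatches the case $C\circ\delta=B\circ\delta$ (via $B\setminus E\subseteq (B\circ\delta)\setminus E$), which the paper's one-line proof leaves implicit. Second, and more usefully, it shows that the witnessing congruence is $\sigma$ itself, not merely some congruence of the same type; this is precisely what the later applications (e.g.\ in the proof of Lemma~\ref{LEMPreserveLinkdness} and Lemma~\ref{LEMMultiplyByAllLinear}) actually use when they write $C_i'<_{\TD(\sigma_i)}^{A_i}B_i'$ with the original $\sigma_i$. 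Your remark that (e) is to be read for the dividing types is also correct and worth stating, since for $T\in\{\TBA,\TC,\TS\}$ the convention $\sigma=A^2$ makes the hypothesis $\delta\subseteq\sigma$ vacuous while the conclusion can fail.
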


\begin{cor}\label{CORPropagateToRelations}
 
Suppose $R\le_{sd} A_{1}\times\dots\times A_{n}$,
$B_{i}\lll A_{i}$ for $i\in[n]$. Then 
\begin{enumerate}
    \item[(r)] $R\cap (B_1\times\dots\times B_{n}))\dot\lll R$;
    \item[(r1)] $\proj_{1}(R\cap (B_1\times\dots\times B_{n}))\dot\lll A_{1}$;
%    \item[(t)] $C_{1}<_{T}^{A_{1}}B_{1}
%    \Longrightarrow
%    \proj_{1}(R\cap (C_1\times B_{2}\dots\times C_{n}))\dot\le_{T} \proj_{1}(R\cap (B_1\times\dots\times B_{n}))$
    \item[(b)] $\forall i\colon  C_{i}\lll^{A_{i}}B_{i}\Longrightarrow (R\cap (C_1\times\dots\times C_{n}))\dot\lll^{R} 
    (R\cap (B_1\times\dots\times B_{n}))$;   
    \item[(b1)] $\forall i\colon  C_{i}\lll^{A_{i}}B_{i}\Longrightarrow \proj_{1}(R\cap (C_1\times\dots\times C_{n}))\dot\lll^{A_1} \proj_{1}(R\cap (B_1\times\dots\times B_{n}))$;
    \item[(m)] $\forall i\colon  C_{i}\le_{\mathcal{M}T}^{A_{i}}B_{i} \Longrightarrow R\cap (C_1\times\dots\times C_{n})\dot\le_{\mathcal{M}T}^{R} R\cap (B_1\times\dots\times B_{n})$;
    \item[(m1)] $\forall i\colon  C_{i}\le_{\mathcal{M}T}^{A_{i}}B_{i}$, 
    $\proj_{1}(R\cap (B_1\times\dots\times B_{n}))$ is S-free $\Longrightarrow$
    
    \;\;\;\;\;\;\;\;\;\;\;\;\;\;\;\;\;\;\;\;\;\;\;\;\;\;\;\;\;\;
\;\;\;\;\;\;\;\;\;\;\;\;\;\;\;\;\;\;\;\;\;\;\;$\proj_{1}(R\cap (C_1\times\dots\times C_{n}))\dot\le_{\mathcal{M}T}^{A_{1}} \proj_{1}(R\cap (B_1\times\dots\times B_{n}))$.
\end{enumerate}

\end{cor}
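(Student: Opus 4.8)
The plan is to deduce the whole corollary from Lemma~\ref{LEMPropagation} applied to the coordinate projections $\proj_{i}\colon R\to A_{i}$ (which are surjective homomorphisms precisely because $R$ is subdirect), together with an induction on $\sum_{i}|A_{i}|$. All the other parts will follow from (r) and from its proof, so I describe (r) first.

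For (r) the basic move is to reduce one coordinate at a time. Fix a chain witnessing $B_{1}\lll A_{1}$; let its first step be $B_{1}'<_{T_{1}}^{A_{1}}A_{1}$. Since $\proj_{1}\colon R\to A_{1}$ is onto, Lemma~\ref{LEMPropagation}(bt) (or just (b)) gives that $R^{(1)}:=\proj_{1}^{-1}(B_{1}')=R\cap(B_{1}'\times A_{2}\times\dots\times A_{n})$ satisfies $R^{(1)}<_{T_{1}}^{R}R$, and in particular $R^{(1)}\lll R$; iterating down the chain we get $R\cap(B_{1}\times A_{2}\times\dots\times A_{n})\lll R$. Put $A_{i}^{(1)}:=\proj_{i}(R^{(1)})$. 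Then $A_{1}^{(1)}=B_{1}'$ (every $b\in B_{1}'$ extends, by subdirectness, to a tuple of $R$, which then lies in $R^{(1)}$), so $\sum_{i}|A_{i}^{(1)}|<\sum_{i}|A_{i}|$, while Lemma~\ref{LEMPropagation}(f) applied to $\proj_{i}\colon R\to A_{i}$ and $R^{(1)}\lll R$ yields $A_{i}^{(1)}\lll A_{i}$ for every $i$. Since $R^{(1)}\le_{sd}A_{1}^{(1)}\times\dots\times A_{n}^{(1)}$ and $R\cap(B_{1}\times\dots\times B_{n})=R^{(1)}\cap\bigl((B_{1}\cap A_{1}^{(1)})\times\dots\times(B_{n}\cap A_{n}^{(1)})\bigr)$, the induction hypothesis applied to $R^{(1)}$ finishes the proof once we know that $B_{i}\cap A_{i}^{(1)}\dot\lll A_{i}^{(1)}$ for all $i$; concatenating the resulting chains (reflexivity and transitivity of $\lll$) gives $R\cap\prod_{i}B_{i}\dot\lll R$. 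The base case $B_{i}=A_{i}$ for all $i$ is trivial.

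Thus everything comes down to the following \emph{stable intersection} fact, which I expect to be the main obstacle: if $B\lll A$ and $A'\lll A$, then $B\cap A'\dot\lll A'$. The steps of the chain for $B\lll A$ of types $\TBA,\TC,\TS$ relativize to $A'$ without trouble, since an absorbing (resp.\ central, resp.\ $\TS$-) subuniverse intersected with a subuniverse is again an absorbing (resp.\ central, resp.\ $\TS$-) subuniverse of the intersection, witnessed by the same term. The genuine difficulty is a dividing step: if $B=B_{j-1}\cap E$ for a block $E$ of an irreducible congruence $\sigma$ on $A$ with $B_{j-1}/\sigma$ being BA and center free, one must show $(A'\cap B_{j-1})\cap E\dot\lll A'\cap B_{j-1}$. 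Here I would pass to the quotient via Corollary~\ref{CORPropagateFromFactor} and Corollary~\ref{CORPropagationModuloCongruence}: the target becomes a single element of a subalgebra of the BA-and-center-free algebra $B_{j-1}/\sigma$, and the task is to reach that element by a sequence of reductions. For this one uses the description of linear and PC congruences on the top (Lemma~\ref{LEMLInearOnTheTopIsEasy}, Lemma~\ref{LEMPCOnTheTopIsEasy}) to split the single dividing step into several reductions that do descend to the subalgebra. This is the technically heaviest point of the argument, and it is where the congruence theory of Section~\ref{SectionStrongSubalgebras} is really used; the stable intersection fact itself would be established by induction on the length of the chain witnessing $B\lll A$, relativizing each step to $A'$ as above.

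Once (r) holds, the remaining parts are routine. For (r1), apply Lemma~\ref{LEMPropagation}(f) to $\proj_{1}\colon R\to A_{1}$ and $R\cap\prod_{i}B_{i}\dot\lll R$. For (b), repeat the argument of (r) with $S:=R\cap\prod_{i}B_{i}$ in place of $R$: by (r) $S\dot\lll R$, by Lemma~\ref{LEMPropagation}(f) $\proj_{i}(S)\lll A_{i}$, hence by the stable intersection fact $C_{i}\cap\proj_{i}(S)\dot\lll\proj_{i}(S)$, and reducing $S$ coordinate by coordinate gives $R\cap\prod_{i}C_{i}\dot\lll S$; then (b1) follows from (b) by pushing forward along $\proj_{1}$ as in (r1). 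Finally, for (m) and (m1), write $C_{i}=C_{i}^{1}\cap\dots\cap C_{i}^{t_{i}}$ with $C_{i}^{k}<_{T}^{A_{i}}B_{i}$, so that $R\cap\prod_{i}C_{i}=\bigcap_{i,k}\bigl(S\cap\proj_{i}^{-1}(C_{i}^{k}\cap\proj_{i}(S))\bigr)$; each $C_{i}^{k}\cap\proj_{i}(S)$ is of type $T$ (or $\TS$, or equal) in $\proj_{i}(S)$ by the same relativization arguments together with Corollary~\ref{CORPropagateMultiplyByCongruence} and the stable intersection fact, so by Lemma~\ref{LEMPropagation}(bt) applied to $\proj_{i}\colon S\to\proj_{i}(S)$ each intersectand is of type $T$ (or $\TS$, or equal) in $S$, whence $R\cap\prod_{i}C_{i}\dot\le_{\mathcal{M}T}^{R}S$, which is (m). For (m1) one additionally pushes this $\mathcal{M}T$-relation forward along $\proj_{1}\colon R\to A_{1}$ by Lemma~\ref{LEMPropagation}(fm), which is legitimate because $\proj_{1}(S)$ is assumed $S$-free.
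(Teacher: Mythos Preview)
Your overall strategy is in the right ballpark, but it is considerably more complicated than the paper's argument, and the place you single out as ``the main obstacle'' is exactly a lemma the paper has already isolated and proved separately.

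The paper's proof of (r) is two lines: for each $i$ the projection $f_{i}\colon R\to A_{i}$ is a surjective homomorphism, so Lemma~\ref{LEMPropagation}(b) gives $f_{i}^{-1}(B_{i})\lll R$; then Lemma~\ref{LEMIntersectALL}(i) gives $R\cap(B_{1}\times\dots\times B_{n})=\bigcap_{i}f_{i}^{-1}(B_{i})\dot\lll R$. No induction on $\sum|A_{i}|$, no reducing one coordinate at a time. Your ``stable intersection fact'' (if $B\lll A$ and $A'\lll A$ then $B\cap A'\dot\lll A'$) is literally Lemma~\ref{LEMIntersectALL}(i), and the paper proves it elsewhere via the heavy Lemma~\ref{LEMIntersectionPCLinearIsGood}, which shows that for a dividing congruence $\sigma$ of $B\lll A$ and any $D\lll A$, the set $(B\cap D)/\sigma$ is either empty, a singleton, or all of $B/\sigma$. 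Your sketch of the dividing step (``split the single dividing step into several reductions using the description of linear and PC congruences on the top'') does not approach this dichotomy and would not, as written, close the gap.

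For (b), the paper again avoids your coordinate-by-coordinate recursion: by (r) both $R':=R\cap\prod B_{i}$ and $R'':=R\cap\prod C_{i}$ are $\dot\lll R$, and Lemma~\ref{LEMIntersectALL}(i) applied to $R''=R'\cap R''$ gives $R''\dot\lll^{R}R'$ directly. For (m), the paper pulls back each $C_{i,j}$ to $R$ by Lemma~\ref{LEMPropagation}(bt), then intersects with $R'$ using Lemma~\ref{LEMIntersectALL}(t), which preserves the type $T$ exactly; there is no ``or $\TS$'' alternative as in your sketch. Your detour through $\proj_{i}(S)$ introduces a mismatch in the ambient-algebra superscript (you would need $<_{T}^{\proj_{i}(S)}$, but what you can deduce from Lemma~\ref{LEMIntersectALL}(t) is $<_{T}^{A_{i}}$), and the spurious $\TS$ case would block the conclusion $\le_{\mathcal{M}T}$ if it actually occurred. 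For (m1) the paper applies Corollary~\ref{CORPropagationModuloCongruence}(m) to (m), which is the same as your use of Lemma~\ref{LEMPropagation}(fm).

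In short: use Lemma~\ref{LEMIntersectALL} as a black box rather than re-deriving it inside the corollary; then every item is an immediate combination of pullback (Lemma~\ref{LEMPropagation}(b,bt)), intersection (Lemma~\ref{LEMIntersectALL}(i,t)), and pushforward (Lemma~\ref{LEMPropagation}(f,fm)).
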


For binary absorbing and central subuniverses  
we can prove a stronger claim.

\begin{lem}[\cite{zhuk2021strong}, Corollaries 6.1.2 and  6.9.2]\label{LEMBACenterImplies}
Suppose $R\le A_{1}\times\dots\times A_{n}$,
$C_{i}\le_{T} A_{i}$ for every $i\in[n]$, where 
$T\in\{\TBA,\TC\}$.
Then 
$\proj_{1}(R\cap (C_{1}\times\dots\times C_{N}))\dot\le_{T} A_{1}$.
\end{lem}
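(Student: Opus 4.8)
The statement to prove is Lemma~\ref{LEMBACenterImplies}: if $R\le A_{1}\times\dots\times A_{n}$ and each $C_{i}\le_{T}A_{i}$ with $T\in\{\TBA,\TC\}$, then $\proj_{1}(R\cap(C_{1}\times\dots\times C_{n}))\dot\le_{T}A_{1}$. Wait — this is exactly the cited result from \cite{zhuk2021strong}, so the "proof" is really just a pointer. But let me sketch how one would actually argue it, since that's what the exercise asks.

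The plan is to handle the two cases $T=\TBA$ and $T=\TC$ separately, and in each case to use a single term witnessing the absorption plus a projection/closure argument. First I would recall that $C_i <_{\TBA} A_i$ means there is a binary term $t_i$ with $t_i(C_i,A_i)\subseteq C_i$ and $t_i(A_i,C_i)\subseteq C_i$; since the algebras share a common WNU (we are in $\mathcal V_n$, or at least all algebras are Taylor), one can first replace all the $t_i$ by a single binary term $t$ that simultaneously absorbs every $C_i$ into $A_i$ — this is a standard composition trick (absorption is preserved under composing the witnesses, because $t_1(t_2(x,y),t_2(y,x))$-type compositions still absorb, and one iterates over $i=1,\dots,n$). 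Then, given the single binary $t$, I would show $\proj_1(R\cap(C_1\times\dots\times C_n))$ absorbs $\proj_1(R)=A_1$ (if the intersection is nonempty; if empty, the $\dot\le$ clause covers it). Concretely: take $a\in A_1$ and $b\in\proj_1(R\cap\prod C_i)$; lift $a$ to a tuple $\vec a\in R$ and $b$ to a tuple $\vec b\in R\cap\prod C_i$; then $t(\vec a,\vec b)$ and $t(\vec b,\vec a)$ lie in $R$ (as $R$ is a subalgebra) and, coordinate by coordinate, lie in each $C_i$ because $t$ absorbs; hence their first coordinates $t(a,b),t(b,a)$ lie in $\proj_1(R\cap\prod C_i)$, giving the binary absorption.

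For the central case $T=\TC$, the absorption part is identical (a central subuniverse is in particular absorbing, and by Lemma~\ref{LEMCenterImpliesTernaryAbsorption} it is ternary absorbing), so $\proj_1(R\cap\prod C_i)$ is at least an absorbing subuniverse of $A_1$; the extra work is the "no diagonal" condition: for every $a\in A_1\setminus\proj_1(R\cap\prod C_i)$ we must have $(a,a)\notin\Sg_{\mathbf A_1}((\{a\}\times D)\cup(D\times\{a\}))$ where $D=\proj_1(R\cap\prod C_i)$. This I would prove by contradiction: if $(a,a)$ were generated, pull the generating term-expression back through $R$ coordinate-wise — each coordinate $i$ then exhibits $(a_i,a_i)$ as generated from $(\{a_i\}\times C_i)\cup(C_i\times\{a_i\})$, where $a_i$ is the $i$-th coordinate of a fixed lift of $a$ into $R$ — and centrality of $C_i$ in $A_i$ forces $a_i\in C_i$ for every $i$, hence the lift lies in $R\cap\prod C_i$, hence $a\in D$, contradiction.

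The main obstacle is the reduction to a \emph{single} binary (resp. ternary) witness $t$ absorbing all $C_i$ simultaneously, and making sure the composition that achieves this does not destroy binariness (for $\TBA$) — this is where one uses that absorption witnesses compose, iterating $t\mapsto t(t(x,y),t(y,x))$ style substitutions across the index set; for $\TC$ one must additionally check that the composite witness keeps the centrality condition, which is the delicate point handled in \cite{zhuk2021strong}. Once the single witness is in hand, the projection argument above is routine. Since this lemma is quoted verbatim from \cite{zhuk2021strong} (Corollaries~6.1.2 and~6.9.2), in the paper itself the proof is simply the citation; the sketch above is the underlying argument.
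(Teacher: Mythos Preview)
You are right that the paper gives no proof here; it simply cites \cite{zhuk2021strong}. So as a comparison against the paper, your identification is correct. However, the sketch you offer for the underlying argument has real gaps, and since you present it as ``the standard trick'' it is worth flagging them.

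For the $\TBA$ case, your claimed composition does \emph{not} produce a common binary witness. Take $n=2$ and your term $t(x,y)=t_1(t_2(x,y),t_2(y,x))$. In $A_2$ with one argument $c\in C_2$: the inner $t_2$'s land in $C_2$, and the outer $t_1$ stays in the subuniverse $C_2$; good. But in $A_1$ with one argument $c\in C_1$: the inner $t_2$'s are uncontrolled in $A_1$, so the outer $t_1$ has no argument guaranteed to be in $C_1$; the result need not lie in $C_1$. Iterating the construction in either order only ever keeps control of a single $C_i$. In general, if $B$ absorbs $A$ with an $m$-ary term and $C$ absorbs $B$ with an $n$-ary term, the composite gives $mn$-ary absorption of $C$ in $A$, so binary absorption is \emph{not} transitive by na\"ive composition. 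The actual argument in \cite{zhuk2021strong} (Corollary~6.1.2) does not go through a single common witness; it uses the essential-relation characterisation (Lemma~\ref{LemAbsorptionImpliesEssential}) together with the pp-definability machinery of Lemma~\ref{LEMBACenterSImplyPPDefinition}, and this is genuinely where the work lies.

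For the $\TC$ case your pullback argument also breaks. When you lift $a$ to $\vec a\in R$ and each $d\in D$ to $\vec d\in R\cap\prod C_j$, applying the generating term $\tau$ coordinate-wise gives, in coordinate $i\ge 2$, an element $(u_i,v_i)\in\Sg_{\mathbf A_i}\bigl((\{a_i\}\times C_i)\cup(C_i\times\{a_i\})\bigr)$ --- but there is no reason for $(u_i,v_i)$ to equal $(a_i,a_i)$, so centrality of $C_i$ in $A_i$ gives you nothing. (You only know the \emph{first} coordinate returns to $(a,a)$, because that is what you assumed.) The proof in \cite{zhuk2021strong} (Corollary~6.9.2) handles the centre condition via the structural characterisation of central relations rather than this direct pullback.
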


\begin{lem}\label{LEMIntersectALL}

Suppose $B\lll A$, $D\lll A$. %, and $T\in\{\TBA,\TC,\TS,\TPC,\TL\}$.
Then 
\begin{enumerate}
    \item[(i)] $B\cap D\dot\lll A$;
    \item[(t)] $C<_{T(\sigma)}^{A}B \Rightarrow C\cap D\dot\le _{T(\sigma)}^{A} B\cap D$.
\end{enumerate}
\end{lem}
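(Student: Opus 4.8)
\textbf{Proof proposal for Lemma~\ref{LEMIntersectALL}.}

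The plan is to prove both parts by induction on $|A| + |B| + |D|$, exploiting the fact that $\lll$ is built up from the elementary steps $<_{\TBA}, <_{\TC}, <_{\TS}, <_{\TD}$, and using the previously established propagation machinery (Lemma~\ref{LEMPropagation}, Corollaries~\ref{CORPropagationModuloCongruence}--\ref{CORPropagateToRelations}) to push intersections through single steps. For part~(i), I would first dispose of trivial cases: if $B = A$ then $B \cap D = D \lll A$ by hypothesis, and symmetrically if $D = A$; so assume $B \subsetneq A$ and $D \subsetneq A$. Write the sequence witnessing $D \lll A$ as $D = D_s <_{T_s}^A D_{s-1} <_{T_{s-1}}^A \dots <_{T_1}^A D_0 = A$, and consider the last step $D <_{T_s(\tau)}^A D_1$ where $D_1 \lll A$. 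By the outer induction we may assume $B \cap D_1 \dot\lll A$; if $B \cap D_1 = \varnothing$ we are done, so assume $B \cap D_1 \lll A$ with $|B \cap D_1| < |A|$ (strict because $B \subsetneq A$). Now I need exactly part~(t) applied to the step $D <_{T_s(\tau)}^A D_1$ together with the set $B \cap D_1$: this gives $D \cap (B \cap D_1) = B \cap D \dot\le_{T_s(\tau)}^A D_1 \cap (B \cap D_1) = B \cap D_1$, hence $B \cap D \dot\lll A$ since $B \cap D_1 \lll A$. So part~(i) reduces to part~(t) (modulo the base case where the $D$-sequence has length $0$, i.e. $D = A$, already handled).

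For part~(t), fix the single step $C <_{T(\sigma)}^A B$ and argue by cases on $T$. When $T \in \{\TBA, \TC\}$ this is essentially Lemma~\ref{LEMBACenterImplies} applied to the binary relation $R = \{(a,a) \mid a \in A\}$ restricted appropriately, or more directly: a binary/central absorbing subuniverse survives intersection with any subuniverse of the form $D$ — one checks that the same absorbing term $t$ witnesses $C \cap D \le_T B \cap D$ after noting $t$ is a polymorphism of $D$; I expect this to be short. When $T = \TS$, unfold the definition: there is $E$ with $E \le C$ and $E <_{\TBA, \TC} B$; applying the $\TBA$ and $\TC$ cases gives $E \cap D \dot\le_{\TBA,\TC} B \cap D$, and since $E \cap D \subseteq C \cap D$ this yields $C \cap D \dot\le_{\TS}^A B \cap D$. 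The substantive case is $T = \TD$ (which also covers $\TL, \TPC$ by tracking whether $\sigma$ is linear or PC, since intersecting with $D$ does not change the congruence $\sigma$). Here $C = B \cap E$ for a block $E$ of $\sigma$, $B^2 \subseteq \sigma^*$, and $B/\sigma$ is BA and center free; I want $C \cap D = (B \cap D) \cap E \dot\le_{\TD(\sigma)}^A B \cap D$. The conditions $(B\cap D)^2 \subseteq \sigma^*$ and ``$C \cap D = (B\cap D) \cap E$'' are immediate; the real content is showing $(B \cap D)/\sigma$ is still BA and center free — this need not be literally true, which is why the conclusion is stated with a dot and allows the type to degenerate to $\TS$ or equality. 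The right tool is Corollary~\ref{CORPropagationModuloCongruence}(t) (or Corollary~\ref{CORPropagateToRelations}) applied to the quotient map $\mathbf B \to \mathbf B/\sigma$ and the subuniverse $D \cap B$: since $D \cap B \lll B$ (this itself needs a small argument: $D \cap B \dot\lll A$ by part~(i), and one localizes to $B$ — or one works inside $\mathbf B$ from the start), its image $(D \cap B)/\sigma \lll B/\sigma$, and in the BA-and-center-free algebra $B/\sigma$ the only subuniverse of the form $\lll$ that is not the whole algebra must arise via a $\TD$, $\TL$, or $\TPC$ step, or equal $B/\sigma$; combining with Lemma~\ref{LEMUbiquity} and the structure of $B/\sigma$ lets one conclude that $(B \cap D)/\sigma$ is again BA and center free, or else the whole configuration collapses to a $\TS$ step or to equality — matching the stated disjunction hidden in $\dot\le_{T(\sigma)}^A$.

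The main obstacle I anticipate is precisely this last point: controlling the type of the subuniverse $(B \cap D)/\sigma$ inside the factor $B/\sigma$. One has to rule out that intersecting with $D$ introduces a new proper binary absorbing or central subuniverse in the quotient, or else show that if it does, the resulting step is still of an allowed type (which is why part~(t) is stated with $\dot\le$ and the implicit ``or $<_{\TS}$ or equality'' built into the $\TD$ definition via a possibly-coarser congruence). I would handle this by invoking the propagation lemmas for quotients — Corollary~\ref{CORPropagationModuloCongruence} gives exactly the trichotomy ``image is equal / image is a $\TS$-subuniverse / image is a $T$-subuniverse'' — and then checking that each alternative is subsumed by the claimed conclusion; the bookkeeping of which congruence ($\sigma$ versus a coarser one) witnesses the final $\TD$ step is the fiddly part, but it is bounded by the induction hypothesis since everything happens in algebras no larger than $\mathbf B$. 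A secondary, more routine obstacle is the reduction ``$D \cap B \lll A \Rightarrow D \cap B \lll B$'' needed to localize into $\mathbf B$; I expect this follows from the definitions of the elementary steps being insensitive to the ambient algebra (the superscript $A$ in $<_{\TBA}^A$ etc. being irrelevant, as the paper notes) together with part~(i) itself, so it does not create a vicious circle in the induction.
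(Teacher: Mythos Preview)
Your overall architecture for part~(i) --- reduce to a single step and then invoke part~(t) --- is exactly what the paper does. The gap is in part~(t), specifically the $\TD$ and $\TS$ cases.

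For $T=\TD$, the conclusion $C\cap D\;\dot\le_{T(\sigma)}^{A}\;B\cap D$ does \emph{not} permit degeneration to $\TS$; it means precisely that $C\cap D$ is empty, or equals $B\cap D$, or satisfies $C\cap D<_{\TD(\sigma)}^{A}B\cap D$ with the \emph{same} congruence $\sigma$. So you must show that $(B\cap D)/\sigma$ is either empty, a singleton, or all of $B/\sigma$. Your proposal is to establish this via the propagation corollaries (Corollary~\ref{CORPropagationModuloCongruence}(t), Corollary~\ref{CORPropagateToRelations}), but those results concern pushing a single step $C<_T B$ through a quotient map, not analysing an intersection $B\cap D$ modulo $\sigma$; there is no single step of the form $B\cap D<_T B$ to feed into them. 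Moreover, even granting $(B\cap D)/\sigma\lll B/\sigma$, this does not force $(B\cap D)/\sigma$ to be BA and center free: after the first $\TD$ step inside $B/\sigma$ one may land in an algebra that does have BA or central subuniverses. (There is also a circularity risk: Corollary~\ref{CORPropagateToRelations} and Lemma~\ref{LEMPropagation}(fm) are proved in the paper \emph{using} Lemma~\ref{LEMIntersectALL}.) The paper instead invokes the structural Lemma~\ref{LEMIntersectionPCLinearIsGood}(d), proved independently earlier, which gives exactly the trichotomy $|(B\cap D)/\sigma|\in\{0,1\}$ or $(B\cap D)/\sigma=B/\sigma$ --- and that is the missing ingredient you need.

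The same lemma is also needed in the $\TS$ case. Your argument takes $E\le C$ with $E<_{\TBA,\TC}B$ and concludes $E\cap D\;\dot\le_{\TBA,\TC}\;B\cap D$, but if $C\cap D\neq\varnothing$ while $E\cap D=\varnothing$ you cannot conclude $C\cap D<_{\TS}B\cap D$: the would-be witness is empty. That $E\cap D\neq\varnothing$ whenever $C\cap D\neq\varnothing$ is nontrivial and is exactly Lemma~\ref{LEMIntersectionPCLinearIsGood}(s).
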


\begin{thm}\label{THMMainStableIntersection}
 
Suppose 
\begin{enumerate}
    \item $C_{i}<_{T_{i}(\sigma_{i})}^{A} B_{i}\lll A$, where 
    $T_{i}\in\{\TBA, \TC,\TS,\TL,\TPC\}$ for $i=1,2,\dots,n$, $n\ge 2$;
    \item $\bigcap\limits_{i\in[n]} C_{i} = \varnothing$;
    \item $B_{j}\cap\bigcap\limits_{i\in[n]\setminus\{j\}} C_{i} \neq  \varnothing$
    for every $j\in[n]$.
\end{enumerate}
Then one of the following conditions hold:
\begin{enumerate}
    \item[(ba)] $T_1=\dots= T_n=\TBA$;
    \item[(l)] 
    %\begin{enumerate}
        %\item 
        $T_1=\dots= T_n=\TL$ and 
        for every $k,\ell\in [n]$
        %\item 
        there exists a bridge $\delta$ from 
    $\sigma_{k}$ and $\sigma_{\ell}$ 
    such that 
    $\widetilde \delta= \sigma_{k}\circ \sigma_{\ell}$;
    %\end{enumerate}
    \item[(c)] $n=2$ and $ T_1= T_2=\TC$;
    \item[(pc)] $n=2$, $ T_1= T_2=\TPC$,
    and $\sigma_{1} = \sigma_{2}$.
\end{enumerate}     
\end{thm}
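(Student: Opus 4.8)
The plan is to analyse the situation by first passing to factor algebras to kill the ``bottom'' of the subuniverses. Using Lemma~\ref{LEMUbiquity} applied inside each $B_i$ (note $|C_i| < |B_i|$ forces $|B_i| > 1$), and more importantly the hypothesis $C_i <_{T_i(\sigma_i)}^A B_i$, each $\sigma_i$ is an irreducible congruence with $B_i^2 \subseteq \sigma_i^*$, $C_i = B_i \cap E_i$ for a block $E_i$ of $\sigma_i$, and $B_i/\sigma_i$ being BA and center free (for $T_i \in \{\TL,\TPC,\TD\}$; for $T_i \in \{\TBA,\TC,\TS\}$ we treat $\sigma_i$ as a full congruence and handle those via Lemma~\ref{LEMBACenterImplies} directly). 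The first step is to argue that if any $T_i = \TS$ then we get a contradiction with conditions 2 and 3, since an $\TS$-subuniverse contains a BA-and-central subuniverse and Lemma~\ref{LEMBACenterImplies} would propagate a nonempty such subuniverse through the ``missing-one'' intersections to produce a nonempty total intersection. So we may assume $T_i \in \{\TBA,\TC,\TL,\TPC\}$ for all $i$.

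Next I would set up the key combinatorial object. For each pair $k,\ell$, consider the relation $R$ obtained as (roughly) the subalgebra of $\mathbf A^{?}$ generated by / witnessing the intersections; the emptiness of $\bigcap C_i$ together with nonemptiness of each $B_j \cap \bigcap_{i\neq j} C_i$ is exactly the configuration that, via the strong-subuniverse calculus (Corollary~\ref{CORPropagateToRelations} and Lemma~\ref{LEMIntersectALL}), forces the appearance of a bridge. Concretely: the set $B_k \cap \bigcap_{i \neq k} C_i$ is nonempty and, modulo $\sigma_k$, lands in $B_k/\sigma_k$; intersecting further with $C_k = B_k \cap E_k$ gives the empty set. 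By the definition of irreducibility of $\sigma_k$ and the ``empty restriction to a strong subset yields a bridge'' philosophy emphasized in the introduction, this yields a bridge $\delta_{k}$ from $\sigma_k$ to $\sigma_k$ (or between $\sigma_k$ and some auxiliary congruence), and similarly for each index. Then I would \emph{compose} these bridges around the cycle $1 \to 2 \to \dots \to n \to 1$ using the bridge-composition formula and Lemma~\ref{LEMBridgeComposition}, obtaining a single bridge $\delta$ from $\sigma_k$ to $\sigma_\ell$ for any chosen pair, with $\widetilde\delta$ controlled.

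The case split then comes from the structural dichotomy on irreducible congruences. If some $\sigma_i$ is linear, then by Lemma~\ref{LEMNoBridgeBetweenDifferentTypes} the existence of the composed bridge forces \emph{all} the $\sigma_j$ reachable by bridges to be linear, and one shows all $T_j = \TL$ (the $\TBA,\TC$ cases would have to reduce to something center/absorption-like but the bridge obstruction rules that out when mixed with linear); then Lemma~\ref{LEMLinearEquivalentConditions} plus a careful bookkeeping of $\widetilde\delta$ gives $\widetilde\delta = \sigma_k \circ \sigma_\ell$ for every $k,\ell$, which is case (l). If all $\sigma_i$ are PC congruences, then Lemma~\ref{LEMPCBridgesAreTrivial} says every bridge among them is ``trivial'' (a conjunction of two copies of $\sigma$), and Lemma~\ref{LEMBridgeTOPCCongruence} forces $\mathbf A_i/\sigma_i \cong \mathbf A_j/\sigma_j$ with a bijective connecting relation; pushing this around a cycle of length $\ge 3$ of trivial bridges collapses, and one extracts that $n = 2$ and $\sigma_1 = \sigma_2$, which is case (pc). The remaining possibilities are that all $T_i \in \{\TBA,\TC\}$: here the bridge analysis degenerates, and by a direct argument (as in \cite{zhuk2021strong}, using Lemma~\ref{LEMBACenterImplies} and the center-free / BA-free structure of the factors $B_i/\sigma_i$) one shows either all are $\TBA$ (case (ba)) or $n = 2$ and both are $\TC$ (case (c)) — the central case being limited to $n=2$ because a central subuniverse, being ternary-absorbing but not binary-absorbing, can only participate in a ``minimal obstruction'' of size two.

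The main obstacle I expect is the bridge-composition / cycle-collapsing step: extracting the \emph{right} bridge from the emptiness hypothesis for each index (this is where the hard content of Theorem~\ref{THMMainStableIntersection} lives), verifying that its $\widetilde\delta$ is exactly $\sigma_k \circ \sigma_\ell$ rather than something larger, and in the PC case showing that a cycle of trivial bridges of length $\ge 3$ cannot be ``tight'' (forcing $n = 2$). Handling mixed types — ruling out, say, a configuration with one $\TL$ and one $\TBA$ index — will require the full strength of Lemmas~\ref{LEMNoBridgeBetweenDifferentTypes} and \ref{LEMPCBridgesAreTrivial} together with the S-free/BA-center-free hypotheses on the factors, and this bookkeeping is likely the most delicate part of the argument.
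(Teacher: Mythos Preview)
Your proposal has the right list of ingredients but misses the paper's key structural move and contains a genuine gap in how the bridge is produced.

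The paper's proof is short because all the hard work is isolated in the two-index case, Lemma~\ref{LEMTwoStableIntersection}. For general $n$ one simply collapses indices: set $B_2' = B_2 \cap C_3 \cap \dots \cap C_n$ and $C_2' = C_2 \cap B_2'$. By Lemma~\ref{LEMIntersectALL}(t) one has $C_2' \le_{T_2(\sigma_2)}^A B_2' \lll A$, and hypotheses 2--3 become exactly a two-index instance for the pair $(1,2)$: $C_1 \cap C_2' = \varnothing$, $C_1 \cap B_2' \neq \varnothing$, $B_1 \cap C_2' \neq \varnothing$. Lemma~\ref{LEMTwoStableIntersection} then gives $T_1 = T_2$ directly and, in the dividing case, the bridge with $\widetilde\delta = \sigma_1 \circ \sigma_2$; Lemma~\ref{LEMNoBridgeBetweenDifferentTypes} separates $\TL$ from $\TPC$. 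Repeating for every pair $(k,\ell)$ finishes. The bound $n=2$ in the $\TPC$ case is then one line: the bridge forces $\sigma_1 = \sigma_2$ (else composing it with itself yields a nontrivial self-bridge, making $\sigma_1$ linear), so $C_1 \cap C_2$ is already empty, contradicting condition~3 for any $j \ge 3$.

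Your attempt to extract ``a bridge $\delta_k$ from $\sigma_k$ to $\sigma_k$'' from index $k$ alone does not work: the emptiness at one index gives you nothing to bridge \emph{to}. Bridges arise here only between \emph{pairs} of indices, and the mechanism (inside Lemma~\ref{LEMTwoStableIntersection}) is not an abstract ``empty restriction yields a bridge'' principle but the concrete equality $\sigma_1 \cap \omega = \sigma_2 \cap \omega$ supplied by Lemma~\ref{LEMIntersectionPCLinearIsGood}(d), where $\omega$ is the intersection of all dividing congruences from $B_1 \lll A$ and $B_2 \lll A$; Lemma~\ref{LEMBridgeBetweenCongruences} then writes down the bridge explicitly, and its formula immediately gives $\widetilde\delta = \sigma_1 \circ \sigma_2$. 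Your cycle-composition plan is therefore both unnecessary (the pairwise collapse gives the bridge for any $k,\ell$ in one step) and would not deliver the stated $\widetilde\delta$ even if the individual self-bridges existed. Likewise, ruling out mixed types and handling $\TS$ require no separate global argument: they fall out of the same pairwise reduction via Lemma~\ref{LEMTwoStableIntersection}.
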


\begin{cor}\label{CORMainStableIntersection}
 
Suppose 
\begin{enumerate}
    \item $R\le_{sd} A_{1} \times \dots\times 
    A_{n}$;
    \item $C_{i}<_{T_{i}(\sigma_{i})}^{A_{i}} B_{i}\lll A_{i}$, where 
    $T_{i}\in\{\TBA, \TC,\TS,\TL,\TPC\}$ for $i=1,2,\dots,n$, $n\ge 2$;
    \item $R\cap (C_{1}\times\dots\times C_{n}) = \varnothing$;
    \item 
    $R\cap (C_{1}\times\dots\times C_{j-1}\times 
    B_{j}\times C_{j+1}\times\dots\times C_{n})  \neq  \varnothing$
    for every $j\in[n]$.
\end{enumerate}
Then one of the following conditions hold:
\begin{enumerate}
    \item[(ba)] $T_1=\dots=T_n=\TBA$;
    \item[(l)] 
    %\begin{enumerate}
        %\item 
        $T_1=\dots= T_n=\TL$ and 
        for every $k,\ell\in [n]$
        %\item 
        there exists a bridge $\delta$ from 
    $\sigma_{k}$ and $\sigma_{\ell}$ such that $\widetilde \delta=
    \sigma_{k}\circ \proj_{k,\ell}(R)\circ \sigma_{\ell}$;
    %\end{enumerate}
    \item[(c)] $n=2$ and $T_1=T_2=\TC$;
    \item[(pc)] $n=2$, $T_1=T_2=\TPC$, $A_{1}/\sigma_{1} \cong A_{2}/\sigma_{2}$,
    and the relation $\{(a/\sigma_1,b/\sigma_{2})\mid (a,b)\in R\}$ is bijective. 
    %, and for 
    %$\omega =\sigma_{1}\circ R\circ \sigma_{2}$, the relation $\omega(x_1,x_3)\wedge \omega(x_2,x_3)$ is a trivial bridge from $\sigma_1$ to $\sigma_{2}$.
\end{enumerate}         
\end{cor}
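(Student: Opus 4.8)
\textbf{Proof proposal for Corollary~\ref{CORMainStableIntersection}.}
The plan is to reduce Corollary~\ref{CORMainStableIntersection} to Theorem~\ref{THMMainStableIntersection} by passing from the subdirect relation $R\le_{sd}A_1\times\dots\times A_n$ to the algebra $\mathbf R$ itself, viewed as a subalgebra of $\mathbf A_1\times\dots\times\mathbf A_n$, and pulling the strong subuniverses back along the coordinate projections $\pi_i\colon\mathbf R\to\mathbf A_i$. First I would observe that since $R$ is subdirect, each $\pi_i$ is a surjective homomorphism, so Lemma~\ref{LEMPropagation}(bt) applies: from $C_i<^{A_i}_{T_i(\sigma_i)}B_i\lll A_i$ we get $\pi_i^{-1}(C_i)<^{R}_{T_i(\pi_i^{-1}(\sigma_i))}\pi_i^{-1}(B_i)\lll R$ (using part~(b) of the same lemma for the $\lll R$ part, together with $R\lll R$). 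Set $\widehat C_i:=\pi_i^{-1}(C_i)$, $\widehat B_i:=\pi_i^{-1}(B_i)$, $\widehat\sigma_i:=\pi_i^{-1}(\sigma_i)$. Then $\bigcap_i\widehat C_i=R\cap(C_1\times\dots\times C_n)=\varnothing$ by hypothesis~3, and $\widehat B_j\cap\bigcap_{i\neq j}\widehat C_i=R\cap(C_1\times\dots\times B_j\times\dots\times C_n)\neq\varnothing$ by hypothesis~4. So the hypotheses of Theorem~\ref{THMMainStableIntersection} hold for $\mathbf R$ with the subuniverses $\widehat C_i<_{T_i(\widehat\sigma_i)}\widehat B_i\lll R$.

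Next I would apply Theorem~\ref{THMMainStableIntersection} and translate each of its four conclusions back through the projections. Cases (ba) and (c) are immediate since they only assert equalities among the $T_i$ (respectively $n=2$), which are unchanged. For case~(l), Theorem~\ref{THMMainStableIntersection} yields, for every $k,\ell$, a bridge $\widehat\delta$ from $\widehat\sigma_k$ to $\widehat\sigma_\ell$ on $\mathbf R$ with $\widetilde{\widehat\delta}=\widehat\sigma_k\circ\widehat\sigma_\ell$. I then push $\widehat\delta$ down to $\mathbf A_k\times\mathbf A_\ell$ by applying the projection $\mathbf R^2\times\mathbf R^2\to\mathbf A_k^2\times\mathbf A_\ell^2$ (first two coordinates through $\pi_k$, last two through $\pi_\ell$); the stability conditions in the definition of a bridge are preserved under surjective images, and the condition relating $\widehat\sigma_k$-relatedness to $\widehat\sigma_\ell$-relatedness descends because $\pi_i^{-1}(\sigma_i)$-classes are exactly the preimages of $\sigma_i$-classes. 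The resulting $\widetilde\delta$ on $A_k\times A_\ell$ equals the image of $\widehat\sigma_k\circ\widehat\sigma_\ell$ under $(\pi_k,\pi_\ell)$, and here is where $R$ re-enters: for $a\in A_k$, $b\in A_\ell$ we have $(a,b)\in(\pi_k,\pi_\ell)(\widehat\sigma_k\circ\widehat\sigma_\ell)$ iff there is a tuple $t\in R$ with $\pi_k(t)\mathrel{\sigma_k}a$ and some $t'\in R$ with $\pi_k(t')=\pi_k(t)$, $\pi_\ell(t')\mathrel{\sigma_\ell}b$ — that is, iff $(a,b)\in\sigma_k\circ\proj_{k,\ell}(R)\circ\sigma_\ell$. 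So $\widetilde\delta=\sigma_k\circ\proj_{k,\ell}(R)\circ\sigma_\ell$, which is exactly the claimed formula. For case~(pc), Theorem~\ref{THMMainStableIntersection} gives $n=2$, $T_1=T_2=\TPC$ and $\widehat\sigma_1=\widehat\sigma_2$ on $\mathbf R$; combining this with Lemma~\ref{LEMBridgeTOPCCongruence} (or directly, since $\widehat\sigma_i=\pi_i^{-1}(\sigma_i)$ and the two pullbacks coincide) forces $\mathbf A_1/\sigma_1\cong\mathbf R/\widehat\sigma_1=\mathbf R/\widehat\sigma_2\cong\mathbf A_2/\sigma_2$ and that the induced correspondence $\{(a/\sigma_1,b/\sigma_2)\mid(a,b)\in R\}$ is the graph of this isomorphism, hence bijective.

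The routine verifications — that preimages of irreducible/linear/PC/central/BA subuniverses along a surjection have the same type (Lemma~\ref{LEMPropagation}), and that $\pi_i^{-1}(\sigma_i)$ is again a congruence of the asserted kind — I would cite rather than reprove. The one genuinely delicate point, and the step I expect to cost the most care, is the bookkeeping in case~(l): matching the abstract bridge on $\mathbf R$ with the concrete relation $\sigma_k\circ\proj_{k,\ell}(R)\circ\sigma_\ell$ requires checking that $\widehat\sigma_k=\pi_k^{-1}(\sigma_k)$ behaves correctly under composition with $\widehat\sigma_\ell=\pi_\ell^{-1}(\sigma_\ell)$ inside $\mathbf R^2$, i.e. that no information is lost when we take the image under $(\pi_k,\pi_\ell)$ — in particular that the ``$\circ$'' on $\mathbf R$ translates to a single insertion of $\proj_{k,\ell}(R)$ downstairs rather than two copies. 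This is exactly the place where the subdirectness of $R$ is used, and it should be spelled out as a small separate computation.
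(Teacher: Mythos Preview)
Your proposal is correct and follows essentially the same approach as the paper's proof: pull the subuniverses $C_i,B_i$ back along the coordinate projections $\pi_i:\mathbf R\to\mathbf A_i$ (the paper's $f_i$), invoke Theorem~\ref{THMMainStableIntersection} on $\mathbf R$, and translate each conclusion back. Your treatment of case~(l) is actually more explicit than the paper's, which simply writes ``Translating congruences $\sigma_k'$ and $\sigma_\ell'$ to $\sigma_k$ and $\sigma_\ell$\dots'' without spelling out that the image of $\widehat\delta$ under $\pi_k^2\times\pi_\ell^2$ is again a bridge; your verification that $\widetilde\delta=(\pi_k,\pi_\ell)(\widehat\sigma_k\circ\widehat\sigma_\ell)=\sigma_k\circ\proj_{k,\ell}(R)\circ\sigma_\ell$ (where subdirectness makes the outer witnesses $s,t$ redundant so only the middle tuple $u\in R$ matters) is the computation the paper leaves implicit.
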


\begin{remark}
Notice that sometimes we want to have several restrictions 
on one coordinate of a relation. To keep the statement 
of Corollary \ref{CORMainStableIntersection} simple we do not add this possibility into the claim, but we can always duplicate the coordinate of the relation
and apply restrictions separately on different coordinates.
\end{remark}

\begin{lem}\label{LEMMultiTypeStillStable}
 
Suppose $C\le_{\mathcal{M}T}^{A}B$. %, where $\mathcal T\in\{\TPC,\TL,\TD\}$,
Then $C<_{T}^{A}\dots<_{T}^{A}B$ and $C\lll^{A} B$.
\end{lem}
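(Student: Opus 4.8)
\textbf{Proof plan for Lemma~\ref{LEMMultiTypeStillStable}.}

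The plan is to unpack the definition of $C\le_{\mathcal{M}T}^{A}B$ and to show by induction on the number $t$ of factors that $C$ can be reached from $B$ by a chain of $<_{T}^{A}$ steps (which in particular gives $C\lll^{A}B$ once we recall that $T\in\{\TL,\TPC,\TD\}\subseteq\{\TBA,\TC,\TS,\TD,\TL,\TPC\}$ and that all of $\TL,\TPC$ are special cases of $\TD$ for the purposes of the definition of $\lll$, so each $<_T$ step is admissible in a $\lll$-chain). Write $C=C_1\cap\dots\cap C_t$ with $C_i<_{T}^{A}B$ for every $i$, say $C_i<_{T(\sigma_i)}^{A}B$; each $\sigma_i$ is an irreducible congruence (linear, PC, or just irreducible according to $T$), $B^2\subseteq\sigma_i^{*}$, $C_i=B\cap E_i$ for a block $E_i$ of $\sigma_i$, and $B/\sigma_i$ is BA and center free. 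The base case $t=1$ is immediate, and the case $C=B$ is vacuous, so assume $t\ge 2$ and that the claim holds for fewer factors.

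The key step is to peel off one factor. Consider $C_1<_{T(\sigma_1)}^{A}B$ and set $B':=C_1=B\cap E_1$. First I would check that $C = C_2\cap\dots\cap C_t$ intersected with $B'$ is exactly $C$, i.e.\ $C\subseteq B'$, which is clear. Now I want to realize $C$ as an $\mathcal{M}T$-subuniverse of $B'$, so that the induction hypothesis applies to $B'$ with the $t-1$ factors $C_i\cap B'$ ($i=2,\dots,t$). This is where I expect to invoke Lemma~\ref{LEMIntersectALL}(t): since $B'=C_1\lll^{A}B$ (a single $<_{T}^{A}$ step is a $\lll^{A}$-chain) and $C_i<_{T(\sigma_i)}^{A}B$, the conclusion $C_i\cap B'\dot\le_{T(\sigma_i)}^{A}B\cap B' = B'$ holds; I must argue the intersection is nonempty (it contains $C\ne\varnothing$, which is part of the hypothesis $C\ne\varnothing$ in the definition of $\mathcal{M}T$), hence $C_i\cap B'\le_{T(\sigma_i)}^{A}B'$. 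Discarding any factors where $C_i\cap B'=B'$, the remaining ones give $C<_{\mathcal{M}T}^{A}B'$ (or $C=B'$, in which case we are already done). By the induction hypothesis, $C<_{T}^{A}\dots<_{T}^{A}B'$ and $C\lll^{A}B'$; prepending the step $B'=C_1<_{T(\sigma_1)}^{A}B$ yields $C<_{T}^{A}\dots<_{T}^{A}B$, and concatenating the $\lll$-chains gives $C\lll^{A}B$.

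The main obstacle is bookkeeping around the dotted (possibly-empty) versions of the relations produced by Lemma~\ref{LEMIntersectALL}: I need the nonemptiness of all intermediate intersections, which is guaranteed because each such intersection still contains $C$, and $C\ne\varnothing$ by definition of $C\le_{\mathcal{M}T}^{A}B$. A secondary point to be careful about is that in Lemma~\ref{LEMIntersectALL}(t) the congruence type is preserved verbatim ($<_{T(\sigma_i)}$ stays $\le_{T(\sigma_i)}$), so no step can degrade $\TL$ or $\TPC$ into a mere $\TD$; this keeps the output chain of the claimed uniform type $T$. Once these nonemptiness and type-preservation points are checked, the induction closes cleanly.
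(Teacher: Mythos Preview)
Your approach is essentially the same as the paper's: both peel off one factor $C_i$ at a time using Lemma~\ref{LEMIntersectALL}(t) to descend along a chain of $<_T^A$ steps, with nonemptiness guaranteed because each intermediate intersection contains $C$. The paper phrases this iteratively by setting $D_j=\bigcap_{i\le j}C_i$ and showing $D_{j+1}\le_T^A D_j$, while you phrase it as induction on the number of factors; the content is identical.
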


% \begin{lem}\label{LEMConOneIsPcOrLinear}
% Suppose $B\lll A$, $B$ is S-free, 
% $T\in\{PC,L\}$,
% $\Sigma$ is the set of all congruences on $A$ of type $T$ 
% such that $\sigma^{*}\supseteq B^{2}$ and
% $\sigma\not\supseteq B^{2}$,
% $\delta$ is an irreducible congruence on $\mathbf A$ 
% such that 
% $\bigcap\limits_{\sigma\in\Sigma}\sigma \subseteq \delta$ and 
% $B^{2}\not\subseteq \delta$.
% Then $\delta\in \Sigma$.
% \end{lem}

%\begin{lem}\label{LEMMinimalContainingIsMinimal}
%Suppose $T\in\{\TML,\TMPC\}$, $b\in B$, $C\subseteq B$ is an inclusion minimal subset  
%such that 
%$b\in C$ and $C\le_{T}^{A}B$. 
%Then there is no $C'\subsetneq C$ such that
%$C'\le_{T}^{A}B$.
%\end{lem}

\begin{lem}\label{LEMPreserveLinkdness}
 
Suppose $R\le_{sd} \mathbf A_1\times\mathbf A_2$,
%$\mathcal T\in\{\TMPC,\TML\}$,
$C_{i}\le_\mathcal {\TMD}^{A_{i}} B_{i}\lll A_i$
for $i\in\{1,2\}$, 
$S$ is a rectangular closure of $R$,
$R\cap (B_1\times B_2)\neq \varnothing$,
$S\cap (C_1\times C_2)\neq \varnothing$.
Then 
$R\cap (C_{1}\times C_{2})\neq \varnothing$.
\end{lem}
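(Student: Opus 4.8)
The goal is to pass from a nonempty intersection with the rectangular closure $S$ down to a nonempty intersection with $R$ itself, under the hypothesis that each $C_i$ is obtained from $B_i$ by a multi-dividing reduction. The plan is to reduce to the case of a single dividing step, and then analyze what a nonempty intersection $S\cap(C_1\times C_2)$ forces. First I would unfold $C_i\le_{\TMD}^{A_i}B_i$: by definition $C_i = C_{i,1}\cap\dots\cap C_{i,t_i}$ with each $C_{i,j}<_{\TD(\sigma_{i,j})}^{A_i}B_i$. By Lemma~\ref{LEMMultiTypeStillStable} we also know $C_i\lll^{A_i}B_i$, so Corollary~\ref{CORPropagateToRelations}(r) applies to the relation $R$ itself (restricted to $B_1\times B_2$), giving $R\cap(B_1\times B_2)\cap(C_1\times C_2)\dot\lll R\cap(B_1\times B_2)$; the point is to show this is not the empty case.

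\textbf{Key steps.} The main idea is a ``stable intersection'' argument in the spirit of Theorem~\ref{THMMainStableIntersection}/Corollary~\ref{CORMainStableIntersection}. Suppose for contradiction that $R\cap(C_1\times C_2)=\varnothing$ while $S\cap(C_1\times C_2)\ne\varnothing$ and $R\cap(B_1\times B_2)\ne\varnothing$. I would first dispose of trivialities: if some $C_i=B_i$ there is nothing to prove by a projection/absorption argument, so assume $C_i\lneq B_i$ strictly. Since $R$ is subdirect in $A_1\times A_2$, its rectangular closure $S$ is built by the rectangularity rule; the elements of $S\setminus R$ are exactly the ``fourth corners of parallelograms''. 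So take $(c_1,c_2)\in S\cap(C_1\times C_2)$ witnessed by $(c_1,c_2')\in R$, $(c_1',c_2)\in R$, $(c_1',c_2')\in R$ with $c_1'\in\proj_1(R)$, $c_2'\in\proj_2(R)$. Now I would push the reductions $C_i<_{\TD(\sigma_{i,j})}^{A_i}B_i$ through these witnesses, using $\sigma_{i,j}$-stability of the coordinates of $R$ established via the parallelogram/rectangularity data, to try to correct $c_i'$ into $C_i$; the obstruction to doing so directly is exactly a bridge between the congruences $\sigma_{1,j}$ and $\sigma_{2,j}$, and the case analysis of Corollary~\ref{CORMainStableIntersection} (types $\TBA$, $\TL$, $\TC$, $\TPC$) tells us in each case that such a bridge forces $S$ and $R$ to agree on the relevant blocks — i.e. the rectangular closure does not actually add the point $(c_1,c_2)$. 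Concretely, in the $\TL$ case the bridge $\delta$ with $\widetilde\delta=\sigma_k\circ\proj_{k,\ell}(R)\circ\sigma_\ell$ already identifies the linear structure linking the two blocks, so a fourth corner inside $C_1\times C_2$ lands back in $R$; the $\TPC$ case uses bijectivity of $\{(a/\sigma_1,b/\sigma_2)\mid(a,b)\in R\}$ to the same effect; the $\TBA$ case uses Lemma~\ref{LEMBACenterImplies} to propagate absorption and conclude $R\cap(C_1\times C_2)$ is a nonempty absorbing subuniverse. Iterating over the $t_i$ dividing factors (intersecting one at a time, using Lemma~\ref{LEMIntersectALL}(t) to keep the type structure) yields the full claim.

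\textbf{Main obstacle.} The delicate point is the bookkeeping when $t_1,t_2>1$: we have a family of congruences on each side and must match them up correctly before invoking Corollary~\ref{CORMainStableIntersection}, which is stated for a single congruence per coordinate. The clean way around this is to induct on $t_1+t_2$: peel off one factor $C_{1,1}<_{\TD(\sigma_{1,1})}^{A_1}B_1$, set $B_1':=C_{1,1}$ (still $B_1'\lll A_1$ by Lemma~\ref{LEMMultiTypeStillStable}), observe that the rectangular closure of $R\cap(B_1'\times A_2)$ sits inside $S\cap(B_1'\times A_2)$, and recurse. The base case $t_1=t_2=1$ is precisely where Corollary~\ref{CORMainStableIntersection} bites, and the real work is checking that in each of its four output cases the fourth-corner point of the rectangular closure is already in $R$ — equivalently, that $R\cap(C_1\times C_2)\ne\varnothing$ follows from $S\cap(C_1\times C_2)\ne\varnothing$. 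I expect the $\TPC$ and $\TL$ cases to require the most care, since there one must actually use the bridge/bijectivity data rather than a soft absorption argument; the $\TBA$ and $\TC$ cases should follow more directly from Lemma~\ref{LEMBACenterImplies} and the stronger propagation it provides.
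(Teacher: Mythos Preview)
Your overall reduction to single $\TD$-steps is sound in spirit --- the paper does it via Lemma~\ref{LEMMultiTypeStillStable} and then iterates a one-sided step (Lemma~\ref{LEMPreserveLinkdnessOneStep}) while keeping the original $S$ fixed throughout --- but your bookkeeping is off: the observation that the rectangular closure of $R\cap(B_1'\times A_2)$ sits inside $S$ points the wrong way for the recursion, since you need the \emph{new} closure to still meet $C_1\times C_2$. More seriously, in your base case you invoke Corollary~\ref{CORMainStableIntersection} without verifying its hypothesis~(4): the parallelogram witnesses $(c_1,c_2'),(c_1',c_2)\in R$ extracted from $S$ need not have $c_2'\in B_2$ or $c_1'\in B_1$, so you have not established $R\cap(C_1\times B_2)\ne\varnothing$ or $R\cap(B_1\times C_2)\ne\varnothing$. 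And even granting the output of the corollary --- a bridge in case~(l), bijectivity modulo $\sigma$ in case~(pc) --- you give no argument why this forces any point of $C_1\times C_2$ into $R$; the existence of a bridge between $\sigma_1$ and $\sigma_2$ does not by itself place a fourth corner in $R$. (Cases~(ba) and~(c) cannot occur here since the input types are $\TD$, so those parts of your sketch are irrelevant.)

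The missing idea is \emph{linkedness}, and the paper's single-step argument (Lemma~\ref{LEMPreserveLinkdnessOneStepAUX}) is of a different flavour entirely. After thickening by the congruences $\delta_i$ coming from $C_i\lll A_i$, one analyses $\LeftLinked(R')$ where $R'=\delta_1\circ R\circ\delta_2$. If it is full on $B_1'$ modulo $\sigma_1$, Lemma~\ref{LEMLeftLinkedStayFull} propagates this into $R'\cap(B_1'\times C_2')$ and produces a tuple in $C_1'\times C_2'$. If not, then $R'$ meets $C_2'$ only from some other $\sigma_1$-block $C_1''\ne C_1'$; but $S\cap(C_1\times C_2)\ne\varnothing$ forces $C_1'$ and $C_1''$ into the same $\LeftLinked(R')$-class, contradicting Lemma~\ref{LEMCongruenceEitherCutOrDoNothing}(1). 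This linkedness-propagation step is not recoverable from the bridge data alone, and it is the piece your sketch is missing.
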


\begin{lem}\label{LEMMaximalMultExtention}
Suppose $C_{1}<_{\mathcal{M}T}^{A}B_1\lll A$, 
$B_{2}\lll A$, 
$C_{1}\cap B_{2} = \varnothing$, 
$B_{1}\cap B_{2}\neq \varnothing$,
$\sigma$ is a maximal congruence on $\mathbf A$ such that 
$(C_{1}\circ \sigma)\cap B_2 = \varnothing$.
Then 
$\sigma = \omega_{1}\cap\dots\cap \omega_{s}$, 
where 
$\omega_{1},\dots,\omega_s$ are congruences
of type $T$ 
on $\mathbf A$
such that $\omega_{i}^{*}\supseteq B_{1}^{2}$.
\end{lem}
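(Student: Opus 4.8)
The plan is to exploit the maximality of $\sigma$ together with Theorem~\ref{THMMainStableIntersection} (or rather Corollary~\ref{CORMainStableIntersection}), applied to a cleverly chosen relation. First I would unfold the hypothesis $C_1<_{\mathcal{M}T}^{A}B_1$: by definition $C_1 = \bigcap_{i\in[t]} C_1^{(i)}$ where each $C_1^{(i)}<_{T(\sigma_1^{(i)})}^{A}B_1$, and by Lemma~\ref{LEMMultiTypeStillStable} we also know $C_1\lll^{A}B_1\lll A$. The congruence $\sigma$ is chosen maximal with $(C_1\circ\sigma)\cap B_2=\varnothing$; note $\sigma$ exists because $\sigma=0_{\mathbf A}$ already works by the assumption $C_1\cap B_2=\varnothing$. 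The goal is to show $\sigma$ is an intersection of congruences $\omega_j$ of type $T$ whose covers $\omega_j^{*}$ contain $B_1^{2}$.

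The key construction: I would pass to the factor $\mathbf A/\sigma$ and look at $C_1\circ\sigma$ as a $\mathcal{M}T$-subuniverse of $B_1\circ\sigma$ inside $\mathbf A/\sigma$ — using Corollary~\ref{CORPropagateMultiplyByCongruence}(f) applied iteratively, $C_1\circ\sigma\lll B_1\circ\sigma$, and in fact one should check $C_1\circ\sigma\le_{\mathcal{M}T}^{A}B_1\circ\sigma$ via Corollary~\ref{CORPropagateMultiplyByCongruence}(t)/(e), since the relevant congruences are of type $T$. Now consider the relation $R\le_{sd}(\mathbf A/\sigma)\times\mathbf A$ given by $R=\{(a/\sigma,b)\mid (a,b)\in \text{(some connecting relation)}\}$; more simply, work with the single algebra $\mathbf A/\sigma$ and the two subuniverses $C_1\circ\sigma$ and $B_2\circ\sigma$ (the latter makes sense since $B_2\lll A$, hence $B_2\circ\sigma\lll A$ by Corollary~\ref{CORPropagateMultiplyByCongruence}(f)). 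Maximality of $\sigma$ says exactly that in $\mathbf A/\sigma$ the $\mathcal{M}T$-subuniverse $C_1\circ\sigma$ and $B_2\circ\sigma$ are disjoint but one cannot enlarge $C_1\circ\sigma$ by any further congruence without creating an intersection — i.e. for every nontrivial congruence $\tau$ on $\mathbf A/\sigma$, $(C_1\circ\sigma\circ\tau)\cap(B_2\circ\sigma)\neq\varnothing$. Writing $C_1\circ\sigma=\bigcap_i(C_1^{(i)}\circ\sigma)$ as an intersection of type-$T$ subuniverses with congruences $\omega_i$ (the images of $\sigma_1^{(i)}$), I would apply Theorem~\ref{THMMainStableIntersection} to the family consisting of these $C_1^{(i)}\circ\sigma$ together with $B_2\circ\sigma$ — but since $B_2\circ\sigma$ is only $\lll$, not a single strong subuniverse, I would instead first pick a single strong subuniverse $C_2<_{T'(\sigma_2')}^{A}B_2'$ inside the $\lll$-chain witnessing $B_2\circ\sigma\lll A/\sigma$ (using Lemma~\ref{LEMUbiquity} if $|B_2\circ\sigma|>1$, or handling the singleton case directly). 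Then each individual pair $\{C_1^{(i)}\circ\sigma, C_2\}$ that is disjoint gives, via Theorem~\ref{THMMainStableIntersection} (the dichotomy: (ba), (l), (c), (pc)), strong structural information — in particular in cases (l) and (pc) a bridge between $\omega_i$ and $\sigma_2'$, forcing $\omega_i$ to be linear/PC accordingly and $\omega_i^{*}\supseteq B_1^{2}$ (this last inclusion is built into the type-$\TD$ definition, clause 1).

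The main obstacle, I expect, is \textbf{bookkeeping the minimality/maximality correctly}: I must show that $\sigma$ equals the intersection of \emph{exactly} those $\omega_i$ that are "forced" — i.e. that no type-$T$ congruence strictly between is missed and none is spurious. Concretely, one direction ($\sigma\subseteq\bigcap\omega_j$ for a suitable family) should follow by taking the $\omega_j$ to be the pullbacks of the congruences arising in the $\mathcal{M}T$-decomposition of $C_1\circ\sigma$ after applying the structural dichotomy; the reverse ($\bigcap\omega_j\subseteq\sigma$, hence equality) is where maximality bites: if $\bigcap\omega_j\supsetneq\sigma$ then $C_1\circ(\bigcap\omega_j)$ would still be disjoint from $B_2$ — this needs Lemma~\ref{LEMPreserveLinkdness} or Corollary~\ref{CORMainStableIntersection} to check disjointness is preserved when we relax $\sigma$ along the $\omega_j$'s — contradicting maximality. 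Making this relaxation argument airtight, in particular verifying that relaxing by the type-$T$ congruences $\omega_j$ does not create a solution in $B_2$, is the delicate point; I would isolate it as a sub-claim and prove it by contradiction using Theorem~\ref{THMMainStableIntersection} applied to whatever new nonempty intersection appears, showing its existence is incompatible with $C_1\cap B_2=\varnothing$ and the type of the $\omega_j$.
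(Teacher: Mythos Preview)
Your overall architecture---pass to $\mathbf A/\sigma$, use maximality to force the congruence to equal an intersection of type-$T$ congruences---matches the paper's. However, there are two genuine gaps.

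First, you write that ``one should check $C_1\circ\sigma\le_{\mathcal{M}T}^{A}B_1\circ\sigma$ via Corollary~\ref{CORPropagateMultiplyByCongruence}(t)/(e)''. This is not available in the form you need: the propagation result that actually transfers the $\mathcal{M}T$-type to a quotient is Corollary~\ref{CORPropagationModuloCongruence}(m), and it carries the hypothesis that $B_1/\sigma$ be \emph{S-free}. The paper handles this by a case split: if $B_1/\sigma$ has an $\TS$-subuniverse $E$, then Theorem~\ref{THMMainStableIntersection} forces $E$ to meet both $B_2/\sigma$ and $C_1/\sigma$, yielding $C_1/\sigma\cap B_2/\sigma\neq\varnothing$, a contradiction. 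You have no analogue of this case in your plan.

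Second, and more seriously, the step you correctly identify as ``the delicate point''---showing that relaxing $\sigma$ to $\bigcap\omega_j$ keeps $(C_1\circ\bigcap\omega_j)\cap B_2=\varnothing$---is not something you can get by a single contradiction-via-Theorem~\ref{THMMainStableIntersection} argument. The paper isolates this as a separate lemma (Lemma~\ref{LEMMultiplyByAllLinear}) and proves it by a \emph{descending induction on the size of $B_1$}, starting from $B_1=A$. The inductive step replaces each $C_1^{(i)}$ by $C_1^{(i)}\circ\sigma_i$ (which now sits below a strictly larger $B_1\circ\sigma_i$), applies Theorem~\ref{THMMainStableIntersection} to the resulting intersection with $B_1\cap B_2$, and in the nontrivial case finds $B_1\lll B_1'<_{\TD}B_1''\lll A$ with $B_1''$ strictly larger than $B_1$, so the inductive hypothesis applies to $B_1''$. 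Your sketch gives no indication of this inductive structure; a direct contradiction argument along the lines you suggest will loop, because applying Theorem~\ref{THMMainStableIntersection} once only reduces the problem to a similar one higher up in the $\lll$-chain.
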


%\begin{lem}\label{LEMMultiplePCImplies}
%Suppose 
%$B\lll A$, 
%$C_{i}<_{PC}^{A} B$ for every $i\in[n]$, 
%$\sigma$ is a congruence on $B$, 
%$B/\sigma$ has no BA or center.
%Then there is $D_{i}<_{PC}^{A/\sigma}B/\sigma$  
%for each $i\in [s]$ such that 
%$\bigcap\limits_{i\in[s]} D_{i} = (\bigcap\limits_{i\in [n]} C_{i})/\sigma$.
%\end{lem}

%\begin{cor}
%Suppose $R\le_{sd} A_{1}\times \dots\times A_{n}$, 
%$C_{i}<_{PC}^{A}B_{i}\lll A_{i}$ for $i\in[n]$,
%$\proj_{n}(R\cap (B_{1}\times \dots\times B_{n}))$ is BA and center free.
%Then 
%$\proj_{n}(R\cap (C_{1}\times \dots\times C_{n})) =D_{1}\cap\dots \cap D_{s}$,
%where 
%$D_{i}<_{PC}^{A_{n}}\proj_{n}(R\cap (B_{1}\times \dots\times B_{n}))$ for every $i\in[s]$.
%\end{cor}

%!TeX root=main.tex

\subsection{Auxiliary Statements}\label{AuxuliaryStrongStatementsSubsection}

\begin{lem}[\cite{miklos} Lemma 4.7]\label{LEMExistenceOfSpesialWNULemma}
Suppose $w$ is an idempotent WNU operation on $A$.
Then there exists a special idempotent WNU operation $w'\in\Clo(w)$ of arity $n^{n!}$.
\end{lem}

\begin{lem}[\cite{zhuk2020proof} Corollary 8.17.1] \label{LEMBuildingPerfectCongruence}
Suppose $\sigma$ is an irreducible congruence on
$\mathbf A\in \mathcal V_{n}$
and $\delta$ is a bridge from
$\sigma$ to $\sigma$ such that $\widetilde{\delta} = A^{2}$.
Then $\sigma$ is a perfect linear congruence.
\end{lem}

\begin{lem}[\cite{zhuk2020proof} Lemma 6.3]\label{LEMBridgeComposition}
Suppose $\sigma_{1}$, $\sigma_{2}$, $\sigma_{3}$ are irreducible congruences, 
$\rho_{1}$ is a bridge  from $\sigma_{1}$ to $\sigma_{2}$,
$\rho_{2}$ is a bridge from $\sigma_{2}$ to $\sigma_{3}$.
Then the formula
$$\rho(x_1,x_2,z_{1},z_{2}) = \exists y_{1}\exists y_{2}\; \rho_{1}(x_{1},x_{2},y_{1},y_{2})\wedge \rho_{2}(y_{1},y_{2},z_{1},z_{2})$$
defines a bridge from $\sigma_{1}$
to $\sigma_{3}$.
Moreover, 
$\widetilde{\rho} = 
\widetilde{\rho_{1}}\circ\widetilde{\rho_{2}}$.
\end{lem}

\begin{LEMMakePerfectCongruenceFromLinkedLEM}
Suppose $\sigma$ is a irreducible congruence on $\mathbf A\in \mathcal V_{n}$,
$\delta$ is a bridge from
$\sigma$ to $\sigma$ such that $\widetilde{\delta}$ is linked.
Then $\sigma$ is a perfect linear congruence.
\end{LEMMakePerfectCongruenceFromLinkedLEM}

\begin{proof}
Let $\delta^{-1}$ be the bridge defined by 
$\delta^{-1}(y_1,y_2,x_1,x_2) = \delta(x_1,x_2,y_1,y_2)$.
Since $\widetilde{\delta}$ is linked, 
$\underbrace{\widetilde{\delta}\circ\widetilde{\delta^{-1}}\circ 
\dots\widetilde{\delta}\circ\widetilde{\delta^{-1}}}_{2N}=
A^2$ for sufficiently large $N$.
Using Lemma \ref{LEMBridgeComposition} we compose $2N$ bridges $\delta, \delta^{-1},
\dots,\delta, \delta^{-1}$ and obtain a new bride $\delta'$
%$\delta'$ is a bridge from $\sigma$ to $\sigma$ 
such that 
$\widetilde \delta' = A^{2}$.
By Lemma \ref{LEMBuildingPerfectCongruence}
$\sigma$ is a perfect linear congruence.
\end{proof}

%\begin{lem}\label{LEMMakePerfectCongruenceFromLinked}
%Suppose $\delta$ is a bridge from 
%$\sigma_1$ to $\sigma_2$ such that 
%$\widetilde \delta$ is linked. 
%Then there exists a perfect bridge 
%from $\sigma_1$ to $\sigma_2$.
%\end{lem}

\begin{lem}\label{LEMNoAbsCenterPCInLinearAlgebra}%[\cite{zhuk2020proof} Lemma 7.21] 
Suppose 
$B\le \mathbf Z_{p}\times \dots \times \mathbf Z_{p}$.
Then there does not exist 
$C<_{\mathcal T} B$ such that $T\in \{\TBA,\TC\}$.
\end{lem}
\begin{proof}
It is sufficient to check for any term $\tau$
that $\tau^{\mathbf Z_{p}}(a,\dots,a,x)$
takes all the values if the last variable is not dummy.
Hence, there cannot be an absorbing subuniverse in 
$\mathbf Z_{p}\times \dots \times \mathbf Z_{p}$.
\end{proof}

\section{Proof of the CSP Dichotomy Conjecture}\label{SECTIONCSPDICHOTOMYPROOF}

In this section we prove 
Theorems \ref{THMCSPDReductionsAreSafe}
and \ref{THMCodimensionOneTheorem}
that show the correctness of Zhuk's algorithm 
for the CSP.
We start with Subsection \ref{SUBSECTIONCSPDADDITIONALDEFINITIONS}, where we give additional definitions
such as
irreducible, linked, and crucial instances of the CSP. 
Crucial instances are the instances that have no solutions but any weakening of the instance (like removing a constraint) gives an instance with a solution. We will show 
any constraint in a crucial and consistent enough instance has the parallelogram property.
This allows us to define a congruence 
for every constraint and its variable and talk about connectedness of the variables by bridges.
Also, we explain how we weaken the instance: usually we just replace a constraint by a weaker constraint but sometimes we also need to 
disconnect two constraints by adding an additional variable, which leads us to the notion of Expanded Coverings.

In the next subsection we give all the auxiliary statements necessary for the main proof.
Mainly we explain how our 
new theory works for the CSP and it works 
especially well if the solution set of the instance is subdirect.

The core of the proof of both main theorems is Theorem 
\ref{THMMainInductiveCSPClaim}, which states 
that all constraints in a crucial instance
have the parallelogram property, 
and there exists a crucial expanded covering with a connected subinstance.
Additionally it states that 
a restriction of the domains to strong subalgebras 
cannot destroy all the solutions.
As in the original proof, 
Theorem 
\ref{THMMainInductiveCSPClaim} is proved by induction on the size of the domain
but this time we connect the variables using 
dividing congruences coming from the reductions, 
which significantly simplifies the whole argument.

\subsection{Additional definitions}\label{SUBSECTIONCSPDADDITIONALDEFINITIONS}

\textbf{CSP Instances.}
An instance $\mathcal I$ of $\CSP(\Gamma)$ is a list (or conjunction) 
of constraints of the form $R(x_{1},\dots,x_{m})$,
where $R\in \Gamma$. 
We write $C\in \mathcal I$ meaning that $C$ is a constraint of $\mathcal I$.
For an instance $\mathcal I$ and a constraint 
$C$ by $\Var(\mathcal I)$ and $\Var(C)$ we denote the set of variables 
appearing in $\mathcal I$ and $C$, respectively.
Every variable $x$ appearing in an instance has its domain, which we denote by $D_{x}$.
Every domain can be viewed as an algebra 
$\mathbf D_{x}=(D_{x};w^{\mathbf D_{x}})\in \mathcal V_{m}$.
A subset of constraints of 
an instance $\mathcal I$ is called
\emph{a subinstance} of $\mathcal I$. 
Then for every constraint 
$R(x_{1},\dots,x_{h})$
the relation $R$ is a subuniverse of 
$\mathbf D_{x_1}\times\dots\times \mathbf D_{x_{h}}$.
We say that \emph{a solution set of an instance $\mathcal I$ is subdirect} 
if for every $x$ and every $a\in D_{x}$ the instance 
has a solution with $x=a$.

\textbf{Reductions.} 
\emph{A reduction} $D^{(\top)}$ for a CSP instance 
$\mathcal I$ 
is mapping that assign a %nonempty 
subuniverse 
$D_{x}^{(\top)}\le \mathbf D_{x}$ to every variable $x$ of $\mathcal I$.
$D$ can be viewed as a trivial reduction.
For two reductions  $D^{(\bot)}$ and $D^{(\top)}$
we write $D^{(\bot)}\lll D^{(\top)}$ and $D^{(\bot)}\le_{T} D^{(\top)}$ 
whenever
$D_{i}^{(\bot)}\lll D_{i}^{(\top)}$ for every $i\in I$ and 
$D_{i}^{(\bot)}\le_{T} D_{i}^{(\top)}$ for every $i\in I$, respectively.
For an instance $\mathcal I$ and a reduction $D^{(\top)}$ 
by $\mathcal I^{(\top)}$ we denote the instance whose 
variables $x$ are restricted to $D_{x}^{(\top)}$. 
A reduction $D^{(\top)}$ is called \emph{nonempty} if 
$D_{x}^{(\top)}\neq\varnothing$ for every $x$.

\textbf{Induced congruences.}
For a relation $R$ of arity $n$ and $i\in[n]$ by $\ConOne(R,i)$
we denote the binary relation $\sigma(y,y')$ defined by
$$\exists x_{1}\dots\exists x_{i-1}\exists x_{i+1}\dots\exists x_{n}\;R(x_{1},\ldots,x_{i-1},y,x_{i+1},\ldots,x_{n})\wedge
R(x_{1},\ldots,x_{i-1},y',x_{i+1},\ldots,x_{n}).$$
For a constraint $C = R(x_{1},\ldots,x_{n})$
by $\ConOne(C,x_{i})$ we denote $\ConOne(R,i)$.
For an instance $\mathcal I$ by
$\Congruences(\mathcal I,x)$ we denote
the set $\{\ConOne(C,x)\mid C\in \mathcal I\}$.
By $\Congruences(\mathcal I)$ we denote 
$\bigcup\limits_{x\in\Var(\mathcal I)}\Congruences(\mathcal I,x)$.
%If $\ConOne(C,x)$ is an irreducible congruence for 
%every $x$ then we say that $C$ \emph{induces only irreducible congruences} 
Notice that the $i$-th variable of a relation $R$ is rectangular if and only if $R$ is stable under $\ConOne(R,i)$.
%$We say that a relation is \emph{rectangular} if all of its %variables are rectangular.
%The following facts can be easily seen:
Moreover, if the $i$-th variable of a subdirect relation $R$ is rectangular then $\ConOne(R,i)$ is a congruence;

% \textbf{Parallelogram property and rectangularity.}
% We say that an $n$-relation $R$ \emph{has the parallelogram property}
% if any permutation of its variables gives a relation $R'$ satisfying
% \begin{align*}
% \forall \ell\in\{1,2,\dots,n-1\} \;\;\;\;\;\;\;\;\;\;\;\;\;\;\;\;&(a_1,\dots,a_{\ell},b_{\ell+1},\dots,b_{n})\in R'\\
% \forall a_1,\dots,a_n,b_1,\dots,b_n
% \colon \;\;\;\;&(b_1,\dots,b_{\ell},a_{\ell+1},\dots,a_{n})\in R'\;\;\;\Rightarrow\;\;\; (a_1,\dots,a_n)\in R'\\
% &(b_1,\dots,b_{\ell},b_{\ell+1},\dots,b_{n})\in R'
% \end{align*}
% Note that the parallelogram property plays an important role in universal algebra (see \cite{agnes} for more details).
% We say that \emph{the $i$-th variable of a relation $R$ is rectangular},
% if it is stable under $\ConOne(R,i)$.
% We say that a relation is \emph{rectangular} if all of its variables are rectangular.
% The following facts can be easily seen:
% if the $i$-th variable of a subdirect relation $R$ is rectangular then $\ConOne(R,i)$ is a congruence;
% if a relation has the parallelogram property then it is rectangular.

\textbf{Linear-type and PC-type.}
We say that a relation $R$ is of \emph{the PC/Linear type} if 
$R$ is rectangular 
and each congruence $\ConOne(R,i)$ is 
a PC/Linear congruence.
We say that an instance has \emph{the PC/Linear type} if 
all of its constraints are of the PC/Linear type.

\textbf{A path and a tree-covering.} We say that $z_{1}-C_{1}-z_{2}-\dots - C_{l-1}-z_{l}$ is
\emph{a path} in a CSP instance $\mathcal I$ if $z_{i},z_{i+1}\in\Var(C_{i})$.
We say that \emph{a path $z_{1}-C_{1}-z_{2}-\dots- C_{l-1}-z_{l}$  connects $b$ and $c$}
if there exists $a_{i}\in D_{z_{i}}$ for every $i$
such that
$a_{1} = b$, $a_{l} = c$, and
the projection of $C_{i}$ onto $z_{i}, z_{i+1}$
contains the tuple $(a_{i},a_{i+1})$.
We say that an instance is \emph{a tree-instance} if there is no a path
$z_{1}-C_{1}-z_{2}-\dots -z_{l-1}-C_{l-1}-z_{l}$
such that $l\ge 3$, $z_{1} = z_{l}$, and all the constraints $C_{1},\ldots,C_{l-1}$ are different.

\textbf{Consistency conditions.}
A CSP instance $\mathcal I$ is called \emph{1-consistent} if $\proj_{z}(C)=D_{z}$
for any 
constraint $C$ of $\mathcal I$ and  
any variable $z$ of $C$.
A reduction $D^{(\top)}$ is called \emph{1-consistent}
for an instance $\mathcal I$ if 
the instance $\mathcal I^{(\top)}$ is 1-consistent.
An instance $\mathcal I$ is called \emph{cycle-consistent} if
it is 1-consistent and 
for every variable $z$ and $a\in D_{z}$
any path starting and ending with $z$ in $\mathcal I$ %$x_{i}-C_{1}-z_2-\dots-z_{l-1}-C_{l-1}-x_{i}$
connects $a$ and $a$.
Other types of local consistency and 
its connection with the complexity of the CSP
are considered in \cite{kozik2016weak,brady2022notes}.

\textbf{Linkedness and irreducibility.}
An instance $\mathcal I$ is called \emph{linked}
if for every variable $z\in\Var(\mathcal I)$ and every $a,b\in D_{z}$
there exists a path starting and ending with $z$ in $\mathcal I$ that connects $a$ and $b$. 
%Suppose $\mathbf{X}\subseteq\Var(\mathcal I)$ for an instance $\mathcal I$,
%then a projection of $\mathcal I$ onto $\mathbf{X}$
%is an instance whose variables are elements of $\mathbf{X}$ and constraints are projections of the constraints of $\mathcal I$ onto the variables from $\mathbf{X}$
%ignoring any constraint not having variables from $\mathbf{X}$.
We say that an instance $\mathcal I$ is \emph{fragmented}
if $\Var(\mathcal I)$ can be divided into 2 disjoint nonempty sets $\mathbf{X_1}$ and
$\mathbf{X_2}$ such that 
$\Var(C)\subseteq \mathbf X_1$ or
$\Var(C)\subseteq \mathbf X_1$ for any $C\in\mathcal I$.
An instance $\mathcal I$ is called \emph{irreducible} if
there is no instance $\mathcal I'$ satisfying the following conditions:
\begin{enumerate}
    \item $\Var(\mathcal I')\subseteq \Var(\mathcal I)$,
    \item each constraint of $\mathcal I'$ is a projection of 
    a constraint of $\mathcal I$ on some variables,
    \item $\mathcal I'$ is not fragmented,
    \item $\mathcal I'$ is not linked,
    \item the solution set of $\mathcal I'$ is not subdirect.
\end{enumerate}

\textbf{Weakening of an instance.} 
We say that a constraint $R_{1}(y_{1},\ldots,y_{t})$ is \emph{weaker or equivalent to}
a constraint $R_{2}(z_{1},\ldots,z_s)$
if $\{y_{1},\ldots,y_{t}\}\subseteq \{z_{1},\ldots,z_s\}$
and 
$R_{2}(z_{1},\ldots,z_s)$ implies $R_{1}(y_{1},\ldots,y_{t})$.
We say that 
$C_{1}$ is \emph{weaker than} 
$C_{2}$ 
if $C_{1}$ is weaker or equivalent to $C_{2}$
but $C_{1}$ does not imply $C_{2}$.
\emph{The weakening} of a constraint $C$ in an instance $\mathcal I$ is the 
replacement of $C$ by all weaker constraints.
An instance $\mathcal I'$ is called \emph{a weakening of 
an instance $\mathcal I$} 
if $\Var(I')\subseteq \Var(I)$ are every constraint 
of $\mathcal I'$ is weaker or equivalent to
a constraint of $\mathcal I$.

\textbf{Crucial instance.} 
We say that a variable $y_{i}$ of the constraint $R(y_{1},\ldots,y_{t})$ is \emph{dummy} if 
$\R$ does not depend on its $i$-th variable.
Let $D_{i}'\subseteq D_{i}$ for every $i$.
Suppose $D^{(\top)}$ is a reduction for an instance $\mathcal I$.
A constraint $C$ of $\mathcal I$ is called \emph{crucial in $D^{(\top)}$}
if it has no dummy variables, $\mathcal I^{(\top)}$ has no solutions but
the weakening of $C\in\Theta$ gives an instance $\mathcal I'$ 
with a solution in $D^{(\top)}$.
An instance $\mathcal I$ is called \emph{crucial in $D^{(\top)}$}
if it has at least one constraint and
all its constraints are crucial in $D^{(\top)}$.

\begin{remark}\label{GetCrucialInstance}
Suppose $\mathcal I^{(\top)}$ has no solutions.
Then we can iteratively replace every constraint
by all weaker constraints having no dummy variables until it is crucial
in $D^{(\top)}$. Notice that
$R\le \mathbf D_{x_{1}}\times \dots\times \mathbf D_{x_n}$ for any weaker constraint 
$R(x_1,\dots,x_n)$ we introduce.
\end{remark}

\textbf{Relations defined by instances.}
For an instance $\mathcal I$ and $x_{1},\dots,x_{n}\in\Var(\mathcal I)$
by 
$\mathcal I(x_{1},\dots,x_{n})$ we denote the
set of all tuples $(a_1,\dots,a_n)$ such that 
$\mathcal I$ has a solution with 
$x_{i} = a_i$ for every $i$.
Thus, $\mathcal I(x_{1},\dots,x_{n})$
defines an $n$-ary relation.
Note that the obtained relation 
is a subuniverse of 
$\mathbf D_{x_1}\times \dots \times \mathbf D_{x_n}$ 
as it is defined by a primitive positive formula over the relations in $\mathcal I$ (see \cite{geiger1968closed,bond1,bond2}).

\textbf{Expanded coverings.}
For an instance $\mathcal I$ by $\Expanded(\mathcal I)$ (\emph{Expanded Coverings}) we denote the set of all instances $\mathcal I'$
such that there exists a mapping $S:\Var(\Omega')\to\Var(\Omega)$
satisfying the following conditions:
\begin{enumerate}
\item
if $x\in\Var(\mathcal I)\cap\Var(\mathcal I')$ then 
$S(x) = x$;
\item $D_{x} = D_{S(x)}$ for every $x\in \Var(\mathcal I')$;
\item for every constraint $R(x_{1},\ldots,x_{n})$ of $\mathcal I'$
either the variables $S(x_{1}),\ldots,S(x_{n})$ are different and the constraint $R(S(x_{1}),\ldots,S(x_{n}))$ is weaker or equivalent to some constraint of $\Omega$,
or $S(x_{1}) = \dots = S(x_{n})$ and $\{(a,a,\ldots,a)\mid a\in D_{x_{1}}\}\subseteq R$;
\end{enumerate}

An expanded covering $\mathcal I'$ of $\mathcal I$ is called \emph{a covering} if
for every constraint $R(x_{1},\ldots,x_{n})$ of $\mathcal I'$
the constraint $R(S(x_{1}),\ldots,S(x_{n}))$ is in $\mathcal I$.
An instance is called 
\emph{a tree-covering} if it is a covering and also a tree-instance.
For a variable $x$ we say that $S(x)$ is \emph{the parent of x}
and $x$ is \emph{a child} of $S(x)$.
The same child/parent terminology will also be applied to constraints.

%Suppose $\mathcal I'$ is an expanded covering of $\mathcal I$.
The following easy facts 
can be derived from the definition.
\begin{enumerate}
    \item[(p1)] If we replace every variable $x$ by $S(x)$ in an expanded covering of $\mathcal I$
    (and remove all the constraints $R(x,x,\dots,x)$)
    we get a weakening of $\mathcal I$; 
    \item[(p2)] A weakening is an expanded covering such that $S(x)= x$ for every $x$;
    \item[(p3)] any solution of an instance can be naturally expanded to a solution of its expanded covering;
    \item[(p4)] if an instance is 1-consistent and 
    its expanded covering is a tree-covering, then 
    the solution set of the covering is subdirect;
    \item[(p5)] the union (union of all constraints) of two expanded coverings is 
    also a expanded covering;
    \item[(p6)] an expanded covering of an expanded covering is an expanded covering.
    \item[(p7)] an expanded covering of a cycle-consistent irreducible instance is 
    cycle-consistent and irreducible (see Lemma \ref{LEMExpandedConsistencyLemma}).
    \item[(p8)] any reduction of an instance can be 
    naturally extended to its expanded covering; moreover, if the reduction was 
    1-consistent for the instance, it is 1-consistent for the covering.
\end{enumerate}

\textbf{Connected instances.} A bridge $\delta\subseteq D^{4}$ is called \emph{reflexive} if
$(a,a,a,a)\in \delta$ for every $a\in D$.
We say that two congruences $\sigma_{1}$ and $\sigma_{2}$ on $\mathbf D_{x}$ are \emph{adjacent}
if there exists a reflexive bridge from $\sigma_{1}$ to $\sigma_{2}$.
Since we can always put
$\delta(x_{1},x_{2},x_{3},x_{4}) = \sigma(x_{1},x_{3})\wedge \sigma (x_{2},x_{4})$,
any proper congruence $\sigma$ is adjacent with itself.
We say that two rectangular constraints $C_{1}$ and $C_{2}$ are \emph{adjacent} in a common variable $x$ if
$\ConOne(C_{1},x)$ and $\ConOne(C_{2},x)$ are adjacent.
An instance $\mathcal I$ is called \emph{connected} if
all its constraints are rectangular, all the congruences of $\Congruences(\mathcal I)$ are irreducible, and
the graph, whose vertices are constraints 
and edges are adjacent constraints,
is connected.

\subsection{Auxiliary statements}

%\begin{lem}\label{LEMProperSetOfSubsets}
%Suppose 
%$\mathcal G$ is a set of subsets of $A$, $|A|<\infty$,   
%$\bigcap_{B\in\mathcal G} B = \varnothing$.
%Then there exists $\mathcal F\subseteq \mathcal G$ satisfying the following properties:
%\begin{enumerate}
%\item $\bigcap_{B\in\mathcal F} B = \varnothing$.
%\item For every $B\in \mathcal F$ there exists 
%$b\in A$ such that 
%\begin{enumerate}
%\item 
%$b$ belongs to any $B\in \mathcal F\setminus\{B\}$
%\item there is no $B'\in \mathcal G$ satisfying 
%$b\notin B'\supsetneq B$.
%\end{enumerate}
%\end{enumerate}
%\end{lem}

\begin{lem}[\cite{zhuk2020proof}, Lemma 6.1]\label{LEMExpandedConsistencyLemma}
Suppose $\mathcal I$ is a cycle-consistent irreducible CSP instance
and $\mathcal I'\in\Expanded(\mathcal I)$.
Then $\mathcal I'$ is cycle-consistent and irreducible.
\end{lem}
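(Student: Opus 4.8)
The plan is to verify the two assertions — cycle-consistency and irreducibility — separately, in each case transferring the corresponding property from $\mathcal I$ to $\mathcal I'$ by pushing the relevant combinatorial structure in $\mathcal I'$ down to $\mathcal I$ along the map $S\colon\Var(\mathcal I')\to\Var(\mathcal I)$; this is essentially the argument of \cite{zhuk2020proof}, Lemma 6.1, adapted to the present definitions. The two features of an expanded covering we lean on throughout are $D_{x}=D_{S(x)}$ for every $x\in\Var(\mathcal I')$, and the dichotomy in condition (3): every constraint $R(x_1,\dots,x_n)$ of $\mathcal I'$ is either a \emph{diagonal} constraint with $S(x_1)=\dots=S(x_n)$ and $\{(a,\dots,a)\mid a\in D_{x_1}\}\subseteq R$, or a constraint with $S(x_1),\dots,S(x_n)$ pairwise distinct and $R(S(x_1),\dots,S(x_n))$ weaker than or equivalent to a constraint of $\mathcal I$. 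We also use that every relation appearing in $\mathcal I'$ is a subuniverse of the product of the corresponding domains.

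\emph{Cycle-consistency.} For $1$-consistency, take a constraint $R(x_1,\dots,x_n)$ of $\mathcal I'$ and a coordinate $i$. If it is diagonal, the inclusion $\{(a,\dots,a)\}\subseteq R\le D_{x_1}\times\dots\times D_{x_n}$ forces $\proj_{x_i}(R)=D_{x_i}$. Otherwise $R(S(x_1),\dots,S(x_n))$ is weaker than or equivalent to a constraint $C$ of $\mathcal I$ on the distinct variables $S(x_1),\dots,S(x_n)$; since $\mathcal I$ is $1$-consistent, $\proj_{S(x_i)}(C)=D_{S(x_i)}=D_{x_i}$, and as $C$ entails $R(S(x_1),\dots,S(x_n))$ this projection is contained in $\proj_{x_i}(R)$, so again $\proj_{x_i}(R)=D_{x_i}$. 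For the path condition, take a path $z_1-C_1-\dots-C_{l-1}-z_l$ in $\mathcal I'$ with $z_1=z_l$ and a witnessing sequence $a_1,\dots,a_l$, $a_1=a$. Applying $S$ collapses each diagonal step to a loop at $S(z_i)=S(z_{i+1})$, while each non-diagonal step $C_i=R(\dots)$ with parent constraint $C$ of $\mathcal I$ is replaced by the step $S(z_i)-C-S(z_{i+1})$; one then has to re-choose the witnessing elements along the collapsed walk so that it becomes a genuine path of $\mathcal I$ from $S(z_1)=S(z_l)$ to itself connecting $a$ to $a_l$, after which cycle-consistency of $\mathcal I$ yields $a_l=a$. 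The re-choice of witnesses is where the parallelogram property of the subuniverses involved (available since every domain lies in $\mathcal V_m$ and all relations are subuniverses) is used to absorb the weakening; for a plain covering it is unnecessary since the pushed-down constraints already lie in $\mathcal I$.

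\emph{Irreducibility.} Suppose for contradiction that $\mathcal I'$ is not irreducible, witnessed by an instance $\mathcal J'$ with $\Var(\mathcal J')\subseteq\Var(\mathcal I')$, whose constraints are projections of constraints of $\mathcal I'$, which is not fragmented, not linked, and whose solution set is not subdirect. I would form $\mathcal J$ by replacing every variable $x$ of $\mathcal J'$ by $S(x)$, discarding the constraints that came from diagonal constraints of $\mathcal I'$ (these become single-variable reflexive constraints and can be dropped), and replacing each remaining constraint by the corresponding projection of its parent constraint in $\mathcal I$. Using $D_x=D_{S(x)}$ one checks that $\mathcal J$ still has non-subdirect solution set, is still not fragmented, and is still not linked — the last point being the delicate one, since $S$ may identify variables and passing to weaker constraints can only add links, so one must argue that the identifications and weakenings permitted by an expanded covering cannot merge two distinct non-linked ``blocks'' of $\mathcal J'$ into a linked one. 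Granting this, $\mathcal J$ witnesses that $\mathcal I$ is not irreducible, a contradiction.

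\emph{Main obstacle.} The technical heart is exactly the interplay between three moves — identifying variables via $S$, replacing constraints by weaker (but still subuniverse) constraints, and collapsing diagonal constraints — each of which individually could create new path-connections or new links. Controlling this is where one must use that all the relations in play are subuniverses of products of Taylor algebras in $\mathcal V_m$: rectangularity/parallelogram properties and the fact that the induced congruences $\ConOne(C,x)$ are well-behaved are what guarantee that cycle-consistency is not destroyed by weakening and that linkedness is not created by the $S$-identifications. I expect this to be the only nontrivial part; the $1$-consistency check and the bookkeeping with $S$ are routine.
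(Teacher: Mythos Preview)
Your high-level plan (push structures in $\mathcal I'$ down to $\mathcal I$ along $S$) is the right one, but both halves contain genuine errors, and the ``main obstacle'' you identify is illusory: the lemma is purely combinatorial and uses no parallelogram or rectangularity properties.

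\textbf{Cycle-consistency.} Your argument runs the wrong way. Cycle-consistency means that for every cycle and every $a$ there \emph{exists} a witnessing sequence $a=a_1,\dots,a_l=a$; it does \emph{not} say that every witnessing sequence with $a_1=a$ has $a_l=a$. So starting with an arbitrary witnessing sequence in $\mathcal I'$ and trying to conclude $a_l=a$ is hopeless. The correct direction is: push the cycle in $\mathcal I'$ down to a cycle in $\mathcal I$ (diagonal steps collapse; non-diagonal steps are replaced by their parents). Cycle-consistency of $\mathcal I$ produces a witnessing sequence $a,\dots,a$ there, and this sequence lifts verbatim to the original cycle in $\mathcal I'$ because each non-diagonal $C_j$ has a binary projection \emph{containing} that of its parent (weaker constraint $\Rightarrow$ larger relation), while on diagonal steps you stay put. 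No parallelogram property is needed or available --- arbitrary constraints of $\mathcal I'$ need not have it; that is only proved much later for crucial instances.

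\textbf{Irreducibility.} You correctly isolate ``not linked'' as the delicate point, but your fix via rectangularity/$\ConOne$ is unjustified for the same reason. The actual mechanism is the cycle-consistency of $\mathcal I$ you just established. Push $\mathcal J'$ down to $\mathcal J$; you verify correctly that $\mathcal J$ is not fragmented and does not have a subdirect solution set, so by irreducibility of $\mathcal I$ the instance $\mathcal J$ is linked. To derive a contradiction you must show $\mathcal J'$ is linked. The missing step is: whenever $x',x''\in\Var(\mathcal J')$ satisfy $S(x')=S(x'')$, one has $(x',c)\sim'(x'',c)$ for all $c$. Since $\mathcal J'$ is not fragmented there is a path $x'\to x''$ in $\mathcal J'$; its constraints are projections of $\mathcal I'$-constraints and hence have the same binary projections, so this is also a path in $\mathcal I'$; pushing down via $S$ gives a \emph{cycle} at $S(x')$ in $\mathcal I$; cycle-consistency of $\mathcal I$ gives a witnessing sequence $c,\dots,c$, which lifts back. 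With this in hand, any link-path in $\mathcal J$ lifts to a link-path in $\mathcal J'$ by inserting these ``$S$-siblings are linked'' bridges between consecutive constraints, yielding $(x,a)\sim'(x,b)$ and the desired contradiction.
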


\begin{lem}\label{LEMMinimalPCLinearReductionIsConsistent}
Suppose 
\begin{enumerate}
\item $D^{(1)}$ is a 1-consistent reduction for an instance 
$\mathcal I$, 
\item $D_{x}^{(1)}$ is S-free for every $x\in\Var(\mathcal I)$,
\item $T\in\{\TPC, \TL,\TD\}$,
\item $D^{(1)}\lll D$,
\item $D_{x}^{(2)}\le_{\mathcal M T} D_{x}^{(1)}$ is a minimal $\mathcal{M}T$ subuniverse 
for every $x\in\Var(\mathcal I)$.
\end{enumerate}
Then either there exists a constraint $C$ such that 
$C^{(2)}$ is empty, or 
$\mathcal I^{(2)}$ is 1-consistent. 
\end{lem}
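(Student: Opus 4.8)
The plan is to assume $\mathcal I^{(2)}$ is not 1-consistent and extract an empty constraint. Not being 1-consistent means there is a constraint $C = R(x_1,\dots,x_n)$ of $\mathcal I$ and a variable, say $x_1$, such that $\proj_{x_1}(C^{(2)}) \subsetneq D_{x_1}^{(2)}$. Since $R \le \mathbf D_{x_1}^{(1)} \times \dots \times \mathbf D_{x_n}^{(1)}$ and $D^{(1)}$ is 1-consistent, $R^{(1)} := R \cap (D_{x_1}^{(1)} \times \dots \times D_{x_n}^{(1)})$ is subdirect in $\mathbf D_{x_1}^{(1)} \times \dots \times \mathbf D_{x_n}^{(1)}$. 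The idea is to apply the propagation machinery — specifically Corollary~\ref{CORPropagateToRelations}(m1) — to conclude that $\proj_{x_1}(R^{(1)} \cap (D_{x_1}^{(2)} \times \dots \times D_{x_n}^{(2)}))$ is either empty or an $\mathcal{M}T$ subuniverse of $D_{x_1}^{(1)}$.

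**Using minimality.**

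Here is where hypothesis (5) — that each $D_{x_1}^{(2)}$ is a \emph{minimal} $\mathcal{M}T$ subuniverse of $D_{x_1}^{(1)}$ — does the work. By Corollary~\ref{CORPropagateToRelations}(m1), applied with $R^{(1)}$ in place of $R$, the $C_i$ taken to be $D_{x_i}^{(2)}$ and the $B_i$ taken to be $D_{x_i}^{(1)}$ (note $C_i \le_{\mathcal{M}T}^{A_i} B_i$ holds by hypothesis, and $\proj_{1}(R^{(1)}) = D_{x_1}^{(1)}$ is S-free by hypothesis (2)), we get
\[
\proj_{x_1}\bigl(R^{(1)} \cap (D_{x_1}^{(2)} \times \dots \times D_{x_n}^{(2)})\bigr) \;\dot\le_{\mathcal{M}T}^{D_{x_1}^{(1)}}\; \proj_{x_1}(R^{(1)}) = D_{x_1}^{(1)}.
\]
Since $\proj_{x_1}(C^{(2)})$ is exactly this left-hand side and it is contained in $D_{x_1}^{(2)}$, it is itself an $\mathcal{M}T$ subuniverse of $D_{x_1}^{(1)}$ contained in the \emph{minimal} one $D_{x_1}^{(2)}$ — hence it either equals $D_{x_1}^{(2)}$ or is empty. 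The first alternative contradicts $\proj_{x_1}(C^{(2)}) \subsetneq D_{x_1}^{(2)}$, so $\proj_{x_1}(C^{(2)}) = \varnothing$, i.e. $C^{(2)}$ is empty, which is the desired conclusion.

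**The main obstacle.**

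The technical subtlety I expect to spend the most care on is checking that the hypotheses of Corollary~\ref{CORPropagateToRelations}(m1) are genuinely met — in particular the S-freeness condition. The corollary requires $\proj_{1}(R^{(1)} \cap (B_1 \times \dots \times B_n))$ to be S-free; with $B_i = D_{x_i}^{(1)}$ and using 1-consistency of $D^{(1)}$, this projection is all of $D_{x_1}^{(1)}$, which is S-free by hypothesis (2). One must also make sure the duplication-of-coordinates remark after Corollary~\ref{CORMainStableIntersection} (or simply the fact that $\mathcal{M}T$ is closed under intersection, Lemma~\ref{LEMMultiTypeStillStable}) handles the case where a variable appears multiple times in a single constraint. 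Finally, one should note that hypothesis (4), $D^{(1)} \lll D$, is needed so that $D_{x_i}^{(1)} \lll D_{x_i}$, which is part of the hypothesis "$B_i \lll A_i$" in Corollary~\ref{CORPropagateToRelations}; and hypothesis (3) fixes which $\mathcal{M}T \in \{\TMPC, \TML, \TMD\}$ we are working with throughout. Everything else is bookkeeping: the argument is a single application of the propagation corollary combined with minimality.
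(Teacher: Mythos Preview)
Your proof is correct and follows essentially the same route as the paper's: both deduce that $\proj_{x_i}(R^{(2)})$ is an $\mathcal{M}T$-subuniverse of $D_{x_i}^{(1)}$ via the propagation machinery, then invoke minimality of $D_{x_i}^{(2)}$. The only cosmetic difference is that the paper cites Lemma~\ref{LEMPropagation}(fm) directly (viewing $\proj_i$ as a surjective homomorphism from $R$), whereas you cite the derived Corollary~\ref{CORPropagateToRelations}(m1); these are two packagings of the same fact.
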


\begin{proof}
If $C^{(2)}$ is empty for some constraint $C$ then we are done.
Otherwise, consider some constraint $R(x_1,\dots,x_n)$.
%Let $\mathcal T = MPC$ if $T=PC$ and
%$\mathcal T = ML$ if $T=L$.
%It follows from the definition that 
%$D_{x_{i}}^{(2)}\le_{\mathcal T}^{D_{x_{i}}} D_{x_{i}}^{(1)}$
%for every $i$. 
By Lemma \ref{LEMPropagation}(fm) 
$\proj_{i}(R^{(2)})\le_{\TMD}^{D_{x_{i}}} D_{x_{i}}^{(1)}$
for every $i\in [n]$.
Since $D_{x_{i}}^{(2)}$ is a minimal 
subuniverse $B$ such that 
$B\le_{\TMD}^{D_{x_{i}}} D_{x_{i}}^{(1)}$
we have $\proj_{i}(R^{(2)}) = D_{x_{i}}^{(2)}$.
Hence $\mathcal I^{(2)}$ is 1-consistent.
\end{proof}

\begin{lem}\label{LEMCrucialMeansIrreducible}%LEMCrucialMeansRectangular}
Suppose $R(x_1,\dots,x_n)$ is a rectangular constraint of a 1-consistent instance $\mathcal I$, 
$R(x_1,\dots,x_n)$ is crucial
in $D^{(\top)}$.
Then $\ConOne(R,i)$ is an irreducible congruence for every $i\in[n]$.
\end{lem}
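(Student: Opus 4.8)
The plan is to argue by contradiction, exploiting that a crucial constraint cannot be weakened without creating a solution. Fix $i$ and set $\sigma := \ConOne(R,i)$. First I would note that $1$-consistency gives $\proj_{x_i}(R)=D_{x_i}$, so $R$ is subdirect; since $R$ is rectangular, $R$ is stable under $\sigma$ in its $i$-th coordinate and $\sigma$ is a congruence on $\mathbf D_{x_i}$ by the remarks preceding the lemma. So it remains to show $\sigma$ is irreducible. Assume it is not: $\sigma = S_1\cap\dots\cap S_k$ with each $S_j\le \mathbf D_{x_i}\times\mathbf D_{x_i}$ stable under $\sigma$ and $S_j\neq\sigma$, hence $S_j\supsetneq\sigma$.

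For each $j$ I would introduce the relaxed relation $R_j$ on $x_1,\dots,x_n$ defined by the primitive positive formula $R_j(a_1,\dots,a_n) := \exists b\; S_j(a_i,b)\wedge R(a_1,\dots,a_{i-1},b,a_{i+1},\dots,a_n)$; this is a subuniverse of $\mathbf D_{x_1}\times\dots\times\mathbf D_{x_n}$, and reflexivity of $S_j$ (it contains $\sigma$) gives $R\subseteq R_j$. Next I would check $R_j\neq R$: otherwise $R$ would be stable under $S_j$ in coordinate $i$, and subdirectness of $R$ then forces $S_j\subseteq\sigma$, a contradiction. Hence each $R_j(x_1,\dots,x_n)$ is strictly weaker than the constraint $R(x_1,\dots,x_n)$ and therefore appears in the weakening $\mathcal I'$ of $R(x_1,\dots,x_n)$ in $\mathcal I$, which moreover still contains all the other constraints of $\mathcal I$.

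The key computation is $R_1\cap\dots\cap R_k = R$. Given $(a_1,\dots,a_n)$ in every $R_j$, pick witnesses $b_j$ with $(a_i,b_j)\in S_j$ and $(a_1,\dots,a_{i-1},b_j,a_{i+1},\dots,a_n)\in R$; these tuples all lie in $R$ and agree off coordinate $i$, so $(b_j,b_{j'})\in\sigma$ for all $j,j'$. Setting $b:=b_1$, stability of the second variable of $S_j$ under $\sigma$ upgrades $(a_i,b_j)\in S_j$ and $(b_j,b)\in\sigma$ to $(a_i,b)\in S_j$ for every $j$, whence $(a_i,b)\in\bigcap_j S_j=\sigma$; stability of $R$ under $\sigma$ in coordinate $i$ then moves $(a_1,\dots,a_{i-1},b,a_{i+1},\dots,a_n)\in R$ to $(a_1,\dots,a_n)\in R$. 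Finally, cruciality of $R(x_1,\dots,x_n)$ in $D^{(\top)}$ yields a solution $\varphi$ of $\mathcal I'$ inside $D^{(\top)}$; it satisfies every $R_j(x_1,\dots,x_n)$ and every other constraint of $\mathcal I$, hence by the identity $\bigcap_j R_j = R$ it satisfies $R(x_1,\dots,x_n)$ as well, so it solves $\mathcal I^{(\top)}$ — contradicting that $\mathcal I^{(\top)}$ has no solutions.

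I expect the only real care to be needed in the bookkeeping with the subalgebras $S_j$, which need not be symmetric or transitive, so the proof must invoke their stability under $\sigma$ explicitly at the two points above. A cleaner alternative I would consider is to first quotient coordinate $i$ by $\sigma$: over $\mathbf D_{x_i}/\sigma$ the induced congruence $\ConOne$ in coordinate $i$ of the quotiented relation becomes trivial, the witnesses $b_j$ then coincide automatically, and the $S_j/\sigma$ are genuine subalgebras of $(\mathbf D_{x_i}/\sigma)^2$ with trivial intersection, which makes both $R_j\neq R$ and $\bigcap_j R_j = R$ immediate.
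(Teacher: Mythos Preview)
Your proposal is correct and follows essentially the same approach as the paper: assume a decomposition $\sigma=\bigcap_j S_j$ with each $S_j$ stable under $\sigma$, relax $R$ along each $S_j$ to get strictly weaker constraints $R_j$ whose intersection recovers $R$, and then contradict cruciality. The only cosmetic difference is that the paper reduces to two factors $\omega_1,\omega_2$ (always possible by taking a minimal decomposition), and states $R_1\cap R_2=R$ without spelling out the stability argument that you carefully supply.
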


\begin{proof}
To simplify notations assume that $i=1$.
Assume the converse, then $\ConOne(R,1) = \omega_1\cap \omega_2$ for 
some $\ConOne(R,1)\lneq\omega_1,\omega_2\le \mathbf D_{x_1}\times\mathbf D_{x_1}$.
Define the relation $R_{j}$ for $j\in\{1,2\}$ by 
$$R_{j}(x_1,x_2,\dots,x_n) = \exists y
(R(y,x_2,\dots,x_n)\wedge \omega_1(y,x_1)).$$ 
Since $\omega_{i}\supsetneq \ConOne(R,1)$ we have
$R_{j}\supsetneq R$ for each $j\in\{1,2\}$.
Since $\omega_1\cap\omega_2 = \ConOne(R,1)$ we have
$R= R_{1}\cap R_{2}$. Thus $R(x_1,\dots,x_n)$ could be replaced by
two weaker constraints $R_1(x_1,\dots,x_n)$ and $R_2(x_1,\dots,x_n)$
and still be without a solution in $D^{(1)}$. This contradicts the cruciality.
\end{proof}

\begin{lem}\label{LEMBridgeFromRelation}
Suppose $R\le_{sd} \mathbf A_{1}\times\dots\times \mathbf A_{n}$,
the first and the last variables of $R$ are rectangular,
and there exist
$(b_{1},a_{2},\ldots,a_n),(a_{1},\ldots,a_{n-1},b_{n})\in R$
such that $(a_{1},a_{2},\ldots,a_n)\notin R$.
Then there exists a bridge $\delta$ from $\ConOne(R,1)$ to $\ConOne(R,n)$
such that $\widetilde{\delta} = \proj_{1,n}(R)$.
\end{lem}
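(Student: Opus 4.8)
The plan is to write down $\delta$ explicitly as a primitive positive formula over $R$ and then check the four defining conditions of a bridge together with the identity $\widetilde\delta=\proj_{1,n}(R)$. Concretely, I would set
\[
\delta(x_1,x_2,x_3,x_4)\;:=\;\exists z_2\cdots\exists z_{n-1}\;\bigl(R(x_1,z_2,\dots,z_{n-1},x_3)\wedge R(x_2,z_2,\dots,z_{n-1},x_4)\bigr).
\]
Since this is pp-definable from $R$, the relation $\delta$ is a subalgebra of $\mathbf A_1\times\mathbf A_1\times\mathbf A_n\times\mathbf A_n$. Because $R$ is subdirect and its first and last variables are rectangular, $\sigma_1:=\ConOne(R,1)$ and $\sigma_2:=\ConOne(R,n)$ are congruences of $\mathbf A_1$ and $\mathbf A_n$, so ``bridge from $\sigma_1$ to $\sigma_2$'' is meaningful. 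Substituting $x_1=x_2$ and $x_3=x_4$ collapses the two conjuncts into one, giving $\widetilde\delta(x,y)=\delta(x,x,y,y)=\proj_{1,n}(R)$ at once.

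Next I would verify the bridge axioms. Stability of the first two coordinates of $\delta$ under $\sigma_1$, and of the last two under $\sigma_2$, is immediate from rectangularity of the first (resp.\ last) variable of $R$ — equivalently, stability of $R$ under $\ConOne(R,1)$ (resp.\ $\ConOne(R,n)$) — applied to a fixed witnessing tuple $z_2,\dots,z_{n-1}$. For axiom~(4), take $(x_1,x_2,x_3,x_4)\in\delta$ with witnesses $z_2,\dots,z_{n-1}$. If $(x_1,x_2)\in\sigma_1$, stability of $R$ under $\sigma_1$ turns $R(x_1,z_2,\dots,z_{n-1},x_3)$ into $R(x_2,z_2,\dots,z_{n-1},x_3)$, and together with $R(x_2,z_2,\dots,z_{n-1},x_4)$ this witnesses $(x_3,x_4)\in\ConOne(R,n)=\sigma_2$; the reverse implication is the mirror image, using stability of $R$ under $\sigma_2$.

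It remains to check $\proj_{1,2}(\delta)\supsetneq\sigma_1$ and $\proj_{3,4}(\delta)\supsetneq\sigma_2$. The containments $\proj_{1,2}(\delta)\supseteq\sigma_1$ and $\proj_{3,4}(\delta)\supseteq\sigma_2$ follow directly from the definition of the induced congruences: a common tail $z_2,\dots,z_n$ of two rows of $R$ gives $(x_1,x_2,z_n,z_n)\in\delta$, and symmetrically on the other side. Strictness is where the hypothesis enters: the tuple $(b_1,a_1,a_n,b_n)$ lies in $\delta$ with witnesses $a_2,\dots,a_{n-1}$ (using $(b_1,a_2,\dots,a_n)\in R$ and $(a_1,\dots,a_{n-1},b_n)\in R$), yet $(b_1,a_1)\notin\sigma_1$ — otherwise stability of $R$ under $\sigma_1$ would upgrade $(b_1,a_2,\dots,a_n)\in R$ to the forbidden $(a_1,\dots,a_n)\in R$ — and by the same tuple together with stability of $R$ under $\sigma_2$, also $(a_n,b_n)\notin\sigma_2$. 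Hence both projections properly contain the respective congruences, and $\delta$ is the required bridge.

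This is essentially bookkeeping, so I do not expect a real obstacle; the only thing to keep straight is which uses of ``rectangular'' $=$ ``stable under $\ConOne$'' go into which step, and the observation that the single tuple $(b_1,a_1,a_n,b_n)\in\delta$ must certify strictness on both sides via the one forbidden tuple $(a_1,\dots,a_n)\notin R$. The degenerate case $n=2$, where the existential block is empty and $\delta(x_1,x_2,x_3,x_4)=R(x_1,x_3)\wedge R(x_2,x_4)$, is handled verbatim by the same argument.
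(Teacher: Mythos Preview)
Your proposal is correct and follows exactly the same construction as the paper: the same pp-definition of $\delta$, the same witnessing tuple $(b_1,a_1,a_n,b_n)\in\delta$, and the same use of rectangularity to verify the bridge axioms and $\widetilde\delta=\proj_{1,n}(R)$. Your write-up is simply more explicit than the paper's, which compresses the verification of the bridge conditions into a single sentence.
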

\begin{proof}
The required bridge can be defined by
$$\delta(x_{1},x_{2},y_{1},y_{2}) =
\exists z_{2}\dots\exists z_{n-1}\;
R(x_{1},z_{2},\ldots,z_{n-1},y_{1})\wedge
R(x_{2},z_{2},\ldots,z_{n-1},y_{2}).$$
In fact, since the first and the last variables of $R$ are rectangular, 
we have $(x_{1},x_{2})\in\ConOne(R,1)$
if and only if
$(y_{1},y_{2})\in\ConOne(R,n)$.
It remains to notice that 
$(b_{1},a_{1},a_{n},b_{n})\in\delta$,
$(b_{1},a_{1})\notin\ConOne(R,1)$, 
and $\widetilde{\delta} = \proj_{1,n}(R)$.
\end{proof}

\begin{lem}\label{LEMConnectedProperties}
Suppose $\mathcal I$ is a cycle-consistent connected instance.
Then 
\begin{enumerate}
    \item[(a)] any two constraints with a common variable 
    are adjacent;
    \item[(b)] for any constraints $C_{1}, C_{2}\in\mathcal I$, 
    variables $x_1\in \Var(C_{1})$, $x_{2}\in \Var(C_2)$, 
    and any path from $x_1$ to $x_2$, 
    there exists a bridge $\delta$ from $\ConOne(C_{1},x_1)$ to 
    $\ConOne(C_2,x_2)$ such that 
    $\widetilde \delta$ contains 
    all pairs connected by this path;
    \item[(p)] if $\mathcal I$ is linked then
    $\ConOne(C,x)$ is a perfect linear congruence 
    for every constraint $C\in\mathcal I$ and $x\in\Var(C)$.
\end{enumerate}
\end{lem}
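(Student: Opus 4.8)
The plan is to prove the three parts in order, bootstrapping each from the previous ones and from the structural lemmas on bridges already established in Section~\ref{SectionStrongSubalgebras}.

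For part (a), suppose $C_1$ and $C_2$ share a variable $x$. Both constraints are rectangular (by definition of a connected instance), so $\ConOne(C_1,x)$ and $\ConOne(C_2,x)$ are congruences, and both are irreducible since all congruences in $\Congruences(\mathcal I)$ are irreducible. I would build a reflexive bridge from $\ConOne(C_1,x)$ to $\ConOne(C_2,x)$ directly from the two constraints: if $C_1 = R_1(x,\dots)$ and $C_2 = R_2(x,\dots)$, set
$$\delta(x_1,x_2,y_1,y_2) = \ConOne(C_1,x)(x_1,x_2)\wedge \ConOne(C_2,x)(y_1,y_2)\wedge \exists(\dots)\,R_1(x_1,\dots)\wedge R_2(x_1,\dots)\wedge x_1 = y_1 \text{-style gluing},$$
i.e.\ witness that $x_1$ and $y_1$ can be forced equal via a solution of the relevant subinstance. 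Because $\mathcal I$ is $1$-consistent and cycle-consistent, every $a\in D_x$ extends, so $(a,a,a,a)\in\delta$, giving reflexivity; the stability clauses are built in; and $\proj_{1,2}(\delta)\supseteq\ConOne(C_1,x)$, $\proj_{3,4}(\delta)\supseteq\ConOne(C_2,x)$ hold since these congruences are themselves the projections. One must also check $\proj_{1,2}(\delta)$ is genuinely the congruence and not larger, which follows from rectangularity of $C_1$. This makes $C_1,C_2$ adjacent in $x$.

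For part (b), I proceed by induction on the length of the path $x_1 = z_1 - C_1' - z_2 - \dots - C_{\ell}' - z_{\ell+1} = x_2$. The base case is an edge: a single constraint $C'$ with $z_1,z_2\in\Var(C')$; Lemma~\ref{LEMBridgeFromRelation} (or its trivial case when all triples are present, using cycle-consistency to produce a bridge whose $\widetilde\delta$ is the full projection) yields a bridge from $\ConOne(C',z_1)$ to $\ConOne(C',z_2)$ with $\widetilde\delta = \proj_{z_1,z_2}(C')$, which contains all pairs connected along that edge. For the inductive step, I compose: the path from $x_1$ to $z_\ell$ gives (by induction) a bridge $\delta_1$ from $\ConOne(C_1,x_1)$ to $\ConOne(C_\ell',z_\ell)$; part (a) gives a bridge between $\ConOne(C_\ell',z_\ell)$ and $\ConOne(C_{\ell}',z_{\ell})$ at the shared variable $z_\ell$ connecting to the next constraint — more precisely I need a bridge linking the congruence coming into $z_\ell$ with the one at $z_{\ell+1}$ inside $C_\ell'$, which is again the base case applied to $C_\ell'$. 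Then I use the bridge composition formula from the excerpt together with Lemma~\ref{LEMBridgeComposition}, which tells me the composite is a bridge (all intermediate congruences irreducible, which holds) and that $\widetilde\delta = \widetilde{\delta_1}\circ\widetilde{\delta_2}$; composing relations exactly composes "pairs connected by the path," so the $\widetilde\delta$ of the composite contains all pairs connected by the concatenated path. Chaining along the whole path with $C_1$ at one end and $C_2$ at the other gives the claim. The one subtlety is that bridge composition requires irreducibility of the glued congruences, which is guaranteed by connectedness, and that when a constraint appears on the path we handle the "entering" and "leaving" congruences at its two variables via the base case.

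For part (p), assume $\mathcal I$ is additionally linked. Fix a constraint $C$ and a variable $x\in\Var(C)$; I want $\ConOne(C,x)$ to be a perfect linear congruence. Since $\mathcal I$ is linked, for any $a,b\in D_x$ there is a path from $x$ back to $x$ connecting $a$ and $b$; by part (b) applied with $C_1 = C_2 = C$ and $x_1 = x_2 = x$, there is a bridge $\delta$ from $\ConOne(C,x)$ to itself whose $\widetilde\delta$ contains $(a,b)$. Ranging over all $a,b$ and taking the union — or rather, observing that $\widetilde\delta$ is then a subdirect relation on $D_x$ containing arbitrary connecting pairs — I get that $\widetilde\delta$ is linked (its bipartite graph is connected, precisely because linkedness of $\mathcal I$ supplies paths between any two elements). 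Now Lemma~\ref{LEMMakePerfectCongruenceFromLinked} applies: an irreducible congruence $\sigma = \ConOne(C,x)$ on $\mathbf D_x\in\mathcal V_n$ admitting a bridge from $\sigma$ to $\sigma$ with $\widetilde\delta$ linked is a perfect linear congruence. One technical point: to invoke Lemma~\ref{LEMMakePerfectCongruenceFromLinked} I need a \emph{single} bridge with linked $\widetilde\delta$, not a family; I obtain it either by taking a bridge realizing a generating set of connecting pairs (a finite union/composition of the path-bridges, still a bridge by the composition lemma) or by noting that the subalgebra generated by all the path-bridges is itself a bridge whose $\widetilde\delta$ is linked.

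The main obstacle I expect is part (b)'s inductive bookkeeping: carefully tracking, at each internal variable $z_i$ of the path, the two congruences $\ConOne(C_{i-1}',z_i)$ and $\ConOne(C_i',z_i)$ and producing the gluing bridge between them so that the composition formula applies — and ensuring at every composition step that the hypotheses of Lemma~\ref{LEMBridgeComposition} (irreducibility of all three congruences involved) are met, which is exactly where connectedness of $\mathcal I$ is used. The relation-composition identity $\widetilde\delta = \widetilde{\delta_1}\circ\widetilde{\delta_2}$ is what makes the "contains all pairs connected by the path" statement propagate cleanly, so once the bridge-composition machinery is lined up, the rest is routine.
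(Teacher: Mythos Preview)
Your approach to part (a) has a genuine gap: you attempt a direct local construction of a bridge from $\ConOne(C_1,x)$ to $\ConOne(C_2,x)$ using only $C_1$, $C_2$, and cycle-consistency, never invoking the connectedness hypothesis (that the adjacency graph on constraints is connected). But that hypothesis is essential --- without it the statement is simply false. For instance, two constraints sharing $x$ could induce a PC congruence and a linear congruence respectively, and by Lemma~\ref{LEMNoBridgeBetweenDifferentTypes} no bridge between them exists at all. Your construction is also confused about the bridge axioms: you say one must check that $\proj_{1,2}(\delta)$ is ``genuinely the congruence and not larger,'' whereas a bridge requires $\proj_{1,2}(\delta)\supsetneq\sigma_1$ strictly; a $\delta$ with $\proj_{1,2}(\delta)=\sigma_1$ is not a bridge.

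The paper's proof of (a) is quite different and does use connectedness. It takes a path $C_1=C_1',C_2',\dots,C_\ell'=C_2$ in the adjacency graph (so consecutive $C_j',C_{j+1}'$ are \emph{already known} to be adjacent at some common variable $z_{j+1}$), collects the reflexive bridges $\omega_j$ witnessing those adjacencies, interleaves them with the within-constraint bridges $\delta_i$ from Lemma~\ref{LEMBridgeFromRelation} (carrying $\ConOne(C_i',z_i)$ to $\ConOne(C_i',z_{i+1})$ with $\widetilde{\delta_i}=\proj_{z_i,z_{i+1}}(C_i')$), and composes everything via Lemma~\ref{LEMBridgeComposition}; cycle-consistency along the resulting closed walk at $x$ makes the composite reflexive. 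Part (b) then re-runs the same composition along an arbitrary given path, using the now-established (a) to supply the reflexive bridge at each shared variable. Your plan for (p) --- build bridges along linking paths via (b), compose them to get $\widetilde\delta$ linked, apply Lemma~\ref{LEMMakePerfectCongruenceFromLinked} --- is essentially the paper's, but it only goes through once (a) and (b) are fixed as above.
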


\begin{proof}
Let us prove (a) for two constraints $C_{1}$ and $C_{2}$ with a common variable $x$.
Since $\mathcal I$ is connected, there exists a path 
$z_{1}-C_{1}'-z_{2}-C_{2}'-\dots - z_{\ell}-C_{\ell}'-z_{\ell+1}$
such that $z_{1} = z_{\ell+1} = x$, 
$C_{1}' = C_{1}$, $C_{\ell}' = C_{2}$, 
$C_{j}'$ and $C_{j+1}$ are adjacent in a common variable 
$z_{j+1}$ for each $j\in[\ell-1]$.
Let $\omega_j$ be a reflexive
bridge from $\ConOne(C_{j}',z_{j+1})$ to $\ConOne(C_{j+1}',z_{j+1})$.
By Lemma \ref{LEMBridgeFromRelation}
for every $i\in[\ell]$ there exists a 
bridge $\delta_i$ from 
$\ConOne(C_{i}',z_{i})$ to 
$\ConOne(C_{i}',z_{i+1})$
such that $\widetilde{\delta_{i}}= \proj_{z_{i},z_{i+1}}(C_{i}')$.
Since $\mathcal I$ is connected, all the congruences 
$\ConOne(C_{i}',z_{i}),\ConOne(C_{i}',z_{i+1})$ are irreducible.
Composing bridges 
$\delta_1,\omega_1,\delta_2,\omega_2,\dots,\delta_{\ell-1},\omega_{\ell-1},\delta_{\ell}$ 
using Lemma \ref{LEMBridgeComposition}
we get the required bridge from $\ConOne(C_{1}',z_1)$ to $\ConOne(C_{\ell}',z_{\ell+1})$.
Since $\mathcal I$ is cycle-consistent, 
the bridge is reflexive, and therefore $C_{1}$ and $C_{2}$ are 
adjacent.

To prove (b) we repeat the whole argument of (a) for the path in $\mathcal I$.
Since we already proved (a),
$C_{j}'$ and $C_{j+1}'$ are adjacent in a common variable 
$z_{j+1}$ for any path.
As a result we obtain the required bridge 
$\ConOne(C_{1}',z_1)$ to $\ConOne(C_{\ell}',z_{\ell+1})$.

%Let us prove (b). 
%Consider some path 
%$z_{1}-C_{1}'-z_{2}-\dots - C_{\ell-1}'-z_{\ell}$.
%By Lemma \ref{LEMBridgeFromRelation}
%for every $i\in[\ell-1]$ there exists a 
%bridge $\delta_1$ from 
%$\ConOne(C_{i}',z_{i})$ to 
%$\ConOne(C_{i}',z_{i+1})$
%such that $\widetilde{\delta_{i}}= \proj_{z_{i},z_{i+1}}(C_{i}')$.
%Since $\mathcal I$ is connected, all the congruences 
%$\ConOne(C_{i}',z_{i}),\ConOne(C_{i}',z_{i+1})$ are irreducible.
%Composing bridges 
%$\delta_1,\delta_2,\dots,\delta_{\ell-1}$ 
%using Lemma \ref{LEMBridgeComposition}
%we get the required bridge from $\ConOne(C_{1}',z_1)$ to $\ConOne(C_{\ell-%1}',z_{ell})$.

Let us prove (p). Since $\mathcal I$ is connected, for any $a,b\in D_{x}$ 
there exists a path from $x$ to $x$ connecting $a$ and $b$.
Let us build a bridge $\delta_{a,b}$ using (b) for this path. 
Since $\mathcal I$ is cycle-consistent, $\delta_{a,b}$ is reflexive.
Composing all the bridges $\delta_{a,b}$ for $a,b\in D_{x}$
we get a bridge $\delta$ from $\ConOne(C,x)$ to $\ConOne(C,x)$ 
such that $\widetilde{\delta}= D_{x}^{2}$.
By Lemma \ref{LEMBuildingPerfectCongruence}
$\ConOne(C,x)$ is a perfect linear congruence.
\end{proof}

\begin{lem}[\cite{zhuk2021strong}, Lemma 5.6]\label{LEMExistenceOfTreeCoverings}
Suppose $D^{(\top)}$ is a reduction for an instance $\mathcal I$, 
$D^{(\bot)}$ is an inclusion maximal 1-consistent reduction 
for $\mathcal I$
such that $D^{(\bot)}\le D^{(\top)}$. Then for every variable 
$y\in\Var(\mathcal I)$ there exists a 
tree-covering $\Upsilon_{y}$ of $\mathcal I$ 
such that 
$\Upsilon_{y}^{(\top)}(y)$ defines $D_{y}^{(\bot)}$.
\end{lem}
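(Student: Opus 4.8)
We must show that for a reduction $D^{(\top)}$ of $\mathcal I$ and an inclusion-maximal $1$-consistent reduction $D^{(\bot)}\le D^{(\top)}$, each variable $y$ admits a tree-covering $\Upsilon_y$ of $\mathcal I$ with $\Upsilon_y^{(\top)}(y)=D_y^{(\bot)}$. The natural approach is the standard "tree of propagation" construction: grow a tree-covering of $\mathcal I$ rooted at (a copy of) $y$ by repeatedly attaching children via constraints of $\mathcal I$, mimicking the $1$-consistency-closure (arc-consistency) computation that produces $D^{(\bot)}$ from $D^{(\top)}$.

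First I would recall that $D^{(\bot)}$, being the largest $1$-consistent reduction below $D^{(\top)}$, is exactly what one obtains by running the arc-consistency propagation inside $D^{(\top)}$: start from $D^{(\top)}$ and iteratively shrink each $D_z$ to $\proj_z(C\cap \prod D^{(\cdot)})$ over constraints $C\in\mathcal I$ containing $z$, until stabilization. The key point is that this propagation can be "unfolded" into a tree. Concretely, I would build $\Upsilon_y$ inductively: begin with a single root variable $y_0$ (a copy of $y$, so $S(y_0)=y$). Whenever the propagation uses a constraint $C=R(z_1,\dots,z_k)\in\mathcal I$ to restrict $D_{z_j}$, and the unfolding already has a copy of $z_j$ as a not-yet-expanded leaf, attach fresh copies $z_1',\dots,z_k'$ of the other variables together with a copy of the constraint $R(z_1',\dots,z_k')$ (with $z_j'$ the existing leaf), and continue expanding from the new leaves. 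Because we only ever attach one new constraint to a leaf and introduce brand-new variable copies for the far endpoints, the resulting instance has no cycles through distinct constraints, i.e.\ it is a tree-instance; and since every constraint used is literally a constraint of $\mathcal I$ (applied to copies), it is a covering. The map $S$ sending each copy to its original satisfies conditions (1)–(3) of the definition of covering, so $\Upsilon_y$ is a tree-covering of $\mathcal I$.

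Next I would verify the main equality $\Upsilon_y^{(\top)}(y_0)=D_y^{(\bot)}$. The inclusion $\supseteq$ is easy: by (p3), any solution of $\mathcal I^{(\top)}$ (and $D^{(\bot)}$ is $1$-consistent, witnessed by the original $\mathcal I$) lifts to $\Upsilon_y^{(\top)}$, and more directly, any value in $D_y^{(\bot)}$ extends to a solution of the tree instance because a tree-instance with $1$-consistent restriction has subdirect solution set — this is exactly (p4) applied to the restriction to $D^{(\bot)}$, which is $1$-consistent on the covering by (p8). For the inclusion $\subseteq$: a value $a\in D_y$ that lies in $\Upsilon_y^{(\top)}(y_0)$ survives every step of the propagation, because the tree was built to simulate precisely the constraints used in propagation — if some propagation step would have removed $a$ from $D_y$, the corresponding branch of the tree would have no solution with $y_0=a$. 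One makes this precise by induction on the number of propagation rounds: define $D^{(k)}$ to be the reduction after $k$ rounds, and show by induction on $k$ that $a\in D_y^{(k)}$ whenever $a$ extends to a solution of the depth-$k$ truncation of $\Upsilon_y$; since $D^{(\bot)}=D^{(N)}$ for some finite $N$ (the domains are finite and strictly shrink), taking the full tree gives $\Upsilon_y^{(\top)}(y_0)\subseteq D_y^{(\bot)}$.

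The main obstacle I anticipate is bookkeeping the unfolding so that it genuinely terminates and stays finite: arc-consistency propagation can revisit the same variable many times, and naively creating fresh copies each time could produce an infinite tree. The fix is to observe that we do not need the tree to realize the full propagation in one pass — we only need, for the fixed root $y$, a finite tree whose leaves' joint constraint forces $D_{y_0}$ down to $D_y^{(\bot)}$; since each "layer" of the tree can only strictly decrease some domain (or else we stop), and there are finitely many variable/domain pairs, the depth is bounded by roughly $\sum_z (|D_z^{(\top)}|-1)$, so the construction halts. This is precisely the finiteness argument underlying the cited Lemma 5.6 of \cite{zhuk2021strong}, and I would simply invoke or briefly reprove that bound. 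Everything else — that $\Upsilon_y$ is a covering, that it is a tree, that the restriction to $D^{(\bot)}$ is $1$-consistent on it, and hence subdirect by (p4) — is routine from the definitions and the facts (p1)–(p8).
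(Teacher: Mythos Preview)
The paper does not give its own proof of this lemma; it is simply imported from \cite{zhuk2021strong}, Lemma 5.6. Your proposal is the standard arc-consistency unfolding argument that underlies that reference, and it is correct: build the tree-covering by mimicking the propagation steps that compute $D^{(\bot)}$ from $D^{(\top)}$, and verify the two inclusions using (p4)/(p8) for $\supseteq$ and the induction on propagation rounds for $\subseteq$, with termination guaranteed by the finite total domain size. One cosmetic point: in the covering definition the root should literally be $y$ (not a fresh copy $y_0$), since condition (1) requires $S(x)=x$ for variables shared with $\mathcal I$; this changes nothing in your argument.
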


\begin{cor}\label{CORExistenceOfTreeCoverings}
Suppose $D^{(\top)}$ is a reduction of a 1-consistent instance $\mathcal I$, $D^{(\top)}\lll D$, 
$D^{(\bot)}$ is an inclusion-maximal nonempty 1-consistent 
reduction of $\mathcal I$ such that 
$D^{(\bot)}\le D^{(\top)}$.
Then $D^{(\bot)}\lll^{D}D^{(\top)}\lll D$. 
\end{cor}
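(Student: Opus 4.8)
The plan is to combine Lemma~\ref{LEMExistenceOfTreeCoverings} with the transfer properties of $\lll$ under taking relations defined by instances, namely Corollary~\ref{CORPropagateToRelations}. Fix a variable $y\in\Var(\mathcal I)$. By Lemma~\ref{LEMExistenceOfTreeCoverings} applied to the reductions $D^{(\top)}$ and the inclusion-maximal $1$-consistent $D^{(\bot)}\le D^{(\top)}$, there is a tree-covering $\Upsilon_y$ of $\mathcal I$ such that $\Upsilon_y^{(\top)}(y)$ defines $D_y^{(\bot)}$; that is, $D_y^{(\bot)} = \proj_{y}\bigl(\Upsilon_y^{(\top)}(\cdot)\bigr)$ where $\Upsilon_y^{(\top)}(\cdot)$ is the relation defined by the instance $\Upsilon_y$ after restricting every variable $x$ to $D_x^{(\top)}$.

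Next I would show $D_y^{(\bot)}\lll^{D} D_y^{(\top)}$. Since $\mathcal I$ is $1$-consistent and $\Upsilon_y$ is a tree-covering, by property (p4) of tree-coverings the solution set of $\Upsilon_y$ is subdirect, so the relation $R := \Upsilon_y(z_1,\dots,z_k)$ defined by $\Upsilon_y$ on its variables $z_1,\dots,z_k$ (with $z_1 = y$, say) is a subdirect subuniverse of $\prod_j \mathbf D_{z_j}$. For each variable $z_j$ its parent $S(z_j)$ satisfies $D_{z_j}=D_{S(z_j)}$, and the reduction gives $D_{z_j}^{(\top)} = D_{S(z_j)}^{(\top)}\lll D_{z_j}$ (using the hypothesis $D^{(\top)}\lll D$). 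Applying Corollary~\ref{CORPropagateToRelations}(r1) to $R$ with the collection of good subuniverses $D_{z_j}^{(\top)}\lll D_{z_j}$, we get $\proj_{1}\bigl(R\cap \prod_j D_{z_j}^{(\top)}\bigr)\dot\lll D_{y}$. But that projection is exactly $\Upsilon_y^{(\top)}(y) = D_y^{(\bot)}$, which is nonempty because $D^{(\bot)}$ is a nonempty reduction; hence the dotted relation is an honest $\lll$, i.e. $D_y^{(\bot)}\lll^{D} D_y$. To get the sharper statement $D_y^{(\bot)}\lll^{D} D_y^{(\top)}$ I would instead feed $R\cap \prod_j D_{z_j}^{(\top)}$ (equivalently, work with $\Upsilon_y^{(\top)}$ directly) and use that $\Upsilon_y^{(\top)}(y)\subseteq D_y^{(\top)}$ together with part (r1)/(b1) of Corollary~\ref{CORPropagateToRelations}, or simply invoke the ``relative'' form: the sequence of good subsets produced lands inside $D_y^{(\top)}$ by construction since every constraint of $\Upsilon_y^{(\top)}$ already has all variables confined to $D^{(\top)}$.

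Finally, $D^{(\top)}\lll D$ is a hypothesis, so chaining $D_y^{(\bot)}\lll^{D} D_y^{(\top)}\lll D_y$ for every $y$ gives precisely $D^{(\bot)}\lll^{D}D^{(\top)}\lll D$, since $\lll$ on reductions is defined coordinatewise. The main obstacle I anticipate is bookkeeping the identification of variables in the covering with their parents (ensuring $R$ is genuinely subdirect over the $\mathbf D_{z_j}$ and that the restriction to $D^{(\top)}$ on children matches the restriction on parents), and making sure the ``dotted'' $\dot\lll$ from Corollary~\ref{CORPropagateToRelations} can be upgraded to $\lll$ using nonemptiness of $D^{(\bot)}$; everything else is a direct application of the already-established propagation machinery.
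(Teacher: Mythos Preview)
Your proposal is correct and follows essentially the same route as the paper: invoke Lemma~\ref{LEMExistenceOfTreeCoverings} to obtain the tree-covering $\Upsilon_y$, use 1-consistency of $\mathcal I$ (property (p4)) to see that the solution set of $\Upsilon_y$ is subdirect, and then apply Corollary~\ref{CORPropagateToRelations}(r1). The paper's proof is terser and simply asserts the conclusion $D_y^{(\bot)}\lll^{D_y} D_y^{(\top)}$ directly from (r1), whereas you spell out the bookkeeping (parents versus children, upgrading $\dot\lll$ to $\lll$ via nonemptiness, landing inside $D_y^{(\top)}$); but the underlying argument is identical.
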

\begin{proof}
By Lemma \ref{LEMExistenceOfTreeCoverings} 
for every variable 
$y\in\Var(\mathcal I)$ there exists a 
tree-covering $\Upsilon_{y}$ of $\mathcal I$ 
such that 
$\Upsilon_{y}^{(\top)}(y)$ defines $D_{y}^{(\bot)}$.
Since $\mathcal I$ is 1-consistent, the solution set of
$\Upsilon_{y}$ can be viewed as a subdirect relation.
By Corollary \ref{CORPropagateToRelations}(r1) 
we obtain 
$\Upsilon_{y}^{(\top)}(y)=D_{y}^{(\bot)}\lll^{D_{y}}D_{y}^{(\top)}$ 
\end{proof}

\begin{lem}\label{LEMFindOneConsistentForAll}
Suppose $D^{(1)}$ is a 1-consistent reduction of a cycle-consistent instance $\mathcal I$, 
$D^{(1)}\lll D$, 
$B<_{T}^{D_{x}} D_{x}^{(1)}$ for some variable $x$,
and $T\in\{\TBA, \TC,\TPC\}$.
Then there exists a nonempty 1-consistent reduction 
$D^{(2)}\lll D^{(1)}$ such that $D_{x}^{(2)}\le B$.
Moreover, 
\begin{enumerate}
    \item if $T\in\{\TBA,\TC\}$ then $D^{(2)}\le_{T} D^{(1)}$;
    \item if $T = \TPC$ and $D_{y}^{(1)}$ is S-free for every $y\in\mathcal I$ 
    then $D^{(2)}\le_{\TMPC} D^{(1)}$.
\end{enumerate}
\end{lem}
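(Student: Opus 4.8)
The plan is to find the maximal $1$-consistent reduction $D^{(2)}$ sitting below the ``target'' reduction obtained by shrinking the domain of $x$ to $B$, and to show it is nonempty and of the right type. Concretely, let $D^{(\top)}$ be the reduction that agrees with $D^{(1)}$ on every variable except $x$, where $D_x^{(\top)} := B$. I would take $D^{(2)}$ to be an inclusion-maximal nonempty $1$-consistent reduction of $\mathcal I$ with $D^{(2)}\le D^{(\top)}$ (nonemptiness of at least one such reduction must be argued first; see below). By Corollary~\ref{CORExistenceOfTreeCoverings} applied with $D^{(\top)}$ in place of its ``$D^{(\top)}$'' we get $D^{(2)}\lll^D D^{(\top)}$, and since $D^{(\top)}$ differs from $D^{(1)}$ only by a single type-$T$ step $B<_T^{D_x}D_x^{(1)}$ at the variable $x$, we obtain $D^{(2)}\lll D^{(1)}$. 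This already proves the main assertion once nonemptiness is established. For the ``moreover'' part I would control the \emph{type} of each $D_y^{(2)}$ inside $D_y^{(1)}$ using the propagation machinery (Lemma~\ref{LEMPropagation}, Corollary~\ref{CORPropagateToRelations}), which is where the two cases on $T$ diverge.

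For nonemptiness in the case $T\in\{\TBA,\TC\}$ I would argue directly: start from $D^{(\top)}$ and run $1$-consistency propagation (repeatedly replacing each $D_y$ by $\proj_y$ of each constraint restricted to the current reduction). Each propagation step is, by Lemma~\ref{LEMBACenterImplies}, a step of the same type $T\in\{\TBA,\TC\}$ down inside $D_y^{(1)}$ — in particular it never produces the empty set because the original instance $\mathcal I$ (and hence $\mathcal I^{(1)}$) is $1$-consistent and a type-$\TBA$ or type-$\TC$ subuniverse of a nonempty algebra is nonempty. Passing to the limit of this decreasing process yields a nonempty $1$-consistent reduction $D^{(2)}$ with $D_y^{(2)}\le_T D_y^{(1)}$ for every $y$ and $D_x^{(2)}\le B$, giving both nonemptiness and conclusion~(1). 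The cycle-consistency of $\mathcal I$ is what guarantees the propagation from $x$ actually reaches and restricts the other variables consistently; it is used implicitly through Lemma~\ref{LEMBACenterImplies} being applicable along connecting paths.

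For the case $T=\TPC$ with all $D_y^{(1)}$ S-free, plain $1$-consistency propagation no longer stays within a single type, so I would instead take $D^{(2)}$ to be the maximal nonempty $1$-consistent reduction below $D^{(\top)}$ and invoke Lemma~\ref{LEMMinimalPCLinearReductionIsConsistent} together with Lemma~\ref{LEMPropagation}(fm): starting from the minimal $\TMPC$-extension of $B$ inside $D_x^{(1)}$ and propagating via the relations of $\mathcal I$, condition (fm) (which needs exactly the S-freeness hypothesis, to know $f(B)$ is S-free) keeps each projected domain of type $\TMPC$ inside $D_y^{(1)}$, and Lemma~\ref{LEMMinimalPCLinearReductionIsConsistent} says that this stabilizes to a $1$-consistent reduction rather than hitting an empty constraint. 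Hence $D^{(2)}\le_{\TMPC}D^{(1)}$, and $D^{(2)}\lll D^{(1)}$ follows as before (using Lemma~\ref{LEMMultiTypeStillStable} to turn the $\TMPC$-relation into a $\lll$-chain). The main obstacle, and the step I expect to require the most care, is showing nonemptiness in the $\TPC$ case: unlike binary absorption and central subuniverses, PC subuniverses do not automatically survive intersections along a propagation, so one genuinely needs Lemma~\ref{LEMMinimalPCLinearReductionIsConsistent} (the ``either an empty constraint appears, or the instance is $1$-consistent'' dichotomy) and must rule out the empty-constraint alternative — presumably by appealing to the maximality/cycle-consistency of the setup and to Lemma~\ref{LEMBridgeTOPCCongruence}-type rigidity of PC congruences, which forbids the kind of collapse that an empty restriction would witness.
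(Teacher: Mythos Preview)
Your setup matches the paper's: define $D^{(\top)}$ by shrinking only $D_x$ to $B$, and take $D^{(2)}$ to be the inclusion-maximal $1$-consistent reduction below $D^{(\top)}$. The paper then uses Lemma~\ref{LEMExistenceOfTreeCoverings} to express each $D_y^{(2)}$ as $\Upsilon_y^{(\top)}(y)$ for a tree-covering $\Upsilon_y$, and works with the (subdirect) solution set of $\Upsilon_y$ throughout. This representation is the key device you are missing.

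The real gap is nonemptiness. For $T\in\{\TBA,\TC\}$ your justification (``a type-$\TBA$ or type-$\TC$ subuniverse of a nonempty algebra is nonempty'') does not do the job: Lemma~\ref{LEMBACenterImplies} only yields $\dot\le_T$, and two BA subuniverses of the same algebra can intersect trivially (e.g.\ $\{0\}$ and $\{1\}$ in a two-element lattice), so iterated propagation can in principle collapse to $\varnothing$. Your remark that cycle-consistency is ``used implicitly through Lemma~\ref{LEMBACenterImplies}'' does not identify where it actually enters. For $T=\TPC$ you invoke Lemma~\ref{LEMMinimalPCLinearReductionIsConsistent}, but that lemma is about choosing a \emph{minimal} $\TMPC$ subuniverse on \emph{every} domain simultaneously, not propagating a single restriction from $x$; it neither guarantees $D_x^{(2)}\le B$ nor rules out its ``empty constraint'' alternative, which is precisely the obstacle.

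The paper's argument for nonemptiness is uniform across all three types. Since $\Upsilon_y$ is a tree-covering of a $1$-consistent instance, its solution set is subdirect; the only restrictions in $D^{(\top)}$ beyond $D^{(1)}$ are on children of $x$, all to the same $B$. If $\Upsilon_y^{(\top)}$ were empty, Corollary~\ref{CORMainStableIntersection} (applied to the subdirect solution set) produces, for $T\in\{\TC,\TPC\}$, two children $x',x''$ of $x$ whose simultaneous restriction to $B$ already kills all solutions. But the tree path from $x'$ to $x''$ corresponds to a cycle at $x$ in $\mathcal I$, and cycle-consistency forces the projection of the solution set onto $(x',x'')$ to contain the diagonal --- in particular a pair with both coordinates in $B$ --- a contradiction. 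Once nonemptiness is established, the type conclusions follow by applying Lemma~\ref{LEMBACenterImplies} (for $\TBA,\TC$) and Corollary~\ref{CORPropagateToRelations}(m1) (for $\TPC$, using S-freeness) directly to the solution set of $\Upsilon_y$, rather than via an iterative propagation.
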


\begin{proof}
Define the reduction $D^{(\top)}$ by
$D_{x}^{(\top)} = B$ and $D_{y}^{(\top)}=D_{y}^{(1)}$
for every $y\neq x$.
Let $D^{(2)}$ be an inclusion maximal 1-consistent reduction 
for $\mathcal I$ such that $D^{(2)}\le D^{(\top)}$.
By Lemma \ref{LEMExistenceOfTreeCoverings} 
for every variable 
$y\in\Var(\mathcal I)$ there exists a 
tree-covering $\Upsilon_{y}$ of $\mathcal I$ 
such that 
$\Upsilon_{y}^{(\top)}(y)$ defines $D_{y}^{(2)}$.

Assume that $D_{y}^{(2)}=\varnothing$ and 
$\Upsilon_{y}^{(\top)}$ has no solutions for some $y$. Since $\mathcal I$ is 1-consistent, 
the solution set of $\Upsilon_{y}$ can be viewed as a subdirect relation.
By Corollary \ref{CORMainStableIntersection} 
there should two children of $x$ in $\Upsilon_{y}$ such that 
if we restrict them to $D_{x}^{(\top)}$ we kill all the solutions of 
$\Upsilon_{y}$.
Since $\Upsilon_{y}$ is a tree-covering of $\mathcal I$ and 
$\mathcal I$ is cycle-consistent, this cannot happen.

Thus, $D^{(2)}$ is a nonempty reduction.
Assume that $\mathcal T\in\{\TBA,\TC\}$.
Again considering the solution set 
$\Upsilon_{y}$ and applying 
Lemma \ref{LEMBACenterImplies} 
we derive that $D_{y}^{(2)}\le_{\mathcal T} D_{y}^{(1)}$, 
and therefore $D^{(2)}\le_{\mathcal T}D^{(1)}$.
For $\mathcal T=\TPC$ we do the same but apply 
Corollary \ref{CORPropagateToRelations}(rm)
instead and obtain 
$D^{(2)}\le_{\TMPC}^{D}D^{(1)}$.
\end{proof}

\begin{lem}\label{LEMParalPropertyFromCrucialInMultiType}
Suppose \begin{enumerate}
    \item $D^{(1)}$ is a 1-consistent reduction  
for a constraint $R(x_1,\dots,x_{n})$,
\item $T\in\{\TL,\TPC,\TD\}$,
\item $D^{(2)}\le_{\mathcal M T}^{D}D^{(1)}\lll D$,
\item $R(x_1,\dots,x_{n})$ is crucial (as the whole instance) in 
$D^{(2)}$.
\end{enumerate}
Then $R$ has the parallelogram property and 
%$\ConOne(R,i)$ is a congruence of type $T$ 
%such that 
%{\color{red} 
$\ConOne(R,i)$ is a congruence of type $T$ %for 
%$D_{x_i}^{(1)}\lll D_{x_{i}}$
such that $\ConOne(R,i)^{*}\supseteq (D_{x_i}^{(1)})^{2}$
for every $i\in[n]$.
%}
%Moreover, if $T = \TL$ then each $\ConOne(R,i)$ is linear, 
%if $T = \TPC$ then each $\ConOne(R,i)$ nonlinear,
Moreover, if $T = \TPC$
then $n=2$.
%\begin{enumerate}
%    \item (even if $T = \TD$)
%all the congruences $\ConOne(R,i)$ are of the same 
%type $\mathcal T\in\{\TL,\TPC\}$.
%\item if $\ConOne(R,i)$ is of type $\TPC$ then $n=2$.
%\end{enumerate}
%is a $T$-type relation having the parallelogram property.
\end{lem}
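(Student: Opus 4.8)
The plan is to prove the parallelogram property first, then read off the congruence types from the reduction, and finally handle the $\TPC$ case by a bijectivity argument. To get the parallelogram property, fix any permutation of the variables of $R$; it suffices to check the defining implication for every $\ell$. So suppose we have tuples $\bar a = (a_1,\dots,a_\ell,a_{\ell+1},\dots,a_n)$ and $\bar b$ with $(a_1,\dots,a_\ell,b_{\ell+1},\dots,b_n)\in R$, $(b_1,\dots,b_\ell,a_{\ell+1},\dots,a_n)\in R$, $(b_1,\dots,b_\ell,b_{\ell+1},\dots,b_n)\in R$, but $(a_1,\dots,a_n)\notin R$. The idea, as in the ``similar flavour'' remark in the introduction, is to combine these data with $R$ itself to witness a violation of cruciality. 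Concretely, split the coordinates into the first block $I_1=\{1,\dots,\ell\}$ and the second block $I_2=\{\ell+1,\dots,n\}$; using Lemma \ref{LEMBridgeFromRelation} applied to $R$ viewed as a subdirect relation on $\mathbf D_{x_1}\times\cdots\times\mathbf D_{x_n}$ (after a possible grouping of coordinates), the failure of the implication gives a bridge $\delta$ from $\ConOne(R,i)$ (for $i\in I_1$) to $\ConOne(R,j)$ (for $j\in I_2$) whose associated binary relation strictly extends the relevant congruence. Then, since $R$ is crucial in $D^{(2)}$ and $D^{(2)}\le_{\mathcal M T}^D D^{(1)}\lll D$, any strict weakening of $R$ must already acquire a solution in $D^{(2)}$; but the weaker constraints $R_j$ obtained by relaxing along $\omega_i\supsetneq\ConOne(R,i)$ intersect to $R$ (as in the proof of Lemma \ref{LEMCrucialMeansIrreducible}), so two weakenings would still be solution-free, contradicting cruciality unless no such splitting exists. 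In other words, absence of the parallelogram property forces a pair of proper weakenings that still have no solution in $D^{(2)}$, and that is exactly what cruciality forbids.

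Once $R$ has the parallelogram property, it is rectangular, so each $\ConOne(R,i)$ is a congruence on $\mathbf D_{x_i}$, and by Lemma \ref{LEMCrucialMeansIrreducible} it is irreducible. To identify its \emph{type}, I would use the reduction data: $D^{(2)}_{x_i}\le_{\mathcal M T}^{D_{x_i}} D^{(1)}_{x_i}$ means $D^{(2)}_{x_i}$ is an intersection of subuniverses of type $T$ inside $D^{(1)}_{x_i}$, each coming from a congruence $\omega$ of type $T$ with $\omega^*\supseteq (D^{(1)}_{x_i})^2$. Since $R(x_1,\dots,x_n)$ is crucial in $D^{(2)}$ — so $R^{(2)}=\varnothing$ but every proper weakening gets a solution — Corollary \ref{CORMainStableIntersection} (applied with $R$, the chain $C_i = D^{(2)}_{x_i}<_{T(\sigma_i)} B_i = D^{(1)}_{x_i}$, using $R\cap(C_1\times\cdots\times C_n)=\varnothing$ and, by cruciality, $R\cap(C_1\times\cdots\times B_j\times\cdots\times C_n)\neq\varnothing$ for each $j$) tells us the $T_i$ are all equal and of type $\TBA$, $\TL$, $\TC$, or $\TPC$; matching this against the fact that the $T$-subuniverses we reduced by have type $T\in\{\TL,\TPC,\TD\}$ pins the type down, and (via Lemma \ref{LEMNoBridgeBetweenDifferentTypes}, which forbids bridges between linear and PC congruences, and Lemma \ref{LEMBridgeFromRelation} giving a bridge between $\ConOne(R,i)$ and $\ConOne(R,j)$) one gets that every $\ConOne(R,i)$ has type $T$. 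The condition $\ConOne(R,i)^*\supseteq (D^{(1)}_{x_i})^2$ follows since the relaxation $\cover{\ConOne(R,i)}$ must contain the $\omega^*$ that dominates $(D^{(1)}_{x_i})^2$, using that $R$ is $1$-consistent for $D^{(1)}$ and hence $\proj_i(R^{(1)})=D^{(1)}_{x_i}$.

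Finally, for $T=\TPC$ I would invoke the $\TPC$ branch of Corollary \ref{CORMainStableIntersection}: it asserts $n=2$ directly (the corollary only permits the PC outcome when $n=2$, with $\mathbf D_{x_1}/\sigma_1\cong\mathbf D_{x_2}/\sigma_2$ and $\{(a/\sigma_1,b/\sigma_2)\mid (a,b)\in R\}$ bijective), so there is nothing more to prove once we know the type is $\TPC$; the bridge-triviality Lemma \ref{LEMPCBridgesAreTrivial} and Lemma \ref{LEMBridgeTOPCCongruence} are what make the $n\geq 3$ situation impossible, since a PC congruence admits no nontrivial bridge to ``spread'' across a third coordinate. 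The main obstacle I anticipate is the first step: carefully extracting, from a failure of the parallelogram property for an arbitrary permutation and arbitrary split parameter $\ell$, two genuinely distinct proper weakenings of $R$ that are simultaneously solution-free in $D^{(2)}$ — the bookkeeping of which coordinates to group and how the rectangular closure interacts with the reduction $D^{(2)}\le_{\mathcal M T}^D D^{(1)}$ needs care, and one must make sure the ``weaker constraints'' produced still live inside $\mathbf D_{x_1}\times\cdots\times\mathbf D_{x_n}$ (as in Remark \ref{GetCrucialInstance}) so that cruciality actually applies to them.
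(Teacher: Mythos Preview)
Your proposal has genuine gaps in both main steps.

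\textbf{Parallelogram property.} You frame the argument as: a failure of the parallelogram property would produce \emph{two} proper weakenings of $R$ both solution-free in $D^{(2)}$. This is the wrong shape. A failure for the split $\{1,\dots,\ell\}$ vs.\ $\{\ell+1,\dots,n\}$ gives you a \emph{single} proper weakening, namely the rectangular closure $S=R'\circ (R')^{-1}\circ R'\supsetneq R$ (viewing $R$ as binary $R'\le_{sd}E_1\times E_2$). Cruciality kills you if $S^{(2)}=\varnothing$; the whole difficulty is showing \emph{that}. You never address it, and your bridge argument via Lemma~\ref{LEMBridgeFromRelation} does not. The paper handles this with Lemma~\ref{LEMPreserveLinkdness}: since $C_i:=E_i^{(2)}\le_{\mathcal{MT}}^{E_i}E_i^{(1)}\lll E_i$ (Corollary~\ref{CORPropagateToRelations}(r),(m)) and $R'^{(2)}=\varnothing$, the rectangular closure $S'$ also satisfies $S'^{(2)}=\varnothing$. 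This is the step you are missing, and it is the one that actually exploits the $\mathcal{MT}$ reduction.

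\textbf{Type of $\ConOne(R,i)$.} Corollary~\ref{CORMainStableIntersection} takes $C_i<_{T_i(\sigma_i)}B_i$ with $T_i\in\{\TBA,\TC,\TS,\TL,\TPC\}$, not $\mathcal{MT}$, and its output speaks about the \emph{reduction} congruences $\sigma_i$, not about $\ConOne(R,i)$. So even granting the hypotheses, you would learn that the $\sigma_i$ are all of type $T$ and bridged to one another --- which you already knew --- and nothing about $\ConOne(R,i)$. Your proposed Lemma~\ref{LEMBridgeFromRelation} link relates $\ConOne(R,1)$ to $\ConOne(R,n)$, not to any $\sigma_i$. The paper instead fixes $i$ (say $i=1$), looks at $E=\proj_1\bigl(R\cap(D_{x_1}\times D^{(2)}_{x_2}\times\cdots\times D^{(2)}_{x_n})\bigr)$, observes $E\circ\ConOne(R,1)=E$ and $E\cap D^{(2)}_{x_1}=\varnothing$ by cruciality, chooses the maximal congruence $\sigma\supseteq\ConOne(R,1)$ with $(E\circ\sigma)\cap C=\varnothing$ (for a suitable $C<_{\mathcal{MT}}B$), and applies Lemma~\ref{LEMMaximalMultExtention} to write $\sigma$ as an intersection of type-$T$ congruences $\omega_j$ with $\omega_j^{*}\supseteq B^2\supseteq (D^{(1)}_{x_1})^2$. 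Cruciality forces $\sigma=\ConOne(R,1)$, and irreducibility (Lemma~\ref{LEMCrucialMeansIrreducible}) then makes $\ConOne(R,1)$ equal to one of the $\omega_j$.

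\textbf{The $\TPC$ case.} Corollary~\ref{CORMainStableIntersection} does not ``assert $n=2$ directly'': it says that some \emph{two} coordinates $i,j$ already witness emptiness. The paper then notes that if $n\ge 3$ one may existentially quantify the remaining variables, obtaining a strictly weaker constraint still solution-free in $D^{(2)}$, contradicting cruciality. Your sketch omits this last step.
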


\begin{proof}
First, let us prove that $R$ has the parallelogram property.
We need to check the parallelogram property for any splitting of the variables of $R$ into two disjoint sets. 
Without loss of generality we assume that this splitting is 
$\{x_{1},\dots,x_{k}\}$ and $\{x_{k+1},\dots,x_{n}\}$.
Let us define a binary relation 
$R'\le_{sd} \mathbf E_1 \times \mathbf E_2$ 
by $((a_1,\dots,a_k),(a_{k+1},\dots,a_n))\in R'\Leftrightarrow
(a_1,\dots,a_n)\in R$, where 
$E_1 = \proj_{1,\dots,k}(R)$, 
$E_2 = \proj_{k+1,\dots,n}(R)$. 
Let us define $E_1^{(1)}$, $E_{2}^{(1)}$, $E_1^{(2)}$, $E_{2}^{(2)}$, naturally (we just reduce 
the corresponding coordinates to $D^{(1)}$ or $D^{(2)}$).
By Corollary \ref{CORPropagateToRelations}(r) and (m)
we have $E_i^{(2)}\le_{\mathcal M T}^{E_{i}}E_{i}^{(1)}\lll E_{i}$ for each $i\in\{1,2\}$.
Put 
$S' = R'\circ R'^{-1}\circ R'$ 
and $S = \{(a_1,\dots,a_{n})\mid ((a_1,\dots,a_k),(a_{k+1},\dots,a_n))\in S'\}$.
%It is sufficient to show that $S=R$ because this would imply 
%that $R$ has the parallelogram property.
%By condition 5 it is sufficient to 
%do prove that 
%$S^{(1)}=R^{(1)}$.
%Since $S\supseteq R$, 
%it remains to show that $S^{(1)}\subseteq R^{(1)}$.
Since 
$R'^{(2)}=\varnothing$,
Lemma \ref{LEMPreserveLinkdness} implies that 
$S'^{(2)}= \varnothing$. Since $R(x_1,\dots,x_n)$ is crucial 
and $S\supseteq R$, 
we obtain $S = R$. Hence $R$ has the parallelogram property.
%\mathbf D_{x_1}\times \dots\times \mathbf D_{x_{k}}$
%and 
%$\mathbf E_{2} = \mathbf D_{x_{k+1}}\times \dots\times \mathbf D_{x_{n}}.$
%Consider two cases corresponding to the type $T$:

%Case 1. $T = L$.
%{\color{red} 
Put 
$E = \proj_{1}(R\cap (D_{x_1}\times D_{x_2}^{(2)}\times 
\dots \times D_{x_{n}}^{(2)}))$.
If $E = \varnothing$ then 
the constraint could be weakened to 
$R_{0}(x_2,\dots,x_n)$,
where $R_{0}(x_2,\dots,x_n) = \exists x_1 R(x_1,\dots,x_n)$, 
which contradicts the cruciality.
Hence $E\neq\varnothing$ and 
by Corollary \ref{CORPropagateToRelations}(r1)
$E\lll D_{x_1}$.
Since $R(x_1,\dots,x_n)$ is crucial in $D^{(2)}$ 
we have $E\cap D_{x_{1}}^{(2)}=\varnothing$.

If $E\cap D_{x_1}^{(1)}=\varnothing$, choose 
 $C$, $B$, and $\mathcal T\in\{\TBA,\TC,\TS,\TL,\TPC,\TD\}$ such that
$D_{x_1}^{(1)}\lll^{D_{x_1}} C<_{\mathcal T}^{D_{x_1}}B
\lll D_{x_{1}}$,
$E\cap C=\varnothing$, and
 $E\cap B\neq \varnothing$.
Since $R^{(1)}$ is not empty
and $D^{(2)}\le_{\mathcal M T}^{D}D^{(1)}$, 
Corollary \ref{CORMainStableIntersection} implies that
$\mathcal T = T$.
If $E\cap D_{x_1}^{(1)}\neq\varnothing$ put 
$C = D_{x_1}^{(2)}$ and $B = D_{x_1}^{(1)}$.
Thus, in both cases we have 
 $C<_{\mathcal M T}^{D_{x_1}}B\lll D_{x_{1}}$.

Notice that $E\circ \ConOne(R,1) = E$,
hence $(E\circ \ConOne(R,1))\cap C=\varnothing$.
Let $\sigma\supseteq \ConOne(R,1)$ be a maximal congruence 
such that $(E\circ \sigma)\cap C=\varnothing$.
If $\sigma\supsetneq \ConOne(R,1)$ then 
we weaken the constraint 
$R(x_1,\dots,x_n)$ to 
$R_0(x_1,\dots,x_n)$, where $R_0(x_1,\dots,x_n) = 
\exists z R(z,x_2,\dots,x_n)\wedge \sigma(z,x_1)$.
The obtained constraint must have a solution 
in $D^{(2)}$, which means that $(E\circ \sigma)\cap D_{x_1}^{(2)}\neq\varnothing$
and contradicts $(E\circ \sigma)\cap C=\varnothing$.
Thus, $\sigma = \ConOne(R,1)$.
By Lemma \ref{LEMMaximalMultExtention} 
$\ConOne(R,1)=\omega_1\cap \dots\cap \omega_{s}$ for some  congruences
$\omega_1,\dots,\omega_s$ of type $T$ such that $\omega_{i}^{*}\supseteq B^{2}$.
By Lemma \ref{LEMCrucialMeansIrreducible}
$\ConOne(R,1)$ is irreducible, 
hence $\ConOne(R,1)$ is a congruence of type $T$ 
satisfying $\ConOne(R,1)^{*}\supseteq B^{2}\supseteq (D_{x_{1}}^{(1)})^{2}$.

It remains to show that $n=2$ for $T = \TPC$. By Corollary \ref{CORMainStableIntersection}
there exist $i,j\in[n]$, 
$B_{i}<_{PC(\sigma_{i})}^{D_{x_{i}}}D_{x_{i}}^{(1)}$,
and 
$B_{j}<_{PC(\sigma_{j})}^{D_{x_{j}}}D_{x_{j}}^{(1)}$
such that 
$R$ has no tuples whose $i$-th element is from $B_{i}$ and 
$j$-th element is from $B_{j}$.
If $n\ge 3$ the we can existentially quantify all the variables of 
$R$ but $i$-th and $j$-th and obtain a weaker constraint 
without a solution in $D^{(2)}$, 
which contradicts cruciality. 
% Hence $n=2$, and $\{i,j\}=\{1,2\}$. 
% Let
% $R_{1} = R\cap (B_1\times D_{x_{2}})$ and 
% $R_{2} = R\cap (D_{x_{1}}\times B_2)$. 
% Let $\sigma_1'$ and $\sigma_2'$ be the extensions of $\sigma_1$ and $\sigma_2$ to $R$, respectively.
% By Corollary \ref{CORPropagateToRelations}(r)
% $R^{(1)}\lll R$,
% by Lemma \ref{LEMPropagation}(bt) 
% $R_{1}<_{PC(\sigma_1')}(R\cap (D_{x_1}^{(1)}\times D_{x_2}))$
% and
% $R_{2}<_{PC(\sigma_2')}(R\cap (D_{x_2}\times D_{x_2}^{(1)}))$,
% then 
% by Lemma \ref{LEMIntersectALL} 
% $R_{i}^{(1)}<_{PC(\sigma_{i}')}^{R}R^{(1)}$ for $i\in\{1,2\}$.
% Applying Theorem \ref{THMMainStableIntersection}
% to $R_{1}^{(1)}\cap R_{2}^{(1)} = \varnothing$
% we obtain $\sigma_1'=\sigma_2'$.
% Hence, for any $a,b\in D_{x_{1}}$ and $c,d\in D_{x_2}$
% satisfying $(a,c),(b,d)\in R$
% we have $(a,b)\in\sigma_1\Leftrightarrow (c,d)\in\sigma_2$.
% Hence, if $R\neq \sigma_{1}\circ R\circ \sigma_2$ 
% then $\sigma_{1}\circ R\circ \sigma_2$ is a weakening of 
% $R$ without a solution in $D^{(2)}$, which contradicts cruciality.
% Hence $R=\sigma_{1}\circ R\circ \sigma_2$,
% $\ConOne(R,1) = \sigma_1$, and $\ConOne(R,2) = \sigma_2$.
%}
\end{proof}

\begin{lem}\label{LEMGetABridgeFromSubdirectPCLinearInstance}
Suppose% {\color{red} NEW}
\begin{enumerate} 
    \item $\mathcal I$ is an instance having a subdirect solution set,
%    \item $R(x_{i_1},\dots,x_{i_{s}})$ is a rectangular constraint of $I$
%    \item $\ConOne(R,1)$ is an irreducible congruence
    \item $D^{(1)}$ is a reduction for $\mathcal I$ such that $D_{x}^{(1)}\lll D_{x}$ for every $x$,
    \item $C$ is a constraint in $\mathcal I$ of type $T\in\{\TPC,\TL\}$,
    %, where $T\in\{L,PC\}$,  
    \item $B<_{\mathcal T(\xi)}^{D_{z}} D_{z}^{(1)}$ for some variable $z$, where
    $\mathcal T\in\{\TBA, \TC, \TS,\TPC,\TL\}$,
    \item if $T = \TPC$ then $\mathcal T\in\{\TPC,\TL\}$,    
    \item $\mathcal I^{(1)}$ has a solution,
    \item $\mathcal I^{(1)}$ has no solutions with $z\in B$,
    \item weakening of $C$ in $\mathcal I$ gives an instance with a solution in $D^{(1)}$ and $z\in B$. 
\end{enumerate}
Then 
$\mathcal T=T$ and for any variable $x$ of $C$ 
%for every $\eta\in\Congruences(\mathcal I,x)$
there exists a bridge $\delta$ from 
$\xi$ to $\ConOne(C,x)$ 
such that $\widetilde \delta$ contains $\mathcal I(z,x)$.
\end{lem}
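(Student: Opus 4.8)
The plan is to fix a variable $x$ of $C$; by reordering, write $C = R(x,x_2,\dots,x_m)$, and work with the subdirect solution‑set relation $V := \mathcal I(x,x_2,\dots,x_m,z)$ together with its ``$C$‑free'' relaxation $W := \mathcal J(x,x_2,\dots,x_m,z)$, where $\mathcal J := \mathcal I\setminus\{C\}$ (if $z\in\Var(C)$ I duplicate that coordinate, as allowed by the Remark after Corollary~\ref{CORMainStableIntersection}). Since the solution set of $\mathcal I$ is subdirect, so is that of $\mathcal J$, hence $V$ and $W$ are subdirect, $V = W\cap(R\times D_z)$, and $\proj_{x,z}(W)\supseteq\proj_{x,z}(V) = \mathcal I(z,x)$ — this last inclusion is what will ultimately force $\widetilde\delta\supseteq\mathcal I(z,x)$. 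Hypotheses (6)--(8) then say: $V^{(1)}\neq\varnothing$; $V^{(1)}$ has no tuple whose $z$‑coordinate lies in $B$; but $W^{(1)}$, intersected with the solution sets of the weakenings of $C$ on the $\Var(C)$‑coordinates, does contain such a tuple. Thus $C$ is a crucial constraint for the property ``$\mathcal I^{(1)}$ has a solution with $z\in B$'', and the whole difficulty lies in the interaction between the type‑$\mathcal T$ subuniverse $B$ (carved out by $\xi$ on $\mathbf D_z$) and the type‑$T$ relation $R$.

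I would next run the maximal‑congruence argument from the proof of Lemma~\ref{LEMParalPropertyFromCrucialInMultiType} inside the subinstance generated by $C$ and a path connecting $z$ to $\Var(C)$ (such a path exists: $V$ is subdirect, and if $z$ and $\Var(C)$ lay in different connected components of $\mathcal I$ then weakening $C$ could not affect whether $\mathcal I^{(1)}$ has a solution with $z\in B$). On the $x$‑side this pins the relevant congruence down to $\ConOne(C,x) = \ConOne(R,1)$ and shows it is an irreducible congruence of type $T$ (using Lemma~\ref{LEMCrucialMeansIrreducible} and Lemma~\ref{LEMMaximalMultExtention}); the parallelogram property of $C$ that this argument supplies on the subinstance is what rules out the nuisance case in which the blocking weakening of $C$ drops some of its variables. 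On the $z$‑side, I would apply Corollary~\ref{CORMainStableIntersection} to $V^{(1)}$ (more precisely, to its expansion along the chosen path, via a tree‑covering, so that the hypotheses of that corollary are genuinely met) with the restriction $B$ of type $\mathcal T(\xi)$ at $z$ and a restriction coming from a block of $\cover{\ConOne(C,x)}$ of type $T$ at $x$; since the resulting stable intersection is empty while the one‑restriction‑removed intersections are not, the corollary forces all the involved types to coincide, and as one of them is $T\in\{\TPC,\TL\}$ we get $\mathcal T = T$, so $\xi$ is a genuine irreducible congruence of this type. The same corollary hands over a bridge $\delta_0$ from $\xi$ to $\ConOne(C,x)$ (clause (l) when $T=\TL$; clause (pc) when $T=\TPC$, the bijective‑quotient datum being reassembled into a bridge in the shape of Lemma~\ref{LEMBridgeTOPCCongruence}), and in both cases $\widetilde{\delta_0}$ comes out as $\xi\circ(\text{a relation containing }\mathcal I(z,x))\circ\ConOne(C,x)$, hence $\widetilde{\delta_0}\supseteq\mathcal I(z,x)$ because $\xi$ and $\ConOne(C,x)$ are reflexive.

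Finally, should $\widetilde{\delta_0}$ not already be large enough, I would for each pair $(c,a)\in\mathcal I(z,x)$ take a solution of $\mathcal I$ realizing it, read off from it a path from $z$ to $x$ ending at the constraint $C$, convert that path into a bridge from $\xi$ to $\ConOne(C,x)$ containing $(c,a)$ by Lemmas~\ref{LEMBridgeFromRelation} and~\ref{LEMBridgeComposition} (legitimate now that $\xi$ and $\ConOne(C,x)$ are known irreducible), and compose all of these with $\delta_0$ to obtain a single bridge $\delta$ with $\widetilde\delta\supseteq\mathcal I(z,x)$. The hard part is the middle paragraph: coercing the relational picture into a form where Corollary~\ref{CORMainStableIntersection} and the crucial‑constraint machinery of Lemma~\ref{LEMParalPropertyFromCrucialInMultiType} apply \emph{simultaneously} at $z$ and at $x$, so that one reads off at once $\mathcal T=T$ \emph{and} the exact endpoint congruences $\xi$ and $\ConOne(C,x)$ — all without assuming $\mathcal I^{(1)}$ consistent and while correctly handling weakenings of $C$ that might a priori use fewer variables.
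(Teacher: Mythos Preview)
Your proposal has a genuine gap in the middle paragraph, precisely at the point you flag as ``the hard part''. You want to apply Corollary~\ref{CORMainStableIntersection} with, at the $x$-coordinate, a restriction of $D_x^{(1)}$ to a block of $\ConOne(C,x)$ of type $T$. For this to be a legal input to that corollary you need $(\text{block})\cap D_x^{(1)}<_{T(\ConOne(C,x))}^{D_x} D_x^{(1)}$, i.e.\ $\ConOne(C,x)$ must be a dividing congruence \emph{for} $D_x^{(1)}\lll D_x$: you need $(D_x^{(1)})^2\subseteq\ConOne(C,x)^*$ and $D_x^{(1)}/\ConOne(C,x)$ BA- and center-free. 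Hypothesis~(3) only says $\ConOne(C,x)$ is a PC/linear congruence on $\mathbf D_x$; it says nothing about how $\ConOne(C,x)^*$ sits relative to $D_x^{(1)}$, and nothing rules out $D_x^{(1)}$ straddling several blocks of $\ConOne(C,x)^*$ or picking up absorbing structure in the quotient. Your appeal to Lemma~\ref{LEMMaximalMultExtention} does not repair this: that lemma assumes $C_1<_{\mathcal MT}^A B_1$, whereas here the only reduction you have is a single $\mathcal T$-subuniverse at $z$, not an $\mathcal MT$-reduction at $x$.

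The paper sidesteps exactly this obstruction by \emph{detaching} the variable $x_0$ of $C$: it introduces a fresh copy $x_0''$ with unrestricted domain $D_{x_0}$ (not $D_{x_0}^{(1)}$), tethers it to the old instance only through the relation $\omega^*$, and then runs a minimization over reductions $D^{(\top)}$ between $D^{(1)}$ and $D$ to locate an intermediate step $G<_{\mathcal T_0(\nu)}^{D_y}D_y^{(\top)}$ where solvability breaks. This produces a first bridge $\xi\to\nu$ via Corollary~\ref{CORMainStableIntersection}. The set $F$ of attainable values of $x_0''$ then satisfies $F\lll D_{x_0}$ by construction (Corollary~\ref{CORPropagateToRelations}(r1)), and since $F\circ\omega=F$ one can use Lemma~\ref{LEMUbiquity} on $F/\omega$ to single out a block $E$ of $\omega$ with $E\lll F$, and then descend from $E$ to find $E'<_{T_0(\zeta)}^{D_{x_0}}E''\lll D_{x_0}$ where solvability breaks again --- only at that stage is it shown that $\zeta=\omega$. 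This two-phase procedure (detach, then descend along a fresh $\lll$-chain manufactured on the spot) is what produces a legitimate $<_T^{D_{x_0}}$-restriction at $x_0$; your plan tries to conjure one directly from $\ConOne(C,x)$, and that is the missing idea.
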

\begin{proof}
By $D^{(2)}$ we denote the reduction that differs from $D^{(1)}$ only 
on the variable $z$ and $D_{z}^{(2)} = B$.

Choose some variable $x_{0}$ in $C$.
Let $\omega = \ConOne(C,x_{0})$. 
%Since $C$ is crucial in $D^{(2)}$, 
%Lemma \ref{LEMCrucialMeansIrreducible} implies that 
%$\ConOne(C,x_{0})$ is an irreducible congruence.
By condition 3, $\omega$ is either a PC, or  linear congruence.
We take $\mathcal I$, replace the variable $x_{0}$ in $C$ by $x_{0}''$,
all the other variables $x_{i}$  by
$x_{i}'$, 
and add
a new constraint 
$\omega^{*}(x_{0}',x_{0}'')$.
%Thus, the variable $x_{0}$ appears once in $C$ 
%and once in $\omega^{*}(x_{0},x_{0}')$.
%replace all the other variables $x$ by $x''$, 
The obtained instance we denote by $\Theta$.
Extend our reduction $D^{(1)}$ and $D^{(2)}$ to $\Theta$ by 
$D_{x_{0}''}^{(1)} = D_{x_{0}''}^{(2)}=D_{x_{0}''}=D_{x_{0}}$,
$D_{x_{i}'}^{(2)} = D_{x_{i}}^{(2)}$, and $D_{x_{i}'}^{(1)} = D_{x_{i}}^{(1)}$.
Notice that the solution set of $\Theta$ is still subdirect
and $\Theta^{(2)}$ has a solution.
% Let $\Upsilon$ be obtained 
% by joining $\Theta$ and $\mathcal I$ 
% with the additional constraint 
% $\omega(x_{0},x_{0}'')$.
% Notice that the solution set of $\Upsilon$ is still subdirect.

%Denote  
%$F = \Theta^{(2)}(x_{0})\circ \omega$.
%By Corollaries \ref{CORPropagateToRelations}(r1)
%and \ref{CORPropagateMultiplyByCongruence}(f) we have 
%$F\lll D_{x_{1}}$. 
%Also $B$ is compatible with $\omega^{*}$.

Let us consider a minimal reduction $D^{(\top)}$ 
for $\mathcal I$
such that 
$D^{(1)}_{x}\lll^{D_{x}} D^{(\top)}_{x}\lll D_{x}$ for every $x$ 
and 
$\Theta^{(2)}\wedge \mathcal I^{(\top)}\wedge \omega(x_0,x_0'')$
has a solution.
%$\mathcal I^{(\top)}(x_{0})$ has a nonempty intersection with 
%$\Theta^{(2)}(x_{0})\circ\omega$.
%If even $\mathcal I^{(1)}(x_{1})$ has a nonempty intersection with $B$ then 
%we set $D^{(\top)} = D^{(2)}$.
%Since $\mathcal I^{(2)}$ has no solutions, 
%$D^{(\top)}\neq D^{(2)}$.
If $D^{(\top)}\neq D^{(1)}$, 
choose a variable $y$ and $G <_{\mathcal T_{0}(\nu)}^{D_{y}} D_{y}^{(\top)}$
such that 
%$D_{y}^{(\top)}\neq D_{y}^{(1)}$ 
%and 
$D_{y}^{(1)}\lll^{D_{y}} G$.
If $D^{(\top)}= D^{(1)}$ then 
put $G = B$, $y = z$, $\mathcal T_{0} = \mathcal T$, and $\nu = \xi$.

Define a new reduction $D^{(\bot)}$
by $D_{y}^{(\bot)}=G$
%such that $D_{y}^{(2)}\lll^{D_{y}} D_{y}^{(\bot)} <_{T(\xi)}^{D_{y}} D_{y}^{(\top)}$
and $D_{x}^{(\bot)} =D_{x}^{(\top)}$ for every $x\neq y$.
We extend the reduction $D^{(\top)}$ and $D^{(\bot)}$ 
so that the reductions on $x_{i}$ and $x_{i}'$ coincide
and 
$D_{x_{0}''}^{(\top)} = D_{x_{0}''}^{(\bot)}=D_{x_{0}}$.
Since the instance $\Theta\wedge \mathcal I\wedge \omega(x_0,x_0'')$
has a subdirect solution set, 
by Corollary \ref{CORMainStableIntersection}
the types $\mathcal T$ and $\mathcal T_{0}$ are the same.
Moreover, if $\mathcal T \in \{\TL,\TPC\}$, 
there exists a bridge $\delta'$ from $\xi$ to $\nu$
such that 
$\widetilde \delta' \supseteq \Theta(z',x_{0}'')\circ \mathcal I(x_{0},y)\supseteq 
\mathcal I(z,y)$
(if $z=y$ then it is just a trivial reflexive bridge).
Let $F$ be 
set of possible values of $x_{0}''$ in the solutions of $\Theta^{(2)}\wedge \mathcal I^{(\top)}\wedge \omega(x_0,x_0'')$.
In other words, 
$F = \Theta^{(2)}(x_{0}'')\cap (\mathcal I^{(\top)}(x_{0})\circ\omega)$.
Since the variable $x_{0}''$ only appears in the constraint $C$ and $\omega^{*}$ 
we have $F\circ \omega = F$.
By Corollaries \ref{CORPropagateToRelations}(r1)
$F\lll D_{x_{1}}$.
By Lemma \ref{LEMUbiquity}
we find a single block $E$ of $\omega$ such that 
$\{E\}\lll^{D_{x_{0}}/\omega} F/\omega$.
By Corollary \ref{CORPropagateMultiplyByCongruence}(f) 
$E\lll F$.
%and we may assume that 
%in the sequence $\lll$ we have subsets stable under $\omega$. 
%Let us consider the minimal
%$E'$

As
$\mathcal I^{(\top)}$ has a solution with $x_{0}\in E\subseteq F$,   
the instance $\Theta^{(\top)}$
has a solution with $x_{0}',x_{0}''\in E$.
As $\Theta^{(2)}$ has a solution with $x_{0}''\in E\subseteq F$, 
the instance $\Theta^{(\bot)}$ has a solution with 
$x_{0}''\in E$.
As $\mathcal I^{(\bot)}$ has no solutions 
with $x_{0}\in F\supseteq E$, the instance $\Theta^{(\bot)}$ has no solutions 
with $x_{0}',x_{0}''\in E$.
Consider two cases:

Case 1. $T = \TL$ and $\mathcal T\in\{\TBA,\TC,\TS\}$.
Let $G_1$ be the
set of all values of $x_{0}'$ in 
solutions of $\Theta^{(\top)}$ with $x_{0}''\in E$, 
and 
$G_2$ be the
set of all values of $x_{0}'$ in 
solutions of $\Theta^{(\bot)}$ with $x_{0}''\in E$.
By Lemma \ref{LEMBACenterImplies} 
$G_{2}\le_{\mathcal T} G_{1}$.
By Corollary \ref{CORPropagationModuloCongruence}(s)
$G_{2}/\omega\le_{\mathcal T} G_{1}/\omega$.
Since $E\subseteq  G_{1}$ and $E\not\subseteq G_{2}$, 
we have $G_{2}/\omega<_{\mathcal T} G_{1}/\omega$.
By the construction of $\Theta$, $G_{1}$ and $G_{2}$ are 
from the same block of $\omega^{*}$.
Hence we obtained a BA or central subuniverse in
a block of $\omega^{*}$, which contradicts the properties of 
a linear congruence.

Case 2. $\mathcal T\in\{\TPC,\TL\}$.
Choose $E'$ and $E''$  
such that 
$E\lll^{D_{x_{0}}} E' <_{T_{0}(\zeta)}^{D_{x_{0}}} E''\lll D_{x_{0}}$ and
$\Theta^{(\bot)}$ has a solution with $x_{0}''\in E, x_{0}'\in 
E''$ 
but has no solutions with $x_{0}''\in E, x_{0}'\in E'$.
Notice that we can choose $E'$, $E''$, and $\zeta$ stable under $\omega$ 
as $E'$ and $E''$ may come from 
$\{E\}\lll D_{x_{0}}/\omega$
and Corollary \ref{CORPropagateMultiplyByCongruence}(f).
By Corollary \ref{CORMainStableIntersection}
we derive that 
$T_0=\mathcal T_{0}$ and there exists a bridge $\delta''$ from 
$\nu$ to
$\zeta$ 
such that $\widetilde \delta'' =\Theta(y'',x_0'')\supseteq \mathcal I(y,x_{0})$.
Notice that 
$\zeta$ must be equal to $\omega$ as otherwise 
$\omega^{*}\subseteq\zeta$ and 
any solution 
of $\Theta^{(\bot)}$ with $x_{0}\in E$ and $x_{0}'\in E''$
also 
satisfies $x_{0}'\in E'$. 
It remains to compose 
bridges $\delta'$ and $\delta''$ to obtain a bridge $\delta$
from $\xi$ to $\omega$
such that  
$\widetilde \delta \supseteq\mathcal I(z,y)\circ \mathcal I(y,x_0)\supseteq \mathcal I(z,x_{0})$.
\end{proof}

\begin{cor}\label{CORSameTypeReductionAndConstraint}
Suppose 
$\mathcal I$ is an instance having subdirect solution set, 
$D^{(1)}$ and $D^{(2)}$ are reductions for $\mathcal I$,
$\mathcal I^{(1)}$ has a solution,
$D^{(2)}\le_{T}^{D}D^{(1)}\lll D$, where $T\in\{\TBA,\TC,\TS\}$,
$C$ is a constraint of $\mathcal I$ of type $\TL$. 
Then $C$ is not crucial in $D^{(2)}$.
\end{cor}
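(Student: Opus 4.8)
The plan is to derive a contradiction from the assumption that $C$ is crucial in $D^{(2)}$ by producing a bridge that forces the linear congruence $\ConOne(C,x)$ to behave badly. Suppose for contradiction that $C$ is crucial in $D^{(2)}$. Since $C$ has type $\TL$, pick any variable $x$ of $C$ and let $\omega = \ConOne(C,x)$, which is a linear congruence. The reduction $D^{(2)}$ is obtained from $D^{(1)}$ by a chain of $\TBA/\TC/\TS$ reductions, so there is at least one variable $z$ and a subuniverse $B<_{\mathcal T(\xi)}^{D_z}D_z^{(1)}$ with $\mathcal T\in\{\TBA,\TC,\TS\}$ at which the first ``real'' reduction occurs; more precisely I would first reduce to the case where $D^{(2)}$ differs from $D^{(1)}$ only at one variable $z$ (peeling off one reduction at a time along the $\lll$-chain, and noting that once $\mathcal I^{(1)}$ loses all solutions we are done immediately, and otherwise we can replace $D^{(1)}$ by the intermediate reduction and keep going until the single-step case). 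Then $\mathcal I^{(1)}$ has a solution, $\mathcal I^{(2)}$ (hence $\mathcal I$ restricted with $z\in B$) has no solution — this is where cruciality of $C$ as a constraint inside $\mathcal I$ enters — and weakening $C$ in $\mathcal I$ yields an instance with a solution in $D^{(1)}$ with $z\in B$.

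At this point I would invoke Lemma~\ref{LEMGetABridgeFromSubdirectPCLinearInstance} directly. All of its hypotheses are met: $\mathcal I$ has subdirect solution set; $D^{(1)}_x\lll D_x$ for every $x$; $C$ has type $T=\TL\in\{\TPC,\TL\}$; $B<_{\mathcal T(\xi)}^{D_z}D_z^{(1)}$ with $\mathcal T\in\{\TBA,\TC,\TS\}\subseteq\{\TBA,\TC,\TS,\TPC,\TL\}$; condition~5 of that lemma (the implication ``$T=\TPC\Rightarrow\mathcal T\in\{\TPC,\TL\}$'') holds vacuously since here $T=\TL$; $\mathcal I^{(1)}$ has a solution; $\mathcal I^{(1)}$ has no solution with $z\in B$ (that is $\mathcal I^{(2)}$, empty by cruciality); and the last condition is exactly the cruciality of $C$ in $D^{(2)}$ translated through the single-step reduction. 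The conclusion of the lemma gives $\mathcal T = T$. But $\mathcal T\in\{\TBA,\TC,\TS\}$ and $T=\TL$, and these are disjoint families of types — a linear congruence type is never one of the strong-absorbing/central/S types. This contradiction finishes the proof.

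The one point that needs care — and which I expect to be the main obstacle — is the reduction to a single-step reduction and checking that the hypotheses of Lemma~\ref{LEMGetABridgeFromSubdirectPCLinearInstance} are genuinely satisfied, in particular condition~8 (weakening of $C$ gives a solution in $D^{(1)}$ with $z\in B$) and condition~7 (no solution of $\mathcal I^{(1)}$ with $z\in B$). Condition~8 follows because $C$ is crucial in $D^{(2)}$: by definition the weakening of $C$ gives an instance with a solution in $D^{(2)}$, and $D^{(2)}$ agrees with $D^{(1)}$ except that $z$ is restricted to $B=D_z^{(2)}$, so that solution lives in $D^{(1)}$ with $z\in B$. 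Condition~7 is just that $\mathcal I^{(2)}$ has no solutions, again part of cruciality. If $\mathcal I^{(1)}$ itself had no solution we would be in a degenerate situation, but $\mathcal I^{(1)}$ is assumed to have a solution, so this does not arise. Thus no real calculation is needed; the whole statement is essentially a one-line corollary of Lemma~\ref{LEMGetABridgeFromSubdirectPCLinearInstance} once the single-step normalization is set up, together with the observation that the type families $\{\TBA,\TC,\TS\}$ and $\{\TL\}$ are disjoint.
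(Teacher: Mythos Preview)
Your approach is correct and is essentially the same as the paper's: the paper just phrases your ``peeling'' more compactly by taking a \emph{minimal} reduction $D^{(\top)}$ with $D_x^{(\top)}\in\{D_x^{(1)},D_x^{(2)}\}$ for every $x$ such that $\mathcal I^{(\top)}$ still has a solution, then picking any $z$ with $D_z^{(\top)}=D_z^{(1)}$ and $B=D_z^{(2)}$, and applying Lemma~\ref{LEMGetABridgeFromSubdirectPCLinearInstance} exactly as you do. Your verification of conditions~7 and~8 via cruciality and the inclusion (original)~$D^{(2)}\le D^{(\top)}$ is right; just beware the notational clash between the corollary's $T\in\{\TBA,\TC,\TS\}$ and the lemma's $T=\TL$.
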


\begin{proof}
We take a minimal reduction $D^{(\top)}$ 
such that 
$D_{x}^{(\top)}\in \{D_{x}^{(1)},D_{x}^{(2)}\}$
for every $x$ and 
$\mathcal I^{(\top)}$ has a solution.
Take some variable $z$ such that 
$D_{z}^{(\top)}= D_{z}^{(1)}$, 
take $B = D_{z}^{(2)}$, and apply Lemma \ref{LEMGetABridgeFromSubdirectPCLinearInstance}
for the reduction $D^{(\top)}$.
\end{proof}

\subsection{Main Statements}
\begin{thm}\label{THMMainInductiveCSPClaim}
Suppose 
\begin{itemize}
    \item $D^{(1)}$ is a 1-consistent reduction of an irreducible, cycle-consistent instance $\mathcal I$; 
    \item $D^{(1)}\lll D$.
\end{itemize}
If  $\mathcal I$ is crucial in $D^{(1)}$ then  (1a) and ((1b) or (1c)).
    \begin{enumerate} 
    \item[(1a)] every constraint of $\mathcal I$ has the parallelogram property;
%    and has the linear or PC type;
%    and every $\ConOne(C,x)$ is a linear or PC congruence;
    \item[(1b)] $\mathcal I$ is a connected linear-type instance having a subdirect solution set;  
    \item[(1c)] there exists a expanded covering $\mathcal J$ of $\mathcal I$ with 
    a linked connected subinstance $\Upsilon$ such that the solution set of $\Upsilon$ 
    is not subdirect and $\mathcal J$ is crucial in $D^{(1)}$.
    \end{enumerate}
If $D^{(2)}\le_{\mathcal T} D^{(1)}$ is a 1-consistent reduction
    of $\mathcal I$, where $\mathcal T\in\{\TBA, \TC\}$, and $\mathcal I^{(1)}$ has a solution, then
    \begin{enumerate}
    \item[(2)]     $\mathcal I^{(2)}$ has a solution.
    \end{enumerate}
\end{thm}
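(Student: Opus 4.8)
The plan is to prove the statement by induction on $|D| = \sum_{x}|D_x|$, or more precisely on a measure that decreases when we pass to reductions (smaller domains) or to factor algebras. The base case is when all $D_x^{(1)}$ are singletons, where everything is trivial. For the inductive step I would first establish (1a): assuming $\mathcal I$ is crucial in $D^{(1)}$, I want to show each constraint has the parallelogram property. The idea is to run Lemma \ref{LEMUbiquity} on each $D_x^{(1)}$ to find a strong subuniverse $B_x <_{S} D_x^{(1)}$ with $S\in\{\TBA,\TC,\TL,\TPC\}$; then, depending on whether a $\TBA$/$\TC$ reduction is available, either use Lemma \ref{LEMFindOneConsistentForAll} together with part (2) applied inductively (on the smaller instance) to derive a contradiction with cruciality, or use Lemma \ref{LEMMinimalPCLinearReductionIsConsistent} to pass to a minimal $\mathcal MT$-reduction $D^{(2)}$ on which the instance is still 1-consistent and crucial, then invoke Lemma \ref{LEMParalPropertyFromCrucialInMultiType} to conclude that every $R$ has the parallelogram property and that $\ConOne(R,i)$ is an irreducible congruence of type $T$.

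Once (1a) holds, all constraints are rectangular and all congruences in $\Congruences(\mathcal I)$ are irreducible, so ``connected'' and ``linear-type'' make sense. For the dichotomy ((1b) or (1c)): if the solution set of $\mathcal I$ is not subdirect, irreducibility of $\mathcal I$ forces $\mathcal I$ (or a projection) to be fragmented or linked; combining this with expanded coverings and property (p7) I would extract a linked connected subinstance $\Upsilon$ inside a crucial expanded covering $\mathcal J$, which is exactly (1c)---here I would use Lemma \ref{LEMConnectedProperties} to verify connectedness via bridges built along paths, and cycle-consistency to make the bridges reflexive. If the solution set \emph{is} subdirect, then the type analysis of (1a) together with Lemma \ref{LEMParalPropertyFromCrucialInMultiType} (the ``moreover'' with $n=2$ ruling out $\TPC$ unless binary) should force all constraints to be of linear type, and I would then show the instance is connected (again via Lemma \ref{LEMConnectedProperties}(a,b) and the bridges coming from Corollary \ref{CORMainStableIntersection}(l), which supplies bridges $\delta$ with $\widetilde\delta = \sigma_k\circ\proj_{k,\ell}(R)\circ\sigma_\ell$), giving (1b).

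For part (2): given a $\TBA$/$\TC$ reduction $D^{(2)}\le_{\mathcal T}D^{(1)}$ and a solution of $\mathcal I^{(1)}$, suppose for contradiction $\mathcal I^{(2)}$ has no solution. Then by Remark \ref{GetCrucialInstance} we can weaken to a crucial instance $\mathcal I''$ in some reduction between $D^{(2)}$ and $D^{(1)}$, and apply the already-proved part (1) to $\mathcal I''$. In case (1b), $\mathcal I''$ has a subdirect solution set and a linear-type constraint, and Corollary \ref{CORSameTypeReductionAndConstraint} directly says such a constraint cannot be crucial in a $\TBA$/$\TC$-reduction---contradiction. In case (1c), I would pass to the linked connected subinstance $\Upsilon$ in a crucial expanded covering; factoring by the perfect linear congruences guaranteed by Lemma \ref{LEMConnectedProperties}(p) (when $\Upsilon$ is linked), or using Lemma \ref{LEMFindOneConsistentForAll} and induction on the smaller factor domains, I would contradict cruciality of $\Upsilon$ under the $\TBA$/$\TC$ reduction via Lemma \ref{LEMBACenterImplies} applied to the subdirect solution relation of a tree-covering. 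The main obstacle I anticipate is making the induction bookkeeping airtight across the three interlocking claims: (1a) for the current instance is needed before (2), but (2) for strictly smaller instances is needed to prove (1a), so the induction must be carefully stratified (e.g. first prove ``(1a)$\wedge$((1b)$\vee$(1c)) $\wedge$ (2) holds for all instances with domain measure $<N$'', then prove each of the three for measure $N$ in the right order, using the full induction hypothesis for smaller measures and only the already-established pieces at level $N$).
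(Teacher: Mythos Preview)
Your high-level architecture is right and matches the paper: induction on the size of $D^{(1)}$, prove (2) using the inductive instance of (1), and split (1a) according to whether some $D_x^{(1)}$ has a $\TBA/\TC$ subuniverse. The argument for (2) via a perfect linear congruence and a $\mathbf Z_p$-valued auxiliary variable is exactly what the paper does.

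There are two genuine gaps.

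\textbf{The minor one, in (1a).} In Case~2 (no $\TBA/\TC$ subuniverse), Lemma~\ref{LEMMinimalPCLinearReductionIsConsistent} gives you a dichotomy: either some $C^{(2)}$ is empty, \emph{or} $\mathcal I^{(2)}$ is 1-consistent. You only discuss the first branch (where Lemma~\ref{LEMParalPropertyFromCrucialInMultiType} applies to the single surviving constraint). In the second branch you cannot invoke that lemma; instead you must weaken $\mathcal I$ to be crucial in $D^{(2)}$ and apply the \emph{inductive} (1a) to the smaller reduction. Your write-up conflates these two branches.

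\textbf{The major one, in (1b)/(1c).} ``Connected'' is the crux: you must produce, for every pair of constraints $C_1,C_2$ sharing a variable $x$, a \emph{reflexive} bridge from $\ConOne(C_1,x)$ to $\ConOne(C_2,x)$. Saying ``irreducibility forces linked'' and citing Corollary~\ref{CORMainStableIntersection}(l) is not enough---that corollary only fires once you have engineered a relation $R$ and reductions whose intersection is empty while every ``all-but-one'' intersection is nonempty. The paper manufactures this situation as follows: fix a dividing congruence $\sigma$ on some $D_z^{(1)}$, enumerate its blocks $\mathcal B$, build universal tree-coverings $\Upsilon_x$ so that $\Upsilon_x^{(B,\top)}(x)$ computes the maximal 1-consistent reduction under each block $B$, and then run a careful minimisation over sets $\Omega$ of weakenings of $\mathcal I$ (tracking $\Sol(\mathcal J)=\{B:\mathcal J^{(B,\bot)}\text{ has a solution}\}$) until each $\mathcal J\in\Omega$ is crucial in some $D^{(B,\bot)}$. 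Only then does Lemma~\ref{LEMGetABridgeFromSubdirectPCLinearInstance} apply, giving a bridge from $\sigma$ to each $\ConOne(C,x)$; composing two such bridges (and using cycle-consistency for reflexivity) connects $C_1$ and $C_2$. Your proposal contains none of this machinery, and without it there is no mechanism that forces the congruences of distinct constraints to be adjacent.
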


\begin{proof}
We prove the claim  by induction on the size of $D^{(1)}$.

\textbf{Let us prove (2)} first.
Assume that $\mathcal I^{(2)}$ has no solutions.
Weaken $\mathcal I^{(2)}$ to make it crucial in $D^{(2)}$
and denote the obtained instance by $\mathcal I'$.
By the inductive assumption for $\mathcal I'$ and $D^{(2)}$ 
the instance $\mathcal I'$ satisfies (1a) and also (1b) or (1c).
%Let us consider the case when there exists 
%an expanded covering $\mathcal J$ of $\mathcal I$ with 
%    a linked connected subinstance $\Upsilon$ such that the solution set of $\Upsilon$ 
%    is not subdirect.
%It covers the case (1c) completely and the case (1b) if the solution set of 
%$\mathcal I$ is not subdirect. % (it also implies that $\mathcal I$ is not linked).
Assume that $\mathcal I'$ satisfies (1c), then  
there exists an expanded covering $\mathcal J$ of $\mathcal I'$ with a linked connected subinstance 
$\Upsilon$ such that $\mathcal J$ is crucial in $D^{(2)}$ .
Let $x$ be a variable of a constraint $C\in\Upsilon$.
By Lemma \ref{LEMConnectedProperties}(p)
%By Lemma~\ref{CriticalMeansIrreducible}, $\ConOne(C,x_1)$ is irreducible.
%By Corollary~\ref{PathInConnectedComponent},
%there exists a perfect bridge $\delta$
%from $\ConOne(C,x)$
%to $\ConOne(C,x)$. % such that $\delta(x,x,y,y)$ is a full relation.
$\ConOne(C,x)$ is a perfect linear congruence.
%By Corollary~\ref{LinkedLink},
%there exists a relation
Choose $\zeta\subseteq \mathbf D_{x}\times \mathbf D_{x}\times \mathbf Z_{p}$
such that
$(y_{1},y_{2},0)\in \zeta\Leftrightarrow (y_{1},y_{2})\in\ConOne(C,x)$
and $\proj_{1,2}(\zeta) = \cover{\ConOne(C,x)}$.
Let us replace the variable $x$ of $C$ in $\mathcal J$ by $x'$
and add the constraint $\zeta(x,x',z)$.
The obtained instance we denote by $\Theta$.
We extend the reductions $D^{(1)}$ and $D^{(2)}$ to $x'$ by 
$D_{x'}^{(1)}=D_{x'}^{(2)}=D_{x'}$.
Let $E_{1}$ and $E_{2}$ be the set of all $z$ such that 
$\Theta$ has a solution in $D^{(1)}$ and in $D^{(2)}$, respectively.
By Lemma \ref{LEMBACenterImplies} 
$E_{2}\le_{\mathcal T}E_{1}$.
Since $\mathcal J$ is crucial in $D^{(2)}$, 
$\Theta^{(2)}$ must have some solution
but not a solution with $z=0$.
Since $\mathcal I^{(1)}$ has a solution, $\mathcal J^{(1)}$ also has a solution 
and  
$\Theta^{(1)}$ has a solution with $z=0$.
Hence, $E_{1}$ contains at least two different elements, 
$0\in E_{1}$ and $0\notin E_{2}$. 
Since $\mathbf Z_{p}$ does not have proper subalgebras of size greater than 1, 
we have $E_{1} = \mathbf Z_{p}$,
which contradicts the fact that 
$\mathbf Z_{p}$ has no BA or central subuniverses
(Lemma \ref{LEMNoAbsCenterPCInLinearAlgebra}).

Assume that $\mathcal I'$ satisfies (1b).
Applying Corollary \ref{CORSameTypeReductionAndConstraint} we derive 
a contradiction.
%that 
%$\mathcal T$ must be the linear type, which contradicts our assumptions.

\textbf{Let us prove (1a) and ((1b) or (1c))}. Notice that 
$|D_{x}^{(1)}|>1$ for some $x$ as otherwise we would get a contradiction of 
1-consistency of $\mathcal I$ and its cruciality in $D^{(1)}$.
Consider two cases.

Case 1. There exists a nontrivial $\TBA$ or central subuniverse on some $D^{(1)}_{x}$.
By Lemma \ref{LEMFindOneConsistentForAll} there exists a
1-consistent reduction $D^{(2)}\le_{T} D^{(1)}$ for $\mathcal I$ such that
$T\in\{BA,C\}$. 
Let us show that $\mathcal I$ is crucial in $D^{(2)}$.
Let $\mathcal J$ be obtained from $\mathcal I$ by a weakening of 
some constraint $C\in\mathcal I$. 
Then $\mathcal J^{(1)}$ has a solution. 
By the inductive assumption for $D^{(2)}$ we derive that 
$\mathcal J^{(2)}$ has a solution. 
Thus, any weakening of $\mathcal I$ has a solution in $D^{(2)}$
and $\mathcal I$ is crucial in $D^{(2)}$.
Again applying the inductive assumption to $D^{(2)}$ 
we derive the required conditions, which completes this case.

Case 2. Otherwise. 
By Lemma \ref{LEMUbiquity}, there exists  $E<_{T(\sigma)}^{D_{z}} D^{(1)}_{z}$ for some $z$
and $T\in\{\TPC,\TL\}$.
Let us prove (1a) first.
%For every variable $x$ consider the intersection $\sigma_{x}$ of all the congruences $\sigma$ such that 
%$B<_{\mathcal T(\sigma)}^{D_{x}}D_{x}^{(1)}$
%for some $B$.
%of type $T$ on $D_{x}$ such that $\sigma^{*}\supseteq (D_{x}^{(1)})^{2}$ and $\sigma\not\supseteq (D_{x}^{(1)})^{2}$.
Choose some constraint $C$ in $\mathcal I$. 
Since $\mathcal I$ is crucial, a weakening of this constraint gives a solution 
in $D^{(1)}$. By $s(x)$ we denote the value of $x$ in this solution.
For every variable $x$ 
choose the minimal $D^{(2)}_{x}\le_{\mathcal M T}D^{(1)}$ containing $s(x)$.
%By Lemma \ref{LEMMinimalContainingIsMinimal} 
%$D^{(2)}_{x}$ is a minimal \mathcal{M}T subuniverse. 
By Lemma \ref{LEMMultiTypeStillStable}
$D^{(2)}\lll^{D} D^{(1)}$ and
by Lemma \ref{LEMMinimalPCLinearReductionIsConsistent} we have two subcases.

Subcase 1. $\mathcal I^{(2)}$ is 1-consistent. 
Weaken the instance to make it crucial in $D^{(2)}$. Notice that the constraint $C$ must be in there because 
weakening of $C$ gives a solution in $D^{(2)}$.
Then applying the inductive assumption to the obtained instance 
(crucial in $D^{(2)}$) we obtain the required property 
(1a) for $C$.

Subcase 2. There exists some constraint $C'$ in $\mathcal I$ such that 
${C'}^{(2)}$ is empty. Since the weakening of $C$ gives a solution in $D^{(2)}$, 
$C'$ must be $C$. 
By Lemma \ref{LEMParalPropertyFromCrucialInMultiType}
$C$ has the parallelogram property, which is the property (1a).
%is a constraint of type $\mathcal T$ having the parallelogram property, 
%which completes the proof of (1a).

Let us prove that (1b) or (1c) holds.
Recall that we have $E<_{T(\sigma)}^{D_{z}} D^{(1)}_{z}$ for some $z$ and $T\in\{\TPC,\TL\}$.
Let $\mathcal B = \{B\mid B<_{T(\sigma)}^{D_{z}} D^{(1)}_{z}\}$.
For every $B\in \mathcal B$ we do the following. 
Let us consider the reduction $D^{(B,\top)}$ such that 
$D_{x}^{(B,\top)}=D_{x}^{(1)}$ if $x\neq z$ and $D_{z}^{(B,\top)}=B$.
Let $D^{(B,\bot)}$ be the maximal 1-consistent (probably empty) reduction for $\mathcal I$ 
such that $D^{(B,\bot)}\le D^{(B,\top)}$. 
%If there is no 1-consistent reduction then 
%we assume that $D_{x}^{(B,\bot)}=\varnothing$ for every $x$.
By Lemma \ref{LEMExistenceOfTreeCoverings},
for every variable $x$ and $B\in\mathcal B$ there exists a tree-covering $\Upsilon_{B,x}$
such that $\Upsilon_{B,x}^{(B,\top)}(x)$ defines $D_{x}^{(B,\bot)}$.
By $\Upsilon_{x} = \bigwedge_{B\in\mathcal B} \Upsilon_{B,x}$ 
we define one universal tree-covering, 
that is, 
$\Upsilon_{x}^{(B,\top)}(x)$ defines $D_{x}^{(B,\bot)}$
for every $B\in\mathcal B$.
We extend this definition to variables from an expanded covering of $\mathcal I$.
Precisely, 
for a variable $x'$ that is a child of $x$ 
by $\Upsilon_{x'}$ we denote $\Upsilon_{x}$ whose variable $x$ is replaced by $x'$. 
Let $\mathcal B_{0}$ be the set of all $B\in\mathcal B$ such that 
$D_{x}^{(B,\bot)}$ is not empty.
Let us consider two cases:

Case 1. $\mathcal B_{0}$ is empty. 
Consider a tree-covering $\Upsilon$ such that 
$\Upsilon^{(B,\top)}$ has no solutions for every $B\in \mathcal B$.
Since $\Upsilon$ is a tree-covering, 
its solution set is subdirect. 
Notice that $T$ cannot be the $\TPC$ type, because 
Lemma \ref{LEMFindOneConsistentForAll}
guarantees the existence of a nonempty reduction
$D_{x}^{(B,\bot)}$ for every $B\in\mathcal B$.
Hence $T=\TL$.
%First, let us show that the type $T$ of a subuniverse $B\in\mathcal B$ is %linear.
%Assume that 
%$T = PC$. Choose some $B\in\mathcal B$. 
%Since $\Upsilon^{(B,\top)}$ has no solution, if we restrict 
%children of $z$ to $B$ and other variables to $D^{(1)}$ we get an empty %solution set.
%By Corollary \ref{CORMainStableIntersection} it is sufficient to reduce just %two variables 
%to get an empty solution set. Since $D^{(1)}$ is a 1-consistent reduction, 
%we must restrict at least two children of $z$. Then it contradicts the fact %that 
%$\mathcal I$ is cycle-consistent and proves that $T=L$.

Then weaken $\Upsilon$ while we can keep the property that 
$\Upsilon^{(B,\top)}$ has no solutions for every $B\in \mathcal B$.
The obtained instance we denote by $\Upsilon'$.
Since $\mathcal I$ is crucial in $D^{(1)}$, 
$\Upsilon'$ must contain every constraint relation that appeared in $\mathcal I$.
%Otherwise, a solution of $\mathcal I^{(1)}$ gets after weakening this constraint 
%would give a solution of $\Upsilon'$ in the corresponding $B$.
Let us prove that $\mathcal I$ is connected.
Take two constraints $C_{1}$ and $C_{2}$ of $\mathcal I$ having a common variable $x$.
%We know that the same constraint relations must appear in $\Upsilon'$ but probably different variables 
%instead of $x$. 
Applying Lemma \ref{LEMGetABridgeFromSubdirectPCLinearInstance} to $\Upsilon'$ 
we obtain a bridge from $\ConOne(C_1,x)$ to $\sigma$
and a bridge from $\sigma$ to $\ConOne(C_2,x)$.
Composing these bridges we obtain a bridge 
from $\ConOne(C_1,x)$ to $\ConOne(C_2,x)$.
This bridge is reflexive because $\Upsilon$ 
a tree-covering and the path from 
a child of $x$ to a child of $z$ and back is just a path in the cycle-consistent instance $\mathcal I$.
Additionally, we derived from Lemma \ref{LEMGetABridgeFromSubdirectPCLinearInstance}
that all the congruences of $\Congruences(\mathcal I)$ are of 
the linear type.
Thus, we proved that any two constraints of $\mathcal I$ with a common variable are adjacent,
which means that $\mathcal I$ is connected and 
satisfies (1b) if its solution set is subdirect or  (1c)
otherwise.

Case 2. $\mathcal B_{0}$ is not empty.
For every expanded covering $\mathcal J$ of $\mathcal I$  
by $\Sol(\mathcal J)$ we denote the set of all $B\in \mathcal B_{0}$ such that 
$\mathcal J^{(B,\bot)}$ has a solution.

We want to find a set of instances $\Omega$ 
 satisfying the following conditions:
\begin{enumerate}
    \item Every instance in $\Omega$ is a weakening of $\mathcal I$.
    \item $\bigcap_{\mathcal J\in\Omega}\Sol(\mathcal J)=\varnothing$. 
    %For every $B\in\mathcal B_{0}$ there exists $\mathcal J\in \Omega$ such that 
    %$\mathcal J^{(B,\bot)}$ has no solutions.
    \item If we replace any instance in $\Omega$ 
    by all weaker instances then 2 is not satisfied.
    \item For every $\mathcal J\in \Omega$ there exists $B\in \mathcal B_{0}$ 
    such that 
    \begin{enumerate} 
    \item $\mathcal J$ is crucial in $D^{(B,\bot)}$, and 
   \item $B\in \Sol(\mathcal J')$ for every $J'\in \Omega\setminus\{\mathcal J\}$.
      \end{enumerate}
\end{enumerate}

We start with $\Omega = \{\mathcal I\}$.
It already satisfies conditions 1 and 2.
If 3 is not satisfied then we replace the corresponding instance by all weaker instances and get a new $\Omega$.
We cannot weaken forever, that is why at some moment 
conditions
1-3 will be satisfied. Let us show that it also satisfies 
condition 4.
Take some $\mathcal J\in\Omega$.
For every constraint $C$ in $\mathcal J$ by $\mathcal J_{C}$ we denote the instance 
obtained from $\mathcal J$ by weakening $C$.
By condition 3,
$\Omega\cup\{\mathcal J_{C}\mid C\in \mathcal J\}\setminus\{\mathcal J\}$
cannot satisfy condition 2, which means that there exists $B\in\mathcal B_{0}$ such that 
$\mathcal J$ is crucial in $D^{(B,\bot)}$ 
and $B\in\Sol(\mathcal J')$ for every $\mathcal J'\in\Omega\setminus \{J\}$.
Thus, we have $\Omega$ satisfying conditions 1-4.

For an expanded covering %every weakening 
$\mathcal J$ of $\mathcal I$ 
by $\bot(\mathcal J)$ we denote the instance 
$\mathcal J\wedge \bigwedge_{x\in\Var(\mathcal J)} \Upsilon_{x}$, where 
we rename the variables so that the only common variable of 
$\Upsilon_{x}$ and $\mathcal J$ is $x$.
%For any expanded covering $\mathcal J$ of $\mathcal{I}$ by 
Also, by $\Delta(\mathcal J)$ we denote the instance that  
is obtained from $\mathcal J$ by adding the constraints 
$\sigma(z',z'')$ for every pair of variables  
whose parent is $z$.

%Then we build another instance $\Theta$.
%First we join all $\Theta_{\mathcal J}$, where $\mathcal J\in \Theta$,
%and rename their variables so that 
%variables of 
%$\Theta_{\mathcal J}$ never appear in other $\Theta_{\mathcal J'}$. 
%Then for every pair of variables $z',z''$ whose parent is $z$ 
%we add the  
%constraint $\sigma(z',z'')$.
%As we said before, the obtained instance is denoted by $\Theta$.
%It follows from the definition that $\Theta$ has no solutions in $D^{(1)}$.
%Weaken $\Theta$ to make it crucial in $D^{(1)}$, and denote the obtained instance by $\Theta'$.

For any weakening $\mathcal J$ of $\mathcal I$ and 
any $B\in\mathcal B_{0}$ such that 
$\mathcal J$ is crucial in $D^{(B,\bot)}$ 
we can apply the inductive assumption,
which proves that either 
$\mathcal J$ satisfies (1b) or
$\mathcal J$ satisfies (1c).
Let us consider two subcases.

Subcase 1. Some instance $\mathcal I'\in \Omega$
does not satisfy (1b).
Let 
$\Omega'$ be the set of all instances that are weaker than $\mathcal I'$ joined with the instances 
from $\Omega\setminus\{\mathcal I'\}$.
Put 
$\mathcal B_{1} = 
\bigcap_{\mathcal J\in \Omega'}\Sol(\mathcal J)$.
Notice that condition 3 implies that 
$\mathcal B_{1}$ is not empty.
It follows from the definition that 
$\mathcal I'$ is crucial in 
$D^{(B,\bot)}$ for every $B\in \mathcal B_{1}$.
We want to build a sequence $\mathcal J_1,\dots,\mathcal J_{s}$ of expanded coverings 
of $\mathcal I'$ such that
$\mathcal B_{1}\cap\bigcap_{i\in[s]} \Sol(\mathcal J_{i}) = \varnothing$, 
some $B$ belongs to $\mathcal B_{1}\cap \bigcap_{i\in[s-1]} \Sol(\mathcal J_{i})$,
$\mathcal J_{s}$ is crucial in $D^{(B,\bot)}$ and has a connected subinstance whose solution set is not subdirect.
Take some $B\in \mathcal B_{1}$
and apply the inductive assumption 
to $\mathcal I'$ and $D^{(B,\bot)}$.
Since $\mathcal I'$ does not satisfy (1b) 
there exists an expanded covering  
$\mathcal J_1$ of  $\mathcal I'$
such that $\mathcal J_{1}$ is crucial in $D^{(B,\bot)}$ and 
$\mathcal J_{1}$ has a connected subinstance whose solution set is not subdirect.
If $\Sol(\mathcal J_{1})\cap \mathcal B_{1} = \varnothing$, then we are done.
Otherwise, put 
$\mathcal B_{2} = \Sol(\mathcal J_{1})\cap \mathcal B_{1}$, choose some $B\in \mathcal  B_{2}$ and 
apply the inductive assumption to 
$\mathcal I'$ and $D^{(B,\bot)}$ to obtain $\mathcal J_{2}$. Since $\mathcal B_{1}$ is finite, 
and the sequence $\mathcal B_{1},\mathcal B_{2},\dots$
is decreasing, at some moment 
the required condition 
$\mathcal B_{1}\cap\bigcap_{i\in[s]} \Sol(\mathcal J_{s}) = \varnothing$, 
will be satisfied.

Put 
$\Theta = \Delta((\bigwedge_{\mathcal J\in \Omega'}\bot(\mathcal J))\wedge (\bigwedge_{i=1}^{s}\bot(\mathcal J_{s})))$.
It follows from the definition that 
$\Theta$ is an expanded covering of $\mathcal I$
not having a solution in $D^{(1)}$.
Let $\Theta'$ be the weakening of $\Theta$ such that 
$\Theta'$ is crucial in $D^{(1)}$.
Notice that all the constraints of $\Theta$ that came from $\mathcal J_{s}$ 
are crucial in $D^{(1)}$, which means that they stay in 
$\Theta'$. Therefore, $\Theta'$ has a connected subinstance whose solution set is not subdirect.
Thus, we proved that $\mathcal I$ satisfies (1c).

Subcase 2. Every instance $\mathcal J\in\Omega$ satisfies (1b).
This implies that each instance 
$\bot(\mathcal J)$ has a subdirect solution set.
%Notice that if $|\Omega|=1$ then 
%$\Omega=\{\mathcal I\}$ and $\mathcal I$ satisfies (1b), which completes this case.
%Thus, $|\Omega|>1$ and therefore $\mathcal J^{(1)}$ has a solution.
%We weaken the instance $\Theta$ so that 
%all the relations except for $\sigma(x,x')$ we added were crucial.
%The obtained instance we denote by $\Theta'$. 
Notice that if $\Omega = \{\mathcal I\}$
then $\mathcal I$ satisfies (1b), which completes this case.
Otherwise, both $\mathcal J^{(1)}$ and $(\bot(\mathcal J))^{(1)}$ have a solution for every $\mathcal J\in\Omega$.
Since 
$\bot(\mathcal J)$ is crucial in $D^{(B,\top)}$ 
for some $B\in\mathcal B_{0}$ (property 4(a) of $\Omega$),
Lemma \ref{LEMGetABridgeFromSubdirectPCLinearInstance} implies that 
$T=\TL$.
Let us show that $\mathcal I$ is connected.
Take some constraint $C\in\mathcal I$ and a variable $x$ of $C$.
Put $\Theta = \Delta(\bigwedge_{\mathcal J\in \Omega}\bot(\mathcal J))$.
Let us weaken $\Theta$ to make all the  constraints except for the constraints $\sigma(z',z'')$
crucial in $D^{(1)}$. We do not weaken the constrains $\sigma(z',z'')$. 
The obtained instance we denote by $\Theta'$.
Condition 4 for $\Omega$ implies that all the constraints of $\Theta$ 
coming from some $\mathcal J\in\Omega$ are still in $\Theta'$.
Since $\mathcal I$ is crucial in $D^{(1)}$, a child of $C$ must appear in $\Theta'$.
Let the child appear in $\bot(\mathcal J)$-part 
of $\Theta'$.
Denote the $\bot(\mathcal J)$-part of 
$\Theta'$ by $\Theta_{\mathcal J}'$. 
By the cruciality of $C$, there exists $B\in\mathcal B_{0}$ such that 
$\Theta_{\mathcal J}'^{(B,\top)}$ has no solutions
but a weakening of the child of $C$ gives a solution
inside $D^{(B,\top)}$.
%Notice that if $\Theta_{\mathcal J}'$ has no solutions 
%in $D^{(1)}$
%then $\Delta(\mathcal J)$ has no solutions in $D^{(1)}$ 
%and therefore $\mathcal J$ has no solutions in $D^{(1)}$.
%Hence $\mathcal I = \mathcal J$ and $\mathcal I$ satisfies (1b).
Since $\bot(\mathcal J)$ has a solution in $D^{(1)}$,  $\Theta_{\mathcal J}'$ also has a solution 
in $D^{(1)}$.
%the weakening of $C$ gives a solution inside 
%$D^{(B,\top)}$ for some $B$.
%By the property 3 for $\Omega$, 
%there exists $B\in\mathcal B$ such that 
%$\mathcal J$ is crucial in $D^{(B,\bot)}$ 
%and $\mathcal J'^{(B,\bot)}$ has a solution for 
%every $\mathcal J'\in\mathcal B\setminus \{J\}$.
Let $\mathcal M$ be a minimal set of children of $z$ we need to restrict to $B$
in $\Theta_{\mathcal J}'$ to kill all the solutions of $\Theta_{\mathcal J}'^{(1)}$.
We will build a bridge 
$\delta$
from $\sigma$ to $\ConOne(C,x)$ 
such that $\widetilde{\delta}$ is larger than a binary relation coming from
some path from $z$ to $x$ in $\mathcal I$.
Consider two subsubcases.

Subsubcase 1. The child of $C$ appears in a child of $\Upsilon_{y}$.
%Suppose the weakening of $C_{1}$ gives some solution with $z\in B\in\mathcal B$.
%Let us consider a minimal set of children of $z$ we need to restrict to $B$ in 
%$\Theta_{\mathcal J_{1}}$ to get a contradiction. 
Since $\Upsilon_{y}$ is a tree-covering and $\mathcal I^{(1)}$ is 1-consistent, 
the set $\mathcal M$ contains at least one variable from the child of $\Upsilon_{y}$.
Applying Lemma \ref{LEMGetABridgeFromSubdirectPCLinearInstance} 
to the solution set of $\Theta_{\mathcal J}'$, we 
get a required bridge from $\sigma$ to $\ConOne(C,x)$ and also prove that 
$\ConOne(C,x)$ is a linear congruence.

Subsubcase 2. The child of $C$ appears in $\mathcal J\in\Omega$.
%Suppose the weakening of $C_{1}$ gives some solution with $z\in B\in\mathcal B$.
%Let us consider a minimal set of children of $z$ we need to restrict to $B$ in 
%$\Theta_{\mathcal J_{1}}$ to get a contradiction.
Let $y$ be the variable of $\Theta_{\mathcal J}'$ 
such that some variable from $\mathcal M$ appears in 
a child of $\Upsilon_{y}$ in $\Theta_{\mathcal J}'$.
Applying Lemma \ref{LEMGetABridgeFromSubdirectPCLinearInstance}
to the solution set of $\Theta_{\mathcal J}'$,
we get a bridge from $\sigma$ to $\xi$ for some $\xi\in\Congruences(\mathcal J,y)$.
Since $\mathcal J$ is connected,
by Lemma \ref{LEMConnectedProperties}(b) there is a corresponding bridge from $\xi$ to $\ConOne(C,x)$.
Composing these bridges we get a required bridge $\delta$  from $\sigma$ to $\ConOne(C,x)$.
Notice that we could also build a bridge $\delta$  from $\sigma$ to $\ConOne(C,x)$
without an intermediate step but $\widetilde \delta$ would not satisfy the required 
property.
%Take two constraints $C_{1}$ and $C_{2}$ of $\mathcal I$ having a common variable $x$.
%Since $\mathcal I$ is crucial in $D^{(1)}$ the same constraint relations must appear in $\Theta'$. 
%Let $x_{1}$ and $x_{2}$ be the new names the variable $x$ in the new $C_{1}$ and $C_{2}$ respectively.
%Let the copy of $C_{1}$ appear in $\Theta_{\mathcal J}$ and the copy of $C_{2}$ in 
%$\Theta_{\mathcal J_{2}}$.
%We want to build a bridge 
%$\delta_{1}$
%from $\sigma$ to $\ConOne(C_{1},x_1)$ 
%such that $\widetilde{\delta_{1}}$ is larger than a binary relation coming from
%some path from $z$ to $x$ in $\mathcal I$.
%Consider several subsubcases.

Thus, for every constraint $C$ and every variable $x$ in $\mathcal I$ 
we have a bridge 
$\delta$
from $\sigma$ to $\ConOne(C,x)$ 
such that $\widetilde{\delta}$ is larger than a binary relation coming from
some path from $z$ to $x$ in $\mathcal I$.
To prove that $\mathcal I$ is connected we do the following.
We take 
two constraints $C_{1}$ and $C_{2}$ with a common variables $x$.
We proved that there is a bridge from
$\ConOne(C_1,x)$ to $\sigma$, 
and a bridge from $\ConOne(C_{2},x)$ to $\sigma$.
Composing these bridges (and using cycle-consistency of $\mathcal I$) 
we obtain a required 
reflexive bridge. Hence $C_{1}$ and $C_{2}$ are adjacent, $\mathcal I$ is connected, 
and $\mathcal I$ satisfies (1b) if it has subdirect solution set, or (1c)
otherwise.
\end{proof}

\begin{thm}\label{THMPCDoesnotKillAllSolutions}
Suppose $\mathcal I$ is a cycle-consistent irreducible instance, 
$B<_{PC(\sigma)}^{D_{y}} D_{y}$ for some $y\in\Var(\mathcal I)$,
$\mathcal I$ has a solution.
Then $\mathcal I$ has a solution 
with $y\in B$.
\end{thm}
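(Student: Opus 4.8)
The plan is to prove Theorem~\ref{THMPCDoesnotKillAllSolutions} by contradiction: assume $\mathcal I$ has a solution but no solution with $y\in B$. Since $\mathcal I$ is cycle-consistent and irreducible and $D_x\lll D_x$ trivially for every variable, we are exactly in the situation where the machinery of Theorem~\ref{THMMainInductiveCSPClaim} and Lemma~\ref{LEMGetABridgeFromSubdirectPCLinearInstance} applies. First I would reduce $\mathcal I$ to make it crucial with respect to the reduction $D^{(1)}$ defined by $D^{(1)}_y = B$ and $D^{(1)}_x = D_x$ otherwise; the resulting weakened instance $\mathcal I'$ has no solution with $y\in B$ but any further weakening of a constraint of $\mathcal I'$ does have such a solution. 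Note that because $\mathcal I$ has a solution (ignoring the reduction), $\mathcal I'$ still has a solution in $D$, and its solution set (over $D$) is subdirect by cycle-consistency and irreducibility arguments (or can be made so by passing to a suitable subinstance/covering). The key input is that $B<_{PC(\sigma)}^{D_y}D_y$ forces $\sigma$ to be a PC congruence, and PC congruences are "rigid": by Lemma~\ref{LEMPCBridgesAreTrivial} and Lemma~\ref{LEMNoBridgeBetweenDifferentTypes} there cannot be a bridge from a PC congruence to a linear congruence, and bridges from a PC congruence to itself are trivial.

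**The main argument.**

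Apply Theorem~\ref{THMMainInductiveCSPClaim} to $\mathcal I'$ and $D^{(1)}$: we get that every constraint has the parallelogram property, and either (1b) $\mathcal I'$ is a connected linear-type instance with subdirect solution set, or (1c) there is an expanded covering $\mathcal J$ of $\mathcal I'$ with a linked connected subinstance $\Upsilon$ whose solution set is not subdirect, with $\mathcal J$ crucial in $D^{(1)}$. In case (1b), $\mathcal I'$ has linear type, so every $\ConOne(C,x)$ is a linear congruence; now pick any constraint $C$ adjacent (through the connectedness of $\mathcal I'$, and using that the reduction reduces $D_y$ to a PC-subuniverse) to the variable $y$. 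Using Lemma~\ref{LEMGetABridgeFromSubdirectPCLinearInstance} with $\mathcal T = \TPC$ (the reduction at $y$ being of PC type $\sigma$) and $T = \TL$ (the type of $C$), the hypotheses 5 of that lemma ($T=\TPC\Rightarrow\mathcal T\in\{\TPC,\TL\}$) is vacuous since $T=\TL$, so we obtain $\mathcal T = T$, i.e.\ $\TPC = \TL$, a contradiction. (Equivalently: Lemma~\ref{LEMNoBridgeBetweenDifferentTypes} forbids a bridge from the PC congruence $\sigma$ to the linear congruence $\ConOne(C,x)$.) In case (1c), I would run essentially the argument of the proof of part (2) of Theorem~\ref{THMMainInductiveCSPClaim}: take a constraint $C\in\Upsilon$, use Lemma~\ref{LEMConnectedProperties}(p) to conclude $\ConOne(C,x)$ is a perfect linear congruence, build the auxiliary instance $\Theta$ by splitting the variable $x$ and adding a $\zeta$-constraint to $\mathbf Z_p$, and derive that the set of attainable values of the $\mathbf Z_p$-coordinate under $D^{(1)}$ would be all of $\mathbf Z_p$ while also having to be a proper (binary-absorbing or central, via the reduction at $y$) subuniverse --- contradicting Lemma~\ref{LEMNoAbsCenterPCInLinearAlgebra}. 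The remaining thing to check is that the reduction at $y$ — of PC type — really does propagate to a proper strong subuniverse of the relevant $\mathbf Z_p$-coordinate or to a bridge landing on $\ConOne(C,x)$; here Lemma~\ref{LEMGetABridgeFromSubdirectPCLinearInstance} again does the work, yielding a bridge from the PC congruence $\sigma$ to the linear/perfect-linear $\ConOne(C,x)$, impossible by Lemma~\ref{LEMNoBridgeBetweenDifferentTypes}.

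**Handling irreducibility and the main obstacle.**

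A subtlety is that $\mathcal I'$ obtained by weakening $\mathcal I$ need not be irreducible (weakening can destroy irreducibility), whereas Theorem~\ref{THMMainInductiveCSPClaim} requires irreducibility. To handle this, rather than weakening $\mathcal I$ directly I would instead pass to an expanded covering with a linked connected subinstance, or argue that since $\mathcal I$ itself is irreducible we may choose the crucial instance inside the class of expanded coverings of $\mathcal I$ (which are automatically cycle-consistent and irreducible by property (p7)/Lemma~\ref{LEMExpandedConsistencyLemma}); more precisely, one builds $\Theta'$ as in Subcase~1 of the proof of (1c) by combining tree-coverings witnessing the failure of consistency of the $D^{(1)}$-reduction. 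The main obstacle I anticipate is exactly this bookkeeping: organizing the coverings and reductions so that one genuinely ends up with a \emph{crucial, cycle-consistent, irreducible} instance to which the inductive machinery applies, and then threading the PC-rigidity (no PC$\to$linear bridge, trivial PC$\to$PC bridges) through whichever of cases (1b)/(1c) occurs so that the contradiction materializes. Once the correct instance is in hand, the contradiction itself is short: the reduction at $y$ is PC-typed, and Lemma~\ref{LEMGetABridgeFromSubdirectPCLinearInstance} applied to a constraint of the connected (hence linear-type, by propagation) subinstance forces a PC–linear bridge, which cannot exist.
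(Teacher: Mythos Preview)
Your overall strategy---apply Theorem~\ref{THMMainInductiveCSPClaim} and derive a PC--linear type clash---is the same as the paper's, and your handling of case~(1b) via Lemma~\ref{LEMGetABridgeFromSubdirectPCLinearInstance} is essentially correct. However there are two genuine gaps.

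\textbf{First, 1-consistency.} Theorem~\ref{THMMainInductiveCSPClaim} requires the reduction $D^{(1)}$ to be 1-consistent. Your reduction ($D^{(1)}_y=B$, $D^{(1)}_x=D_x$ otherwise) is not 1-consistent in general. The paper repairs this by first obtaining, via Lemma~\ref{LEMFindOneConsistentForAll} (which applies precisely because the type is $\TPC$), a nonempty 1-consistent reduction $D^{(G,\bot)}\le D^{(G,\top)}$, and by adjoining the tree-coverings $\Upsilon_x$ so that ``crucial in $D^{(G,\bot)}$'' translates back to ``no solution in $D^{(G,\top)}$''. (Incidentally, your worry about irreducibility is a non-issue: weakenings are expanded coverings, and Lemma~\ref{LEMExpandedConsistencyLemma} says expanded coverings of cycle-consistent irreducible instances remain cycle-consistent and irreducible.)

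\textbf{Second, and more seriously, case (1c).} Neither of your two proposed routes works. Mimicking part~(2) of Theorem~\ref{THMMainInductiveCSPClaim} fails because that argument uses Lemma~\ref{LEMBACenterImplies} to propagate a $\TBA/\TC$ reduction to a $\TBA/\TC$ subuniverse of $\mathbf Z_p$; a $\TPC$ reduction does not propagate this way, so you cannot invoke Lemma~\ref{LEMNoAbsCenterPCInLinearAlgebra}. Your alternative, applying Lemma~\ref{LEMGetABridgeFromSubdirectPCLinearInstance}, requires a subdirect solution set (condition~1), but in~(1c) the connected subinstance $\Upsilon$ explicitly has a \emph{non}-subdirect solution set, and there is no obvious larger instance with subdirect solution set to which the lemma applies. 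The paper's missing idea is a maximality argument: one does not work with $\mathcal I$ directly, but chooses an expanded covering $\mathcal J$ of $\mathcal I$ for which the set $\mathcal B=\{G\in D_y/\sigma: \mathcal J^{(G,\top)}\text{ has a solution}\}$ is an inclusion-maximal proper subset of $D_y/\sigma$. Then in case~(1c), after building the perfect-linear witness $\zeta$ and the instances $\Theta''$, $\Theta'''$, this maximality forces $\Sol(\Theta''')=D_y/\sigma$; hence the relation $R\le (D_y/\sigma)\times\mathbf Z_p$ of attainable pairs $(F,z)$ is subdirect, $(G,0)\notin R$, and Corollary~\ref{CORMainStableIntersection} applied to $\{G\}<_{\TPC}D_y/\sigma$ and $\{0\}<_{\TL}\mathbf Z_p$ yields the contradiction. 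This maximal-$\mathcal B$ device is what you are missing.
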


\begin{proof}

For any 
$G<_{PC(\sigma)}^{D_{y}} D_{y}$
by $D^{(G,\top)}$ we denote the reduction of $\mathcal I$ 
such that 
$D_{z}^{(G,\top)}= G$ and
$D_{x}^{(G,\top)}= D_{x}$ if $x\neq y$.
For an expanded covering $\mathcal J$ of $\mathcal I$ 
by $\Sol(\mathcal J)$ we denote the set of
all $G\in D_{y}/\sigma$ such that 
$\mathcal J^{(G,\top)}$ has a solution.

Assume that $\mathcal I$ has no solutions with $y\in B$. Let $\mathcal B\subsetneq D_{y}/\sigma$ be an inclusion-maximal 
set such that 
$\Sol(\mathcal J) = \mathcal B$ 
for some expanded covering $\mathcal J$ of $\mathcal I$.
Let $\mathcal J$ be the expanded covering witnessing this.
Choose $G\in (D_{y}/\sigma)\setminus \mathcal B$.

By Lemma
\ref{LEMFindOneConsistentForAll}
there exists a 1-consistent reduction 
for $\mathcal I$ smaller than 
$D^{(G,\top)}$. 
Since $\mathcal J$ is an expanded covering, 
the maximal 1-consistent reduction $D^{(G,\bot)}$
for $\mathcal J$
such that 
$D^{(G,\bot)}\le D^{(G,\top)}$
is also nonempty.
By Lemma \ref{LEMExistenceOfTreeCoverings}
for every $x\in \Var(\mathcal J)$ 
there exists a tree-covering $\Upsilon_{x}$ of $\mathcal J$
such that $\Upsilon_{x}^{(G,\top)}(x)$ defines 
$D_{x}^{(G,\bot)}$. Notice that 
the reduction $D^{(G,\top)}$ was defined for $\mathcal I$ and then 
extended to $\mathcal J$ but 
$D^{(G,\bot)}$ was originally defined for $\mathcal J$ and does not exist for $\mathcal I$.

Weaken $\mathcal J$ to make it crucial in 
$D^{(G,\bot)}$ and denote the obtained instance 
by $\mathcal J'$.
By Theorem \ref{THMMainInductiveCSPClaim}
applied to $\mathcal J'$ and $D^{(G,\bot)}$, 
$\mathcal J'$ satisfies (1b) or (1c).

Assume that $\mathcal J'$ satisfies (1b).
Then the solution set of $\mathcal J'$ is subdirect.
Put $\mathcal J''= \mathcal J'\wedge \bigwedge_{x\in\Var(\mathcal J')} \Upsilon_{x}$.
Notice that $\mathcal J''$ is an expanded covering of 
$\mathcal I$ with a subdirect solution set.
Since $\mathcal J''^{(G,\top)}$ has no solutions, $\mathcal J''$ has a solution (as $\mathcal I$ has a solution),
and any weakening of a constraint from $\mathcal J'$ inside $\mathcal J''$ 
gives an instance with a solution in $D^{(G,\top)}$, Lemma
\ref{LEMGetABridgeFromSubdirectPCLinearInstance} implies that 
the type $\TPC$ coincides with the type of the crucial constraints, which is linear by (1b). This contradiction completes this case.

Assume that $\mathcal J'$ satisfies (1c).
Let $\Theta$ be the expanded covering of $\mathcal J'$ 
that is crucial in $D^{(G,\bot)}$ and 
$\Upsilon$ be the linked connected subinstance of $\Theta$. 
Put $\Theta'= \Theta\wedge \bigwedge_{x\in\Var(\Theta)} \Upsilon_{x}$.
Notice that $\Theta'$ is an expanded covering of 
$\mathcal I$ with a subdirect solution set and 
$\Theta'$ has no solutions in 
$D^{(G,\top)}$.
Let $x$ be a variable of a constraint $C\in\Upsilon$.
%By Lemma~\ref{LEMCrucialMeansRectangular}, $\ConOne(C,x_1)$ is irreducible.
By Lemma~\ref{LEMConnectedProperties}(p),
$\ConOne(C,x)$ is a perfect linear congruence and 
there exists 
$\zeta\le \mathbf D_{x}\times \mathbf D_{x}\times \mathbf Z_{p}$
such that
$(y_{1},y_{2},0)\in \zeta\Leftrightarrow (y_{1},y_{2})\in\ConOne(C,x)$
and $\proj_{1,2}(\zeta) = \cover{\ConOne(C,x)}$.
Let us replace the variable $x$ of $C$ in $\Theta'$ by $x'$
and add the constraint $\zeta(x,x',z)$.
The obtained instance we denote by $\Theta''$.
Another instance we build from $\Theta''$ by 
replacing  
$\zeta(x,x',z)$ by $\cover{\omega}(x,x')$, where $\omega = \ConOne(C,x)$.
We denote it by $\Theta'''$.
Since 
$\Theta'''$ is an expanded covering of $\mathcal J$
we have $\Sol(\Theta''')\supseteq \Sol(\mathcal J)$.
Since the weakening of $C$ in $\Theta$ gives 
an instance with a solution in $D^{(G,\bot)}$ 
and
$\mathcal B$ was chosen maximal, 
we have
$\Sol(\Theta''') = D_{y}/\sigma$. 
Let $R\le D_{y}/\sigma\times \mathbf Z_{p}$ 
be the set of all pairs $(F,j)$ such that 
$\Theta''$ has a solution 
in $D^{(F,\top)}$ with $z=j$.
We know that 
$R$ is subdirect, 
$(G,0)\notin R$, 
$(G,j)\in R$ for some $j\in Z_{p}$.
Applying Corollary \ref{CORMainStableIntersection} to 
$R$, $\{G\}<_{\TPC}D_{y}/\sigma$, 
and
$\{0\}<_{\TL} \mathbf Z_{p}$,
we get a contradiction as we mixed linear and PC types.
%We get an empty intersection of PC and linear sububiverses, which contradicts Theorem\ref{THMMainStableIntersection} and completes the proof.
\end{proof}

\subsection{Statements sufficient to prove that Zhuk's algorithm works}\label{SectionCSPMainClaims}

%In this section we prove theorems from Section~\ref{CorretnessSection}.
%In this subsection we assume that the variables of the instance $\Theta$ are $x_{1},\ldots,x_{n}$,
%and the domain of $x_{i}$ is $D_{i}$ for every $i$.
%The first two theorems are proved together.

\begin{thm}\label{THMCSPDReductionsAreSafe}
Suppose $\Theta$ is a cycle-consistent irreducible CSP instance, and 
$B<_{T}^{D_{x}}D_{x}$, where $T\in\{BA,C,PC\}$. 
Then $\Theta$ has a solution if and only if
$\Theta$ has a solution with $x\in B$.
\end{thm}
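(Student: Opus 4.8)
The plan is to reduce the three cases $T\in\{\TBA,\TC,\TPC\}$ to the two main inductive theorems already established, namely Theorem~\ref{THMMainInductiveCSPClaim} and Theorem~\ref{THMPCDoesnotKillAllSolutions}. The direction ``$\Theta$ has a solution with $x\in B$ $\Rightarrow$ $\Theta$ has a solution'' is trivial, so I only need the forward implication: assume $\Theta$ has a solution; show it has one with $x\in B$.

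First I would dispatch the case $T=\TPC$: this is precisely Theorem~\ref{THMPCDoesnotKillAllSolutions}, since $B<_{PC(\sigma)}^{D_x}D_x$ and $\Theta$ is cycle-consistent and irreducible, so nothing more is needed. For $T\in\{\TBA,\TC\}$, the idea is to build a $1$-consistent reduction that ``aims at'' $B$ and then invoke clause (2) of Theorem~\ref{THMMainInductiveCSPClaim}. Concretely, set $D^{(1)}:=D$ (the trivial reduction, which satisfies $D^{(1)}\lll D$ and is $1$-consistent because $\Theta$ is cycle-consistent hence $1$-consistent). Define $D^{(\top)}$ by $D^{(\top)}_x:=B$ and $D^{(\top)}_y:=D_y$ for $y\neq x$, and let $D^{(2)}$ be an inclusion-maximal $1$-consistent reduction with $D^{(2)}\le D^{(\top)}$. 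By Lemma~\ref{LEMFindOneConsistentForAll} (with $T\in\{\TBA,\TC\}$), $D^{(2)}$ is nonempty, $D^{(2)}\lll D^{(1)}=D$, and moreover $D^{(2)}\le_{T}D^{(1)}$. Since $\Theta$ is cycle-consistent and irreducible and $\Theta^{(1)}=\Theta$ has a solution, clause~(2) of Theorem~\ref{THMMainInductiveCSPClaim} (applied with $\mathcal T=T\in\{\TBA,\TC\}$) yields that $\Theta^{(2)}$ has a solution; since $D^{(2)}_x\le B$, this solution has $x\in B$, as desired.

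The only subtlety I would check carefully is that all hypotheses of the cited results are genuinely met: that $\Theta$ itself (not merely a crucial weakening) is $1$-consistent and that $D\lll D$ holds reflexively — both are immediate — and that Lemma~\ref{LEMFindOneConsistentForAll} indeed gives $D^{(2)}\le_T D^{(1)}$ in the $\{\TBA,\TC\}$ cases rather than only $D^{(2)}\lll D^{(1)}$; this is exactly item~1 of that lemma. I expect the main (minor) obstacle to be purely bookkeeping: confirming that the reduction $D^{(2)}$ produced by Lemma~\ref{LEMFindOneConsistentForAll} is the same object to which Theorem~\ref{THMMainInductiveCSPClaim}(2) applies, i.e.\ that ``$D^{(2)}\le_{\mathcal T}D^{(1)}$ is a $1$-consistent reduction of $\mathcal I$'' is satisfied with $\mathcal I=\Theta$. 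Once that is lined up, the proof is a two-line invocation in each case.
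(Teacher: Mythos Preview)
Your proposal is correct and essentially identical to the paper's proof: the $\TPC$ case is handed to Theorem~\ref{THMPCDoesnotKillAllSolutions}, and for $T\in\{\TBA,\TC\}$ one invokes Lemma~\ref{LEMFindOneConsistentForAll} to obtain a nonempty $1$-consistent reduction $D^{(2)}\le_T D$ with $D^{(2)}_x\le B$, then applies clause~(2) of Theorem~\ref{THMMainInductiveCSPClaim}. The only cosmetic difference is that the paper names the resulting reduction $D^{(1)}$ rather than introducing both $D^{(1)}=D$ and $D^{(2)}$.
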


\begin{proof}
For $T=PC$ it follows from Theorem 
\ref{THMPCDoesnotKillAllSolutions}. 
Assume that $T\in\{BA,C\}$.
By Lemma \ref{LEMFindOneConsistentForAll} 
there exists a 1-consistent reduction 
$D^{(1)}\le_{T} D$ such that $D_{x}^{(1)}\le B$.
By Theorem \ref{THMMainInductiveCSPClaim}
$\Theta^{(1)}$ has a solution, and therefore  
$\Theta$ has a solution with $x\in B$.
\end{proof}

\begin{thm}\label{THMCodimensionOneTheorem}
Suppose the following conditions hold:
\begin{enumerate}
\item $\mathcal I$ is a linked cycle-consistent irreducible CSP instance with $\Var(\mathcal I)=\{x_1,\dots,x_n\}$;
\item $D_{x_{i}}$ is S-free 
%there does not exist a nontrivial binary absorbing subuniverse or a nontrivial center on $D_{x_{i}}$ 
for every $i\in[n]$;
\item if we weaken all the constraints of $\Theta$, 
we get an instance whose solution set is subdirect.
\item $\sigma_{x_i}$ is the intersection of all the linear congruences $\sigma$ on $D_{x_{i}}$ such that 
$\sigma^{*}=D_{x_{i}}\times D_{x_{i}}$.
\item $L_{x_i} = D_{x_{i}}/\sigma_{x_i}$ for every $i\in[n]$;
\item $\phi:\mathbf Z_{q_{1}}\times \dots \times \mathbf Z_{q_{k}}
\to L_{x_1}\times\dots\times L_{x_n}$ is a homomorphism,
where $q_{1},\dots,q_{k}$ are prime numbers;
\item if we weaken any constraint of 
$\mathcal I$ then for every $(a_{1},\ldots,a_{k})\in \mathbf Z_{q_{1}}\times \dots \times \mathbf Z_{q_{k}}$ 
there exists a solution of the obtained instance in 
$\phi(a_{1},\ldots,a_{k})$.
\end{enumerate}
Then 
$\{(a_{1},\dots,a_{k})\mid \Theta \text{ has a solution in }\phi(a_1,\dots,a_{k})\}$ is
either empty, or is full, or is an affine subspace of $\mathbf Z_{q_{1}}\times \dots \times \mathbf Z_{q_{k}}$ of codimension 1 (the solution set of a single linear equation).
\end{thm}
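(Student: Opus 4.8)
The goal is to understand the set
$P=\{(a_1,\dots,a_k)\mid \Theta\text{ has a solution in }\phi(a_1,\dots,a_k)\}$,
and to show it is empty, full, or a codimension-one affine subspace. The plan is to reduce to Theorem~\ref{THMMainInductiveCSPClaim} by weakening $\Theta$ to a crucial instance and then invoke the connectedness/linear-type dichotomy it provides. First I would set $D^{(1)}_{x_i}=D_{x_i}$ (the trivial reduction), which is $1$-consistent since $\mathcal I$ is cycle-consistent, and $D^{(1)}\lll D$ trivially. If $\Theta$ has a solution inside every block $\phi(a_1,\dots,a_k)$ we are in the ``full'' case and done; if $\Theta$ has no solution at all, $P=\varnothing$ and we are done. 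So assume $P$ is a proper nonempty subset. Pick $(a_1,\dots,a_k)\notin P$; then $\Theta$ restricted to the reduction $D^{(G,\top)}$ with $D^{(G,\top)}_{x_i}=\phi(a_1,\dots,a_k)_i$ (a block of $\sigma_{x_i}$) has no solution. Weaken this restricted instance to make it crucial in that reduction, apply Theorem~\ref{THMMainInductiveCSPClaim}, and use condition~7 (every proper weakening has a solution in every block) to ensure the crucial instance still contains a child of every original constraint; in particular it cannot satisfy~(1c) with a non-subdirect connected subinstance unless the covering genuinely needs all constraints --- which by condition~7 forces the crucial instance to be $\Theta$ itself up to covering. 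Thus the relevant instance is connected and of linear type, and by Lemma~\ref{LEMConnectedProperties}(p), since $\mathcal I$ is linked, every $\ConOne(C,x)$ is a perfect linear congruence. This is exactly the source of the extra $\mathbf Z_p$ coordinate controlling the relaxation.

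The heart of the argument is then a pigeonhole/linear-algebra step very much like the proof of Theorem~\ref{THMPCDoesnotKillAllSolutions}, but now producing an affine equation rather than a contradiction. For each constraint $C$ in the connected linked subinstance and each of its variables $x$, using the perfect linear congruence $\ConOne(C,x)$ I would introduce the relation $\zeta\le\mathbf D_x\times\mathbf D_x\times\mathbf Z_p$ with $(y_1,y_2,0)\in\zeta\Leftrightarrow(y_1,y_2)\in\ConOne(C,x)$ and $\proj_{1,2}\zeta=\cover{\ConOne(C,x)}$, duplicate $x$ to $x'$ inside the appropriate expanded covering, and add $\zeta(x,x',z)$. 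This yields, after projecting, a relation
$R\le \mathbf Z_{q_1}\times\dots\times\mathbf Z_{q_k}\times \mathbf Z_p$
recording which blocks $(a_1,\dots,a_k)$ admit a solution with the auxiliary coordinate $z$ taking a given value. Condition~7 makes $R$ subdirect in the first $k$ coordinates; the perfect linear structure makes the last coordinate linked to the block structure, so $R$ is invariant under the linear operation $w^{\mathbf Z_p}$ and the $\mathbf Z_{q_i}$'s, i.e.\ $R$ is an affine subspace of $\mathbf Z_{q_1}\times\dots\times\mathbf Z_{q_k}\times\mathbf Z_p$. The fibre of $R$ over $z=0$ is exactly $P$ (a solution with $z=0$ means the relaxed constraint is actually satisfied in the original block), so $P$ is an affine subspace of $\mathbf Z_{q_1}\times\dots\times\mathbf Z_{q_k}$, being the $z=0$-slice of an affine subspace.

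It remains to pin down the codimension. Since all $q_i,p$ are prime and $p$ need not equal any $q_i$, I would first argue that the projection of the affine space $R$ onto the last coordinate $\mathbf Z_p$ is all of $\mathbf Z_p$ (there is a solution with $z=0$ whenever $P\neq\varnothing$, namely any point of $P$, and a solution with $z\neq 0$ because $P$ is a proper subset and the $\cover{\ConOne(C,x)}$-relaxed instance does have a solution everywhere by condition~7). If $R$ were a product of its projections then $P=\Theta$'s full block set, contradicting $P$ proper; hence $R$ is a proper affine subspace whose projections to both $\mathbf Z_{q_1}\times\dots\times\mathbf Z_{q_k}$ and $\mathbf Z_p$ are onto. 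A proper affine subspace of $V\times\mathbf Z_p$ that projects onto both factors, with $p$ coprime to $|V|$ is forced to have codimension exactly one, and the single defining equation has the form $(\text{affine function on }V)=z$; setting $z=0$ gives a single affine equation on $V$, so $P$ has codimension $1$ unless that equation is degenerate, in which case $P$ is empty or full --- both already excluded. The main obstacle I anticipate is the bookkeeping needed to turn the connected-linked subinstance of the crucial covering back into a statement about the \emph{original} $\Theta$ and its blocks (i.e.\ that the extra constraints from tree-coverings and duplications do not shrink $P$), which is precisely where conditions~3 and~7 are used and where one must mimic carefully the $\Sol(\mathcal J)$-maximization device from the proof of Theorem~\ref{THMPCDoesnotKillAllSolutions}.
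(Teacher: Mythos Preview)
Your overall architecture matches the paper's: find a perfect linear congruence $\ConOne(C,x)$ on some constraint of $\mathcal I$, introduce the auxiliary variable $z$ via $\zeta(x,x',z)$, and read off the answer from the resulting subalgebra $L\le \mathbf Z_{q_1}\times\dots\times\mathbf Z_{q_k}\times\mathbf Z_p$. However, there are two genuine gaps.

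\textbf{The 1-consistency gap.} You apply Theorem~\ref{THMMainInductiveCSPClaim} directly to the block reduction $D^{(G,\top)}$ given by $\phi(a_1,\dots,a_k)$, but that theorem requires the reduction to be \emph{1-consistent}, which you never verify. The paper handles this via Lemma~\ref{LEMMinimalPCLinearReductionIsConsistent} (using that each $D_{x_i}$ is S-free and each block is a minimal $\mathcal{ML}$-subuniverse), which yields a real case split: either the reduction is 1-consistent and Theorem~\ref{THMMainInductiveCSPClaim} applies, or some constraint $C_0$ is already empty on the block reduction. In the latter case $\mathcal I$ consists of the single constraint $C_0$, and the perfect linear congruence is obtained by a different route, via Lemma~\ref{LEMParalPropertyFromCrucialInMultiType} and Lemma~\ref{LEMLInearOnTheTopIsEasy}. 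You miss this branch entirely.

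Related confusions: condition~7 already makes $\mathcal I$ itself crucial in the block reduction (any weakening solves in every block), so no further weakening is needed and your worry about children of original constraints evaporates. Your treatment of alternative~(1c) is also off: the paper does not exclude~(1c); it accepts the expanded covering with its linked connected subinstance $\Upsilon$, uses condition~3 (not~7) to argue that $\Upsilon$, having non-subdirect solution set, must contain an original constraint relation, and then applies Lemma~\ref{LEMConnectedProperties}(p) to $\Upsilon$, not to $\mathcal I$. The $\Sol(\mathcal J)$-maximization device from Theorem~\ref{THMPCDoesnotKillAllSolutions} is not used here at all.

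\textbf{The codimension argument.} Your hypothesis ``$p$ coprime to $|V|$'' is both unnecessary and in general false (nothing forbids $p=q_i$). The correct argument is simpler: $L$ is a subalgebra of $\mathbf Z_{q_1}\times\dots\times\mathbf Z_{q_k}\times\mathbf Z_p$, hence an affine subspace; by condition~7 its projection onto the first $k$ coordinates is all of $\mathbf Z_{q_1}\times\dots\times\mathbf Z_{q_k}$ (full, not merely subdirect), and since $(b_1,\dots,b_k,0)\notin L$ it is proper. Such an $L$ is defined by a single linear equation. If that equation is $z=b$ with $b\neq 0$ then $\Delta=\varnothing$; otherwise setting $z=0$ gives one equation in the first $k$ coordinates, so $\Delta$ has codimension~$1$.
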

\begin{proof}
Put 
$\Delta=\{(a_{1},\dots,a_{k})\mid \Theta \text{ has a solution in }\phi(a_1,\dots,a_{k})\}$.
If $\Delta$ is full then we are done.
Otherwise, consider 
$(b_1,\ldots,b_{k})\in (\mathbf Z_{q_{1}}\times \dots \times \mathbf Z_{q_{k}})\setminus \Delta$.
Notice that $\phi(b_{1},\ldots,b_{k})$
can be viewed as a reduction for $\mathcal I$.
We denote this reduction by $D^{(1)}$.
It follows from condition (7) that 
$\mathcal I$ is crucial in $D^{(1)}$.

Let us prove that there exists a constraint $C\in\mathcal I$ and 
its variable $x$ such that 
$\ConOne(C,x)$ is a perfect linear congruence.
By Lemma \ref{LEMMinimalPCLinearReductionIsConsistent} 
either $C_{0}^{(1)}$ is empty for some $C_{0}\in\mathcal I$, 
or the reduction $D^{(1)}$ is 1-consistent for $\mathcal I$.
Consider two cases.

Case 1. $C_{0}^{(1)}$ is empty. 
Since $\mathcal I$ is crucial in $D^{(1)}$, it 
consists of just one constraint $C_{0}$.
Let $C_0 = R(y_1,\dots,y_t)$.
%If the first variable of $R$
%is not stable under $\sigma_{y_1}$ 
%then we could replace $R$ by $R'$ defined by
%$R'(y_1,\dots,y_t)=
%\exists z (R(z,y_2,\dots,y_t)\wedge 
%\sigma_{y_1}(z,y_1))$, which would make $R$ weaker
%but would not give a solution inside $D^{(1)}$.
%This contradicts the cruciality of $\mathcal I$. 
%Thus, we can show that every variable of 
%Similarly, we show that all the variables of $R$ are 
%stable under the corresponding $\sigma_{y_{i}}$ and 
%$\ConOne(R,i)\supseteq \sigma_{y_{i}}$. 
By Lemma \ref{LEMParalPropertyFromCrucialInMultiType} 
$R$ has the parallelogram property
and 
$\ConOne(R,1)$ is a linear congruence 
such that $\ConOne(R,1)^{*} = D_{y_{1}}^{2}$.
By Lemma \ref{LEMLInearOnTheTopIsEasy}
$\mathbf D_{y_{1}}/\delta \cong \mathbf Z_{p}$.
Let $\psi\colon \mathbf D_{y_{1}}\to \mathbf Z_{p}$ be the homomorphism.
Then the required ternary relation 
$\zeta\le \mathbf D_{y_{1}}\times \mathbf D_{y_{1}} \times \mathbf Z_{p}$ can be defined by 
$\{(a_1,a_2,b)\mid \psi(a_1)-\psi(a_2)=b\}$.
Hence $\ConOne(R,1)$ is a perfect linear congruence.

Case 2. The reduction $D^{(1)}$ is 1-consistent.
By Theorem~\ref{THMMainInductiveCSPClaim}, every constraint of $\mathcal I$ has the parallelogram property
and satisfies condition (1b) or (1c).
%Consider two subcases.
If $\mathcal I$ satisfies (1c) then there exists an instance $\Theta\in\Expanded(\mathcal I)$
that is crucial in $D^{(1)}$ and contains a linked connected subinstance $\Upsilon$ such that
the solution set of $\Upsilon$ is not subdirect.
By condition 4, since the solution set of $\Upsilon$ is not subdirect, $\Upsilon$ must contain a constraint relation from
the original instance $\mathcal I$.
Applying Lemma \ref{LEMConnectedProperties}(p), 
we derive that $\ConOne(C,x)$ is a perfect linear congruence
for the corresponding child of the original constraint and its variable.
If $\mathcal I$ satisfies (1b), then $\mathcal I$ is linked connected itself
and the existence of a perfect linear congruence 
again follows from 
Lemma \ref{LEMConnectedProperties}(p).

Thus, 
$\ConOne(C,x)$ is a perfect linear congruence 
for some $C\in\mathcal I$ and 
its variable $x$. Let 
$\zeta$ be the corresponding ternary relation.
We add a new variable $z$ with domain $\mathbf Z_{p}$, replace
the variable $x$ in $C$ by $x'$, 
and add the constraint 
$\zeta(x,x',z)$.
We denote the obtained instance by $\mathcal I'$.
Let
$L$ be the set of all tuples $(a_{1},\ldots,a_{k},b)\in
\mathbf Z_{q_{1}}\times \dots \times \mathbf Z_{q_{k}} \times \mathbf Z_{p}$
such that $\mathcal I'$ has a solution with $z=b$ in $\phi(a_{1},\ldots,a_{k})$.
Notice that 
$L\le \mathbf Z_{q_{1}}\times \dots \times \mathbf Z_{q_{k}} \times \mathbf Z_{p}$.
By condition 7, the projection of $L$ onto
the first $k$ coordinates is a full relation
and $(b_{1},\dots,b_{k},0)\notin L$.
Therefore $L$ has dimension $k$ and can be 
defined by one linear equation.
If this equation is $z = b$ for some $b\neq 0$, then
$\Delta$ is empty.
Otherwise, we put $z=0$ in this equation and get
an equation describing all $(a_{1},\ldots,a_{k})$ such that
$\mathcal I$ has a solution in $\phi(a_{1},\ldots,a_{k})$.
Hence the dimension of $\Delta$ is $k-1$.
\end{proof}

 %!TeX root=main.tex
\section{XY-symmetric operations}
\label{SECTIONXYSYMMETRIC}
In this section we prove that 
a weak near unanimity operation of an odd arity 
implies an operation that is symmetric on all tuples having exactly two different elements.
The idea of the proof is to generate a relation 
such that the existence of an XY-symmetric operation was equivalent to existence of 
a block-constant tuple. We gradually reduce coordinates of this relation to strong subalgebras trying to achieve this tuple.
If we cannot make the next reduction, it means 
that we found a linear congruence such that 
there is no block-constant 
tuple even modulo this congruence.
Since this linear congruence lives on the original domain, and the generated relation 
must be linked, we immediately obtain a perfect linear congruence. This allows us 
to represent the domain as a product of a 
smaller domain $B$ (where we have an XY-symmetric operation by the inductive assumption) and $\mathbf Z_{p}$.
The rest of the proof is purely operational:
we start with an XY-symmetric operation on $B$ 
and show how composing this operation with itself we can gradually increase the number of tuples where it behaves well on the whole domain. 

This section is organised as follows.
First, we explain how we define the relation 
for a tuple of algebras, how we apply and denote reductions. We also define symmetries this relation has and $\boxtimes$-product of $\mathbf B$ and $\mathbf Z_{p}$.
In Subsection \ref{SUBSECTIONXYSymProofOfTheMainResults} 
we show how to derive the main result
from three theorems that are proved later.
In the next section we prove 
two out of three theorems explaining how to build a smaller reduction if possible, 
and how to build a reduction if 
it is known that an XY-symmetric operation exists.
Finally, in Subsection \ref{SUBSECTIONFixingAnOperation} we show 
how to improve an operation gradually to make it 
XY-symmetric on $\mathbf B\boxtimes\mathbf Z_{p}$
even if originally it was XY-symmetric only on $B$.

\subsection{Definitions}
%Algebra

\textbf{The free generated relation $R_{\mathbf A_1,\dots,\mathbf A_{s}}$.}
For a tuple of algebras 
$\mathbf A_1,\dots,\mathbf A_{s}\in\mathcal V_{n}$
by 
$R_{\mathbf A_1,\dots,\mathbf A_{s}}$
we denote the relation 
of arity 
$N:=(2^{n-1}-1)\cdot \sum\limits_{i=1}^{s}|\mathbf A_{i}|\cdot (|\mathbf A_{i}|-1)$
defined as follows.
Coordinates of the relation 
are indexed by 
$(\mathbf A_{i},\alpha)$, where $\alpha\in\{a,b\}^{n}$ for some $a,b\in A_{i}$, $a\neq b$.
The set of all indexes denote by $I$. 
For a set of tuples $S$ by 
$\TwoTuples(S)$ we denote the set of tuples from $S$ having exactly 2 different elements.
Then $I =\{(\mathbf A_{i},\alpha)\mid i\in[s], \alpha\in\TwoTuples(A^{n})\}$.
For $i\in[n]$
by $\gamma_{i}$ we denote the tuple of length $N$ 
whose $(\mathbf A_{i},\alpha)$-th element is equal to 
$\alpha(i)$ for every $(\mathbf A_{i},\alpha)\in I$.
Then $R_{\mathbf A_1,\dots,\mathbf A_{n}}$
is the minimal subuniverse 
of $\prod\limits_{i\in[s]} \mathbf A_{i}^{(2^{n-1}-1)|A_{i}|\cdot |A_{i}-1|}$
containing $\gamma_1,\dots,\gamma_n$.
We also say that $R_{\mathbf A_1,\dots,\mathbf A_{n}}$ 
is the subalgebra generated by 
$\gamma_1,\dots,\gamma_n$, 
and the tuples $\gamma_1,\dots,\gamma_n$ 
are called \emph{the generators} of $R_{\mathbf A_1,\dots,\mathbf A_{n}}$.

We will use terminology similar to the one we used in the previous section.
For every $(\mathbf A_{i},\alpha)\in I$ by 
$\mathbf D_{(\mathbf A_{i},\alpha)}^{(0)}$ we denote the subalgebra of $\mathbf A_{i}$ generated
by  elements of $\alpha$.
Notice that $\proj_{i}(R_{\mathbf A_1,\dots,\mathbf A_{s}})=D_{i}^{(0)}$
for every $i\in I$.
By $\mathcal R_{\mathbf A_1,\dots,\mathbf A_{s}}$ we denote the set of 
all relations $R$ of arity $N$ whose coordinates are indexed by $I$ such that the domain of the $i$-th coordinate of $R$ is 
$D_{i}^{(0)}$ for every $i\in I$.

\textbf{Reductions.}
In our proof we reduce the relation $R_{\mathbf A_1,\dots,\mathbf A_{s}}$ 
by reducing their coordinates.
\emph{A reduction} $D^{(\top)}$ for $R\in\mathcal R_{\mathbf A_1,\dots,\mathbf A_{s}}$ 
is a mapping that assigns a subuniverse 
$D_{i}^{(\top)}\le D_{i}^{(0)}$ to every $i\in I$.
$D^{(0)}$ can be viewed as a trivial reduction.
As in the previous section we write 
$D^{(\bot)}\lll D^{(\top)}$ and $D^{(\bot)}\le_{T} D^{(\top)}$ 
whenever
$D_{i}^{(\bot)}\lll D_{i}^{(\top)}$ for every $i\in I$ and 
$D_{i}^{(\bot)}\le_{T} D_{i}^{(\top)}$ for every $i\in I$, respectively.
Notice that any reduction $D^{(\top)}$ can be viewed as a relation 
from $\mathcal R_{\mathbf A_1,\dots,\mathbf A_{s}}$.
Then for any $R\in \mathcal R_{\mathbf A_1,\dots,\mathbf A_{s}}$  
and a reduction $D^{(\bot)}$ 
by $R^{(\bot)}$ we denote 
$R\cap D^{(\bot)}$.
A reduction $D^{(\bot)}$ is called \emph{1-consistent} for $R\in \mathcal R_{\mathbf A_1,\dots,\mathbf A_{s}}$
if 
$\proj_{i}(R^{(\bot)}) = D_{i}^{(\bot)}$ for every $i\in I$.

\textbf{$k$-WNU.} An operation $f:A^{n}\to A$ is called 
\emph{a $k$-WNU operation} 
if it is symmetric on 
$(\underbrace{x,x,\dots,x}_{k},y,y,\dots,y)$.
Then, 1-WNU is just a usual WNU.

%for a tuple $\alpha\in A^{n}$ and a congruence on $A$ define 
%$\alpha/\sigma$.

%$a/\sigma$ is a congruence class containing $a$.

%1-consistent

%$D^{(0)}$.

%An idempotent WNU $w$ is called \emph{special} if $x \circ (x \circ y) = x \circ y$, where
%$x \circ y = w(x,\dots,x,y)$.
%It is not hard to show that for any idempotent WNU $w$ on a finite set there exists a special WNU $w'\in\Clo(w)$
%(see Lemma 4.7 in \cite{miklos}).

%For any $n$ by $\mathcal V_{n}$ we denote the variety of algebras 
%with a special WNU operation of arity $n$ as a basic operation.

\textbf{Permutations and symmetries.}
For a tuple $\alpha\in A^{n}$ by $\Perm(\alpha)$ we denote the set of 
all tuples that can be obtained from $\alpha$ by a permutation of elements.
For an index 
$i = (\mathbf A_{j},\alpha)$ by $\Perm(i)$ we denote the set of 
indexes $(\mathbf A_{j},\beta)$ with $\beta\in \Perm(\alpha)$.
For a tuple $\alpha\in A^{n}$ 
and a permutation on $[n]$ by 
$\sigma(\alpha)$ we denote the tuple 
$\alpha'$ such that 
$\alpha'(j) = \alpha(\sigma(j))$ for every $j\in[n]$.
For a tuple $\gamma$ of arity $N$ whose coordinates 
are indexed by elements from $I$ 
and a permutation $\sigma$ on $[n]$ 
by $\gamma^{\sigma}$ we denote the
tuple $\gamma'$ such that 
$\gamma'((\mathbf A_{i},\alpha)) = 
\gamma((\mathbf A_{i},\sigma(\alpha)))$
for any $(\mathbf A_{i},\alpha)\in I$. 
Similarly, for a relation $R\in\mathcal R_{\mathbf A_1,\dots,\mathbf A_{s}}$
put $R^{\sigma}=\{\gamma^{\sigma}\mid \gamma\in R\}$. 
A relation $R$ is called \emph{$\sigma$-symmetric} if 
$R^{\sigma} = R$.
A relation $R$ is called 
\emph{symmetric} if it is $\sigma$ symmetric for 
every permutation $\sigma$ on $[n]$.
Similarly, a 
reduction $D^{(\top)}$ is called \emph{symmetric} if 
$D_{i}^{(\top)}=
D_{j}^{(\top)}$
for any $j\in\Perm(i)$.

\textbf{$\boxtimes$-product of $\mathbf B$ and $\mathbf Z_{p}$.} 
For $x = (a,b)$
by $x^{(1)}$ and  $x^{(2)}$ we denote 
$a$ and $b$ respectively.
For an algebra $\mathbf B=(B;w^{\mathbf B})$ by 
$\mathbf B\boxtimes \mathbf Z_{p}$ we denote 
the set of algebras $\mathbf A$ such that 
$A = B\times  Z_{p}$, 
$(w^{\mathbf A}(x_1,\dots,x_n))^{(1)} =
w^{\mathbf B}(x_1^{(1)},\dots,x_n^{(1)})$
and 
$(w^{\mathbf A}(x_1,\dots,x_n))^{(2)} = 
f(x_1^{(1)},\dots,x_{n}^{(1)})
+a_1 x_1^{(2)}+\dots+a_n x_{n}^{(2)}$
for some mapping
$f\colon B^{n}\to \mathbf Z_{p}$
and 
$a_{1},\dots,a_n\in \mathbf Z_{p}$.

\subsection{Proof of the main result}\label{SUBSECTIONXYSymProofOfTheMainResults}

%\begin{thm}\label{ExistenceOfStrongReductionTHM}
%Suppose $D^{(1)}$ is 1-consistent reduction of $R_{\mathbf A_{1},\dots,\mathbf A_{s}}$, 
%$D^{(1)}\lll D_{\mathbf A_{1},\dots,\mathbf A_{s}}$, 
%there exists 
%$B<_{T} D^{(1)}_{(\mathbf A_i,\alpha)}$
%for some $(\mathbf A_i,\alpha)$
%and $T\in\{PC,C,BA\}$.
%Then there exists 
%a symmetric 1-consistent reduction $D^{(2)}$ for 
%$R_{\mathbf A_{1},\dots,\mathbf A_{s}}$
%such that   
%$D^{(2)}\lll D^{(1)}$ and $D^{(1)}\neq D^{(2)}$.
%\end{thm}

%\begin{thm}\label{ExistenceOfNextStrongReductionTHM}
%$D^{(0)},D^{(1)},\dots,
%D^{(t)}$  is a sequence of 
%symmetric 1-consistent 1-4 reductions 
%of $R_{\mathbf A_1,\dots.\mathbf A_s}$.
%There exists a strong subalgebra on $D_{x}^{(t)}$.
%Then there exists a symmetric 1-consistent strong reduction $D^{(t+1)}$ of $R_{\mathbf A_1,\dots, \mathbf A_s}^{(t)}$.
%\end{thm}

\begin{thm}\label{THMExistenceOfReduction}
Suppose $\mathbf A_{1},\dots,\mathbf A_{s}\in\mathcal V_{n}$, $n$ is odd, $D^{(1)}$ is a 1-consistent symmetric reduction of $R_{\mathbf A_{1},\dots,\mathbf A_{s}}$, 
$D^{(1)}\lll D^{(0)}$. %_{\mathbf A_{1},\dots,\mathbf A_{s}}$.
Then one of the following conditions hold
\begin{enumerate}
    \item $|D^{(1)}_{(\mathbf A_{i},\alpha)}|=1$ 
    for all $(\mathbf A_{i},\alpha)$.
    \item there exists a 1-consistent symmetric reduction $D^{(2)}$ for $R_{\mathbf A_{1},\dots,\mathbf A_{s}}$ such that 
    $D^{(2)}\lll D^{(1)}$ and $D^{(2)}\neq D^{(1)}$.
    \item there exists a perfect linear congruence $\sigma$ on some
    $D^{(0)}_{(\mathbf A_{i},\alpha)}$ such that 
    \begin{enumerate}
        \item $D_{(\mathbf A_{i},\alpha)}^{(1)}\times D_{(\mathbf A_{i},\alpha)}^{(1)}\not\subseteq \sigma$ 
        \item $D_{(\mathbf A_{i},\alpha)}^{(1)}\times D_{(\mathbf A_{i},\alpha)}^{(1)}\subseteq \sigma^{*}$
    \end{enumerate}
\end{enumerate}
\end{thm}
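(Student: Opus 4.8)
The plan is to assume that (1) fails and deduce (2) or (3), imitating the structure of the correctness proof of Zhuk's algorithm, with the symmetries of $R_{\mathbf A_1,\dots,\mathbf A_s}$ playing the role that cycle-consistency plays there.

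First I would fix an index $i_0=(\mathbf A_i,\alpha_0)$ with $|D^{(1)}_{i_0}|>1$. Since $D^{(1)}_{i_0}\lll D^{(0)}_{i_0}$, Lemma~\ref{LEMUbiquity} provides $C<_{T(\sigma)}^{D^{(0)}_{i_0}}D^{(1)}_{i_0}$ with $T\in\{\TBA,\TC,\TL,\TPC\}$. Because $R_{\mathbf A_1,\dots,\mathbf A_s}$ is symmetric and $D^{(1)}$ is a symmetric reduction, every coordinate $j\in\Perm(i_0)$ carries the same algebra and the same reduced domain, so $C<_{T(\sigma)}^{D^{(0)}_{i_0}}D^{(1)}_j$ for all such $j$. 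Let $D'$ be obtained from $D^{(1)}$ by replacing $D^{(1)}_j$ with $C$ for all $j\in\Perm(i_0)$ (this keeps $D'$ symmetric), and put $D^{(2)}_j:=\proj_j\bigl(R_{\mathbf A_1,\dots,\mathbf A_s}\cap\prod_k D'_k\bigr)$. If $D^{(2)}$ is nonempty, then it is a symmetric reduction (symmetry of $R_{\mathbf A_1,\dots,\mathbf A_s}$ and $D'$), it is $1$-consistent (one checks $R_{\mathbf A_1,\dots,\mathbf A_s}\cap\prod_k D'_k=R_{\mathbf A_1,\dots,\mathbf A_s}\cap\prod_k D^{(2)}_k$), it satisfies $D^{(2)}\lll D^{(1)}$ by Corollary~\ref{CORPropagateToRelations}(r1), and $D^{(2)}_{i_0}\subseteq C\subsetneq D^{(1)}_{i_0}$ gives $D^{(2)}\neq D^{(1)}$; so condition~2 holds and we are done.

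So assume $R_{\mathbf A_1,\dots,\mathbf A_s}\cap\prod_k D'_k=\varnothing$ while $R^{(1)}\neq\varnothing$. Take an inclusion-minimal $J\subseteq\Perm(i_0)$ for which ``$C$ on $J$, $D^{(1)}$ off $J$'' already yields the empty relation, and set $R'':=\proj_J(R^{(1)})$. Then $R''$ is subdirect onto $\prod_{k\in J}D^{(1)}_{i_0}$, we have $R''\cap\prod_{k\in J}C=\varnothing$, reverting a single coordinate to $D^{(1)}_{i_0}$ keeps the intersection nonempty by minimality, and $|J|\ge 2$ since $\proj_k(R^{(1)})\cap C=C\neq\varnothing$. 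Applying Corollary~\ref{CORMainStableIntersection} to $R''$ with all $C_k=C$, $B_k=D^{(1)}_{i_0}$ and all $\sigma_k=\sigma$ (after the routine bookkeeping of passing $\sigma$ from $D^{(0)}_{i_0}$ down to $D^{(1)}_{i_0}$) puts us into one of the cases (ba), (l), (c), (pc). In the cases (ba), (c), (pc), $C$ is a binary absorbing, central, or PC subuniverse; I would rule these out by showing that for the relation $R_{\mathbf A_1,\dots,\mathbf A_s}$ a reduction of a whole permutation class to a BA/central/PC subuniverse cannot empty it — this is the analogue for $R_{\mathbf A_1,\dots,\mathbf A_s}$ of the facts, established for Zhuk's algorithm in Lemma~\ref{LEMFindOneConsistentForAll} and Theorem~\ref{THMPCDoesnotKillAllSolutions}, that such reductions do not destroy all solutions of a consistent instance, with the $S_n$-symmetry and the shape of the generators $\gamma_1,\dots,\gamma_n$ replacing cycle-consistency and the argument again running through Corollary~\ref{CORMainStableIntersection}. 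Hence only case (l) survives: $T=\TL$, and for every $k,\ell\in J$ there is a bridge $\delta_{k,\ell}$ from $\sigma$ to $\sigma$ with $\widetilde{\delta_{k,\ell}}=\sigma\circ\proj_{k,\ell}(R_{\mathbf A_1,\dots,\mathbf A_s})\circ\sigma$.

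Finally I would compose these bridges around the permutation class (Lemma~\ref{LEMBridgeComposition}, so that $\widetilde{\delta}$ is the corresponding composition of the relations $\sigma\circ\proj_{k,\ell}(R_{\mathbf A_1,\dots,\mathbf A_s})\circ\sigma$), and use the $S_n$-symmetry of $R_{\mathbf A_1,\dots,\mathbf A_s}$ together with its explicit generators to argue that the resulting $\widetilde\delta$ is linked; then Lemma~\ref{LEMMakePerfectCongruenceFromLinked} shows $\sigma$ is a perfect linear congruence on $D^{(0)}_{i_0}$. Conditions (3a) and (3b) are then immediate: $C=D^{(1)}_{i_0}\cap E\subsetneq D^{(1)}_{i_0}$ for a $\sigma$-block $E$ yields $D^{(1)}_{i_0}\times D^{(1)}_{i_0}\not\subseteq\sigma$, and $D^{(1)}_{i_0}\times D^{(1)}_{i_0}\subseteq\sigma^*$ is part of the definition of $<_{\TL(\sigma)}$. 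I expect two steps to be the real obstacles. The first is the safety of the binary-absorbing/central/PC reductions for $R_{\mathbf A_1,\dots,\mathbf A_s}$: unlike for arbitrary symmetric subdirect relations (where disjoint central subuniverses already obstruct it), this holds only because of the specific generators, and it mirrors the most delicate correctness arguments of Zhuk's algorithm. The second is showing that the bridge $\widetilde\delta$ obtained by composing around a permutation class is linked; I expect this to require a case analysis on the multiset type of $\alpha_0$, crucially using that $n$ is odd, to guarantee that the pairwise projections of $R_{\mathbf A_1,\dots,\mathbf A_s}$, composed with $\sigma$, connect the whole $\sigma^*$-block.
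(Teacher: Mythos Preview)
Your skeleton matches the paper and the linear branch is essentially Lemma~\ref{LEMExistenceOfReduction}, but your first obstacle is a real gap and your proposed fix does not go through. In cases (ba) and (c) of Corollary~\ref{CORMainStableIntersection} there is no further constraint to exploit, so nothing to ``rule out'' from it, and the analogy with Lemma~\ref{LEMFindOneConsistentForAll} and Theorem~\ref{THMPCDoesnotKillAllSolutions} is misplaced: those rely on cycle-consistency and irreducibility, which $R_{\mathbf A_1,\dots,\mathbf A_s}$ does not have. The paper handles $\mathcal T\in\{\TBA,\TC\}$ by a separate argument (Theorem~\ref{ExistenceOfStrongReductionTHM}) whose mechanism is different. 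One abandons the fixed $C$ and instead chooses an index $(\mathbf A_j,\beta)$ and $B<_{\mathcal T}D^{(1)}_{(\mathbf A_j,\beta)}$ so that the \emph{single}-coordinate restriction $R^{(1)}\downarrow^{(\mathbf A_j,\beta)}_B$ is inclusion-maximal among all such one-coordinate restrictions. Under maximality there is a dichotomy: either this restriction already shrinks the projection to some other $\alpha\in\Perm(\beta)$, in which case by maximality the two single-coordinate restrictions coincide, the restricted relation is fixed by every permutation stabilising $\alpha$ and every permutation stabilising $\beta$, and since $n$ is odd these stabilisers generate $S_n$ --- so one reads $D^{(2)}$ off this already fully symmetric single-coordinate restriction without ever intersecting over the class; or the restriction leaves every other permutation coordinate full, and only then does one intersect over the whole class and show nonemptiness by a short symmetry argument (an empty intersection would, via Corollary~\ref{CORMainStableIntersection}, come from just two coordinates, and applying a permutation sending one of them to $\beta$ contradicts the ``no propagation'' hypothesis). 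The maximality is what makes this dichotomy available; with $C$ fixed in advance you have no such leverage.

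Your second obstacle is easier than you fear, and so is case (pc). No composition of bridges around the class is needed: Corollary~\ref{CORMainStableIntersection}(l) already hands you one bridge with $\widetilde\delta$ containing a single pairwise projection $\proj_{(\mathbf A_i,\alpha_1),(\mathbf A_i,\alpha_2)}R_{\mathbf A_1,\dots,\mathbf A_s}$, and that projection is linked directly from its generators $(\alpha_1(k),\alpha_2(k))$: at least one is off-diagonal since $\alpha_1\neq\alpha_2$, at least one is diagonal since $n$ odd (otherwise $\alpha_0$ would have equal counts of its two values), and together with each $\alpha_j$ being onto $\{a,b\}$ this forces linkedness; Lemma~\ref{LEMMakePerfectCongruenceFromLinked} then gives~(3). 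The same generator observation disposes of case (pc): the factored pairwise projection contains both a diagonal and an off-diagonal pair over distinct $\sigma$-classes, hence cannot be bijective.
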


\begin{thm}\label{THMBuildAReductionFromXYSymmetric}
Suppose 
$\mathbf A_1,\dots,\mathbf A_s\in\mathcal V_{n}$, 
$n$ is odd, 
there exists an $n$-ary term $\tau_{0}$ such that 
$\tau_0^{\mathbf A_{i}}$ is 
XY-symmetric for every $i$. 
Then there exists a 
1-consistent symmetric reduction 
$D^{(\triangle)}\lll D^{(0)}$ of 
$R_{\mathbf A_{1},\dots,\mathbf A_{s}}$
and an $n$-ary term $\tau$ 
such that $\tau^{\mathbf A_{i}}$ is XY-symetric 
and 
$D_{(\mathbf A_{i},\alpha)}^{(\triangle)}=\{\tau(\alpha)\}$
    for every $i$ and 
    $\alpha\in \TwoTuples(A_{i}^{n})$.
\end{thm}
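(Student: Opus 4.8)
The plan is to run a descent through $1$-consistent symmetric reductions of $R:=R_{\mathbf A_1,\dots,\mathbf A_s}$, invoking Theorem~\ref{THMExistenceOfReduction} at every step and calling on the XY-symmetric term $\tau_0$ only to get past the ``linear obstruction'' that theorem can return. First I record the two easy observations that frame the problem. If $|A_i|\le 1$ for every $i$ then $R$ has no coordinates, and $D^{(\triangle)}:=D^{(0)}$ together with $\tau:=\tau_0$ does the job. For any XY-symmetric term $t$, the reduction $D^{(t)}_{(\mathbf A_i,\alpha)}:=\{t^{\mathbf A_i}(\alpha)\}$ is symmetric — by XY-symmetry $t^{\mathbf A_i}$ depends only on the multiset of its arguments whenever they take at most two values, so $t^{\mathbf A_i}(\alpha)=t^{\mathbf A_i}(\beta)$ for $\beta\in\Perm(\alpha)$ — and it is $1$-consistent, since it contains $t(\gamma_1,\dots,\gamma_n)\in R$. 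Conversely, if $D$ is a $1$-consistent symmetric reduction all of whose coordinate domains are singletons, then $R\cap D$ is a single tuple, which equals $\tau(\gamma_1,\dots,\gamma_n)$ for some $n$-ary term $\tau$ because $R$ is generated by $\gamma_1,\dots,\gamma_n$; symmetry of $D$ and idempotence of $\tau$ then force $\tau^{\mathbf A_i}$ to be symmetric on every tuple with at most two distinct values, i.e.\ XY-symmetric. Hence the whole content of the theorem is the existence of a singleton reduction $D^{(\triangle)}$ with $D^{(\triangle)}\lll D^{(0)}$, and $\tau$ is read off from it as just described.

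To construct such a reduction I would start from $D^{(1)}:=D^{(0)}$ — which is $1$-consistent, symmetric, and $\lll D^{(0)}$ by reflexivity — and apply Theorem~\ref{THMExistenceOfReduction} repeatedly, maintaining a ``current'' $1$-consistent symmetric reduction $D^{(k)}$ with $D^{(k)}\lll D^{(0)}$. Its outcome~(1) finishes the proof by the previous paragraph. Its outcome~(2) replaces $D^{(k)}$ by a strictly smaller $1$-consistent symmetric $D^{(k+1)}\lll D^{(k)}$; since $\sum_{(\mathbf A_i,\alpha)}|D^{(k)}_{(\mathbf A_i,\alpha)}|$ strictly decreases, this happens only finitely often. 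So the real task is to convert outcome~(3) into one more step of the descent.

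In outcome~(3) we are handed a perfect linear congruence $\sigma$ on some $\mathbf C:=D^{(0)}_{(\mathbf A_i,\alpha)}=\Sg_{\mathbf A_i}(\alpha)$ such that the current domain $G:=D^{(k)}_{(\mathbf A_i,\alpha)}$ is not contained in a single $\sigma$-block but is contained in a single $\sigma^{*}$-block; by symmetry every index of the orbit $\Perm((\mathbf A_i,\alpha))$ carries the same $G$, with $|G/\sigma|>1$. The move I want is to shrink the reduction \emph{symmetrically} by replacing $G$, on all orbit indices at once, by $G$ intersected with one fixed $\sigma$-block $E$. This is a dividing reduction of linear type, hence compatible with $\lll D^{(0)}$ (using that a linear algebra has no proper BA or central subuniverse), it is strictly smaller, and $1$-consistency is recovered afterwards by passing to the projections of the restricted relation, which stay symmetric, stay $\lll$ the previous domains by Corollary~\ref{CORPropagateToRelations}, and are nonempty provided the restricted relation is. So everything reduces to a single claim: $R^{(k)}:=R\cap D^{(k)}$ contains a tuple whose value on every index of $\Perm((\mathbf A_i,\alpha))$ lies in a common $\sigma$-block.

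This last claim is the only place where the \emph{full} XY-symmetry of $\tau_0$ — as opposed to the bare WNU identities — enters, and I expect it to be the crux of the whole proof. Perfect linearity supplies, on the relevant $\sigma^{*}$-block, a coordinate $z\colon G\to\mathbf Z_p$ whose fibres are the $\sigma$-blocks and on which the basic operation acts as the averaging map $\tfrac1n(y_1+\dots+y_n)$ (so $\gcd(n,p)=1$ automatically). Every $S_n$-translate of a tuple of $R^{(k)}$ again lies in $R^{(k)}$, and applying $\tau_0$ to $n$ suitably permuted translates sends the $\mathbf Z_p$-profile $(c_\beta)_{\beta\in\Perm(\alpha)}$ on the orbit to an average of $S_n$-shifts of it; iterating these averagings one wants to reach a profile that is constant in $\beta$, which is exactly a tuple of the required form. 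An alternative way to organise this step is to realise the $2$-generated $\mathbf C$ as $\mathbf B\boxtimes\mathbf Z_p$ with $|\mathbf B|<|\mathbf C|$ and run the restriction of $\tau_0$ to $\mathbf B$ through the $\boxtimes$-propagation result of Subsection~\ref{SUBSECTIONFixingAnOperation} to obtain a term adapted to that decomposition. Either way, managing the interaction between this linear ($\boxtimes$) structure and the $S_n$-symmetry of the orbit is the delicate point; once it is in hand the descent resumes, terminates at a singleton reduction, and the desired $\tau$ and $D^{(\triangle)}$ are obtained.
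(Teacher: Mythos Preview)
Your framing (singleton symmetric reduction $\Leftrightarrow$ XY-symmetric term) is correct, and the descent is sound until outcome~(3); but the step you flag as ``delicate'' is not a detail to fill in---it is the whole difficulty, and the averaging manoeuvre you sketch does not work. Projecting $R^{(k)}$ to the orbit and then through $z$ to $\mathbf Z_{p}^{M}$ yields an $S_n$-invariant affine subspace $V$, and you need a constant tuple in $V$. Since any idempotent XY-symmetric affine map on $\mathbf Z_{p}$ has equal coefficients, $\tau_0^{\mathbf Z_{p}}=w^{\mathbf Z_{p}}$, so at the profile level $\tau_0$ buys nothing beyond the affine $S_n$-closure; and an $S_n$-invariant affine hyperplane such as $\sum_\beta c_\beta=1$ contains no constant tuple when $p\mid M$. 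Nothing in your argument excludes this once earlier reductions have moved $R^{(k)}$ away from $\tau_0(\gamma_1,\dots,\gamma_n)$ (this is exactly the obstruction the paper's history section warns about). The $\boxtimes$-alternative has the same defect: it would manufacture an XY-symmetric term on $\mathbf C$, which you already have; the problem is locating a tuple inside the \emph{current} $R^{(k)}$.

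The paper never tries to push through this obstruction. Rather than calling Theorem~\ref{THMExistenceOfReduction} as a black box, it runs the descent by hand while tracking, at each stage $j$, the minimal congruences $\delta^j_i$ modulo which $R^{(j)}$ still contains a tuple constant on every orbit; the base case $\delta^0_i=0$ is the single place $\tau_0$ is used. The key invariant is that these $\delta$'s may grow only at a $\TBA$ or $\TC$ step. When a dividing step is needed, the congruence $\sigma$ is located above the \emph{tracked} $\delta^\ell$ rather than above $0$. If $\ell$ is the current stage, the tracked tuple already names a $\sigma$-block and the descent continues with the $\delta$'s unchanged. If $\ell$ is earlier, then $\delta$ grew at step $\ell\to\ell+1$, which by the invariant was $\TBA$/$\TC$; feeding this through the perfect-linear witness $\zeta$ produces a proper $\TBA$/$\TC$ subuniverse in a power of $\mathbf Z_{p}$, contradicting Lemma~\ref{LEMNoAbsCenterPCInLinearAlgebra}. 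The missing idea is this congruence bookkeeping, not a sharper averaging trick.
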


\begin{thm}\label{THMpropagateXYSymmetric}
Suppose 
$\mathbf A,\mathbf B\in \mathcal V_{n}$,
$0_{\mathbf A}$ is a perfect linear congruence,
$\mathbf A/{0_{\mathbf A}}^{*}\times \mathbf B$ has an XY-symmetric term operation of arity $n$.
Then $\mathbf A\times \mathbf B$ has an XY-symmetric term operation.
\end{thm}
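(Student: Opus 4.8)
The plan is to reduce Theorem~\ref{THMpropagateXYSymmetric} to the single operational fact that a $\boxtimes$-product with $\mathbf Z_p$ cannot destroy XY-symmetry: if $\mathbf E\in\mathcal V_n$ has an XY-symmetric term operation of arity $n$, then so does every algebra isomorphic to a member of $\mathbf E\boxtimes\mathbf Z_p$. First I would show that the hypothesis on $0_{\mathbf A}$ puts $\mathbf A$ into this framework. Write $\theta=0_{\mathbf A}^{*}$. Since $0_{\mathbf A}$ is a perfect linear congruence it is linear, so $\theta$ is a congruence, and the witnessing relation $\zeta\le\mathbf A\times\mathbf A\times\mathbf Z_{p}$ (with $\proj_{1,2}(\zeta)=\theta$ and $(a_1,a_2,b)\in\zeta\Rightarrow(a_1=a_2\Leftrightarrow b=0)$) forces every $\theta$-block to be a coset of a copy of $\mathbf Z_p$; choosing a representative in each block yields a map $\pi\colon A\to Z_p$ for which $a\mapsto(a/\theta,\pi(a))$ is a bijection of $A$ onto $(A/\theta)\times Z_p$ transporting $w^{\mathbf A}$ to an operation of $\boxtimes$-form. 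Thus $\mathbf A$ is isomorphic to a member of $\mathbf C\boxtimes\mathbf Z_p$ with $\mathbf C:=\mathbf A/\theta\in\mathcal V_n$ (when $\mathbf C$ is trivial, $\theta=A^{2}$ and Lemma~\ref{LEMLInearOnTheTopIsEasy} gives $\mathbf A\cong\mathbf Z_p$, the degenerate case of a one-element $\mathbf C$).

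Next I would carry $\mathbf B$ along. Regarding the underlying set $(C\times Z_p)\times B$ of $\mathbf A\times\mathbf B$ instead as $(C\times B)\times Z_p$, and noting that the mixing map $f$ appearing in $w^{\mathbf A}$ involves only the $C$-coordinates (so it can be read as a map $(C\times B)^{n}\to Z_p$), one gets that $\mathbf A\times\mathbf B$ is isomorphic to a member of $(\mathbf C\times\mathbf B)\boxtimes\mathbf Z_p$; and by hypothesis $\mathbf C\times\mathbf B=(\mathbf A/\theta)\times\mathbf B$ already has an XY-symmetric term operation of arity $n$. Applying the operational fact with $\mathbf E=\mathbf C\times\mathbf B$, and transporting the resulting operation back along the isomorphisms, then finishes the theorem.

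It remains — and here is where essentially all the work lies — to prove the operational fact. Given $\mathbf A''\in\mathbf E\boxtimes\mathbf Z_p$ and an XY-symmetric $\tau$ of arity $n$ on $\mathbf E$, the term $\tau$ evaluated in $\mathbf A''$ is a term operation whose first coordinate is $\tau^{\mathbf E}$ (already XY-symmetric, since the first coordinate of $w^{\mathbf A''}$ is $w^{\mathbf E}$) and whose second coordinate has the shape $\Phi(\text{the }\mathbf E\text{-inputs})+\lambda_1c_1+\dots+\lambda_nc_n$ with $\sum_i\lambda_i=1$ by idempotency. Idempotency of $w^{\mathbf Z_p}$ forces $n\equiv 1\pmod p$, so $w^{\mathbf Z_p}(x_1,\dots,x_n)=x_1+\dots+x_n$ is itself an idempotent fully symmetric operation on $\mathbf Z_p$; composing the lifted $\tau$ with permuted copies of itself by means of $w^{\mathbf A''}$ both pushes the linear part toward the symmetric $\tfrac1n\sum_ic_i$ and rewrites $\Phi$ as a fixed combination of its permuted values, with the special-WNU identity $w(x,\dots,x,y)=w(x,\dots,x,w(x,\dots,x,y))$ keeping the iteration controlled. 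The delicate point is to make this process terminate at a genuinely XY-symmetric operation — i.e. to force the $\mathbf E$-dependent summand $\Phi$ in the $\mathbf Z_p$-coordinate to stabilise to a function depending only on the multiplicities of the two distinct entries of a tuple, not on their arrangement, while never disturbing the XY-symmetry already present in the $\mathbf E$-coordinate. That is exactly the content of Subsection~\ref{SUBSECTIONFixingAnOperation}.
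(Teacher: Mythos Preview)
Your approach coincides with the paper's: embed $\mathbf A$ into an algebra $\mathbf C\in(\mathbf A/0_{\mathbf A}^{*})\boxtimes\mathbf Z_{p}$, observe that $\mathbf C\times\mathbf B\in((\mathbf A/0_{\mathbf A}^{*})\times\mathbf B)\boxtimes\mathbf Z_{p}$, invoke Theorem~\ref{PropagateXYSYmmetricityToBoxtimesTHM} for the operational fact, and pull the resulting term back.

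There is one inaccuracy worth flagging. You assert that $a\mapsto(a/\theta,\pi(a))$ is a \emph{bijection}, so that $\mathbf A$ is isomorphic to a member of $(\mathbf A/\theta)\boxtimes\mathbf Z_{p}$. The paper (Theorem~\ref{THMExistenceOfInjectiveHomomorphism}) only proves this map is an \emph{injective homomorphism}: Lemma~\ref{ZetaPropertiesLEM}(4),(5) give injectivity, but surjectivity would require every $\theta$-block to have size exactly $p$, and the definition of a linear congruence allows blocks $B$ with $(B/\sigma)\cong\mathbb Z_{p}^{0}$, i.e.\ singletons. So in general $\mathbf A$ only embeds into $\mathbf C$. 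The fix is immediate and is exactly what the paper does: an embedding suffices, because if a term $\tau$ is XY-symmetric on $\mathbf D=\mathbf C\times\mathbf B$ then $\tau^{\mathbf A\times\mathbf B}$, being the restriction of $\tau^{\mathbf D}$ to a subalgebra, is XY-symmetric as well. With ``bijection'' and ``isomorphism'' replaced by ``injection'' and ``embedding'', your argument is the paper's proof.
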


\begin{thm}\label{THMMainInductive}
Suppose 
$\mathbf A_1,\dots,\mathbf A_{s}\in \mathcal V_{n}$,
$n$ is odd.
Then there exists a term $\tau$ such that 
$\tau^{\mathbf A_i}$ is an XY-symmetric operation for every $i$.
\end{thm}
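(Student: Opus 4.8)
The plan is to prove the statement by induction on $\sum_{i=1}^{s}|A_i|$, using the three preceding theorems as the engine. Given $\mathbf A_1,\dots,\mathbf A_s\in\mathcal V_n$ with $n$ odd, first I would form the free generated relation $R:=R_{\mathbf A_1,\dots,\mathbf A_s}$ together with its trivial reduction $D^{(0)}$, which is automatically $1$-consistent and symmetric since $\proj_i(R)=D_i^{(0)}$ and the generators $\gamma_1,\dots,\gamma_n$ are permuted among themselves by any $\sigma$ on $[n]$. Then I would iterate Theorem~\ref{THMExistenceOfReduction}: starting from $D^{(0)}$, as long as we are in case~2 we pass to a strictly smaller $1$-consistent symmetric reduction $D^{(2)}\lll D^{(1)}$, $D^{(2)}\neq D^{(1)}$. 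Since each domain is finite this descending chain must terminate, so after finitely many steps we arrive at a $1$-consistent symmetric reduction $D^{(1)}\lll D^{(0)}$ falling into case~1 or case~3.

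If we land in case~1, every $D^{(1)}_{(\mathbf A_i,\alpha)}$ is a singleton $\{c_{i,\alpha}\}$. I would argue this immediately yields the desired term: because the reduction is $1$-consistent, $R^{(1)}=R\cap D^{(1)}$ is a nonempty subalgebra, hence contains a tuple $t$, and since each coordinate has been squeezed to a point, $t$ is the unique tuple in $R^{(1)}$ and is fixed by every permutation $\sigma$ (the reduction is symmetric and $R$ is symmetric, so $\sigma$ maps $R^{(1)}$ to itself). Being an element of $R=\Sg(\gamma_1,\dots,\gamma_n)$, $t=\tau(\gamma_1,\dots,\gamma_n)$ for some $n$-ary term $\tau$; evaluating coordinatewise, for each $i$ and each $\alpha\in\TwoTuples(A_i^n)$ the $(\mathbf A_i,\alpha)$-coordinate gives $\tau^{\mathbf A_i}(\alpha)=c_{i,\alpha}$, and $\sigma$-invariance of $t$ says $\tau^{\mathbf A_i}(\alpha)=\tau^{\mathbf A_i}(\sigma(\alpha))$ for all $\sigma$ — i.e. $\tau^{\mathbf A_i}$ is symmetric on every two-element tuple, which is exactly XY-symmetry. (Here I would also note idempotence passes through since each $\mathbf A_i$ is idempotent and $\tau$ is a term.)

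If instead we land in case~3, there is a perfect linear congruence $\sigma$ on some $D^{(0)}_{(\mathbf A_i,\alpha)}$, which is a subalgebra of $\mathbf A_i$; write $\mathbf A:=D^{(0)}_{(\mathbf A_i,\alpha)}$. I would then want to set up the splitting: a perfect linear congruence $\sigma$ lets us view $\mathbf A$ (or rather each block-structure of $\sigma^*$) as built from $\mathbf A/\sigma^*$ and $\mathbf Z_p$, via the $\boxtimes$-product, so that Theorem~\ref{THMpropagateXYSymmetric} becomes applicable: if $\mathbf A/\sigma^*\times(\text{smaller pieces})$ has an XY-symmetric term of arity $n$, then so does $\mathbf A\times(\text{those pieces})$. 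Concretely I would replace the list $\mathbf A_1,\dots,\mathbf A_s$ by a list in which the offending factor is substituted by strictly smaller algebras (the $\sigma^*$-quotient, plus whatever factors of $\mathbf A_i$ are needed), apply the induction hypothesis to this smaller instance to get a common term $\tau_0$ that is XY-symmetric on all of them, then invoke Theorem~\ref{THMpropagateXYSymmetric} to lift XY-symmetry back across the $\boxtimes$-product, and finally invoke Theorem~\ref{THMBuildAReductionFromXYSymmetric} to convert ``$\tau_0$ is XY-symmetric on each $\mathbf A_j$'' into an actual $1$-consistent symmetric reduction $D^{(\triangle)}$ of $R_{\mathbf A_1,\dots,\mathbf A_s}$ with singleton coordinates and a witnessing term $\tau$ — which is precisely case~1 again, finishing the proof.

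The main obstacle I expect is the bookkeeping in case~3: correctly packaging ``some coordinate algebra $\mathbf A_i$ carries a perfect linear congruence'' into a strictly smaller instance to which the induction hypothesis applies, and then threading Theorems~\ref{THMpropagateXYSymmetric} and~\ref{THMBuildAReductionFromXYSymmetric} together so that the term one ends up with is simultaneously XY-symmetric on all of $\mathbf A_1,\dots,\mathbf A_s$ (not just on the modified list) and of the right arity $n$. In particular one must make sure the measure $\sum_i|A_i|$ genuinely decreases — this relies on $\sigma^*$ being a proper congruence (so $\mathbf A/\sigma^*$ is strictly smaller) and on $p>1$, both built into the definition of a perfect linear congruence — and that enlarging the list with extra $\mathbf Z_p$-factors does not break the decrease, which is where the precise form of the $\boxtimes$-product and of Theorem~\ref{THMpropagateXYSymmetric} (it only asks about $\mathbf A/{0_{\mathbf A}}^*\times\mathbf B$, a strictly smaller object) is doing the real work.
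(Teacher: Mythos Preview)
Your high-level picture (iterate Theorem~\ref{THMExistenceOfReduction}, handle case~1 directly, and in case~3 invoke Theorem~\ref{THMpropagateXYSymmetric} via induction) is right, but the bookkeeping you flagged as ``the main obstacle'' is in fact the heart of the argument, and as stated your plan does not close it.

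There are three concrete problems. First, in case~3 the perfect linear congruence $\sigma$ lives on the two-generated subalgebra $\mathbf A:=D^{(0)}_{(\mathbf A_i,\alpha)}$, not on $\mathbf A_i$; and Theorem~\ref{THMpropagateXYSymmetric} needs $0_{\mathbf A}$ (not an arbitrary $\sigma$) to be perfect linear. You give no mechanism to force either $\mathbf A=\mathbf A_i$ or $\sigma=0_{\mathbf A}$. Second, your measure $\sum_i|A_i|$ cannot absorb the moves that are actually needed: the paper must \emph{add} all proper subalgebras of $\mathbf A_1$ to the list (to force $\mathbf A=\mathbf A_1$ in case~3), and must pass from $\mathbf A_1$ to $\mathbf A_1/\delta$ together with possibly two quotients (to handle non-subdirectly-irreducible $\mathbf A_1$); neither of these decreases the sum. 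This is why the paper inducts on the lexicographic order of the sorted size-tuple $(|A_1|,|A_2|,\dots)$. Third, and most importantly, the order of operations is reversed. The paper does \emph{not} iterate Theorem~\ref{THMExistenceOfReduction} starting from $D^{(0)}$. Instead it first reduces to the case where $\mathbf A_1$ has a unique minimal nontrivial congruence $\delta$, applies the inductive hypothesis to $\mathbf A_1/\delta,\mathbf A_2,\dots,\mathbf A_s$ to get an XY-symmetric term there, feeds that into Theorem~\ref{THMBuildAReductionFromXYSymmetric} to produce a reduction in which every $D^{(1)}_{(\mathbf A_1,\alpha)}$ already sits inside a single $\delta$-block, and only \emph{then} iterates Theorem~\ref{THMExistenceOfReduction}. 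With this preparation, condition~3(a) ($D^{(1)}\times D^{(1)}\not\subseteq\sigma$) forces $\delta\not\subseteq\sigma$, hence $\sigma=0_{\mathbf A_1}$ by minimality of $\delta$; and the subalgebra trick forces $D^{(0)}_{(\mathbf A_1,\alpha)}=\mathbf A_1$. Now Theorem~\ref{THMpropagateXYSymmetric} applies cleanly, and since $\delta\subseteq 0_{\mathbf A_1}^*$ the XY-symmetric term on $\mathbf A_1/\delta\times\prod_{j\ge2}\mathbf A_j$ already gives one on $\mathbf A_1/0_{\mathbf A_1}^*\times\prod_{j\ge2}\mathbf A_j$, which is the hypothesis that theorem needs. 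Without this setup, case~3 gives you too little to work with.
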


\begin{proof}
First, we reorder algebras so that 
$|A_1|\ge |A_2|\ge \dots \ge|A_s|$.  
We prove the claim by induction on the size of algebras.
Precisely, we assign an infinite tuple 
$(|A_1|,|A_2|,\dots,|A_s|,0,0,\dots)$ to the sequence of algebras, 
and our inductive assumption is that
the statement holds for algebras 
$\mathbf A_1',\dots,\mathbf A_{t}'\in \mathcal V_{n}$
such that
$(|A_1'|,|A_2'|,\dots,|A_t'|,0,0,\dots)<(|A_1|,|A_2|,\dots,|A_s|,0,0,\dots)$
(lexicographic order).

The base of our induction is the case when 
$|A_1|=|A_2|=\dots=|A_s|=1$, which is obvious.

Let us prove the inductive step. 
First, we add all nontrivial 
subalgebras of $\mathbf A_{1}$ to the list 
$\mathbf A_1,\dots,\mathbf A_{s}$ and prove even stronger claim.
We do not want to introduce new notations that is why we assume that 
$\mathbf A_1,\dots,\mathbf A_{s}$ contains all
nontrivial subalgebras of $\mathbf A_{1}$.
Consider two cases.

Case 1. Suppose $\mathbf A_{1}$ has two nontrivial congruences $\sigma$ and $\delta$ such that 
$\sigma\cap \delta$ is the equality relation (0-congruence) on $\mathbf A_1$.
Consider algebras 
$\mathbf A_1/\sigma,\mathbf A_1/\delta,\mathbf A_2,\mathbf A_3,\dots,\mathbf A_{s}\in \mathcal V_{n}$
and apply the inductive assumption.
Then there exists a term $t$ such that 
$t^{\mathbf A_{i}}$ is XY-symmetric for every $i\ge 2$,
$t^{\mathbf A_{1}/\sigma}$ and
$t^{\mathbf A_{1}/\delta}$ are XY-symmetric.
Therefore $t^{\mathbf A_{1}}$ is also XY-symmetric, which completes the proof.

Case 2. There exists a unique minimal nontrivial congruence $\delta$ on $\mathbf A_{1}$.
By the inductive assumption, there exists 
a term $\tau_{0}$ such that 
$\tau_{0}^{\mathbf A_1/\delta}$
and $\tau_{0}^{\mathbf A_i}$ for $i\ge 2$ 
are XY-symmetric.
By Theorem \ref{THMBuildAReductionFromXYSymmetric}
there exists a 1-consistent symmetric reduction 
$D^{(\top)}$ for $R_{\mathbf A_1/\delta, \mathbf A_{2},\dots,\mathbf A_s}$
and an $n$-ary term $\tau_1$
satisfying the corresponding condition.
We define a new reduction for $R_{\mathbf A_1, \mathbf A_{2},\dots,\mathbf A_s}$
as follows.
We put 
$D^{(1)}_{(\mathbf A_{i},\alpha)} = 
D^{(\top)}_{(\mathbf A_{i},\alpha)}$ for $i\ge 2$,
and
$D^{(1)}_{(\mathbf A_{1},\alpha)} = E$
whenever 
$D^{(\top)}_{(\mathbf A_{1},\alpha/\delta)}=\{E\}$.
Applying term $\tau_1$
to the generators of 
$R_{\mathbf A_1, \mathbf A_{2},\dots,\mathbf A_s}$
we obtain a tuple $\gamma\in R_{\mathbf A_1, \mathbf A_{2},\dots,\mathbf A_s}^{(1)}.$
To make the reduction $D^{(1)}$ 1-consistent,
put 
$D^{(2)}_{(\mathbf A_{i},\alpha)} = 
\proj_{(\mathbf A_{i},\alpha)} 
R_{\mathbf A_1, \mathbf A_{2},\dots,\mathbf A_s}^{(1)}$.
By Corollary \ref{CORPropagateMultiplyByCongruence}(t)
and Corollary \ref{CORPropagateToRelations}(r1)
we have $D^{(2)}\lll D^{(0)}$.

Notice that 
$|D^{(2)}_{(\mathbf A_{i},\alpha)}|=1$
for $i\ge 2$.
By Theorem \ref{THMExistenceOfReduction} we have one of the three cases.
In case 2, we can apply Theorem \ref{THMExistenceOfReduction} again 
and obtain even smaller reduction. 
Since we cannot reduce forever, 
we end up with one of the two subcases.

Subcase 1. 
There exists a symmetric 1-consistent reduction $D^{(3)}\lll D^{(2)}$
such that $|D_{(\mathbf A_{i},\alpha)}^{(3)}|=1$.
Take the tuple $\gamma\in R_{\mathbf A_1, \mathbf A_{2},\dots,\mathbf A_s}^{(3)}$ and a term $\tau$ giving
$\gamma$ on the generators of $R_{\mathbf A_1, \mathbf A_{2},\dots,\mathbf A_s}$. It follows from the symmetricity of $D^{(3)}$ that 
$\tau^{\mathbf A_{i}}$ is XY-symmetric for every $i$.

Subcase 2. 
There exists 
a perfect linear congruence 
$\sigma$ on some $\mathbf D^{(0)}_{(\mathbf A_{1},\alpha)}$
such that $D^{(2)}_{(\mathbf A_{1},\alpha)}\times
D^{(2)}_{(\mathbf A_{1},\alpha)}\not\subseteq \sigma$.
Assume that $D^{(0)}_{(\mathbf A_{1},\alpha)}\neq 
A_{1}$.
Since we assumed that all subalgebras of $\mathbf A_1$ are in the list, there exists $k$ such that 
$\mathbf A_{k} = \mathbf D^{(0)}_{(\mathbf A_{1},\alpha)}$.
By the definition of $R_{\mathbf A_{1},\dots,\mathbf A_{s}}$ we have  
$\gamma(\mathbf A_{k},\alpha) = 
\gamma(\mathbf A_{1},\alpha)$ for all 
$\gamma\in R_{\mathbf A_{1},\dots,\mathbf A_{s}}$.
Since the reduction $D^{(2)}$ is 1-consistent 
we obtain that 
$|\mathbf D^{(2)}_{(\mathbf A_{1},\alpha)}|=
|\mathbf D^{(2)}_{(\mathbf A_{k},\alpha)}|=1$,
which contradicts $D^{(2)}_{(\mathbf A_{1},\alpha)}\times
D^{(2)}_{(\mathbf A_{1},\alpha)}\not\subseteq \sigma$.
Thus, 
$D^{(0)}_{(\mathbf A_{1},\alpha)}=A_{1}$
and $\sigma$ is a perfect linear congruence on $\mathbf A_{1}$.
Since $D^{(2)}_{(\mathbf A_{1},\alpha)}\times
D^{(2)}_{(\mathbf A_{1},\alpha)}\not\subseteq \sigma$
and $D^{(2)}_{(\mathbf A_{1},\alpha)}$ is smaller than or equal to  
an equivalence block of $\delta$, 
we have
$\delta\not\subseteq \sigma$.
Since $\delta$ is the minimal nontrivial congruence, 
we obtain that $\sigma=0_{\mathbf A_{1}}$.
Applying Theorem \ref{THMpropagateXYSymmetric}
to 
$\mathbf A_1/0_{\mathbf A_{1}}^{*} \times \mathbf A_2\times\mathbf A_3\times
\dots\times \mathbf A_{s}$ 
we obtain a term $\tau$ such that 
$\tau^{\mathbf A_i}$ is XY-symmetric
for every $i\in[s]$.
\end{proof}

\begin{THMMainTheoremOnXYSymmetricTHM}
Suppose $f$ is a WNU of an odd arity $n$ on a finite set.
Then there exists an XY-symmetric operation $f'\in\Clo(\{f\})$ of arity $n$.
\end{THMMainTheoremOnXYSymmetricTHM}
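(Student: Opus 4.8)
The plan is to deduce Theorem~\ref{THMMainTheoremOnXYSymmetric} from Theorem~\ref{THMMainInductive}, which is the genuinely substantive statement. Given a WNU operation $f$ of odd arity $n$ on a finite set $A$, the first step is to pass from $f$ to a \emph{special} idempotent WNU operation $w$ of the same arity $n$ in $\Clo(\{f\})$; the existence of such a $w$ is guaranteed by Lemma~\ref{LEMExistenceOfSpesialWNULemma} (and the idempotency is automatic for a WNU on a finite set after composing $f$ with itself enough times, which does not change the arity). This puts us in the setting of the class $\mathcal V_n$: the algebra $\mathbf A = (A; w)$ lies in $\mathcal V_n$.

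\textbf{Main step.} Now apply Theorem~\ref{THMMainInductive} to the single algebra $\mathbf A \in \mathcal V_n$ (take $s = 1$, $\mathbf A_1 = \mathbf A$), using that $n$ is odd. This yields a term $\tau$ in the language of $\mathbf A$ — that is, a term built from $w$, hence an operation $\tau^{\mathbf A} \in \Clo(\{w\}) \subseteq \Clo(\{f\})$ — such that $\tau^{\mathbf A}$ is XY-symmetric. The only remaining bookkeeping is the arity: Theorem~\ref{THMMainInductive} as stated produces an XY-symmetric term without explicitly pinning the arity to $n$, so I would observe that the construction in its proof (ultimately via Theorem~\ref{THMBuildAReductionFromXYSymmetric} and Theorem~\ref{THMExistenceOfReduction}, both of which produce $n$-ary terms) keeps the arity equal to $n$; alternatively, if the term $\tau$ has some other arity, one composes it down to arity $n$ in the standard way, noting that since $w$ is a WNU this does not destroy XY-symmetry, or simply invokes that the whole development of Section~\ref{SECTIONXYSYMMETRIC} works with $n$-ary terms throughout. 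Setting $f' := \tau^{\mathbf A}$ (adjusted to arity $n$) completes the proof.

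\textbf{Main obstacle.} The real content is entirely inside Theorem~\ref{THMMainInductive} and the three theorems feeding it, so at the level of this final deduction there is essentially no obstacle — the only point requiring a sentence of care is the arity tracking just described, making sure that passing from $f$ to a special WNU and then extracting the XY-symmetric term both preserve arity $n$ (the oddness of $n$ is used crucially inside Theorem~\ref{THMMainInductive}, e.g.\ in the parity argument that forces the linear congruence case, so it must be carried along verbatim). Thus the proof is a two-line corollary of the machinery: reduce to a special WNU, invoke Theorem~\ref{THMMainInductive}, and fix the arity.
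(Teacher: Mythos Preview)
Your high-level strategy matches the paper's: reduce to a special idempotent WNU, invoke Theorem~\ref{THMMainInductive}, and then adjust arity. The gap is in the arity bookkeeping, and it is not merely cosmetic.

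You assert that Lemma~\ref{LEMExistenceOfSpesialWNULemma} produces a special idempotent WNU \emph{of the same arity}~$n$, and that idempotency ``is automatic for a WNU on a finite set after composing $f$ with itself enough times, which does not change the arity.'' Neither claim is correct as stated. The lemma, as the paper actually applies it, yields a special WNU $w\in\Clo(f)$ of arity $N=n^{n!}$, not~$n$; the blow-up is precisely what absorbs the passage to idempotency when $f$ is not assumed idempotent (there is no general construction that turns a non-idempotent $n$-ary WNU into an idempotent $n$-ary WNU inside $\Clo(f)$ --- iterating the unary map $x\mapsto f(x,\dots,x)$ gives a retraction, not the identity). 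So you are not entitled to place $\mathbf A=(A;w)$ in $\mathcal V_n$; you must work in $\mathcal V_N$.

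Consequently Theorem~\ref{THMMainInductive} produces an $N$-ary XY-symmetric term $w'$, and the final step --- recovering an $n$-ary XY-symmetric $f'$ --- is not a throwaway remark. The paper does it explicitly by equal-block identification,
\[
f'(x_1,\dots,x_n)=w'(\underbrace{x_1,\dots,x_1}_{n^{n!-1}},\dots,\underbrace{x_n,\dots,x_n}_{n^{n!-1}}),
\]
which works because $n\mid N$ and because XY-symmetry of $w'$ (not the WNU property of $w$, as you wrote) guarantees that the value depends only on the multiset of block labels. Your phrase ``composes it down to arity $n$ in the standard way'' hides exactly this step; it should be made explicit, and the justification should cite the XY-symmetry of $w'$ rather than the WNU identity.
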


\begin{proof}
Let $f$ be an operation on a finite set $A$.
By Lemma \ref{LEMExistenceOfSpesialWNULemma}
there exists a special WNU $w\in\Clo(f)$ of arity $N = n^{n!}$.
Consider the algebra $\mathbf A = (A;w)\in\mathcal V_{N}$.
By Theorem \ref{THMMainInductive} there exists 
an $N$-ary operation $w'\in\Clo(w)$ such that 
$w'$ is XY-symmetric.
Then the required $n$-ary XY-symmetric operation can be defined by 
$$
f'(x_1,\dots,x_n) = 
w'(\underbrace{x_1,\dots,x_1}_{n^{n!-1}},
\underbrace{x_2,\dots,x_2}_{n^{n!-1}},
\dots,
\underbrace{x_n,\dots,x_n}_{n^{n!-1}}
)
$$%
\end{proof}

\subsection{Proof of Theorems \ref{THMExistenceOfReduction} and \ref{THMBuildAReductionFromXYSymmetric} (Finding a  reduction)}

\begin{thm}\label{ExistenceOfStrongReductionTHM}
Suppose 
\begin{enumerate}
    \item $\mathbf A_{1},\dots,\mathbf A_{s}\in\mathcal V_{n}$, where $n$ is odd.
    \item %$D^{(0)}$ and 
    $D^{(1)}$ is a 1-consistent symmetric reductions
of a symmetric relation $R\in \mathcal R_{\mathbf A_{1},\dots,\mathbf A_{s}}$, 
    \item $D^{(1)}\lll D^{(0)}$,
\item $B<_{\mathcal T}^{D^{(0)}_{(\mathbf A_{j},\beta)}} 
D^{(1)}_{(\mathbf A_{j},\beta)}$, 
where $\mathcal T\in\{\TBA, \TC, \TPC\}$.

%\item if $\mathcal T = PC$ then 
%$D_{(\mathbf A_{i},\alpha)}^{(1)}$ is BA and center free
%for every $(\mathbf A_{i},\alpha)$.
\end{enumerate} 
Then there exists a 1-consistent symmetric reduction $D^{(2)}$ for $R$ such that 
    $D^{(2)}\lll D^{(1)}$ and
    $D^{(2)}\neq D^{(1)}$.
    Moreover, $D^{(2)}\le_{\mathcal T} D^{(1)}$
    if $\mathcal T\neq \TPC$.
\end{thm}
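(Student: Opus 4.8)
The plan is to reduce the coordinate $(\mathbf A_j,\beta)$ to the strong subset $B$ using the machinery of Lemma~\ref{LEMFindOneConsistentForAll} applied to the relation $R$ (viewed as the solution set of a CSP instance with a single constraint, where the variables are the coordinates $i\in I$ and their domains are $D_i^{(1)}$), and then to symmetrize: since $R$ is $\sigma$-symmetric for every permutation $\sigma$ on $[n]$, any reduction we obtain on one coordinate can be transported to all coordinates in $\Perm((\mathbf A_j,\beta))$, and then we take a common refinement using Lemma~\ref{LEMIntersectALL}. First I would set up the one-constraint instance $\mathcal I$ with constraint $R(\,(x_i)_{i\in I}\,)$ and note that $D^{(1)}$ is a $1$-consistent reduction with $D^{(1)}\lll D^{(0)}$ and that the solution set (namely $R^{(1)}$) is subdirect by $1$-consistency. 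Applying Lemma~\ref{LEMFindOneConsistentForAll} to $\mathcal I$, $D^{(1)}$, the variable $(\mathbf A_j,\beta)$, and the strong subset $B<_{\mathcal T}D^{(1)}_{(\mathbf A_j,\beta)}$ yields a nonempty $1$-consistent reduction $E\lll D^{(1)}$ with $E_{(\mathbf A_j,\beta)}\le B$; moreover $E\le_{\mathcal T}D^{(1)}$ if $\mathcal T\in\{\TBA,\TC\}$, and $E\le_{\TMPC}D^{(1)}$ if $\mathcal T=\TPC$ (here we also need the hypothesis that each $D_x^{(1)}$ is S-free for the $\TPC$ case — this should be arranged, since the statement of Lemma~\ref{LEMFindOneConsistentForAll} requires it; in our setting the coordinates $D^{(1)}_i$ are subuniverses reached by $\lll$ from $D^{(0)}_i$, and one checks S-freeness is available or can be assumed as part of the ambient hypotheses).

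Next I would symmetrize. For each permutation $\pi$ on $[n]$, the relation $R^{\pi}=R$, so applying the coordinate permutation $\pi$ to the reduction $E$ gives another $1$-consistent reduction $E^{\pi}\lll D^{(1)}$ (the property $\lll$ and the type $\le_{\mathcal T}$ or $\le_{\TMPC}$ are preserved because $\pi$ is an automorphism of $\prod_{i}\mathbf A_i^{\dots}$ permuting the coordinates, which carries strong subuniverses to strong subuniverses of the same type). Now define $D^{(2)}_i := \bigcap_{\pi} E^{\pi}_i$ for each $i\in I$. By construction $D^{(2)}$ is symmetric. By Lemma~\ref{LEMIntersectALL}(i) each $D^{(2)}_i\dot\lll D^{(1)}_i$, and since $R$-$1$-consistency forces the intersection coordinates to be nonempty on a common solution — or more carefully, we take $D^{(2)}$ to be an inclusion-maximal $1$-consistent symmetric reduction below the intersection, using Lemma~\ref{LEMExistenceOfTreeCoverings}/Corollary~\ref{CORExistenceOfTreeCoverings} style arguments — we get a nonempty $1$-consistent symmetric reduction. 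The strict inequality $D^{(2)}\neq D^{(1)}$ holds because on the coordinate $(\mathbf A_j,\beta)$ we have $D^{(2)}_{(\mathbf A_j,\beta)}\subseteq B\subsetneq D^{(1)}_{(\mathbf A_j,\beta)}$.

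The main obstacle is showing that the symmetrized reduction $D^{(2)}$ is still \emph{nonempty} and \emph{$1$-consistent}, and that it still satisfies $D^{(2)}\le_{\mathcal T}D^{(1)}$ when $\mathcal T\neq\TPC$. Nonemptiness is the delicate point: intersecting the $E^{\pi}$ over all $\pi$ could in principle collapse some coordinate to $\varnothing$. To handle this I would argue as in the proof of Lemma~\ref{LEMFindOneConsistentForAll}: take $D^{(\top)}$ to be the reduction with $D^{(\top)}_i=B$ for all $i\in\Perm((\mathbf A_j,\beta))$ and $D^{(\top)}_i=D^{(1)}_i$ otherwise, let $D^{(2)}$ be an inclusion-maximal $1$-consistent reduction below $D^{(\top)}$, and observe $D^{(2)}$ is automatically symmetric (by uniqueness of the maximal one and symmetry of $R$ and $D^{(\top)}$). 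Then use tree-coverings (Lemma~\ref{LEMExistenceOfTreeCoverings}) and Corollary~\ref{CORMainStableIntersection}: if some coordinate became empty, a tree-covering would witness an empty stable intersection forcing two children of some coordinate of type $\mathcal T$ whose simultaneous restriction kills all solutions — impossible in a cycle-consistent tree-covering (here the relevant ``cycle-consistency'' is the built-in consistency of $R$ together with $1$-consistency of $D^{(1)}$). For the type claim, apply Lemma~\ref{LEMBACenterImplies} (for $\TBA,\TC$) coordinatewise along the tree-coverings exactly as in Lemma~\ref{LEMFindOneConsistentForAll}, obtaining $D^{(2)}\le_{\mathcal T}D^{(1)}$; for $\TPC$ no type claim is asserted, only $D^{(2)}\lll D^{(1)}$, which follows from Corollary~\ref{CORPropagateToRelations}(b1) applied to the tree-coverings. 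This mirrors the structure of Lemma~\ref{LEMFindOneConsistentForAll} with the extra symmetrization bookkeeping layered on top.
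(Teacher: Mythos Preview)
Your approach has a genuine gap in the nonemptiness argument.  You propose restricting all coordinates in $\Perm((\mathbf A_j,\beta))$ to $B$ simultaneously (your ``Plan B'') and then arguing, as in Lemma~\ref{LEMFindOneConsistentForAll}, that emptiness would force two restricted children of the \emph{same} variable to kill all solutions of a tree-covering, which cycle-consistency forbids.  But that argument in Lemma~\ref{LEMFindOneConsistentForAll} works precisely because only \emph{one} variable is restricted, so the two offending restrictions in Corollary~\ref{CORMainStableIntersection} must sit on children of that one variable, and the cycle between them is a genuine cycle in the original instance.  Here you restrict $|\Perm(\beta)|$ distinct coordinates, and for $\mathcal T\in\{\TC,\TPC\}$ Corollary~\ref{CORMainStableIntersection} happily allows two restrictions on two \emph{different} coordinates $(\mathbf A_j,\alpha_1)$ and $(\mathbf A_j,\alpha_2)$ to yield an empty intersection --- cycle-consistency says nothing about that.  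A second symptom of the gap: you never use that $n$ is odd, yet the hypothesis is essential.

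The paper's proof supplies the missing idea.  It does not fix $(\mathbf A_j,\beta)$ and $B$ in advance, but chooses them so that $R^{(1)}{\downarrow}^{(\mathbf A_j,\beta)}_{B}$ is inclusion-maximal among all such one-coordinate restrictions.  This maximality forces a dichotomy.  In Case~1, restricting at $\beta$ already strictly shrinks the projection to some other $\alpha\in\Perm(\beta)$; maximality then gives $R^{(1)}{\downarrow}^{(\mathbf A_j,\beta)}_{B}=R^{(1)}{\downarrow}^{(\mathbf A_j,\alpha)}_{C}$, so this single restriction is invariant under the stabilizers of both $\alpha$ and $\beta$, and since $n$ is odd these stabilizers generate the full symmetric group --- the restriction is already symmetric without any intersecting.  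In Case~2, restricting at $\beta$ leaves all other $\Perm(\beta)$-projections full; then if the simultaneous restriction $S$ were empty, Corollary~\ref{CORMainStableIntersection} gives two coordinates $\alpha_1,\alpha_2$ whose joint restriction is empty, but applying a permutation sending $\alpha_1\mapsto\beta$ (and using the Case~2 assumption) shows the permuted intersection is nonempty, contradicting symmetry of $R^{(1)}$.  Your symmetrize-then-prune strategy cannot recover this without the maximality choice.
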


\begin{proof}
For a relation $S$ whose variables are indexed with 
$(\mathbf A_{j},\beta)$
by $S\downarrow^{(\mathbf A_{j},\beta)}_{B}$
denote $S$ whose coordinate $(\mathbf A_{j},\beta)$
is restricted to $S$.

Condition 4 just says that there exists 
$B$ of type $\mathcal T\in\{\TBA,\TC,\TPC\}$.
We want to choose BA or central subuniverse if possible 
and $\TPC$, only if 
none of the 
domains $D^{(1)}_{(\mathbf A_{j},\beta)}$
have proper BA or central subuniverse.
Thus, below we assume that 
$D^{(1)}_{(\mathbf A_{j},\beta)}$ is S-free 
whenever $\mathcal T=\TPC$.
%If there exists 
%$B<_{\mathcal T}^{D^{(0)}_{(\mathbf A_{j},\beta)}}D^{(1)}_{(\mathbf A_{j},\beta)}$
%for some $(\mathbf A_{j},\beta)$
%and $\mathcal T\in\{\TBA,\TC\}$,
%then fix $\mathcal T$.
%Otherwise, 
%put $\mathcal T = PC$.
%Thus, in the PC case we know that all algebras 
%$D^{(1)}_{(\mathbf A_{i},\alpha)}$
%are BA and center free.

Choose an index $(\mathbf A_{j},\beta)$ 
and 
$B<_{\mathcal T}^{D^{(0)}_{(\mathbf A_{j},\beta)}}D^{(1)}_{(\mathbf A_{j},\beta)}$
such that 
$R\downarrow^{(\mathbf A_{j},\beta)}_{B}$
is inclusion maximal. 

By Corollary \ref{CORPropagateToRelations}(r)
we have $R^{(1)}\lll^{R} R$.
By Lemma \ref{LEMPropagation}(b,bt)
we have 
$R\downarrow^{(\mathbf A_{j},\beta)}_{B}<_{\mathcal T}^{R}
R\downarrow^{(\mathbf A_{j},\beta)}_{D^{(1)}_{(\mathbf A_{j},\beta)}}\lll R$.
By Lemma \ref{LEMIntersectALL}(it)
$R^{(1)}\downarrow^{(\mathbf A_{j},\beta)}_{B}
\le_{\mathcal T}^{R} R^{(1)}$.
Choose $\alpha\in\Perm(\beta)$ 
and put 
$C=\proj_{(\mathbf A_{i},\alpha)}(R^{(1)}\downarrow^{(\mathbf A_{j},\beta)}_{B})$.
Then Lemma \ref{LEMPropagation}(ft) 
implies 
 $C\le_{\mathcal T}^{D_{(\mathbf A_{i},\alpha)}^{(0)}} D_{(\mathbf A_{i},\alpha)}^{(1)}$.
Because of the choice of 
$(\mathbf A_{j},\beta)$ and $B$, 
if $C\neq D_{(\mathbf A_{i},\alpha)}^{(1)}$,
then $R^{(1)}\downarrow^{(\mathbf A_{j},\beta)}_{B} = 
R^{(1)}\downarrow^{(\mathbf A_{i},\alpha)}_{C}$.
If additionally $\sigma(\alpha) = \alpha$ for some permutation on $[n]$, 
then from the symmetricity of 
$R$ we derive  that
$$(R^{(1)}\downarrow^{(\mathbf A_{j},\beta)}_{B})^{\sigma} =
(R^{(1)}\downarrow^{(\mathbf A_{i},\alpha)}_{C})^{\sigma}=
R^{(1)}\downarrow^{(\mathbf A_{i},\alpha)}_{C}=R^{(1)}\downarrow^{(\mathbf A_{j},\beta)}_{B}.$$

Let us consider two cases.

Case 1. There exists  $\alpha\in \Perm(\beta)$ such that
$\alpha\neq \beta$ and  
$\proj_{(\mathbf A_{j},\alpha)}(R^{(1)}\downarrow^{(\mathbf A_{j},\beta)}_{B})\neq \proj_{(\mathbf A_{j},\alpha)}(R^{(1)})$.
Then 
$R^{(1)}\downarrow^{(\mathbf A_{j},\beta)}_{B}$
is $\sigma$-symmetric for any permutation $\sigma$
preserving $\alpha$ and any permutation $\sigma$ preserving $\beta$.
Since 
we can compose such permutations, $\alpha\neq \beta$, and $n$ is odd, we derive that 
$R^{(1)}\downarrow^{(\mathbf A_{j},\beta)}_{B}$
is $\sigma$-symmetric for any $\sigma$, hence $R^{(1)}\downarrow^{(\mathbf A_{j},\beta)}_{B}$ is just symmetric.
Define a new reduction $D^{(2)}$ for $R$ by 
$D^{(2)}_{(\mathbf A_{i},\gamma)}:=
\proj_{(\mathbf A_{i},\gamma)}(R^{(1)}\downarrow^{(\mathbf A_{j},\beta)}_{B})$.
By Corollary \ref{CORPropagateToRelations}(r1)
$D^{(2)}\lll D^{(1)}$, 
and by Lemma \ref{LEMBACenterImplies}
$D^{(2)}\le_{\mathcal T} D^{(1)}$ 
for $\mathcal T\in \{\TBA,\TC\}$, 
which completes this case.
 
Case 2. For any $\alpha\in\Perm(\beta)$  such that 
$\alpha\neq \beta$ 
we have
$\proj_{(\mathbf A_{j},\alpha)}(R^{(1)}\downarrow^{(\mathbf A_{j},\beta)}_{B})=\proj_{(\mathbf A_{j},\alpha)}(R^{(1)})$.
Put 
$S = \bigcap\limits_{\alpha \in \Perm(\beta)}R^{(1)}\downarrow^{(\mathbf A_{j},\alpha)}_{B}$.
$S$ can be represented as an intersection of two symmetric relations, 
hence  
$S$ is also symmetric.
If $S$ is not empty 
then 
we define a new reduction $D^{(2)}$ for $R$ by 
$D^{(2)}_{(\mathbf A_{i},\gamma)}:=
\proj_{(\mathbf A_{i},\gamma)}(S)$.
By Corollary \ref{CORPropagateToRelations}(r1)
$D^{(2)}\lll D^{(1)}$,
and by Lemma \ref{LEMBACenterImplies}
$D^{(2)}\le_{\mathcal T} D^{(1)}$
if $\mathcal T\in\{\TBA,\TC\}$.
Thus, the only remaining case is when $S$ is empty.
By Corollary \ref{CORMainStableIntersection},
there should be 
$(\mathbf A_{j},\alpha_1)$
and 
$(\mathbf A_{j},\alpha_2)$
such that 
$\alpha_1,\alpha_2\in\Perm(\beta)$ and 
$R^{(1)}\downarrow^{(\mathbf A_{j},\alpha_1)}_{B}\cap
R^{(1)}\downarrow^{(\mathbf A_{j},\alpha_2)}_{B}= \varnothing$.
Let $\beta = \sigma(\alpha_1)$.
Since 
$\proj_{(\mathbf A_{j},\alpha)}(R^{(1)}\downarrow^{(\mathbf A_{j},\beta)}_{B})=\proj_{(\mathbf A_{j},\alpha)}(R^{(1)})$
for any $\alpha\in\Perm(\beta)$ such that $\alpha\neq \beta$,
we have
\begin{align*}
    (R^{(1)}\downarrow^{(\mathbf A_{j},\alpha_1)}_{B}\cap
R^{(1)}\downarrow^{(\mathbf A_{j},\alpha_2)}_{B})^{\sigma} = 
(R^{(1)}\downarrow^{(\mathbf A_{j},\alpha_1)}_{B})^{\sigma}\cap
(R^{(1)}\downarrow^{(\mathbf A_{j},\alpha_2)}_{B})^{\sigma}& =\\ 
R^{(1)}\downarrow^{(\mathbf A_{j},\beta)}_{B}
\cap 
&R^{(1)}\downarrow^{(\mathbf A_{j},\sigma(\alpha_2))}_{B}
\neq \varnothing.
\end{align*}
This contradiction completes the proof.
\end{proof}

\begin{lem}\label{LEMExistenceOfReduction}
Suppose $\mathbf A_{1},\dots,\mathbf A_{s}\in\mathcal V_{n}$, $n$ is odd, $D^{(1)}$ is a 1-consistent symmetric reduction of $R_{\mathbf A_{1},\dots,\mathbf A_{s}}$, 
$D^{(1)}\lll D^{(0)}$,
$B<_{\TD(\sigma)}^{D_{(\mathbf A_{i},\alpha)}^{(0)}}
D_{(\mathbf A_{i},\alpha)}^{(1)}$ for some 
$(\mathbf A_{i},\alpha)$,
$R_{\mathbf A_{1},\dots,\mathbf A_{s}}^{(1)}$
has no tuple $\gamma$ such that 
$\gamma(\mathbf A_{i},\beta)\in B$ for every $\beta\in\Perm(\alpha)$.
Then 
$\sigma$ is a perfect linear congruence $\sigma$ on $D^{(0)}_{(\mathbf A_{i},\alpha)}$.
\end{lem}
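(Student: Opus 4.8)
The plan is to use the machinery already set up in this section, in particular the symmetric analogue of the "stable intersection" theorem and Lemma~\ref{LEMMakePerfectCongruenceFromLinked}. First I would observe what the hypotheses really give us: we have a symmetric relation $R:=R_{\mathbf A_1,\dots,\mathbf A_s}^{(1)}$ on a 1-consistent symmetric reduction, a dividing subuniverse $B<_{\TD(\sigma)}^{D^{(0)}_{(\mathbf A_i,\alpha)}}D^{(1)}_{(\mathbf A_i,\alpha)}$, and the crucial emptiness condition: $R$ has no tuple whose $(\mathbf A_i,\beta)$-coordinate lies in $B$ simultaneously for all $\beta\in\Perm(\alpha)$. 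Since $D^{(1)}$ is 1-consistent, $R$ is subdirect, and for each $\beta\in\Perm(\alpha)$ the restriction $B_\beta$ (the copy of $B$ sitting on coordinate $(\mathbf A_i,\beta)$, transported by the permutation acting on the generators) is again a $\TD$-subuniverse of the same type with the same congruence $\sigma$, because $R$ is symmetric. So I would apply Corollary~\ref{CORMainStableIntersection} to the family of restrictions indexed by $\Perm(\alpha)$, using that their common intersection is empty while, by 1-consistency and minimality considerations (dropping any one coordinate from the reduction keeps a tuple alive — this needs a small argument analogous to what is done inside the proof of Theorem~\ref{ExistenceOfStrongReductionTHM}, Case~2), the intersection over all but one index is nonempty.

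Corollary~\ref{CORMainStableIntersection} then leaves us with one of four outcomes; since all $T_i=\TD$ with linear congruence (the $\TD$ type with a linear $\sigma$ is exactly $\TL$), we cannot be in the $\TBA$, $\TC$, or $\TPC$ branches — so we land in branch~(l): $T_1=\dots=T_n=\TL$, the congruence $\sigma$ is a linear congruence, and for every pair $k,\ell$ of indices in $\Perm(\alpha)$ there is a bridge $\delta_{k,\ell}$ from $\sigma$ to $\sigma$ with $\widetilde\delta_{k,\ell}=\sigma\circ\proj_{k,\ell}(R)\circ\sigma$. (If instead we are told only $B<_{\TD(\sigma)}$ without knowing $\sigma$ is linear, the corollary still forces $T_i=\TL$ hence $\sigma$ linear; that is the content of this step.)

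Next I would upgrade "linear congruence" to "perfect linear congruence" by exhibiting a bridge $\delta$ from $\sigma$ to $\sigma$ with $\widetilde\delta$ linked, and then invoking Lemma~\ref{LEMMakePerfectCongruenceFromLinked}. The source of linkedness is the structure of $R_{\mathbf A_1,\dots,\mathbf A_s}$: it is generated by $\gamma_1,\dots,\gamma_n$, and on the coordinate $(\mathbf A_i,\alpha)$ with $\alpha\in\TwoTuples(A_i^n)$, say $\alpha$ taking values $a$ and $b$, the generators project onto $\{a,b\}$ and in fact the projection $\proj_{(\mathbf A_i,\alpha)}(R)=D^{(0)}_{(\mathbf A_i,\alpha)}=\Sg_{\mathbf A_i}(\{a,b\})$. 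Because $\alpha$ together with all its permutations appear as coordinates of $R$, and the generators force equalities/adjacencies among these coordinates, the bipartite graph of $\widetilde\delta$ (built by composing the bridges $\delta_{k,\ell}$ from branch~(l) over a cycle through $\Perm(\alpha)$, using Lemma~\ref{LEMBridgeComposition} and the fact that the composite $\widetilde{\delta}$ equals the composite of the $\widetilde{\delta_{k,\ell}}$) spans $\sigma^*/\sigma$ and connects the two distinct $\sigma$-classes of $a$ and $b$ — here I use that $B\times B\not\subseteq\sigma$ and $\alpha$ really has two different elements, so $D^{(0)}_{(\mathbf A_i,\alpha)}$ is not contained in a single $\sigma$-class. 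Concatenating, one gets a bridge $\delta$ from $\sigma$ to $\sigma$ with $\widetilde\delta\supsetneq\sigma$ and in fact $\widetilde\delta$ linked; Lemma~\ref{LEMMakePerfectCongruenceFromLinked} then yields that $\sigma$ is a perfect linear congruence on $D^{(0)}_{(\mathbf A_i,\alpha)}$, which is the assertion.

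The main obstacle I anticipate is the linkedness step: getting from "for every pair of coordinates in $\Perm(\alpha)$ there is a bridge whose $\widetilde{\cdot}$ is $\sigma\circ\proj\circ\sigma$" to "there is a single bridge from $\sigma$ to $\sigma$ with $\widetilde\delta$ linked" requires genuinely using the free-generation structure of $R_{\mathbf A_1,\dots,\mathbf A_s}$ — one must argue that the projections $\proj_{k,\ell}(R)$ for $k,\ell\in\Perm(\alpha)$, when closed under composition and relaxed by $\sigma$ on both sides, generate a linked relation on $D^{(0)}_{(\mathbf A_i,\alpha)}$. This is where 1-consistency of $D^{(1)}$, the presence of \emph{all} permutations of $\alpha$ among the coordinates, and the subdirectness of $R$ all get used together, and it is the step most likely to need a careful lemma about linkedness of symmetric generated relations rather than a one-line invocation. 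The bookkeeping of which bridges compose with which (checking the irreducibility hypotheses of Lemma~\ref{LEMBridgeComposition} at each composition, which holds since all the $\sigma$'s here are the one irreducible congruence $\sigma$) is routine but must be done.
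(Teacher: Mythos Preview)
Your outline matches the paper's route: apply Corollary~\ref{CORMainStableIntersection} to the family of $B$-restrictions indexed by $\Perm(\alpha)$, extract a bridge from $\sigma$ to $\sigma$ with linked $\widetilde\delta$, and finish with Lemma~\ref{LEMMakePerfectCongruenceFromLinked}. Two points where the paper's two-sentence argument is sharper than your plan, and one place where your reasoning has a gap:

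\textbf{Ruling out $\TPC$ is not automatic.} You assert ``we cannot be in the $\TPC$ branch'' and in parentheses ``the corollary still forces $T_i=\TL$'', but you give no reason, and this is circular as stated: nothing in the hypotheses tells you $\sigma$ is linear a priori. The actual mechanism is the linkedness observation below: in case~(pc) of Corollary~\ref{CORMainStableIntersection} the relation $\{(a/\sigma,b/\sigma):(a,b)\in\proj_{\alpha_1,\alpha_2}(R_{\mathbf A_1,\dots,\mathbf A_s})\}$ would have to be bijective, but that projection is linked and $a/\sigma\neq b/\sigma$ (since $D^{(0)}_{(\mathbf A_i,\alpha)}=\Sg(\{a,b\})$ has at least two $\sigma$-classes), so the quotient is not a bijection. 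Linkedness thus does double duty: it both excludes~(pc) and feeds Lemma~\ref{LEMMakePerfectCongruenceFromLinked}.

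\textbf{Linkedness is immediate; no bridge composition needed.} You propose composing bridges around a cycle through $\Perm(\alpha)$. This works but is unnecessary. The bridge produced by case~(l) for a single pair $\alpha_1\neq\alpha_2\in\Perm(\alpha)$ already has $\widetilde\delta\supseteq\proj_{(\mathbf A_i,\alpha_1),(\mathbf A_i,\alpha_2)}(R_{\mathbf A_1,\dots,\mathbf A_s})$, and that single projection is itself linked: it is the subalgebra of $D^{(0)}_{(\mathbf A_i,\alpha)}\times D^{(0)}_{(\mathbf A_i,\alpha)}$ generated by the pairs $(\alpha_1(j),\alpha_2(j))$, $j\in[n]$, and since $\alpha_1,\alpha_2$ are distinct permutations of the same two-valued tuple these pairs include $(a,b)$, $(b,a)$ and at least one diagonal pair. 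This is exactly the ``free-generation structure'' you invoke, but it finishes the argument in one step rather than requiring a careful lemma about composed linkedness.

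Your minimality concern (condition~4 of the corollary) is handled exactly as you suggest, by passing to a minimal subfamily of $\Perm(\alpha)$; 1-consistency of $D^{(1)}$ ensures this subfamily has size at least two.
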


\begin{proof}
    By Corollary \ref{CORMainStableIntersection}
there must be a bridge $\delta$ from $\sigma$ to $\sigma$ 
such that $\widetilde \delta$ contains the relation $\proj_{(\mathbf A_{i},\alpha_1),(\mathbf A_{i},\alpha_2)}
R_{\mathbf A_{1},\dots,\mathbf A_{s}}$
for some $\alpha_1,\alpha_2\in\Perm(\alpha)$.
Since $\proj_{(\mathbf A_{i},\alpha_1),(\mathbf A_{i},\alpha_2)}
R_{\mathbf A_{1},\dots,\mathbf A_{s}}$ is linked, 
by Lemma \ref{LEMMakePerfectCongruenceFromLinked}
$\sigma$ is a perfect linear congruence, which completes the proof.
\end{proof}

\begin{THMExistenceOfReductionTHM}
Suppose $\mathbf A_{1},\dots,\mathbf A_{s}\in\mathcal V_{n}$, $n$ is odd, $D^{(1)}$ is a 1-consistent symmetric reduction of $R_{\mathbf A_{1},\dots,\mathbf A_{s}}$, 
$D^{(1)}\lll D^{(0)}$. %_{\mathbf A_{1},\dots,\mathbf A_{s}}$.
Then one of the following conditions hold
\begin{enumerate}
    \item $|D^{(1)}_{(\mathbf A_{i},\alpha)}|=1$ 
    for all $(\mathbf A_{i},\alpha)$.
    \item there exists a 1-consistent symmetric reduction $D^{(2)}$ for $R_{\mathbf A_{1},\dots,\mathbf A_{s}}$ such that 
    $D^{(2)}\lll D^{(1)}$ and $D^{(2)}\neq D^{(1)}$.
    \item there exists a perfect linear congruence $\sigma$ on some
    $D^{(0)}_{(\mathbf A_{i},\alpha)}$ such that 
    \begin{enumerate}
        \item $D_{(\mathbf A_{i},\alpha)}^{(1)}\times D_{(\mathbf A_{i},\alpha)}^{(1)}\not\subseteq \sigma$ 
        \item $D_{(\mathbf A_{i},\alpha)}^{(1)}\times D_{(\mathbf A_{i},\alpha)}^{(1)}\subseteq \sigma^{*}$
    \end{enumerate}
\end{enumerate}
\end{THMExistenceOfReductionTHM}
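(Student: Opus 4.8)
The plan is to do case analysis based on what kind of strong/linear subuniverse (if any) is available on the reduced domains $D^{(1)}_{(\mathbf A_i,\alpha)}$, and then invoke the two auxiliary results just proved, Theorem~\ref{ExistenceOfStrongReductionTHM} and Lemma~\ref{LEMExistenceOfReduction}. First, if every domain $D^{(1)}_{(\mathbf A_i,\alpha)}$ is a singleton, we are in case 1 and there is nothing to do. So assume some $D^{(1)}_{(\mathbf A_i,\alpha)}$ has size at least $2$. Since $D^{(1)}\lll D^{(0)}$, Lemma~\ref{LEMUbiquity} applied to $D^{(1)}_{(\mathbf A_i,\alpha)}\lll D^{(0)}_{(\mathbf A_i,\alpha)}$ gives a subuniverse $B<^{D^{(0)}_{(\mathbf A_i,\alpha)}}_{\mathcal T} D^{(1)}_{(\mathbf A_i,\alpha)}$ with $\mathcal T\in\{\TBA,\TC,\TL,\TPC\}$.

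If $\mathcal T\in\{\TBA,\TC,\TPC\}$ for at least one such triple $((\mathbf A_i,\alpha),B)$, we apply Theorem~\ref{ExistenceOfStrongReductionTHM}: it produces a $1$-consistent symmetric reduction $D^{(2)}$ with $D^{(2)}\lll D^{(1)}$ and $D^{(2)}\neq D^{(1)}$, which is exactly case 2. So the remaining situation is that \emph{every} nontrivial domain $D^{(1)}_{(\mathbf A_i,\alpha)}$ admits only linear-type subuniverses: for each such index and each strong/linear subuniverse $B<^{D^{(0)}_{(\mathbf A_i,\alpha)}}_{\mathcal T} D^{(1)}_{(\mathbf A_i,\alpha)}$ we must have $\mathcal T=\TL$. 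Pick one such $B<^{D^{(0)}_{(\mathbf A_i,\alpha)}}_{\TD(\sigma)} D^{(1)}_{(\mathbf A_i,\alpha)}$ with $\sigma$ a linear congruence; by the definition of $<_{\TD}$ we have $D^{(1)}_{(\mathbf A_i,\alpha)}\times D^{(1)}_{(\mathbf A_i,\alpha)}\subseteq\sigma^{*}$ and $B = D^{(1)}_{(\mathbf A_i,\alpha)}\cap E$ for a block $E$ of $\sigma$, so $B\subsetneq D^{(1)}_{(\mathbf A_i,\alpha)}$ forces $D^{(1)}_{(\mathbf A_i,\alpha)}\times D^{(1)}_{(\mathbf A_i,\alpha)}\not\subseteq\sigma$. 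This already gives conditions (a) and (b) of case 3; it remains to upgrade $\sigma$ from a linear congruence to a \emph{perfect} linear congruence.

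To get the perfect linear congruence, consider the symmetric reduction $D^{(B)}$ that restricts coordinate $(\mathbf A_i,\alpha)$ (and all its permutation-equivalent copies $(\mathbf A_i,\beta)$, $\beta\in\Perm(\alpha)$) to $B$, leaving everything else as in $D^{(1)}$. If $R_{\mathbf A_1,\dots,\mathbf A_s}^{(1)}$ still contains a tuple $\gamma$ with $\gamma(\mathbf A_i,\beta)\in B$ for every $\beta\in\Perm(\alpha)$, then taking projections of $R^{(1)}\cap(\text{this restriction})$ and closing up symmetrically yields a strictly smaller $1$-consistent symmetric reduction --- this is again case 2 (one does the same symmetrization argument as in Theorem~\ref{ExistenceOfStrongReductionTHM}, using that $n$ is odd so the permutation subgroups fixing $\alpha$ and fixing another $\beta\in\Perm(\alpha)$ together generate the full symmetric group). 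Otherwise $R^{(1)}$ has no such tuple, and Lemma~\ref{LEMExistenceOfReduction} applies verbatim to conclude that $\sigma$ is a perfect linear congruence on $D^{(0)}_{(\mathbf A_i,\alpha)}$, completing case 3.

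I expect the main obstacle to be the bookkeeping in the linear-type case: one must make sure that after ruling out cases 1 and 2 the chosen $B$, $\sigma$, and index really do satisfy both (a) and (b) simultaneously, and that the dichotomy "either we can still shrink symmetrically (case 2) or Lemma~\ref{LEMExistenceOfReduction} fires (case 3)" is exhaustive --- in particular handling the situation where some permuted copy $(\mathbf A_i,\beta)$ has a different reduced domain than $(\mathbf A_i,\alpha)$ (which cannot happen, since $D^{(1)}$ is symmetric, but this needs to be stated). The symmetrization steps are routine given the machinery of Theorem~\ref{ExistenceOfStrongReductionTHM}, so the real content is just organizing the case split correctly.
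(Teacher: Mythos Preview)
Your plan is correct and follows essentially the same route as the paper's proof: case 1 is trivial; otherwise invoke Lemma~\ref{LEMUbiquity}, route $\TBA/\TC/\TPC$ through Theorem~\ref{ExistenceOfStrongReductionTHM}, and in the remaining dividing case restrict all permutation-copies simultaneously to $B$, splitting on whether the restricted relation is nonempty (case 2 via projections) or empty (case 3 via Lemma~\ref{LEMExistenceOfReduction}). One simplification you missed: in the nonempty subcase the reduction $D^{(\top)}$ that sets $D^{(\top)}_{(\mathbf A_i,\beta)}=B$ for every $\beta\in\Perm(\alpha)$ is \emph{already} symmetric by construction, so the projections of $R^{(\top)}$ are automatically symmetric and no ``closing up'' or permutation-subgroup argument from Theorem~\ref{ExistenceOfStrongReductionTHM} is needed---the paper just takes $D^{(2)}_{(\mathbf A_j,\gamma)}=\proj_{(\mathbf A_j,\gamma)}(R^{(\top)})$ directly and cites Corollary~\ref{CORPropagateToRelations}(r1) for $D^{(2)}\lll D^{(1)}$.
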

\begin{proof}
If condition 1 holds then we are done.
Otherwise, choose some 
$(\mathbf A_{i},\alpha)$
such that 
$|D_{(\mathbf A_{i},\alpha)}^{(1)}|>1$.
By Lemma \ref{LEMUbiquity}
there exists 
$B<_{\mathcal T(\sigma)}^{D_{(\mathbf A_{i},\alpha)}^{(0)}}
D_{(\mathbf A_{i},\alpha)}^{(1)}$ 
for some $\mathcal T\in\{\TBA,\TC,\TD\}$.
If $\mathcal T$ can be chosen from $\{\TBA,\TC\}$ %\TPC
then condition 2 follows from Theorem \ref{ExistenceOfStrongReductionTHM}.
Otherwise, we assume that $\mathcal T=\TD$.
Define a reduction $D^{(\top)}$
as follows.
If $\beta\in\Perm(\alpha)$ put 
$D^{(\top)}_{(\mathbf A_{i},\beta)} = B$,
for all other $(\mathbf A_{j},\beta)$
put 
$D^{(\top)}_{(\mathbf A_{j},\beta)} = D^{(1)}_{(\mathbf A_{j},\beta)}$.
We consider two cases:

Case 1. 
$R_{\mathbf A_{1},\dots,\mathbf A_{s}}^{(\top)}$ is not empty.
Define a new reduction $D^{(2)}$ 
by 
$D^{(2)}_{(\mathbf A_{j},\beta)} = 
\proj_{(\mathbf A_{j},\beta)}(R_{\mathbf A_{1},\dots,\mathbf A_{s}}^{(\top)})$.
By Corollary \ref{CORPropagateToRelations}(r1) 
we have
$D^{(2)}\lll^{D} D^{(1)}$.
Since $D^{(\top)}$ and 
$R_{\mathbf A_{1},\dots,\mathbf A_{s}}$
are symmetric, 
$D^{(2)}$ is also symmetric. Thus, we satisfied condition 2.

Case 2. 
$R_{\mathbf A_{1},\dots,\mathbf A_{s}}^{(\top)}$ is  empty.
Then by Lemma \ref{LEMExistenceOfReduction}
condition 3 is satisfied.
\end{proof}

\begin{THMBuildAReductionFromXYSymmetricTHM}
Suppose 
$\mathbf A_1,\dots,\mathbf A_s\in\mathcal V_{n}$, 
$n$ is odd, 
there exists an $n$-ary term $\tau_{0}$ such that 
$\tau_0^{\mathbf A_{i}}$ is 
XY-symmetric for every $i$. 
Then there exists a 
1-consistent symmetric reduction 
$D^{(\triangle)}\lll D^{(0)}$ of 
$R_{\mathbf A_{1},\dots,\mathbf A_{s}}$
and an $n$-ary term $\tau$ 
such that $\tau^{\mathbf A_{i}}$ is XY-symmetric 
and 
$D_{(\mathbf A_{i},\alpha)}^{(\triangle)}=\{\tau(\alpha)\}$
    for every $i$ and 
    $\alpha\in \TwoTuples(A_{i}^{n})$.
\end{THMBuildAReductionFromXYSymmetricTHM}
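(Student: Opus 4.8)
The plan is to read the term $\tau$ off a suitable symmetric reduction to singletons, and to obtain that reduction by running the reduction procedure of Theorem~\ref{THMExistenceOfReduction} while steering it towards a fixed symmetric tuple.

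Put $\gamma^{(0)}:=\tau_{0}(\gamma_{1},\dots,\gamma_{n})\in R:=R_{\mathbf A_{1},\dots,\mathbf A_{s}}$, the result of applying $\tau_{0}$ coordinatewise to the generators; thus $\gamma^{(0)}_{(\mathbf A_{i},\alpha)}=\tau_{0}^{\mathbf A_{i}}(\alpha(1),\dots,\alpha(n))$. Since each $\tau_{0}^{\mathbf A_{i}}$ is XY-symmetric and every $\alpha\in\TwoTuples(A_{i}^{n})$ has exactly two distinct entries, $\tau_{0}^{\mathbf A_{i}}(\alpha)$ depends only on $\Perm(\alpha)$, so $(\gamma^{(0)})^{\sigma}=\gamma^{(0)}$ for every permutation $\sigma$ of $[n]$: $\gamma^{(0)}$ is a symmetric tuple. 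Conversely, if $D^{(\triangle)}$ is a $1$-consistent symmetric reduction of $R$ with $|D^{(\triangle)}_{(\mathbf A_{i},\alpha)}|=1$ for every $(\mathbf A_{i},\alpha)$, then by $1$-consistency $R\cap D^{(\triangle)}$ is a single tuple, and since $R$ is generated by $\gamma_{1},\dots,\gamma_{n}$ this tuple has the form $\gamma=\tau(\gamma_{1},\dots,\gamma_{n})$; symmetry of $D^{(\triangle)}$ then forces $\tau^{\mathbf A_{i}}(\alpha)=\tau^{\mathbf A_{i}}(\sigma(\alpha))$ for all such $\alpha$ and all $\sigma$, i.e.\ $\tau^{\mathbf A_{i}}$ is XY-symmetric, with $D^{(\triangle)}_{(\mathbf A_{i},\alpha)}=\{\tau(\alpha)\}$. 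Hence it is enough to construct a $1$-consistent symmetric reduction $D^{(\triangle)}\lll D^{(0)}$ of $R$ all of whose coordinate domains are singletons.

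I would construct $D^{(\triangle)}$ by a finite descent through $1$-consistent symmetric reductions $D'\lll D^{(0)}$, starting from the trivial reduction $D^{(0)}$ and maintaining the invariant that $R\cap D'$ contains a symmetric tuple $\gamma'$ which is the image of some XY-symmetric term of $\mathbf A_{1},\dots,\mathbf A_{s}$ under the generators (for $D'=D^{(0)}$ take $\gamma'=\gamma^{(0)}$). Apply Theorem~\ref{THMExistenceOfReduction} to $D'$. If all coordinate domains of $D'$ are singletons we are done by the previous paragraph; if Theorem~\ref{THMExistenceOfReduction} produces a strictly smaller $1$-consistent symmetric reduction $D''\lll D'$, replace $D'$ by $D''$, which can happen only finitely often. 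In the remaining case there is a perfect linear congruence $\sigma$ on some $D^{(0)}_{(\mathbf A_{i},\alpha)}$ with $(D'_{(\mathbf A_{i},\alpha)})^{2}\not\subseteq\sigma$ and $(D'_{(\mathbf A_{i},\alpha)})^{2}\subseteq\sigma^{*}$; then I restrict the coordinate $(\mathbf A_{i},\alpha)$, and every sibling in $\Perm((\mathbf A_{i},\alpha))$, to $C:=D'_{(\mathbf A_{i},\alpha)}\cap E$, where $E$ is the $\sigma$-block containing $\gamma'_{(\mathbf A_{i},\alpha)}$ (the same block works at every sibling because $\gamma'$ is symmetric). Since $\sigma$ is linear, $C$ is a $\TL$-subuniverse of $D'_{(\mathbf A_{i},\alpha)}$ (the relevant quotient $D'_{(\mathbf A_{i},\alpha)}/\sigma$ is an affine subspace of the linear quotient of the $\sigma^{*}$-block, hence $\mathrm{BA}$ and center free), and $C$ is a proper subset containing $\gamma'_{(\mathbf A_{i},\alpha)}$; taking $D''$ to be the coordinatewise projection of $R\cap D'$ after this restriction gives, by Corollary~\ref{CORPropagateToRelations}, a $1$-consistent symmetric reduction with $D''\lll D'\lll D^{(0)}$ that is strictly smaller and keeps $\gamma'$. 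Thus every branch lets us strictly descend, so after finitely many steps all coordinate domains are singletons and we obtain $D^{(\triangle)}$.

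What remains — and where the real work lies — is to re-establish the invariant after each reduction. Along any restriction to a dividing block (in particular the step just described) the witness is kept by construction, so the genuinely delicate case is a step of Theorem~\ref{ExistenceOfStrongReductionTHM} that restricts a coordinate orbit to a binary or ternary absorbing subuniverse $B$, where a priori the current witness need not lie in $B$. Here I would show that in fact the value of the symmetric witness at that orbit does lie in $B$: the orbit has at least two coordinates (an element of $\TwoTuples(A_{i}^{n})$ is non-constant), the projections of $R\cap D'$ onto two orbit coordinates are linked subdirect relations, and one exploits the behaviour of absorbing subuniverses in linked relations and under projection (Lemma~\ref{LEMBACenterImplies}, Corollary~\ref{CORPropagateToRelations}) to conclude that, after re-enforcing $1$-consistency, the same witness still works. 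Making this absorbing case precise, and checking that every single step above is an honest $\lll$-reduction and that they compose, is the main obstacle; once it is in place the terminating descent yields the required reduction $D^{(\triangle)}$ and hence the term $\tau$.
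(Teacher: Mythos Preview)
Your overall architecture --- descend through $1$-consistent symmetric reductions and read off $\tau$ from the singleton stage --- matches the paper, and your opening paragraph explaining why a singleton symmetric reduction yields an XY-symmetric term is correct. The genuine gap is exactly where you flag it: maintaining a fixed symmetric witness $\gamma'$ through BA/central reductions.

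Your proposed fix, that ``the value of the symmetric witness at that orbit does lie in $B$'', is not true in general and cannot be salvaged along the lines you sketch. The $B$ chosen inside Theorem~\ref{ExistenceOfStrongReductionTHM} is picked to make $R^{(1)}\downarrow^{(\mathbf A_{j},\beta)}_{B}$ inclusion-maximal, not to contain any particular element; linkedness of the orbit projection buys you nothing here, since an absorbing subuniverse of one side of a linked relation need not contain any prescribed point. Moreover, case~2 of Theorem~\ref{THMExistenceOfReduction} is opaque: it may arise either from a BA/central step or from a $\TD$ step (the ``$R^{(\top)}\neq\varnothing$'' branch in its proof), and in the latter the block chosen need not be the one containing your witness either. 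So your invariant is too rigid to survive the descent as you have set it up.

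The paper replaces the single witness by a family of congruences $\delta_i^{j}$ on the $D_i^{(0)}$'s, recording only that some tuple of $R^{(j)}$ has its $\Perm$-siblings $\delta$-equivalent (condition~(2)); the hypothesis on $\tau_0$ is used exactly once, to start with $\delta_i^{0}=0$. These congruences are allowed to grow, but only at BA/C steps (condition~(5)). When a dividing congruence $\sigma$ on $D_i^{(s)}$ appears, one looks at the last level $\ell$ at which $\sigma$ still refines $\delta_i^{\ell}$: if $\ell=s$ one cuts to the $\sigma$-block containing the (mod-$\delta$) witness and keeps the congruences unchanged; if $\ell<s$, Lemma~\ref{LEMExistenceOfReduction} forces $\sigma$ to be perfect linear, and pairing the $\zeta$-relation with the BA/C step $D^{(\ell+1)}\le_{T}D^{(\ell)}$ yields a proper BA/C subuniverse inside $\mathbf Z_p^{|\Perm(i)|}$, contradicting Lemma~\ref{LEMNoAbsCenterPCInLinearAlgebra}. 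In short, the paper does not try to keep the witness inside each reduction; it lets the witness drift modulo growing congruences and shows that any genuine obstruction to shrinking further would produce a forbidden absorbing subuniverse in a power of $\mathbf Z_p$. That is the missing idea in your argument.
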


\begin{proof}
Let $I$ be the set of all indices of the 
coordinates $R_{\mathbf A_1,\dots,\mathbf A_s}$.
That is, $I$ consists of pairs $(\mathbf A_{i},\alpha)$ 
and $|I|$ is the arity of $R_{\mathbf A_1,\dots,\mathbf A_s}$.
%Let $I_0$ be subset of $I$ such that for every 
%$(\mathbf A_{i},\alpha)$ there exists a unique 
%$\beta\in \Perm(\alpha)$ satisfying
%$(\mathbf A_{i},\beta)\in I_0$.

We build a sequence of symmetric reductions 
$D^{(s)}\lll^{D^{(0)}}
D^{(s-1)}\lll^{D^{(0)}}\dots 
\lll^{D^{(0)}} D^{(1)}\lll D^{(0)}$
for $R_{\mathbf A_1,\dots,\mathbf A_s}$.
For every $j\in\{0,1,\dots,s\}$ 
and every $i\in I$ 
we define a congruence  
$\delta_{i}^{j}$  on $D_{i}^{(0)}$ such that 
the following conditions hold for every $j$:
\begin{enumerate}
    \item[(1)] $\delta_{(\mathbf A_{i},\alpha)}^{j} = 
    \delta_{(\mathbf A_{i},\beta)}^{j}$
    whenever $\beta\in\Perm(\alpha)$.
    \item[(2)] there exists a 
    tuple $\gamma \in R_{\mathbf A_1,\dots,\mathbf A_s}^{(j)}$
    such that 
    $(\gamma(\mathbf A_{i},\alpha),
    \gamma(\mathbf A_{i},\beta))\in \delta_{(\mathbf A_{i},\alpha)}^{j}$ for every $(\mathbf A_{i},\alpha)\in I$ and $\beta\in\Perm(\alpha)$.
    \item[(3)] $\delta_{i}^{j+1}\supseteq \delta_{i}^{j}$ 
    for every $i\in I$.
    \item[(4)] if we make some congruences 
    $\delta_{i}^{j}$ smaller then condition (2)
    or condition (3) will not hold.
\end{enumerate}

Thus, for every $j$ we choose  minimal congruences (but not smaller than the 
previous congruences)
such that 
$R_{\mathbf A_1,\dots,\mathbf A_s}^{(j)}$ has a tuple 
whose corresponding elements are equivalent modulo
these congruences.
We start with $s=0$
and $\delta_{i}^{0}$ is the equality relation (0-congruence) for every $i\in I$.

We also need the following condition.
\begin{enumerate}
    \item[(5)] %either 
    $\delta_{i}^{j}=\delta_{i}^{j+1}$ for every $i\in I$ %, 
    or 
    $D^{(j+1)}\le_{\mathcal T} D^{(j)}$ 
    for some $\mathcal T\in \{\TBA,\TC\}$.
\end{enumerate}

Suppose we already have $D^{(s)}$ and all 
$\delta_{i}^{s}$ are are defined.
Let us show how to build $D^{(s+1)}$ and $\delta_{i}^{s+1}$.
Using Lemma \ref{LEMUbiquity} we consider three cases:

Case 1. 
There exists $B<_{\mathcal T} D^{(s)}_{i}$ for some $i\in I$
and $\mathcal T\in\{\TBA,\TC\}$.
By Theorem \ref{ExistenceOfStrongReductionTHM}
there exists a 1-consistent reduction
$D^{(s+1)}\le_{\mathcal T} D^{(s)}$.
It remains to define 
$\delta_{i}^{s+1}$ for every $i\in I$.
We choose them freely to satisfy the above condition (1)-(4).

Case 2. $|D^{(s)}_{i}|=1$ for every $i\in I$.
Then we put $D^{(\triangle)} = D^{(s)}$ 
and choose an $n$-ary term $\tau$
such that 
$D_{(\mathbf A_{i},\alpha)}^{(\triangle)}=\{\tau(\alpha)\}$
    for every $i$ and 
    $\alpha\in \TwoTuples(A_{i}^{n})$.
    This completes
the proof in this case. 

Case 3. Otherwise, choose the maximal 
$\ell\in\{0,1,\dots,s\}$ such that$|D_{i}^{(s)}/\delta_{i}^{\ell}|>1$ for some $i\in I$.
Since congruences 
$\delta_{i}^{0}$ are equalities and case 2 does not hold, 
such $\ell$ and $i$ always exist.
By Lemma \ref{LEMUbiquity} choose 
$B'<_{\mathcal T(\sigma')}^{D_{i}^{(0)}/\delta_{i}^{\ell}}
D_{i}^{(s)}/\delta_{i}^{\ell}$.
If $\mathcal T\in\{\TBA,\TC\}$ 
then by Corollary \ref{CORPropagateFromFactor}(t) 
and Lemma \ref{LEMIntersectALL}(t)
we have 
$(\bigcup_{E\in B'} E) \cap 
D_{i}^{(s)} <_{\mathcal T} D_{i}^{(s)}$, 
which means that it is case 1.
Thus, $\mathcal T=\TD$. %\in \{\TPC,\TL\}$.
%Let $\sigma'$ be the PC/linear congruence on 
%$D_{i}^{(0)}/\delta_{i}^{\ell}$.
By extending $\sigma'$ 
to $D_{i}^{(0)}$ we get 
an irreducible congruence 
$\sigma\supseteq \delta_{i}^{\ell}$. % of type $\mathcal T$.
Consider two subcases:

Subcase 3.1. $\ell = s$.
Choose an equivalence class $B$
of $\sigma$ containing 
$\gamma(i)$ for $\gamma$ satisfying condition (2) for 
$j=s$.
Define two reductions $D^{(\bot)}$ and $D^{(s+1)}$.
Put $D_{j}^{(\bot)} = D_{j}^{(s)}$ for every $j\notin\Perm(i)$ and 
$D_{j}^{(\bot)} = B\cap D_{j}^{(s)}$ for $j\in\Perm(i)$.
Put $D_{j}^{(s+1)}=
\proj_{j}(R_{\mathbf A_1,\dots,\mathbf A_s}^{(\bot)})$
for every $j\in I$.
Notice that
$D^{(\bot)}\le_{\TD} D^{(s)}$ 
and by Corollary \ref{CORPropagateToRelations}(r1) we
have $D^{(s+1)}\lll^{D^{(0)}} D^{(s)}$.
Put $\delta_{i'}^{s} = \delta_{i'}^{s+1}$
for every $i'\in I$.
It is not hard to see that conditions (1)-(5) are satisfied.

Subcase 3.2. $\ell<s$.
Choose some $B<_{\TD(\sigma)}^{D_{i}^{(0)}}D_{i}^{(s)}$.
Since $\ell$ was chosen maximal, 
$\sigma \not\supseteq \delta_{i}^{\ell+1}$.
Since 
$\delta_{i}^{\ell+1}$ was chosen minimal, 
$R_{\mathbf A_1,\dots,\mathbf A_s}^{(s)}$ 
has no tuples 
$\gamma$ 
such that 
$\gamma_{i'} \in B$ for all $i'\in \Perm(i)$
(otherwise we could replace $\delta_{i}^{\ell+1}$
by $\delta_{i}^{\ell+1}\cap \sigma$). 
By Lemma \ref{LEMExistenceOfReduction},
$\sigma$ is a perfect linear congruence.
Let $i = (\mathbf A_{k},\alpha)$. 
Then 
there exists $\zeta\le \mathbf A_{k}\times \mathbf A_{k}\times \mathbf Z_{p}$ 
such that $\proj_{1,2} \zeta =\sigma^{*}$
and $(a_{1},a_{2},b)\in\zeta$ 
implies that 
$(a_{1},a_{2})\in \sigma\Leftrightarrow (b=0)$.
Let us define a conjunctive formula $\Theta$
with variables $\{x_j\mid j\in I\}\cup 
\{z_{i'}\mid i'\in\Perm(i)\}$.
%For every $j\in I$ we have a variable $x_{j}$,
For every $j\in I$ and every 
$j'\in\Perm(j)$ we add 
the constraint $\delta_{j}^{\ell+1}(x_{j},x_{j'})$,
for every 
$i'\in\Perm(i)$ we add 
the constraint $\zeta(x_{i},x_{i'},z_{i'})$, 
finally add 
$R_{\mathbf A_1,\dots,\mathbf A_s}$ once with the corresponding variables.
Since $\delta_{i}^{\ell+1}$ was chosen minimal
and $\sigma \not\supseteq \delta_{i}^{\ell+1}$,
$\Theta^{(\ell+1)}$ is not satisfiable if 
$z_{i'} = 0$ for every $i'\in\Perm(i)$. 
Nevertheless, it is satisfied for some $z_{i'}$.
Also, 
since
$\delta_{i}^{\ell}\subseteq\sigma$, 
the formula
$\Theta^{(\ell)}$ has a solution with 
$z_{i'} = 0$ for every $i'\in\Perm(i)$.
By condition (5)  
$D^{(\ell+1)}\le_{T} D^{(\ell)}$
for some $T\in \{\TBA,\TC\}$.
Thus, the projections $L_{1}$ and $L_{0}$ of the solution sets of $\Theta^{(\ell)}$
and $\Theta^{(\ell+1)}$, respectively,
onto the $z$-variables are different (in the point $(0,0,\dots,0)$),
which by Lemma \ref{LEMBACenterImplies}
implies that 
$L_{0}<_{T} L_{1}\le \mathbf Z_{p}^{|\Perm(i)|}$.
This contradicts Lemma \ref{LEMNoAbsCenterPCInLinearAlgebra}.
\end{proof}

\subsection{Proof of Theorem 
\ref{THMpropagateXYSymmetric} (Fixing an operation)}\label{SUBSECTIONFixingAnOperation}

\begin{lem}\label{ZetaPropertiesLEM}
Suppose $\mathbf A = (A;w^{\mathbf A})$,
where $w^{\mathbf A}$ is an $n$-ary idempotent operation,
$0_{\mathbf A}$ is a perfect linear congruence 
witnessed by $\zeta\le \mathbf A\times \mathbf A\times \mathbf Z_{p}$.
Then 
\begin{enumerate}
\item[(1)] 
for every $(a,b)\in 0_{\mathbf A}^{*}$ there exists a unique 
$c$ such that $(a,b,c)\in \zeta$,
\item[(2)] $n-1$ is divisible by $p$,
\item[(3)] $w^{\mathbf A}(a,\dots,a,b,a,\dots,a) = b$ for every $(a,b)\in 0_{\mathbf A}^{*}$ and any position of $b$,
\item[(4)] 
for every $a\in A$ and $c\in\mathbf Z_{p}$ there exists at most one 
$b\in A$ such that $(a,b,c\in \zeta)$,
\item[(5)] 
for every $b\in A$ and $c\in\mathbf Z_{p}$ there exists at most one
$a\in A$ such that $(a,b,c\in \zeta)$,
\item[(6)] 
if $(a,b,d),(b,c,e)\in\zeta$ then 
$(a,c,d+e)\in\zeta$.

\end{enumerate}
\end{lem}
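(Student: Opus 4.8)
The plan is to establish the six properties of $\zeta$ one at a time, extracting each from the defining data: $\zeta\le\mathbf A\times\mathbf A\times\mathbf Z_p$ is a subalgebra with $\proj_{1,2}\zeta = 0_{\mathbf A}^*$ and $(a_1,a_2,b)\in\zeta \Leftrightarrow ((a_1,a_2)\in 0_{\mathbf A} \Leftrightarrow b=0)$, together with idempotence of $w^{\mathbf A}$ and the fact that $\mathbf Z_p$ carries the operation $x_1+\dots+x_n \pmod p$.

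First I would prove (1): surjectivity onto the first two coordinates is immediate from $\proj_{1,2}\zeta = 0_{\mathbf A}^*$, so existence holds; for uniqueness, if $(a,b,c),(a,b,c')\in\zeta$ then applying $w^{\mathbf A}$ componentwise to the two tuples $(a,b,c)$ and $(b,a,-c)$ (using idempotence to keep the first two coordinates $a,b$) and iterating, or more directly observing that $(a,b,c)$ and $(a,b,c')$ being in $\zeta$ with $(b,a,-c)\in\zeta$ (the latter by symmetry of the condition, noting $\zeta$ need not literally contain it — instead use that $(a,a,0)\in\zeta$ by idempotence and combine $(a,b,c)$, $(a,a,0)$ repeatedly), forces $c=c'$; the cleanest route is: from $(a,b,c),(a,b,c')\in\zeta$ and $(a,a,0)\in\zeta$, form $w^{\mathbf A}$ of an appropriate column arrangement whose first two coordinates stay $(a,b)$ but whose third coordinate is $c - c'$ if $c-c'$ is attainable as such a combination; since $n-1$ copies of $0$ plus one copy controls things, I will instead derive (1) jointly with (2). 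For (2): apply $w^{\mathbf A}$ to the $n$ columns all equal to $(a,b,c)$ where $(a,b,c)\in\zeta$ with $b\ne a$ (so $c\ne 0$); idempotence gives first two coordinates $a$ and $b$, hence the third coordinate equals $c$ by the structure, but the third coordinate is also $nc\pmod p$, so $nc\equiv c$, i.e. $(n-1)c\equiv 0$; choosing $c$ a generator (any nonzero $c$ works since $p$ is prime and we need $(a,b)\in 0_{\mathbf A}^*\setminus 0_{\mathbf A}$, which exists as $0_{\mathbf A}$ is nontrivial) gives $p\mid n-1$. Then (1) follows because any two preimages $c,c'$ of $(a,b)$ combine with $(a,a,0)$'s to produce all of $c\mathbb Z_p$-style combinations, pinning the value; alternatively (1) follows from (4) and (5) once those are shown, so I would prove (4) and (5) first and deduce (1).

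For (3): take $(a,b)\in 0_{\mathbf A}^*$ and let $c$ be the unique element with $(a,b,c)\in\zeta$. Apply $w^{\mathbf A}$ to the columns consisting of one copy of $(a,b,c)$ (in position $i$) and $n-1$ copies of $(a,a,0)$. The first coordinate is $w^{\mathbf A}(a,\dots,a) = a$... wait, I need the first coordinate to track $b$; instead use $n-1$ copies of $(b,b,0)$ and one copy $(a,b,c)$? That gives second coordinate $b$ but first coordinate $w^{\mathbf A}(a,b,\dots,b)$. The honest statement (3) says $w^{\mathbf A}(a,\dots,a,b,a,\dots,a)=b$, which is a statement purely about $w^{\mathbf A}$ and a pair in $0_{\mathbf A}^*$; I would get it by noting $0_{\mathbf A}^*$ as a binary relation is stable under $0_{\mathbf A}$ and the associated factor $A/0_{\mathbf A}$ has $0_{\mathbf A}^*/0_{\mathbf A}$ isomorphic to a $\mathbb Z_p$-affine structure (from the third-coordinate labelling via $\zeta$), on which $w^{\mathbf A}$ acts as $x_1+\dots+x_n$; since $p\mid n-1$, the combination with $n-1$ copies of the class of $a$ and one copy of the class of $b$ yields $(n-1)[a]+[b] = -[a]+[b]+[a]\cdot n$... more carefully $(n-1)[a]+[b]$, and using $[a]$ as the zero of the affine line through $[a]$, this is $[b]$; lifting back and using that $w^{\mathbf A}$ applied to elements within a single $0_{\mathbf A}^*$-block lands in that block with the predicted third-coordinate value, I get $w^{\mathbf A}(a,\dots,a,b,a,\dots,a)\equiv b \pmod{0_{\mathbf A}}$, and then I must still upgrade $\pmod{0_{\mathbf A}}$ to equality — for this I use the uniqueness in (1): both $b$ and $w^{\mathbf A}(a,\dots,a,b,a,\dots,a)$ are the unique element $d$ with $(a,d,c)\in\zeta$ for the appropriate $c$, forcing them equal.

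For (4) and (5): these are the "functionality in one variable plus the $\mathbb Z_p$-coordinate" statements. Given $(a,b,c),(a,b',c)\in\zeta$, I want $b=b'$. Consider the column operation: one copy of $(a,b,c)$, one copy of $(b',a,-c)$ — but $(b',a,-c)\in\zeta$ requires the symmetry $\zeta(x,y,z)\Rightarrow\zeta(y,x,-z)$, which I should first establish from (6)-type reasoning or directly: $(a,b,c)\in\zeta$ and $(b,b,0)\in\zeta$; hmm. Actually the symmetric relation $\zeta' = \{(y,x,-z):(x,y,z)\in\zeta\}$ is also a subalgebra with the same defining property (since $-z=0\Leftrightarrow z=0$ and $0_{\mathbf A}$ is symmetric), and $\proj_{1,2}\zeta' = 0_{\mathbf A}^*$; one shows $\zeta'=\zeta$ by a uniqueness argument once (4) is known, so there's a small bootstrapping issue. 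The clean way: first prove (6) (closure under "addition of third coordinates along a shared middle element") by applying $w^{\mathbf A}$ to $(a,b,d)$, $(b,c,e)$, and $n-2$ copies of $(a,a,0)$ or $(b,b,0)$ arranged so the first coordinate is $a$ (use copies with first coordinate matching) and the second is $c$ and the third is $d+e$ — this needs $p\mid n-1$ so that the "extra" copies contribute $0$; then (4) and (5) follow from (6) together with $(a,a,0)\in\zeta$ and uniqueness, and finally (1)'s uniqueness follows from (4) or (5). The main obstacle I anticipate is the careful bookkeeping in (6): getting the column arrangement so that $w^{\mathbf A}$ (a symmetric-ish WNU, in fact special WNU) produces exactly first coordinate $a$, second coordinate $c$, third coordinate $d+e$, which forces using the WNU identities ($w(x,\dots,x,y,x,\dots,x)$ independent of position) and idempotence in tandem, plus the divisibility $p\mid n-1$ from (2) to kill the residual $(n-2)$ or $(n-1)$ copies; everything else is then a short deduction. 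I would therefore order the proof as: (2), then (6), then (4) and (5), then (1), then (3).
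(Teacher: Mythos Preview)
Your proposed order (2), (6), (4)--(5), (1), (3) has circular dependencies that you partly sense (``bootstrapping issue'') but do not resolve.

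\textbf{The gap in (2).} You apply $w$ to $n$ copies of $(a,b,c)$ and obtain $(a,b,nc)\in\zeta$; then you write ``hence the third coordinate equals $c$ by the structure.'' That inference \emph{is} uniqueness, i.e.\ (1), which you have not yet proved. Without (1) all you know is $(a,b,nc)\in\zeta$, not $nc=c$.

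\textbf{The gap in (6), (4), (5).} For (6) you apply $w$ to $(a,b,d)$, $(b,c,e)$ and $n-2$ copies of $(b,b,0)$. The first coordinate is then $w(a,b,\dots,b)$ and the second is $w(b,c,b,\dots,b)$. The WNU identity only tells you these values are position-independent; it does not say they equal $a$ and $c$. That is precisely statement (3), which you schedule last. The same issue arises in (4) and (5): the paper's argument there (and any argument along your lines) computes $w(a,\dots,a,b_2)=b_2$ and $w(b_2,\dots,b_2,b_1,b_2)=b_1$, which again is (3). Your alternative route via (6) plus the symmetry $\zeta(x,y,z)\Rightarrow\zeta(y,x,-z)$ is left unresolved, and indeed that symmetry is not available a priori.

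\textbf{The fix.} The paper proves (1) first and directly, with no forward references: given $(a,b,d_1),(a,b,d_2)\in\zeta$, any term built from $w$ has $m$ leaves with $m\equiv 1\pmod{n-1}$ and acts on $\mathbf Z_p$ as the $m$-fold sum, so by idempotence on the first two coordinates one obtains $(a,b,k_1d_1+k_2d_2)\in\zeta$ for all $k_1,k_2\ge 0$ with $k_1+k_2\equiv 1\pmod{n-1}$. When $d_1\neq d_2$ this set of values hits $0$, contradicting $a\neq b$. With (1) in hand, (2) is your one-line argument, then (3) follows by applying $w$ to $n-1$ copies of $(a,b,d)$ and one $(b,b,0)$ (third coordinate $(n-1)d=0$, forcing the first coordinate to equal $b$), and (4), (5), (6) then go through exactly as you outline, now legitimately using (3). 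So the content of your individual steps is right; only the dependency order needs to be inverted to match the paper's (1), (2), (3), (4), (5), (6).
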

\begin{proof}
Let us prove (1).
Assuming the opposite we take 
$d_1$ and $d_2$ such that 
$(a,b,d_1), (a,b,d_2)\in\zeta$.
Applying $w$ to these tuples and using idempotency of $\mathbf A$ we derive that 
$(a,b,k_1\cdot d_1+k_2\cdot d_2)\in \zeta$
for any $k_1$ and $k_2$ such that 
$k_1+k_2-1$ is divisible by $n-1$.
Since $d_1\neq d_2$ we can choose $k_1$ and $k_2$ such that 
$k_1\cdot d_1+k_2\cdot d_2=0 (mod\;\;p)$, which 
implies 
$(a,b,0)\in \zeta$ and contradicts the definition of $\zeta$.

Let us prove (2). Applying 
$w$ to $n$ tuples $(a,b,d)$ for some $d\neq 0$, we get a tuple 
$(a,b,n\cdot d)$.
By (1) we obtain 
$n\cdot d = d (mod\;\;\;p)$ 
and $n-1$ is divisible by $p$.

Let us prove (3) using (2).
Applying 
$w$ to $n-1$ tuples $(a,b,d)$ 
and one tuple $(b,b,0)$ we get a tuple 
$(w^{\mathbf A}(a,\dots,a,b,a,\dots,a),b,0)$.
Hence $w^{\mathbf A}(a,\dots,a,b,a,\dots,a)=b$.

Let us prove (4). 
Assume that 
$(a,b_1,d),(a,b_2,d)\in \zeta$.
By properties (2) and (3), applying 
$w$ to 
$n-2$ tuples $(a,b_2,d)$,
one tuple $(a,b_1,d)$ and one tuple $(b_2,b_2,0)$
we obtain 
$(b_2,b_1,0)\in\zeta$, which means $b_1=b_2$.

Property (5) is proved in the same way as (4).

To prove (6) apply $w$ to the tuples 
$(a,b,d),(b,c,e)$ and $(n-2)$ tuples 
$(b,b,0)$. 
By (2) and (3) we have
$(a,c,d+e)\in\zeta$, which completes the proof.
\end{proof}

Using claim (1) in Lemma \ref{ZetaPropertiesLEM}
we can define a binary operation 
$\zeta: A\times A\to \mathbf Z_{p}$ 
by 
$\zeta(x_1,x_2)=z\leftrightarrow 
(x_1,x_2,z)\in \zeta$.
Then property (6) 
means 
$\zeta(a,c) = \zeta(a,b) + \zeta(b,c)$
for any $(a,b),(b,c)\in 0_{\mathbf A}^{*}$.

\begin{thm}\label{ExistenceOfInjectiveHomomorphismTHM}
Suppose 
$\mathbf A\in \mathcal V_{n}$,
$0_{\mathbf A}$ is a perfect linear congruence.
Then there exists an algebra $\mathbf C\in (\mathbf A/0_{\mathbf A}^{*}\boxtimes\mathbf Z_{p})\cap \mathcal V_{n}$ such that 
there exists an injective homomorphism $h\colon \mathbf A\to \mathbf C$.
\end{thm}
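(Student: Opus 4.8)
Suppose $\mathbf A\in \mathcal V_{n}$, $0_{\mathbf A}$ is a perfect linear congruence. Then there exists an algebra $\mathbf C\in (\mathbf A/0_{\mathbf A}^{*}\boxtimes\mathbf Z_{p})\cap \mathcal V_{n}$ such that there exists an injective homomorphism $h\colon \mathbf A\to \mathbf C$.

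=== MY PROOF PLAN ===

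The plan is to realise $\mathbf A$ inside a $\boxtimes$-product directly from the ternary relation $\zeta$ witnessing that $0_{\mathbf A}$ is perfect linear, using Lemma~\ref{ZetaPropertiesLEM} throughout. First I would record that $0_{\mathbf A}^{*}=\proj_{1,2}(\zeta)$ is a congruence on $\mathbf A$ (it is a subalgebra of $\mathbf A^{2}$ as a projection of $\zeta$; reflexivity is clear, and symmetry/transitivity follow from Lemma~\ref{ZetaPropertiesLEM}), so $\mathbf B:=\mathbf A/0_{\mathbf A}^{*}\in\mathcal V_{n}$. Write $[a]$ for the $0_{\mathbf A}^{*}$-block of $a$, and let $\zeta(a,b)$ denote the unique $z$ with $(a,b,z)\in\zeta$ for $(a,b)\in 0_{\mathbf A}^{*}$ (Lemma~\ref{ZetaPropertiesLEM}(1)); it is additive along blocks, $\zeta(a,c)=\zeta(a,b)+\zeta(b,c)$, by Lemma~\ref{ZetaPropertiesLEM}(6). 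Fix a transversal $\{b_{E}\mid E\in B\}$ with $b_{E}\in E$, set $g(a):=\zeta(b_{[a]},a)$ and $h(a):=([a],g(a))$, and define $f\colon B^{n}\to Z_{p}$ by $f(E_{1},\dots,E_{n}):=g\bigl(w^{\mathbf A}(b_{E_{1}},\dots,b_{E_{n}})\bigr)$. Then $\mathbf C$ is declared to be the algebra on $B\times Z_{p}$ with $w^{\mathbf C}(x_{1},\dots,x_{n}):=\bigl(w^{\mathbf B}(x_{1}^{(1)},\dots,x_{n}^{(1)}),\ f(x_{1}^{(1)},\dots,x_{n}^{(1)})+x_{1}^{(2)}+\dots+x_{n}^{(2)}\bigr)$, so that $\mathbf C\in \mathbf A/0_{\mathbf A}^{*}\boxtimes\mathbf Z_{p}$ with all coefficients equal to $1$.

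Next I would verify that $h$ is an injective homomorphism. Injectivity: $h(a)=h(a')$ forces $[a]=[a']=:E$ and $\zeta(b_{E},a)=\zeta(b_{E},a')$, whence $a=a'$ by Lemma~\ref{ZetaPropertiesLEM}(4). For the homomorphism property, the first coordinate of $h$ is the quotient map $\mathbf A\to\mathbf B$ and so is compatible with the operations; for the second coordinate, applying $w^{\mathbf A}\times w^{\mathbf A}\times w^{\mathbf Z_{p}}$ to the tuples $(b_{[a_{i}]},a_{i},g(a_{i}))\in\zeta$ gives $(w^{\mathbf A}(\vec b),w^{\mathbf A}(\vec a),\sum_{i}g(a_{i}))\in\zeta$, where $\vec b=(b_{[a_{1}]},\dots,b_{[a_{n}]})$, hence $\zeta(w^{\mathbf A}(\vec b),w^{\mathbf A}(\vec a))=\sum_{i}g(a_{i})$. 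Since $w^{\mathbf A}(\vec b)$ and $w^{\mathbf A}(\vec a)$ lie in the block $E_{0}:=w^{\mathbf B}([a_{1}],\dots,[a_{n}])$ with transversal point $b_{E_{0}}$, additivity of $\zeta$ gives $g(w^{\mathbf A}(\vec a))=\zeta(b_{E_{0}},w^{\mathbf A}(\vec b))+\zeta(w^{\mathbf A}(\vec b),w^{\mathbf A}(\vec a))=f([a_{1}],\dots,[a_{n}])+\sum_{i}g(a_{i})$, which is precisely the second coordinate of $w^{\mathbf C}(h(a_{1}),\dots,h(a_{n}))$.

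To finish, I would check $\mathbf C\in\mathcal V_{n}$, i.e.\ that $w^{\mathbf C}$ is an idempotent special WNU. Idempotency and the WNU identities are routine: the first coordinate is handled by $\mathbf B\in\mathcal V_{n}$; in the second coordinate $f(E,\dots,E)=g(b_{E})=0$ and $(n-1)\cdot x^{(2)}=0$ by Lemma~\ref{ZetaPropertiesLEM}(2), and $f(E,\dots,E,F,E,\dots,E)$ is independent of the position of $F$ because $w^{\mathbf A}$ (hence $w^{\mathbf B}$) is a WNU. The main obstacle is the special identity, which I expect to reduce to showing $f(E,\dots,E,F)=0$ whenever $w^{\mathbf B}(E,\dots,E,F)=F$. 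Here the idea is to put $c=b_{E}$ and consider the retraction $e_{c}(x)=w^{\mathbf A}(c,\dots,c,x)$ (a retraction since $w^{\mathbf A}$ is special): it maps the block $F$ into itself, and applying $w^{\mathbf A}\times w^{\mathbf A}\times w^{\mathbf Z_{p}}$ to $n-1$ copies of $(c,c,0)$ and to $(y',y,\zeta(y',y))$ with $y,y'\in F$ yields $\zeta(e_{c}(y'),e_{c}(y))=\zeta(y',y)$. Reading this through the injection $y\mapsto\zeta(b_{F},y)$ (injective by Lemma~\ref{ZetaPropertiesLEM}(4)) and using additivity of $\zeta$, the map $e_{c}|_{F}$ becomes a translation of a subset of $\mathbf Z_{p}$; being an idempotent translation it must be the identity, so $e_{c}(b_{F})=b_{F}$, i.e.\ $w^{\mathbf A}(b_{E},\dots,b_{E},b_{F})=b_{F}$ and $f(E,\dots,E,F)=\zeta(b_{F},b_{F})=0$. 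Thus $w^{\mathbf C}$ is an idempotent special WNU, $\mathbf C\in\mathcal V_{n}$, and $h$ is the desired embedding; the bookkeeping in the homomorphism check and this last retraction argument is where the (light) work lies.
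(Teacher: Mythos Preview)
Your construction of $\mathbf C$ and $h$ is the same as the paper's (up to swapping the order of arguments in $\zeta$, which is cosmetic), and your verification that $h$ is an injective homomorphism is essentially identical. One small quibble: symmetry of $0_{\mathbf A}^{*}$ does not fall out of Lemma~\ref{ZetaPropertiesLEM} as directly as transitivity does; the clean way is to note that a perfect linear congruence is linear (Lemma~\ref{LEMLinearEquivalentConditions}), and for linear $\sigma$ the cover $\sigma^{*}$ is a congruence by definition (or cite Lemma~\ref{LEMNontrivialReflexiveBridgeImplies}).

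The one genuine difference is how you establish the special identity for $w^{\mathbf C}$. The paper transfers specialness from the image $h(A)$: since every $(E,0)$ lies in $h(A)$ and $w^{\mathbf C}|_{h(A)}$ is isomorphic to $w^{\mathbf A}$, specialness holds for the ``zeroed'' elements $x'=(x^{(1)},0)$, and this suffices because the second coordinate of $w^{\mathbf C}$ is affine-linear in $x^{(2)}$. You instead prove directly that $w^{\mathbf A}(b_{E},\dots,b_{E},\cdot)$ fixes every element of a block $F$ with $w^{\mathbf B}(E,\dots,E,F)=F$, via the observation that under $y\mapsto\zeta(b_{F},y)$ this retraction becomes a translation of a subset of $\mathbb Z_{p}$, and an idempotent translation has shift zero. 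Your argument is self-contained and actually yields the slightly stronger fact $e_{c}|_{F}=\mathrm{id}$ (whereas the paper only extracts what is needed for the identity on $\mathbf C$); the paper's route is shorter once the homomorphism is in hand. Both are correct.
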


\begin{proof}
Suppose $0_{\mathbf A}$ is witnessed by 
$\zeta\le \mathbf A\times\mathbf A\times \mathbf Z_{p}$.
Then we can use all the properties of $\zeta$ proved in Lemma \ref{ZetaPropertiesLEM}.

Let us define an algebra $\mathbf C=(A/0_{\mathbf A}^{*}\times\mathbf Z_{p};w^{\mathbf C})$.
Let $\phi:A/0_{\mathbf A}^{*}\to A$ be an injection such that 
$\phi(\theta)\in \theta$ for every equivalence class $\theta$. Thus, we just chose a representative from 
every equivalence class of $0_{\mathbf A}^{*}$.
Put 
$$(w^{\mathbf C})^{(1)}(x_1,\dots,x_{n})
=  w^{\mathbf A/0_{\mathbf A}^{*}}(x_1^{(1)},\dots,x_{n}^{(1)}),$$
$$(w^{\mathbf C})^{(2)}(x_1,\dots,x_{n})
=  \zeta\left(w^{\mathbf A}(\phi(x_1^{(1)}),\dots,\phi(x_{n}^{(1)}))   
, \phi(w^{\mathbf A/0_{\mathbf A}^{*}}(x_1^{(1)},\dots,x_{n}^{(1)}))\right)
+x_{1}^{(2)}+\dots+x_{n}^{(2)}.$$

Let us define an injective homomorphism 
$h\colon \mathbf A\to \mathbf C$
by $h(a) = (a/0_{\mathbf A}^{*}, \zeta(a,\phi(a/0_{\mathbf A}^{*})))$.
By Lemma \ref{ZetaPropertiesLEM}(5) $h$ is injective.
Let us prove that $h$ is a homomorphism. 
It follows immediately from the definition of $h$ and $w^{\mathbf C}$ that 
$$(h(w^{\mathbf A}(x_1,\dots,x_n)))^{(1)} =
 w^{\mathbf A/0_{\mathbf A}^{*}}(x_1/0_{\mathbf A}^{*},\dots,x_n/0_{\mathbf A}^{*})=
(w^{\mathbf C}(h(x_1),\dots,h(x_n)))^{(1)}.$$

%Let us show that 
%$(h(w^{\mathbf A}(x_1,\dots,x_n)))^{(2)} =
%(w^{\mathbf C}(h(x_1),\dots,h(x_n)))^{(2)}.$
Applying 
$w^{\mathbf A}$ to 
the tuples 
$(x_i,\phi(x_i/0_{\mathbf A}^{*}),
h^{(2)}(x_i))\in\zeta$ for $i=1,2,\dots,n$ 
we obtain 
%$$(w^{\mathbf A}(x_1,\dots,x_n),
%w(\phi(x_1/0_{\mathbf A}^{*}),\dots,\phi(x_n/0_{\mathbf A}^{*})),
%\zeta(x_1,\phi(x_1/0_{\mathbf A}^{*}))+
%\dots +\zeta(x_1,\phi(x_1/0_{\mathbf A}^{*})))\in\zeta.$$
$$\zeta(w^{\mathbf A}(x_1,\dots,x_n),
w^{\mathbf A}(\phi(x_1/0_{\mathbf A}^{*}),\dots,\phi(x_n/0_{\mathbf A}^{*})))
=
h^{(2)}(x_1)+\dots+h^{(2)}(x_n).$$
Using Lemma \ref{ZetaPropertiesLEM}(6)
we derive 
\begin{align*}
h^{(2)}(w^{\mathbf A}(x_1,\dots,x_n)&)=\\
\zeta(w^{\mathbf A}(x_1,\dots,&x_n),
\phi(w^{\mathbf A}(x_1,\dots,x_n)/0_{\mathbf A}^{*})) =\\ 
\zeta(w^{\mathbf A}&(x_1,\dots,x_n),
w^{\mathbf A}(\phi(x_1/0_{\mathbf A}^{*}),\dots,\phi(x_n/0_{\mathbf A}^{*})))
+\\
&\;\;\;\;\;\;\;\zeta(
w^{\mathbf A}(\phi(x_1/0_{\mathbf A}^{*}),\dots,\phi(x_n/0_{\mathbf A}^{*})),
\phi(w^{\mathbf A}(x_1,\dots,x_n)/0_{\mathbf A}^{*}))=\\
h^{(2)}(&x_1)+\dots+h^{(2)}(x_n) 
+\\
&\zeta(
w^{\mathbf A}(\phi(h^{(1)}(x_1)),\dots,\phi(h^{(1)}(x_n))),
\phi(w^{\mathbf A/0_{\mathbf A}^{*}}(h^{(1)}(x_1),\dots,h^{(1)}(x_1))))=\\
&\;\;\;\;\;\;\;\;\;\;\;\;\;\;\;\;\;\;\;\;\;\;\;\;\;\;\;\;\;\;\;\;\;\;\;\;\;\;\;\;\;\;\;\;\;\;\;\;\;\;\;\;
\;\;\;\;\;\;\;\;\;\;
\;\;\;\;\;\;\;\;\;\;(w^{\mathbf C})^{(2)}(h(x_1),\dots,h(x_{n}))
\end{align*}
Hence, 
$h(w^{\mathbf A}(x_1,\dots,x_n)) =
w^{\mathbf C}(h(x_1),\dots,h(x_n))$
and $h$ is a homomorphism.

It follows from the definition that 
$\mathbf C\in \mathbf A/0_{\mathbf A}^{*}\boxtimes\mathbf Z_{p}$.
It remains to show that $w^{\mathbf C}\in\mathcal V_{n}$.
Since $w^{\mathbf A}$ is a WNU and 
the addition $x_{1}^{(2)}+\dots+x_{n}^{(2)}$ in the definition of $(w^{\mathbf C})^{(2)}$ is symmetric, 
$w^{\mathbf C}$ is also a WNU.

Since $h$ is an isomorphism 
from $A$ to $h(A)$,
$w^{\mathbf C}$ is a special WNU on 
$h(A)$.
Notice that 
$h(\phi(x/0_{\mathbf A}^{*})) = 
(x/0_{\mathbf A}^{*},0)$, which means that 
$(b,0)\in h(A)$ for any $b\in A/0_{\mathbf A}^{*}$.
For $x\in C$ put 
$x' = (x^{(1)},0)$.

Let us show that 
$w^{\mathbf C}$ is idempotent.
Since $w^{\mathbf C}$ is idempotent on $h(A)$, 
we have 
$w^{\mathbf C}(x',x',\dots,x') = x'$ and 
$(w^{\mathbf C})^{(2)}(x',x',\dots,x') = 0$.
Hence, 
$(w^{\mathbf C})^{(2)}(x,x,\dots,x) = 
(w^{\mathbf C})^{(2)}(x',x',\dots,x')+
n\cdot x^{(2)} = x^{(2)}$
and $w^{\mathbf C}(x,x,\dots,x)=x$.

 Let us show that 
$w^{\mathbf C}$ is special and
$w^{\mathbf C}(x,x,\dots,x,w^{\mathbf C}(x,x,\dots,x,y)) = w^{\mathbf C}(x,x,\dots,x,y)$.
For $(w^{\mathbf C})^{(1)}$ it follows from the fact that $w^{\mathbf A/0_{\mathbf A}^{*}}$ is special. 
%Put $x' = (x^{(1)},0)$, 
%$y' = (y^{(1)},0)$.
Since $w^{\mathbf C}$ is special on $f(A)$, we have  
$$w^{\mathbf C}(x',\dots,x',w(x',\dots,x',y')) = 
w^{\mathbf C}(x',\dots,x',y').$$ 
Looking at the second components in the above equality
we obtain
$(w^{\mathbf C})^{(2)}(x',x',\dots,x',y')=0$. 
Therefore,
using the definition of $w^{\mathbf C}$ we derive 
\begin{align*}
(w^{\mathbf C})^{(2)}(x,x,\dots,x,w^{\mathbf C}(x,x,\dots,&x,y))=\\
2\cdot(w^{\mathbf C})^{(2)}(x',x',\dots,x',y')&+2\cdot (n-1)\cdot x^{(2)}+y^{(2)}=\\
&\;\;(w^{\mathbf C})^{(2)}(x',x',\dots,x',y')+
(n-1)\cdot x^{(2)}+y^{(2)}=\\
&\;\;\;\;\;\;\;\;\;\;\;\;\;\;\;\;\;\;\;\;\;\;\;\;\;\;\;\;\;\;\;\;\;\;\;\;\;\;\;\;\;\;\;\;\;\;\;\;\;\;\;\;\;\;\;\;\;(w^{\mathbf C})^{(2)}(x,x,\dots,x,y)
\end{align*}

\end{proof}

For a set $B$ and prime $p$ by 
$\mathcal C_{B\boxtimes \mathbf Z_{p}}$ we denote the set of all operations $f$ 
on $B\times Z_{p}$ such that 
$f^{(1)}(x_1,\dots,x_{n})$ depends only on $x_{1}^{(1)},\dots,x_{n}^{(1)}$
and 
 $f^{(2)}(x_1,\dots,x_{n}) = 
 f^{(2,1)}(x_1^{(1)},\dots,x_{n}^{(1)})
 +a_1x_1^{(2)}+\dots+a_nx_{n}^{(2)}$
 for some 
 $f^{(2,1)}:B^{n}\to Z_{p}$ and $a_{1},\dots,a_{n}\in Z_{p}$.
 The subfunction $a_1x_1^{(2)}+\dots+a_nx_{n}^{(2)}$
 we sometimes denote by $f^{(2,2)}$.
%For arbitrary operation 
%$f\in \mathcal C_{B\boxtimes \mathbf Z_{p}}$
%by $f^{(2,1)}$ and $f^{(2,2)}$
%we denote $g_2$ and $a_1x_1^{(2)}+\dots+a_nx_{n}^{(2)}$
%from the above lemma.
Sometimes we write $f^{(1)}(x_1^{(1)},\dots,x_n^{(1)})$ instead of 
$f^{(1)}(x_1,\dots,x_n)$ to point out that $f^{(1)}$ depends only on first components.

For an operation $f$ of arity $n$ we define operations $t_{\ell}^{f}$ for $\ell \in\{n,n-1,\dots,1,0\}$ as follows.
Put $t_{n}^{f}(x_1,\dots,x_{n},y_{1},\dots,y_n) = f(x_1,\dots,x_{n})$.
For $\ell \in\{n-1,\dots,1,0\}$ put
\begin{align*}
t_{\ell}^{f}(x_1,\dots,x_{\ell},y_1,\dots,y_n)
=f(&t_{\ell+1}(x_1,\dots,x_{\ell},y_1,y_1,\dots,y_n),\\
&t_{\ell+1}(x_1,\dots,x_{\ell},y_2,y_1,\dots,y_n),\dots,t_{\ell+1}(x_1,\dots,x_{\ell},y_n,y_1,\dots,y_n))
\end{align*}

\begin{lem}\label{kWNUImplieskWNU}
Suppose $f$ is a $k$-WNU. 
Then $t_{0}^{f}$ is a $k$-WNU.
\end{lem}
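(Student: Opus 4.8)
The statement to prove is Lemma~\ref{kWNUImplieskWNU}: if $f$ is a $k$-WNU (symmetric on $(\underbrace{x,\dots,x}_{k},y,\dots,y)$), then $t_0^f$ is a $k$-WNU.

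\textbf{Plan of the proof.} The plan is to prove by downward induction on $\ell\in\{n,n-1,\dots,1,0\}$ the statement that $t_\ell^f$, viewed as an operation in the $\ell$ "active" variables $x_1,\dots,x_\ell$ while the $y_1,\dots,y_n$ are treated as parameters, is a $k$-WNU in a suitable sense; the cleanest inductive hypothesis to carry is that $t_\ell^f(x_1,\dots,x_\ell,y_1,\dots,y_n)$ is symmetric under every permutation of the multiset of arguments built from the $x_i$'s and the first $\min(k-?,\dots)$ of the $y$'s — but since that is fiddly, I would instead phrase the induction more directly in terms of the defining identity of a $k$-WNU. Concretely, recall that an $n$-ary $g$ is a $k$-WNU iff $g$ agrees on every tuple obtained from $(\underbrace{a,\dots,a}_k,b,\dots,b)$ by permuting coordinates. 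So it suffices to show: for all $a,b$, the value $t_0^f(\underbrace{a,\dots,a}_j,\underbrace{b,\dots,b}_{n-j})$ where the arguments $y_1,\dots,y_n$ of $t_0^f$ consist of $j$ copies of $a$ and $n-j$ copies of $b$, in any order — no wait, $t_0^f$ has arity $n$ with arguments $y_1,\dots,y_n$ — does not depend on the positions of the $a$'s among the $y_i$, as long as exactly $k$ of them equal $a$ (and, by the $x\leftrightarrow y$ symmetry of $k$-WNU, also when exactly $k$ equal $b$).

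\textbf{Key steps in order.} First, unwind the recursion: $t_0^f(y_1,\dots,y_n)=f(t_1^f(y_1,y_1,\dots,y_n),\dots,t_1^f(y_n,y_1,\dots,y_n))$, and similarly each $t_{\ell}^f(x_1,\dots,x_\ell,y_1,\dots,y_n)$ is $f$ applied to $n$ copies of $t_{\ell+1}^f$ with the $\ell$-th "cursor" argument running through $y_1,\dots,y_n$. Second, set up the induction hypothesis: $t_{\ell}^f(x_1,\dots,x_\ell,y_1,\dots,y_n)$, as a function of $(x_1,\dots,x_\ell)$ with $y$'s fixed, together with its dependence on the $y$'s, enjoys the $k$-WNU identity in the pooled list of arguments. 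Precisely, I would prove: whenever the multiset $\{\!\{x_1,\dots,x_\ell\}\!\}\uplus\{\!\{y_1,\dots,y_n\}\!\}$ (with appropriate multiplicities — actually the $y$'s feed in with the multiplicity pattern dictated by the recursion) contains a value with multiplicity $\ge k$ and everything else equals a single other value, the output is permutation-invariant. Third, the base case $\ell=n$ is just "$f$ is a $k$-WNU". Fourth, the inductive step: in $t_\ell^f=f(t_{\ell+1}^f(\dots,y_1,\dots),\dots,t_{\ell+1}^f(\dots,y_n,\dots))$, use the inductive hypothesis on each inner $t_{\ell+1}^f$ to see that permuting $y_i$'s among themselves, or swapping the role of $x_\ell$ with one of the $y_i$'s that appears enough times, leaves each inner value unchanged or merely permutes the outer $n$ arguments of $f$ among equal values — and then the $k$-WNU property of the outer $f$ finishes the job. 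Fifth, specialize to $\ell=0$: there are no $x$-variables, the arguments are exactly $y_1,\dots,y_n$, and the pooled condition becomes exactly the $k$-WNU condition on $t_0^f$, using the $x/y$-symmetry of the $k$-WNU identity to handle the two values on an equal footing.

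\textbf{Main obstacle.} The real difficulty is bookkeeping: formulating the inductive invariant so that it is strong enough to push through the recursion yet is literally the $k$-WNU statement at $\ell=0$. In the recursion each $y_i$ gets used once as the "cursor" and $n$ times inside every inner call, so the effective multiplicities grow, and one must check that the combination of "$k$ equal arguments" at the outer level interacts correctly with the inner symmetries — in particular that the $k$-WNU identity is exactly the identity preserved under nested substitution of $k$-WNU operations into the cursor slot. I expect this is a short but delicate combinatorial verification; once the invariant is correctly stated, the inductive step is a direct application of the $k$-WNU identity twice (once inside, once outside). I would also double-check the boundary interplay between the $x$-count $\ell$ and the required multiplicity $k$ when $\ell<k$, since then the "$k$ equal arguments" must be supplied partly or wholly by the $y$'s, which is exactly the situation at $\ell=0$.
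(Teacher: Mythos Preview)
Your downward induction on $\ell$ is exactly the paper's approach, and the endgame you describe (inner inductive hypothesis plus the outer $k$-WNU identity of $f$) is the right mechanism. What you have not pinned down is the inductive invariant, and the one you are reaching for --- a symmetry statement in a ``pooled'' multiset of $x$'s and $y$'s with carefully tracked multiplicities --- is much stronger than needed and is the source of all the bookkeeping you flag as the main obstacle.

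The paper dissolves that obstacle by fixing the two tuples to be compared \emph{before} the induction starts. Choose $c,d$ and two tuples $(a_1,\dots,a_n),(b_1,\dots,b_n)\in\{c,d\}^n$ each containing exactly $k$ copies of $d$. The invariant is simply
\[
t_\ell^f(x_1,\dots,x_\ell,a_1,\dots,a_n)=t_\ell^f(x_1,\dots,x_\ell,b_1,\dots,b_n)\quad\text{for all }x_1,\dots,x_\ell.
\]
The base case $\ell=n$ is trivial because $t_n^f$ ignores its $y$-arguments. For the step, the inductive hypothesis at $\ell+1$ says each inner value $t_{\ell+1}^f(x_1,\dots,x_\ell,a_i,a_1,\dots,a_n)$ equals $t_{\ell+1}^f(x_1,\dots,x_\ell,a_i,b_1,\dots,b_n)$, which depends only on whether $a_i$ equals $c$ or $d$; hence both outer tuples fed to $f$ lie in $\{u_c,u_d\}^n$ with exactly $k$ copies of $u_d$, and the $k$-WNU identity of $f$ finishes. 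At $\ell=0$ this is precisely the $k$-WNU identity for $t_0^f$.

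So: keep your induction, drop the pooled-multiset invariant, and carry instead the single equality above with the target tuples fixed in advance. No multiplicity bookkeeping is needed, and there is nothing special about the regime $\ell<k$.
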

\begin{proof}
Suppose tuples $(a_1,\dots,a_n),(b_1,\dots,b_n)\in\{c,d\}^{n}$
contain exactly $k$ elements $d$.
We need to show that $t_{0}^{f}(a_1,\dots,a_n) = t_{0}^{f}(b_1,\dots,b_n)$.
We prove by induction on $\ell$ starting with $\ell = n$ to $\ell =0$ that
$t_{\ell}^{f}(x_1,\dots,x_{\ell},a_1,\dots,a_n) = t_{e\\}^{f}(x_1,\dots,x_{\ell},b_1,\dots,b_n)$
for all $x_1,\dots,x_{\ell}$. 
For $\ell = n$ it is obvious.
From the inductive assumption on $t_{\ell+1}^{f}$
we conclude that 
$$a_{i} = b_{j} \Longrightarrow t_{\ell+1}(x_1,\dots,x_{\ell},a_i,a_1,\dots,a_n)=
t_{\ell+1}(x_1,\dots,x_{\ell},b_j,b_1,\dots,b_n).$$
Then the inductive step follows from the definition of $t_{\ell}^{f}$ and the fact that $f$ is a $k$-WNU.
\end{proof}

Note, that if we write the term defining $t_{0}^{f}(y_1,\dots,y_{n})$ then 
for every $(i_1,\dots,i_n)\in[n]^{n}$ there exists exactly one 
internal formula $f(y_{i_1},\dots,y_{i_n})$.
For two tuples $(i_1,i_2,\dots,i_n),(j_1,j_2,\dots,j_{n})\in [n]^{n}$ 
by 
$t_{0,(i_1,i_2,\dots,i_n)}^{f,(j_1,j_2,\dots,j_{n})}$
we denote the operation defined by the same term as 
$t_{0}^{f}$ but with 
$f(y_{i_1},\dots,y_{i_n})$ replaced 
by $f(y_{j_1},\dots,y_{j_n})$.

%For any tuple $\alpha\in\{a,b\}^{n}$ 
%we define the tuple $\xi(\alpha)=(i_1,\dots,i_{n})$ as follows.
%If $\alpha(j) = \alpha(1)$ then $i_j=1$.
%If $\alpha(j) \neq \alpha(1)$ then $i_j-1$ is equal to the number of elements 
%different from $\alpha(1)$ in the first $j$ elements of $\alpha$.
%For example, 
%$\xi((a,a,b,a,b,a,b,b,a)) = 
%(1,1,2,1,3,1,4,5,1)$.
%Notice that if $\alpha$ is a permutation of $\beta$ and $\alpha(1) = \beta(1)$ then 
%$\xi(\alpha)$ is a permutation of $\xi(\beta)$.

For a tuple $\alpha$ and an element $b$ by 
$N_{b}(\alpha)$ we denote the number of elements in $\alpha$ that are equal to $b$.
By $T_{a,b}^{n,k}$ we denote the set of all 
tuples $\gamma\in \{a,b\}^{n}$ such that 
$N_{b}(\gamma) = k$ and $\gamma(1) = a$.

For two tuples 
$\alpha,\beta \in T_{a,b}^{n,k}$  
we define a tuple $\xi(\alpha,\beta)=
(i_1,\dots,i_{n})\in [n]^{n}$
as follows.
Let 
$j_1<\dots<j_k$ and 
$s_1<\dots<s_k$
be the lists of positions of 
$b$ in $\alpha$ and $\beta$ respectively.
Put 
$i_{\ell} = 1$ if $\beta(\ell) = a$,
and 
$i_{\ell} = j_{m}$ if $\ell = s_{m}$. 
For example, 
$\xi((a,a,b,a,b,a,b,b,a),(a,b,a,b,b,a,a,a,b)) = 
(1,3,1,5,7,1,1,1,8)$.
Notice that for any $\alpha,\beta,\gamma\in T_{a,b}^{n,k}$
the tuple $\xi(\alpha,\beta)$
is a permutation of the tuple $\xi(\alpha,\gamma)$.

%Then for any two tuples $\alpha,\beta\in\{a,b\}^{n}$ we 
%denote
%$t_{0,\alpha}^{f,\beta}$.

\begin{lem}\label{SpecialWNUPreservingDeltaLEM}
Suppose $f$ is a special idempotent WNU operation of arity $n$, $f\in \mathcal C_{B\boxtimes \mathbf Z_{p}}$.
Then 
$f^{(2,2)}(x_1^{(2)},\dots,x_n^{(2)}) = 
x_{1}^{(2)}+\dots+x_{n}^{(2)}$
and $p$ divides $n-1$.
\end{lem}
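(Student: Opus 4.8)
The plan is to read off everything from the action of $f$ on the second coordinate, arranging the first coordinates of all arguments to be equal so that the nonlinear part $f^{(2,1)}$ contributes only a fixed constant and cancels. Since $f\in\mathcal C_{B\boxtimes\mathbf Z_{p}}$, write $f^{(2,2)}(x_1^{(2)},\dots,x_n^{(2)}) = a_1x_1^{(2)}+\dots+a_nx_n^{(2)}$ with $a_1,\dots,a_n\in Z_{p}$. Note also that $f^{(1)}$, being the first component of an idempotent WNU, is itself an idempotent WNU on $B$.

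First I would show all the coefficients are equal. Fix $c\in B$ arbitrary and $t\in Z_{p}$, and put $x=(c,0)$, $y=(c,t)$. Because $f$ is a WNU, $f(y,x,\dots,x)=f(x,y,x,\dots,x)=\dots=f(x,\dots,x,y)$. In each of these expressions every argument has first component $c$, so $f^{(2,1)}$ evaluates to the same constant $f^{(2,1)}(c,\dots,c)$ throughout and cancels; comparing second components of the $i$-th and $j$-th expressions leaves $a_i t=a_j t$ for all $t$, hence $a_1=\dots=a_n=:a$. Next, idempotency $f((c,t),\dots,(c,t))=(c,t)$ gives, on second components, $f^{(2,1)}(c,\dots,c)+ant=t$ for all $t\in Z_{p}$; taking $t=0$ yields $f^{(2,1)}(c,\dots,c)=0$, and then $an\equiv 1\pmod p$, so $a$ is invertible modulo $p$.

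Finally I would invoke that $f$ is \emph{special}: $f(x,\dots,x,y)=f\bigl(x,\dots,x,f(x,\dots,x,y)\bigr)$. With $x=(c,0)$ and $y=(c,t)$, the first component of $f(x,\dots,x,y)$ is $f^{(1)}(c,\dots,c)=c$ by idempotency of $f^{(1)}$, so $f^{(2,1)}$ again evaluates to $f^{(2,1)}(c,\dots,c)=0$ everywhere it occurs and $x^{(2)}=0$ kills the remaining $x$-terms; hence $f(x,\dots,x,y)^{(2)}=at$ and $f\bigl(x,\dots,x,f(x,\dots,x,y)\bigr)^{(2)}=a\cdot(at)=a^{2}t$. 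The special identity thus forces $at=a^{2}t$ for all $t$, i.e. $a=a^{2}$, and since $a$ is invertible we get $a=1$. Therefore $f^{(2,2)}(x_1^{(2)},\dots,x_n^{(2)})=x_1^{(2)}+\dots+x_n^{(2)}$, and substituting $a=1$ into $an\equiv 1\pmod p$ gives $n\equiv 1\pmod p$, i.e. $p$ divides $n-1$.

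I do not expect a real obstacle here; the only point requiring care is making sure the nonlinear term $f^{(2,1)}$ drops out of each identity, which is exactly why all first coordinates are set equal to a single $c$, and making sure that invertibility of $a$ (needed to conclude $a=1$ from $a=a^2$) is already established — it is, coming from idempotency before the special identity is used.
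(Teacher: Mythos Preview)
Your proof is correct and follows essentially the same line as the paper's: WNU gives equal coefficients, idempotency gives $na\equiv 1$, and the special identity gives $a^{2}=a$, hence $a=1$. You are simply more explicit than the paper about fixing all first coordinates equal to some $c$ so that the $f^{(2,1)}$ contributions cancel; the paper leaves this implicit.
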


\begin{proof}
By the definition of $\mathcal C_{B\boxtimes \mathbf Z_{p}}$ we 
have 
%Lemma \ref{RelationDefinesBoxtimes} 
$f^{(2,2)}(x_1^{(2)},\dots,x_n^{(2)}) = a_{1}x_1+\dots+a_{n}x_{n}$.
Since $f$ is a WNU, we have $a_{1} = a_{2} = \dots = a_{n}$.
Since $f$ is idempotent, $n\cdot a_1=1 \mod p$.
Since $f$ is special we have 
$f(y,y,\dots,y,f(y,y,\dots,y,x)) = f(y,y,\dots,y,x)$, which implies
$a_1\cdot a_1 = a_1$ and $a_1 = 1$.
\end{proof}

The next lemma follows immediately from the definition of $t_{0}^{f}$
and $t_{0,(i_1,i_2,\dots,i_n)}^{f,(j_1,j_2,\dots,j_{n})}$.

%\begin{lem}
%Suppose $f\in \mathcal C_{B\boxtimes \mathbf Z_{p}}$,
%$f^{(2,2)} (x_1^{(2)},\dots,x_n^{(2)})= x_1^{(2)}+\dots + x_n^{(2)}$, 
%where $p$ divides $n-1$.
%Then 
%$(t_{0}^{f})^{(2)}(x_1^{(2)},\dots,x_n^{(2)}) = 
%\sum\limits_{(i_1,\dots,i_{n})\in[n]^{n}} 
%f^{(2,1)}(x_{i_1}^{(1)},\dots,x_{i_n}^{(1)}) +
%x_{1}^{(2)}+\dots+x_{n}^{(2)}$.
%\end{lem}

\begin{lem}\label{DifferenceTwoTwoReplacementLEM}
Suppose %$f\colon (B\times \mathbf Z_{p})^{n}\to (B\times \mathbf Z_{p})$,
$f\in \Pol(\sigma_{B\times \mathbb Z})$,
$f^{(2,2)} (x_1^{(2)},\dots,x_n^{(2)})= x_1^{(2)}+\dots + x_n^{(2)}$, 
where $p$ divides $n-1$,
and
$f^{(1)}(x_{i_1},\dots,x_{i_n}) = f^{(1)}(x_{j_1},\dots,x_{j_n})$.
Then 
$(t_{0,(i_1,i_2,\dots,i_n)}^{f,(j_1,j_2,\dots,j_{n})})^{(1)}=(t_{0}^{f})^{(1)}$
and 
$(t_{0,(i_1,i_2,\dots,i_n)}^{f,(j_1,j_2,\dots,j_{n})})^{(2)}(x_{1},\dots,x_{n})=(t_{0}^{f})^{(2)}(x_{1},\dots,x_{n})
-f^{(2)}(x_{i_1},\dots,x_{i_n}) +  f^{(2)}(x_{j_1},\dots,x_{j_n})$.
\end{lem}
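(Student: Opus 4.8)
The plan is to unfold $t_0^f$ as a composition tree and to track how replacing a single innermost $f$-occurrence propagates, working in the two coordinates separately. By the recursive definition of $t_\ell^f$, the operation $t_0^f(y_1,\dots,y_n)$ is an $n$-fold nested composition of $f$ whose innermost occurrences are exactly the subterms $f(y_{i_1},\dots,y_{i_n})$, one for each $(i_1,\dots,i_n)\in[n]^n$, each appearing once; and $t_{0,(i_1,\dots,i_n)}^{f,(j_1,\dots,j_n)}$ is obtained from $t_0^f$ by replacing precisely that one occurrence by $f(y_{j_1},\dots,y_{j_n})$. So it suffices to compute, coordinatewise, the effect of this single local replacement. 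Throughout I use that the hypotheses on $f$ give it the form: $f^{(1)}(x_1,\dots,x_n)$ depends only on $x_1^{(1)},\dots,x_n^{(1)}$, and $f^{(2)}(x_1,\dots,x_n)=g(x_1^{(1)},\dots,x_n^{(1)})+x_1^{(2)}+\dots+x_n^{(2)}$ with $g=f^{(2,1)}$, the coefficients of the second components all being $1$ precisely by the assumption $f^{(2,2)}(x_1^{(2)},\dots,x_n^{(2)})=x_1^{(2)}+\dots+x_n^{(2)}$.

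For the first coordinate: since $f^{(1)}$ factors through first coordinates, the first coordinate of any composition built from $f$ is the corresponding composition built from $f^{(1)}$. The hypothesis $f^{(1)}(x_{i_1},\dots,x_{i_n})=f^{(1)}(x_{j_1},\dots,x_{j_n})$ says exactly that the replaced innermost subterm computes the same first-coordinate function before and after the substitution; hence by a trivial induction up the tree the first coordinate of every subterm containing it is unchanged, which gives $(t_{0,(i_1,\dots,i_n)}^{f,(j_1,\dots,j_n)})^{(1)}=(t_0^f)^{(1)}$. A byproduct is that the tuple of first coordinates feeding each $f$-application in the tree is unchanged by the replacement.

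For the second coordinate I would argue by induction up the tree on the invariant: if a subterm's first coordinate is unchanged by the replacement and its second coordinate changes by some (value-dependent) amount $\delta$, then the same holds for its parent with the \emph{same} $\delta$. Indeed, at the parent $f$-application the $g$-summand depends only on the first coordinates of the arguments, which are unchanged by the previous paragraph; and the parent adds the second coordinates of its $n$ arguments each with coefficient $1$, while exactly one of those arguments lies on the path to the affected occurrence and the other $n-1$ are unchanged altogether. The base case is the affected occurrence itself, where the first coordinate is unchanged and the second coordinate changes from $f^{(2)}(x_{i_1},\dots,x_{i_n})$ to $f^{(2)}(x_{j_1},\dots,x_{j_n})$, i.e. $\delta=f^{(2)}(x_{j_1},\dots,x_{j_n})-f^{(2)}(x_{i_1},\dots,x_{i_n})$. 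Propagating the invariant up to the root yields
$$(t_{0,(i_1,\dots,i_n)}^{f,(j_1,\dots,j_n)})^{(2)}(x_1,\dots,x_n)=(t_0^f)^{(2)}(x_1,\dots,x_n)-f^{(2)}(x_{i_1},\dots,x_{i_n})+f^{(2)}(x_{j_1},\dots,x_{j_n}).$$

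The only delicate point — and the reason the assumption $f^{(2,2)}(x_1^{(2)},\dots,x_n^{(2)})=x_1^{(2)}+\dots+x_n^{(2)}$ cannot be dropped — is the propagation step: a single local change in the second coordinate is transmitted through every enclosing $f$ with multiplier exactly $1$. With general coefficients $a_1,\dots,a_n$ the change would instead be scaled by the product of the $a_m$ along the path from the affected occurrence to the root, and the stated identity would fail. Everything else is routine bookkeeping over the composition tree, which is why the lemma follows ``immediately'' once this structure is spelled out.
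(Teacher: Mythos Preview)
Your proposal is correct and matches the paper's approach: the paper does not give a detailed proof, stating only that the lemma ``follows immediately from the definition of $t_{0}^{f}$ and $t_{0,(i_1,\dots,i_n)}^{f,(j_1,\dots,j_n)}$''. Your argument simply makes explicit the bookkeeping behind that remark, and the key observation you isolate---that the unit coefficients in $f^{(2,2)}$ ensure the local change in the second coordinate propagates to the root with multiplier exactly $1$, while the unchanged first coordinates keep every $f^{(2,1)}$-summand fixed---is exactly the content of ``immediately''.
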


\begin{lem}\label{TwoTwoPartLEM}
Suppose %$f\colon (B\times \mathbf Z_{p})^{n}\to (B\times \mathbf Z_{p})$,
$f\in \Pol(\sigma_{B\times \mathbb Z})$,
$f^{(2,2)} (x_1^{(2)},\dots,x_n^{(2)})= x_1^{(2)}+\dots + x_n^{(2)}$, 
where $p$ divides $n-1$.
Then 
$(t_{0}^{f})^{(2,2)}(x_1^{(2)},\dots,x_n^{(2)}) = x_1^{(2)}+\dots + x_n^{(2)}$.
\end{lem}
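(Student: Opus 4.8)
The plan is to compute $(t_{0}^{f})^{(2,2)}$ directly by following the ``second-coordinate linear part'' through the term that defines $t_{0}^{f}$. First I would record the routine closure fact: $t_{0}^{f}\in\Clo(f)\subseteq\Pol(\sigma_{B\times\mathbb Z})$, so $(t_{0}^{f})^{(2)}$ does split as $(t_{0}^{f})^{(2,1)}$, depending only on the first coordinates, plus a linear form $(t_{0}^{f})^{(2,2)}$ in the second coordinates; the claim is that this linear form has all coefficients $1$. The key observation is that, because the hypothesis says $f^{(2,2)}(z_{1},\dots,z_{n})=z_{1}+\dots+z_{n}$ \emph{with all coefficients equal to $1$}, at every internal $f$-node of a term the $(2,2)$-part of the output is simply the sum of the $(2,2)$-parts of the $n$ children: the $f^{(2,1)}$-contribution depends only on first coordinates and is invisible to second coordinates. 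Consequently, for the term defining $t_{0}^{f}(y_{1},\dots,y_{n})$, the linear form $(t_{0}^{f})^{(2,2)}(y_{1}^{(2)},\dots,y_{n}^{(2)})$ is, over $\mathbb Z$, exactly the sum, over all variable-leaves of the term tree, of $y^{(2)}$ for the variable $y$ occurring at that leaf.

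Next I would count these leaves using the structural remark stated just before Lemma~\ref{DifferenceTwoTwoReplacementLEM}: the term $t_{0}^{f}$ contains, for each tuple $(i_{1},\dots,i_{n})\in[n]^{n}$, exactly one bottom subformula $f(y_{i_{1}},\dots,y_{i_{n}})$, and (as is clear from the recursive definition of the $t_{\ell}^{f}$) these $n^{n}$ bottom subformulas account for every variable-leaf of the term. Hence the number of times the variable $y_{j}$ occurs as a leaf is $\sum_{(i_{1},\dots,i_{n})\in[n]^{n}}\#\{k\in[n]:i_{k}=j\}=\sum_{k=1}^{n}n^{n-1}=n^{n}$, which is the same for every $j$. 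Therefore $(t_{0}^{f})^{(2,2)}(y_{1}^{(2)},\dots,y_{n}^{(2)})$ equals $n^{n}\cdot\bigl(y_{1}^{(2)}+\dots+y_{n}^{(2)}\bigr)$ as a formal linear combination with integer coefficients.

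Finally I would reduce modulo $p$: since $p$ divides $n-1$ we have $n\equiv 1\pmod p$, hence $n^{n}\equiv 1\pmod p$, so $(t_{0}^{f})^{(2,2)}(y_{1}^{(2)},\dots,y_{n}^{(2)})=y_{1}^{(2)}+\dots+y_{n}^{(2)}$ in $\mathbf Z_{p}$, which is the statement. I do not expect a genuine obstacle here; the argument is essentially bookkeeping. The only points requiring mild care are checking that every variable-leaf of the $t_{0}^{f}$-term really lies inside one of the $n^{n}$ bottom subformulas $f(y_{i_{1}},\dots,y_{i_{n}})$ (so that the leaf count is exactly $n^{n}$ per variable), and that the $f^{(2,1)}$-part never perturbs the $(2,2)$-part under composition --- both of which are immediate from the definitions of $t_{\ell}^{f}$ and of $\mathcal C_{B\boxtimes\mathbf Z_{p}}$ together with the hypothesis $f^{(2,2)}(z_{1},\dots,z_{n})=z_{1}+\dots+z_{n}$.
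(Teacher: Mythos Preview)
Your proposal is correct and follows essentially the same approach as the paper: reduce $(t_{0}^{f})^{(2,2)}$ to the sum $\sum_{(i_{1},\dots,i_{n})\in[n]^{n}}(y_{i_{1}}^{(2)}+\dots+y_{i_{n}}^{(2)})$, count that each variable $y_{j}^{(2)}$ appears with coefficient $n^{n}$, and use $n\equiv 1\pmod p$. In fact your count $n^{n}$ is the correct one; the paper writes $n^{n-1}$, which appears to be a typo (both are $\equiv 1\pmod p$, so the conclusion is unaffected).
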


\begin{proof}

\begin{align*}
(t_{0}^{f})^{(2,2)}(x_1^{(2)},\dots,x_n^{(2)}) =&\\ 
\sum\limits_{i_1,\dots,i_n\in[n]}&
(x_{i_1}^{(2)}+\dots + x_{i_n}^{(2)})=
n^{n-1}\cdot (x_{i_1}^{(2)}+\dots + x_{i_n}^{(2)})=x_{i_1}^{(2)}+\dots + x_{i_n}^{(2)}.
\end{align*}%
\end{proof}

\begin{lem}\label{GeneralizedMaltsev}
Suppose $f\in \mathcal C_{B\boxtimes \mathbf Z_{p}}$,
$f^{(2,2)}(x_{1}^{(2)},\dots,x_{n}^{(2)}) = x_1^{(2)}+\dots + x_n^{(2)}$,
$m\in \mathbb N$.
Then there exists  
$g\in \Clo(f)$ such that 
$g^{(2,2)}(x_1,\dots,x_{2m+1}) =  x_{1}^{(2)}-x_{2}^{(2)}+x_{3}^{(2)}-
\dots - x_{2m}^{(2)}+x_{2m+1}^{(2)}$.
\end{lem}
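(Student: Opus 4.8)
The plan is to exploit the fact that, for operations in $\mathcal C_{B\boxtimes\mathbf Z_{p}}$, the "$(2,2)$-part'' composes in a simple way: if $g\in\mathcal C_{B\boxtimes\mathbf Z_{p}}$ has $g^{(2,2)}(z_{1},\dots,z_{k})=b_{1}z_{1}+\dots+b_{k}z_{k}$ and $h_{1},\dots,h_{k}\in\mathcal C_{B\boxtimes\mathbf Z_{p}}$, then $g(h_{1},\dots,h_{k})\in\mathcal C_{B\boxtimes\mathbf Z_{p}}$ and its $(2,2)$-part is the $\mathbf Z_{p}$-linear form $\sum_{j}b_{j}\,h_{j}^{(2,2)}$ (the $f^{(2,1)}$-contributions only ever feed the summand depending on first coordinates, which is irrelevant here). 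Since the lemma constrains only $g^{(2,2)}$ and nothing about $g^{(1)}$ or $g^{(2,1)}$, it suffices to produce, by iterated substitution into the form $z_{1}+\dots+z_{n}$ (the $(2,2)$-part of $f$) together with projections, the form $z_{1}-z_{2}+z_{3}-\dots-z_{2m}+z_{2m+1}$. Recall that $p\mid n-1$ in our setting (cf. Lemma~\ref{SpecialWNUPreservingDeltaLEM}), so $n-2\equiv-1\pmod p$.

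First I would build the ternary term $g_{1}(x_{1},x_{2},x_{3}):=f(x_{1},\underbrace{x_{2},\dots,x_{2}}_{n-2},x_{3})$, obtained from $f$ by identifying its middle $n-2$ coordinates; its $(2,2)$-part equals $z_{1}+(n-2)z_{2}+z_{3}=z_{1}-z_{2}+z_{3}$ over $\mathbf Z_{p}$. Then I would define terms $g^{(m)}\in\Clo(f)$ of arity $2m+1$ recursively by $g^{(0)}:=\mathrm{pr}_{1}$ and, for $m\ge 1$, $g^{(m)}(x_{1},\dots,x_{2m+1}):=g_{1}\bigl(x_{1},\,x_{2},\,g^{(m-1)}(x_{3},\dots,x_{2m+1})\bigr)$ (the inner term has arity $2m-1$, so the arity is $2+(2m-1)=2m+1$). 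By the composition rule above, $(g^{(m)})^{(2,2)}(z_{1},\dots,z_{2m+1})=z_{1}-z_{2}+(g^{(m-1)})^{(2,2)}(z_{3},\dots,z_{2m+1})$, and an immediate induction on $m$ (base case $(g^{(0)})^{(2,2)}(z)=z$) yields $(g^{(m)})^{(2,2)}=z_{1}-z_{2}+z_{3}-\dots-z_{2m}+z_{2m+1}$. Setting $g:=g^{(m)}$ completes the proof.

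The argument is essentially bookkeeping, so in the regime that actually occurs there is no serious obstacle: the only points that need care are the composition rule for $(2,2)$-parts (and the companion observation that $\Clo(f)\subseteq\mathcal C_{B\boxtimes\mathbf Z_{p}}$, so that "$(2,2)$-part'' is always meaningful), together with the arithmetic $n-2\equiv-1\pmod p$. The one place where something genuinely different would be needed is if one insisted on dropping the hypothesis $p\mid n-1$: then the identity $g_{1}^{(2,2)}=z_{1}-z_{2}+z_{3}$ fails, and instead one would first have to assemble a binary term with $(2,2)$-part $z_{1}-z_{2}$ (which is possible precisely because, when $p\nmid n-1$, coefficient sums $\equiv0\pmod p$ are reachable) and then nest it to obtain the alternating form. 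In the present setting $p\mid n-1$, so the short construction above suffices.
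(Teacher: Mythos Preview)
Your proof is correct and follows the same approach as the paper's two-line argument (which simply says: take any term that realizes $x_1-x_2+\dots+x_{2m+1}$ from $x_1+\dots+x_n$ over $\mathbf Z_p$, and apply the same term to $f$); you just make the construction explicit via $g_1$ and the recursion $g^{(m)}$, and you spell out the composition rule for the $(2,2)$-part that the paper leaves tacit. Your appeal to $p\mid n-1$ is not a stated hypothesis of the lemma, but it is the ambient assumption whenever the lemma is invoked (via Lemma~\ref{SpecialWNUPreservingDeltaLEM}), and you correctly indicate how to adapt the construction when $p\nmid n-1$.
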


\begin{proof}
Consider a term defining 
$x_1-x_2+x_3-\dots+x_{2m+1}$ from 
$x_1+x_2+\dots+x_{n}$ in $\mathbf Z_{p}$.
The same term defines the required $g$ from $f$.
\end{proof}

To simplify notations we use operations coming from Lemma \ref{GeneralizedMaltsev} as follows.
Whenever we write 
$\bigoplus\limits_{i=1}^{m}(a_i\cdot(x_i\ominus x_i'))\oplus x_{m+1}$ 
for $a_{1},\dots,a_{m}\in \mathbf Z_{p}$, 
we mean 
$$g(\underbrace{x_1,x_1',\dots,x_1,x_1'}_{2a_1},\underbrace{x_2,x_2',\dots,x_2,x_2'}_{2a_2},\dots,\underbrace{x_m,x_m',\dots,x_m,x_m'}_{2a_m},x_{m+1}),$$
where
$g$ is a $(2\cdot \sum_{i=1}^{m}a_{i}+1)$-ary operation coming from 
Lemma \ref{GeneralizedMaltsev}.
Notice that we use this notation if the only important part of the 
obtained operation $f$ is $f^{(2,2)}$.

We say that an operation $f$ is \emph{symmetric on a tuple $(a_1,\dots,a_n)$}
if 
$f(a_1,\dots,a_n) = 
f(a_{\sigma(1)},\dots,a_{\sigma(n)})$ 
for any permutation $\sigma\colon [n]\to [n]$.
We say that an operation $f$ is \emph{weakly symmetric on a tuple $(a_1,\dots,a_n)$}
if 
$f(a_1,\dots,a_n) = 
f(a_{\sigma(1)},\dots,a_{\sigma(n)})$ 
for any permutation $\sigma\colon [n]\to [n]$ such that $\sigma(1) = 1$.

Suppose $P\subseteq \{(c,d)\mid c,d\in B, c\neq d\}$
and $a,b\in B$. 
We say that an $n$-ary operation $f\in \mathcal C_{B\boxtimes \mathbf Z_{p}}$ is 
\emph{$(P,a,b,k)$-symmetric} if 
    \begin{enumerate}
   	\item[(1)] $p$ divides $n-1$,
       \item[(2)] $f^{(1)}$ is XY-symmetric,
        \item[(3)] $f^{(2,2)}(x_1^{(2)},\dots,x_n^{(2)}) =  x_1^{(2)}+\dots + x_n^{(2)}$,
        \item[(4)] $f^{(2,1)}$ is weakly symmetric on all tuples $\alpha\in\{c,d\}^{n}$ 
        such that $\alpha(1) = c$ and $(c,d)\in P$,
        \item[(5)] $f^{(2,1)}$ is weakly symmetric on all tuples $\alpha\in\{a,b\}^{n}$
        such that 
        $\alpha(1) = a$ and $N_{b}(\alpha)\le k$.
    \end{enumerate}
An operation is called \emph{$P$-symmetric} if it satisfies only items (1)-(4).

\begin{lem}\label{MainIncreasingSymmetricity}
    Suppose $P\subseteq \{(c,d)\mid c,d\in B, c\neq d\}$,
    $0\le k <n-1$,     
$a,b\in B$, and $f\in \mathcal C_{B\boxtimes \mathbf Z_{p}}$ is a $(P,a,b,k)$-symmetric operation.
Then there exists a $(P,a,b,k+1)$-symmetric operation $g\in \Clo(f)$ of arity $n$.
\end{lem}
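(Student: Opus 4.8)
The idea is to take the $(P,a,b,k)$-symmetric operation $f$ and symmetrize it one step further by composing it with a suitable permutation-averaging construction built from the operators $t_0^{f}$ and the modified operators $t_{0,(i_1,\dots,i_n)}^{f,(j_1,\dots,j_n)}$ introduced before Lemma~\ref{DifferenceTwoTwoReplacementLEM}, together with the ``linear combination'' operations $\bigoplus$ coming from Lemma~\ref{GeneralizedMaltsev}. The operation $t_0^{f}$ already inherits $k$-WNU-type symmetry by Lemma~\ref{kWNUImplieskWNU}, its first component stays XY-symmetric, and by Lemma~\ref{TwoTwoPartLEM} its $(2,2)$-part is still $x_1^{(2)}+\dots+x_n^{(2)}$; the only thing we still have to fix is the $(2,1)$-part on the tuples $\alpha\in\{a,b\}^n$ with $\alpha(1)=a$ and $N_b(\alpha)=k+1$. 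So the real content is a computation inside $\mathbf Z_p$ on $f^{(2,1)}$.

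First I would fix $a,b$ and the set $T^{n,k+1}_{a,b}$ of tuples $\alpha\in\{a,b\}^n$ with $\alpha(1)=a$, $N_b(\alpha)=k+1$, and study how $(t_0^{f})^{(2,1)}$ evaluates on such an $\alpha$ versus on a permuted $\sigma(\alpha)$ with $\sigma(1)=1$. Using Lemma~\ref{DifferenceTwoTwoReplacementLEM} (valid because $f^{(1)}$ is XY-symmetric, hence $f^{(1)}(x_{i_1},\dots)=f^{(1)}(x_{j_1},\dots)$ whenever the two index tuples have the same number of each ``type''), each swap of two internal occurrences $f(y_{i_1},\dots)\mapsto f(y_{j_1},\dots)$ changes $(t_0^{f})^{(2)}$ by the additive correction $-f^{(2)}(x_{i_1},\dots,x_{i_n})+f^{(2)}(x_{j_1},\dots,x_{j_n})$, and the first component is unaffected. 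The plan is to pick, for each $\alpha\in T^{n,k+1}_{a,b}$, the permutation-induced index tuple $\xi(\alpha_0,\alpha)$ relative to a fixed reference $\alpha_0\in T^{n,k+1}_{a,b}$ — exactly the combinatorial device defined before Lemma~\ref{SpecialWNUPreservingDeltaLEM}, whose key property is that all the $\xi(\alpha_0,\alpha)$ are permutations of one another — and replace the internal subterms accordingly, collecting all the correction terms. Because $f$ is already weakly symmetric on the tuples with $N_b\le k$ (item~(5)) and on $\{c,d\}^n$-tuples for $(c,d)\in P$ (item~(4)), and because the $\xi$-permutations only involve positions carrying $b$, the correction terms that arise are combinations of expressions $f^{(2,1)}(\beta)-f^{(2,1)}(\sigma\beta)$ on tuples $\beta$ where $f^{(2,1)}$ is already known to be symmetric, so those cancel, and what remains is a single well-defined element of $\mathbf Z_p$ attached to each $\alpha$; averaging these out with a $\bigoplus$-combination (Lemma~\ref{GeneralizedMaltsev}) of $n-1$-fold-divisibility-compatible coefficients produces $g$ with the desired item~(5) for $k+1$, while items~(1)--(4) are preserved by Lemmas~\ref{TwoTwoPartLEM} and~\ref{kWNUImplieskWNU} and the fact that XY-symmetry of the first component is stable under all these manipulations.

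I expect the main obstacle to be the bookkeeping in the middle step: showing that after rewriting every internal occurrence of $f$ in $t_0^{f}$ via the $\xi$-substitution, the accumulated correction to $(t_0^{f})^{(2,1)}$, evaluated at a tuple $\alpha\in T^{n,k+1}_{a,b}$, depends only on $\alpha$ through a single scalar that is constant along the $\sigma(1)=1$ orbit of $\alpha$ — i.e. that all the ``lower-order'' discrepancies genuinely lie in the range where items~(4) and~(5) already guarantee weak symmetry, so they telescope away. This requires carefully tracking which index tuples $(i_1,\dots,i_n)$ and $(j_1,\dots,j_n)$ occur, checking that each differs from the reference only by a permutation fixing the first coordinate and acting within a two-element-valued block, and that the number of $b$'s in the relevant subtuples never exceeds $k$ except in the ``top'' term whose symmetrization is precisely what we are adding. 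Once that is pinned down, closing the argument is a routine verification of items~(1)--(5) for $g$.
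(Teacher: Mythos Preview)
Your plan uses the right toolkit (the tower $t_0^f$, the single–leaf replacements $t_{0,(i_1,\dots,i_n)}^{f,(j_1,\dots,j_n)}$, the $\bigoplus$ combinations from Lemma~\ref{GeneralizedMaltsev}, and Lemmas~\ref{kWNUImplieskWNU}, \ref{DifferenceTwoTwoReplacementLEM}, \ref{TwoTwoPartLEM}) and correctly identifies that only the $(2,1)$-part on $T_{a,b}^{n,k+1}$ needs fixing. But the concrete mechanism you describe has a gap.

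You propose a single reference $\alpha_0$ and an ``averaging'' over the orbit. In $\mathbf Z_p$ this is dangerous: $|T_{a,b}^{n,k+1}|=\binom{n-1}{k+1}$ may be divisible by $p$, so you cannot simply average. The paper instead fixes \emph{two} reference tuples $\alpha,\beta\in T_{a,b}^{n,k+1}$ with $f^{(2,1)}(\alpha)\neq f^{(2,1)}(\beta)$ (if no such pair exists, $g=f$ already works) and sets
\[
g \;=\; \bigoplus_{\gamma\in T_{a,b}^{n,k+1}}
\frac{f^{(2,1)}(\alpha)-f^{(2,1)}(\gamma)}{f^{(2,1)}(\beta)-f^{(2,1)}(\alpha)}
\cdot\bigl(t_{0,\xi(\gamma,\alpha)}^{f,\xi(\gamma,\beta)}\ominus t_0^{f}\bigr)\oplus f.
\]
The point you are circling around in your ``expected obstacle'' paragraph, but do not quite land on, is a Kronecker-delta phenomenon rather than a constancy-along-orbits one: for $\delta\in T_{a,b}^{n,k+1}$ one has $t_{0,\xi(\gamma,\alpha)}^{f,\xi(\gamma,\beta)}(\delta)=t_0^{f}(\delta)$ whenever $\gamma\neq\delta$, because evaluating the two substituted leaves at $\delta$ gives tuples in $\{a,b\}^n$ with first entry $a$ and only $|\{j:\gamma(j)=\delta(j)=b\}|\le k$ many $b$'s, where item~(5) already applies; while for $\gamma=\delta$ the evaluated leaves are exactly $\alpha$ and $\beta$, giving the fixed difference $f^{(2,1)}(\beta)-f^{(2,1)}(\alpha)$. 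Thus only the summand $\gamma=\delta$ contributes, and the coefficient is chosen so that $g^{(2,1)}(\delta)=f^{(2,1)}(\alpha)$ for every $\delta\in T_{a,b}^{n,k+1}$. For $(c,d)\in P$ or for tuples with $\le k$ many $b$'s, \emph{every} summand vanishes for the same reason, so items~(4) and~(5) for $k$ are preserved. Once you have this observation the verification of (1)--(5) is exactly as you say.
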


\begin{proof}
The function $f$ satisfies all the properties we require for $g$ except for the property (5) 
for $N_{b}(\alpha) = k+1$.
If $f$ is also weekly symmetric on such tuples then 
we just take $g = f$.
Otherwise, consider 
two tuples $\alpha, \beta\in\{a,b\}^{n}$ such that 
$\alpha(1) = \beta(1) = a$, 
$N_{b}(\alpha) = N_{b}(\beta) = k+1$, 
and $f^{(2,1)}(\alpha)\neq f^{(2,1)}(\beta)$.

Define a new operation 
\begin{align*}
g(y_1,&\dots,y_n) :=\\
&\bigoplus\limits_{\gamma\in T_{a,b}^{n,k+1}}
\frac{f^{(2,1)}(\alpha)-f^{(2,1)}(\gamma)}
{f^{(2,1)}(\beta)-f^{(2,1)}(\alpha)}
\cdot 
\left(t_{0,\xi(\gamma,\alpha)}^{f,\xi(\gamma,\beta)}(y_1,\dots,y_n)
\ominus
t_{0}^{f}(y_1,\dots,y_n)\right)
\oplus f(y_1,\dots,y_n)
\label{bigTransformation}
\end{align*}

Let us show that $g$ satisfies the required properties (2)-(5).
Since $f^{(1)}$ is XY-symmetric, 
by Lemmas \ref{kWNUImplieskWNU} and \ref{DifferenceTwoTwoReplacementLEM}, $g^{(1)}$ is also XY-symmetric.
Property (3) follows from Lemma \ref{TwoTwoPartLEM}.
Let us prove property (4).
Consider two tuples $\delta,\delta'\in\{c,d\}^{n}$ such that 
$\delta(1) =\delta'(1)= c$, $N_{d}(\delta) = N_{d}(\delta')$, and $(c,d)\in P$.
We need to prove that 
$g^{(2,1)}(\delta) = g^{(2,1)}(\delta')$.
Since $(c,d)\in P$, by Lemma \ref{DifferenceTwoTwoReplacementLEM}
for every $\gamma\in T_{a,b}^{n,k+1}$ we have
$t_{0,\xi(\gamma,\alpha)}^{f,\xi(\gamma,\beta)}(\delta)=
t_{0}^{f}(\delta)$ 
and 
$t_{0,\xi(\gamma,\alpha)}^{f,\xi(\gamma,\beta)}(\delta')
=t_{0}^{f}(\delta')$. 
Hence 
$g^{(2,1)}(\delta)-g^{(2,1)}(\delta') = 
f^{(2,1)}(\delta)-f^{(2,1)}(\delta')=0$.

Let us prove property (5).
Consider two tuples $\delta,\delta'\in T_{a,b}^{n,k'}$, 
where $k'\le k+1$.
We need to prove that 
$g^{(2,1)}(\delta) = g^{(2,1)}(\delta')$.
if $k'\le k$ then it follows from the property (5) for $f$ that
$t_{0,\xi(\gamma,\alpha)}^{f,\xi(\gamma,\beta)}(\delta)=
t_{0}^{f}(\delta)$ 
and 
$t_{0,\xi(\gamma,\alpha)}^{f,\xi(\gamma,\beta)}(\delta')
=t_{0}^{f}(\delta')$,
and therefore 
$g^{(2,1)}(\delta)-g^{(2,1)}(\delta') = 
f^{(2,1)}(\delta)-f^{(2,1)}(\delta')=0$.

Assume that $k' = k+1$.
Then 
$t_{0,\xi(\gamma,\alpha)}^{f,\xi(\gamma,\beta)}(\delta)
=t_{0}^{f}(\delta)$ whenever $\gamma\neq\delta$.
Therefore, in the definition of $g$ 
we only care about elements of the $\bigoplus$ corresponding to $\gamma=\delta$.
Hence, by Lemma \ref{DifferenceTwoTwoReplacementLEM} we have 
\begin{align*}
&g^{(2,1)}(\delta)-g^{(2,1)}(\delta') = \\
&\frac{f^{(2,1)}(\alpha)-f^{(2,1)}(\delta)}
{f^{(2,1)}(\beta)-f^{(2,1)}(\alpha)}
\cdot ((t_{0,\xi(\delta,\alpha)}^{f,\xi(\delta,\beta)})^{(2,1)}(\delta)-
(t_{0}^{f})^{(2,1)}(\delta))
- \\
&\frac{f^{(2,1)}(\alpha)-f^{(2,1)}(\delta')}
{f^{(2,1)}(\beta)-f^{(2,1)}(\alpha)}
\cdot 
((t_{0,\xi(\delta',\alpha)}^{f,\xi(\delta',\beta)})^{(2,1)}(\delta')-
(t_{0}^{f})^{(2,1)}(\delta'))
+f^{(2,1)}(\delta)-f^{(2,1)}(\delta')=
\\
&\frac{f^{(2,1)}(\alpha)-f^{(2,1)}(\delta)}
{f^{(2,1)}(\beta)-f^{(2,1)}(\alpha)}
\cdot (f^{(2,1)}(\beta)-f^{(2,1)}(\alpha))
- \\
&\frac{f^{(2,1)}(\alpha)-f^{(2,1)}(\delta')}
{f^{(2,1)}(\beta)-f^{(2,1)}(\alpha)}
\cdot 
(f^{(2,1)}(\beta)-f^{(2,1)}(\alpha))
+f^{(2,1)}(\delta)-f^{(2,1)}(\delta')
=\\
&f^{(2,1)}(\alpha)-f^{(2,1)}(\delta)-
f^{(2,1)}(\alpha)+f^{(2,1)}(\delta')+
f^{(2,1)}(\delta)-f^{(2,1)}(\delta')
= 0
\end{align*}
\end{proof}

\begin{cor}\label{MainIncreasingSymmetricityCor}
Suppose 
$g\in \mathcal C_{B\boxtimes \mathbf Z_{p}}$,
$g^{(1)}$ is an $n$-ary  XY-symmetric operation,
$p$ divides $n-1$, 
$g^{(2,2)}(x_1^{(2)},\dots,x_n^{(2)}) = 
    x_{1}^{(2)}+\dots+x_{n}^{(2)}$.
Then $\Clo(g)$ contains an $n$-ary operation $h$ such that 
$h^{(1)}$ is XY-symmetric, 
$h^{(2,2)}(x_1^{(2)},\dots,x_n^{(2)}) = 
    x_{1}^{(2)}+\dots+x_{n}^{(2)}$, and $h$ is weakly symmetric on 
all tuples $\alpha$ having two different elements.
\end{cor}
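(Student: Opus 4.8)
The plan is to obtain the corollary by iterating Lemma~\ref{MainIncreasingSymmetricity} over all ordered pairs of distinct elements of $B$. First I would note that the hypotheses on $g$ are exactly conditions (1)--(3) in the definition of a $P$-symmetric operation, so $g$ is $\varnothing$-symmetric; in particular $g$ is $(\varnothing,a,b,0)$-symmetric for every $a,b\in B$ with $a\neq b$, since condition (5) for $k=0$ only concerns the constant tuple $(a,\dots,a)$, on which weak symmetry is vacuous. (Also $g\in\mathcal C_{B\boxtimes\mathbf Z_p}$ by assumption, and every operation produced by Lemma~\ref{MainIncreasingSymmetricity} lies in $\mathcal C_{B\boxtimes\mathbf Z_p}$ as well.)

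Next I would fix an enumeration $(a_1,b_1),\dots,(a_M,b_M)$ of all ordered pairs $(c,d)$ with $c,d\in B$, $c\neq d$, and set $P_i=\{(a_1,b_1),\dots,(a_i,b_i)\}$. Starting from $f_0=g$, which is $P_0$-symmetric (i.e. $\varnothing$-symmetric), I would construct $f_1,\dots,f_M$ with $f_i\in\Clo(f_{i-1})$ and $f_i$ a $P_i$-symmetric operation of arity $n$. Given $f_{i-1}$, which is $P_{i-1}$-symmetric and hence $(P_{i-1},a_i,b_i,0)$-symmetric, I would apply Lemma~\ref{MainIncreasingSymmetricity} successively for $k=0,1,\dots,n-2$, arriving at $f_i\in\Clo(f_{i-1})$ that is $(P_{i-1},a_i,b_i,n-1)$-symmetric. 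Here the key observation is that if $\alpha\in\{a_i,b_i\}^n$ with $\alpha(1)=a_i$, then $N_{b_i}(\alpha)\le n-1$ automatically (position $1$ already holds $a_i\neq b_i$), so condition (5) at level $n-1$ coincides with condition (4) for the pair $(a_i,b_i)$; combined with conditions (1)--(4) for $P_{i-1}$ this means $f_i$ is $P_i$-symmetric, and the induction continues.

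Finally I would take $h=f_M\in\Clo(g)$. By construction $h^{(1)}$ is XY-symmetric and $h^{(2,2)}(x_1^{(2)},\dots,x_n^{(2)})=x_1^{(2)}+\dots+x_n^{(2)}$, which are the first two required properties. For the last one, let $\alpha\in(B\times\mathbf Z_p)^n$ have exactly two distinct entries and let $\sigma$ be a permutation of $[n]$ with $\sigma(1)=1$. The tuple of first coordinates $\beta=(\alpha_1^{(1)},\dots,\alpha_n^{(1)})$ has at most two distinct values: $h^{(1)}$ is symmetric on $\beta$ because it is XY-symmetric; $h^{(2,1)}$ is weakly symmetric on $\beta$ either trivially (one value) or by condition (4), since $P_M$ contains all ordered pairs of distinct elements of $B$ (two values, choosing the orientation $(\beta(1),\cdot)$); and $h^{(2,2)}$ is symmetric on the tuple of second coordinates. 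Since $h^{(2)}$ is the sum of $h^{(2,1)}$ applied to first coordinates and $h^{(2,2)}$ applied to second coordinates, this yields $h(\alpha)=h(\alpha^{\sigma})$, i.e. $h$ is weakly symmetric on $\alpha$.

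I do not expect a real obstacle: the corollary is a bookkeeping consequence of Lemma~\ref{MainIncreasingSymmetricity}. The two points needing care are (i) verifying that reaching the top level $k=n-1$ for a pair $(a_i,b_i)$ is the same as full weak symmetry of $f^{(2,1)}$ on $\{a_i,b_i\}$-tuples with first entry $a_i$ (because the first coordinate already uses up one non-$b_i$ slot), and (ii) the case of a tuple $\alpha$ whose two distinct entries share the same first coordinate, where weak symmetry of $h$ reduces to symmetry of the $\mathbf Z_p$-summand $h^{(2,2)}$.
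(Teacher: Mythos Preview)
Your proposal is correct and follows the same idea as the paper: both repeatedly apply Lemma~\ref{MainIncreasingSymmetricity} to enlarge the set $P$ one pair at a time until it contains all ordered pairs of distinct elements of $B$. The only cosmetic difference is that the paper phrases this as a maximality/contradiction argument (choose $f\in\Clo(g)$ that is $P$-symmetric for maximal $P$, then derive a contradiction from any missing pair), whereas you iterate constructively through an enumeration of the pairs; your explicit verification that $(P_{i-1},a_i,b_i,n-1)$-symmetry is the same as $P_i$-symmetry, and your final check lifting weak symmetry from $h^{(2,1)}$ on $B$-tuples to $h$ on $(B\times\mathbf Z_p)$-tuples, spell out details the paper leaves implicit.
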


\begin{proof}
Notice that $g$
is $(\varnothing,a,b,0)$-symmetric 
for any $a,b\in B$.
Using Lemma \ref{MainIncreasingSymmetricity}
we can increase the set $P$ of tuples on which $g^{(2,1)}$ is weakly symmetric until we get a required operation.

Formally, we prove as follows.
Consider operations 
$f\in \Clo(g)$ 
that is 
$P$-symmetric for an inclusion maximal set $P$.
If $P$ contains all pairs then we found the required operation.
Otherwise, choose $(a,b)\notin P$. 
Choose a maximal $k$ such that there exists 
a $(P,a,b,k)$-symmeric operation $f\in\Clo(g)$.

Then applying Lemma \ref{MainIncreasingSymmetricity}
we can always increase $k$ if $k<n-1$, which contradicts the maximality of $k$.
Notice that if $k=n-2$ then 
Lemma \ref{MainIncreasingSymmetricity} 
guarantees that the pair $(a,b)$ can be included into $P$, which contradicts our assumption about the maximality of $P$.
\end{proof}

\begin{thm}\label{PropagateXYSYmmetricityToBoxtimesTHM}
Suppose 
$\mathbf A\in ( \mathbf B\boxtimes \mathbb Z_p)\cap\mathcal V_{n}$,
$w^{\mathbf B}$ is XY-symmetric.
Then there exists a term $t$ of arity $n$ such that 
$t^{\mathbf A}$ is XY-symmetric. 
\end{thm}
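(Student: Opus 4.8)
The plan is to deduce the statement from Corollary~\ref{MainIncreasingSymmetricityCor} together with one further composition that symmetrizes the second coordinate completely. Corollary~\ref{MainIncreasingSymmetricityCor} will hand us an operation that is only \emph{weakly} symmetric on two-element tuples, and the new work is to wash out the remaining dependence on the first coordinate.

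\textbf{Step 1: structural preliminaries.} Since $\mathbf A\in(\mathbf B\boxtimes\mathbf Z_p)\cap\mathcal V_{n}$, the operation $w^{\mathbf A}$ is a special idempotent WNU lying in $\mathcal C_{B\boxtimes\mathbf Z_p}$ with $(w^{\mathbf A})^{(1)}=w^{\mathbf B}$; hence $(w^{\mathbf A})^{(1)}$ is XY-symmetric by hypothesis, and by Lemma~\ref{SpecialWNUPreservingDeltaLEM} we have $(w^{\mathbf A})^{(2,2)}(x_1^{(2)},\dots,x_n^{(2)})=x_1^{(2)}+\dots+x_n^{(2)}$ and $p\mid n-1$. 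I would also note that $\Clo(w^{\mathbf A})\subseteq\mathcal C_{B\boxtimes\mathbf Z_p}$ and that every operation in $\Clo(w^{\mathbf A})$ is idempotent. Applying Corollary~\ref{MainIncreasingSymmetricityCor} to $g=w^{\mathbf A}$ yields an $n$-ary $h\in\Clo(w^{\mathbf A})$ with $h^{(1)}$ XY-symmetric, $h^{(2,2)}$ the untwisted sum, and $h$ weakly symmetric on every tuple with exactly two distinct entries. Idempotency of $h$ gives in addition $h^{(1)}(b,\dots,b)=b$ and $h^{(2,1)}(b,\dots,b)=0$ for all $b\in B$.

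\textbf{Step 2: the symmetrizing composition.} For $i\in[n]$ let $\pi_i$ be the permutation of $[n]$ with $\pi_i(1)=i$ that lists $[n]\setminus\{i\}$ in increasing order in the remaining positions, and define
$$t(x_1,\dots,x_n):=h\bigl(h(x_{\pi_1(1)},\dots,x_{\pi_1(n)}),\,h(x_{\pi_2(1)},\dots,x_{\pi_2(n)}),\,\dots,\,h(x_{\pi_n(1)},\dots,x_{\pi_n(n)})\bigr),$$
an $n$-ary member of $\Clo(w^{\mathbf A})$. To check that $t^{\mathbf A}$ is XY-symmetric, fix a tuple $\alpha$ over $B\times\mathbf Z_p$ with at most two distinct entries $u,v$, say $A$ of them equal to $u$ and $k$ equal to $v$, and put $S=\sum_j\alpha_j^{(2)}$. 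Each inner tuple is a permutation of $\alpha$, so by XY-symmetry of $h^{(1)}$ the first component of every inner value equals the same $b^{*}\in B$; by weak symmetry of $h^{(2,1)}$ the second component of the $i$-th inner value equals $\phi_u+S$ if $\alpha_i=u$ and $\phi_v+S$ if $\alpha_i=v$, where $\phi_u,\phi_v\in\mathbf Z_p$ depend only on $u,v,A,k$ (and $\phi_u=\phi_v=0$ when $u,v$ share the first coordinate). Thus the argument fed to the outer $h$ has all first coordinates equal to $b^{*}$; its first component is $b^{*}$ by idempotency, and since $h^{(2,2)}$ is the sum and $h^{(2,1)}(b^{*},\dots,b^{*})=0$, its second component is $A\phi_u+k\phi_v+nS=A\phi_u+k\phi_v+S$ because $n\equiv1\pmod p$. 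Every quantity $b^{*},A,k,\phi_u,\phi_v,S$ is a symmetric function of $\alpha$, so $t^{\mathbf A}(\alpha)$ is invariant under all permutations of $\alpha$; hence $t^{\mathbf A}$ is symmetric on all tuples with two distinct entries, i.e. XY-symmetric.

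\textbf{Main obstacle.} The delicate point is precisely the verification in Step~2 that the double application of $h$ with the ``move one coordinate to the front'' permutations really averages out the first-coordinate dependence hidden in weak symmetry. This works only because $h^{(2,2)}$ is the untwisted sum, $h$ is idempotent, and $p\mid n-1$ (so that summing the $n$ shifted second components just reproduces $S$ up to the fixed correction $A\phi_u+k\phi_v$). Everything else is bookkeeping, and the heavy lifting of producing the weakly symmetric $h$ is already done by Corollary~\ref{MainIncreasingSymmetricityCor}.
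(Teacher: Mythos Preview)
Your proof is correct and follows essentially the same approach as the paper. The paper also applies Lemma~\ref{SpecialWNUPreservingDeltaLEM} and Corollary~\ref{MainIncreasingSymmetricityCor} to obtain a weakly symmetric operation, and then symmetrizes by composing over $n$ bring-to-front permutations; the only cosmetic differences are that the paper uses cyclic shifts for the inner permutations and the original $w$ (rather than $h$) as the outer operation, which works for the same reason your choice does.
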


\begin{proof}
By the definition 
%\ref{RelationDefinesBoxtimesLEM}, 
$w^{\mathbf A}\in \mathcal C_{B\boxtimes \mathbf Z_{p}}$.
By Lemma \ref{SpecialWNUPreservingDeltaLEM},
$(w^{\mathbf A})^{(2,2)}(x_1^{(2)},\dots,x_n^{(2)}) = 
    x_{1}^{(2)}+\dots+x_{n}^{(2)}$ 
    and $p$ divides $n-1$.
By Corollary \ref{MainIncreasingSymmetricityCor}, 
there exists an $n$-ary term $\tau$ such that 
$\tau^{\mathbf B}$ is XY-symmetric and 
$\tau^{\mathbf A}$ is weakly symmetric on all tuples containing 
two different elements.
By $t$ define the term  
$$w(\tau(x_1,\dots,x_n),
\tau(x_2,\dots,x_n,x_1),
\tau(x_3,\dots,x_n,x_1,x_2),
\dots,
\tau(x_n,x_1,\dots,x_{n-1})).$$
Let us show that $t$ is XY-symmetric.

Since $\tau^{\mathbf A}$ is weakly symmetric 
on any $\gamma \in \{a,b\}^{n}$,
$\tau^{\mathbf A}(\gamma)$ depends only on $\gamma(1)$
and $N_{b}(\gamma)$.
Notice that $(t^{\mathbf A})^{(1)} =  t^{\mathbf B}$ is XY-symmetric and 
$$(t^{\mathbf A})^{(2,2)}
(x_1^{(2)},\dots,x_{n}^{(2)}) = 
n\cdot (x_{1}^{(2)}+\dots+x_{n}^{(2)})
= x_{1}^{(2)}+\dots+x_{n}^{(2)}.$$
Suppose
$\alpha\in\{a,b\}^{n}$ for some $a,b\in B$.
Then 
\begin{align*}(t^{\mathbf A})^{(2,1)}
(\alpha) =
N_{a}(\alpha)\cdot (\tau^{\mathbf A})^{(2,1)}(a,\dots,a,\underbrace{b,\dots,b}_{N_{b}(\alpha)})+
N_{b}(\alpha)\cdot (\tau^{\mathbf A})^{(2,1)}&(\underbrace{b,\dots,b}_{N_{b}(\alpha)},a,\dots,a)+\\
&(w^{\mathbf A})^{(2,1)}(
\tau^{\mathbf B}(\alpha),
\dots,\tau^{\mathbf B}(\alpha)).
\end{align*}
Hence $(t^{\mathbf A})^{(2,1)}(\alpha)$
depends only on 
$N_{b}(\alpha)$, 
which means that
$(t^{\mathbf A})^{(2,1)}$
is symmetric on $\alpha$.
Therefore, 
$t^{\mathbf A}$ is XY-symmetric.
\end{proof}

\begin{THMpropagateXYSymmetricTHM}
Suppose 
$\mathbf A,\mathbf B\in \mathcal V_{n}$,
$0_{\mathbf A}$ is a perfect linear congruence,
$\mathbf A/{0_{\mathbf A}}^{*}\times \mathbf B$ has an XY-symmetric term operation of arity $n$.
Then $\mathbf A\times \mathbf B$ has an XY-symmetric term operation.
\end{THMpropagateXYSymmetricTHM}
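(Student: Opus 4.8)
The plan is to reduce the claim, via the embedding Theorem~\ref{ExistenceOfInjectiveHomomorphismTHM}, to producing an XY-symmetric term operation on a $\boxtimes$-product, and then to run the argument of Theorem~\ref{PropagateXYSYmmetricityToBoxtimesTHM}, but with one extra preprocessing step that replaces ``XY-symmetric basic operation on the base'' by ``XY-symmetric \emph{term} operation on the base''. First I would apply Theorem~\ref{ExistenceOfInjectiveHomomorphismTHM} to $\mathbf A$ and the perfect linear congruence $0_{\mathbf A}$, obtaining $\mathbf C\in(\mathbf A/0_{\mathbf A}^{*}\boxtimes\mathbf Z_{p})\cap\mathcal V_{n}$ and an injective homomorphism $h\colon\mathbf A\to\mathbf C$. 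Writing $\mathbf G:=\mathbf A/0_{\mathbf A}^{*}\times\mathbf B$, with universe $G$, and letting $\tau_{0}$ be a term with $\tau_{0}^{\mathbf G}$ XY-symmetric, the map $h\times\mathrm{id}_{\mathbf B}\colon\mathbf A\times\mathbf B\to\mathbf C\times\mathbf B$ is an injective homomorphism; since XY-symmetry of a term operation is inherited by subalgebras and pulled back along injective homomorphisms, it suffices to give an XY-symmetric term operation of $\mathbf C\times\mathbf B$. Reorganizing the coordinates $((d,z),b)\mapsto((d,b),z)$ identifies $\mathbf C\times\mathbf B$ with an algebra $\mathbf F\in(\mathbf G\boxtimes\mathbf Z_{p})\cap\mathcal V_{n}$ whose basic operation has first component $(w^{\mathbf F})^{(1)}=w^{\mathbf G}$; here I use that $\mathcal V_{n}$ is closed under finite products and that $(w^{\mathbf C})^{(2,2)}(x_{1}^{(2)},\dots,x_{n}^{(2)})=x_{1}^{(2)}+\dots+x_{n}^{(2)}$ with $p\mid n-1$, which is Lemma~\ref{SpecialWNUPreservingDeltaLEM} applied to the special WNU $w^{\mathbf C}$.

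The crux is to exhibit a term operation $g$ of $\mathbf F$ such that $g^{(1)}$ is XY-symmetric and $g^{(2,2)}(x_{1}^{(2)},\dots,x_{n}^{(2)})=x_{1}^{(2)}+\dots+x_{n}^{(2)}$, because then Corollary~\ref{MainIncreasingSymmetricityCor} applies to $g$. Since $w^{\mathbf F}$ lies in the clone $\mathcal C_{G\boxtimes\mathbf Z_{p}}$, so does $\tau_{0}^{\mathbf F}$; its first component is $\tau_{0}$ computed with $w^{\mathbf G}$, i.e.\ $\tau_{0}^{\mathbf G}$, which is XY-symmetric, and its $(2,2)$-component is $\tau_{0}$ computed over $\mathbf Z_{p}$, an idempotent linear form $\sum_{k}c_{k}x_{k}^{(2)}$ with $\sum_{k}c_{k}\equiv1\pmod p$. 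I would take $g$ to be the composition of $w^{\mathbf F}$ with the $n$ cyclic shifts of $\tau_{0}^{\mathbf F}$,
\[
g(y_{1},\dots,y_{n}):=w^{\mathbf F}\bigl(\tau_{0}^{\mathbf F}(y_{1},\dots,y_{n}),\tau_{0}^{\mathbf F}(y_{2},\dots,y_{n},y_{1}),\dots,\tau_{0}^{\mathbf F}(y_{n},y_{1},\dots,y_{n-1})\bigr).
\]
On any tuple $\alpha\in\{a,b\}^{n}$ with exactly $i$ entries equal to $a$, every cyclic shift of $\alpha$ is a permutation of $\alpha$, so XY-symmetry of $\tau_{0}^{\mathbf G}$ makes all $n$ inner values equal to one element of $G$ depending only on $i$, and idempotency of $\mathbf G$ makes $g^{(1)}(\alpha)$ equal to that element; hence $g^{(1)}$ is XY-symmetric. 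And because $(w^{\mathbf F})^{(2,2)}$ is the plain sum, $g^{(2,2)}$ equals $\sum_{j=1}^{n}\sum_{k=1}^{n}c_{k}x_{j+k-1}^{(2)}$ with indices read mod $n$; the coefficient of each $x_{m}^{(2)}$ is $\sum_{k}c_{k}=1$, so $g^{(2,2)}(x_{1}^{(2)},\dots,x_{n}^{(2)})=x_{1}^{(2)}+\dots+x_{n}^{(2)}$.

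With $g$ and $p\mid n-1$, Corollary~\ref{MainIncreasingSymmetricityCor} gives an $n$-ary $\tau_{1}\in\Clo(g)\subseteq\Clo(w^{\mathbf F})$ with $\tau_{1}^{(1)}$ XY-symmetric, $\tau_{1}^{(2,2)}$ the plain sum, and $\tau_{1}$ weakly symmetric on every tuple with two distinct entries. Then I would repeat the final part of the proof of Theorem~\ref{PropagateXYSYmmetricityToBoxtimesTHM} verbatim: set $t:=w^{\mathbf F}(\tau_{1}(x_{1},\dots,x_{n}),\tau_{1}(x_{2},\dots,x_{n},x_{1}),\dots,\tau_{1}(x_{n},x_{1},\dots,x_{n-1}))$ and check that $(t^{\mathbf F})^{(1)}$ is XY-symmetric by the same cyclic-shift/idempotency argument (now invoking only idempotency of $\mathbf G$ in place of XY-symmetry of $w^{\mathbf G}$), that $(t^{\mathbf F})^{(2,2)}$ is the plain sum since $n\equiv1\pmod p$, and that $(t^{\mathbf F})^{(2,1)}(\alpha)$ depends only on how many times each value occurs in $\alpha$, using the weak symmetry of $\tau_{1}$; this yields that $t^{\mathbf F}$ is XY-symmetric. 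Transporting $t^{\mathbf F}$ through the identification $\mathbf F\cong\mathbf C\times\mathbf B$ and then pulling back along $h\times\mathrm{id}_{\mathbf B}$ produces the required XY-symmetric term operation of $\mathbf A\times\mathbf B$.

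The main obstacle is the construction of $g$ in the second step, which is the only genuinely new point relative to Theorem~\ref{PropagateXYSYmmetricityToBoxtimesTHM}: there the corollary is fed with $w^{\mathbf A}$ itself because the base algebra's basic operation is already XY-symmetric, whereas here $w^{\mathbf G}$ is only a WNU and only a term $\tau_{0}^{\mathbf G}$ is XY-symmetric, so one must simultaneously force the first component to be XY-symmetric — which the cyclic averaging of $\tau_{0}^{\mathbf F}$ does, since XY-symmetry collapses all cyclic shifts on two-element tuples — and repair the $\mathbf Z_{p}$-part back to the plain sum — which the outer $w^{\mathbf F}$ does, since summing the $n$ cyclic shifts of any idempotent linear form over $\mathbf Z_{p}$ returns the plain sum. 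Checking that these two effects are compatible reduces to the two short coordinate computations indicated above; everything else is routine bookkeeping together with direct appeals to Theorems~\ref{ExistenceOfInjectiveHomomorphismTHM} and \ref{PropagateXYSYmmetricityToBoxtimesTHM}, Lemma~\ref{SpecialWNUPreservingDeltaLEM}, and Corollary~\ref{MainIncreasingSymmetricityCor}.
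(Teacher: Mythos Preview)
Your argument is correct and follows the same overall route as the paper: embed $\mathbf A$ into $\mathbf C\in((\mathbf A/0_{\mathbf A}^{*})\boxtimes\mathbf Z_{p})\cap\mathcal V_{n}$ via Theorem~\ref{ExistenceOfInjectiveHomomorphismTHM}, pass to $\mathbf D=\mathbf C\times\mathbf B\in((\mathbf D/\delta)\boxtimes\mathbf Z_{p})\cap\mathcal V_{n}$ with $\mathbf D/\delta\cong\mathbf G=\mathbf A/0_{\mathbf A}^{*}\times\mathbf B$, produce an XY-symmetric term operation there, and pull back along $h\times\mathrm{id}_{\mathbf B}$.

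The one genuine difference is that the paper simply invokes Theorem~\ref{PropagateXYSYmmetricityToBoxtimesTHM} on $\mathbf D$, whereas that theorem, as stated, asks for the \emph{basic} operation $w^{\mathbf D/\delta}=w^{\mathbf G}$ to be XY-symmetric; the hypothesis here only supplies an XY-symmetric \emph{term} operation $\tau_{0}^{\mathbf G}$. You correctly identify this and insert the missing step: the cyclic averaging $g=w^{\mathbf F}(\tau_{0}^{\mathbf F}(\,\cdot\,),\tau_{0}^{\mathbf F}(\text{shift}),\dots)$ yields a term with $g^{(1)}$ XY-symmetric and $g^{(2,2)}$ equal to the plain sum (your two short computations are right: cyclic shifts of a two-element tuple are permutations, and summing the $n$ cyclic shifts of an idempotent linear form over $\mathbf Z_{p}$ gives the plain sum). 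This is exactly what Corollary~\ref{MainIncreasingSymmetricityCor} needs, after which the remainder of the proof of Theorem~\ref{PropagateXYSYmmetricityToBoxtimesTHM} goes through verbatim; note that the final computation there only uses idempotency of the base and the $(2,2)$-part of $w$ being the sum, not XY-symmetry of $w^{\mathbf G}$. In short, your proof is the paper's proof made rigorous at the one point where the paper reads Theorem~\ref{PropagateXYSYmmetricityToBoxtimesTHM} a bit more liberally than its statement warrants.
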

\begin{proof}
    By Theorem \ref{ExistenceOfInjectiveHomomorphismTHM},
there exists an algebra $\mathbf C\in \mathbf ((\mathbf A/0_{\mathbf A}^{*})\boxtimes\mathbf Z_{p})\cap\mathcal V_{n}$ such that 
there exists an injective homomorphism $h\colon \mathbf A\to \mathbf C$. 
Let $\delta$ be the canonical congruence on $\mathbf C$ such that 
$\mathbf C/\delta\cong \mathbf A/0_{\mathbf A}^{*}$.
Put $\mathbf D = \mathbf C \times \mathbf B$
and extend the congruence $\delta$ on $\mathbf D$.
Then 
$\mathbf D\in ((\mathbf D/\delta)\boxtimes \mathbf Z_{p})
\cap \mathcal V_{n}$.
By Theorem \ref{PropagateXYSYmmetricityToBoxtimesTHM}
there exists a term $\tau$ such that 
$\tau^{\mathbf D}$ is XY-symmetric.
Hence, both $\tau^{\mathbf C}$ and $\tau^{\mathbf B}$ are XY-symmetric.
Since $\mathbf A$ is isomorphic to a subalgebra of 
$\mathbf C$, $\tau^{\mathbf A}$ is also XY-symmetric.
\end{proof}

%!TeX root=main.tex

%ADD Definitions

%$\Pol$ is not defined!!!

%BA and center free

%central relation

%$\sigma_1\vee\sigma_2$

%linked binary? relation

%congruences coming from $B\lll A$.

%$S/\sigma$ for a multi-ary relation

% If the reduction is BA, C or S then the corresponding congruence in 
% the notation is full

\section{Proof of the properties of strong subuniverses}
\label{SectionProofStrongSubalebras}

In the section we prove all the statements formulated in Section \ref{SectionStrongSubalgebras} and this is the most technical part of the paper.
We start with a few additional notations, then we formulate necessary properties of binary absorbing and central 
subuniverses that are mostly taken from 
\cite{zhuk2021strong}.
In Subsection \ref{SUBSECTIONIntersectionProperty} we show 
that intersection of strong subalgebras 
behaves well, which is one of the main properties of strong subalgebras and definitely 
the most difficult to prove.
In the next subsection we show the properties 
of a bridge connecting linear and PC congruences.
For instance, there we prove that 
Linear and PC congruences can never be connected by a bridge and bridges for the PC congruences 
are trivial.
In the next subsections we show that we should intersect strong subuniverses of the same type to obtain an empty set, and prove that 
factorization of strong subalgebras modulo a congruence behaves well.
Finally, in Subsection \ref{SUBSECTIONProofOfTheRemainingProperties} 
we prove most of the statements formulated in 
Section \ref{SUBSECTIONSTRONGSUBALGEBRASPROPERTIES}.

\subsection{Additional definitions}

In this section we call a relation \emph{symmetric} 
if any permutation of its variables gives the same relation.
For a relation $R\le 
\mathbf A_{1}\times\dots\mathbf A_{n}$ 
by $\LeftLinked(R)$ we 
denote 
the minimal equivalence relation on 
$\proj_{1}(R)$
such that 
$(a_1,a_2,\dots,a_n),(b_1,a_2,\dots,a_n)\in R$ implies 
$(a_1,b_1)\in \LeftLinked(R)$.
Similarly, $\RightLinked(R)$ is  
the minimal equivalence relation on 
$\proj_{n}(R)$
such that 
$(a_1,\dots,a_{n-1},a_n),(a_1,\dots,a_{n-1},b_n)\in R$ implies 
$(a_n,b_n)\in \RightLinked(R)$.

A relation $R\lneq A\times B$ is called \emph{central} if 
there exists $b\in B$ such that $A\times\{b\}\subseteq R$.

\subsection{Subuniverses of types $\TBA,\TC,\TS$}

Here we formulate some properties of strong subuniverse 
that we will use later.

% \begin{lem}\label{LEMBACenterSImpliesUniversal}
% Suppose $f\colon \mathbf A\to \mathbf A'$ is a surjective homomorphism, 
% $T\in\{\TBA,\TC,\TS\}$ then 
% \begin{enumerate}
%     \item[(f)] 
%     $B<_{T}A\Rightarrow f(B)\le_{T}A'$;
%     \item[(b)] 
%     $'B<_{T}A'\Rightarrow f^{-1}(B')<_{T}A$.
% \end{enumerate}
% \end{lem}

\begin{lem}[\cite{zhuk2021strong}, Lemma 6.25]\label{LEMBACenterSPossibleIntersections}
Suppose $B<_{T_1}A$, $C<_{T_2}A$, 
$B\cap C=\varnothing$, 
$T_{1},T_{2}\in \{\TBA,\TC,\TS\}$.
Then $T_{1}=T_{2} \in\{\TBA,\TC\}$.
\end{lem}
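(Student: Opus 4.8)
The plan is to isolate the one nontrivial ingredient and then finish with a short case analysis. The ingredient is the following fact of absorption theory, which I will call \textbf{Claim~A}: \emph{in a finite idempotent Taylor algebra, a binary absorbing subuniverse and a central subuniverse always intersect.} Claim~A is standard — a binary absorbing subuniverse is in particular J\'onsson‑absorbing (linear absorbing), a central subuniverse is a ternary absorbing subuniverse by Lemma~\ref{LEMCenterImpliesTernaryAbsorption}, and a J\'onsson‑absorbing subuniverse has nonempty intersection with every absorbing subuniverse, in particular with every central one (see \cite{barto2017absorption,zhuk2021strong}). One can also prove it directly: if $B$ absorbs $\mathbf A$ via a binary idempotent term $b$, $C$ is central, and $B\cap C=\varnothing$, pick $\beta\in B$ (so $\beta\notin C$) and consider $R=\Sg_{\mathbf A}\big((\{\beta\}\times C)\cup(C\times\{\beta\})\big)\le\mathbf A\times\mathbf A$; centrality gives $(\beta,\beta)\notin R$, while $b\big((\beta,\gamma),(\gamma,\beta)\big)\in R\cap(B\times B)$ for every $\gamma\in C$, and the binary absorption of $B$ together with an idempotent ternary term witnessing the centrality of $C$ can be used to propagate this pair back to the forbidden loop $(\beta,\beta)\in R$. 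This propagation is the real obstacle, and the details are those of \cite[Lemma~6.25]{zhuk2021strong}.

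Granting Claim~A, I would argue as follows. The statement is symmetric in the labelled pairs $(B,T_1)$ and $(C,T_2)$, so fix the order $\TBA<\TC<\TS$ and assume $T_1\le T_2$. Recall that $B<_{\TS}A$ means there is a subuniverse $D$ of $\mathbf A$ with $D\le B$ that is simultaneously binary absorbing and central; when $C<_{\TS}A$, denote by $E$ the corresponding subuniverse inside $C$. Suppose for contradiction that $B\cap C=\varnothing$ but the conclusion fails, i.e.\ $T_1\ne T_2$ or $T_1=T_2=\TS$. There are exactly four cases: if $T_1=\TBA$, $T_2=\TC$, set $(B_0,C_0)=(B,C)$; if $T_1=\TBA$, $T_2=\TS$, set $(B_0,C_0)=(B,E)$; if $T_1=\TC$, $T_2=\TS$, set $(B_0,C_0)=(E,B)$; and if $T_1=T_2=\TS$, set $(B_0,C_0)=(D,E)$. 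In every case $B_0\le B$ is a binary absorbing subuniverse of $\mathbf A$ and $C_0\le C$ is a central subuniverse of $\mathbf A$ (using that a subuniverse witnessing $<_{\TS}$ is of both kinds). By Claim~A we get $\varnothing\ne B_0\cap C_0\subseteq B\cap C$, contradicting the hypothesis. Hence the conclusion holds: $T_1=T_2\in\{\TBA,\TC\}$.

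The only place any genuine work is needed is Claim~A; the rest is the bookkeeping above. So I expect the main obstacle to be exactly the ``no loop'' propagation argument for Claim~A, namely balancing the binary absorption of $B$ against the centrality condition on $C$ to force $(\beta,\beta)$ into the subuniverse $R$.
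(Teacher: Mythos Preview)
The paper does not give its own proof of this lemma; it is simply quoted from \cite[Lemma~6.25]{zhuk2021strong}. So there is no in-paper argument to compare against, and your reduction to Claim~A together with the four-case bookkeeping is a correct way to organize the statement.

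Your case analysis is clean and complete: assuming $T_1\le T_2$ in the order $\TBA<\TC<\TS$, the four failing configurations are exactly the ones you list, and in each you correctly extract a BA subuniverse on one side and a central subuniverse on the other (using that the witness $D$ or $E$ for $\TS$ is simultaneously of both types), so Claim~A forces a nonempty intersection inside $B\cap C$. You are also right that Claim~A is where all the content lives; your direct-proof sketch of it is too vague to stand on its own (the ``propagation'' step is the whole argument), but since you explicitly defer to \cite{zhuk2021strong} for those details, this is an honest account of what is easy and what is not.
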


\begin{lem}[\cite{DecidingAbsorption}, Lemma 2.9, \cite{zhuk2021strong}, Lemma 6.1, Theorem 6.9]\label{LEMBACenterSImplyPPDefinition}%\cite{DecidingAbsorption}
Suppose $R\le A^{n}$ is defined by a pp-formula $\Phi$ containing a relation $S$
and $\Phi'$ is obtained by $\Phi$ by 
replacement 
of each appearance of $S$ by 
$S'<_{T} S$, where $T\in \{\TBA,\TC\}$.
Then $\Phi'$ defines a relation 
$R'$ such that 
$R'\le_{T}R$.
\end{lem}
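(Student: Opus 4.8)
The plan is to prove the two cases $T=\TBA$ and $T=\TC$, in each case reducing to a \emph{quantifier-free} pp-definition and then invoking the behaviour of strong subuniverses under surjective homomorphisms and preimages. First I would normalise $\Phi$: by renaming variables and adding fresh existentially quantified variables together with equality atoms, I may assume each of the $k$ occurrences of $S$ in $\Phi$ is of the form $S(y_{j,1},\dots,y_{j,a})$ with $a=\arity(S)$ and all the $y_{j,\ell}$ pairwise distinct; this changes neither $R$ nor (after the substitution of $S'$ for $S$) $R'$. Writing $W\le A^{N}$ for the relation defined by the conjunction of all atoms of $\Phi$ \emph{without} the existential quantifier and $W'\le A^{N}$ for the corresponding relation for $\Phi'$, we have $R=\proj_{[n]}(W)$, $R'=\proj_{[n]}(W')$, $R'\subseteq R$ (since $S'\subseteq S$), and the coordinate projection onto the first $n$ coordinates, restricted to $W$, is a surjective homomorphism onto $R$.

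The main step is to show $W'\dot\le_{T}W$; then Lemma~\ref{LEMPropagation}(fs) applied to that surjective homomorphism gives $R'=\proj_{[n]}(W')\dot\le_{T}\proj_{[n]}(W)=R$, which is $R'\le_{T}R$ when $R'\neq\varnothing$ (and $R'=\varnothing$ otherwise). To get $W'\dot\le_{T}W$, let $d\colon\mathbf A^{N}\to\mathbf A^{ak}$ be the coordinate projection onto the (now pairwise disjoint) blocks used by the occurrences of $S$, and let $F\le A^{N}$ be the subuniverse defined by the atoms of $\Phi$ \emph{not} involving $S$ (equalities and atoms of the other relations), which is identical in $\Phi$ and $\Phi'$. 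Then $W=d^{-1}(S^{k})\cap F$ and $W'=d^{-1}((S')^{k})\cap F$. Now $(S')^{k}<_{T}S^{k}$: for $T\in\{\TBA,\TC\}$ a repeated product of a single pair stays of the same type, witnessed by the same absorbing term, and in the $\TC$ case the non-absorption condition is checked block-wise using that $\Sg$ commutes with projections. By Lemma~\ref{LEMPropagation}(bt) applied to $d$ we get $d^{-1}((S')^{k})<_{T}d^{-1}(S^{k})$. Finally, intersecting with the fixed subuniverse $F$ preserves the type: if $B'<_{T}B$ with absorbing term $t$, then $t$ witnesses that $B'\cap F$ absorbs $B\cap F$ (as $F$ is a subuniverse), and for $T=\TC$ and any $\bar a\in(B\cap F)\setminus(B'\cap F)$ we have $\bar a\in B\setminus B'$ and
$\Sg_{(B\cap F)^{2}}\!\big((\{\bar a\}\times(B'\cap F))\cup((B'\cap F)\times\{\bar a\})\big)\subseteq\Sg_{B^{2}}\!\big((\{\bar a\}\times B')\cup(B'\times\{\bar a\})\big)$, which does not contain $(\bar a,\bar a)$.

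For $T=\TBA$ there is a shorter route bypassing $W$: if $t$ is binary with $t(S',S)\subseteq S'$ and $t(S,S')\subseteq S'$ (coordinatewise), then given $\bar a\in R$, $\bar b\in R'$, extend them to satisfying assignments $\bar a^{+}$ of $\Phi$ and $\bar b^{+}$ of $\Phi'$ and apply $t$ coordinatewise; the result satisfies $\Phi'$ (at each $S$-atom because $t(S',S),t(S,S')\subseteq S'$, at the remaining atoms because those relations are subuniverses), so $t(\bar b,\bar a),t(\bar a,\bar b)\in R'$, i.e.\ $R'<_{\TBA}R$ via $t$. The delicate point — and the reason the $T=\TC$ case genuinely needs the reduction through $W$ rather than this kind of witness substitution — is the non-absorption half of centrality: satisfying assignments of $\Phi'$ for the generators of $\Sg_{R^{2}}((\{\bar a\}\times R')\cup(R'\times\{\bar a\}))$ need not combine into a satisfying assignment producing $(\bar a,\bar a)$, so one is forced to argue with $W$ and use that centrality is preserved under preimages of surjective homomorphisms and under intersection with a fixed subuniverse. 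These are exactly the ingredients of Lemma~2.9 of \cite{DecidingAbsorption} and of Lemma~6.1 and Theorem~6.9 of \cite{zhuk2021strong}, from which the statement also follows directly.
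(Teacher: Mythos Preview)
The paper does not give its own proof of this lemma; it is stated with citations to \cite{DecidingAbsorption} and \cite{zhuk2021strong} and used as a black box. Your argument is correct and is essentially the standard one behind those references: lift to the quantifier-free relation $W$, show $W'\dot<_{T}W$ by taking the $k$-fold product $(S')^{k}<_{T}S^{k}$, pulling back along the coordinate projection, and intersecting with the fixed subuniverse $F$, then push forward along the surjective projection $W\to R$.

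Two small remarks on how your proof sits inside the paper. First, your appeals to Lemma~\ref{LEMPropagation}(bt) and (fs) are not circular, since for $T\in\{\TBA,\TC\}$ these are exactly Lemmas~\ref{LEMReverseHomomorphism} and~\ref{LEMBACenterSImplyFactor}, both of which are established (by citation) independently of the present lemma; citing those two directly would make the logical order cleaner, as Lemma~\ref{LEMPropagation} is only proved later in Section~\ref{SUBSECTIONProofOfTheRemainingProperties}. Second, you were right to prove the ``intersect with $F$'' step by hand rather than invoking Lemma~\ref{LEMBACenterImplyIntersection}, since in the paper that lemma is \emph{derived from} the present one, and citing it would be circular. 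Your handling of the possibly empty $R'$ is also appropriate; strictly speaking the conclusion is $R'\dot\le_{T}R$, and the paper's uses of the lemma all occur in contexts where $R'$ is known to be nonempty.
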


The above lemma implies an easier claim.

\begin{lem}\label{LEMBACenterImplyIntersection}
Suppose $B\le_{T} \mathbf A$ and $C\le \mathbf A$, where 
$T\in\{\TBA,\TC\}$.
Then $B\cap C \le_{T} C$.
\end{lem}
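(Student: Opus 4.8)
The plan is to obtain this as an immediate corollary of Lemma~\ref{LEMBACenterSImplyPPDefinition}, reading the hypothesis $B\le_{T}\mathbf A$ as an inclusion of unary relations over $\mathbf A$. First I would dispose of the degenerate case $B=A$: then $B\cap C=C\le_{T}C$ via the equality clause of $\le_{T}$. So assume $B<_{T}A$ with $T\in\{\TBA,\TC\}$, where $A$ here denotes the full unary relation on $\mathbf A$ (a subuniverse of $\mathbf A^{1}$, namely $\mathbf A$ itself). Consider the primitive positive formula $\Phi(x)\equiv A(x)\wedge C(x)$ over $\mathbf A$; it defines the unary relation $R=C$. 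Replacing the single occurrence of the relation $S=A$ by $S'=B<_{T}A$ turns $\Phi$ into $\Phi'(x)\equiv B(x)\wedge C(x)$, which defines $R'=B\cap C$. Now Lemma~\ref{LEMBACenterSImplyPPDefinition} yields $R'\le_{T}R$, i.e.\ $B\cap C\le_{T}C$ (with the usual convention that the conclusion is read as $B\cap C\dot\le_{T}C$ in the boundary case $B\cap C=\varnothing$, matching the dotted notation used elsewhere). There is essentially no obstacle to overcome: all the real content sits inside Lemma~\ref{LEMBACenterSImplyPPDefinition}.

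For a self-contained cross-check I would also sketch the direct argument. Since $\mathbf C\le\mathbf A$ is a subalgebra, every term operation of $\mathbf C$ is the restriction of the corresponding term operation of $\mathbf A$. Let $t$ be a term of arity $k$ witnessing $B\le_{T}\mathbf A$ ($k=2$ if $T=\TBA$). Then $t^{\mathbf C}=t^{\mathbf A}|_{C^{k}}\in\Clo(\mathbf C)$, and for every position of the distinguished argument $t^{\mathbf C}(B\cap C,\dots,B\cap C,C,B\cap C,\dots,B\cap C)\subseteq t^{\mathbf A}(B,\dots,B,A,B,\dots,B)\cap t^{\mathbf A}(C,\dots,C)\subseteq B\cap C$, because $C$ is closed under $t$. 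This settles $T=\TBA$ and shows $B\cap C$ absorbs $\mathbf C$ when $T=\TC$. For the remaining centrality condition when $T=\TC$: if $a\in C\setminus(B\cap C)$ then $a\notin B$, hence $a\in A\setminus B$ and $(a,a)\notin\Sg_{\mathbf A}((\{a\}\times B)\cup(B\times\{a\}))$ by centrality of $B$ in $\mathbf A$; since $\Sg_{\mathbf C}((\{a\}\times(B\cap C))\cup((B\cap C)\times\{a\}))\subseteq\Sg_{\mathbf A}((\{a\}\times B)\cup(B\times\{a\}))$, the same non-membership holds computed inside $\mathbf C$, so $B\cap C$ is central in $\mathbf C$. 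The only point needing any care in this route is precisely this transfer of the centrality condition.

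Finally I would record the bookkeeping that makes the statements above legitimate: $B\cap C$ is a subuniverse of $\mathbf A$ as an intersection of two subuniverses, hence a subuniverse of $\mathbf C$ whenever it is nonempty; and the three cases $B\cap C=C$, $\varnothing\neq B\cap C\subsetneq C$, and $B\cap C=\varnothing$ correspond respectively to the equality clause of $\le_{T}$, a genuine relation $<_{T}$, and the dotted/empty boundary case. In particular nothing new needs to be proved beyond invoking Lemma~\ref{LEMBACenterSImplyPPDefinition}.
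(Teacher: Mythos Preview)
Your proposal is correct and matches the paper's approach: the paper states this lemma immediately after Lemma~\ref{LEMBACenterSImplyPPDefinition} with the remark ``the above lemma implies an easier claim'' and gives no further proof, so deriving it via a pp-formula is exactly what is intended. Your additional direct verification is fine but unnecessary for the paper.
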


\begin{lem}\label{LEMStrongNonemptyIntersection}
Suppose $C\lll^{A} B\lll A$
and $D<_{\TBA,\TC} B$.
Then $C\cap D\neq \varnothing$.
\end{lem}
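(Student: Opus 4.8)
The plan is to follow a chain witnessing $C\lll^{A}B$ and carry along a single subuniverse of $D$ that stays simultaneously binary absorbing and central at every step of the chain.

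Write the chain as $C=B_{n}<_{T_{n}}^{A}B_{n-1}<_{T_{n-1}}^{A}\dots<_{T_{1}}^{A}B_{0}=B$, where each $T_{i}\in\{\TBA,\TC,\TS,\TD\}$, and let the $i$-th reduction be $B_{i}<_{T_{i}}^{A}B_{i-1}$. By induction on $i\in\{0,1,\dots,n\}$ I will produce a nonempty subuniverse $G_{i}$ of $\mathbf B_{i}$ with $G_{i}\subseteq D$, $G_{i}\le_{\TBA}B_{i}$, and $G_{i}\le_{\TC}B_{i}$. Then $\varnothing\neq G_{n}\subseteq D\cap B_{n}=D\cap C$, which is the claim. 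For $i=0$ take $G_{0}=D$; this is legitimate because $D<_{\TBA}B$ and $D<_{\TC}B$ by the hypothesis $D<_{\TBA,\TC}B$.

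For the inductive step assume $G_{i-1}$ is given and consider the reduction $B_{i}<_{T_{i}}^{A}B_{i-1}$. If $G_{i-1}=B_{i-1}$, then $B_{i}\le B_{i-1}=G_{i-1}\subseteq D$ and we may put $G_{i}=B_{i}$; so assume $G_{i-1}$ is a proper subuniverse of $B_{i-1}$, hence $G_{i-1}<_{\TBA}B_{i-1}$ and $G_{i-1}<_{\TC}B_{i-1}$. The key point is that $G_{i-1}\cap B_{i}\neq\varnothing$, which I verify by cases on $T_{i}$. If $T_{i}=\TBA$, then $G_{i-1}$ is central and $B_{i}$ is binary absorbing in $\mathbf B_{i-1}$, so they are not disjoint by Lemma~\ref{LEMBACenterSPossibleIntersections} (the types $\TC$ and $\TBA$ differ). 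If $T_{i}=\TC$, this is the symmetric statement, using that $G_{i-1}$ is binary absorbing. If $T_{i}=\TS$, there is $E$ that is binary absorbing and central in $\mathbf B_{i-1}$ with $E\le B_{i}$; then $G_{i-1}<_{\TBA}B_{i-1}$ and $E<_{\TC}B_{i-1}$ are not disjoint by Lemma~\ref{LEMBACenterSPossibleIntersections}, so $\varnothing\neq G_{i-1}\cap E\subseteq G_{i-1}\cap B_{i}$. Finally, if $T_{i}=\TD$, there is an irreducible congruence $\sigma$ with $B_{i-1}^{2}\subseteq\sigma^{*}$, $B_{i}=B_{i-1}\cap E$ for a block $E$ of $\sigma$, and $\mathbf B_{i-1}/\sigma$ binary absorbing and center free; the image of the binary absorbing subuniverse $G_{i-1}$ under the quotient map $\mathbf B_{i-1}\to\mathbf B_{i-1}/\sigma$ is a nonempty binary absorbing subuniverse of $\mathbf B_{i-1}/\sigma$ by Lemma~\ref{LEMBACenterImplies}, and since $\mathbf B_{i-1}/\sigma$ has no proper nonempty such subuniverse, this image equals $B_{i-1}/\sigma$; hence $G_{i-1}$ meets every $\sigma$-block meeting $B_{i-1}$, in particular $E$, so $G_{i-1}\cap B_{i}=G_{i-1}\cap E\neq\varnothing$.

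Given $G_{i-1}\cap B_{i}\neq\varnothing$, put $G_{i}:=G_{i-1}\cap B_{i}$. Then $G_{i}\subseteq G_{i-1}\subseteq D$, and Lemma~\ref{LEMBACenterImplyIntersection}, applied inside $\mathbf B_{i-1}$ to the subalgebra $B_{i}\le B_{i-1}$, gives $G_{i}=G_{i-1}\cap B_{i}\le_{\TBA}B_{i}$ and $G_{i}\le_{\TC}B_{i}$, which completes the induction. The main obstacle is the $\TD$ step: one needs that a binary absorbing subuniverse surjects onto a quotient that is binary-absorbing-and-center-free, and, more importantly, that carrying the binary absorbing property rather than merely centrality along the induction is unavoidable — centrality alone would break at a $\TC$-step, where two central subuniverses of a common algebra can be disjoint.
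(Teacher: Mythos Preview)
Your proof is correct and follows essentially the same approach as the paper: the paper argues by contradiction, locating the first step $C'<_{T}^{A}C''$ at which $C''\cap D\neq\varnothing$ but $C'\cap D=\varnothing$, and then uses Lemma~\ref{LEMBACenterSPossibleIntersections} to force $T=\TD$ and Lemma~\ref{LEMBACenterSImplyFactor} to contradict BA-freeness of $C''/\sigma$, which is exactly your forward-induction case analysis unwound. One small remark: in your $\TD$ step the natural reference is Lemma~\ref{LEMBACenterSImplyFactor} (quotients preserve $\TBA$) rather than Lemma~\ref{LEMBACenterImplies}.
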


\begin{proof}
Assume the converse.
Consider a minimal 
$C''$ such that 
$C\lll^{A} C'<_{T(\sigma)}^{A} C''\lll^{A} B$
and $C''\cap D\neq \varnothing$.
By Lemma \ref{LEMBACenterImplyIntersection}
$C''\cap D<_{\TBA,\TC} C''$, which 
by Lemma \ref{LEMBACenterSPossibleIntersections} implies that 
$T = \TD$.
By Lemma \ref{LEMBACenterSImplyFactor} 
we have 
$(C''\cap D)/\sigma<_{\TBA} C''/\sigma$, 
which contradicts the definition of a dividing subuniverse.
\end{proof}

\begin{lem}\label{LEMBACenterSImplyFactor}
Suppose $B\le_{T} A$
and $\sigma$ is a congruence on $A$, where 
$T\in\{\TBA,\TC,\TS\}$.
Then $B/\sigma \le_{T} A/\sigma$.
\end{lem}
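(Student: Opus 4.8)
The plan is to prove the statement separately for the three types $T\in\{\TBA,\TC,\TS\}$, disposing first of the trivial case $B=A$ (then $B/\sigma=A/\sigma$ and there is nothing to show), so from now on assume $B\lneq A$. \textbf{The case $T=\TBA$.} Let $t$ be a binary term with $t(B,A)\subseteq B$ and $t(A,B)\subseteq B$. Then $t^{\mathbf A/\sigma}$ is again a binary term operation, and for $\bar b\in B/\sigma$, $\bar a\in A/\sigma$, choosing a representative $b\in B$ of $\bar b$ and any representative $a$ of $\bar a$ gives $t^{\mathbf A/\sigma}(\bar b,\bar a)=t^{\mathbf A}(b,a)/\sigma\in B/\sigma$ and symmetrically $t^{\mathbf A/\sigma}(\bar a,\bar b)\in B/\sigma$. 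Hence $B/\sigma$ binary absorbs $A/\sigma$ via $t$; since $B/\sigma\neq\varnothing$ this gives $B/\sigma\le_{\TBA}A/\sigma$ whether or not $B/\sigma=A/\sigma$. The same computation with a general (not necessarily binary) absorbing term shows: whenever $C$ absorbs $\mathbf A$, the subuniverse $C/\sigma$ absorbs $\mathbf A/\sigma$.

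\textbf{The case $T=\TS$.} If $B<_{\TS}A$, fix $D\le A$ with $D<_{\TBA,\TC}A$ and $D\le B$. By the $\TBA$-case and the $\TC$-case (below) applied to $D$ we get $D/\sigma\le_{\TBA}A/\sigma$ and $D/\sigma\le_{\TC}A/\sigma$. If $D/\sigma=A/\sigma$ then $B/\sigma=A/\sigma$ and we are done; otherwise $D/\sigma<_{\TBA,\TC}A/\sigma$ with $\varnothing\neq D/\sigma\le B/\sigma\le A/\sigma$, so $B/\sigma<_{\TS}A/\sigma$ (unless $B/\sigma=A/\sigma$). In all cases $B/\sigma\le_{\TS}A/\sigma$.

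\textbf{The case $T=\TC$ (the main one).} Suppose $B<_{\TC}A$, so $B$ absorbs $\mathbf A$ and $(a,a)\notin\Sg_{\mathbf A}\bigl((\{a\}\times B)\cup(B\times\{a\})\bigr)$ for every $a\in A\setminus B$. By the remark at the end of the $\TBA$-case, $B/\sigma$ absorbs $A/\sigma$, and $B/\sigma\neq\varnothing$; it remains to verify the centrality condition for $B/\sigma$. Let $\bar a=a/\sigma\in A/\sigma\setminus B/\sigma$, which means exactly that the class $a/\sigma$ is disjoint from $B$, so $a\notin B$. Assume towards a contradiction that $(\bar a,\bar a)\in\Sg_{(\mathbf A/\sigma)^{2}}\bigl((\{\bar a\}\times B/\sigma)\cup(B/\sigma\times\{\bar a\})\bigr)$. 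The generating set here is the image, under the coordinatewise quotient homomorphism $g^{2}\colon\mathbf A^{2}\to(\mathbf A/\sigma)^{2}$, of $G:=(\{a\}\times B)\cup(B\times\{a\})$; since the formation of $\Sg$ commutes with surjective homomorphic images, we obtain a pair $(u,v)\in R:=\Sg_{\mathbf A^{2}}(G)$ with $g(u)=g(v)=\bar a$, i.e.\ $u,v\in a/\sigma$, and in particular $u,v\notin B$.

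The crux, and the step I expect to be the main obstacle, is to turn this $\sigma$-related pair $(u,v)\in R$ into a configuration forbidden by the centrality of $B$ — a naive lift produces only an element of a $\sigma$-class, not a diagonal element of a self-link relation. Here I would exploit that $R$ is symmetric (the generating set $G$ is swap-invariant, so $(v,u)\in R$), that every $\sigma$-class is a subuniverse of $\mathbf A$ by idempotence (so $R\cap(a/\sigma)^{2}\le\mathbf A^{2}$), and the interaction of the absorbing term of $B$ with the pairs $(a,b),(b,a)\in R$ for $b\in B$, in order to derive $(u',u')\in\Sg_{\mathbf A^{2}}\bigl((\{u'\}\times B)\cup(B\times\{u'\})\bigr)$ for a suitable $u'\in a/\sigma\setminus B$, contradicting the centrality of $B$. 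This collapsing-to-the-diagonal argument is the delicate point; it is precisely the reasoning used for the analogous factorization statements in \cite{zhuk2021strong}. Once the contradiction is reached, $B/\sigma$ is central in $A/\sigma$, hence $B/\sigma\le_{\TC}A/\sigma$, which together with the two previous cases completes the proof.
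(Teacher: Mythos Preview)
Your treatment of the $\TBA$ and $\TS$ cases is correct and essentially identical to the paper's approach (the paper says ``straightforward'' for $\TBA$ and ``combination of the results for $\TBA$ and $\TC$'' for $\TS$). For $T=\TC$ the paper, like you, simply defers to Lemma~6.8 of \cite{zhuk2021strong}; so at the level of overall strategy your proposal matches the paper.

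However, the scaffolding you add for the $\TC$ case contains a genuine gap that you should be aware of. Your setup is correct: from $(\bar a,\bar a)\in\Sg_{(\mathbf A/\sigma)^2}\bigl((\{\bar a\}\times B/\sigma)\cup(B/\sigma\times\{\bar a\})\bigr)$ you indeed obtain $(u,v)\in R:=\Sg_{\mathbf A^2}\bigl((\{a\}\times B)\cup(B\times\{a\})\bigr)$ with $u,v\in a/\sigma\setminus B$. But the contradiction you aim for, namely $(u',u')\in\Sg_{\mathbf A^2}\bigl((\{u'\}\times B)\cup(B\times\{u'\})\bigr)$ for some $u'\in a/\sigma\setminus B$, lives in a \emph{different} relation $R_{u'}$ than the one you control: you only have elements of $R=R_a$, and there is no mechanism among the ingredients you list (symmetry of $R$, $a/\sigma$ being a subuniverse, the absorbing term) that transfers tuples from $R_a$ into $R_{u'}$ for $u'\neq a$. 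Even producing a diagonal pair $(c,c)\in R_a$ would not contradict centrality unless $c=a$. The actual argument in \cite{zhuk2021strong} does not proceed via this particular ``collapsing'' route; so while your citation is appropriate, your description of what the cited argument does is not quite right.
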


\begin{proof}
For $T = \TBA$ it is straightforward, 
for $T=\TC$ see Lemma 6.8 in \cite{zhuk2021strong},
for $T = \TS$ it is just a combination of 
the results for $\TBA$ and $\TC$. 
\end{proof}

%\begin{lem}\label{LEMBACenterSImplyUnfactor}{LEMReverseHomomorphism}
%Suppose $B/\sigma\le_{T} A/\sigma$
%where $\sigma$ is a congruence on $A$ and 
%$T\in\{\TBA,\TC,\TS\}$.
%Then $B\circ\sigma\le_{T} A$.
%\end{lem}

\begin{lem}[\cite{DecidingAbsorption}, Proposition 2.14, \cite{zhuk2021strong}, Lemma 3.2]\label{LemAbsorptionImpliesEssential}
Suppose $B\le \mathbf A$, $n\ge 2$.
Then $B$ is an absorbing subuniverse with an operation of arity $n$ 
if and only if there does not exist 
$S\le A^{n}$ such that 
$S\cap B^{n} = \varnothing$ 
and $S\cap (B^{i-1}\times A\times B^{n-i})\neq \varnothing$
for every $i\in\{1,2,\dots,n\}$.
\end{lem}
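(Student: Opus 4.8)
The plan is to establish the two implications separately; the ``only if'' direction is a short diagonal argument, while the ``if'' direction is the substantive one.

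\textbf{From absorption to non-existence of $S$.} Suppose $B$ absorbs $\mathbf{A}$ with an $n$-ary term $t$, so $t(B^{i-1}\times A\times B^{n-i})\subseteq B$ for each $i\in[n]$, and suppose for contradiction that $S\le A^n$ has $S\cap B^n=\varnothing$ and $S\cap(B^{i-1}\times A\times B^{n-i})\neq\varnothing$ for all $i$. First I would pick, for each $i$, a tuple $s^{(i)}\in S\cap(B^{i-1}\times A\times B^{n-i})$, so that $s^{(i)}_j\in B$ whenever $j\neq i$. Applying $t$ coordinatewise to $s^{(1)},\dots,s^{(n)}$ gives an element of $S$ whose $j$-th coordinate is $t(s^{(1)}_j,\dots,s^{(n)}_j)$, a value of $t$ on a tuple all of whose arguments lie in $B$ except possibly the $j$-th; by the absorption property this value lies in $B$. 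Hence the resulting tuple lies in $S\cap B^n$, a contradiction.

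\textbf{From non-existence of $S$ to absorption.} I would argue the contrapositive: assuming no $n$-ary term operation of $\mathbf{A}$ absorbs $B$, produce a bad $S$. The first step is to rephrase absorption in terms of a single subalgebra. Put $X:=\bigcup_{i=1}^n\bigl(B^{i-1}\times A\times B^{n-i}\bigr)\subseteq A^n$; then $B$ is absorbed by an $n$-ary term iff some $n$-ary term operation $t$ of $\mathbf{A}$ satisfies $t(X)\subseteq B$, which is equivalent to the subalgebra $\mathbf{T}:=\Sg_{\mathbf{A}^X}(\pi_1,\dots,\pi_n)\le \mathbf{A}^X$ meeting $B^X$, where $\pi_1,\dots,\pi_n\in A^X$ are the $n$ coordinate projections $X\to A$ (an element of $\mathbf{T}$ is precisely the tuple $(t(x))_{x\in X}$ for an $n$-ary term $t$). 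Since the finite algebra $\mathbf{A}$ has only finitely many $n$-ary term operations, $\mathbf{T}$ is finite, so from $\mathbf{T}\cap B^X=\varnothing$ one extracts a finite $Y\subseteq X$ with $\Sg_{\mathbf{A}^Y}(\pi_1|_Y,\dots,\pi_n|_Y)\cap B^Y=\varnothing$, and $Y$ may be taken to meet each of the $n$ slices (the projection $\pi_m$ witnesses non-absorption only on slice $m$, so its witness lies there). The remaining and crucial step is to bring the number of generators down to exactly $n$, one per slice: one takes an inclusion-minimal subalgebra $S\le\mathbf{A}^n$ meeting all $n$ slices, observes that it is then generated by $n$ elements $g_1,\dots,g_n$ with $g_i$ in slice $i$, and shows that if $S\cap B^n\neq\varnothing$ then, amalgamating over all such choices of $(g_1,\dots,g_n)$ and using the finiteness of the set of $n$-ary terms, one can assemble a single $n$-ary absorbing term, contradicting the hypothesis; hence $S\cap B^n=\varnothing$ and $S$ is bad.

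\textbf{Main obstacle.} I expect the hard part to be exactly this last step: matching the arity $n$ of the target operation with the $n$ coordinates of $S$ and the $n$ slices. A naive witness construction only yields, for each choice of one point per slice, a term handling those particular $n$ points, and these partial terms must be combined into one term handling all of $X$. This amalgamation, together with the reduction to $n$ generators, is the technical core and is carried out in \cite{DecidingAbsorption} and \cite{zhuk2021strong}, so I would invoke it rather than reprove it here. The degenerate cases $B=A$ and $B=\varnothing$ are immediate, since then no $S$ can meet every slice while avoiding $B^n$, and $B$ is (trivially) absorbed.
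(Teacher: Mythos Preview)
The paper does not give its own proof of this lemma: it is stated with citations to \cite{DecidingAbsorption} and \cite{zhuk2021strong} and used as a black box. Your proposal therefore goes beyond what the paper does, so there is nothing in the paper to compare the ``if'' direction against beyond the external references you already invoke.

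Your ``only if'' argument is correct and is the standard diagonal trick.

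For the ``if'' direction, your overall framing is right (reformulate absorption as $\Sg_{\mathbf A^X}(\pi_1,\dots,\pi_n)$ meeting $B^X$, then descend to arity $n$), and you correctly identify the amalgamation as the crux. However, the paragraph where you ``take an inclusion-minimal subalgebra $S\le\mathbf A^n$ meeting all $n$ slices'' and then argue by contradiction is logically tangled as written: you have not yet tied this particular $S$ to the assumed failure of absorption, so the sentence reads as if you are proving that \emph{every} minimal such $S$ avoids $B^n$, which is false in general. The actual argument runs the other way around: from the assumption that every $S\le A^n$ meeting all slices also meets $B^n$, one must assemble a single term that works uniformly on all of $X$, and the finiteness/compactness step doing this is exactly the content of the cited propositions. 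Since you explicitly defer that step to the references (as the paper itself does for the whole lemma), this is acceptable, but you should rewrite the paragraph so that the quantifiers are in the right order and it is clear which direction of the contrapositive you are arguing.
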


\begin{lem}\label{LEMBACenterSOnPowerImplies}
    Suppose 
    $B<_{T} A^{n}$, where $T\in \{\TBA,\TC,\TS\}$.
Then there exists $C<_{T} A$.
\end{lem}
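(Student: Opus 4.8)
The plan is to prove the statement by induction on $n$, reducing the arity by one at each step using the two facts already established for subuniverses of types $\TBA$, $\TC$, $\TS$: that they survive passage to factors (Lemma \ref{LEMBACenterSImplyFactor}) and intersection with an arbitrary subuniverse (Lemma \ref{LEMBACenterImplyIntersection}). Since all algebras here are idempotent, every singleton $\{c\}$ is a subuniverse of $\mathbf A$, so $A^{n-1}\times\{c\}$ is a subuniverse of $\mathbf A^{n}$ isomorphic to $\mathbf A^{n-1}$; this is the auxiliary subuniverse I would cut $B$ against.

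For the base case $n=1$ one takes $C=B$. For the inductive step, let $\rho\colon\mathbf A^{n}\to\mathbf A$ be the projection onto the last coordinate. If $\rho(B)\neq A$, then $\rho(B)=B/\ker\rho$ and Lemma \ref{LEMBACenterSImplyFactor} (with $\mathbf A^{n}/\ker\rho\cong\mathbf A$) gives $\rho(B)\le_{T}\mathbf A$; being proper and nonempty, $C:=\rho(B)<_{T}\mathbf A$ and we are done. Otherwise $\rho(B)=A$. Since $B$ is a proper subuniverse, there is some $(a_{1},\dots,a_{n})\in A^{n}\setminus B$, and for $c:=a_{n}$ the fibre $B_{c}:=\{a\in A^{n-1}\mid (a,c)\in B\}$ is a proper subset of $A^{n-1}$ (it omits $(a_{1},\dots,a_{n-1})$), while it is nonempty because $c\in\rho(B)=A$. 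Identifying $A^{n-1}\times\{c\}$ with $\mathbf A^{n-1}$, we have $B_{c}=B\cap(A^{n-1}\times\{c\})$, so Lemma \ref{LEMBACenterImplyIntersection} yields $B_{c}\le_{T}\mathbf A^{n-1}$, hence $B_{c}<_{T}\mathbf A^{n-1}$, and the induction hypothesis produces the required $C<_{T}\mathbf A$.

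The only mild bookkeeping issue is the type $T=\TS$, for which Lemma \ref{LEMBACenterImplyIntersection} is not stated directly. Here I would run the whole induction not on $B$ but on a witnessing subuniverse $D\le B$ that is \emph{simultaneously} binary absorbing and central in $\mathbf A^{n}$; such $D$ exists by the definition of $<_{\TS}$ and is proper and nonempty because $D\subseteq B\subsetneq A^{n}$. Applying Lemmas \ref{LEMBACenterSImplyFactor} and \ref{LEMBACenterImplyIntersection} once for $\TBA$ and once for $\TC$ shows that "simultaneously binary absorbing and central" is preserved when passing to $\rho(D)$ or to a fibre $D\cap(A^{n-1}\times\{c\})$, so the induction goes through and ends with a proper nonempty subuniverse $C\le\mathbf A$ that is both binary absorbing and central, i.e. $C<_{\TS}\mathbf A$. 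I do not expect a real obstacle; the point to get right is the dichotomy "$\rho(B)$ proper versus $\rho(B)=A$", together with the observation that in the second case properness of $B$ forces some fibre to be proper while surjectivity of $\rho$ keeps that fibre nonempty.
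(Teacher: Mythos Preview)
Your proof is correct. The induction via ``project or slice by a fibre'' is clean, and your handling of $T=\TS$ by running the induction on a witnessing $D$ with $D<_{\TBA,\TC}\mathbf A^{n}$ is exactly the right fix.

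The paper itself does not give a proof here: for $T\in\{\TBA,\TC\}$ it cites Lemma~6.24 of \cite{zhuk2021strong}, and for $T=\TS$ it simply says to repeat that proof verbatim with $\TS$ in place of $\TBA$. Your argument is a self-contained substitute for that external reference; the approach you take (project onto one coordinate, and if the projection is full then intersect with a hyperplane $A^{n-1}\times\{c\}$ using idempotency) is the standard one and almost certainly coincides with the cited proof. One small remark: for the $\TS$ case you could alternatively observe that the witnessing $D$ already satisfies $D<_{\TBA}\mathbf A^{n}$, so the $\TBA$ case alone produces some $C<_{\TBA}\mathbf A$, and separately $D<_{\TC}\mathbf A^{n}$ produces some $C'<_{\TC}\mathbf A$; but to get a single $C$ that is simultaneously $\TBA$ and $\TC$ (as $<_{\TS}$ requires) you do need to track both properties through the same induction, exactly as you propose.
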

\begin{proof}
For $T\in\{\TBA,\TC\}$ see Lemma 6.24 
in \cite{zhuk2021strong}. 
For $T = \TS$ just repeat the same proof word to word 
replacing $\TBA$ by $\TS$.
\end{proof}

% \begin{lem}\label{LEMBACenterLinkedness}
% Suppose 
% $R\le A_1\times A_2$, 
% $B_1\le_{T_{1}}A_{1}$, 
% $B_{2}\le_{T_{2}}A_{2}$, 
% $\proj_{1}(R\cap (B_{1}\times B_{2})) = B_{1}$, 
% $T_{1},T_{2}\in\{\TBA,\TC\}$,
% $B_{1}^{2}\subseteq \LeftLinked(R)$.
% Then 
% $B_{1}^{2}\subseteq \LeftLinked(R\cap (B_{1}\times B_{2}))$
% \end{lem}

\begin{lem}[\cite{barto2012absorbing}, Proposition 2.15 (i)]\label{LEMBACenterLinkedness}
Suppose 
$R\le_{sd} A_1\times A_2$, 
$B_{1}$ and $B_{2}$ are absorbing subuniverses on $\mathbf A_1$ and 
$\mathbf A_{2}$, respectively, 
%$B_1\le_{T_{1}}A_{1}$, 
%$B_{2}\le_{T_{2}}A_{2}$,
$(R\cap (B_{1}\times B_{2}))\le_{sd} B_1\times B_2$,
$R$ is linked.
Then 
$(R\cap (B_{1}\times B_{2}))$ is linked.
\end{lem}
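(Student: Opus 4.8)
The plan is to prove that the relation $R' := R \cap (B_1 \times B_2)$ has full left-linking relation, $\LeftLinked(R') = (\proj_1 R')^2 = B_1^2$; since $R'$ is assumed subdirect in $B_1 \times B_2$, this is exactly the assertion that $R'$ is linked. So I would fix arbitrary $a, b \in B_1$ and aim to produce a walk from $a$ to $b$ in the bipartite graph of $R'$.

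First I would assemble three ingredients. (1) Since $B_1$ absorbs $\mathbf A_1$ and $B_2$ absorbs $\mathbf A_2$, the product $B_1 \times B_2$ absorbs $\mathbf A_1 \times \mathbf A_2$ (a standard property of absorption, cf.\ \cite{barto2017absorption}); after padding the shorter absorbing term with dummy variables we may fix a single $n$-ary term $t$ whose interpretation on $\mathbf A_1 \times \mathbf A_2$ witnesses this. Writing $t_1, t_2$ for its interpretations on $\mathbf A_1, \mathbf A_2$, we then have: $t_1(B_1,\dots,B_1,A_1,B_1,\dots,B_1) \subseteq B_1$ for every position of $A_1$ (and likewise for $t_2$, $B_2$, $A_2$); the operation $(t_1,t_2)$ is a term operation of $\mathbf A_1 \times \mathbf A_2$, so $R$ is closed under it; and $t_1, t_2$ are idempotent. (2) Since $R$ is linked, there is a walk $a = v_0, z_1, v_1, z_2, \dots, z_m, v_m = b$ in the bipartite graph of $R$, i.e.\ $(v_{j-1}, z_j), (v_j, z_j) \in R$ for $j = 1,\dots,m$. (3) Since $R'$ is subdirect in $B_1 \times B_2$, there are $e_a, e_b \in B_2$ with $(a, e_a), (b, e_b) \in R'$.

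The key step is to transport the $R$-walk into $R'$ one argument of $t$ at a time. For $i = 1, \dots, n$, I feed the successive pairs of the walk into the $(n-i+1)$-st argument of $t$ while holding the other arguments fixed at $(a,e_a)$ in positions $1,\dots,n-i$ and at $(b,e_b)$ in positions $n-i+2,\dots,n$. Exactly one of the $n$ arguments is then allowed to leave $B_1 \times B_2$, so absorption forces every resulting pair into $B_1 \times B_2$ and hence into $R' = R \cap (B_1 \times B_2)$; moreover consecutive pairs of the walk share a coordinate, so their images again form a walk, now inside $R'$. By idempotency of $t_1$, the walk obtained at step $i$ runs (on the $\proj_1$-side) from $t_1(\underbrace{a,\dots,a}_{n-i+1},\underbrace{b,\dots,b}_{i-1})$ to $t_1(\underbrace{a,\dots,a}_{n-i},\underbrace{b,\dots,b}_{i})$. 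The endpoint of step $i$ coincides with the starting point of step $i+1$, so concatenating steps $1$ through $n$ yields a single walk in $R'$ from $t_1(a,\dots,a) = a$ to $t_1(b,\dots,b) = b$. Thus $(a,b) \in \LeftLinked(R')$; as $a,b$ were arbitrary, $\LeftLinked(R') = B_1^2$ and $R'$ is linked.

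The only delicate point is the bookkeeping in the transport step: one must check that throughout step $i$ precisely one argument of $t$ ranges over $R$ while the other $n-1$ stay inside $R'$ (this is what makes the absorption inequality applicable), and that the ``sweep'' direction of the distinguished coordinate, from position $n$ down to position $1$, together with the choice of the constant tuples $(a,e_a)$ and $(b,e_b)$, makes the $n$ partial walks chain up end-to-end. Everything else is routine, and the argument is insensitive to whether the absorbing operations are binary, ternary, or of higher arity.
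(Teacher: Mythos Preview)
Your argument is correct and is essentially the standard Barto--Kozik proof that the paper cites without reproducing. One small imprecision: ``padding the shorter absorbing term with dummy variables'' does not by itself yield a \emph{single} term $t$ whose interpretation absorbs on both factors---you get two different terms of the same arity. The correct (and equally standard) construction is to compose: if $s_1$ of arity $m$ witnesses $B_1\trianglelefteq\mathbf A_1$ and $s_2$ of arity $k$ witnesses $B_2\trianglelefteq\mathbf A_2$, take $t$ of arity $mk$ given by $t(\vec x)=s_1\bigl(s_2(x_1,\dots,x_k),\dots,s_2(x_{(m-1)k+1},\dots,x_{mk})\bigr)$; then $t^{\mathbf A_1}$ absorbs $B_1$ (via the outer $s_1$) and $t^{\mathbf A_2}$ absorbs $B_2$ (via the inner $s_2$'s). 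Since you already invoke this as ``a standard property of absorption'' with a citation, the gap is cosmetic, and the transport-and-concatenate step is carried out correctly.
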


\begin{lem}[\cite{zhuk2021strong}, Theorem 6.15] \label{LEMCentralRelationImplies}
Suppose 
$\mathbf R\le_{sd} \mathbf A\times \mathbf B$,
$C = \{c\in A\mid \forall b\in B\colon (c,b)\in R\}$.
Then one of the following conditions holds:
\begin{enumerate}
    \item $C$ is a central subuniverse of $\mathbf A$;
    \item $\mathbf B$ has a nontrivial binary absorbing subuniverse.
\end{enumerate}
\end{lem}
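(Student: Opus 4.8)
The plan is to make two quick reductions and then rely on the absorption machinery assembled earlier; the statement is \cite[Theorem~6.15]{zhuk2021strong}, so in the write-up one may cite it directly, but here is the route I would take.

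\textbf{Preliminaries and reductions.} First, $C$ is a subuniverse of $\mathbf A$: for any term operation $t$, if $c_1,\dots,c_m\in C$ and $b\in B$, then $(c_i,b)\in R$ for all $i$, hence $(t(c_1,\dots,c_m),b)=(t(c_1,\dots,c_m),t(b,\dots,b))\in R$ by idempotency, so $t(c_1,\dots,c_m)\in C$. If $C=\varnothing$, the defining condition of a central subuniverse is vacuously true, so $C$ is central and we are done. If $C\neq\varnothing$, fix $c\in C$; since $\{c\}\times B\subseteq R$, every element of $B$ is $R$-adjacent to $c$, so the bipartite graph of $R$ is connected, i.e.\ $R$ is linked. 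Hence we may assume $R$ is linked, $C\neq\varnothing$, and, for contradiction, that $\mathbf B$ has no nontrivial binary absorbing subuniverse; the goal is then to prove $C$ is central.

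\textbf{Step 1: $C$ absorbs $\mathbf A$.} Suppose not. By Lemma~\ref{LemAbsorptionImpliesEssential}, for each $n\ge 2$ there is $S\le\mathbf A^n$ with $S\cap C^n=\varnothing$ and $S\cap(C^{i-1}\times A\times C^{n-i})\neq\varnothing$ for all $i$. Fix face witnesses $(a_1^{(i)},\dots,a_n^{(i)})\in S$ with $a_j^{(i)}\in C$ for $j\neq i$, so $a_i^{(i)}\notin C$ and $R_{a_i^{(i)}}:=\{b\mid(a_i^{(i)},b)\in R\}\subsetneq B$. Transport $S$ through $R$: let $S'=\{(b_1,\dots,b_n)\mid\exists(x_1,\dots,x_n)\in S\ \forall i\,(x_i,b_i)\in R\}\le\mathbf B^n$. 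Using $a_j^{(i)}\in C$ one gets $B^{i-1}\times R_{a_i^{(i)}}\times B^{n-i}\subseteq S'$ for every $i$, while transporting $S\cap C^n=\varnothing$ forces $S'$ to omit a suitable ``total'' region; combined with the linkedness of $R$, Lemma~\ref{LEMBACenterLinkedness} (linkedness of restrictions) and Lemma~\ref{LEMBACenterSImplyPPDefinition} (absorption under the pp-definitions involved), this should produce a nontrivial binary absorbing subuniverse of $\mathbf B$, contradicting the assumption. Extracting that binary absorbing subuniverse from the transported configuration is the main obstacle of the whole proof.

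\textbf{Step 2: the centrality condition.} With $C$ now absorbing $\mathbf A$, suppose towards a contradiction that $(a,a)\in\Sg_{\mathbf A}\big((\{a\}\times C)\cup(C\times\{a\})\big)$ for some $a\in A\setminus C$, witnessed by a term $t$ applied to generators of the form $(a,c)$ or $(c,a)$ with $c\in C$. Choosing for each generator a $B$-coordinate in $R_a=\{b\mid(a,b)\in R\}\subsetneq B$ (possible because the $C$-coordinate of a generator is $R$-adjacent to all of $B$ and the $a$-coordinate to $R_a$), and combining $t$ with the absorbing term for $C$ from Step~1, one shows that $R_a$ is in fact a binary absorbing subuniverse of $\mathbf B$ --- contradiction. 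Hence no such $a$ exists, so $C$ is central. Both transport steps are an exercise in the absorption calculus of \cite{zhuk2021strong}; the single delicate point, common to both, is converting an essential configuration on the $\mathbf A$-side into a genuine binary absorbing subuniverse on the $\mathbf B$-side, and everything else is bookkeeping, so in practice I would simply cite Theorem~6.15 there.
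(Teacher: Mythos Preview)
The paper does not prove this lemma at all; it is stated with a bare citation to \cite{zhuk2021strong}, Theorem~6.15. Your proposal, which concludes ``in practice I would simply cite Theorem~6.15 there,'' therefore matches the paper's treatment exactly.

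That said, the sketch you offer in lieu of a citation has a real gap. In Step~1 you transport the essential relation $S\le\mathbf A^n$ to $S'\le\mathbf B^n$ via $R$, and correctly observe that $B^{i-1}\times R_{a_i^{(i)}}\times B^{n-i}\subseteq S'$ for each $i$. But the claim that ``transporting $S\cap C^n=\varnothing$ forces $S'$ to omit a suitable `total' region'' is unjustified: a tuple $(b_1,\dots,b_n)\in B^n$ lies in $S'$ whenever \emph{some} $(x_1,\dots,x_n)\in S$ satisfies $(x_i,b_i)\in R$ for all $i$, and the emptiness of $S\cap C^n$ places no direct constraint on which $(b_1,\dots,b_n)$ admit such a witness. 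So nothing you wrote prevents $S'=B^n$, in which case no essential relation on $\mathbf B$ --- and hence no binary absorbing subuniverse --- is produced. You yourself flag this as ``the main obstacle of the whole proof,'' which is accurate; the actual argument in \cite{zhuk2021strong} is more delicate than a single transport step. Step~2 has a similar issue: the assertion that $R_a$ becomes a binary absorbing subuniverse of $\mathbf B$ is plausible in spirit but the combination of the centrality-violating term with the absorbing term is not spelled out. Since the paper does not attempt a proof here either, citing the source directly is the correct choice.
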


\begin{lem}[\cite{ZebsNotes}, Theorem 3.11.1]\label{LEMLinkedImpliesBACenter}
Suppose 
$R\lneq_{sd} \mathbf A\times \mathbf B$,
$R$ is linked.
Then there exists a BA or central subuniverse on $\mathbf A$ or $\mathbf B$.
\end{lem}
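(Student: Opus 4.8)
The plan is to recognise this as the Absorption Theorem of Barto and Kozik \cite{barto2012absorbing}, in the sharpened form asserting that the absorbing subuniverse can be chosen binary absorbing or central, and to prove it by induction on $|A|+|B|$. When $|A|=1$ or $|B|=1$ a subdirect $R$ is already all of $A\times B$, so the hypothesis is vacuous; hence assume $|A|,|B|\ge 2$. If one of $\mathbf A$, $\mathbf B$ already has a proper nonempty binary absorbing or central subuniverse we are done, so we may assume both are \emph{BA and center free} and aim for a contradiction. It then suffices to establish two facts: (i) a proper linked subdirect $R\le_{sd}\mathbf A\times\mathbf B$ forces a proper nonempty absorbing subuniverse on $\mathbf A$ or $\mathbf B$; and (ii) any proper nonempty absorbing subuniverse of a finite Taylor algebra yields a proper nonempty binary absorbing or central subuniverse. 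Together (i) and (ii) contradict ``BA and center free''.

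Step (ii) is short with the tools at hand. Suppose $B$ absorbs $\mathbf A$ via an $n$-ary idempotent term $t$ with $\varnothing\ne B\subsetneq A$; then $n\ge 2$, and $n=2$ is already the conclusion. For $n\ge 3$ put $Q=\{(a,(a_2,\dots,a_n))\in A\times A^{n-1}\mid t(a,a_2,\dots,a_n)\in B\}$. Choosing $a_2,\dots,a_n$ (or $a$) inside $B$ shows $Q\le_{sd}\mathbf A\times\mathbf A^{n-1}$; and the ``central set'' of $Q$ is $C=\{a\mid t(\{a\}\times A^{n-1})\subseteq B\}$, which contains $B$ and is proper because $t$ is idempotent. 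By Lemma~\ref{LEMCentralRelationImplies} applied to $Q$, either $C$ is a central subuniverse of $\mathbf A$, which is proper and nonempty, or $\mathbf A^{n-1}$ has a nontrivial binary absorbing subuniverse and then so does $\mathbf A$ by Lemma~\ref{LEMBACenterSOnPowerImplies}.

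For step (i) I would run the standard inductive argument. The sets $C_A=\{a\mid\{a\}\times B\subseteq R\}$ and $C_B=\{b\mid A\times\{b\}\subseteq R\}$ are proper subuniverses; if $C_A\ne\varnothing$ then Lemma~\ref{LEMCentralRelationImplies} applied to $R$ makes $C_A$ a central subuniverse of $\mathbf A$ or gives a binary absorbing subuniverse of $\mathbf B$ (and symmetrically for $C_B$), so assume $C_A=C_B=\varnothing$. If $\mathbf A$ has a nontrivial congruence $\alpha$, push $R$ forward to $R'\le_{sd}(\mathbf A/\alpha)\times\mathbf B$: if $R'$ is still proper, the inductive hypothesis gives a proper nonempty absorbing subuniverse on $\mathbf A/\alpha$ or $\mathbf B$, and in the first case its preimage under the quotient map absorbs $\mathbf A$ and is still proper; if $R'$ is full, restrict $R$ to a single $\alpha$-block, pass to the linked components of the restriction, and recurse on a strictly smaller product, using the pp-closure properties of absorbing subuniverses (Lemmas~\ref{LEMBACenterSImplyPPDefinition}, \ref{LEMBACenterLinkedness}) to keep track of linkedness and subdirectness. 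Do the same on the $\mathbf B$-side. This reduces everything to the core case in which $\mathbf A$ and $\mathbf B$ are simple, $C_A=C_B=\varnothing$, and $R$ is linked, subdirect and proper.

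The core case is the main obstacle, and it is precisely the combinatorial heart of the Barto--Kozik proof. Here one works with the compatible tolerance $R\circ R^{-1}$ on $\mathbf A$ --- reflexive because $R$ is subdirect, and with transitive closure $A^2$ because $R$ is linked --- together with the pp-definable ``path'' relations obtained from walks in the bipartite graph of $R$ (a Prague-instance / $(p,q)$-term analysis), and one extracts an idempotent term witnessing that some proper nonempty subuniverse absorbs $\mathbf A$ or $\mathbf B$, contradicting simplicity together with ``BA and center free''. The genuinely technical points are the term construction itself and the bookkeeping guaranteeing that the extracted subuniverse is simultaneously proper and nonempty; everything else above is routine given the lemmas on binary absorbing, central and dividing subuniverses already proved in this section.
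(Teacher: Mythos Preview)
First, note that the paper does not actually prove this lemma: it is quoted from \cite{ZebsNotes} (Theorem~3.11.1 there) with no proof given. So there is nothing in the paper to compare your argument against; what follows is an assessment of your proposal on its own merits.

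Your step~(ii) contains a concrete error. You claim that for $a\in B$ and arbitrary $(a_2,\dots,a_n)\in A^{n-1}$ one has $t(a,a_2,\dots,a_n)\in B$, hence $B\subseteq C$ and $Q$ is subdirect onto $A\times A^{n-1}$. But absorption only guarantees $t(a,a_2,\dots,a_n)\in B$ when \emph{at most one} argument lies outside $B$; with $a\in B$ and $a_2,\dots,a_n$ arbitrary there may be up to $n-1$ arguments outside $B$, so nothing forces the value into $B$. Thus $\proj_2(Q)$ need not equal $A^{n-1}$, $C$ need not contain $B$, and your application of Lemma~\ref{LEMCentralRelationImplies} collapses. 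Even restricting to $B':=\proj_2(Q)$ does not rescue the inclusion $B\subseteq C$, for the same reason.

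More fundamentally, the decomposition ``(i) linked $\Rightarrow$ absorbing'' followed by ``(ii) absorbing $\Rightarrow$ BA or central'' is not the route taken in the literature, and (ii) is not a standard standalone result for arbitrary finite Taylor algebras. The proof in \cite{ZebsNotes} proceeds more directly: rather than first producing an arbitrary absorbing subuniverse and then upgrading it, one works with the linked relation itself and the pp-definable relations it generates to produce a subuniverse that is \emph{already} binary absorbing or central. Your sketch of step~(i) also defers the entire technical content (``the combinatorial heart of the Barto--Kozik proof'') to a citation, so even granting step~(i), the argument is incomplete.
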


\begin{lem}[\cite{zhuk2020proof}, Lemma 7.2]\label{LEMAbsorbingEquality}
Suppose 
$0_{\mathbf A}\subseteq \sigma\le {\mathbf A}^{2}$
and 
$\omega<_{BA}\sigma$.
Then $\omega\cap 0_{\mathbf A} \neq \varnothing$.
\end{lem}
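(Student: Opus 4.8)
The plan is to construct an explicit diagonal pair $(a,a)$ lying in $\omega$ by a fixed-point argument; only idempotency and finiteness of $\mathbf A$ are needed (no WNU). By definition of $<_{\TBA}$, the set $\omega$ is a nonempty proper subuniverse of $\sigma$ and there is a binary term $t$ of $\mathbf A$ whose coordinatewise action on $\sigma\le\mathbf A^{2}$ satisfies $t(\omega,\sigma)\subseteq\omega$ and $t(\sigma,\omega)\subseteq\omega$; here $t\bigl((a_{1},b_{1}),(a_{2},b_{2})\bigr)=\bigl(t^{\mathbf A}(a_{1},a_{2}),t^{\mathbf A}(b_{1},b_{2})\bigr)$ and $t^{\mathbf A}(a,a)=a$ by idempotency.

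First I would normalise the witnessing term. For $c\in A$ put $g_{c}(a)=t^{\mathbf A}(a,c)$, and let $s$ be the binary term obtained from $t$ by iterating it in its first variable, chosen so that $s^{\mathbf A}(a,c)=g_{c}^{\,|A|!}(a)$. Since any self-map of the $|A|$-element set $A$ becomes idempotent after $|A|!$ iterations, we get $s^{\mathbf A}\bigl(s^{\mathbf A}(a,c),c\bigr)=s^{\mathbf A}(a,c)$ for all $a,c$. Moreover $s$ still witnesses $\omega<_{\TBA}\sigma$: the inclusion $s(\omega,\sigma)\subseteq\omega$ follows by applying $t(\omega,\sigma)\subseteq\omega$ at every nested step, while $s(\sigma,\omega)\subseteq\omega$ follows by applying $t(\sigma,\omega)\subseteq\omega$ at the innermost step and then $t(\omega,\omega)\subseteq\omega$ (valid since $\omega$ is a subuniverse) at all remaining steps.

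Next, pick any $(p,q)\in\omega$ and set $p'=s^{\mathbf A}(p,q)$. Since $(q,q)\in 0_{\mathbf A}\subseteq\sigma$ and $s^{\mathbf A}(q,q)=q$, we obtain $s\bigl((p,q),(q,q)\bigr)=(p',q)\in\omega$ from $s(\omega,\sigma)\subseteq\omega$. Now $(p',p')\in 0_{\mathbf A}\subseteq\sigma$, and $s^{\mathbf A}(p',q)=s^{\mathbf A}\bigl(s^{\mathbf A}(p,q),q\bigr)=s^{\mathbf A}(p,q)=p'$ by the normalisation, so $s\bigl((p',p'),(p',q)\bigr)=(p',p')$; applying $s(\sigma,\omega)\subseteq\omega$ to $(p',p')\in\sigma$ and $(p',q)\in\omega$ gives $(p',p')\in\omega$. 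Hence $(p',p')\in\omega\cap 0_{\mathbf A}$, so this intersection is nonempty.

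The only point requiring care is the normalisation step: one must check that $s$ retains \emph{both} absorption inclusions (the $s(\sigma,\omega)\subseteq\omega$ direction genuinely uses that $\omega$ is a subuniverse, not merely the weaker $t(\sigma,\omega)\subseteq\omega$) and that $|A|!$ iterations really make $a\mapsto s^{\mathbf A}(a,c)$ idempotent; everything else is a direct computation. As an alternative one could first invoke Lemma~\ref{LEMBACenterImplyIntersection} with ambient algebra $\sigma$ to get $\omega\cap 0_{\mathbf A}\le_{\TBA}0_{\mathbf A}$ and then argue nonemptiness separately, but the construction above already exhibits the required witness.
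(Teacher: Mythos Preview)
Your proof is correct. The paper does not give its own proof of this lemma; it simply cites \cite{zhuk2020proof}, Lemma~7.2, and your direct fixed-point construction via the normalised binary absorbing term is essentially the standard argument used there.
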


The following Lemma can be derived from 
Lemma 3.11.2 and 3.11.3 from \cite{ZebsNotes}, but we will give a 
separate proof.

\begin{lem}\label{LEMBAConLeftOrCenterOnRight}
Suppose $R\le_{sd} A\times B$, 
%$\proj_{1}(R) = A$, 
$A$ is BA and center free,
$\LeftLinked(R)= A^{2}$, and 
$C = \{c\in B\mid A\times\{c\}\subseteq R\}$.
Then $C\neq\varnothing$
and $C\le_{\TBA,\TC}B$.
\end{lem}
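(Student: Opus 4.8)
The statement to prove is Lemma~\ref{LEMBAConLeftOrCenterOnRight}: if $R\le_{sd} A\times B$, the algebra $A$ is BA and center free, $\LeftLinked(R)=A^2$, and $C=\{c\in B\mid A\times\{c\}\subseteq R\}$, then $C\neq\varnothing$ and $C\le_{\TBA,\TC} B$. The natural approach is to first show $C\ne\varnothing$ via Lemma~\ref{LEMCentralRelationImplies} (the ``central relation'' lemma), and then identify a binary absorbing or central subuniverse of $B$ contained in $C$ by exploiting the linkedness hypothesis on the left.

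First I would handle the emptiness. Apply Lemma~\ref{LEMCentralRelationImplies} to $\mathbf R\le_{sd}\mathbf A\times\mathbf B$: either $C$ is a central subuniverse of $\mathbf A$, or $\mathbf B$ has a nontrivial binary absorbing subuniverse. The first alternative would make $C$ a proper central subuniverse of $A$ (it cannot be all of $A$ unless $R$ is full, a degenerate case one disposes of separately) contradicting ``$A$ center free''; so from the dichotomy we learn $\mathbf B$ has a nontrivial BA subuniverse $B_0$. That alone does not give $C\ne\varnothing$, so I would instead argue more carefully: consider the relation $R'=R\cap(A\times B_0)$. Since $A$ is BA and center free, $\proj_A(R')$ is a binary absorbing subuniverse of $A$ (by Lemma~\ref{LEMBACenterImplies} applied to the projection), hence equals $A$; so $R'$ is still subdirect. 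Iterating this with successive BA subuniverses of the shrinking right coordinate — which is legitimate because at each stage the left projection stays all of $A$ — we reduce to the case where $B$ has no proper nonempty BA subuniverse. In that ``BA-free on the right'' situation, Lemma~\ref{LEMCentralRelationImplies} forces $C$ to be central in $\mathbf A$, but $A$ is center free, so $C=A$, i.e.\ $R$ is full on that stage; tracing back, $C$ for the original $R$ is exactly the set of right-values that survived all reductions, which is nonempty. (I expect this part to need a little care about whether $C$ can be improper and about the base case of the iteration, but it is essentially bookkeeping.)

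Next, the harder part: showing $C\le_{\TBA,\TC} B$, i.e.\ $C$ is \emph{simultaneously} binary-absorbing and central in $B$. The key tool is $\LeftLinked(R)=A^2$: this says any two elements of $A$ are connected through a single common right-neighbour chain of length one after quotienting, which should let me transfer absorption/centrality properties from $A$ to $C$. Concretely, to show $C$ is BA in $B$ I want a binary term $t$ with $t(C,B)\cup t(B,C)\subseteq C$; I would build it from the binary absorbing term witnessing that $A$ has no proper BA subuniverse only indirectly — rather, use Lemma~\ref{LemAbsorptionImpliesEssential}: $C$ fails to be BA in $B$ iff there is $S\le B^2$ with $S\cap C^2=\varnothing$ but $S$ meeting $C\times B$ and $B\times C$. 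Pulling such $S$ back along $R$ (forming $\{(a_1,a_2)\mid \exists (b_1,b_2)\in S,\ (a_i,b_i)\in R\}$) and using $\LeftLinked(R)=A^2$ together with $A$ being BA should produce a forbidden configuration in $A$, a contradiction. For centrality I would combine this BA conclusion with Lemma~\ref{LEMCentralRelationImplies} once more (now $\mathbf B$ restricted appropriately has no proper BA subuniverse, forcing the ``central'' branch), plus the defining property of central subuniverses — that $(c,c)\notin\Sg((\{c\}\times C)\cup(C\times\{c\}))$ for $c\notin C$ — again transported through $R$ using left-linkedness.

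\textbf{Main obstacle.} The delicate point is the simultaneous BA-and-central conclusion and the transfer of the centrality \emph{non-generation} condition across the relation $R$: left-linkedness gives connectivity of the left coordinate but one must check it is strong enough to pull the subalgebra-generation obstruction back from $B$ to $A$ without losing the $A^2$-linkedness along restricted subrelations. I would lean on Lemma~\ref{LEMBACenterLinkedness} (linkedness is preserved under intersecting with absorbing subuniverses) and Lemma~\ref{LEMCentralRelationImplies} as the two workhorses, and on Lemma~\ref{LemAbsorptionImpliesEssential} to phrase everything in terms of forbidden relations $S$, which is the cleanest way to derive the final contradiction inside the BA-and-center-free algebra $A$.
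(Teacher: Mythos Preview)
Your plan has two genuine gaps.

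For non-emptiness, the iterative BA-shrinking never uses $\LeftLinked(R)=A^2$ where it matters. At the terminal stage you apply Lemma~\ref{LEMCentralRelationImplies} and write ``$C$ is central in $\mathbf A$, so $C=A$''; but the set produced there is the \emph{left} center $\{a\in A:\{a\}\times B'\subseteq R\}$, not the $C$ of the statement, and ``center free'' only rules out \emph{proper nonempty} central subuniverses --- the empty set is not excluded, and you never rule it out. The paper argues differently: set $W_n=\{(a_1,\dots,a_n)\mid \exists b\,\forall i\,(a_i,b)\in R\}$; if $W_{|A|}\neq A^{|A|}$, take the least $n\ge 2$ with $W_n\neq A^n$, observe $\LeftLinked(W_n)=A^2$ (here the hypothesis enters), view $W_n$ as a proper linked subdirect relation in $A\times A^{n-1}$, and apply Lemmas~\ref{LEMLinkedImpliesBACenter} and~\ref{LEMBACenterSOnPowerImplies} to contradict BA/center-freeness of $A$.

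For the BA part, your instinct to invoke Lemma~\ref{LemAbsorptionImpliesEssential} is correct, but the binary pullback $\{(a_1,a_2)\mid\exists(b_1,b_2)\in S,\,(a_i,b_i)\in R\}$ is typically all of $A^2$ (whenever some $b_i\in C$, that coordinate becomes full), so no contradiction follows. The paper's trick is to first get centrality of $C$ in $B$ --- which is a one-liner: apply Lemma~\ref{LEMCentralRelationImplies} to $R^{-1}\le_{sd}B\times A$ and use that $A$ is BA-free --- and then exploit it. Given the witnessing $S\le B^2$, set
\[
W=\{(a_1,\dots,a_{|A|})\mid \exists(b,c)\in S:\ c\in C,\ \forall i\,(a_i,b)\in R\}.
\]
Dropping the constraint $c\in C$ yields $A^{|A|}$ (pick $b\in C$ coming from $S\cap(C\times B)\neq\varnothing$), so by Lemma~\ref{LEMBACenterSImplyPPDefinition} we get $W\le_{\TC}A^{|A|}$; it is proper because the tuple enumerating all of $A$ would force $b\in C$ and hence $(b,c)\in S\cap C^2=\varnothing$, and it is nonempty using $S\cap(B\times C)\neq\varnothing$. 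Then Lemma~\ref{LEMBACenterSOnPowerImplies} gives a central subuniverse of $A$, contradiction. Your proposed route to centrality via transporting the non-generation condition through $R$ is unnecessary.
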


\begin{proof}
First, let us show that $C\neq\varnothing$. 
For every $n\ge 2$ put
$W_{n} = 
\{(a_1,\dots,a_{n})\mid \exists b \forall i \colon R(a_{i},b)\}$.
If $W_{|A|}=A^{|A|}$, then $C\neq\varnothing$.
Otherwise, choose the minimal $n\ge 2$ such that $W_{n}\neq A^{n}$.
Since $\LeftLinked(R) = A^2$, 
$\LeftLinked(W_{n}) = A^2$.
Looking at $W_{n}$ as at binary relation 
$W_{n}\le A\times A^{n-1}$ 
and using Lemmas \ref{LEMLinkedImpliesBACenter}
and \ref{LEMBACenterSOnPowerImplies} we derive the existence of 
BA or central subuniverse on $A$, which contradicts our assumptions.

Thus, $C\neq \varnothing$.
%Let $\proj_{2}(R) = D$.
By Lemma \ref{LEMCentralRelationImplies}, $C$ is a central subuniverse of $B$.
It remains to show that it is also a BA subuniverse.
Assume the converse, then 
by Lemma \ref{LemAbsorptionImpliesEssential}
there exists a relation 
$S\le B\times B$ such that 
$S\cap (C\times C) = \varnothing$, 
$S\cap (B\times C) \neq \varnothing$,
and 
$S\cap (C\times B) \neq \varnothing$. 

Put 
$W = \{(a_1,\dots,a_{|A|})\mid
\exists (b,c)\in S\colon c\in C\wedge \forall i (a_{i},b)\}$.
By Lemma \ref{LEMBACenterSImplyPPDefinition}, 
$W<_{\TC} A^{|A|}$, 
which by Lemma \ref{LEMBACenterSOnPowerImplies}
implies the existence of a central subuniverse on $A$ and contradicts our assumptions.
\end{proof}

%\subsection{Unsorted} 

% \begin{lem}\label{LEMLinearIsBACenterFree}
% Suppose 
% $C<_{L(\sigma)}^{A}B$
% and $D\le C$.
% Then $D/\sigma$ is BA and center free.
% \end{lem}

\begin{lem}\label{LEMReverseHomomorphism}
Suppose $f\colon \mathbf A\to \mathbf A'$ is a surjective homomorphism
and $T\in\{\TBA,\TC,\TS,\TL,\TD\}$. Then 
$C'<_{T(\sigma)}^{A}B'\Rightarrow f^{-1}(C')<_{T(f^{-1}(\sigma))}^{A}f^{-1}(B')$.
\end{lem}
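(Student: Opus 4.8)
The statement asserts that preimages under a surjective homomorphism $f\colon\mathbf A\to\mathbf A'$ preserve each of the five subuniverse types, together with the congruence that witnesses the type. I would handle the five types $T\in\{\TBA,\TC,\TS,\TL,\TD\}$ by first dealing with the three "absorption-flavoured" types $\TBA,\TC,\TS$ together, and then the two "dividing-flavoured" types $\TL,\TD$, since the $\sigma$ bookkeeping is only nontrivial in the latter case (for $\TBA,\TC,\TS$ the congruence is full, so there is nothing to track).

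\textbf{Types $\TBA,\TC,\TS$.} For binary absorption, if $t$ is a binary term witnessing $C'<_{\TBA}^{A}B'$, i.e.\ $t(C',B')\subseteq C'$ and $t(B',C')\subseteq C'$, then the same term $t^{\mathbf A}$ witnesses $f^{-1}(C')<_{\TBA} f^{-1}(B')$: indeed for $a\in f^{-1}(B')$ and $b\in f^{-1}(C')$ we have $f(t(a,b)) = t(f(a),f(b))\in C'$, so $t(a,b)\in f^{-1}(C')$; properness and nonemptiness are clear because $f$ is surjective (so $f^{-1}$ is injective on the subsets and preserves strict inclusion and nonemptiness). For $\TC$ one adds, for $a\in f^{-1}(B')\setminus f^{-1}(C')$, that $(a,a)\notin \Sg_{\mathbf A}((\{a\}\times f^{-1}(C'))\cup(f^{-1}(C')\times\{a\}))$; this follows because $f$ maps that generated subalgebra into $\Sg_{\mathbf A'}((\{f(a)\}\times C')\cup(C'\times\{f(a)\}))$, which by centrality of $C'$ does not contain $(f(a),f(a))$, while $f(a)\in B'\setminus C'$ since $a\notin f^{-1}(C')$. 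For $\TS$ one just takes the preimage of the simultaneously BA-and-central witness $D'\le C'$ inside $B'$; by the $\TBA$ and $\TC$ cases $f^{-1}(D')$ is simultaneously BA and central in $f^{-1}(B')$ and lies below $f^{-1}(C')$.

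\textbf{Types $\TD$ and $\TL$.} Suppose $C'<_{\TD(\sigma')}^{A}B'$, witnessed by an irreducible congruence $\sigma'$ on $\mathbf A'$ with $B'^2\subseteq(\sigma')^*$, $C'=B'\cap E'$ for a block $E'$ of $\sigma'$, and $\mathbf A'/\sigma'$ being BA and center free. Let $\sigma=f^{-1}(\sigma')=\{(a,b)\mid (f(a),f(b))\in\sigma'\}$, a congruence on $\mathbf A$. The key structural facts I need: (i) $\mathbf A/\sigma\cong \mathbf A'/\sigma'$ via the induced map, hence $\mathbf A/\sigma$ is BA and center free; (ii) $\sigma$ is irreducible — this follows from the remark in the excerpt that irreducibility of $\sigma$ is equivalent to irreducibility of $0_{\mathbf A/\sigma}$, together with the isomorphism $\mathbf A/\sigma\cong\mathbf A'/\sigma'$ and irreducibility of $\sigma'$; (iii) $\sigma^* = f^{-1}((\sigma')^*)$ — here I would argue that, via the isomorphism $\mathbf A/\sigma\cong\mathbf A'/\sigma'$, the cover $(\sigma)^*$ (the minimal $\delta\le\mathbf A\times\mathbf A$ with $\delta\supsetneq\sigma$ stable under $\sigma$) corresponds exactly to $(\sigma')^*$, since stability under $\sigma$ and the lattice of such $\delta$ are determined by the quotient structure; (iv) $f^{-1}(E')$ is a block of $\sigma$, and $f^{-1}(C') = f^{-1}(B')\cap f^{-1}(E')$. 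From (iii), $B'^2\subseteq(\sigma')^*$ gives $f^{-1}(B')^2\subseteq f^{-1}((\sigma')^*)=\sigma^*$. Together this yields $f^{-1}(C')<_{\TD(\sigma)}^{A} f^{-1}(B')$. For $\TL$ we additionally need that $\sigma$ is a linear congruence when $\sigma'$ is; this is immediate from Lemma~\ref{LEMLinearEquivalentConditions} (existence of a bridge $\delta'$ from $\sigma'$ to $\sigma'$ with $\widetilde{\delta'}\supsetneq\sigma'$) by pulling the bridge back along $f$ — one checks that $f^{-1}$ applied coordinatewise to a bridge yields a bridge between the preimage congruences, with $\widetilde{f^{-1}(\delta')} = f^{-1}(\widetilde{\delta'})\supsetneq\sigma$. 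Alternatively one cites the equivalence "$\sigma$ linear iff $0_{\mathbf A/\sigma}$ linear" plus the isomorphism of quotients. The PC case $\TPC$, if needed, is the complementary one: $\sigma$ is PC iff it is irreducible and not linear, which transfers by the same quotient-isomorphism argument.

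\textbf{Main obstacle.} The routine part is the absorption types; the delicate point is fact (iii), that $\cover{\sigma} = f^{-1}(\cover{\sigma'})$, i.e.\ that pulling back commutes with the star operation on irreducible congruences. I would prove this by noting that a binary relation $\delta$ on $\mathbf A$ is stable under $\sigma$ and contains $\sigma$ exactly when $\delta = f^{-1}(\delta')$ for a unique $\delta'\le\mathbf A'\times\mathbf A'$ stable under $\sigma'$ containing $\sigma'$ (using surjectivity of $f$ to descend $\delta$), so the two posets of "stable over-relations" are isomorphic via $f^{-1}$, and the minimum $\cover{\sigma'}$ corresponds to the minimum $\cover{\sigma}$. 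Once this correspondence is in hand, all the remaining conditions in the definitions of $\TD$ and $\TL$ transfer mechanically, and the lemma (together with parts (bt) of Lemma~\ref{LEMPropagation}, which this statement essentially is) follows.
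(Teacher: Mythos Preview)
Your proposal is correct and follows essentially the same approach as the paper: direct verification for $\TBA,\TC,\TS$ (the paper just cites standard properties of homomorphisms and handles $\TS$ via the $\TBA,\TC$ witness exactly as you do), and for the dividing types the paper's proof is the one-line observation that $\mathbf A/f^{-1}(\sigma)\cong \mathbf A'/\sigma$, from which ``the required properties'' follow. Your detailed justification of the $\sigma^*$ correspondence (your ``main obstacle'') is precisely the content the paper suppresses behind that one line, and your argument via the order-isomorphism of $\sigma$-stable over-relations is the right way to fill it in.
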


\begin{proof}
For $T\in\{\TBA,\TC\}$ it follows from the properties of 
a homomorphism
(see Section 3.15 in \cite{ZebsNotes}). 
For $T=\TS$ there exists $D'\le C$ such that 
$D'<_{\TBA,\TC} B'$.
Then 
$f^{-1}(D')<_{BA,\TC} f^{-1}(B')$, hence 
$f^{-1}(C')<_{S} f^{-1}(B')$.

Suppose $T\in\{\TPC,\TL\}$.
Let $\delta  = f^{-1}(\sigma)$, that is 
$\delta = \{(a,b)\mid (f(a),f(b))\in \sigma\}$.
Then 
$\mathbf A'/\sigma = \mathbf A/\delta$,
$\mathbf B'/\sigma\cong \mathbf B/\delta$ and
$\mathbf C'/\sigma\cong \mathbf C/\delta$, 
which implies the required properties of 
a PC/linear subuniverse.
\end{proof}

\begin{cor}\label{CORReverseHomomorphism}
Suppose 
$R\le_{sd} A_{1}\times\dots\times A_{n}$,
$C_{1}<_{T(\sigma)}^{A_{1}}B_{1}\le A_1$
where 
$T\in\{\TBA,\TC,\TS,\TL,\TD\}$.
Then 
$R\cap (C_{1}\times A_{2}\times\dots\times A_{n})<_{T(\sigma)}^{R}
R\cap (B_{1}\times A_{2}\times\dots\times A_{n})$.
\end{cor}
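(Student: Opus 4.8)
The plan is to recognise the corollary as a direct instance of Lemma~\ref{LEMReverseHomomorphism} (equivalently, of part~(bt) of Lemma~\ref{LEMPropagation}) applied to the projection onto the first coordinate. Let $\proj_1\colon \mathbf R\to\mathbf A_1$ be the map $(a_1,\dots,a_n)\mapsto a_1$. Since the operations of $\mathbf R\le\mathbf A_1\times\dots\times\mathbf A_n$ act coordinatewise, $\proj_1$ is a homomorphism, and since $R\le_{sd}A_1\times\dots\times A_n$ it is onto $A_1$. Because $R\subseteq A_1\times\dots\times A_n$, for any $X\subseteq A_1$ we have $\proj_1^{-1}(X)=R\cap(X\times A_2\times\dots\times A_n)$; in particular $\proj_1^{-1}(C_1)$ and $\proj_1^{-1}(B_1)$ are exactly the two sets appearing in the conclusion.

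Then I would invoke Lemma~\ref{LEMReverseHomomorphism} with $f=\proj_1$: from $C_1<_{T(\sigma)}^{A_1}B_1$ it yields $\proj_1^{-1}(C_1)<_{T(\proj_1^{-1}(\sigma))}^{R}\proj_1^{-1}(B_1)$, which is the asserted relation once we note the usual abuse of notation. For $T\in\{\TBA,\TC,\TS\}$ the symbol $\sigma$ stands for the full congruence, and $\proj_1^{-1}$ of the full congruence on $A_1$ is the full congruence on $R$; for $T\in\{\TL,\TD\}$ we simply continue to write $\sigma$ for the congruence $\proj_1^{-1}(\sigma)$ on $R$ produced by the lemma, so that the superscript $R$ now refers to the new ambient algebra. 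Strictness and nonemptiness of the inclusion are inherited from $C_1\lneq B_1$ via surjectivity of $\proj_1$, and are already built into the output of Lemma~\ref{LEMReverseHomomorphism}.

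I do not expect any real obstacle: the only things to verify are the elementary facts that $\proj_1$ is a surjective homomorphism and that its fibres over $C_1$ and $B_1$ are the stated intersections, after which Lemma~\ref{LEMReverseHomomorphism} does all the work. The reason this small corollary is worth recording is that it is the basic device for pushing a subuniverse of type $T$ living on one coordinate of a subdirect relation up to a subuniverse of type $T$ of the relation itself, with the ambient algebra changed from $\mathbf A_1$ to $\mathbf R$; this is used repeatedly when restricting relations to strong/linear subuniverses of their coordinates.
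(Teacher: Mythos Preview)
Your proposal is correct and matches the paper's proof exactly: the paper also considers the homomorphism $f_1\colon\mathbf R\to\mathbf A_1$ sending each tuple to its first coordinate and applies Lemma~\ref{LEMReverseHomomorphism}. Your write-up is in fact more detailed than the paper's one-line proof, spelling out the surjectivity from subdirectness and the identification of the fibres with the stated intersections.
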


\begin{proof}
It is sufficient to consider a homomorphism 
$f_{1}\colon \mathbf R \to \mathbf A_{1}$ sending every tuple 
to its first coordinate and apply Lemma \ref{LEMReverseHomomorphism}
to $f_{1}^{-1}$.
\end{proof}

\subsection{Intersection property}\label{SUBSECTIONIntersectionProperty}

In this subsection we prove a fundamental property
of our subuniverses. Precisely, we will show 
(Lemma \ref{LEMIntersectionPCLinearIsGood}) 
that 
if $C\lll A$ and 
$\delta$ is a dividing congruence for $B\lll A$, 
then $(B\cap C)/\delta$ is either empty, 
or of size 1, or equal to $B/\delta$.
\begin{lem}\label{LEMTotallySymmetricWithoutBACenter}
Suppose 
$R\le \mathbf A^{n}$ is symmetric, 
$\mathbf A$ is BA and center free,
$\proj_{1,2}(R) = A^{2}$, and 
$(a_1,\dots,a_n)\in R\Rightarrow 
(a_1,a_1,a_2\dots,a_{n-1})\in R$.
Then 
$R = A^{n}$.
\end{lem}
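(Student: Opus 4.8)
The plan is to argue by induction on $n$. The case $n=2$ is immediate: a symmetric $R\le\mathbf A^2$ with $\proj_{1,2}(R)=A^2$ is already equal to $A^2$. So suppose $n\ge 3$ and that the lemma holds for $n-1$.

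First I would pass to the projection $R':=\proj_{1,\dots,n-1}(R)\le\mathbf A^{n-1}$. It is symmetric since $R$ is, its binary projection is still $A^2$, and it inherits the duplication property: if $(a_1,\dots,a_{n-1})\in R'$, choose $a_n$ with $(a_1,\dots,a_n)\in R$, apply the hypothesis on $R$ to get $(a_1,a_1,a_2,\dots,a_{n-1})\in R$, and project back. By the inductive hypothesis $R'=A^{n-1}$, that is $\proj_{1,\dots,n-1}(R)=A^{n-1}$; by symmetry every $(n-1)$-ary projection of $R$ is full, so in particular $R$ is subdirect in $\mathbf A^{n}$ and, viewing the first $n-1$ coordinates against the last one, $R\le_{sd}\mathbf A^{n-1}\times\mathbf A$.

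The key intermediate claim is that every $(n-1)$-tuple extends in $R$ by any of its own entries: for all $a_1,\dots,a_{n-1}$ and all $i$ we have $(a_1,\dots,a_{n-1},a_i)\in R$. To prove it, take any extension $(a_1,\dots,a_{n-1},x)\in R$; if $x=a_i$ we are done, and otherwise, using symmetry of $R$, reorder this tuple so that $x$ sits in the last coordinate and $a_i$ in the first, then apply the duplication identity to obtain a tuple whose multiset of entries is exactly $\{a_1,\dots,a_{n-1},a_i\}$. From this claim, any two $(n-1)$-tuples sharing a common value have a common extension in $R$ and hence are $\LeftLinked$-equivalent; since $n-1\ge 2$, arbitrary $(n-1)$-tuples $(a_1,\dots,a_{n-1})$ and $(b_1,\dots,b_{n-1})$ are joined through the intermediate tuple $(a_1,b_2,b_3,\dots,b_{n-1})$, which shares $a_1$ with the first and $b_2$ with the second. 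Therefore $\LeftLinked(R)=(A^{n-1})^2$.

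Finally, $\mathbf A^{n-1}$ is again BA and center free by Lemma~\ref{LEMBACenterSOnPowerImplies}, so I would apply Lemma~\ref{LEMBAConLeftOrCenterOnRight} to $R\le_{sd}\mathbf A^{n-1}\times\mathbf A$: the set $C=\{c\in A\mid A^{n-1}\times\{c\}\subseteq R\}$ is nonempty and satisfies $C\le_{\TBA,\TC}\mathbf A$. As $\mathbf A$ has no proper nonempty binary absorbing or central subuniverse, $C=A$, i.e. $A^{n-1}\times\{c\}\subseteq R$ for every $c\in A$, which says exactly $R=A^{n}$. The step I expect to be most delicate is the multiset bookkeeping in the key claim — carefully tracking how the duplication identity acts once it is transported through the symmetry of $R$ — together with confirming that the hypotheses of Lemma~\ref{LEMBAConLeftOrCenterOnRight} ($R$ subdirect, left factor BA and center free, left side fully linked) are genuinely met in our situation.
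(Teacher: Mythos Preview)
Your proof is correct. The overall architecture matches the paper's: induction on $n$, apply the inductive hypothesis to get $\proj_{1,\dots,n-1}(R)=A^{n-1}$, view $R$ as a subdirect binary relation between $A^{n-1}$ and $A$, and use a linkedness argument together with Lemma~\ref{LEMBACenterSOnPowerImplies} to contradict BA/center-freeness of $\mathbf A$ unless $R=A^n$.

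The one genuine difference is which side of the linkedness you establish and, correspondingly, which auxiliary lemma you invoke. The paper works on the \emph{right}: from $(a,b_2,\dots,b_{n-1},c)\in R$ it iterates symmetry plus duplication to obtain $(a,\dots,a,c)\in R$ and $(a,\dots,a)\in R$ for all $a,c$, giving $\RightLinked(R)=A^2$; then, assuming $R\neq A^n$, Lemma~\ref{LEMLinkedImpliesBACenter} yields a BA or central subuniverse on $A$ or on $A^{n-1}$, a contradiction. You instead work on the \emph{left}: your key claim $(a_1,\dots,a_{n-1},a_i)\in R$ gives $\LeftLinked(R)=(A^{n-1})^2$, and you then feed this into Lemma~\ref{LEMBAConLeftOrCenterOnRight} to force the central set $C\subseteq A$ to equal $A$. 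Your route does a bit more multiset bookkeeping up front but extracts a slightly sharper conclusion (an explicit $C\le_{\TBA,\TC}A$) before finishing; the paper's route is shorter because $\RightLinked(R)=A^2$ is almost immediate once you realize the iterated duplication collapses any tuple $(a,b_2,\dots,b_{n-1},c)$ to $(a,\dots,a,c)$.
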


\begin{proof}
Let us prove the claim by induction on $n$. 
For $n=2$ it follows from the condition  
$\proj_{1,2}(R) = A^{2}$.
Suppose $n>2$. 
Since $\proj_{1,n}(R) =A^{2}$ 
for any $a,c\in A$ 
we have 
$(a,b_2,\dots,b_{n-1},c)\in R$ for some $b_2,\dots,b_{n-1}\in A$.
By the conditions of the lemma we have 
$(a,\dots,a,c),(a,\dots,a,a)\in R$.
Since $a$ and $c$ can be chosen arbitrary, 
we have $\RightLinked(R)=A^{2}$.
By the inductive assumption 
$\proj_{1,2,\dots,n-1}(R) = A^{n-1}$.
Then Lemma \ref{LEMLinkedImpliesBACenter} implies 
the existence of BA or central subuniverse on $A$
or $A^{n-1}$, which together with Lemma \ref{LEMBACenterSOnPowerImplies} contradicts the fact that 
$\mathbf A$ is BA and center free.
\end{proof}

\begin{lem}\label{LEMTotallySymmetricRelationForIrreducible}
Suppose $\sigma$ is a dividing congruence for 
$B\le A$,
%n irreducible congruence on $A$, 
%$B\le A$, $B^2\subseteq \sigma^{*}$, $B/\sigma$ is BA and center free,
$R\le (\mathbf A/\sigma)^{n}$, $R$ is reflexive, symmetric, and 
$(a_1,\dots,a_n)\in R\Rightarrow
(a_1,a_1,a_2\dots,a_{n-1})\in R$.
Then 
either 
$(B/\sigma)^n\subseteq R$, or 
$R = \{(a/\sigma,a/\sigma,\dots,a/\sigma)\mid a\in A\}$.
\end{lem}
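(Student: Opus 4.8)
The plan is to deduce this from the already‑proved Lemma~\ref{LEMTotallySymmetricWithoutBACenter}, applied to the restriction of $R$ to $(B/\sigma)^{n}$. Write $\mathbf A'=\mathbf A/\sigma$ and $B'=B/\sigma$. Since $\sigma$ is a dividing congruence for $B$, I would first fix $C$ with $C<_{\TD(\sigma)}^{A}B$; this gives $B^{2}\subseteq \sigma^{*}$, hence $(B')^{2}\subseteq \sigma^{*}/\sigma$, and it gives that $\mathbf B'$ is BA and center free. The one structural fact I want to extract from irreducibility of $\sigma$ is the following: $0_{\mathbf A'}$ is irreducible, and (unwinding the definition, noting that every binary subalgebra of $(\mathbf A')^{2}$ is vacuously stable under the trivial congruence $0_{\mathbf A'}$) this says exactly that there is a unique minimal subalgebra of $(\mathbf A')^{2}$ strictly containing $0_{\mathbf A'}$, namely $\sigma^{*}/\sigma$. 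Consequently \emph{every} subalgebra of $(\mathbf A')^{2}$ that properly contains $0_{\mathbf A'}$ contains $\sigma^{*}/\sigma$, and therefore contains $(B')^{2}$.

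Next I would isolate a ``spreading'' consequence of the two hypotheses on $R$ (symmetry, together with $(a_1,\dots,a_n)\in R\Rightarrow(a_1,a_1,a_2,\dots,a_{n-1})\in R$): if $R$ contains any tuple in which two distinct values $p,q$ occur, then $\{p,q\}^{n}\subseteq R$. By symmetry take such a tuple as $(p,q,r_{3},\dots,r_{n})\in R$; applying the collapse identity $n-2$ times — each application prepends a copy of the current first coordinate $p$ and discards the last — produces $(p,\dots,p,q)\in R$; one more collapse of $(p,\dots,p,q)$ yields the constant tuple $(p,\dots,p)$, and after swapping to $(q,p,\dots,p)$ by symmetry and iterating the collapse with $q$ in front we obtain every tuple with $m$ occurrences of $q$ and $n-m$ of $p$. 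Symmetry then gives all of $\{p,q\}^{n}$.

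With these two preliminaries the argument splits into two cases. If $\proj_{1,2}(R)=0_{\mathbf A'}$, then every tuple of $R$ has equal first two coordinates, hence (by symmetry) is constant, and by reflexivity $R=\{(a/\sigma,\dots,a/\sigma)\mid a\in A\}$, which is the second alternative. Otherwise $\proj_{1,2}(R)$ is a subalgebra of $(\mathbf A')^{2}$ strictly above $0_{\mathbf A'}$, so by the first paragraph $\proj_{1,2}(R)\supseteq (B')^{2}$. Fixing $p,q\in B'$ we get $(p,q)\in\proj_{1,2}(R)$, whence $\{p,q\}^{n}\subseteq R$ by the spreading step; as $\{p,q\}^{n}\subseteq(B')^{n}$, this shows $\{p,q\}^{n}\subseteq R':=R\cap(B')^{n}$, so in particular $(p,q)$ extends to a tuple of $R'$. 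Thus $R'\le(\mathbf B')^{n}$ is symmetric, inherits the collapse identity, and satisfies $\proj_{1,2}(R')=(B')^{2}$; since $\mathbf B'$ is BA and center free, Lemma~\ref{LEMTotallySymmetricWithoutBACenter} gives $R'=(B')^{n}$, i.e.\ $(B/\sigma)^{n}\subseteq R$, the first alternative.

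The only step that is more than bookkeeping is the first paragraph: recognising that irreducibility of $\sigma$, once transported to $0_{\mathbf A/\sigma}$ and combined with the triviality of ``stable under $0$'', forces $\proj_{1,2}(R)$ to absorb $\sigma^{*}/\sigma$, and hence $(B/\sigma)^{2}$, as soon as $R$ is not contained in the diagonal. Everything downstream of that is the collapse/symmetry manipulation plus a citation of Lemma~\ref{LEMTotallySymmetricWithoutBACenter}.
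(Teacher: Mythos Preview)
Your proof is correct and follows essentially the same route as the paper: both split on whether $\proj_{1,2}(R)$ is the diagonal, use irreducibility of $\sigma$ (together with reflexivity of $R$) to force $\proj_{1,2}(R)\supseteq\sigma^{*}/\sigma\supseteq(B/\sigma)^{2}$ in the nontrivial case, and then invoke Lemma~\ref{LEMTotallySymmetricWithoutBACenter} on $R\cap(B/\sigma)^{n}$. Your ``spreading'' paragraph is more than is strictly needed but it does make explicit the point the paper leaves implicit, namely that $\proj_{1,2}\bigl(R\cap(B/\sigma)^{n}\bigr)=(B/\sigma)^{2}$, which is the hypothesis actually required by Lemma~\ref{LEMTotallySymmetricWithoutBACenter}.
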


\begin{proof}
Consider two cases:

Case 1. $\proj_{1,2}(R)$ is the equality relation.
Since $R$ is symmetric, we derive that 
$R = \{(a/\sigma,a/\sigma,\dots,a/\sigma)\mid a\in A\}$.

Case 2. $\proj_{1,2}(R)$ is not the equality relation.
Since $\sigma$ is irreducible and $R$ is reflexive, 
we derive that $\proj_{1,2}(R)\supseteq \sigma^{*}/\sigma \supseteq 
(B/\sigma)^{2}$.
It remains to apply Lemma \ref{LEMTotallySymmetricWithoutBACenter}
to $R\cap (B/\sigma)^{n}$ to 
show that $(B/\sigma)^n\subseteq R$.
\end{proof}

\begin{lem}\label{LEMSelfIntersectionPC}
Suppose 
$B_{k}<_{T_{k}(\sigma_{k})}^{A} B_{k-1}<_{T_{k-1}(\sigma_{k-1})}^{A} \dots <_{T_{2}(\sigma_{2})}^{A}B_{1}<_{T_{1}(\sigma_{1})}^{A} B_{0} = A$, 
$\delta$ is a congruence on $A$,
$T_{1},T_{2},\dots,T_{k}\in\{\TBA,\TC,\TS,\TD\}$,
$m\in\{1,2,\dots,k\}$, 
and $T_{m} = \TD$.
Then 
$((B_{k}\circ \delta)\cap B_{m-1})/\sigma_{m}$ is either 
$B_{m-1}/\sigma_{m}$, or 
$B_{m}/\sigma_{m}$.
Additionally, 
if 
$((B_{k}\circ \delta)\cap B_{m-1})/\sigma_{m}=
B_{m}/\sigma_{m}$ then 
$\sigma_{m}\supseteq (\delta\cap \sigma_1\cap\dots\cap \sigma_{m-1})$.
\end{lem}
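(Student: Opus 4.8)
\textbf{Proof proposal for Lemma~\ref{LEMSelfIntersectionPC}.}

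The plan is to induct on $k-m$, i.e. on the length of the tail of the chain below $B_m$. The base case $k = m$ asks us to understand $((B_k\circ\delta)\cap B_{m-1})/\sigma_m$ where $B_k = B_m <_{\mathcal T_m(\sigma_m)}^{A}B_{m-1}$ is a dividing subuniverse, so $B_m = B_{m-1}\cap E$ for a block $E$ of $\sigma_m$ and $B_{m-1}/\sigma_m$ is BA and center free. First I would observe that $(B_m\circ\delta)\cap B_{m-1}$ is a subuniverse of $B_{m-1}$ stable under $\delta$, and after factoring by $\sigma_m$ we want to show $((B_m\circ\delta)\cap B_{m-1})/\sigma_m$ is either a single block or all of $B_{m-1}/\sigma_m$. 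The mechanism is the irreducibility of $\sigma_m$ together with the fact that $B_{m-1}/\sigma_m$ has no proper BA or central subuniverse: I would build, from the relation $\{(a/\sigma_m, c/\sigma_m)\mid a\in B_{m-1}, c\in B_m, (a,c)\in\delta\}$ (or a symmetric power of it), a reflexive symmetric relation of the shape handled by Lemma~\ref{LEMTotallySymmetricRelationForIrreducible}, and conclude that it is either the diagonal — giving the single-block answer $B_m/\sigma_m$ — or contains $(B_{m-1}/\sigma_m)^2$, which forces $((B_m\circ\delta)\cap B_{m-1})/\sigma_m = B_{m-1}/\sigma_m$. The additional clause when the answer is $B_m/\sigma_m$ — namely $\sigma_m\supseteq\delta\cap\sigma_1\cap\dots\cap\sigma_{m-1}$ — should come out of the same diagonal case: if two $\delta$-related (and $\sigma_1\cap\dots\cap\sigma_{m-1}$-related) elements of $B_{m-1}$ always land in the same $\sigma_m$-block, that is exactly $\delta\cap\sigma_1\cap\dots\cap\sigma_{m-1}\subseteq\sigma_m$.

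For the inductive step, suppose $k > m$ and the claim holds for shorter tails. I would split on the type $\mathcal T_k$ of the last reduction $B_k <_{\mathcal T_k(\sigma_k)}^{A}B_{k-1}$. If $\mathcal T_k\in\{\TBA,\TC,\TS\}$, then by Lemma~\ref{LEMBACenterSImplyFactor} (and Lemma~\ref{LEMBACenterSImplyPPDefinition} for the composition with $\delta$) the set $(B_k\circ\delta)/\rho$ relates to $(B_{k-1}\circ\delta)/\rho$ as a $\TBA$/$\TC$/$\TS$ subuniverse for the relevant quotient, and since $B_{m-1}/\sigma_m$ has no proper such subuniverses, intersecting with $B_{m-1}$ and factoring by $\sigma_m$ cannot distinguish $B_k\circ\delta$ from $B_{k-1}\circ\delta$ at the level of $\sigma_m$ — so $((B_k\circ\delta)\cap B_{m-1})/\sigma_m = ((B_{k-1}\circ\delta)\cap B_{m-1})/\sigma_m$ and we finish by the induction hypothesis applied to the chain ending at $B_{k-1}$. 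If $\mathcal T_k = \TD$, then $B_k = B_{k-1}\cap E'$ for a block $E'$ of $\sigma_k$, so $B_k\circ\delta \subseteq B_{k-1}\circ(\delta\circ\sigma_k\circ\delta)$ roughly speaking; I would instead argue that $B_k\circ\delta$ and $B_{k-1}\circ\delta$ differ only inside $\sigma_k^{*}$-blocks, hence (since $B_{k-1}/\sigma_k$ is BA and center free, and $m\ne k$ so $\sigma_k$ sits strictly below $B_{m-1}$) this difference is invisible modulo $\sigma_m$ as well, again reducing to the induction hypothesis.

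The honest main obstacle is the interaction of the \emph{two} congruences $\sigma_m$ (the one we factor by) and $\sigma_k$ or the type-$\{\TBA,\TC,\TS\}$ reductions in the tail: I must be careful that ``$B_k\circ\delta$ and $B_{k-1}\circ\delta$ agree modulo $\sigma_m$'' really follows, because $\sigma_m$ and the objects witnessing $B_k <_{\mathcal T_k}^{A} B_{k-1}$ both live on $A$ and need not be comparable. The clean way around this is to factor the whole situation by $\sigma_m$ at the start: by Corollary~\ref{CORPropagationModuloCongruence} and Corollary~\ref{CORPropagateFromFactor} the chain $B_k < \dots < B_{m-1}$ pushes down to a chain over $A/\sigma_m$, the type $\TD$ at step $m$ becomes the $0$-congruence being dividing (so $B_{m-1}/\sigma_m$ is BA and center free), and then the statement becomes: intersecting a $\lll$-small subuniverse $B_k/\sigma_m$ (composed with $\delta/\sigma_m$) with a BA-and-center-free algebra $B_{m-1}/\sigma_m$ yields either everything or a single point — which is precisely what Lemmas~\ref{LEMTotallySymmetricWithoutBACenter}, \ref{LEMStrongNonemptyIntersection} and the symmetric-relation argument deliver. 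Carrying out this reduction-to-the-quotient cleanly, and tracking the extra inclusion $\sigma_m\supseteq\delta\cap\sigma_1\cap\dots\cap\sigma_{m-1}$ through it, is where the real work lies; the rest is bookkeeping with the propagation corollaries of Section~\ref{SectionStrongSubalgebras}.
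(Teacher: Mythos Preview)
There are two genuine gaps. First, the ``clean'' reduction you propose at the end --- factor by $\sigma_m$ via Corollaries~\ref{CORPropagationModuloCongruence} and~\ref{CORPropagateFromFactor} --- is circular: those corollaries rest on Lemma~\ref{LEMPropagation}(ft), hence on Lemma~\ref{LEMFactorByDelta}, hence on Lemma~\ref{LEMCongruenceEitherCutOrDoNothing}, whose proof explicitly invokes the present Lemma~\ref{LEMSelfIntersectionPC}. The propagation machinery is downstream of this lemma, not upstream, so you must work with bare hands here.

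Second, your induction on $k-m$ and your base case both miss that the chain \emph{above} $B_{m-1}$ matters. In the base case $k=m$, applying Lemma~\ref{LEMTotallySymmetricRelationForIrreducible} to the symmetric relation $S_0=\{(a_1/\sigma_m,\dots,a_{|A|}/\sigma_m)\mid (a_i,a_j)\in\delta\cap\sigma_1\cap\dots\cap\sigma_{m-1}\}$ gives either the diagonal (done) or $(B_{m-1}/\sigma_m)^{|A|}\subseteq S_0$. But the latter only says witnesses exist \emph{somewhere in $A$}; it does not place them in $B_{m-1}$, which is what $((B_m\circ\delta)\cap B_{m-1})/\sigma_m=B_{m-1}/\sigma_m$ requires. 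The paper closes this by defining $S_n$ with witnesses forced into $B_n$, walks down $n=0,\dots,m-1$, and at each $\TD$ step uses a central-relation argument (Lemma~\ref{LEMCentralRelationImplies}) \emph{together with the inductive hypothesis for the shorter chain of length $m-1$}. That recursive call has $k'=m-1$ and $m'=n<m$, so $k'-m'\ge 0$: it is unavailable under induction on $k-m$, and you must induct on $k$. The same central-relation device is what makes the inductive step for $T_k=\TD$ work (your ``invisible modulo $\sigma_m$'' handwave does not): one forms $R=\{(a/\sigma_k,b/\sigma_m)\mid a\in B_{k-1},\,b\in B_{m-1},\,(a,b)\in\delta\}$, notes it is subdirect with a central element $B_m/\sigma_m$ on the right (since $B_{k-1}\subseteq B_m$), and concludes $R$ is full by BA-and-center-freeness of both factors.
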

\begin{proof}
We prove the statement by induction on 
$k$. 
%Recall that we follow our convention and 
%assume that 
%$\sigma_{i} = A^{2}$ whenever 
%$T_{i}\in\{\TBA,\TC,\TS\}$.
%Thus, we assume that it holds for all
%$(k',m')$ such that 
%$k'<k$, or $k=k'$ and $m'<m$.
Consider two cases:

Case 1. $k=m$.
Define $|A|$-ary relation 
$S_{0}\le (A/\sigma_{k})^{|A|}$ by 
$$S_{0} = \{(a_1/\sigma_{k},\dots,a_{|A|}/\sigma_{k})\mid 
\forall i,j\colon 
(a_{i},a_{j})\in \delta\cap \bigcap_{\ell=1}^{k-1}\sigma_{\ell}
\}.$$
By Lemma \ref{LEMTotallySymmetricRelationForIrreducible} we have one of the two subcases:

Subcase 1A. $S_{0} = \{(b/\sigma_{k},\dots,b/\sigma_{k})\mid b\in A\}.$ 
Then 
$\sigma_{k}\supseteq (\delta\cap \sigma_1\cap\dots\cap \sigma_{k-1})$, 
which is the additional condition we needed to prove.
Therefore, $\delta\cap B_{k-1}^{2} \subseteq  \sigma_{k}\cap B_{k-1}^{2}$, $(B_{k}\circ\delta) \cap B_{k-1} \subseteq  
(B_{k}\circ\sigma_{k}) \cap B_{k-1} =B_{k}$, 
and $((B_{k}\circ\delta) \cap B_{k-1})/\sigma_{k}=B_{k}/\sigma_{k}$.
%To prove the additional claim notice that 
%$\proj_{1,2}(S_{0})$ must be the equality relation on $A/\sigma_k$ as 
%$\sigma_{k}$ is irreducible and $B_{k}^{2}\subseteq \sigma^{*}$, 
%which completes this case.

Subcase 1B. $(B_{k-1}/\sigma_{k})^{|A|}\subseteq S_{0}.$ 
For $n\in\{0,1,\dots,k\}$
define $S_{n}$ by 
$$S_{n} = \{(a_1/\sigma_{k},\dots,a_{|A|}/\sigma_{k})\mid 
\forall i,j\colon 
(a_{i},a_{j})\in \delta\cap \bigcap_{\ell=1}^{k-1}\sigma_{\ell},
\forall i\colon a_{i}\in B_{n}
\}.$$
Notice that for $n = 0$ we get a definition of $S_{0}$. 
Assume that $(B_{k-1}/\sigma_{k})^{|A|} \subseteq S_{k-1}$.
Then there exists an equivalence class $E$ of 
$\delta$ such that $(E\cap B_{k-1})/\sigma_{k} = B_{k-1}/\sigma_{k-1}$.
Hence, $((B_{k}\circ \delta)\cap B_{k-1})/\sigma_{k} = B_{k-1}/\sigma_{k}$, which completes this case.
Thus, we assume that $(B_{k-1}/\sigma_{k})^{|A|} \not\subseteq S_{k-1}$.
Consider the minimal $n$ 
such that 
$(B_{k-1}/\sigma_{k})^{|A|}\not\subseteq S_{n}.$
Consider three subsubcases.

Subsubcase 1B1. $T_{n}\in\{\TBA,\TC\}$.
Combining Lemmas \ref{LEMBACenterSImplyPPDefinition} 
and \ref{LEMBACenterSImplyFactor}, 
we obtain that 
$S_{n}\cap (B_{k-1}/\sigma_{k})^{|A|} <_{T_{n}}(B_{k-1}/\sigma_{k})^{|A|}$, 
which by Lemma \ref{LEMBACenterSOnPowerImplies} implies
the existence of BA or central subuniverse on $B_{k-1}/\sigma_{k}$ and contradicts 
 the definition of  a dividing congruence.

Subsubcase 1B2. $T_{n}=\TS$.
Choose $G\le B_{n}$ such that 
$G<_{\TBA,\TC} B_{n-1}$.
By Lemma \ref{LEMStrongNonemptyIntersection} $G\cap B_{k}\neq \varnothing$, hence even 
if we replace $B_{n}$ by $G$ in the definition of $S_{n}$ 
the intersection $S_{n}\cap (B_{k-1}/\sigma_{k})^{|A|}$
will be nonempty.
Then Lemmas \ref{LEMBACenterSImplyPPDefinition} 
and \ref{LEMBACenterSImplyFactor} imply the  
existence of both BA and central subuniverse on $B_{k-1}/\sigma_{k}$ and contradicts  the definition of  a dividing congruence.

% Combining Lemmas \ref{LEMBACenterSImplyPPDefinition} 
% and \ref{LEMBACenterSImplyFactor}, 
% we obtain 
% $S_{n}\cap (B_{k-1}/\sigma_{k})^{|A|} <_{T_{n}}(B_{k-1}/\sigma_{k})^{|A|}$, 
% which by Lemma \ref{LEMBACenterSOnPowerImplies} implies
% the existence of BA or central subuniverse on $B_{k-1}/\sigma_{k}$ and contradicts 
%  the definition of  a dividing congruence.

Subsubcase 1B3. $T_{n}=\TD$.
Define a new relation 
$R\le (A/\sigma_{k})^{|A|} \times A/\sigma_{n}$
by
$$R = \{(a_1/\sigma_{k},\dots,a_{|A|}/\sigma_{k},b/\sigma_{n})\mid 
\forall i,j\colon 
(a_{i},a_{j})\in \delta\cap \bigcap_{\ell=1}^{k-1}\sigma_{\ell},
\forall i\colon a_{i}\in B_{n-1}, (a_{i},b)\in \sigma_{n}
\}.$$
Put 
$R' = R\cap ((B_{k-1}/\sigma_{k})^{|A|}\times 
B_{n-1}/\sigma_{n})$.
By the choice of $n$ we have 
$\proj_{1,2,\dots,|A|}(R') = (B_{k-1}/\sigma_{k})^{|A|}$
and 
$(B_{k-1}/\sigma_{k})^{|A|}\times B_{n}/\sigma_{n}\not\subseteq R'$.
Notice that 
$\proj_{|A|+1}(R') = ((B_{k-1}\circ \sigma_{k})\cap B_{n-1})/\sigma_{n}$.
%If $T_{n} = \TPC$ then 
By the inductive assumption either
$\proj_{|A|+1}(R') = B_{n-1}/\sigma_{n}$, or 
$\proj_{|A|+1}(R') = B_{n}/\sigma_{n}$.
The second case contradicts the above conditions, 
therefore 
$\proj_{|A|+1}(R') = B_{n-1}/\sigma_{n}$.
%and $\proj_{|A|+1}(R')$ is BA and center free.

Since $\proj_{1,2,\dots,|A|}(R')= (B_{k-1}/\sigma_{k})^{|A|}$,
there exists $d\in B_{n-1}/\sigma_{n}$ such that 
$(B_{k-1}/\sigma_{k})^{|A|}\times \{d\}\subseteq R'$.
Then $R'$ can be viewed as 
a binary relation 
with a center containing $d$.
Then Lemma \ref{LEMCentralRelationImplies}
implies the existence of a BA subuniverse on $B_{k-1}/\sigma_{k}$
or a center on $B_{n-1}/\sigma_{n}$, which contradicts the definition of a dividing subuniverse.

Case 2. $k>m$. 
By the inductive assumption 
$((B_{k-1}\circ \delta)\cap B_{m-1})/\sigma_{m}$ is either 
$B_{m-1}/\sigma_{m}$, or $B_{m}/\sigma_{m}$.
In the second case
we also have 
$((B_{k}\circ \delta)\cap B_{m-1})/\sigma_{m} = B_{m}/\sigma_{m}$, which completes this case.
Thus, we assume that 
$((B_{k-1}\circ \delta)\cap B_{m-1})/\sigma_{m}=B_{m-1}/\sigma_{m}$.
Let us consider three cases depending on the type of the reduction $T_{k}$.

Subcase 2A. $T_{k}\in\{\TBA,\TC\}$.
Combining Lemmas \ref{LEMBACenterSImplyPPDefinition}
and \ref{LEMBACenterSImplyFactor} 
we obtain 
$((B_{k}\circ \delta)\cap B_{m-1})/\sigma_{m}\le_{T_{k}}
((B_{k-1}\circ \delta)\cap B_{m-1})/\sigma_{m}$, 
which by the definition of 
a dividing congruence implies 
$((B_{k}\circ \delta)\cap B_{m-1})/\sigma_{m}=
((B_{k-1}\circ \delta)\cap B_{m-1})/\sigma_{m}=B_{m-1}/\sigma_{m}$ and completes this case.

Subcase 2B. $T_{k}=\TS$.
Choose $G\le B_{k}$ such that 
$G<_{\TBA,\TC} B_{k-1}$.
Then 
$((G\circ \delta)\cap B_{m-1})/\sigma_{m}\le_{\TBA,\TC}
((B_{k-1}\circ \delta)\cap B_{m-1})/\sigma_{m}$, 
which by the definition of 
a dividing congruence implies 
$((B_{k}\circ \delta)\cap B_{m-1})/\sigma_{m}=
((B_{k-1}\circ \delta)\cap B_{m-1})/\sigma_{m}=B_{m-1}/\sigma_{m}$ and completes this case.

Subcase 2C. $T_{k}=\TD$.
Define a binary relation 
$R\le B_{k-1}/\sigma_{k}\times 
B_{m-1}/\sigma_{m}$ by 
$$R = \{(a/\sigma_{k},b/\sigma_{m})\mid 
a\in B_{k-1},b\in B_{m-1},
(a,b)\in\delta\}.$$
By the above assumption 
$R$ is subdirect.
Since $k>m$,
we have 
$B_{k-1}/\sigma_{k}\times B_{m}/\sigma_{m}\subseteq R$,
hence the right-hand side of $R$ has a center.
By the definition of a dividing congruence 
both 
$B_{k-1}/\sigma_{k}$ and $B_{m-1}/\sigma_{m}$
are BA and center-free. Then 
Lemma \ref{LEMCentralRelationImplies}
implies that 
$R =B_{k-1}/\sigma_{k}\times B_{m-1}/\sigma_{m}$
and $((B_{k}\circ \delta)\cap B_{m-1})/\sigma_{m} = 
B_{m-1}/\sigma_{m}$.
\end{proof}

% \begin{lem}\label{LEMLinearCongruenceTotallySymmetric}
% Suppose $C<_{L(\sigma)}^{A}B$
% and $R\le (\mathbf A/\sigma)^n$
% is totally symmetric reflexive relation.
% Then 
% $R\cap (\mathbf B/\sigma)^n = (\mathbf B/\sigma)^n$ or 
% $R\cap (\mathbf B/\sigma)^n = \{(b/\sigma,b/\sigma,\dots,b/\sigma)\mid b\in B\}$.
% \end{lem}

\begin{lem}\label{LEMIntersectionPCLinearIsGood}
Suppose 
$B\lll A$, $C\lll A$, $B\cap C\neq\varnothing$.
Then 
\begin{enumerate}
    \item[(d)] if $\delta$ is a dividing congruence for $B\le A$,
    then $|(B\cap C)/\delta|=1$ or 
$(B\cap C)/\delta = B/\delta$.
Moreover, if 
$|(B\cap C)/\delta|=1$ then 
$\delta\supseteq \delta_{1}\cap \dots\cap \delta_{s}$,
where 
$\delta_{1},\dots,\delta_{s}$ are the dividing
congruences
from the definition of 
$B\lll A$ and $C\lll A$.
\item[(s)] if $G<_{\TBA,\TC}B$, then 
$G\cap C\neq \varnothing$.
\end{enumerate}
%$B/\delta$ is BA and center free.
\end{lem}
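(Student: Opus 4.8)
The plan is to prove (d) first and then derive (s) from it. In both cases I would argue by induction on the length $l$ of a chain $C = C_l <_{U_l(\rho_l)}^A \cdots <_{U_1(\rho_1)}^A C_0 = A$ witnessing $C\lll A$, peeling off its bottom step and writing $\tilde C := C_{l-1}$; for (d) the induction parameter is actually the sum of the lengths of the chains witnessing $B\lll A$ and $C\lll A$, so that role-swaps are legal.

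For (d): the case $l=0$ is vacuous since $B\cap C=B$. For $l\ge 1$ the inductive hypothesis gives $(B\cap\tilde C)/\delta\in\{B/\delta,\ \text{a single class}\}$; if it is a single class then $B\cap C\subseteq B\cap\tilde C$ lies in one $\delta$-block and we are done (the inclusion $\delta\supseteq\bigcap\delta_i$ is inherited, because the dividing congruences of $C$'s chain include those of $\tilde C$'s, so intersecting over the larger family stays inside $\delta$). So assume $(B\cap\tilde C)/\delta=B/\delta$. If $U_l\in\{\TBA,\TC,\TS\}$, then $B\cap C=(B\cap\tilde C)\cap C_l\le_{U_l}B\cap\tilde C$ by Lemma~\ref{LEMBACenterImplyIntersection} (routing $\TS$ through the $\TBA\cap\TC$-subuniverse it contains), so passing to the quotient by $\delta$ via Lemma~\ref{LEMBACenterSImplyFactor} yields $(B\cap C)/\delta\le_{U_l}B/\delta$; since $B/\delta$ is BA and center free (this is exactly the content of ``$\delta$ is a dividing congruence for $B$'') and $(B\cap C)/\delta\ne\varnothing$, we conclude $(B\cap C)/\delta=B/\delta$. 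If $U_l=\TD$ with congruence $\rho:=\rho_l$, so $C=\tilde C\cap F$ for a $\rho$-block $F$ with $\tilde C/\rho$ BA and center free, I would first apply (d) with the roles of $B$ and $\tilde C$ interchanged --- legitimate since $\rho$ is a dividing congruence for $\tilde C$ and the combined chain length decreases --- to get $(B\cap\tilde C)/\rho\in\{\tilde C/\rho,\ \text{a single class}\}$. In the first subcase $B\cap\tilde C$ lies in the $\rho$-block it meets, which is $F$, hence $B\cap C=B\cap\tilde C$ and $(B\cap C)/\delta=B/\delta$. The remaining subcase is the technical heart: writing $X:=B\cap\tilde C$ we have $X/\delta=B/\delta$ and $X/\rho=\tilde C/\rho$, and we must show $(X\cap F)/\delta\in\{B/\delta,\ \text{a single class}\}$. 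Here I would invoke the symmetric-relation machinery behind Lemma~\ref{LEMSelfIntersectionPC}: form a reflexive symmetric relation of large arity on $A/\delta$ expressing ``all coordinates lie in $X\cap F$ and are pairwise $\delta$-related'', verify the hypotheses of Lemma~\ref{LEMTotallySymmetricRelationForIrreducible}, and read off that it is either the diagonal (the single-class outcome) or all of $(B/\delta)^{N}$ (the outcome $(X\cap F)/\delta=B/\delta$); the ``moreover'' congruence inclusion comes out of the same computation, or directly from the corresponding clause of Lemma~\ref{LEMSelfIntersectionPC}.

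For (s): I would induct on $l$ again, this time proving the stronger statement that $G\cap C\ne\varnothing$ and $G\cap C\le_{\TBA,\TC}B\cap C$. The base $l=0$ is immediate from $G<_{\TBA,\TC}B$. If $U_l\in\{\TBA,\TC,\TS\}$: the inductive hypothesis gives $G\cap\tilde C\le_{\TBA,\TC}B\cap\tilde C$ and $B\cap C\le_{U_l}B\cap\tilde C$ as above, hence $G\cap C=(G\cap\tilde C)\cap(B\cap C)\ne\varnothing$ by Lemma~\ref{LEMStrongNonemptyIntersection} (taken with ambient $B\cap\tilde C$, the conclusion being trivial when $G\cap\tilde C=B\cap\tilde C$) and $G\cap C\le_{\TBA,\TC}B\cap C$ by Lemma~\ref{LEMBACenterImplyIntersection}. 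If $U_l=\TD$ with congruence $\rho$: part (d) (roles $\tilde C,B$) gives $(B\cap\tilde C)/\rho\in\{\tilde C/\rho,\ \text{a single class}\}$; in the single-class case $B\cap\tilde C\subseteq F$, so $G\cap C=G\cap\tilde C$ and we quote the inductive hypothesis, while otherwise $(G\cap\tilde C)/\rho\le_{\TBA,\TC}\tilde C/\rho$ forces $(G\cap\tilde C)/\rho=\tilde C/\rho=(B\cap\tilde C)/\rho$, so $G\cap\tilde C$ meets the same $\rho$-block $F$ as $B\cap\tilde C$, giving $G\cap C\ne\varnothing$, and Lemma~\ref{LEMBACenterImplyIntersection} again delivers $G\cap C\le_{\TBA,\TC}B\cap C$.

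The hard part will be the one point isolated above in (d): reducing $(X\cap F)/\delta$ (with $X=B\cap\tilde C$, $X/\delta=B/\delta$, $X/\rho=\tilde C/\rho$) to Lemma~\ref{LEMSelfIntersectionPC}/Lemma~\ref{LEMTotallySymmetricRelationForIrreducible}, i.e. constructing the correct auxiliary symmetric relation and checking its closure properties (possibly after passing to a power of $\mathbf A$). Everything else is routine bookkeeping with the $\TBA/\TC/\TS$-lemmas already available.
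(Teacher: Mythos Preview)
Your inductive skeleton (sum of chain lengths, peel off the bottom step of $C$'s chain, case-split on its type) is close to the paper's, and your treatment of the $\TBA/\TC$ cases and of (s) is essentially correct. Two points, one minor and one substantive.

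\textbf{Minor.} ``Prove (d) first, then (s)'' does not survive contact with the $\TS$ case of (d): when $C<_{\TS}\tilde C$ is witnessed by $D\le C$ with $D<_{\TBA,\TC}\tilde C$, you need $B\cap D\ne\varnothing$ to conclude $(B\cap C)/\delta=B/\delta$, and that is exactly (s) (with roles swapped) at a strictly smaller parameter. So (d) and (s) must be proved together; the paper in fact does (s) first and feeds inductive (d) into it.

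\textbf{Substantive.} The ``technical heart'' is not handled by your sketch. The relation you propose --- tuples in $(A/\delta)^N$ whose representatives lie in $X\cap F$ and are ``pairwise $\delta$-related'' --- is trivially the diagonal (being $\delta$-related is being equal in $A/\delta$), so Lemma~\ref{LEMTotallySymmetricRelationForIrreducible} tells you nothing. If instead you take ``pairwise $\rho$-related'' and restrict representatives to $X$, the relation fails to be reflexive on $A/\delta$ (since $X/\delta=B/\delta$ need not be all of $A/\delta$), so the lemma again does not apply; and if you drop the restriction to $X$, the full-cube outcome no longer says anything about $(X\cap F)/\delta$. The paper's fix is to take the equivalence to be $\sigma\cap\omega$, the intersection of \emph{all} dividing congruences from \emph{both} chains (this makes the relation genuinely reflexive on $A/\delta$ and immediately yields the ``moreover'' clause in the diagonal case), and then, in the full-cube case, to walk through \emph{both} chains via the restricted relations $S_{m,n}$, using Lemma~\ref{LEMSelfIntersectionPC} to control the $\TD$ steps along $B$'s chain and the inductive (d) and (s) to control steps along $C$'s chain. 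Your one-step-at-a-time peeling does not bypass this: even after reducing to $X=B\cap\tilde C$ with $X/\delta=B/\delta$ and $X/\rho=\tilde C/\rho$, showing that the fiber $(X\cap F)/\delta$ is full-or-singleton still needs the same $S_{m,n}$-style argument (there is no shortcut from merely knowing that $B/\delta$ and $\tilde C/\rho$ are BA- and center-free). This is the missing idea.
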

\begin{proof}
Suppose 
\begin{align*}
    B=B_{k}<_{T_{k}(\sigma_{k})}^{A} B_{k-1}<_{T_{k-1}(\sigma_{k-1})}^{A} \dots <_{T_{2}(\sigma_{2})}^{A}B_{1}
<_{T_{1}(\sigma_{1})}^{A} B_{0} = A \\
C=C_{\ell}<_{\mathcal T_{\ell}(\omega_{\ell})}^{A} C_{\ell-1}<_{\mathcal T_{\ell-1}(\omega_{\ell-1})}^{A} \dots <_{\mathcal T_{2}(\omega_{2})}^{A}C_{1}
<_{\mathcal T_{1}(\omega_{1})}^{A} C_{0} =  A
\end{align*}
where $k$ and $\ell$ were chosen minimal.
%, and
%if $T_{i}\in\{\TBA,\TC,\TS\}$ or $\mathcal T_{i}\in\{\TBA,\TC,\TS\}$
%then we assume that 
%$\sigma_{i}$ or $\omega_{i}$ is $A^{2}$, respectively.
Put $\sigma = \bigcap_{i=1}^{k} \sigma_{i}$ 
and $\omega = \bigcap_{i=1}^{\ell} \omega_{i}$.
We prove the lemma by induction on 
$k+\ell$.
%$(|A/(\sigma\cap \omega)|,k+\ell)$, that is, we assume that the lemma 
%holds 
%for all $\sigma'$, $\omega'$, $k'$,  and $\ell'$  
% such that $|A/(\sigma'\cap \omega')|<|A/(\sigma\cap \omega)|$, 
% or 
% $|A/(\sigma'\cap \omega')|=|A/(\sigma\cap \omega)|$ and 
% $k'+\ell' <k+\ell$.
 
%Assume that the lemma holds for all smaller $k+\ell$.

% By the inductive assumption 
% either $B\cap C_{\ell-1}=\varnothing$, 
% or $|(B\cap C_{\ell-1})/\delta|=1$, or
% $(B\cap C_{\ell-1})/\delta=B/\delta$.
% The first two cases immediately 
% imply that
% $B\cap C_{\ell}=\varnothing$ or 
% $|(B\cap C_{\ell})/\delta|=1$, which is exactlly what we need.
% Thus, it remains to consider the case 
% when 
% $(B\cap C_{\ell-1})/\delta=B/\delta$.

% Assume that 
% $\mathcal T_{\ell}\in\{\TBA,\TC,\TS\}$, 
% then by Lemmas \ref{LEMBACenterSImplyIntersection} 
% and \ref{LEMBACenterSImplyFactor}
% we have 
% $(B\cap C_{\ell})/\delta\le_{\mathcal T_{\ell}}B/\delta$, 
% which by the definition of a PC congruence
% implies 
% $(B\cap C_{\ell})/\delta=B/\delta
% $ and completes this case.
% Thus, we assume that $\mathcal T_{\ell}\in\{\TPC,\TL\}$.

% Assume that 
% $\omega_{\ell}\supseteq (\sigma\cap\bigcap_{i=1}^{\ell-1} \omega_{i})$. Then 
% $B\cap C_{\ell} = B\cap C_{\ell-1}$
% and $(B\cap C_{\ell})/\delta=B/\delta$, which completes this case.
% Thus, below we additionally assume that
% $\sigma\cap\bigcap_{i=1}^{\ell-1} \omega_{i} \supsetneq \sigma\cap\bigcap_{i=1}^{\ell} \omega_{i}$, which will allow us to use the inductive assumption.

Let us prove (s) first.
By the inductive assumption 
$G\cap C_{\ell-1}\neq\varnothing$.
Let us consider the type $\mathcal T_{\ell}$ 
of the subuniverse $C_{\ell}$.

$\mathcal T_{\ell}=\TD$. 
By the inductive assumption  
$(B\cap C_{\ell-1})/\omega_{\ell}$
is either of size 1, or equal to 
$C_{\ell-1}/\omega_{\ell}$.
In the first case we have 
$G\cap C_{\ell}=G\cap B\cap C_{\ell}=
G\cap B\cap C_{\ell-1}=G\cap C_{\ell-1}\neq\varnothing$.
In the second case 
$(G\cap C_{\ell-1})/\omega_{\ell}<_{\TBA,\TC}
C_{\ell-1}/\omega_{\ell}$, 
which contradicts the definition of 
a divisible congruence $\omega_{\ell}$.

$\mathcal T_{\ell}\in\{\TBA,\TC\}$. 
By Lemma \ref{LEMBACenterImplyIntersection} 
$G\cap C_{\ell-1}<_{\TBA,\TC} B_{k}\cap C_{\ell-1}$,
$B_{k}\cap C_{\ell}<_{\mathcal T_{\ell}} B_{k}\cap C_{\ell-1}$,
and by Lemma \ref{LEMBACenterSPossibleIntersections} 
$G\cap C_{\ell}\neq \varnothing$.

$\mathcal T_{\ell}=\TS$.
Let $G'\le C_{\ell}$ and 
$G'<_{\TBA,\TC} C_{\ell-1}$.
By the inductive assumption 
$B_{k}\cap G'\neq \varnothing$.
Then by Lemma \ref{LEMBACenterImplyIntersection} 
$G\cap C_{\ell-1}<_{\TBA,\TC} B_{k}\cap C_{\ell-1}$,
$B_{k}\cap G'<_{\TBA,\TC} B_{k}\cap C_{\ell-1}$,
and by Lemma \ref{LEMBACenterSPossibleIntersections} 
$G\cap G'\neq \varnothing$.
Hence $G\cap C_{\ell}\neq\varnothing$.

Let us prove (d).
Define an $|A|$-ary relation 
$S\le (A/\delta)^{|A|}$ by 
$$S = \{(a_1/\delta,\dots,a_{|A|}/\delta)\mid 
\forall i,j\colon 
(a_{i},a_{j})\in \sigma\cap \omega
\}.$$
By Lemma \ref{LEMTotallySymmetricRelationForIrreducible}
% The relation $S$ is reflexive and totally symmetric, hence by the definition of a PC congruence 
% and Lemma \ref{LEMLinearCongruenceTotallySymmetric}
we have one of the two cases.

Case 1. $S= \{(b/\delta,\dots,b/\delta)\mid b\in A\}.$ 
Therefore, $\sigma\cap\omega\subseteq \delta$, which is the 
additional property we need.
Notice that 
$(a,b)\in\sigma\cap \omega$ for any $a,b\in B\cap C$, 
hence 
$a/\delta = b/\delta$ for any $a,b\in B\cap C$, 
and $|(B\cap C)/\delta|=1$, which completes this case.

% Since $\delta$ is an irreducible congruence and 
% $B\subseteq \delta^{*}$
% the relation 
% $\proj_{1,2}(R)$ is an equality relation on $A/\delta$. 
% This implies that 
% $\sigma\cap\omega\subseteq \delta$ and completes this case.

Case 2. $(B/\delta)^{|A|}\subseteq S.$ 
For $m\in\{0,1,\dots,k\}$ and $n\in\{0,1,\dots,\ell\}$
put 
$$S_{m,n} = \{(a_1/\delta,\dots,a_{|A|}/\delta)\mid 
\forall i,j\colon 
(a_{i},a_{j})\in \sigma\cap \omega,
\forall i\colon a_{i}\in B_{m}, a_i\in C_{n}
\}.$$
Notice that $S_{0,0} = S$.
Consider two subcases.

Subcase 2A. $(B/\delta)^{|A|}\not \subseteq S_{k,0}$.
Choose a minimal $m$ such that 
$(B/\delta)^{|A|}\not \subseteq S_{m,0}$.
Consider 3 subsubcases depending on the type  $T_{m}$.

Subsubcase 2A1.
$T_{m}\in\{\TBA,\TC\}$.
Combining Lemmas \ref{LEMBACenterSImplyPPDefinition} 
and \ref{LEMBACenterSImplyFactor}, 
we obtain 
$S_{m,0}\cap (B/\delta)^{|A|} <_{T_{m}}(B/\delta)^{|A|}$, 
which by Lemma \ref{LEMBACenterSOnPowerImplies} implies
the existence of BA or central subuniverse on $B/\delta$ and contradicts 
the definition of a dividing congruence.

Subsubcase 2A2.
$T_{m}=\TS$.
Choose $G\le B_{m}$ such that 
$G<_{\TBA,\TC} B_{m-1}$.
By Lemma \ref{LEMStrongNonemptyIntersection} $G\cap B\neq \varnothing$, hence even 
if we replace $B_{m}$ by $G$ in the definition of $S_{m,0}$ 
the intersection $S_{m,0}\cap (B/\delta)^{|A|}$
will be nonempty.
Then Lemmas \ref{LEMBACenterSImplyPPDefinition} 
and \ref{LEMBACenterSImplyFactor} imply the  
existence of both BA and central subuniverse on $B/\delta$ and contradicts  the definition of  a dividing congruence.

Subsubcase 2A3.
$T_{m}=\TD$.
Define a new relation 
$R\le (A/\delta)^{|A|} \times A/\sigma_{m}$
by
$$R = \{(a_1/\delta,\dots,a_{|A|}/\delta,b/\sigma_{m})\mid 
\forall i,j\colon 
(a_{i},a_{j})\in \sigma\cap \omega,
\forall i\colon a_{i}\in B_{m-1},
(a_{i},b)\in \sigma_{m}
\}.$$
Put 
$R' = R\cap ((B/\delta)^{|A|}\times B_{m-1}/\sigma_{m})$.
By the choice of $m$
we have 
$\proj_{1,2,\dots,|A|}(R') = (B/\delta)^{|A|}$
and $(B/\delta)^{|A|}\times B_{m}/\sigma_{m}\not\subseteq R'$.
Also, 
$\proj_{|A|+1}(R') = 
((B\circ\delta)\cap B_{m-1})/\sigma_{m}$.
Lemma \ref{LEMSelfIntersectionPC} 
implies that 
$((B\circ\delta)\cap B_{m-1})/\sigma_{m}=B_{m-1}/\sigma_{m}$ 
% and 
% $\proj_{|A|+1}(R')$ is BA and center free by the definition of 
% a PC subuniverse.
% If $T_{m}=L$
% then 
% $\proj_{|A|+1}(R')$ is BA and center free by Lemma
% \ref{LEMLinearIsBACenterFree}. 
Since $\proj_{1,2,\dots,|A|}(R') = (B/\delta)^{|A|}$,  there exists $d\in B_{m-1}/\sigma_{m}$ such that 
$(B/\delta)^{|A|}\times \{d\}\subseteq R'$.
Then $R'$ can be viewed as 
a binary relation 
with a center containing $d$.
Then Lemma \ref{LEMCentralRelationImplies}
implies the existence of a BA subuniverse on $B/\delta$
or a center on $B_{m-1}/\sigma_{m}$, which contradicts the definition of a dividing subuniverse.
%PC subuniverse or Lemma \ref{LEMLinearIsBACenterFree}.

Subcase 2B. $(B/\delta)^{|A|}\subseteq S_{k,0}$
and $(B/\delta)^{|A|}\not \subseteq S_{k,\ell}$.
Choose a minimal $n$ such that 
$(B/\delta)^{|A|}\not \subseteq S_{k,n}$.
Consider 3 subsubcases depending on the type  $\mathcal T_{n}$.

Subsubcase 2B1.
$\mathcal T_{n}\in\{\TBA,\TC\}$.
Combining Lemmas \ref{LEMBACenterSImplyPPDefinition} 
and \ref{LEMBACenterSImplyFactor}, 
we obtain 
$S_{k,n}\cap (B/\delta)^{|A|} <_{\mathcal T_{n}}(B/\delta)^{|A|}$, 
which by Lemma \ref{LEMBACenterSOnPowerImplies} implies
the existence of BA or central subuniverse on $B/\delta$ and contradicts 
the definition of a dividing congruence.
%or Lemma \ref{LEMLinearIsBACenterFree}.

Subsubcase 2B2.
$\mathcal T_{n}=\TS$.
Choose 
$G\le C_{n}$ such that 
$G<_{\TBA,\TC} C_{n-1}$.
By the inductive assumption 
$G\cap B_{k}\neq\varnothing$.
Then even if we replace $C_{n}$ by $G$ in the definition of 
$S_{m,n}$ we get a nonempty intersection with 
$(B/\delta)^{|A|}$.
By 
Lemmas \ref{LEMBACenterSImplyPPDefinition} 
and \ref{LEMBACenterSImplyFactor} 
we obtain 
both BA and central subuniverse on 
$(B/\delta)^{|A|}$, 
which by Lemma \ref{LEMBACenterSOnPowerImplies} implies
the existence of BA and central subuniverse on $B/\delta$ and 
contradicts 
the definition of a dividing congruence.

Subsubcase 2B3.
$\mathcal T_{n}=\TD$. %\in\{\TPC,\TL\}$.
Define a new relation 
$R\le (A/\delta)^{|A|} \times A/\omega_{n}$
by
$$R = \{(a_1/\delta,\dots,a_{|A|}/\delta,b/\omega_{n})\mid 
\forall i,j\colon 
(a_{i},a_{j})\in \sigma\cap \omega,
\forall i\colon a_{i}\in B_{k}, a_{i}\in C_{n-1},
(a_{i},b)\in \omega_{n}
\}.$$
Put 
$R' = R\cap ((B/\delta)^{|A|}\times C_{n-1}/\omega_{n})$.
By the choice of $n$
we have 
$\proj_{1,2,\dots,|A|}(R') = (B/\delta)^{|A|}$
and $(B/\delta)^{|A|}\times C_{n}/\omega_{n}\not\subseteq R'$.
Also, 
$\proj_{|A|+1}(R') = 
(B_{k}\cap C_{n-1})/\omega_{n}$.
By the inductive assumption 
either 
$(B_{k}\cap C_{n-1})/\omega_{n} = C_{n-1}/\omega_{n}$, 
or $(B_{k}\cap C_{n-1})/\omega_{n}|= C_{n}/\omega_{n}$.
In the second case 
we have  
$S_{k,n-1}\cap (B/\delta)^{|A|} = S_{k,n}\cap (B/\delta)^{|A|}$,
which contradicts the choice of $n$.
Thus, we assume that 
$\proj_{|A|+1}(R') = C_{n-1}/\omega_{n}$.
%and by the definition of a PC congruence 
%or Lemma \ref{LEMLinearIsBACenterFree}
%$\proj_{|A|+1}(R')$ is 
%BA and center free.
Since 
$\proj_{1,2,\dots,|A|}(R') = (B/\delta)^{|A|}$, 
there exists $d\in C_{n-1}/\omega_{n}$ such that 
$(B/\delta)^{|A|}\times \{d\}\subseteq R'$.
Then $R'$ can be viewed as 
a binary relation 
with a center containing $d$.
Then Lemma \ref{LEMCentralRelationImplies}
implies the existence of a BA subuniverse on $B/\delta$ or 
a center on $C_{n-1}/\omega_{n}$, which contradicts the definition of a dividing congruence.

Subcase 2C.
$(B/\delta)^{|A|}\subseteq S_{k,\ell}$.
Hence, 
$(B\cap C)/\delta = B/\delta$, which completes the proof.
\end{proof}

\begin{cor}\label{CORIntersectionPCLinearIsGood}
Suppose $R\le_{sd} \mathbf A_{1}\times \mathbf A_{2}$,
$B_1\lll A_1$, $B_2\lll A_2$,
$\sigma$ is a dividing congruence for $B_{1}\lll A_{1}$.
Then 
$\proj_{1}(R\cap (B_{1}\times B_{2}))/\sigma$ is 
either empty, or of size 1, or 
equal to $B_1/\sigma$.
\end{cor}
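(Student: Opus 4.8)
The plan is to derive Corollary~\ref{CORIntersectionPCLinearIsGood} from the fundamental intersection property already established in Lemma~\ref{LEMIntersectionPCLinearIsGood} by the standard trick of pulling back a reduction on a coordinate to the relation itself. First I would set up the machinery: consider the projection homomorphism $f_1\colon \mathbf R\to \mathbf A_1$ sending a tuple to its first coordinate, and observe that $R$ is a subdirect subalgebra of $\mathbf A_1\times\mathbf A_2$, so $f_1$ is surjective. Since $B_1\lll A_1$, by Lemma~\ref{LEMPropagation}(b) we get $f_1^{-1}(B_1)\lll f_1^{-1}(A_1)=R$; that is, $R\cap(B_1\times A_2)\lll R$. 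Similarly, pulling back $B_2\lll A_2$ along the second projection $f_2$ gives $R\cap(A_1\times B_2)\lll R$. Intersecting, $R\cap(B_1\times B_2)\lll R$ as well (using Lemma~\ref{LEMIntersectALL}(i), noting the intersection is nonempty or applying the dotted version), but the key point is that both $R\cap(B_1\times A_2)$ and $R\cap(A_1\times B_2)$ are $\lll$-below $R$.

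Next I would transport the dividing congruence. By Corollary~\ref{CORReverseHomomorphism} applied to $f_1$, since $\sigma$ is a dividing congruence for $B_1\lll A_1$ — meaning there is some $C_1<_{\TD(\sigma)}^{A_1}B_1$ appearing in the $\lll$-chain — we get a corresponding dividing situation $f_1^{-1}(C_1)<_{\TD(f_1^{-1}(\sigma))}^{R} f_1^{-1}(B_1)$ inside $R$. Write $\tilde\sigma = f_1^{-1}(\sigma)$, the congruence on $R$ obtained by pulling $\sigma$ back along $f_1$; note that two tuples of $R$ are $\tilde\sigma$-equivalent iff their first coordinates are $\sigma$-equivalent, so $R/\tilde\sigma$ is naturally identified with $\proj_1(R)/\sigma = A_1/\sigma$ via $f_1$, and blocks of $\tilde\sigma$ correspond to blocks of $\sigma$. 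Now apply Lemma~\ref{LEMIntersectionPCLinearIsGood}(d) with ambient algebra $\mathbf R$, with $B := R\cap(B_1\times A_2)\lll R$, with $C := R\cap(A_1\times B_2)\lll R$ (these intersect nonemptily precisely when $R\cap(B_1\times B_2)\neq\varnothing$), and with the dividing congruence $\tilde\sigma$ for $B\lll R$: the conclusion is that $(B\cap C)/\tilde\sigma$ has size $1$ or equals $B/\tilde\sigma$. Finally, translate through $f_1$: $B\cap C = R\cap(B_1\times B_2)$, so $f_1(B\cap C) = \proj_1(R\cap(B_1\times B_2))$, and $f_1$ carries $\tilde\sigma$-classes to $\sigma$-classes, whence $\proj_1(R\cap(B_1\times B_2))/\sigma$ has size $1$ or equals $\proj_1(R\cap(B_1\times A_2))/\sigma$. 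It remains to note $\proj_1(R\cap(B_1\times A_2)) = \proj_1(R)\cap B_1 = B_1$ (subdirectness of $R$), giving the stated trichotomy empty / size $1$ / equal to $B_1/\sigma$, the ``empty'' case being exactly when $R\cap(B_1\times B_2)=\varnothing$ so Lemma~\ref{LEMIntersectionPCLinearIsGood} does not apply.

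I expect the main obstacle to be bookkeeping the identification of $R/\tilde\sigma$ with $A_1/\sigma$ cleanly enough that the phrase ``$(B\cap C)/\tilde\sigma$ has size $1$'' really does say ``$\proj_1(R\cap(B_1\times B_2))/\sigma$ has size $1$'', and in particular checking that $\tilde\sigma$ genuinely is ``the'' dividing congruence one feeds to Lemma~\ref{LEMIntersectionPCLinearIsGood} rather than merely some congruence refining one — this is where Corollary~\ref{CORReverseHomomorphism} / Lemma~\ref{LEMReverseHomomorphism} does the real work, since it guarantees the pulled-back congruence still occurs as a $\TD$-congruence in a $\lll$-chain witnessing $B\lll R$. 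A secondary subtlety is that Lemma~\ref{LEMIntersectionPCLinearIsGood} is stated for $B\lll A$ with a specific minimal-length chain; one must make sure that the chain realizing $R\cap(B_1\times A_2)\lll R$ (obtained by pulling back the chain for $B_1\lll A_1$ coordinatewise, as in the proof of Lemma~\ref{LEMPropagation}(b)) contains $\tilde\sigma$ as one of its dividing congruences. Once those identifications are in place the argument is a direct citation of parts (b) and (bt) of Lemma~\ref{LEMPropagation}, Corollary~\ref{CORReverseHomomorphism}, and Lemma~\ref{LEMIntersectionPCLinearIsGood}(d), with no new computation required.
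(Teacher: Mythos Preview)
Your approach is essentially identical to the paper's: pull back $B_1,B_2$ along the coordinate projections $f_1,f_2$, pull back $\sigma$ to $\tilde\sigma=f_1^{-1}(\sigma)$, apply Lemma~\ref{LEMIntersectionPCLinearIsGood}(d) inside $\mathbf R$, and translate the conclusion through $f_1$. Your version is more careful about the bookkeeping (in particular you explicitly invoke Corollary~\ref{CORReverseHomomorphism} to justify that $\tilde\sigma$ is a dividing congruence for $f_1^{-1}(B_1)\le R$, and you correctly identify the ``full'' case as $f_1^{-1}(B_1)/\tilde\sigma$ rather than $R/\tilde\sigma$), but no new ideas are involved; the worry about $\tilde\sigma$ needing to occur in a specific chain is unnecessary, since Lemma~\ref{LEMIntersectionPCLinearIsGood}(d) only requires $\delta$ to be a dividing congruence for $B\le A$, not that it come from a particular witnessing sequence.
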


\begin{proof}
Consider homomorphisms 
$f_{1}\colon \mathbf R\to \mathbf A_{1}$ and 
$f_{2}\colon \mathbf R\to \mathbf A_{2}$
that maps each tuple to the first and the second coordinate respectively.
By Lemma \ref{LEMReverseHomomorphism}
$f_{1}^{-1}(B_{1})\lll \mathbf R$
and $f_{2}^{-1}(B_{2})\lll \mathbf R$.
Put $\delta = f^{-1}(\sigma)$.
By Lemma \ref{LEMIntersectionPCLinearIsGood}
$(f_{1}^{-1}(B_{1})\cap f_{2}^{-1}(B_{2}))/\delta$
is either empty, or of size 1, or equal to $R/\delta$.
Translating this into the language of $\sigma$
and $R\cap (B_{1}\times B_{2})$ we obtained the required statement.
\end{proof}

\begin{cor}\label{CORParallelogramPropertyForD}
Suppose 
$B_{1}\lll A$, $B_{2}\lll A$, 
$C_{1}<_{\TD(\sigma_{1})}^{A} B_{1}$,
$C_{1}'<_{\TD(\sigma_{1})}^{A} B_{1}$,
$C_{2}<_{\TD(\sigma_{2})}^{A} B_{2}$,
$C_{2}'<_{\TD(\sigma_{2})}^{A} B_{2}$,
$C_{1}'\cap C_{2}\neq \varnothing$, 
$C_{1}\cap C_{2}'\neq \varnothing$, and 
$C_{1}'\cap C_{2}'\neq \varnothing$.
Then 
$C_{1}\cap C_{2}\neq \varnothing$.
\end{cor}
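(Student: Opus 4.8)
The plan is to argue by contradiction, assuming $C_1\cap C_2=\varnothing$. By the definition of $<_{\TD}$, write $C_1=B_1\cap E_1$, $C_1'=B_1\cap E_1'$ for blocks $E_1,E_1'$ of $\sigma_1$, and $C_2=B_2\cap F_2$, $C_2'=B_2\cap F_2'$ for blocks of $\sigma_2$. First I would clear away degenerate cases: if $C_1=C_1'$ then $C_1\cap C_2=C_1'\cap C_2\neq\varnothing$, a contradiction, so $E_1\neq E_1'$. Put $M:=B_1\cap B_2$; since $C_1'\cap C_2'\subseteq M$ it is nonempty, hence $M\lll A$ by Lemma~\ref{LEMIntersectALL}(i). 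If $C_1\cap B_2=M$ then $M\subseteq C_1\subseteq E_1$, so $C_1\cap C_2=M\cap F_2\supseteq C_1'\cap C_2\neq\varnothing$, a contradiction; hence $C_1\cap B_2\subsetneq M$, and Lemma~\ref{LEMIntersectALL}(t) gives $C_1\cap B_2<_{\TD(\sigma_1)}^{A}M$. In particular $\sigma_1$ is a dividing congruence for $M\le A$ and $M/\sigma_1$ is BA and center free. Symmetrically, after excluding $C_2\cap B_1=M$, we get $M/\sigma_2$ BA and center free.

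The heart of the argument is the relation $Q\le_{sd} M/\sigma_1\times M/\sigma_2$ obtained as the image of $M$ under the homomorphism $a\mapsto(a/\sigma_1,a/\sigma_2)$. Note that $(E,F)\in Q$ exactly when some element of $M$ lies in the $\sigma_1$-block $E$ and the $\sigma_2$-block $F$; hence $(E_1,F_2)\in Q$ would at once yield an element of $(B_1\cap E_1)\cap(B_2\cap F_2)=C_1\cap C_2$, so it suffices to prove $Q=M/\sigma_1\times M/\sigma_2$ (note $E_1\in M/\sigma_1$ and $F_2\in M/\sigma_2$, since $C_1\cap C_2'$ and $C_1'\cap C_2$ are nonempty).

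The key step is to apply Lemma~\ref{LEMIntersectionPCLinearIsGood}(d) with $B:=M$, $C:=C_2'\cap B_1$, and $\delta:=\sigma_1$. Here $C_2'\cap B_1\le_{\TD(\sigma_2)}^{A}M$ by Lemma~\ref{LEMIntersectALL}(t), is nonempty, hence $\lll A$; so the lemma forces $(C_2'\cap B_1)/\sigma_1$ to be a singleton or all of $M/\sigma_1$. But $C_1\cap C_2'$ and $C_1'\cap C_2'$ are both contained in $C_2'\cap B_1$ and lie in the distinct $\sigma_1$-blocks $E_1$ and $E_1'$, so $(C_2'\cap B_1)/\sigma_1=M/\sigma_1$. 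Translated into $Q$, this says precisely that $(M/\sigma_1)\times\{F_2'\}\subseteq Q$. Now I would feed $Q^{-1}\le_{sd} M/\sigma_2\times M/\sigma_1$ into Lemma~\ref{LEMCentralRelationImplies}: the set $C:=\{c\in M/\sigma_2:(M/\sigma_1)\times\{c\}\subseteq Q\}$ is either a central subuniverse of $M/\sigma_2$ or else $M/\sigma_1$ has a nontrivial binary absorbing subuniverse. The latter is impossible since $M/\sigma_1$ is BA and center free; and $C\neq\varnothing$ as $F_2'\in C$; so, $M/\sigma_2$ being center free, $C=M/\sigma_2$, i.e. $Q=M/\sigma_1\times M/\sigma_2$. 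Then $(E_1,F_2)\in Q$ gives $m\in C_1\cap C_2$, contradicting $C_1\cap C_2=\varnothing$.

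The main obstacle — and what makes the statement nontrivial — is bridging from ``three of the four pairwise intersections are nonempty'' to ``the fourth is nonempty'': the relation $Q$ need not be rectangular in general, so one cannot just invoke a parallelogram law. The trick is that the hypotheses actually force an entire fibre $(M/\sigma_1)\times\{F_2'\}$ into $Q$ via Lemma~\ref{LEMIntersectionPCLinearIsGood}(d), and then Lemma~\ref{LEMCentralRelationImplies} combined with the BA- and center-freeness of $M/\sigma_1$ and $M/\sigma_2$ upgrades this single fibre all the way to the full product. A secondary point requiring care is that the auxiliary subuniverses $C_1\cap B_2$, $C_2\cap B_1$, $C_2'\cap B_1$ must be proper in $M$ in order to have access to the defining properties of a dividing congruence (the quotients being BA and center free); each case in which one of them equals $M$ is disposed of by hand, as above.
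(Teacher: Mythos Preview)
Your proof is correct and follows essentially the same route as the paper: both arguments set up the subdirect relation on $(B_1\cap B_2)/\sigma_1\times (B_1\cap B_2)/\sigma_2$, invoke Lemma~\ref{LEMIntersectionPCLinearIsGood}(d) to control its fibres, and finish with Lemma~\ref{LEMCentralRelationImplies} plus BA/center-freeness. The one genuine difference is that the paper splits into two cases --- either some $\sigma_1$-fibre of $S$ is full (then Lemma~\ref{LEMCentralRelationImplies} forces $S$ full) or every fibre is a singleton (then $S$ is bijective and $C_1'\cap C_2,\ C_1'\cap C_2'\neq\varnothing$ force $C_2=C_2'$) --- whereas you use the hypotheses $C_1\cap C_2'\neq\varnothing$ and $C_1'\cap C_2'\neq\varnothing$ to show directly that the specific fibre over $F_2'$ is already full, so Case~2 never arises. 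This makes your argument a touch more streamlined, at the cost of having to first dispose of the degenerate case $C_1=C_1'$.
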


\begin{proof}
By Lemma \ref{LEMIntersectionPCLinearIsGood} 
$(B_{1}\cap B_{2})/\sigma_1$ is either of size 1, or equal to
$B_{1}/\sigma_1$.
In the first case we 
have 
$C_{1}\cap B_{2}=B_{1}\cap B_{2}$
and $C_{1}\cap C_{2} = B_{1}\cap C_{2}\neq\varnothing$.
Similarly, if
$(B_{1}\cap B_{2})/\sigma_2$ is of size 1, then 
$C_{1}\cap C_{2} = C_{1}\cap B_{2}\neq\varnothing$.

Thus, we assume that 
$(B_{1}\cap B_{2})/\sigma_1 = B_{1}/\sigma_1$ 
and
$(B_{1}\cap B_{2})/\sigma_2 = B_{2}/\sigma_2$.
Let 
$S = \{(a/\sigma_1,a/\sigma_2)\mid a\in B_{1}\cap B_{2}\}$.
Then $S\le_{sd} (B_{1}/\sigma_1)\times (B_{2}/\sigma_2)$.

Applying Lemma \ref{LEMIntersectionPCLinearIsGood}  (d)
to $C<_{\TD(\sigma_1)}^{A} B_{1}$, $B_{2}\lll A$,
and the congruence $\sigma_2$
we derive one of the two cases:

Case 1. There exists $a\in B_{1}/\sigma_1$
such that $(a,b)\in S$ for every $b\in B_{2}/\sigma_2$.
If $S$ is full, we immediately obtain 
$C_{1}\cap C_{2}\neq \varnothing$.
Otherwise, Lemma \ref{LEMCentralRelationImplies}
implies the existence of a BA or center on $B_{1}/\sigma_{1}$ or $B_{2}/\sigma_2$, 
which contradicts our assumptions.

Case 2.
For every $a\in B_{1}/\sigma_1$
there exists a unique $b\in B_{2}/\sigma_2$
such that $(a,b)\in S$.
Choosing $a\in B_{1}/\sigma_1$ corresponding to $C_{1}'$ we 
derive that $C_{2} = C_{2}'$.
Hence, 
$C_{1}\cap C_{2} = C_{1}\cap C_{2}'\neq\varnothing$.
\end{proof}

\subsection{Properties of PC or Linear congruences}\label{SUBSECTIONIrreduciblePCOrLinear}
% \begin{lem}\label{LEMReflexiveRestrictedToBForPC}
% Suppose 
% $\sigma$ is a dividing congruence 
% for $B\le A$, 
% $R\le A/\sigma$ is a reflexive relation.
% Then the restriction of $R$ to $B/\sigma$ can be represented as a conjunction of equalities.
% \end{lem}

% \begin{proof}
% Without loss of generality we assume that 
% $\sigma$ is the equality relation on $A$ as otherwise we can factorize 
% everything by $\sigma$.

% Consider $R\le A^{n-1}\times A$ as a
% binary relation. 
% Let $S = \{(a_1,\dots,a_|A|)\mid \exists \alpha\in A^{n-1} \forall i 
% (\alpha a_i)\in R\}$.
% Put 
% $$C = \{c_1,\dots,c_{n-1}\mid \forall b\in B\colon (c_1,\dots,c_{n-1},b)\in R\}.$$

% Assume that 
% $B^{|A|}\subseteq S$. 
% Then $C\neq \varnothing$ 
% and by 
% Lemma \ref{LEMBAConLeftOrCenterOnRight} 
% $C\le_{BA}\proj_{1,\dots,n-1}(R)$
% and 
% $C\le_{C}\proj_{1,\dots,n-1}(R)$.

% By Lemma \ref{LEMTotallySymmetricRelationForIrreducible} we have one of the two cases.

% By Lemma \ref{LEMBAConLeftOrCenterOnRight}

% \end{proof}

To prove the following lemma we will need several standard algebraic definitions. Two algebras $\mathbf A$ and $\mathbf B$ are called \emph{polynomially equivalent} if $A = B$
the clone generated by $\mathbf A$ and all constants operations 
coincides with the clone generated by $\mathbf B$ and all constant operations.
An algebra is called \emph{affine} 
if it is polynomially equivalent to an $\mathbf R$-module.
An algebra is called \emph{Abelian} 
if all operations $t \in \Clo(\algA)$ (of arbitrary arity $n+1$) satisfy the
following condition
\begin{align*}
\forall x,y,u_1,\dots,u_{n},v_1,\dots,v_n \in A
\colon& \\t(x,u_1,\dots,u_n) = &t(x,v_1,\dots,v_n) \Rightarrow t(y,u_1,\dots,u_n) = t(y,v_1,\dots,v_n).
\end{align*}

An equivalent definition is given in the following lemma.

\begin{lem}[\cite{hobby1988structure}]\label{LEMAbelianEquivalentDefinition}
An algebra $\mathbf A$ is Abelian if and only if there exists a congruence 
$\delta$ on $\mathbf A^{2}$ such that $\{(a,a)\mid a\in A\}$ is a block of $\delta$.
\end{lem}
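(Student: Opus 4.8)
The plan is to prove the two implications separately, with the forward direction (Abelian $\Rightarrow$ existence of $\delta$) being the substantive one; no finiteness of $\mathbf A$ will be needed.

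\textbf{The easy direction.} Suppose $\delta$ is a congruence on $\mathbf A^{2}$ whose diagonal $D=\{(a,a)\mid a\in A\}$ is a single block. Given a term $t$ of arity $n+1$ and elements $x,y,u_1,\dots,u_n,v_1,\dots,v_n$ with $t(x,u_1,\dots,u_n)=t(x,v_1,\dots,v_n)$, I would apply $t^{\mathbf A^{2}}$ coordinatewise to the pairs $(x,x),(u_1,v_1),\dots,(u_n,v_n)$, obtaining the pair $(t(x,\bar u),t(x,\bar v))\in D$. Since $((x,x),(y,y))\in D\times D\subseteq\delta$ while the pairs $(u_j,v_j)$ are unchanged, replacing $(x,x)$ by $(y,y)$ stays $\delta$-related to this element, hence lands in the block $D$ again; that is precisely $t(y,u_1,\dots,u_n)=t(y,v_1,\dots,v_n)$, the term condition defining Abelianness.

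\textbf{The hard direction.} Assume $\mathbf A$ is Abelian and set $\delta:=\operatorname{Cg}_{\mathbf A^{2}}\bigl(\{((a,a),(b,b))\mid a,b\in A\}\bigr)$. Since $D\times D\subseteq\delta$, the set $D$ lies inside one $\delta$-block, and it remains to show that block is exactly $D$, i.e. $((a,a),(c,d))\in\delta$ forces $c=d$. I would invoke Maltsev's description of a generated congruence: such membership means there is a chain $(a,a)=z_0,z_1,\dots,z_k=(c,d)$ in $A^{2}$ such that for each $i$ there is a unary polynomial $g_i$ of $\mathbf A^{2}$ and a generator $((p_i,p_i),(q_i,q_i))$ with $\{z_i,z_{i+1}\}=\{g_i(p_i,p_i),g_i(q_i,q_i)\}$. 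Because the operations of $\mathbf A^{2}$ act coordinatewise, $g_i$ has the form $g_i(x_1,x_2)=\bigl(w_i(x_1,\bar c_i),w_i(x_2,\bar d_i)\bigr)$ for one shared term $w_i$ and parameter tuples $\bar c_i,\bar d_i$. Writing $z_i=(s_i,t_i)$, I would prove $s_i=t_i$ for all $i$ by induction: $s_0=t_0=a$, and if $s_i=t_i$ then one of $g_i(p_i,p_i),g_i(q_i,q_i)$ equals $(s_i,t_i)$ and the other equals $(s_{i+1},t_{i+1})$, so (say) $w_i(p_i,\bar c_i)=w_i(p_i,\bar d_i)$, and the Abelian term condition applied to $w_i$ yields $w_i(q_i,\bar c_i)=w_i(q_i,\bar d_i)$, i.e. $s_{i+1}=t_{i+1}$; the other case is symmetric with $p_i$ and $q_i$ interchanged. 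Hence $c=s_k=t_k=d$, so $D$ is the $\delta$-block of $(a,a)$.

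\textbf{Main obstacle.} The only place that needs care is the bookkeeping in the hard direction: describing a unary polynomial of $\mathbf A^{2}$ as a pair of $\mathbf A$-polynomials built from a \emph{single} common term (so that the one-variable Abelian condition applies verbatim, with the two halves of the parameter vector playing the roles of $\bar u$ and $\bar v$), and handling the Maltsev chain together with the symmetric alternative $\{z_i,z_{i+1}\}=\{g_i(q_i,q_i),g_i(p_i,p_i)\}$ uniformly. Everything else is formal.
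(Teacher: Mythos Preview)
Your argument is correct and is essentially the standard proof of this classical fact. Note, however, that the paper does not supply its own proof of this lemma: it is stated with a citation to Hobby--McKenzie and used as a black box, so there is no ``paper's proof'' to compare against. Your write-up would serve perfectly well as a self-contained justification, and the point you flag as the main obstacle (that a unary polynomial of $\mathbf A^{2}$ is given by a \emph{single} term applied with two different parameter tuples, so the one-variable term condition applies directly) is exactly the crux and you handle it correctly.
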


\begin{lem}\label{LemBridgeEquivalentToAbelianness}
An algebra $\mathbf A$ is Abelian if and only if 
there exists a bridge $\delta$ from $0_{\mathbf A}$ to $0_{\mathbf A}$ 
such that $\widetilde \delta = \proj_{1,2}(\delta) = \proj_{3,4}(\delta) = A^{2}$.
\end{lem}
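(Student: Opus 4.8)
\textbf{Proof plan for Lemma~\ref{LemBridgeEquivalentToAbelianness}.}
The plan is to prove the two implications separately, using Lemma~\ref{LEMAbelianEquivalentDefinition} as the bridge between the module-theoretic notion of Abelianness and the purely relational notion of a bridge. Throughout, recall that for a bridge $\delta$ we write $\widetilde\delta(x,y) = \delta(x,x,y,y)$, and that ``$\delta$ is a bridge from $0_{\mathbf A}$ to $0_{\mathbf A}$'' unpacks (via the four bridge axioms with $\sigma_1 = \sigma_2 = 0_{\mathbf A}$) to: $\delta\le \mathbf A^2\times\mathbf A^2$ is stable under swapping the first two coordinates and under swapping the last two coordinates, $\proj_{1,2}(\delta)\supsetneq 0_{\mathbf A}$, $\proj_{3,4}(\delta)\supsetneq 0_{\mathbf A}$, and $(a_1,a_2,a_3,a_4)\in\delta$ implies $a_1 = a_2 \Leftrightarrow a_3 = a_4$.

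For the direction ``Abelian $\Rightarrow$ bridge exists'': by Lemma~\ref{LEMAbelianEquivalentDefinition} there is a congruence $\Delta$ on $\mathbf A^2$ whose block containing the diagonal $0_{\mathbf A} = \{(a,a)\mid a\in A\}$ is exactly $0_{\mathbf A}$. View $\Delta$ as a $4$-ary relation $\delta(x_1,x_2,x_3,x_4) := ((x_1,x_2),(x_3,x_4))\in\Delta$; since $\Delta$ is a congruence on $\mathbf A^2$ it is a subalgebra of $(\mathbf A^2)^2 = \mathbf A^4$, and being reflexive on $\mathbf A^2$ it contains $(a,b,a,b)$ for all $a,b$, hence $\proj_{1,2}(\delta) = \proj_{3,4}(\delta) = A^2$; reflexivity also gives $(a,a,a,a)\in\delta$, so $\widetilde\delta = A^2$. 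The stability under the two coordinate swaps follows because $\Delta$ is symmetric (a congruence). The crucial fourth bridge axiom $(a_1 = a_2 \Leftrightarrow a_3 = a_4)$ is precisely the statement that the diagonal is a union of $\Delta$-blocks that meets $0_{\mathbf A}$ only in the diagonal: if $(a_1,a_2,a_3,a_4)\in\delta$ and $a_1 = a_2$, then $(a_1,a_2)$ lies in the diagonal block, which by hypothesis equals $0_{\mathbf A}$, so $(a_3,a_4)$ also lies in that block, i.e.\ $a_3 = a_4$; and conversely by symmetry. Thus $\delta$ is the required bridge.

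For the converse ``bridge exists $\Rightarrow$ Abelian'': given such a $\delta$, I would form the binary relation $\Delta$ on $\mathbf A^2$ defined by $((x_1,x_2),(x_3,x_4))\in\Delta \Leftrightarrow (x_1,x_2,x_3,x_4)\in\delta$, and show that the transitive closure $\widehat\Delta$ of $\Delta$ is a congruence on $\mathbf A^2$ whose diagonal block is exactly $0_{\mathbf A}$; then Lemma~\ref{LEMAbelianEquivalentDefinition} finishes. Reflexivity of $\Delta$ on $\mathbf A^2$ follows since $\proj_{1,2}(\delta) = A^2$ forces, for each $(a,b)$, some tuple $(a,b,c,d)\in\delta$; but actually I will want $(a,b,a,b)\in\delta$ directly — this should come from combining the generators: applying the WNU term operation componentwise to suitable tuples (using idempotency), or more simply noting that $\widetilde\delta = A^2$ gives $(a,a,b,b)\in\delta$ for all $a,b$ and then using closure of $\delta$ under the algebra operations together with the swap-stability to manufacture $(a,b,a,b)$. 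Symmetry of $\Delta$ is immediate from the swap-stabilities combined; $\Delta$ is a subalgebra of $\mathbf A^4$, hence so is its transitive closure (the transitive closure of a compatible relation is compatible). The only subtle point is that the diagonal block of $\widehat\Delta$ is $0_{\mathbf A}$ and nothing more: containment $\supseteq$ is clear; for $\subseteq$, suppose $(a,b)$ is $\widehat\Delta$-related to some $(c,c)$ via a chain; unravelling the chain and repeatedly invoking the fourth bridge axiom ($a_1 = a_2 \Leftrightarrow a_3 = a_4$) propagates ``being on the diagonal'' along the chain, forcing $a = b$.

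The main obstacle I expect is this last propagation step in the converse: a single application of the bridge axiom only controls one link of the transitive-closure chain, and one must be careful that the composed relation $\widehat\Delta$ still has the property that membership of one endpoint in the diagonal forces the other endpoint into the diagonal. The clean way around it is to observe that the set $\{\,((x_1,x_2),(x_3,x_4))\in\delta : x_1 = x_2\,\}$ equals $\{\,((a,a),(b,b)) : a,b\in A\,\}$ by the bridge axiom, i.e.\ the diagonal of $\mathbf A^2$ is already a full $\Delta$-block (not merely a union of blocks), so the diagonal is automatically a $\widehat\Delta$-block too, and no delicate chain argument is needed — the transitive closure cannot enlarge a set that is already closed both ``forward'' and ``backward'' under $\Delta$. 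I would make sure to state and use exactly this observation, since it is what makes the equivalence go through without any module computations.
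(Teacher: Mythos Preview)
Your overall strategy matches the paper's: exhibit the congruence of Lemma~\ref{LEMAbelianEquivalentDefinition} as the desired bridge, and conversely close up a given bridge into a congruence on $\mathbf A^2$. The forward direction is fine.

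There is, however, a real gap in the backward direction, caused by a misreading of bridge axioms (1) and (2). ``The first two variables of $\delta$ are stable under $0_{\mathbf A}$'' does \emph{not} mean that $\delta$ is closed under swapping its first two coordinates; it means that replacing $x_1$ (or $x_2$) by a $0_{\mathbf A}$-equivalent element keeps the tuple in $\delta$, which for $\sigma=0_{\mathbf A}$ is vacuous. Consequently your claim that $\Delta$ is symmetric (``from the swap-stabilities combined'') and your plan to manufacture $(a,b,a,b)\in\delta$ from $(a,a,b,b)$ via swaps both fail: nothing in the hypotheses forces $(a,b,c,d)\in\delta\Rightarrow(c,d,a,b)\in\delta$, nor $(a,b,a,b)\in\delta$ for off-diagonal $(a,b)$. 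Without reflexivity the transitive closure $\widehat\Delta=\bigcup_{n\ge 1}\Delta^n$ need not equal any single power $\Delta^N$ and hence need not be a subalgebra of $\mathbf A^4$; without symmetry it is not an equivalence relation, so Lemma~\ref{LEMAbelianEquivalentDefinition} cannot be invoked.

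The fix is to symmetrize and reflexivize compatibly with the algebra before taking the transitive closure. Replace $\delta$ by
\[
\delta_0(x_1,x_2,x_3,x_4)\;=\;\exists y_1\exists y_2\;\delta(x_1,x_2,y_1,y_2)\wedge\delta(x_3,x_4,y_1,y_2),
\]
that is, $\Delta_0=\Delta\circ\Delta^{-1}$. This is pp-definable (hence a subalgebra of $\mathbf A^4$), symmetric by construction, reflexive because $\proj_{1,2}(\delta)=A^2$, and still satisfies $x_1=x_2\Leftrightarrow x_3=x_4$. Now $\Delta_0^N$ for large $N$ is a congruence on $\mathbf A^2$, and your correct final observation --- that $\widetilde\delta=A^2$ together with bridge axiom (4) make the diagonal a full block already at the level of $\delta$, so no closure can enlarge it --- finishes the proof. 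This is what the paper's terse ``compose the bridge with itself sufficient number of times'' is gesturing at.
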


\begin{proof}
A congruence on $\mathbf A^{2}$ from Lemma 
\ref{LEMAbelianEquivalentDefinition} is a required bridge.

To obtain a congruence from a bridge, 
compose the bridge with itself sufficient number of times to obtain 
a reflexive symmetric transitive relation on $\mathbf A^{2}$.
\end{proof}

\begin{lem}[\cite{hobby1988structure}]\label{LEMAbelianEqualAffineForWNU}
Suppose a finite algebra $\mathbf A$ has a WNU term operation.
Then $\mathbf A$ is Abelian if and only if it is affine. 
\end{lem}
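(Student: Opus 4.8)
\textbf{Proof proposal for Lemma~\ref{LEMAbelianEqualAffineForWNU}.}

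The plan is to reduce the statement to classical universal-algebraic structure theory. One direction is immediate: if $\mathbf A$ is affine then it is polynomially equivalent to an $\mathbf R$-module, and every module is Abelian in the sense defined above (the condition on term operations is inherited under polynomial equivalence, and modules plainly satisfy it because $t(x,\bar u)=t(x,\bar v)$ forces the ``$\bar u$-part'' and ``$\bar v$-part'' of the affine combination to agree independently of $x$). So the content is the forward direction: a finite Abelian algebra with a WNU term operation is affine.

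First I would invoke the equivalent characterization via a bridge, namely Lemma~\ref{LemBridgeEquivalentToAbelianness}: being Abelian is the same as having a bridge $\delta$ from $0_{\mathbf A}$ to $0_{\mathbf A}$ with $\widetilde\delta=\proj_{1,2}(\delta)=\proj_{3,4}(\delta)=A^2$. Then I would use the WNU term operation $w$ to produce a Mal'cev-like term: the key classical fact is that a finite Abelian algebra in a congruence-modular variety is affine, and having a WNU term operation is enough to place $\mathbf A$ in the right setting. Concretely, from $w$ one constructs (this is the standard Maroti--McKenzie / idempotent-reduct argument) a term operation that, together with Abelianness, witnesses a module structure: one defines the abelian group operation $x+y$ and the additive inverse using $w$ applied to suitable repeated arguments together with a chosen zero element, and verifies the module axioms using the Abelian (term-condition) identity to move arguments freely. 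The cleanest route is to cite the theorem of Herrmann / Gumm (as presented in \cite{hobby1988structure}) that an Abelian algebra generating a congruence-modular variety is affine, and to observe that a finite algebra with a WNU term operation generates a congruence-modular (indeed congruence-meet-semidistributive-free, hence modular after accounting for the Abelian hypothesis) variety; more directly, a Taylor algebra that is Abelian has a Mal'cev term, and an Abelian algebra with a Mal'cev term is affine.

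The main obstacle is the step that extracts a Mal'cev term (or directly the module operations) from the WNU term under the Abelian hypothesis; this is where one genuinely uses that $\mathbf A$ is \emph{finite} (so that iterating $w$ stabilizes) and that the term condition lets one manipulate the arguments of $w$ as in an affine space. I would handle it by first noting that on the congruence-lattice level Abelianness plus a Taylor term forces congruence permutability of $\mathbf A$ (no ``strictly simple'' non-affine Abelian algebra admits a Taylor term), then quoting the standard fact that a congruence-permutable Abelian algebra is affine. Throughout I would rely only on results available in \cite{hobby1988structure} and on the WNU hypothesis, keeping the argument at the level of citations rather than reproving the structure theory of Abelian algebras.
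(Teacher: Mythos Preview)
The paper does not prove this lemma at all: it is stated with the citation \cite{hobby1988structure} and used as a black box, so there is no ``paper's own proof'' to compare against. Your proposal therefore goes well beyond what the paper does.

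As for the content of your sketch, the overall strategy (affine $\Rightarrow$ Abelian is trivial; for the converse, combine the Taylor/WNU hypothesis with Abelianness to force a module structure) is the right one, but one step is stated inaccurately. A WNU term operation does \emph{not} by itself put $\mathbf A$ in a congruence-modular variety (semilattices have a symmetric, hence WNU, term operation but are not congruence-modular), so the appeal to the Gumm/Herrmann theorem for congruence-modular varieties is not directly available. The clean route through \cite{hobby1988structure} is tame congruence theory: a finite algebra with a Taylor (equivalently, WNU) term generates a locally finite variety omitting type~$\mathbf 1$; an Abelian algebra has only types~$\mathbf 1$ and~$\mathbf 2$; hence an Abelian algebra with a WNU term has only type~$\mathbf 2$, and a finite simple algebra of type~$\mathbf 2$ is affine, from which affineness of $\mathbf A$ follows. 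Your fallback formulation ``a Taylor algebra that is Abelian has a Mal'cev term'' is correct but is essentially a restatement of the conclusion rather than an argument for it. If you want a self-contained proof, cite the type-omission argument above rather than congruence modularity.
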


\begin{lem}\label{LEMNiceBridgeGivesAbelianGroup}
Suppose 
$\sigma$ is a congruence on $\mathbf A$, 
$\delta$ is a bridge from $\sigma$ to $\sigma$
such that $\proj_{1,2}(\delta) = \widetilde \delta = A^{2}$
and $\delta(x_1,x_2,x_3,x_4) = \delta(x_3,x_4,x_1,x_2)$.
Then 
there exists an
abelian group $(G;+,-)$
such that $(A/\sigma; 
\delta/\sigma)\cong 
(G; x_1-x_2=x_3-x_4)$.
\end{lem}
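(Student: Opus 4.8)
The plan is to work modulo $\sigma$ from the outset: replacing $\mathbf A$ by $\mathbf A/\sigma$ (which is legitimate since $\delta$ descends to a bridge $\delta/\sigma$ from $0_{\mathbf A/\sigma}$ to $0_{\mathbf A/\sigma}$ with $\widetilde{\delta/\sigma} = \proj_{1,2}(\delta/\sigma) = (A/\sigma)^2$ and the symmetry condition is preserved), I may assume $\sigma = 0_{\mathbf A}$ and I must produce an abelian group $(G;+,-)$ with $(A;\delta)\cong(G; x_1-x_2=x_3-x_4)$. The first step is to extract the group structure from $\delta$. By Lemma~\ref{LEMAbelianEqualAffineForWNU} and Lemma~\ref{LemBridgeEquivalentToAbelianness}, the relation $\widetilde{\delta}=A^2$ together with the bridge $\delta$ (composed with itself enough times, as in the proof of Lemma~\ref{LemBridgeEquivalentToAbelianness}) witnesses that $\mathbf A$ is Abelian, hence affine: $\mathbf A$ is polynomially equivalent to an $\mathbf R$-module, so there is an abelian group $(A;+,-)$ and the term operation $w^{\mathbf A}$ acts as $w^{\mathbf A}(x_1,\dots,x_n)=\sum r_i x_i + c$ for module coefficients $r_i$ summing to $1$ (idempotency) and some constant $c$; by idempotency we can choose the zero of the group so that $c=0$.

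The second step is to identify $\delta$ itself. Set $G=(A;+,-)$ and let $Q=\{(a_1,a_2,a_3,a_4)\mid a_1-a_2=a_3-a_4\}$. I would show $\delta = Q$. One inclusion: $Q$ is clearly invariant under the affine operation $w^{\mathbf A}$ and under the addition/subtraction of the module; but more to the point, $\delta$ is a $4$-ary subalgebra of $\mathbf A^4$, and since $\delta$ contains the diagonal-type tuples forced by reflexivity and $\proj_{1,2}(\delta)=A^2$, I can generate all of $Q$ from a few tuples of $\delta$ using the module operations available in $\Clo(\mathbf A)$ plus constants (here the Abelian/affine structure is doing the work). For the reverse inclusion $\delta\subseteq Q$, I use the four defining properties of a bridge: the stability of the first two variables under $0_{\mathbf A}$ is vacuous, but condition (4) of the bridge definition — $(a_1,a_2,a_3,a_4)\in\delta \Rightarrow (a_1=a_2 \Leftrightarrow a_3=a_4)$ — is the crucial constraint. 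Combined with the symmetry $\delta(x_1,x_2,x_3,x_4)=\delta(x_3,x_4,x_1,x_2)$ and with Abelianness (which, via Lemma~\ref{LEMAbelianEquivalentDefinition}, says the diagonal is a block of a congruence on $\mathbf A^2$, so $\delta$ viewed as a binary relation on $A^2$ respects the group coset structure), this pins $\delta$ down to the single coset relation $a_1-a_2=a_3-a_4$: any tuple in $\delta$ with $a_1-a_2=d$ must, by translating with module operations, be forced to have $a_3-a_4=d$ as well, else condition (4) is violated at some translate.

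The main obstacle, I expect, is the reverse inclusion $\delta\subseteq Q$ — showing that the bridge is \emph{exactly} the coset relation and not some larger subalgebra of $A^4$. The subtlety is that a priori $\delta$ could be, say, $\{(a_1,a_2,a_3,a_4)\mid \psi(a_1-a_2)=a_3-a_4\}$ for some group endomorphism $\psi$ of $G$, or even something not expressible via a single endomorphism. The symmetry hypothesis $\delta(x_1,x_2,x_3,x_4)=\delta(x_3,x_4,x_1,x_2)$ should force $\psi$ to be an involution, and the bridge condition (4) (the kernel of the "difference on the left" equals the kernel on the right) should force $\psi$ to be injective, hence an automorphism; then $\proj_{1,2}(\delta)=\proj_{3,4}(\delta)=A^2$ gives surjectivity on both sides. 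One then replaces the group isomorphism $G\to G$ coordinatewise to absorb $\psi$, i.e. defines $G$ with a twisted addition so that in the new coordinates $\delta$ becomes literally $x_1-x_2=x_3-x_4$; here the freedom to choose $(G;+,-)$ in the statement (rather than demanding it be a fixed group) is exactly what makes this possible. I would carry out this normalization carefully, checking that the isomorphism of relational structures $(A/\sigma;\delta/\sigma)\cong(G;x_1-x_2=x_3-x_4)$ holds on the nose, and then lift back through $\sigma$ to conclude.
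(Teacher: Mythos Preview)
Your overall plan --- reduce modulo $\sigma$, invoke Abelianness/affineness via Lemmas~\ref{LemBridgeEquivalentToAbelianness} and~\ref{LEMAbelianEqualAffineForWNU}, then pin down $\delta$ inside the module --- is the paper's plan too. Where you diverge is in the endgame, and your ``absorb $\psi$'' step does not work. A relational-structure isomorphism $(A;\delta)\cong(G;x_1-x_2=x_3-x_4)$ uses the \emph{same} bijection on all four coordinates, so twisting coordinates $3,4$ by $\psi$ separately is not available. Concretely, take $A=\mathbb Z_3$, $\sigma=0_{\mathbf A}$, and $\delta=\{(a_1,a_2,a_3,a_4):a_2-a_1=a_3-a_4\}$ (your $\psi$ is $x\mapsto -x$): this $\delta$ is a subalgebra for any idempotent affine WNU on $\mathbb Z_3$, it is symmetric under swapping the two pairs, and it has $\proj_{1,2}(\delta)=\widetilde\delta=A^2$; yet the number of tuples of the form $(a,b,a,b)$ lying in $\delta$ is $3$, while in the relation $x_1-x_2=x_3-x_4$ on any $3$-element abelian group it is $9$. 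Since this count is an isomorphism invariant, no bijection and no choice of $(G;+)$ can realize the isomorphism you want.

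The paper does not attempt to describe $\delta$ as a $\psi$-twist and then untwist. Instead it passes to a congruence $\delta_0\supseteq\delta$ on $\mathbf A^2$ obtained by composing $\delta$ with itself, identifies $\delta_0/\sigma$ with the relation $x_1-x_2=x_3-x_4$ via module theory (congruences on $\mathbf G^2$ are coset relations, and the diagonal is a block), and then argues $\delta=\delta_0$: the Maltsev term together with $\widetilde\delta=\proj_{1,2}(\delta)=A^2$ gives $\proj_{1,2,3}(\delta)=(A/\sigma)^3$, and since in $\delta_0$ the fourth coordinate is determined by the first three, $\delta$ and $\delta_0$ must coincide. The containment $\delta\subseteq\delta_0$ is doing the real work here --- it is exactly what rules out a nontrivial $\psi$ --- so that is the step you should establish, rather than trying to change the group after the fact.
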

\begin{proof}
By Lemmas  \ref{LemBridgeEquivalentToAbelianness} and
\ref{LEMAbelianEqualAffineForWNU},
$\mathbf A/\sigma$ is affine and, therefore, polynomially equivalent to an $\mathbf R$-module
$\mathbf G = (G,+,0,-,(r)_{r \in \mathbf R})$.
Composing the bridge $\delta$ with itself we get a bridge 
$\delta_{0}\supseteq \delta$ which is an equivalence relation on 
$\mathbf A^{2}$.
Then $\delta_{0}/\sigma$ is a congruence on the
$\mathbf R$-module $\mathbf G^{2}$.
Congruences in $\mathbf R$-modules come from submodules ($u$ and $v$ are congruent if $u-v$ is
in a submodule).
Since the diagonal of $(A/\sigma)^2$ 
is the block of the congruence $\delta_{0}/\sigma$,
the corresponding submodule is $\{(x,y): x=y\}$.
Hence $\delta_{0}/\sigma =
\{((x_1,y_1),(x_2,y_2)): x_1-x_2=y_1-y_2\}$.
It remains to show that $\delta = \delta_{0}$.
Since $\delta/\sigma$ is preserved by the Maltsev operation 
$x-y+z$ and 
$\widetilde \delta =\proj_{1,2}(\delta) = A^{2}$, 
we derive that 
$\proj_{1,2,3}(\delta) = A^{3}$.
Since the last element of $\delta/\sigma$ is uniquely determined 
by the first three, we derive $\delta = \delta_{0}$.
\end{proof}

\begin{lem}\label{LEMBlockOfGoodBridgeDoesNotHaveBAC}
Suppose $\delta$ is a bridge from $0_{A}$ to $0_{A}$, 
$\proj_{1,2}(\delta)\subseteq \widetilde \delta$, 
$\proj_{1,2}(\delta)$ is linked. 
Then $A$ is BA and center free.
\end{lem}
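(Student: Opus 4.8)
The statement to prove is Lemma~\ref{LEMBlockOfGoodBridgeDoesNotHaveBAC}: if $\delta$ is a bridge from $0_A$ to $0_A$ with $\proj_{1,2}(\delta)\subseteq\widetilde\delta$ and $\proj_{1,2}(\delta)$ linked, then $A$ is BA and center free. The natural first move is to assume, for contradiction, that $A$ has a proper nonempty binary absorbing subuniverse $B<_{\TBA}A$ or a proper nonempty central subuniverse $B<_{\TC}A$. In either case $B$ is at least a ternary absorbing subuniverse (using Lemma~\ref{LEMCenterImpliesTernaryAbsorption} for the central case), so we have an absorption statement to exploit. The plan is to feed the bridge $\delta$ and the linkedness of $\proj_{1,2}(\delta)$ into the absorption machinery to force $B=A$.

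Concretely, I would first promote the given hypotheses to a cleaner form. Composing $\delta$ with itself enough times (as in the proof of Lemma~\ref{LemBridgeEquivalentToAbelianness} and Lemma~\ref{LEMBridgeComposition}) we may assume $\delta$ is reflexive, symmetric in the sense $\delta(x_1,x_2,x_3,x_4)=\delta(x_3,x_4,x_1,x_2)$, and that $\widetilde\delta$ is the full transitive closure of the original $\widetilde\delta$ — but since $\proj_{1,2}(\delta)$ is linked and contained in $\widetilde\delta$, and $\widetilde\delta$ is a subalgebra of $A^2$, linkedness of $\widetilde\delta$ gives $\widetilde\delta=A^2$ after composition. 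Thus WLOG $\widetilde\delta=A^2$, i.e.\ $\delta(x,x,y,y)$ holds for all $x,y$. Now restrict: since $B$ absorbs $A$, the intersection $\delta\cap(B^2\times B^2)$ still has full $\widetilde{(\cdot)}$ on $B$ by Lemma~\ref{LEMBACenterLinkedness} (linkedness is preserved under intersection with absorbing subuniverses), and $\proj_{1,2}$ of this restricted bridge projects onto $\proj_{1,2}(\delta)\cap B^2$. The key point is that a bridge from $0_A$ to $0_A$ with $\widetilde\delta=A^2$ witnesses that $A$ is Abelian (Lemma~\ref{LemBridgeEquivalentToAbelianness}), hence affine (Lemma~\ref{LEMAbelianEqualAffineForWNU}), and an affine algebra with a WNU term operation has no proper binary absorbing or central subuniverse — this last fact should be the cited Lemma~\ref{LEMNoAbsCenterPCInLinearAlgebra} or follow directly from affineness (an affine algebra is term-equivalent to a module, whose only absorbing/central subuniverse is the whole set, since a module has a Maltsev term and Maltsev algebras are absorption-free except trivially).

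The main obstacle is that $\widetilde\delta=A^2$ and $\proj_{1,2}(\delta)\subseteq\widetilde\delta$ do \emph{not} immediately give $\proj_{1,2}(\delta)=A^2$, so $\delta$ need not literally be the ``module difference'' bridge of Lemma~\ref{LEMNiceBridgeGivesAbelianGroup}; we only know $\proj_{1,2}(\delta)$ is linked. So the careful step is: from linkedness of $\proj_{1,2}(\delta)$ together with $\proj_{1,2}(\delta)\subseteq\widetilde\delta$, show that after composing $\delta$ with a suitable power of itself we do get $\proj_{1,2}=\widetilde\delta$. Here I would use that $\widetilde{\delta\circ\delta}=\widetilde\delta\circ\widetilde\delta$ (the composition formula recorded just after the definition of bridges, via Lemma~\ref{LEMBridgeComposition}) while $\proj_{1,2}$ of the composite contains $\proj_{1,2}(\delta)\circ\proj_{1,2}(\delta)$; iterating and using that $\proj_{1,2}(\delta)$ is linked so its iterated compositions fill up a full equivalence block, combined with $\widetilde\delta=A^2$, forces the composite bridge $\delta'$ to satisfy $\widetilde{\delta'}=\proj_{1,2}(\delta')=A^2$. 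Then Lemma~\ref{LEMNiceBridgeGivesAbelianGroup} applies to $\delta'$ (with $\sigma=0_A$), giving $A$ affine, and affineness contradicts the existence of $B<_{\TBA}A$ or $B<_{\TC}A$. Hence $A$ is BA and center free, completing the proof.
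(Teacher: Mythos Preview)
There is a genuine gap. Your key step is to compose the bridge with itself and claim that eventually $\proj_{1,2}(\delta')=A^2$, arguing that ``$\proj_{1,2}$ of the composite contains $\proj_{1,2}(\delta)\circ\proj_{1,2}(\delta)$''. This is false. Recall the bridge composition formula
\[
\delta'(x_1,x_2,z_1,z_2)=\exists y_1\exists y_2\ \delta(x_1,x_2,y_1,y_2)\wedge\delta(y_1,y_2,z_1,z_2).
\]
Any $(x_1,x_2)\in\proj_{1,2}(\delta')$ must already satisfy $(x_1,x_2)\in\proj_{1,2}(\delta)$, so $\proj_{1,2}(\delta')\subseteq\proj_{1,2}(\delta)$. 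Composition can only shrink (or preserve) the $\proj_{1,2}$ part; it never grows a linked proper subrelation of $A^2$ up to $A^2$. Consequently you cannot reach the hypotheses of Lemma~\ref{LemBridgeEquivalentToAbelianness} or Lemma~\ref{LEMNiceBridgeGivesAbelianGroup} this way, and the affineness conclusion is unavailable.

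More structurally, your plan is circular relative to the paper's architecture. In Lemma~\ref{LEMNontrivialReflexiveBridgeImplies} the paper \emph{uses} the present lemma to establish that the block $B$ is BA and center free, and only \emph{after} that applies Lemma~\ref{LEMLinkedImpliesBACenter} to force $\sigma^*\cap B^2=B^2$ (equivalently $\proj_{1,2}=A^2$ on the block), and only \emph{then} invokes Lemma~\ref{LEMNiceBridgeGivesAbelianGroup} to obtain the affine structure. You are trying to get affineness first, but the standard route to $\proj_{1,2}=A^2$ from ``linked'' is precisely Lemma~\ref{LEMLinkedImpliesBACenter}, which needs BA/center freeness --- the thing you are trying to prove. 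The paper's own proof avoids this by arguing directly: it proceeds by induction on $|A|$, symmetrises the bridge, takes $(a,b)\in\omega$ crossing the putative absorbing/central $B$, and splits into the case where $a,b$ lie in a proper subalgebra (handled by induction) versus the case where $\{a,b\}$ generates $A$ (handled by an explicit absorption computation inside $\delta$). That direct argument is what you need here; the shortcut through affineness does not close.
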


\begin{proof}
We prove by induction on the size of $A$.
First, we build a symmetric bridge $\sigma$ from $\delta$ by 
$$\sigma(x_1,x_2,x_3,x_4) = \exists x_5 \exists x_6\;
\delta(x_1,x_2,x_5,x_5)
\wedge 
\delta(x_3,x_4,x_5,x_6).$$
It follows from the definition of the bridge that 
$\sigma$ satisfies all the properties 
satisfied by $\delta$ but additionally we have 
$\sigma(x_1,x_2,x_3,x_4) =\sigma(x_3,x_4,x_1,x_2)$.
Put $\omega = \proj_{1,2}(\sigma)$.

Assume that there exists  
$B<_{T} A$, where $T\in\{\TBA,\TC\}$.
Since $\proj_{1,2}(\sigma)$ is linked 
either
$((A\setminus B)\times B)\cap \omega \neq\varnothing$, 
or $(B\times (A\setminus B)\cap \omega \neq\varnothing$.
Without loss of generality let it be the first case.
Then choose $(a,b) \in ((A\setminus B)\times B)\cap \omega$.
Consider two cases:

Case 1. There exists a subalgebra 
$D\lneq A$ such that $a,b\in D$. 
Let $E$ be the maximal subuniverse of $D$ such that 
$\omega\cap E^{2}$ is linked and $a,b\in E$.
Then we apply the inductive assumption to 
$\sigma\cap E^4$ and derive that 
$E\cap B$ cannot be a proper BA or central subuniverse. 
By Lemma \ref{LEMBACenterImplyIntersection}
$E\cap B<_{T}E$, which gives a contradiction.

Case 2. 
There does not exist a subalgebra
$D\lneq A$ such that $a,b\in D$.
Hence $\{a\}\circ\omega = A$.
Put 
$\xi(x_1,x_2) = \sigma(a,x_1,x_2,b)$.
Since 
$(a,b,a,b)$ and $(a,a,b,b)$ are from $\sigma$, 
both $\proj_{1}(\xi)$ 
and 
$\proj_{2}(\xi)$ 
contains $a$ and $b$. 
Be the definition of case 2
we derive 
$\proj_{1}(\xi) = \proj_{2}(\xi) =A$.
Put 
$C = B\circ \xi$. 
By Lemma \ref{LEMBACenterSImplyPPDefinition}
$C\le_{T} A$.
Notice that $b\notin C$ as otherwise we would have a tuple 
$(a,c,b,b)\in \sigma$ for some $c\in B$, which is not possible because $a\notin B$.
Thus, we have 
$B<_{T}A$, $C<_{T} A$, 
$b\in B$, $a\in C$, $b\notin C$. 
If $T =BA$ then 
applying the binary absorbing operation to 
$(a,b,a,b)$ and $(a,a,b,b)$ 
we get a tuple $(a,b_1,b_2,b)$, 
where $b_{1},b_{2}\in B$. This implies that 
$C\cap B\neq \varnothing$.
If $T =C$ then by Corollary \ref{CORMainStableIntersection} we have
\begin{align*}
&(\{a\}\times B\times C\times B)\cap\sigma\neq\varnothing \\
&(\{a\}\times C\times C\times C)\cap\sigma\neq\varnothing \;\;\;\;\;\;\;\Rightarrow \;\;\;\;(\{a\}\times C\times C\times B)\cap\sigma\neq\varnothing\\
&(\{a\}\times C\times B\times B)\cap\sigma\neq\varnothing  
\end{align*}
Consider 
$C' = \proj_{2}(\sigma\cap (\{a\}\times C\times C\times B))$.
Using Lemma \ref{LEMBACenterSImplyPPDefinition}
and the fact that 
$\proj_{1}(\xi) = A$,
we derive 
$C'\le_{C} C$.
Moreover, by the definition of a bridge 
$a\notin C'$ or
$C\cap B\neq\varnothing$.
Thus, in all the remaining cases we either 
have $C\cap B\neq\varnothing$, or 
$C'<_{C} C$.
In the first case put 
$C' = B\cap C$.

To complete the proof we consider the case when 
$\varnothing\neq C'<_{T} C<_{T}A$.
Notice that $|C|>1$.
%Since $|C|>1$, there 
%exists $a'\in C$ such that $a\neq a'$.
Applying the inductive assumption to
$\sigma\cap C^{4}$ and using the fact that 
$\{a\}\circ \omega = A$, we get a contradiction 
with $C'<_{T} C$.
\end{proof}

\begin{lem}\label{LEMNontrivialReflexiveBridgeImplies}
Suppose 
$\sigma$ is an irreducible congruence on $A$, $\delta$ is a bridge
from $\sigma$ to $\sigma$ such that 
$\widetilde \delta\supsetneq \sigma$. Then 
\begin{enumerate}
    \item $\sigma^{*}$ is a congruence;
    \item $B/\sigma$ is BA and center free for each block $B$ of $\sigma^{*}$.
    \item if $\delta(x_1,x_2,x_3,x_4) = \delta(x_3,x_4,x_1,x_2)$, then there exists a prime $p$  
    such that for every block $B$ of $\sigma^{*}$ we have
    $(B/\sigma;(\delta\cap B^{4})/\sigma)\cong 
    (\mathbb Z_{p}^{n_{B}}; x_1-x_2=x_3-x_4)$, where $n_{B}\ge 0$.
\end{enumerate}
\end{lem}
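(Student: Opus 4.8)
\textbf{Proof plan for Lemma~\ref{LEMNontrivialReflexiveBridgeImplies}.}
The plan is to reduce everything to the symmetric case and then invoke the structural lemmas already proved in this subsection. First I would symmetrize the bridge: given $\delta$ from $\sigma$ to $\sigma$ with $\widetilde\delta\supsetneq\sigma$, compose $\delta$ with copies of itself and its converse to obtain a bridge $\delta_0$ from $\sigma$ to $\sigma$ satisfying $\delta_0(x_1,x_2,x_3,x_4)=\delta_0(x_3,x_4,x_1,x_2)$ and $\widetilde{\delta_0}\supseteq\widetilde\delta\supsetneq\sigma$; by the composition lemma (Lemma~\ref{LEMBridgeComposition}) this is again a bridge since $\sigma$ is irreducible, and moreover $\widetilde{\delta_0}$ is the transitive closure of $\widetilde\delta\circ\widetilde{\delta}^{-1}\circ\cdots$, hence an equivalence relation containing $\sigma$ strictly. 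Working modulo $\sigma$, we may assume $\sigma=0_{\mathbf A}$: by the remark that a congruence is irreducible/PC/linear iff $0_{\mathbf A/\sigma}$ is, and by factoring the bridge, it suffices to treat $\sigma=0_{\mathbf A}$ and a symmetric bridge $\delta_0$ with $\widetilde{\delta_0}\supsetneq 0_{\mathbf A}$.

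Next I would establish part~(1), that $\sigma^{*}$ (equivalently $0_{\mathbf A}^{*}=\widetilde{\delta_0}$ when it is already an equivalence relation, but in general the minimal $\delta\le\mathbf A\times\mathbf A$ with $\delta\supsetneq\sigma$ stable under $\sigma$) is a congruence. Here I would argue that $\widetilde{\delta_0}$, being the relation $\proj_{1,2}(\delta_0)$ of a symmetric bridge, is reflexive, symmetric, transitive (by the symmetrization), and a subalgebra of $\mathbf A^{2}$, hence a congruence; and since $\widetilde{\delta_0}\supsetneq 0_{\mathbf A}$ and is stable under $0_{\mathbf A}$ trivially, $\sigma^{*}\subseteq\widetilde{\delta_0}$. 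For the reverse inclusion one uses that $\cover{\sigma}$ is by definition the minimal stable relation strictly above $\sigma$, so $\sigma^{*}\subseteq\widetilde{\delta_0}$, and one checks $\widetilde{\delta_0}$ is itself stable under $\sigma$, forcing $\sigma^{*}=\widetilde{\delta_0}$ when the bridge is the "canonical" one; in any case the restriction of $\delta_0$ to one block $B$ of $\sigma^{*}$ gives a bridge $\delta_0\cap B^{4}$ from $0_{\mathbf B/\sigma}$ to itself whose $\widetilde{}$-relation is linked (it is all of $(B/\sigma)^{2}$ after enough composition, since $B$ is a single block). Part~(2) is then immediate from Lemma~\ref{LEMBlockOfGoodBridgeDoesNotHaveBAC}: that lemma says precisely that if there is a bridge from $0$ to $0$ on an algebra with $\proj_{1,2}(\delta)\subseteq\widetilde\delta$ and $\proj_{1,2}(\delta)$ linked, then the algebra is BA and center free; apply it to $\delta_0\cap B^{4}$ on $B/\sigma$.

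For part~(3), assume the symmetry $\delta(x_1,x_2,x_3,x_4)=\delta(x_3,x_4,x_1,x_2)$ is already given (no symmetrization needed, or symmetrization preserves it). On a fixed block $B$ of $\sigma^{*}$ the relation $\delta\cap B^{4}$, factored by $\sigma$, is a symmetric bridge from $0$ to $0$ on $B/\sigma$ with $\widetilde{}=\proj_{1,2}=(B/\sigma)^{2}$ (linkedness plus the block structure gives the full square), so Lemma~\ref{LEMNiceBridgeGivesAbelianGroup} applies and yields an abelian group $G_B$ with $(B/\sigma;(\delta\cap B^4)/\sigma)\cong(G_B;x_1-x_2=x_3-x_4)$. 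To see $G_B$ is elementary abelian of exponent $p$ — i.e. $\cong\mathbb Z_p^{n_B}$ — and that $p$ does not depend on $B$, I would use that $\mathbf A\in\mathcal V_n$ has an idempotent WNU $w$ of arity $n$: applying $w$ coordinatewise to the generators $(a,b,a,b)$ of the group relation shows $n\cdot g = g$ in $G_B$ for the relevant elements, so $(n-1)g=0$ for all $g$, forcing the exponent of $G_B$ to divide $n-1$; but an abelian group that is the $\sigma$-image of a single $\sigma^{*}$-block of a minimal-type situation must have prime exponent (otherwise a proper subgroup would give a proper stable relation between $\sigma$ and $\sigma^{*}$, contradicting irreducibility of $\sigma$). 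Uniformity of $p$ across blocks follows because all blocks are connected through $\mathbf A$ and the same bridge $\delta$, so the induced module structures are linked; alternatively one picks $p$ to be the common exponent extracted from $w$ and irreducibility.

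\textbf{Main obstacle.} The delicate point I expect to be hardest is pinning down that the group $G_B$ has \emph{prime} order $p$ rather than merely being a finite abelian group of exponent dividing $n-1$, and that this $p$ is the same for every block of $\sigma^{*}$. Getting primality uses irreducibility of $\sigma$ in an essential way: a nontrivial proper subgroup of $G_B$ would pull back to a subalgebra of $\mathbf A\times\mathbf A$ strictly between $\sigma$ and $\cover\sigma$ that is stable under $\sigma$, contradicting minimality of $\cover\sigma$; making this pullback argument precise — and checking it simultaneously forces the same prime for all blocks via the ambient connectedness supplied by $\delta$ on all of $\mathbf A^{2}$ — is where the real work lies. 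Everything else is routine bookkeeping with the bridge-composition lemma and the already-established Lemmas~\ref{LEMBlockOfGoodBridgeDoesNotHaveBAC} and~\ref{LEMNiceBridgeGivesAbelianGroup}.
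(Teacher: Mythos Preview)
Your overall strategy --- reduce to $\sigma=0_{\mathbf A}$, invoke Lemma~\ref{LEMBlockOfGoodBridgeDoesNotHaveBAC} for part~(2), and Lemma~\ref{LEMNiceBridgeGivesAbelianGroup} for part~(3) --- matches the paper. But two steps are genuinely broken.

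\textbf{Part~(1).} Your argument is circular: you speak of ``a block $B$ of $\sigma^{*}$'' before having shown that $\sigma^{*}$ is an equivalence relation at all. A priori $\sigma^{*}$ is only the minimal subalgebra of $\mathbf A^{2}$ strictly above $\sigma$ and stable under $\sigma$; it need not be symmetric or transitive, and it need not equal $\widetilde{\delta_0}$ (you only get $\sigma^{*}\subseteq\widetilde{\delta_0}$). The paper fixes this by working instead with a block $B$ of $\LeftLinked(\sigma^{*})$, which \emph{is} an equivalence relation by construction. On such a block one applies Lemma~\ref{LEMBlockOfGoodBridgeDoesNotHaveBAC} to $\delta\cap B^{4}\cap(\sigma^{*}\times\sigma^{*})$ to get that $B$ is BA and center free, and then Lemma~\ref{LEMLinkedImpliesBACenter} applied to $\sigma^{*}\cap B^{2}$ forces $B^{2}\subseteq\sigma^{*}$. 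Hence $\LeftLinked(\sigma^{*})=\sigma^{*}$ and $\sigma^{*}$ is a congruence. You are missing this use of Lemma~\ref{LEMLinkedImpliesBACenter} entirely.

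\textbf{Part~(3), the prime.} Your WNU-based argument does not work: the ambient hypothesis is only that $\mathbf A$ has \emph{some} WNU term, with no control over its arity, so ``exponent divides $n-1$'' is meaningless. Your fallback idea --- a proper subgroup of $G_B$ yields a relation strictly between $\sigma$ and $\sigma^{*}$ --- is the right intuition, but the subgroup lives on a single block and you must produce a \emph{global} subalgebra of $\mathbf A^{2}$. The paper does this by a pp-definition: from $\omega=\delta\cap(\sigma^{*}\times\sigma^{*})$ one builds the relation ``$p_1\cdot x_1=p_1\cdot x_2$'' globally on $A$ (iterating the formula $(k{+}1)x_1=kx_2+x_3$), obtaining a congruence $S$ with $\sigma\subsetneq S$ but $\sigma^{*}\not\subseteq S$, contradicting the definition of $\sigma^{*}$. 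The same pp-construction handles uniformity of $p$ across blocks; your ``blocks are connected through $\mathbf A$'' is not an argument.
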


\begin{proof}
Since the relations 
$\sigma$, $\sigma^{*}$, $\delta$ are stable under $\sigma$ 
we can factorize them by $\sigma$ 
and consider $0_{A/\sigma}$, $\sigma^{*}/\sigma$, 
and $\delta/\sigma$ instead.
To avoid new notations we assume  that 
$\sigma = 0_{A}$. 

Consider some block $B$ of $\LeftLinked(\sigma^{*})$
that is not a block of $\sigma$ (equivalently, of size greater than one).
Put $\delta' = \delta \cap B^{4}\cap (\sigma^{*}\times\sigma^{*})$.
Since $\widetilde \delta\supseteq \sigma^{*}$, $\delta'$ satisfies all the conditions of 
Lemma \ref{LEMBlockOfGoodBridgeDoesNotHaveBAC}, which 
implies that  
$B$ is BA and center free.
%Let us prove that $\sigma^{*}$ is a congruence.
%Assume the converse and consider a 
%block $B$ of $\LeftLinked(\sigma)$
%such that $B^{2}\not\subseteq\sigma^{*}$.
Applying Lemma \ref{LEMLinkedImpliesBACenter}
to $\sigma^{*}\cap B^{2}$ we derive that 
$B^{2}\subseteq \sigma^{*}$.
Therefore, $\LeftLinked(\sigma^{*})=\sigma^{*}$ and 
$\sigma^{*}$ is a congruence.

To prove the rest
consider a block $B$ of $\sigma^{*}$ of size at least 2
and apply Lemma \ref{LEMNiceBridgeGivesAbelianGroup} 
to the bridge $\delta' = \delta\cap B^{4}$.
Then, $(B;\delta)\cong 
(G_{B};x_1-x_2=x_3-x_4)$
for some Abelian group 
$(G_{B};+,-)$.
It remains to show there exists a prime $p$ such that 
each Abelian group $G_{p}$ is isomorphic to 
$(\mathbf Z_{p}^{n};+,-)$.

% Since $\widetilde\delta' = B^{2}$,
% $\mathbf B$ is Abelian (see Definition \ref{DefAbelian})
% and is affine by Lemma \ref{LEMAbelianTaylorIsAffine}.
% Then by Lemma \ref{LEMNEEDTOFINDAREFERENCE} 
% $(B;\delta')\cong(B;x_1-x_2=x_3-x_4)$ for some Abelian   
% group $(B;+)$.
% It remains to show that 
% this Abelian group is isomorphic to $\mathbf Z_{p}^{n}$ for some $n$.

% For every block $B$ of $\sigma^{*}$ by
% $G_{B}$ we denote the corresponding Abelian group $(B;+)$.

First, we simplify the bridge $\delta$ 
and consider $\omega = \delta\cap (\sigma^{*}\times\sigma^{*})$.
Notice that $\omega$ satisfies the same properties.

Assume that 
$G_{B}$ has elements of coprime orders $p_1$ and $p_2$ 
or $G_{B}$ has a element of order $p_{1}^{k}$, where $k\ge 2$.
Composing the relation $x_1-x_2=x_3-x_4$ we can define a 
relation $(k+1)\cdot x_1 = k\cdot x_2+x_3$ for any $k\in\mathbb N$.
In fact 
$$
((k+1)\cdot x_1 = k\cdot x_2+x_3) =
\exists x_4 \; (k\cdot x_1 = (k-1)\cdot x_2+x_4)
\wedge (x_1-x_2 = x_3-x_4).
$$
Hence $p_1\cdot x_1 = p_1\cdot x_2$ is also pp-definable
from $x_1-x_2 = x_3-x_4$.
Let this pp-definition define the binary relation 
$S$ if we replace $x_1-x_2=x_3-x_4$ by $\omega$.
It follows from the definition that 
$S$ is a congruence on $A$ and 
$S\supsetneq \sigma$ and $\sigma^{*}\not\subseteq S$, 
which contradicts the irreducibility of $\sigma$.
Hence the order of any element of $G_{B}$ is a prime number. 
Similarly, if elements of different groups 
$G_{B_1}$ and $G_{B_2}$ have different orders, we 
we can define a similar relation ``$p_1\cdot x_1 = p_1\cdot x_2$''
and again get a contradiction with the irreducibility of $\sigma$.
\end{proof}

\begin{LEMLinearEquivalentConditionsLEM}
Suppose $\sigma$ is an irreducible congruence on $\mathbf A$.
Then the following conditions are equivalent:
\begin{enumerate}
\item $\sigma$ is a linear congruence;
\item there exists a bridge $\delta$ from $\sigma$ to $\sigma$ such that 
$\widetilde\delta\supsetneq \sigma$.
% \item any bridge $\delta$ from 
% $\sigma$ to $\sigma$ such that 
% $\delta(x_1,x_2,x_3,x_4)=
% \delta_{x_2,x_3,x_
\end{enumerate}
\end{LEMLinearEquivalentConditionsLEM}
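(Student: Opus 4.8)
\textbf{Proof plan for Lemma~\ref{LEMLinearEquivalentConditions}.}
The implication (1)$\Rightarrow$(2) is essentially built into the definition of a linear congruence: condition~3 in the definition of ``linear'' provides a relation $S\le(\sigma^{*})^{[4]}$, and the remark immediately following that definition already observes that this $S$ is a bridge from $\sigma$ to $\sigma$ with $\widetilde S=\proj_{1,2}(S)=\proj_{3,4}(S)=\sigma^{*}$. Since $\sigma$ is irreducible and nontrivial on at least one block, $\sigma^{*}\supsetneq\sigma$, so $\widetilde\delta=\sigma^{*}\supsetneq\sigma$ as required. So the content of the lemma is the converse, (2)$\Rightarrow$(1).

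For (2)$\Rightarrow$(1), I would start from a bridge $\delta$ from $\sigma$ to $\sigma$ with $\widetilde\delta\supsetneq\sigma$. First symmetrize: replace $\delta$ by the bridge
\[
\delta'(x_1,x_2,x_3,x_4)=\exists x_5\exists x_6\;\delta(x_1,x_2,x_5,x_5)\wedge\delta(x_3,x_4,x_5,x_6),
\]
(the same trick used in the proof of Lemma~\ref{LEMBlockOfGoodBridgeDoesNotHaveBAC}), which still satisfies $\widetilde{\delta'}\supsetneq\sigma$ — one must check the quantified construction does not shrink $\widetilde\delta$ below $\sigma^{*}$, using that $\delta$ is a bridge and hence its coordinates are stable under $\sigma$ — and additionally satisfies $\delta'(x_1,x_2,x_3,x_4)=\delta'(x_3,x_4,x_1,x_2)$. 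Now $\delta'$ meets the hypotheses of Lemma~\ref{LEMNontrivialReflexiveBridgeImplies}: its conclusions give exactly that $\sigma^{*}$ is a congruence (condition~2 of ``linear'') and that there is a prime $p$ such that for every block $B$ of $\sigma^{*}$ one has $(B/\sigma;(\delta'\cap B^{4})/\sigma)\cong(\mathbb Z_p^{n_B};x_1-x_2=x_3-x_4)$ for some $n_B\ge 0$. Together with irreducibility of $\sigma$ (hypothesis of the lemma), this is precisely condition~3 in the definition of a linear congruence, with $S=\delta'$. Hence $\sigma$ is linear.

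The only genuinely delicate point is verifying that the symmetrization step preserves the property $\widetilde{\delta'}\supsetneq\sigma$ — i.e. that the extra existential quantifiers in $\delta'$ do not collapse the ``diagonal relation'' back down to $\sigma$. This is where one uses that $\proj_{1,2}(\delta)\supsetneq\sigma$ and $\proj_{3,4}(\delta)\supsetneq\sigma$ (part of the bridge axioms) together with condition~4 of the bridge definition: if $(a,b,a,b)$ lies in $\delta'$ for some $(a,b)\notin\sigma$ one chases it back through the two copies of $\delta$, using stability under $\sigma$ to realign the witnesses $x_5,x_6$. Everything else is a direct appeal to Lemma~\ref{LEMNontrivialReflexiveBridgeImplies}, so I expect the write-up to be short once this bookkeeping is in place.
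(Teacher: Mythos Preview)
Your overall strategy matches the paper's proof exactly: for (2)$\Rightarrow$(1), symmetrize the given bridge and then invoke Lemma~\ref{LEMNontrivialReflexiveBridgeImplies}. However, the specific symmetrization formula you wrote is wrong. With
\[
\delta'(x_1,x_2,x_3,x_4)=\exists x_5\exists x_6\;\delta(x_1,x_2,x_5,x_5)\wedge\delta(x_3,x_4,x_5,x_6),
\]
the first conjunct forces $(x_5,x_5)\in\sigma$, hence by bridge axiom~4 it forces $(x_1,x_2)\in\sigma$; thus $\proj_{1,2}(\delta')=\sigma$ and $\delta'$ is not a bridge at all. (You seem to have copied this from the displayed formula in the proof of Lemma~\ref{LEMBlockOfGoodBridgeDoesNotHaveBAC}, which itself appears to contain the same typo.) The paper's proof uses
\[
\delta'(x_1,x_2,x_3,x_4)=\exists x_5\exists x_6\;\delta(x_1,x_2,x_5,x_6)\wedge\delta(x_3,x_4,x_5,x_6),
\]
with $x_5,x_6$ in \emph{both} conjuncts. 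This $\delta'$ is visibly symmetric under $(x_1,x_2)\leftrightarrow(x_3,x_4)$, is a bridge (axiom~4 holds because both pairs are tied to the same $(x_5,x_6)$), and $\widetilde{\delta'}\supsetneq\sigma$ is immediate: given $(a,b)\in\widetilde\delta\setminus\sigma$, take $x_5=x_6=b$ and use $(a,a,b,b),(b,b,b,b)\in\delta$.

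So the point you flagged as ``genuinely delicate'' is in fact routine once the formula is corrected; there is no chasing of witnesses needed. With that one-symbol fix your write-up is essentially the paper's proof.
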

\begin{proof}
$1\Rightarrow 2$. By property 3 of linear congruence we have a bridge 
$\delta$ such that $\widetilde \delta\supseteq\sigma$.

$2\Rightarrow 1$. 
We derive another bridge
$\delta'(x_1,x_2,x_3,x_4)=
\exists x_5\exists x_6\;
\delta(x_1,x_2,x_5,x_6)\wedge \delta(x_3,x_4,x_5,x_6)$.
and apply Lemma \ref{LEMNontrivialReflexiveBridgeImplies}
 to it.
\end{proof}

\begin{lem}\label{LEMPCCongruencePropertyInductiveStep}
Suppose 
$\sigma$ is a congruence on $\mathbf A$, 
$\delta$ is a reflexive bridge from $\sigma$ to $\sigma$
satisfying 
\begin{enumerate}
    \item $\delta(x_1,x_2,x_3,x_4) = \delta(x_3,x_4,x_1,x_2)$;
    \item $(a,b,a,b),(b,a,b,a)\in\delta$ for every $(a,b)\in\proj_{1,2}(\delta)$;
    \item $\RightLinked(\proj_{1,2,3}(\delta)) = A^{2}$.
\end{enumerate}
Then there exists $a,b\in A$ such that 
$a\neq b$ and $(a,a,b,b)\in \delta$.
\end{lem}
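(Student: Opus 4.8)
The statement says: from a reflexive bridge $\delta$ from $\sigma$ to $\sigma$ satisfying the three listed symmetry/linkedness conditions, we can extract a pair $a\neq b$ with $(a,a,b,b)\in\delta$. The plan is to argue by contradiction: suppose $(a,a,b,b)\in\delta$ implies $a=b$, i.e.\ the relation $\widetilde{\delta}(x,y)=\delta(x,x,y,y)$ is just the equality relation $0_{\mathbf A}$. We then want to show this forces $\delta$ to have a trivial/rectangular shape incompatible with condition (3), namely $\RightLinked(\proj_{1,2,3}(\delta))=A^2$, which (via condition (2) giving $a\in\proj_3$ for any $(a,b)\in\proj_{1,2}(\delta)$ together with reflexivity) makes the third coordinate "too connected" to be consistent with $\widetilde\delta$ being trivial. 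The natural mechanism is: condition (4) of the bridge definition says $(a_1,a_2,a_3,a_4)\in\delta \Rightarrow ((a_1,a_2)\in\sigma \Leftrightarrow (a_3,a_4)\in\sigma)$, so if additionally we could show $\widetilde\delta=\sigma$ then $\delta$ behaves like a ``congruence-respecting'' relation, and the symmetry condition (1) plus Lemma~\ref{LEMPCBridgesAreTrivial} (triviality of reflexive bridges for PC congruences) or Lemma~\ref{LEMNoBridgeBetweenDifferentTypes} would let us conclude. But the cleaner route, since we want a contradiction purely from $\widetilde\delta = 0_{\mathbf A}$, is to use Lemma~\ref{LEMLinearEquivalentConditions}: if $\widetilde\delta\supsetneq\sigma$ we would be in the linear case; here we are assuming the opposite, $\widetilde\delta=\sigma$ (after noting $\widetilde\delta\supseteq\sigma$ always, since $\delta$ is a reflexive bridge and $\proj_{1,2}\delta\supseteq\sigma$... actually one must check $\widetilde\delta\supseteq\sigma$ carefully).

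\textbf{Key steps, in order.} First I would reduce to $\sigma=0_{\mathbf A}$ by factoring: all of $\sigma$, $\sigma^*$, $\delta$ are stable under $\sigma$, so pass to $\mathbf A/\sigma$ and replace $\delta$ by $\delta/\sigma$; the three hypotheses are preserved, and the conclusion $(a,a,b,b)\in\delta$ with $a\neq b$ in $\mathbf A/\sigma$ lifts back to the original. So assume $\sigma=0_{\mathbf A}$ and we must find $a\neq b$ with $(a,a,b,b)\in\delta$; suppose for contradiction no such pair exists, i.e.\ $\widetilde\delta=0_{\mathbf A}$. Second, I would use condition (4) of the bridge definition: since $\widetilde\delta = 0_{\mathbf A}$, we get that $(a_1,a_2,a_3,a_4)\in\delta$ with $a_3=a_4$ forces... hmm, more precisely condition (4) with $\sigma=0$: $(a_1,a_2,a_3,a_4)\in\delta\Rightarrow(a_1=a_2\Leftrightarrow a_3=a_4)$. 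So $\delta$ restricted to $\{(a,a,c,d):\dots\}$ has $c=d$, consistent with our assumption — that's automatic, not yet a contradiction. The real leverage is condition (3): $\RightLinked(\proj_{1,2,3}(\delta))=A^2$ means for all $u,v\in A$ there is a chain $u=w_0,w_1,\dots,w_k=v$ where consecutive $w_i,w_{i+1}$ both appear as the last coordinate of a common $(x,y,z,\cdot)$. Third, I would combine this with condition (2): for $(a,b)\in\proj_{1,2}(\delta)$ both $(a,b,a,b)$ and $(b,a,b,a)\in\delta$. Take a chain from $a$ to $b$ in $\RightLinked(\proj_{1,2,3}(\delta))$; I want to "lift" it to produce $(a,a,b,b)\in\delta$. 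The idea is: if $(x,y,z,c)\in\delta$ and $(x,y,z,d)\in\delta$ (same first three, differing fourth), then using condition (1) ($\delta$ symmetric under swapping pairs) we get $(z,c,x,y)$ and $(z,d,x,y)\in\delta$; composing appropriately and using reflexivity $(x,x,x,x)\in\delta$ and the diagonal elements from (2), I would build a tuple witnessing $(c,c,d,d)\in\delta$ or chain these together. Iterating along the whole $\RightLinked$-chain from $a$ to $b$ (transitivity obtained by composing $\delta$ with itself, which by Lemma~\ref{LEMBridgeComposition} stays a bridge since the congruences are irreducible) yields $(a,a,b,b)\in\delta$, contradicting our assumption that $a=b$ whenever $(a,a,b,b)\in\delta$ — but $a\neq b$ here as long as $A$ is nontrivial, which holds since $\RightLinked(\cdots)=A^2$ is only interesting when $|A|>1$ (if $|A|=1$ the statement is vacuous/trivial anyway, or rather there's nothing to prove — but actually we DO need $|A|>1$ for the conclusion, so presumably the lemma is invoked only then, or the $a=b$ case is acceptable... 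I'd flag this).

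\textbf{Main obstacle.} The delicate part is the "lifting" step: turning a $\RightLinked$-chain in $\proj_{1,2,3}(\delta)$ into an actual membership $(a,a,b,b)\in\delta$. Raw linkedness only gives that $c$ and $d$ are connected through intermediate fourth-coordinates, not that $(c,c,d,d)$ itself lies in $\delta$; bridging that gap requires carefully composing $\delta$ with its "pair-swapped" copy $\delta^{-1}$ (swapping $(x_1,x_2)\leftrightarrow(x_3,x_4)$, which is $\delta$ itself by condition (1)) and with itself, and checking at each step that the composite remains a bridge between the same (irreducible) congruences so that Lemma~\ref{LEMBridgeComposition} applies and $\widetilde{(\text{composite})} = \widetilde{\delta_1}\circ\widetilde{\delta_2}$. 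The bookkeeping of which coordinates are "on the diagonal" versus "free" is where all the work is. An alternative cleaner finish, once $\widetilde\delta=0_{\mathbf A}$ is assumed: build the symmetric reflexive transitive closure $\delta_0\supseteq\delta$ on $\mathbf A^2$ (a congruence on $\mathbf A^2$), observe $\widetilde{\delta_0}=0_{\mathbf A}$ would make $0_{\mathbf A}$ a block, hence $\mathbf A$ Abelian by Lemma~\ref{LEMAbelianEquivalentDefinition}/\ref{LemBridgeEquivalentToAbelianness} — then $\proj_{1,2}(\delta_0)$ being linked and the whole thing affine contradicts either irreducibility of $\sigma$ or condition (3) directly, since in an affine algebra $\proj_{1,2,3}(\delta_0)$ would have $\RightLinked$ a proper congruence unless $\delta_0$ is everything, forcing $(a,a,b,b)\in\delta_0$ and then, tracing back through the composition, $(a,a,b,b)\in\delta$. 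I would pursue whichever of these two the surrounding text's toolkit makes shortest, expecting the Abelian-closure route to be the slicker one.
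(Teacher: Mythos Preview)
Your reduction to $\sigma=0_{\mathbf A}$ is correct and matches the paper. After that, however, both of your proposed routes have a genuine gap, and neither matches the paper's actual argument.

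\textbf{Why the chain-lifting route fails.} Under the contradiction hypothesis $\widetilde\delta=0_{\mathbf A}$, composing $\delta$ with itself (in the sense of bridge composition) cannot help: the bridge property $(a_1{=}a_2\Leftrightarrow a_3{=}a_4)$ is preserved under composition, and a direct computation shows $\widetilde{\delta\circ\delta}=\widetilde\delta\circ\widetilde\delta=0_{\mathbf A}\circ 0_{\mathbf A}=0_{\mathbf A}$. So any finite composite of $\delta$ with its pair-swapped copy still has trivial $\widetilde{(\cdot)}$, and you will never manufacture $(a,a,b,b)$ with $a\neq b$ by composition alone. The $\RightLinked$ condition gives you pairs sharing the first three coordinates in $\proj_{1,2,3}(\delta)$, but there is no mechanism in your plan that converts this into a diagonal tuple of $\delta$.

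\textbf{Why the Abelian route fails.} Taking the transitive closure $\delta_0$ of $\delta$ on $\mathbf A^2$, the same computation gives $\widetilde{\delta_0}=0_{\mathbf A}$. Combined with the bridge property, the $\delta_0$-block of each $(a,a)$ is the singleton $\{(a,a)\}$. Thus the diagonal is a union of singleton blocks, \emph{not} a single block, so Lemma~\ref{LEMAbelianEquivalentDefinition} does not apply and you cannot conclude $\mathbf A$ is Abelian. Your ``tracing back'' from $\delta_0$ to $\delta$ is moot because $\delta_0$ never contains the desired tuple either.

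\textbf{What the paper actually does.} The proof is by induction on $|A|$, with a trichotomy on the absorption structure of $\mathbf A$: (1) if $\mathbf A$ has a proper BA or central subuniverse $B$ of size $>1$, restrict $\delta$ to $B^4$ and use Lemma~\ref{LEMBACenterLinkedness} to preserve hypothesis~(3), then apply the inductive hypothesis; (2) if there is a one-element BA or central subuniverse $\{a\}$, take a ternary absorbing term for $\{a\}$ and apply it explicitly to well-chosen tuples from $\delta$ (using conditions (1)--(3)) to force a contradiction or produce the pair; (3) if $\mathbf A$ has no BA or central subuniverse, apply Lemma~\ref{LEMBAConLeftOrCenterOnRight} to $\proj_{1,2,3}(\delta)$ viewed as a binary relation to obtain a nonempty BA subuniverse $C\le\proj_{1,2}(\delta)$, then Lemma~\ref{LEMAbsorbingEquality} gives $(a,a)\in C$, which by definition of $C$ yields $(a,a,b,b)\in\delta$ for every $b$. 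The essential ingredient you are missing is precisely this absorption-based case analysis.
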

\begin{proof}
As before, we assume that 
$\sigma$ is the equality relation as otherwise we can factorize all the relations by $\sigma$.
We prove the lemma by induction on the size of $A$.
Consider two cases:

Case 1. There exists $B<_{T} A$ such that $|B|>1$ and 
$T\in\{\TBA,\TC\}$. Put $\delta' = \delta\cap B^{4}$.
By Lemma \ref{LEMBACenterLinkedness}
$\RightLinked(\proj_{1,2,3}(\delta')) = B^{2}$.
Assume that $\delta'$ is not a bridge, then 
$\proj_{1,2}(\delta')$ is the equality relation.
Then $\RightLinked(\proj_{1,2,3}(\delta')) = B^{2}$
implies the existence of 
$(a,a,b,b)\in\delta'$ such that $a\neq b$, which is what we need.
If $\delta'$ is a bridge then the inductive 
assumption implies the
existence of the corresponding 
$(a,a,b,b)\in\delta'$.

Case 2. Otherwise, there exists $\{a\}<_{T} A$, where $T\in\{\TBA,\TC\}$.
% Consider $S = \proj_{1,3}(\delta)$. 
% Since $S$ is reflexive, symmetric, and linked, 
% $|\{a\}\circ S|>1$.
% If $\{a\}\circ S\neq A$ then 
% By Lemma \ref{LEMBACenterSImpliesUniversal} 
% $\{a\}\circ S\neq A<_{T} A$, which means that it is Case 1.
% Thus, we assume that 
% $\{a\}\circ S=A$.
Choose $(a,b,c,d)\in\delta$ such that $c\neq a$, which exists 
because $\proj_{1,3}(\delta)$ is linked.
If $a=b$ then $c=d$ and we found the required pair $(a,c)$.
If 
$\{a\}\circ \proj_{1,3}(\delta)\neq A$
then $\{a\}\circ \proj_{1,3}(\delta)<_{T}A$
and $a,c\in \{a\}\circ \proj_{1,3}$, which is
Case 1.
Otherwise, consider a tuple 
$(a,e,b,f)\in\delta$.
By the assumption on $\delta$ 
we have $(b,a,b,a)\in\delta$. 
Let $g$ be a ternary absorbing operation on $A$ 
witnessing that $\{a\}$ absorbs $A$.
Applying this operation to the tuples
$(a,e,b,f),(b,a,b,a),(a,a,a,a)$ we 
obtain 
$(a,a,g(b,b,a),a)\in \delta$, hence
$g(b,b,a) = a$.
It remains to apply
$g$ to $(b,a,b,a),(b,a,b,a),(a,e,b,f)$ 
and obtain $(a,a,b,a)\in\delta$ which contradicts the definition of a bridge.

Case 3. There does not exist a BA or central subuniverse on $A$.
Consider $\proj_{1,2,3}(\delta)$ as a binary relation
and put $C = \{(a,b)\mid 
\{a\}\times\{b\}\times A\subseteq \proj_{1,2,3}(\delta)\}$.
By Lemma \ref{LEMBAConLeftOrCenterOnRight}  
$C\le_{BA} \proj_{1,2}(\delta)$.
Since $(a,a)\in \proj_{1,2}(\delta)$ for any $a\in A$, 
Lemma \ref{LEMAbsorbingEquality} implies that 
$(a,a)\in C$ for some $a\in A$. Then 
$(a,a,b,b)\in \delta$ for any $b\neq a$, which completes this case.
\end{proof}

\begin{LEMPCBridgesAreTrivialLEM}
Suppose $\sigma$ is a PC congruence on $A$.
%$\widetilde \delta = \sigma$ 
%for any reflexive bridge $\delta$ from $\sigma$ to $\sigma$. 
Then any reflexive bridge $\delta$ from $\sigma$ to $\sigma$
such that $\proj_{1,2}(\delta) = \proj_{3,4}(\delta)=\sigma^{*}$
can be represented as 
$\delta(x_1,x_2,x_3,x_4)= \sigma(x_1,x_3)\wedge \sigma(x_2,x_4)$ or 
$\delta(x_1,x_2,x_3,x_4)= \sigma(x_1,x_4)\wedge \sigma(x_2,x_3)$.
\end{LEMPCBridgesAreTrivialLEM}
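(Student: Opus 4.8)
\textbf{Proof plan for Lemma~\ref{LEMPCBridgesAreTrivial}.}
The plan is to symmetrize the bridge and reduce to the situation handled by Lemma~\ref{LEMPCCongruencePropertyInductiveStep}. As in the other proofs of this subsection, I would first factor out $\sigma$, so that without loss of generality $\sigma = 0_{\mathbf A}$ and $\sigma^{*} = \proj_{1,2}(\delta) = \proj_{3,4}(\delta)$. The two ``trivial'' forms in the statement are exactly the relations $\delta_{0}(x_1,x_2,x_3,x_4)=(x_1=x_3)\wedge(x_2=x_4)$ and $\delta_{1}(x_1,x_2,x_3,x_4)=(x_1=x_4)\wedge(x_2=x_3)$ (after factoring). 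So the goal becomes: a reflexive bridge from $0_{\mathbf A}$ to $0_{\mathbf A}$ with both left and right projections equal to $(0_{\mathbf A})^{*}$ must be $\delta_0$ or $\delta_1$.

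First I would replace $\delta$ by a symmetric version using the standard trick
$$\delta'(x_1,x_2,x_3,x_4) = \exists x_5\,\exists x_6\; \delta(x_1,x_2,x_5,x_6)\wedge\delta(x_3,x_4,x_5,x_6),$$
which still is a reflexive bridge from $0_{\mathbf A}$ to $0_{\mathbf A}$, satisfies $\delta'(x_1,x_2,x_3,x_4)=\delta'(x_3,x_4,x_1,x_2)$ and $\delta' \supseteq \delta \cup \delta^{-1}$ (viewing $\delta^{-1}$ as swapping the pairs), and has $\proj_{1,2}(\delta')\supseteq\proj_{1,2}(\delta)=(0_{\mathbf A})^{*}$; by the bridge property $\proj_{1,2}(\delta')\subseteq(0_{\mathbf A})^{*}$ too. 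The key claim is then that $\delta'$ cannot contain a tuple $(a,a,b,b)$ with $a\neq b$: if it did, then by condition~4 in the definition of a bridge applied to $(a,a,b,b)\in\delta'$ we would get $(a,a)\in 0_{\mathbf A}\Leftrightarrow (b,b)\in 0_{\mathbf A}$, which is fine, so that is not an immediate contradiction — instead I would feed such a hypothetical tuple, together with the linkedness assumption $\proj_{1,2,3}(\delta)$ generating everything, into Lemma~\ref{LEMNontrivialReflexiveBridgeImplies} or rather use that $\widetilde{\delta'} \supsetneq 0_{\mathbf A}$ would force $\sigma$ to be a \emph{linear} congruence by Lemma~\ref{LEMLinearEquivalentConditions}, contradicting the hypothesis that $\sigma$ is PC (PC means ``irreducible and not linear''). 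Indeed if $(a,a,b,b)\in\delta'$ with $a\neq b$ then $\widetilde{\delta'}$ contains $(a,b)\notin 0_{\mathbf A}$, so $\widetilde{\delta'}\supsetneq 0_{\mathbf A}$, and Lemma~\ref{LEMLinearEquivalentConditions} gives that $0_{\mathbf A}$ is linear — contradiction. Hence $\widetilde{\delta'} = 0_{\mathbf A}$, i.e. $(a,b)\in\widetilde{\delta'}$ only for $a=b$.

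Now I would show this forces $\delta'$, hence $\delta$, to be trivial. Consider $\delta'$ as a subdirect relation $\mathbf R \le_{sd} \mathbf A^{2}\times\mathbf A^{2}$ between the left pair-algebra and the right pair-algebra, both of whose projections to each coordinate are $(0_{\mathbf A})^{*}$-related pairs. The fact that $\widetilde{\delta'}=0_{\mathbf A}$ says that whenever $(x_1,x_2,x_3,x_4)\in\delta'$ and $x_1=x_2$ then also $x_3=x_4$ (apply the no-$(a,a,b,b)$ claim), and symmetrically. Using reflexivity $(a,a,a,a)\in\delta'$ and the bridge condition~4 — $(a_1,a_2)\in 0_{\mathbf A}\Leftrightarrow(a_3,a_4)\in 0_{\mathbf A}$ — together with the fact that $0_{\mathbf A}$ is \emph{irreducible}: the binary relation $\eta(x_1,x_3) := \exists x_2\exists x_4\,\delta'(x_1,x_2,x_3,x_4)\wedge(x_2=x_4)$ (or a similar linking relation built from $\delta'$) is a subalgebra of $\mathbf A^2$ stable under $0_{\mathbf A}$; by irreducibility of $0_{\mathbf A}$ it is either $0_{\mathbf A}$ or something relaxing it — and the latter would again make $\sigma$ linear via a bridge. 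So $\eta = 0_{\mathbf A}$, and chasing the equalities this pins the last two coordinates of any tuple of $\delta'$ to be determined by the first two as either $(x_1,x_2)$ or $(x_2,x_1)$. Finally, since $\delta'$ is $(12)(34)$-symmetric and connected enough, a connectedness/glueing argument shows one cannot mix the two patterns on different tuples (else compose to get a non-trivial $(a,a,b,b)$), so $\delta'$ is exactly $\delta_0$ or $\delta_1$. Then $\delta\subseteq\delta'$ together with $\proj_{1,2}(\delta) = \proj_{3,4}(\delta) = (0_{\mathbf A})^*$ forces $\delta = \delta'$, giving the claim after un-factoring $\sigma$.

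\textbf{Main obstacle.} The delicate point is the very last step: ruling out a ``mixed'' bridge that behaves like $\delta_0$ on some tuples and like $\delta_1$ on others. I expect to handle this by composing $\delta'$ with itself (or with a copy sharing one pair of coordinates): if $(a,b)\mapsto(a,b)$ occurs for one tuple and $(c,d)\mapsto(d,c)$ for another with the right coordinates linked through $(0_{\mathbf A})^{*}$, the composite produces a tuple of the form $(x,x,y,y)$ with $x\neq y$, contradicting $\widetilde{\delta'}=0_{\mathbf A}$. Making ``linked through $(0_{\mathbf A})^{*}$'' precise and ensuring the composition stays a bridge (using Lemma~\ref{LEMBridgeComposition}, valid since $\sigma$ is irreducible) is the part that needs care, but it is routine relational bookkeeping rather than a new idea.
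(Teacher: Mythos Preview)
Your first move --- symmetrizing to $\delta'$ and observing that $\widetilde{\delta'}=\sigma$ because $\sigma$ is PC (via Lemma~\ref{LEMLinearEquivalentConditions}) --- is correct and corresponds to how the paper disposes of its Cases~2 and~3. But the argument after that has a real gap: from $\widetilde{\delta'}=\sigma$ alone you cannot conclude that $\delta'$ is one of the two trivial forms, and your $\eta$-argument does not fill this in. You assert that if $\eta(x_1,x_3)=\exists x_2\,\delta'(x_1,x_2,x_3,x_2)$ strictly contains $\sigma$ then ``$\sigma$ would be linear via a bridge,'' yet you never build such a bridge, and there is no obvious way to promote the binary relation $\eta$ to a $4$-ary bridge with $\widetilde{(\cdot)}\supsetneq\sigma$. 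Even granting $\eta=\sigma$, your claim that this ``pins the last two coordinates'' to $(x_1,x_2)$ or $(x_2,x_1)$ does not follow: $\eta=\sigma$ only says that $(a,b,c,b)\in\delta'$ forces $(a,c)\in\sigma$, which is far from determining $(x_3,x_4)$ as a function of $(x_1,x_2)$. Finally, your closing step ``$\delta\subseteq\delta'$ with the projections forces $\delta=\delta'$'' is also unjustified: $\delta\subseteq\delta'$ would require $(c,d,c,d)\in\delta$ for every $(c,d)\in\sigma^{*}$, which reflexivity of $\delta$ (only $(a,a,a,a)\in\delta$) does not give.

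The paper proceeds quite differently at this point. It takes $\xi=\delta'$ and splits on whether $\RightLinked(\proj_{1,2,3}(\xi))$ and $\RightLinked(\proj_{1,2,4}(\xi))$ equal $\sigma$ or contain $\sigma^{*}$. When one of them contains $\sigma^{*}$, Lemma~\ref{LEMPCCongruencePropertyInductiveStep} manufactures a tuple $(a,a,b,b)$ with $a\neq b$, giving the contradiction you found. In the remaining case both equal $\sigma$, which means $(x_3/\sigma,x_4/\sigma)$ is \emph{uniquely determined} by $(x_1/\sigma,x_2/\sigma)$ --- this functional dependence is the structural fact your sketch never establishes. The paper then goes back to $\delta$ itself (not $\delta'$) and builds two auxiliary $4$-ary relations $\zeta_1,\zeta_2$ out of $\delta$; applying the PC hypothesis once more to each forces $\proj_{1,3}(\delta)=\sigma$ or $\proj_{1,4}(\delta)=\sigma$, and symmetrically for coordinates $2,3,4$, which is what actually yields the two trivial forms. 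Your ``mixed bridge'' obstacle is a red herring: the real work is getting the functional dependence and then analysing it on $\delta$, not on $\delta'$.
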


\begin{proof}
%Without loss of generality assume that 
%$|\proj_{1,2}(\delta')|\ge |\proj_{3,4}(\delta')|$.
Define a new bridge 
$\xi$ by  
$$\xi(x_1,x_2,x_5,x_6) = \exists x_3\exists x_4 \;\;
\delta(x_1,x_2,x_3,x_4)\wedge \delta(x_5,x_6,x_3,x_4).$$
Consider 
$\RightLinked(\proj_{1,2,3}(\xi))$
and $\RightLinked(\proj_{1,2,4}(\xi))$.
Since $\sigma$ is irreducible, 
we have one of the three cases:

Case 1. $\RightLinked(\proj_{1,2,3}(\xi))=\RightLinked(\proj_{1,2,4}(\xi))=\sigma$. Hence, for any 
$(a,b,c,d)\in \xi$ the elements $c/\sigma$ and $d/\sigma$ 
are uniquely determined by $a/\sigma$ and $b/\sigma$.
Since $\xi$ is symmetric, we have $(a,b,a,b)\in \xi$. 
Therefore, $(a,c)\in\sigma$ and $(b,d)\in\sigma$.
Since $\proj_{1,2}(\delta) = \proj_{3,4}(\delta)$,
for any 
$(a,b,c,d)\in \delta$ the elements $c/\sigma$ and $d/\sigma$ 
are also uniquely determined by $a/\sigma$ and $b/\sigma$.
%Assume that $\proj_{1,3}(\delta)\neq \sigma$.

Define two new relations $\zeta_1$ and $\zeta_2$ and check whether  
one of them is a bridge showing that $\sigma$ is linear.
Put 
\begin{align*}
\zeta_{1}(x_1,x_2,x_3,x_4) =&
\exists y\exists y' \exists z\exists z'
\exists t_1\exists t_2\exists t_3\exists t_4 \;\;\\
&\delta(x_1,y,z,t_1)\wedge 
\delta(x_2,y,z',t_2)\wedge
\delta(x_3,y',z,t_3)\wedge 
\delta(x_4,y',z',t_4)\\
\zeta_{2}(x_1,x_2,x_3,x_4) =&
\exists y\exists z \;\;
\delta(x_1,y,z,x_3)\wedge 
\delta(x_2,y,z,x_4)
\end{align*}
Choose some $(a,b,c,d)\in\delta$ such that 
$(a,b)\in\sigma^{*}\setminus\sigma$.
Since $(a,b,c,d),(b,b,b,b)\in\delta$, 
we have $(a,b,a,b)\in\zeta_{1}$ and 
$\proj_{1,2}(\zeta_1)\supseteq \sigma^{*}$.
Consider several subcases:

Subcase 1A. $\zeta_1$ is a bridge.
Since $\widetilde \zeta_{1}$ must be equal to $\sigma$ %also cannot be a linear bridge
and $(a,a,c,c)\in\zeta_1$ for any $(a,b,c,d)\in\delta$, 
we have $\proj_{1,3}(\delta)=\sigma$.

Subcase 1B. $\zeta_1$ is not a bridge.
Then there exists $(a,a,b,c)\in\zeta_1$ such that 
$(b,c)\notin\sigma$.
Let the evaluation of the variables in the definition of $\zeta_1$ be $y=d$, $y' =  d'$,
$z=e$, $z' =  e'$, and $t_{i} = f_{i}$ for $i=1,2,3,4$.
Since the first two coordinates of $\delta$ 
uniquely (up to $\sigma$) determine the last two, 
%By the above property of $\delta$ 
we have 
$(e,e')\in\sigma$.
Hence 
$(b,d',e,f_3),(c,d',e,f_4)\in\delta$.
Hence $\proj_{1,2}(\zeta_2)\supsetneq \sigma$
and using the fact that the first two and the last two coordinates of $\delta$ uniquely determine each other, %by the above property of $\delta$ the relation 
we derive that $\zeta_{2}$ is a bridge.
Since $\widetilde \zeta_2$ must be equal to 
$\sigma$, 
we have 
$\proj_{1,4}(\delta) = \sigma$.
%Thus, we showed that either $\zeta_1$ is a bridge or 
%$\proj_{1,4}(\delta) = \sigma$.

Thus, we derived that either 
$\proj_{1,4}(\delta) = \sigma$
or 
$\proj_{1,3}(\delta) = \sigma$.
Repeating the same argument but switching $x_1$ and $x_2$
we derive that 
$\proj_{2,4}(\delta) = \sigma$
or 
$\proj_{2,3}(\delta) = \sigma$.
This completes this case.

Case 2. $\RightLinked(\proj_{1,2,3}(\xi))\supseteq\sigma^{*}$.
Choose a block $B$ of $\RightLinked(\proj_{1,2,3}(\xi))$
that is not a block of $\sigma$.
Let us check that 
$a,b,d\in B$ for any $c\in B$ and $(a,b,c,d)\in \xi$.
Since $(a,b,a,b)\in\xi$, we have $a\in B$. 
Since 
$\RightLinked(\proj_{1,2,3}(\xi))\supseteq\sigma^{*}$ 
and $(a,b),(c,d)\in\sigma^{*}$, we have 
$b,d\in B$.
It remains to apply 
Lemma \ref{LEMPCCongruencePropertyInductiveStep}
to $\xi \cap B^{4}$. 

Case 3. $\RightLinked(\proj_{1,2,4}(\xi))\supseteq\sigma^{*}$.
This case can be considered in the same way as Case 2.
\end{proof}

\begin{LEMNoBridgeBetweenDifferentTypesLEM}
Suppose $\sigma_1$ is a linear congruence, 
$\sigma_2$ is an irreducible congruence, 
$\delta$ is a bridge from $\sigma_1$ to $\sigma_2$.
Then $\sigma_2$ is a also linear congruence.
\end{LEMNoBridgeBetweenDifferentTypesLEM}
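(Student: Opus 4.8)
The statement to prove is Lemma~\ref{LEMNoBridgeBetweenDifferentTypes}: if $\sigma_1$ is linear, $\sigma_2$ is irreducible, and $\delta$ is a bridge from $\sigma_1$ to $\sigma_2$, then $\sigma_2$ is linear. By the equivalent characterization in Lemma~\ref{LEMLinearEquivalentConditions}, it suffices to produce a bridge $\delta'$ from $\sigma_2$ to $\sigma_2$ with $\widetilde{\delta'} \supsetneq \sigma_2$. The natural candidate is obtained by "reflecting'' $\delta$ through $\sigma_1$: first compose $\delta$ with the witnessing bridge $S$ for the linearity of $\sigma_1$ (property~3 of a linear congruence, or Lemma~\ref{LEMLinearEquivalentConditions} applied to $\sigma_1$, giving a bridge $\eta$ from $\sigma_1$ to $\sigma_1$ with $\widetilde{\eta}\supsetneq\sigma_1$), and then compose with $\delta^{-1}$ read as a bridge from $\sigma_2$ to $\sigma_1$. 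Concretely, I would set
\[
\delta'(z_1,z_2,z_1',z_2') = \exists x_1\exists x_2\exists y_1\exists y_2\;
\delta(x_1,x_2,z_1,z_2)\wedge \eta(x_1,x_2,y_1,y_2)\wedge \delta(y_1,y_2,z_1',z_2'),
\]
so that $\delta'$ is a composition of the bridges $\delta^{-1}$ (from $\sigma_2$ to $\sigma_1$), $\eta$ (from $\sigma_1$ to $\sigma_1$), and $\delta$ (from $\sigma_1$ to $\sigma_2$).

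The first step is to check that $\delta^{-1}$, swapping the role of first/last pairs, is genuinely a bridge from $\sigma_2$ to $\sigma_1$; this is immediate from the symmetry in the definition of a bridge. The second step invokes the bridge composition result (Lemma~\ref{LEMBridgeComposition}), which requires all the intermediate congruences $\sigma_1,\sigma_2$ to be irreducible — they are, by hypothesis — and yields that $\delta'$ is a bridge from $\sigma_2$ to $\sigma_2$ together with the key identity $\widetilde{\delta'} = \widetilde{\delta}^{-1}\circ\widetilde{\eta}\circ\widetilde{\delta}$. The third and crucial step is to argue that $\widetilde{\delta'}\supsetneq\sigma_2$. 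Since $\widetilde{\eta}\supsetneq\sigma_1$, pick $(a,b)\in\widetilde{\eta}\setminus\sigma_1$; using $\proj_{1,2}(\delta)\supsetneq\sigma_1$ and that the first two variables of $\delta$ are stable under $\sigma_1$, I would lift $(a,b)$ along $\widetilde{\delta}$ to a pair $(c,d)\in\widetilde{\delta}$, and by condition~4 in the definition of a bridge, $(c,d)\notin\sigma_2$ precisely because $(a,b)\notin\sigma_1$. Tracing through the composition shows $(c,d)\in\widetilde{\delta'}$, so $\widetilde{\delta'}\neq\sigma_2$; reflexivity/containment $\widetilde{\delta'}\supseteq\sigma_2$ is routine. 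Then Lemma~\ref{LEMLinearEquivalentConditions} finishes the argument.

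The main obstacle I anticipate is the bookkeeping in the third step: one must be careful that the element of $\widetilde{\eta}\setminus\sigma_1$ actually survives the two compositions with $\widetilde{\delta}^{-1}$ and $\widetilde{\delta}$ and produces a pair \emph{outside} $\sigma_2$ — this is exactly where condition~4 of the bridge definition (the biconditional $(a_1,a_2)\in\sigma_1\Leftrightarrow(a_3,a_4)\in\sigma_2$) does all the work, and I would need to apply it twice, once for $\delta$ and once for $\delta^{-1}$, checking the directions line up. A secondary point is making sure the hypotheses of Lemma~\ref{LEMBridgeComposition} are met; since that lemma only needs the relevant congruences to be irreducible (not linear), and irreducibility of $\sigma_1$ follows from it being linear while $\sigma_2$ is irreducible by assumption, this is unproblematic. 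If for some reason $\widetilde{\eta}$ is not symmetric enough to carry the lift cleanly, I would first symmetrize $\eta$ as in the proof of Lemma~\ref{LEMLinearEquivalentConditions} (replacing it by $\exists x_5\exists x_6\,\eta(x_1,x_2,x_5,x_6)\wedge\eta(x_3,x_4,x_5,x_6)$) before composing, which costs nothing and only enlarges $\widetilde{\eta}$.
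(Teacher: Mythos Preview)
There is a genuine gap at the third step. You claim that having $(a,b)\in\widetilde\eta\setminus\sigma_1$ together with lifts $(a,c),(b,d)\in\widetilde\delta$ forces $(c,d)\notin\sigma_2$ ``by condition~4''. But condition~4 applies to a \emph{single} quadruple of $\delta$: from $(a,a,c,c)\in\delta$ you only learn $(a,a)\in\sigma_1\Leftrightarrow(c,c)\in\sigma_2$, which is vacuous, and likewise for $(b,b,d,d)\in\delta$. Nothing here links $c$ to $d$; to conclude $(c,d)\notin\sigma_2$ via condition~4 you would need $(a,b,c,d)\in\delta$, which you do not have. Concretely, when $\RightLinked(\widetilde\delta)=\sigma_2$ but $\LeftLinked(\widetilde\delta)\supsetneq\sigma_1$ (so that distinct $\sigma_1$-classes can be carried by $\widetilde\delta$ into the same $\sigma_2$-class), every available lift $(c,d)$ may lie in $\sigma_2$; then $\widetilde{\delta'}=\widetilde\delta^{-1}\circ\widetilde\eta\circ\widetilde\delta$ can equal $\sigma_2$ even though $\widetilde\eta\supsetneq\sigma_1$, and Lemma~\ref{LEMLinearEquivalentConditions} cannot be invoked. (There is also a secondary gap: you tacitly assume $a,b\in\proj_1(\widetilde\delta)$ so that lifts exist at all, which the bridge axioms do not guarantee.)

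This collapsing phenomenon is exactly what makes the paper's proof substantially longer. After disposing of the easy cases (where either $\RightLinked(\widetilde\delta)\supsetneq\sigma_2$, and a self-composition of $\delta$ already yields the witness, or $\widetilde\delta$ induces an isomorphism $A_1/\sigma_1\cong A_2/\sigma_2$), the paper first replaces $\delta$ by a bridge $\omega$ with the extra property $\proj_{1,3}(\omega)=\widetilde\omega$, then argues by contradiction: assuming $\sigma_2$ is PC, it applies Lemma~\ref{LEMPCBridgesAreTrivial} (rigidity of reflexive bridges over a PC congruence) to two auxiliary bridges $\xi_1,\xi_2$ built from $\omega$ to force the swap symmetry $(a,b,c,d)\in\omega\Rightarrow(b,a,c,d)\in\omega$, and finally constructs an explicit relation $\zeta$ which is a bridge from $\sigma_2$ to $\sigma_2$ with $\widetilde\zeta\supsetneq\sigma_2$. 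The naive composition $\delta^{-1}\circ\eta\circ\delta$ does not bypass this work.
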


\begin{proof}
Without loss of generality we can assume that the relation 
$\widetilde \delta$ is rectangular as otherwise we can compose it with itself many times to obtain rectangularity.
To simplify we replace the bridge 
$\delta$ by 
$\delta\cap(\sigma_{1}^{*}\times\sigma_{2}^{*})$.
Let $\sigma_1$ and $\sigma_2$ be 
congruences on algebras $\mathbf A_{1}$ and $\mathbf A_{2}$, respectively.
%Assume that $\sigma_2$ is not linear. 

Assume that
$\RightLinked(\widetilde\delta) \supsetneq\sigma_{2}$.
Then composing $\delta$ with itself 
we derive a bridge from $\sigma_2$ to $\sigma_2$ 
witnessing that $\sigma_2$ is linear.

Thus, assume that 
%Since $\sigma_2$ is not linear
$\RightLinked(\widetilde\delta) = \sigma_{2}$.
Notice that 
$\LeftLinked(\widetilde\delta)\supseteq \sigma_{1}^{*}$
as otherwise 
$\LeftLinked(\widetilde\delta)=\sigma_1$, 
$A_{1}/\sigma_1\cong A_2/\sigma_2$, and $\sigma_2$ is also linear.

%Put $\omega(x,y) = \delta(x,x,y,y)$.
%By the property of $\xi$ the 
%relation $\omega$ is rectangular.
Put $\delta'(x_1,x_2,x_3,x_4) =
\delta(x_1,x_2,x_3,x_4) \wedge \widetilde\delta(x_1,x_3)$
and consider two cases:

Case 1. $\delta'$ is a bridge, then we obtained a new bridge with the property 
$\proj_{1,3}(\delta') = \widetilde\delta' $.

Case 2. $\delta'$ is not a bridge. Hence, 
there does not exist $(a,b,c,d)\in \delta$ such that 
$(a,b)\notin\sigma_1$ and $(a,c)\in\widetilde\delta$.
Put 
$\delta''(x_1,x_2,x_3,x_4) = 
\exists z \;
\widetilde\delta(x_1,x_3)\wedge \delta(x_2,z,x_3,x_4)$.
Let us show that $\delta''$ is a bridge.
If $(x_{1},x_{2})\in\sigma_1$ then $(x_{2},x_3)\in\widetilde\delta$
and by the assumption we have 
$(x_2,z)\in\sigma_1$ and $(x_3,x_4)\in\sigma_2$.
If $(x_{3},x_{4})\in\sigma_2$ then $(x_{2},z)\in\sigma_1$
and $(x_2,x_3)\in\widetilde\delta$. Hence $(x_1,x_2)\in\sigma_1$.
As $\delta'\neq \delta$, there exists 
$(a,b,c,d)\in\delta$ with $(a,c)\notin\widetilde\delta$.
Choosing $e\in A_{1}$ such that $(e,c)\in\widetilde\delta$
we derive that $(e,a,c,d)\in\delta''$ and $(e,a)\notin\sigma_1$.
Hence, $\delta''$ is a bridge.

Thus, in both cases we build a bridge $\omega$ from $\sigma_1$ to $\sigma_2$ such that 
$\proj_{1,3}(\omega)=\widetilde \omega=\widetilde \delta$.

Assume that $\sigma_2$ is not linear. 
Define a bridge 
$\xi_1$ by 
$$\xi_1(x_1,x_2,x_3,x_4) = \exists x_5\exists x_6 \;\;
\omega(x_5,x_6,x_1,x_2)\wedge 
\omega(x_5,x_6,x_3,x_4).$$
By Lemma \ref{LEMPCBridgesAreTrivial} 
$\xi_1(x_1,x_2,x_3,x_4) =\sigma_2(x_1,x_3)\wedge 
\sigma_2(x_2,x_4)$.
Similarly, we define a bridge
$\xi_2$ by 
$$\xi_2(x_1,x_2,x_3,x_4) = \exists x_5,x_6 \;\;
\omega(x_5,x_6,x_1,x_2)\wedge 
\omega(x_6,x_5,x_3,x_4).$$
Using the facts that
$(x_5,x_1),(x_6,x_3)\in\widetilde\omega$, $(x_5,x_6)\in\sigma_{1}^{*}$,  and 
$\LeftLinked(\widetilde\omega)\supseteq \sigma_{1}^{*}$
we derive that $(x_1,x_3)\in\sigma_2$.
Then Lemma \ref{LEMPCBridgesAreTrivial} 
implies that 
$\xi_2(x_1,x_2,x_3,x_4) =\sigma_2(x_1,x_3)\wedge 
\sigma_2(x_2,x_4)$.
Hence $(b,a,c,d)\in\omega$ whenever $(a,b,c,d)\in\omega$.

Define a new relation $\zeta$ by
$$\zeta(x_1,x_2,x_3,x_4)=
\exists y_1 \exists y_2 \exists y_3 \;\;
\omega(y_1,y_2,x_1,x_2)\wedge
\omega(y_1,y_3,x_1,x_3) \wedge
\omega(y_2,y_3,x_1,x_4).
$$
If $x_1=x_2$ then $y_1=y_2$ and by the property of $\xi_1$
we have $x_3=x_4$.
Consider some $(c,d)\in\sigma_{2}^{*}$.
Then for some $(a,b)\in\sigma_{1}^{*}$ we have
$(a,b,c,d)\in\omega$.
Since $(a,b)\in\sigma_{1}^{*}\subseteq \LeftLinked(\widetilde\omega)$
and $(a,c)\in\widetilde\omega$, 
we have $(a,a,c,c),(b,b,c,c)\in\omega$.
Sending $(x_1,x_2,x_3,x_4)$ to $(c,d,c,d)$ 
and $(y_1,y_2,y_3)$ to $(a,b,a)$ we witness 
that $(c,d,c,d)\in\zeta$.
Sending $(x_1,x_2,x_3,x_4)$ to $(c,c,d,d)$ 
and $(y_1,y_2,y_3)$ to $(a,a,b)$ we witness 
that $(c,c,d,d)\in\zeta$.
Hence $\zeta(x_1,x_2,x_3,x_4)\wedge\zeta(x_3,x_4,x_1,x_2)$ defines
a bridge witnessing that $\sigma_2$ is linear.
\end{proof}

\begin{LEMBridgeTOPCCongruenceLEM}
Suppose 
$\delta$ is a bridge from 
a PC congruence $\sigma_1$ on $\mathbf A_{1}$ to 
an irreducible congruence $\sigma_2$ on 
$\mathbf A_{2}$, 
$\proj_{1,2}(\delta) = \sigma_1^{*}$, and
$\proj_{3,4}(\delta) = \sigma_2^{*}$.
Then \begin{enumerate}
    \item $\sigma_2$ is a PC congruence;
    \item $\mathbf A_1/\sigma_1\cong \mathbf A_2/\sigma_2$;
    \item $\{(a/\sigma_{1},b/\sigma_{2})\mid (a,b)\in\widetilde \delta\}$ is bijective; 
    \item $\delta(x_1,x_2,x_3,x_4) = 
\widetilde \delta(x_1,x_3)\wedge \widetilde \delta(x_2,x_4)$ or
$\delta(x_1,x_2,x_3,x_4) = 
\widetilde \delta(x_1,x_4)\wedge \widetilde \delta(x_2,x_3)$.
\end{enumerate}
\end{LEMBridgeTOPCCongruenceLEM}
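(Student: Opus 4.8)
The plan is to settle item~(1) by a short contradiction argument using Lemma~\ref{LEMNoBridgeBetweenDifferentTypes}, and then to obtain items~(2)--(4) by composing $\delta$ with its reverse and repeatedly invoking Lemma~\ref{LEMPCBridgesAreTrivial}. For item~(1): an irreducible congruence is PC or linear by definition, so suppose $\sigma_2$ is linear. The relation $\delta^{-1}(x_1,x_2,x_3,x_4):=\delta(x_3,x_4,x_1,x_2)$ is a bridge from $\sigma_2$ to $\sigma_1$, since the four bridge axioms are symmetric under swapping the first pair of variables with the second. Then Lemma~\ref{LEMNoBridgeBetweenDifferentTypes} applied to $\delta^{-1}$ forces $\sigma_1$ to be linear, contradicting that $\sigma_1$ is PC; hence $\sigma_2$ is PC.

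For items~(2)--(4), note first that $\delta$, $\widetilde\delta$, $\sigma_1^*$ and $\sigma_2^*$ are all stable under $\sigma_1$ in their first (pair of) coordinate(s) and under $\sigma_2$ in their last, so we may factor everything by $\sigma_1$ and $\sigma_2$ and assume $\sigma_1=0_{\mathbf A_1}$, $\sigma_2=0_{\mathbf A_2}$; as in Lemma~\ref{LEMPCBridgesAreTrivial}, the formulas in item~(4) are read within $\proj_{1,2}(\delta)\times\proj_{3,4}(\delta)=\sigma_1^*\times\sigma_2^*$, and the final mod-$\sigma$ conclusions lift back by stability. Form the composite bridges
$$\xi_1(x_1,x_2,y_1,y_2):=\exists z_1\exists z_2\ \delta(x_1,x_2,z_1,z_2)\wedge\delta(y_1,y_2,z_1,z_2),$$
$$\xi_2(x_1,x_2,y_1,y_2):=\exists z_1\exists z_2\ \delta(z_1,z_2,x_1,x_2)\wedge\delta(z_1,z_2,y_1,y_2).$$
By Lemma~\ref{LEMBridgeComposition} (legitimate since $\sigma_1,\sigma_2$ are irreducible), $\xi_1$ is a reflexive bridge from $\sigma_1$ to $\sigma_1$ and $\xi_2$ a reflexive bridge from $\sigma_2$ to $\sigma_2$, both with $\proj_{1,2}=\proj_{3,4}=\sigma_i^*$. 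Since $\sigma_1$ is PC (hypothesis) and $\sigma_2$ is PC (item~(1)), Lemma~\ref{LEMPCBridgesAreTrivial} makes $\xi_1$ and $\xi_2$ trivial; moreover each contains every tuple $(a,b,a,b)$ with $(a,b)\in\sigma_i^*$, and because $\sigma_i^*\supsetneq\sigma_i$ this rules out the swapped form, so $\xi_i(x_1,x_2,x_3,x_4)=\sigma_i(x_1,x_3)\wedge\sigma_i(x_2,x_4)$. Reading off $\xi_1$ shows that, modulo the congruences, the last pair of a tuple of $\delta$ determines the first pair; reading off $\xi_2$ shows the reverse. Together with subdirectness of $\delta$ onto $\sigma_1^*$ and $\sigma_2^*$, this means $\delta$ (mod $\sigma$) is a bijective subdirect subalgebra of $\sigma_1^*\times\sigma_2^*$, i.e. the graph of an isomorphism $\Delta\colon\sigma_1^*/\sigma_1\to\sigma_2^*/\sigma_2$. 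Axiom~(4) of the bridge $\delta$ forces $\Delta$ to send the diagonal onto the diagonal; restricting to the diagonals (isomorphic to $\mathbf A_1/\sigma_1$ and $\mathbf A_2/\sigma_2$) and unwinding the definition of $\widetilde\delta$ shows that $\widetilde\delta$ is exactly the graph of an isomorphism $\theta\colon\mathbf A_1/\sigma_1\to\mathbf A_2/\sigma_2$. This is items~(2) and~(3).

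For item~(4), recall that an isomorphism preserves the covering of the $0$-congruence, so $\theta$ carries $\sigma_1^*/\sigma_1$ onto $\sigma_2^*/\sigma_2$. Pull $\delta$ back along $\theta$ in its last two coordinates:
$$\delta'(x_1,x_2,x_3,x_4):=\exists y_3\exists y_4\ \delta(x_1,x_2,y_3,y_4)\wedge\widetilde\delta(x_3,y_3)\wedge\widetilde\delta(x_4,y_4).$$
Using that $\widetilde\delta$ is the graph of $\theta$ and that $\delta$ is a bridge, one checks directly that $\delta'$ is a reflexive bridge from $\sigma_1$ to $\sigma_1$ with $\proj_{1,2}(\delta')=\sigma_1^*$ and $\proj_{3,4}(\delta')=(\theta^{-1}\times\theta^{-1})(\sigma_2^*/\sigma_2)=\sigma_1^*$, and that the equivalence in axiom~(4) for $\delta'$ comes from the one for $\delta$ via $\theta$. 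Lemma~\ref{LEMPCBridgesAreTrivial} (with $\sigma_1$ PC) then gives $\delta'(x_1,x_2,x_3,x_4)=\sigma_1(x_1,x_3)\wedge\sigma_1(x_2,x_4)$ or $\sigma_1(x_1,x_4)\wedge\sigma_1(x_2,x_3)$; translating back through $\theta$ (replacing the occurrence of $\sigma_1$ on the pulled-back coordinate by $\widetilde\delta$) yields the two alternatives $\delta(x_1,x_2,x_3,x_4)=\widetilde\delta(x_1,x_3)\wedge\widetilde\delta(x_2,x_4)$ and $\delta(x_1,x_2,x_3,x_4)=\widetilde\delta(x_1,x_4)\wedge\widetilde\delta(x_2,x_3)$.

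The main obstacle I expect is the routine but error-prone bookkeeping of verifying that the auxiliary relations $\xi_1,\xi_2,\delta'$ are genuine bridges meeting the exact hypotheses of Lemma~\ref{LEMPCBridgesAreTrivial} (reflexivity, $\proj_{1,2}=\proj_{3,4}=\sigma^*$, and axiom~(4)), together with the small but essential point that the ``$(a,b,a,b)$'' tuples exclude the swapped form for $\xi_1$ and $\xi_2$ --- this is exactly what upgrades $\widetilde\delta$ from a well-behaved linked relation to the graph of a bijection. It is also worth noting that the final step of item~(4) is where PC-ness is used decisively: the analogous statement is false for linear congruences (e.g.\ on $\mathbf Z_p$), and it is precisely Lemma~\ref{LEMPCBridgesAreTrivial}, which has no linear counterpart, that makes it go through.
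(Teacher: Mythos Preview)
Your proof is correct and follows essentially the same approach as the paper's: use Lemma~\ref{LEMNoBridgeBetweenDifferentTypes} (via the reversed bridge) for item~(1), compose $\delta$ with its reverse and apply Lemma~\ref{LEMPCBridgesAreTrivial} to obtain bijectivity of $\widetilde\delta$ for items~(2)--(3), and then pull $\delta$ back along $\widetilde\delta$ to a bridge from $\sigma_1$ to itself and apply Lemma~\ref{LEMPCBridgesAreTrivial} once more for item~(4). Your relation $\delta'$ is exactly the paper's auxiliary bridge $\xi$, and your more careful exclusion of the swapped form for $\xi_1,\xi_2$ makes explicit what the paper compresses into the phrase ``composing the bridge with itself we must get a trivial bridge, hence $\widetilde\delta$ is bijective.''
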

\begin{proof}
By Lemma \ref{LEMNoBridgeBetweenDifferentTypes}
$\sigma_2$ must be also be PC congruence.
Then composing the bridge $\delta$ with itself we must get a trivial bridge.
Hence $\widetilde \delta$ is bijective 
and gives an isomorphism 
$\mathbf A_{1}/\sigma_1\cong \mathbf A_{2}/\sigma_2$.
Define a new bridge 
$\xi$ by 
$$\xi(x_1,x_2,x_5,x_6) = 
\exists x_3,x_4 \; 
\delta(x_1,x_2,x_3,x_4)\wedge 
\widetilde \delta(x_5,x_3)\wedge 
\widetilde \delta(x_6,x_4).$$
Then $\xi$ satisfies Lemma \ref{LEMPCBridgesAreTrivial}, which immediately implies the required condition 4.
\end{proof}

% \begin{remark}
% Notice that the assumption $\proj_{1,2}(\delta) = \sigma_1^{*}$
% is necessary as otherwise we have the following counter example.
% Put $A = \{0,1,2,3\}$, 
% $\xi$ is an equivalence relation on $A$ where $0$ and $1$ are equivalent,
% $$\delta(x_1,x_2,x_3,x_4) = 
% (x_1=x_3 \wedge x_2=x_4) \vee 
% (x_1=x_3=2\wedge x_2,x_4\in\{0,1\})\vee
% (x_2=x_4=2\wedge x_1,x_3\in\{0,1\}).$$
% Then $\delta$ is a bridge from 
% $0_{A}$ to $0_{A}$ 
% for the algebra $\mathbf A =(A;\Pol(\delta.\{0\},\{1\},\{2\},\{3\}))$.
% We still do not know whether this assumption is necessary for minimal 
% taylor algebras.
% \end{remark}

%\begin{proof}
    
%\end{proof}
\subsection{Types interaction}

\begin{lem}[\cite{zhuk2020proof}, Lemma 8.19]\label{LEMBridgeBetweenCongruences}
Suppose
$\omega$, $\sigma_{1}$, and $\sigma_{2}$ are congruences on $\mathbf A$, %irreducible
%$\sigma$ is a congruence on $A$,
$\omega\cap\sigma_{1} = \omega\cap\sigma_{2}$,
and $\omega\setminus\sigma_{1}\neq\varnothing$.
%$\cover{\sigma_{1}}\cap \cover{\sigma_{2}}\cap\sigma\neq\varnothing$.
Then there exists a bridge $\delta$ from 
$\sigma_{1}$ to $\sigma_{2}$ such that 
$\widetilde \delta = \sigma_1\circ\sigma_2$.
\end{lem}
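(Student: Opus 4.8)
\textbf{Plan of proof for Lemma~\ref{LEMBridgeBetweenCongruences}.}

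The statement to prove is the last displayed lemma of the excerpt (quoted from \cite{zhuk2020proof}, Lemma~8.19): given congruences $\omega,\sigma_1,\sigma_2$ on $\mathbf A$ with $\omega\cap\sigma_1=\omega\cap\sigma_2$ and $\omega\setminus\sigma_1\neq\varnothing$, one must produce a bridge $\delta$ from $\sigma_1$ to $\sigma_2$ with $\widetilde\delta=\sigma_1\circ\sigma_2$. The plan is to write down $\delta$ explicitly by a primitive positive formula and then verify the four defining properties of a bridge mechanically. Concretely I would set
\[
\delta(x_1,x_2,x_3,x_4)\;=\;\exists y_1\,\exists y_2\ \sigma_1(x_1,x_2)\vee\text{--no},
\]
that is not pp; instead the natural candidate is
\[
\delta(x_1,x_2,x_3,x_4)\;=\;\bigl(\sigma_1(x_1,x_2)\wedge\sigma_1(x_2,x_3)\wedge\sigma_2(x_3,x_4)\bigr)\ \vee\ \dots
\]
again not pp, so the correct move is to use $\omega$ as the linking relation: define
\[
\delta(x_1,x_2,x_3,x_4)\;=\;\exists u\,\exists v\ \ \sigma_1(x_1,u)\wedge\omega(u,v)\wedge\sigma_2(v,x_4)\wedge\sigma_1(x_2,\,?\,),
\]
and the cleanest symmetric form that actually works is
\[
\delta(x_1,x_2,x_3,x_4)\;=\;\exists u\,\exists v\ \ \sigma_1(x_1,u)\wedge\sigma_1(x_2,v)\wedge\omega(u,x_3')\wedge\dots
\]
— since I should not grind the bookkeeping here, the key point is: $\delta$ is built as an existentially quantified conjunction of copies of $\sigma_1$, $\sigma_2$, and $\omega$, arranged so that $\proj_{1,2}$ lands on $\sigma_1\circ\omega\circ\sigma_1\supsetneq\sigma_1$ (strict because $\omega\setminus\sigma_1\neq\varnothing$), $\proj_{3,4}$ lands on $\sigma_2\circ\omega\circ\sigma_2\supsetneq\sigma_2$, and the ``$\sigma_1$ iff $\sigma_2$'' coupling in clause~4 is forced precisely by the hypothesis $\omega\cap\sigma_1=\omega\cap\sigma_2$. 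Because $\delta$ is pp-defined over subalgebras of $\mathbf A^2$, it is automatically a subalgebra of $\mathbf A^4$, so no separate closure argument is needed.

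The verification then proceeds in four steps, one per bridge axiom. Stability of the first two variables under $\sigma_1$ and of the last two under $\sigma_2$ is immediate from the shape of the formula (prepend a $\sigma_1$- or $\sigma_2$-step to $x_1,x_2$ resp.\ $x_3,x_4$ and absorb it using transitivity of the congruence). For $\proj_{1,2}(\delta)\supsetneq\sigma_1$: the diagonal-type witnesses give $\sigma_1\subseteq\proj_{1,2}(\delta)$, and picking $(a,b)\in\omega\setminus\sigma_1$ and feeding it through the $\omega$-slot yields a pair in $\proj_{1,2}(\delta)$ outside $\sigma_1$ — here I will also need to check that this pair can be realized with legitimate values of the other quantified variables, which is where $\omega\cap\sigma_1=\omega\cap\sigma_2$ re-enters to keep the $\sigma_2$-side consistent. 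The symmetric statement $\proj_{3,4}(\delta)\supsetneq\sigma_2$ follows the same way. Finally for axiom~4, $(a_1,a_2,a_3,a_4)\in\delta$ with witnesses $u,v$: I chase the equivalences $(a_1,a_2)\in\sigma_1\iff(u,v)\in\sigma_1\iff(u,v)\in\omega\cap\sigma_1=\omega\cap\sigma_2\iff(u,v)\in\sigma_2\iff(a_3,a_4)\in\sigma_2$, where the middle step uses that $(u,v)\in\omega$ always holds by construction. Computing $\widetilde\delta(x,y)=\delta(x,x,y,y)$ collapses the $\sigma_1$- and $\sigma_2$-blocks on the two ends and leaves exactly $\sigma_1\circ\omega\circ\sigma_2$ composed appropriately; using $\omega\subseteq\sigma_1\circ\sigma_2$ (which follows from $\omega\cap\sigma_i$ being the same and a short argument, or can be arranged directly) gives $\widetilde\delta=\sigma_1\circ\sigma_2$.

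The main obstacle I anticipate is purely combinatorial: choosing the exact arrangement of quantified variables and congruence-copies in the pp-definition so that \emph{all four} properties hold simultaneously — in particular so that $\proj_{1,2}$ is strictly larger than $\sigma_1$ \emph{and} the coupling in axiom~4 is tight \emph{and} $\widetilde\delta$ comes out to be exactly $\sigma_1\circ\sigma_2$ rather than something larger. A naive formula typically satisfies three of the four and fails the fourth; the role of the hypothesis $\omega\cap\sigma_1=\omega\cap\sigma_2$ is subtle and must be invoked at exactly the right place (the iff in axiom~4), so I would fix the formula first, then check axiom~4, and only then confirm the projection computations, adjusting the formula if $\widetilde\delta$ overshoots. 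Everything else — subalgebra closure, stability, the easy inclusion $\sigma_i\subseteq\proj$ — is routine.
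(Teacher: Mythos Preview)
Your approach is exactly the paper's, but you stop short of writing the formula and then worry about a non-obstacle. The paper takes
\[
\delta(x_{1},x_{2},y_{1},y_{2}) \;=\;
\exists z_{1}\exists z_{2}\;
\sigma_{1}(x_{1},z_{1})\wedge
\sigma_{1}(x_{2},z_{2})\wedge
\omega(z_{1},z_{2})\wedge \sigma_{2}(z_{1},y_{1})\wedge \sigma_{2}(z_{2},y_{2}),
\]
and with this in hand every step of your plan goes through cleanly: stability is by transitivity, the tuple $(a,b,a,b)$ with $z_1=a$, $z_2=b$ for $(a,b)\in\omega\setminus\sigma_1$ gives strict containment, and axiom~4 is precisely the chase $(x_1,x_2)\in\sigma_1\Leftrightarrow(z_1,z_2)\in\sigma_1\Leftrightarrow(z_1,z_2)\in\sigma_2\Leftrightarrow(y_1,y_2)\in\sigma_2$, using $\omega\cap\sigma_1=\omega\cap\sigma_2$ and $(z_1,z_2)\in\omega$ at the middle step.

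Your anticipated difficulty with $\widetilde\delta$ is a red herring. Setting $x_1=x_2=x$ and $y_1=y_2=y$, the inclusion $\widetilde\delta\subseteq\sigma_1\circ\sigma_2$ holds because the $z_1$-path alone already gives $(x,z_1)\in\sigma_1$ and $(z_1,y)\in\sigma_2$; the reverse inclusion holds by taking $z_1=z_2$ and using reflexivity of $\omega$. No reduction of $\sigma_1\circ\omega\circ\sigma_2$ is needed, and your claim that $\omega\subseteq\sigma_1\circ\sigma_2$ follows from the hypothesis is actually false (take $\omega=A^2$ and $\sigma_1=\sigma_2=0_{\mathbf A}$), so it is fortunate that it is not required.
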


\begin{proof}
Let us define a bridge $\delta$ by
$$\delta(x_{1},x_{2},y_{1},y_{2}) =
\exists z_{1}\exists z_{2}\;
\sigma_{1}(x_{1},z_{1})\wedge
\sigma_{1}(x_{2},z_{2})\wedge
\omega(z_{1},z_{2})\wedge \sigma_{2}(z_{1},y_{1})\wedge \sigma_{2}(z_{2},y_{2}).$$ 
As it follows from the definition, the first two variables of $\delta$ are stable under
$\sigma_{1}$ and the last two variables 
are stable under $\sigma_{2}$.

Let us show that
for any $(a_{1},a_{2},a_{3},a_{4})\in\delta$ we have
$(a_{1},a_{2})\in\sigma_{1}\Leftrightarrow(a_{3},a_{4})\in\sigma_{2}$.
In fact, if $(x_{1},x_{2})\in\sigma_{1}$, then 
$(z_{1},z_{2})\in\sigma_{1}$.
Since $\omega\cap\sigma_{1} = \omega\cap\sigma_{2}$, 
we have $(z_{1},z_{2})\in\sigma_{2}$.
Therefore, $(y_{1},y_{2})\in\sigma_{2}$.

Choose $(a,b)\in\omega\setminus\sigma_{1}$.
Then $(a,b,a,b)\in\delta$ (put $z_{1} = a$, $z_{2} = b$), which gives 
the last necessary property of the bridge.

It follows immediately from the definition of $\delta$ 
that $\widetilde \delta = \sigma_1\circ\sigma_2$.
\end{proof}

\begin{lem}\label{LEMTwoStableIntersection}
Suppose $C_{1}<_{T_{1}(\sigma_{1})}^{A} B_1\lll A$,
$C_{2}<_{T_{2}(\sigma_{2})}^{A} B_2\lll A$,
$T_{1},T_{2}\in\{\TBA,\TC,\TS,\TD\}$,
$C_{1}\cap B_{2}\neq\varnothing$,
$B_{1}\cap C_{2}\neq\varnothing$,
$C_{1}\cap C_{2} = \varnothing$.
Then 
\begin{enumerate}
    \item $T_{1}=T_{2}\in \{\TBA,\TC,\TD\}$;
    \item if $T_{1} = T_{2} = \TD$,   then there is a bridge $\delta$
    from $\sigma_{1}$ to $\sigma_2$
    such that 
    $\widetilde \delta = \sigma_{1}\circ\sigma_{2}$.
%    \item if $T_{1} = T_{2} = \TPC$, 
%    then $\sigma_1 = \sigma_{2}$.
\end{enumerate} 
\end{lem}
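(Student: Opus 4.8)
The plan is to reduce everything to a self-intersection statement handled by Lemma~\ref{LEMSelfIntersectionPC} and the intersection machinery of Lemma~\ref{LEMIntersectionPCLinearIsGood}, together with the bridge-construction Lemma~\ref{LEMBridgeBetweenCongruences}. First I would pass to a convenient factor: by Corollary~\ref{CORPropagateMultiplyByCongruence}(e) and the reductions $C_i<_{T_i(\sigma_i)}^{A}B_i$, it suffices to analyze the situation after working modulo the relevant dividing congruences, so I may assume the $\sigma_i$ are as explicit as possible and in the $\TD$ case that we are looking at equality-type congruences on the factors $B_i/\sigma_i$. The key observation is that $C_1\cap B_2\neq\varnothing$ and $B_1\cap C_2\neq\varnothing$ force the subuniverses $C_1,C_2$ to ``sit inside'' a common subdirect frame $S\le_{sd}(B_1\cap B_2 \text{-related object})$, while $C_1\cap C_2=\varnothing$ says these two sub-objects are disjoint; this is precisely the hypothesis of Corollary~\ref{CORMainStableIntersection} (or of Theorem~\ref{THMMainStableIntersection}) applied with $n=2$ to the relation $R=\{(a,a)\mid a\in B_1\cap B_2\}$ suitably set up, once one checks the ``crucial'' non-emptiness conditions, which are exactly conditions 3 and 4 there.

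The main steps, in order: (1) Establish $T_1=T_2$ and exclude $\TS$. For this I would invoke Lemma~\ref{LEMBACenterSPossibleIntersections} and Lemma~\ref{LEMStrongNonemptyIntersection}: if one of $T_1,T_2$ is $\TS$, pick the BA-and-central witness $G\le C_i$ inside $B_i$, use Lemma~\ref{LEMStrongNonemptyIntersection} to get $G\cap(\text{the other }B)\neq\varnothing$, and then Lemma~\ref{LEMBACenterImplyIntersection} gives BA/central subuniverses that must intersect (Lemma~\ref{LEMBACenterSPossibleIntersections}), contradicting $C_1\cap C_2=\varnothing$. A similar but shorter argument with Lemma~\ref{LEMBACenterImplyIntersection} and Lemma~\ref{LEMBACenterSPossibleIntersections} forces the common type to be in $\{\TBA,\TC,\TD\}$ and rules out mismatches like $\TBA$ vs $\TC$ by the same disjointness clash. (2) In the case $T_1=T_2=\TD$, produce the bridge. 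Here I would build the subdirect relation $S$ between $B_1/\sigma_1$ and $B_2/\sigma_2$ coming from $B_1\cap B_2$ via Corollary~\ref{CORIntersectionPCLinearIsGood}; disjointness $C_1\cap C_2=\varnothing$ translates into: the classes of $\sigma_1$ and $\sigma_2$ picked out by $C_1$ and $C_2$ are ``separated'' by $S$, so $\omega:=\{(a,b):(a/\sigma_1,b/\sigma_1)\in \text{something}\}$ type reasoning applies — more precisely I want congruences on a common algebra with $\omega\cap\sigma_1=\omega\cap\sigma_2$ but $\omega\setminus\sigma_1\neq\varnothing$, whence Lemma~\ref{LEMBridgeBetweenCongruences} yields a bridge $\delta$ from $\sigma_1$ to $\sigma_2$ with $\widetilde\delta=\sigma_1\circ\sigma_2$.

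To arrange the setup for Lemma~\ref{LEMBridgeBetweenCongruences}, I would lift everything to the algebra $\mathbf R=\{(a,a)\mid a\in B_1\cap B_2\}\le \mathbf A\times\mathbf A$, or even more simply work on $B_1\cap B_2$ directly with the two congruences $\sigma_1\cap(B_1\cap B_2)^2$ and $\sigma_2\cap(B_1\cap B_2)^2$, taking $\omega$ to be the congruence on $B_1\cap B_2$ generated by the pair witnessing $C_1\cap C_2=\varnothing$ together with the ``$C_1$-class meets $C_2$-class'' incidences; using Lemma~\ref{LEMIntersectionPCLinearIsGood}(d) one shows $\omega\cap\sigma_1=\omega\cap\sigma_2$ (both equal the trace that makes the two disjoint pieces sit in one block), while $C_1\cap B_2\neq\varnothing\neq B_1\cap C_2$ together with $C_1\cap C_2=\varnothing$ gives $\omega\setminus\sigma_1\neq\varnothing$. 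Then Lemma~\ref{LEMBridgeBetweenCongruences} applies verbatim. The hard part will be step~(2): getting the bookkeeping exactly right so that the congruences fed into Lemma~\ref{LEMBridgeBetweenCongruences} genuinely satisfy $\omega\cap\sigma_1=\omega\cap\sigma_2$ — this is where Lemma~\ref{LEMIntersectionPCLinearIsGood} and Corollary~\ref{CORIntersectionPCLinearIsGood} do the real work, controlling that $(B_1\cap B_2)/\sigma_i$ is either a point or all of $B_i/\sigma_i$, so that the only obstruction to $C_1\cap C_2\neq\varnothing$ is a genuine ``linkedness'' phenomenon which is exactly what the bridge records. Everything else (step~1, and the reduction to the factor algebras) is routine given the quoted lemmas.
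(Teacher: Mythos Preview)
Your overall architecture is right---Lemma~\ref{LEMIntersectionPCLinearIsGood} and Lemma~\ref{LEMBridgeBetweenCongruences} are exactly the two ingredients the paper uses---but several of the details are off, and one of them is the genuine crux.

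First, two minor issues. Invoking Theorem~\ref{THMMainStableIntersection} or Corollary~\ref{CORMainStableIntersection} here is circular: both are proved \emph{from} this lemma. And for ruling out $\TS$, Lemma~\ref{LEMStrongNonemptyIntersection} does not apply as stated (you do not have $C_2\lll^{A}B_1$); the paper instead uses Lemma~\ref{LEMIntersectionPCLinearIsGood}(s) directly: with $G<_{\TBA,\TC}B_1$, $G\le C_1$, and $B_1\cap C_2\neq\varnothing$, one gets $G\cap C_2\neq\varnothing$, hence $C_1\cap C_2\neq\varnothing$. You also skip the mixed case $T_1\in\{\TBA,\TC\}$, $T_2=\TD$, which is not handled by Lemma~\ref{LEMBACenterSPossibleIntersections}; the paper argues separately that $(B_1\cap B_2)/\sigma_2$ must be all of $B_2/\sigma_2$ (via Lemma~\ref{LEMIntersectionPCLinearIsGood}(d)), and then $(C_1\cap B_2)/\sigma_2$ would be a proper BA/central subuniverse of $B_2/\sigma_2$, contradicting that $\sigma_2$ is dividing.

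The substantive gap is in step~(2). Your $\omega$ (``the congruence on $B_1\cap B_2$ generated by the pair witnessing $C_1\cap C_2=\varnothing$'') is not well-defined, and there is no reason such a thing would satisfy $\omega\cap\sigma_1=\omega\cap\sigma_2$. The paper's choice is specific and is what makes everything click: take $\omega$ to be the intersection of \emph{all} dividing congruences appearing in the chains $B_1\lll A$ and $B_2\lll A$. Then the ``moreover'' clause of Lemma~\ref{LEMIntersectionPCLinearIsGood}(d), applied once to $B_1,C_2,\sigma_1$ and once to $B_2,C_1,\sigma_2$, yields $\sigma_1\supseteq\sigma_2\cap\omega$ and $\sigma_2\supseteq\sigma_1\cap\omega$, hence $\sigma_1\cap\omega=\sigma_2\cap\omega$. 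Picking $c_1\in C_1\cap B_2$ and $c_2\in B_1\cap C_2$ gives $(c_1,c_2)\in\omega\setminus\sigma_1$, and Lemma~\ref{LEMBridgeBetweenCongruences} then produces the bridge with $\widetilde\delta=\sigma_1\circ\sigma_2$. So the missing idea is precisely this global choice of $\omega$ and the use of the ``moreover'' clause; once you plug that in, your plan becomes the paper's proof.
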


\begin{proof}
Assume that $T_{1} = \TS$. Choose $S_1\le C_{1}$ such that $S_{1}<_{\TBA,\TC} B_{1}$. 
By Lemma \ref{LEMIntersectionPCLinearIsGood}(s) 
$S_{1}\cap C_{2}\neq\varnothing$ and therefore 
$C_{1}\cap C_{2}\neq\varnothing$, which gives a contradiction.

Similarly, we prove that the case $T_{2} = \TS$ cannot happen.

Assume that $T_{1},T_{2}\in\{\TBA,\TC\}$.
Then by Lemma \ref{LEMBACenterImplyIntersection} 
$C_{1}\cap B_{2}<_{T_{1}}B_{1}\cap B_{2}$,
$B_{1}\cap C_{2}<_{T_{2}}B_{1}\cap B_{2}$
and 
the claim follows from Lemma \ref{LEMBACenterSPossibleIntersections}.

Assume that
$T_{1}\in\{\TBA,\TC\}$ and $T_{2}=\TD$.
By Lemma \ref{LEMIntersectionPCLinearIsGood}(d),
$(B_{1}\cap B_{2})/\sigma_{2} = C_{2}/\sigma_{2}$
or 
$(B_{1}\cap B_{2})/\sigma_{2} = B_{2}/\sigma_{2}$.
First case would imply 
that 
$B_{1}\cap B_{2} = B_{1}\cap C_{2}$, which contradicts 
$C_{1}\cap C_{2}=\varnothing$ and 
$C_{1}\cap B_{2}\neq \varnothing$.
Thus, we assume that $(B_{1}\cap B_{2})/\sigma_{2} = B_{2}/\sigma_{2}$.
Combining Lemmas 
\ref{LEMBACenterImplyIntersection}
and \ref{LEMBACenterSImplyFactor} 
we obtain 
$(C_{1}\cap B_{2})/\sigma_{2}<_{T_{1}}
(B_{1}\cap B_{2})/\sigma_{2}= B_{2}/\sigma_{2}$, 
which contradicts the definition of a dividing congruence.

It remains to consider the case when 
$T_{1}=T_{2}=\TD$. % \in\{\TPC,\TL\}$.
Let $\omega_{1},\dots,\omega_{s}$ be all the dividing congruences
coming from 
$B_{1}\lll^{A} A$ and $B_{2}\lll^{A} A$.
Put $\omega = \cap_{i=1}^{s} \omega_{i}$.
By Lemma \ref{LEMIntersectionPCLinearIsGood}(d)
we have 
$\sigma_{1}\supseteq \sigma_{2}\cap \omega$
and 
$\sigma_{2}\supseteq \sigma_{1}\cap \omega$.
By choosing 
$c_1\in C_{1}\cap B_{2}$ and $c_2\in B_{1}\cap C_{2}$, 
we obtain $(c_1,c_2)\in \omega\setminus \sigma_{1}$.
Thus, 
$\sigma_{1}\cap\omega = \sigma_2\cap \omega$
and $\omega\setminus\sigma_{1}\neq\varnothing$,
and 
Lemma \ref{LEMBridgeBetweenCongruences}
implies the existence of a bridge $\delta$ 
from $\sigma_1$ to $\sigma_2$ 
such that $\widetilde \delta = \sigma_{1}\circ\sigma_{2}$.
\end{proof}

\subsection{Factorization of strong subalgebras}

In this section first we prove several technical lemmas, 
then we show we can factorize subalgebras of type $T$ by a congruence keeping the type of the subalgebra 
(see Lemma \ref{LEMMainExistenceOfIrreducibleCongruence}).
%Here we assume that $\lll$ has $I$ (irreducible congruence) instead of PC and L

% \begin{lem}\label{LEMExtendStrongToRelation}
% Suppose $R\le_{sd} A\times B$, 
% $C\lll A$. Then 
% $R\cap (C\times B)\lll R$.
% \end{lem}

% \begin{lem}\label{LEMIntersectionPCLinearIsGoodForRelations}
% Suppose $R\le_{sd} A_1\times\dots\times A_{n}$,
% $B_{i}\lll A_{i}$ for every $i$,
% $\sigma$ is an irreducible congruence on $A_1$,
% and $B_{1}/\sigma$ is BA and center free.
% Then 
% $\proj_{1}(R\cap (B_{1},\dots,B_{n}))/\sigma$
% is either empty, or of size 1, or 
% equal to $B_{1}/\sigma$.
% \end{lem}

\begin{lem}\label{LEMCenterCanBePushedIn}
Suppose 
$R\le_{sd} A_1\times A_2$,
$R\cap (B_{1}\times B_{2})\neq \varnothing$,
$B_{1}\lll A_{1}$, $B_{2}\lll A_2$,
$\sigma$ is a congruence on $A_1$,
$B_{1}/\sigma$ is BA and center free,
there exists $c\in A_{2}$ such that 
 $(E\times\{c\})\cap R\neq\varnothing$ for every $E\in B_{1}/\sigma$.
Then 
there
exists $c\in B_{2}$ such that 
 $(E\times\{c\})\cap R\neq\varnothing$ for every $E\in B_{1}/\sigma$.
\end{lem}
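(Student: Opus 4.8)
The statement says: if $R\le_{sd} A_1\times A_2$, $B_1\lll A_1$, $B_2\lll A_2$, $R\cap(B_1\times B_2)\neq\varnothing$, $\sigma$ is a congruence on $A_1$ with $B_1/\sigma$ BA and center free, and some element $c\in A_2$ meets every $\sigma$-block of $B_1$ through $R$ (i.e. $(E\times\{c\})\cap R\neq\varnothing$ for all $E\in B_1/\sigma$), then such a $c$ can in fact be found inside $B_2$. The plan is to first restrict attention to $R\cap(B_1\times B_2)$ and reduce to the sub-relation $R'$ whose first coordinate is reduced to $B_1$ and whose second coordinate is reduced to $B_2$, then to push the ``center-like'' element $c$ into $B_2$ using the properties of $\sigma$ together with the fact that $B_1/\sigma$ has no proper BA or central subuniverse.

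First I would consider the relation $R_1 := \proj_{1}(R\cap(A_1\times\{c\}))$, which by hypothesis satisfies $R_1/\sigma = B_1/\sigma$ after intersecting with $B_1$ — more precisely, $B_1\cap R_1$ meets every $\sigma$-block of $B_1$. Form the binary relation $S\le (B_1/\sigma)\times (A_2)$ defined by $(E,a)\in S$ iff $(E\times\{a\})\cap R\neq\varnothing$; this is a subuniverse of $(A_1/\sigma)\times A_2$ restricted appropriately, and by assumption $B_1/\sigma\times\{c\}\subseteq S$, so the right-hand side of $S$ (restricted to the $B_1/\sigma$ rows) has a center. The goal is to produce a point of that center lying in $B_2$. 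Here the natural tool is Corollary~\ref{CORIntersectionPCLinearIsGood} / Lemma~\ref{LEMIntersectionPCLinearIsGood} applied to the subdirect relation between $B_1/\sigma$ and $B_2$ coming from $R\cap(B_1\times B_2)$, together with Lemma~\ref{LEMCentralRelationImplies}: since $B_1/\sigma$ is BA and center free, the ``center on the left'' alternative of Lemma~\ref{LEMCentralRelationImplies} is impossible, which forces the relevant subdirect relation to be a product, and then a center element inside $B_2$ appears automatically.

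More concretely, I would set up the subdirect relation $Q := \{(E, b)\mid E\in B_1/\sigma,\ b\in B_2,\ (E\times\{b\})\cap R\neq\varnothing\}$ (one checks $Q$ is subdirect in $(B_1/\sigma)\times(B_2/{\sim})$ for a suitable congruence on $B_2$, using $R\cap(B_1\times B_2)\neq\varnothing$ and $1$-consistency-type arguments via Corollary~\ref{CORPropagateToRelations}). We know an element $c\in A_2$ sees all of $B_1/\sigma$ through $R$; the task is to replace $c$ by an element of $B_2$ with the same property. Using that $B_1/\sigma$ is center free and $B_2\lll A_2$, Lemma~\ref{LEMCentralRelationImplies} applied to an appropriate binary relation (coordinates $B_1/\sigma$ and $B_2$) rules out a proper central subuniverse on the left, so the only possibility is a binary absorbing subuniverse on $B_2$ — but $B_1/\sigma$ being center free and BA-free blocks this too once we factor correctly — forcing the relation restricted to $B_1/\sigma$ to be full, hence a uniform center element in $B_2$ exists. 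This is exactly the shape of the argument used inside the proof of Lemma~\ref{LEMBAConLeftOrCenterOnRight} and Subsubcases 1B3 / 2A3 of Lemma~\ref{LEMIntersectionPCLinearIsGood}.

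\textbf{Main obstacle.} The delicate point is bookkeeping the congruences: $c$ is guaranteed only in $A_2$, while $B_2$ is a $\lll$-subuniverse of $A_2$ obtained through a whole tower of reductions, so to pull $c$ down into $B_2$ one must show that the ``column'' $\{a\in A_2\mid (E\times\{a\})\cap R\neq\varnothing\ \forall E\in B_1/\sigma\}$ — call it $C_0$ — intersects $B_2$. The cleanest route is to prove $C_0\lll A_2$ (via Corollary~\ref{CORPropagateToRelations}(r1), since $C_0$ is a projection of an intersection of $R$-type relations with $\lll$-subuniverses) and then invoke Lemma~\ref{LEMStrongNonemptyIntersection} or Lemma~\ref{LEMIntersectionPCLinearIsGood}(s) to get $C_0\cap B_2\neq\varnothing$ — but this requires $C_0$ to contain a BA/central subuniverse of $A_2$ or more care, because $\lll$-intersections can be empty in general. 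So the real work is to verify that the center free / BA free hypothesis on $B_1/\sigma$ upgrades ``$C_0\neq\varnothing$'' to ``$C_0$ behaves like a strong subuniverse'', allowing the intersection with $B_2$ to be nonempty; I expect this to occupy the bulk of the proof, with the rest being the routine translation between $R$, its restrictions, and the factor $B_1/\sigma$.
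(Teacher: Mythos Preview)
Your proposal has a genuine gap. The argument via Lemma~\ref{LEMCentralRelationImplies} on a binary relation with coordinates $B_1/\sigma$ and $B_2$ cannot conclude as you suggest: that lemma gives either a center on the $B_1/\sigma$ side (ruled out) or a BA subuniverse on the $B_2$ side, but $B_2$ is \emph{not} assumed BA or center free, so the second alternative is perfectly possible and tells you nothing. Your attempt to repair this by showing $C_0\lll A_2$ via Corollary~\ref{CORPropagateToRelations}(r1) also does not go through: $C_0=\{a\in A_2\mid (E\times\{a\})\cap R\neq\varnothing\ \forall E\in B_1/\sigma\}$ is an intersection indexed by the blocks of $B_1/\sigma$, not a single projection of $R$ restricted to $\lll$-subuniverses, so that corollary does not apply.

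The paper's method is different and avoids this obstacle entirely. Instead of working with the whole of $B_2$ at once, it descends through the chain witnessing $B_2\lll A_2$: take the minimal $B_2'$ in that chain such that a good $c$ can still be found in $B_2'$, and let $B_2''<_{T(\delta)}^{A_2}B_2'$ be the next step down. Now do a case split on the type $T$. For $T\in\{\TBA,\TC,\TS\}$, the $|A_1|$-ary relation $S'=\{(a_1/\sigma,\dots,a_{|A_1|}/\sigma)\mid \exists c\in B_2'\ \forall i\colon a_i\in B_1,\ (a_i,c)\in R\}$ equals $(B_1/\sigma)^{|A_1|}$, and replacing $B_2'$ by $B_2''$ (or by a witness $G\le B_2''$ with $G<_{\TBA,\TC}B_2'$ when $T=\TS$) gives a proper BA/central subuniverse of $(B_1/\sigma)^{|A_1|}$, contradicting Lemma~\ref{LEMBACenterSOnPowerImplies}. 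For $T=\TD$, one augments $S'$ with an extra coordinate in $B_2'/\delta$; Corollary~\ref{CORIntersectionPCLinearIsGood} forces the last projection to be all of $B_2'/\delta$, so the resulting relation is central on the right, and Lemma~\ref{LEMCentralRelationImplies} now gives the contradiction because \emph{both} $B_1/\sigma$ and $B_2'/\delta$ are BA and center free (the latter by definition of a dividing congruence). The key idea you are missing is this descent to a single step of the tower, where the type of the step supplies exactly the extra structure needed.
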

\begin{proof}
Consider a minimal
$B_{2}'$ such that 
$B_{2}\lll^{A_{2}} B_{2}''<_{T(\delta)}^{A_{2}} B_{2}'\lll A_{2}$
and $c$ can be chosen from $B_{2}'$.
Define 
$$S' = \{(a_1/\sigma,\dots,a_{|A_1|}/\sigma)\mid 
\exists c\in B_{2}'\; \forall i\colon a_{i}\in B_{1}, (a_i,c)\in R\},$$
$$S'' = \{(a_1/\sigma,\dots,a_{|A_1|}/\sigma)\mid 
\exists c\in B_{2}''\; \forall i\colon a_{i}\in B_{1},(a_i,c)\in R\}.$$

If $T\in \{\TBA,\TC\}$ then by Lemmas
\ref{LEMBACenterImplyIntersection} and \ref{LEMBACenterSImplyFactor}
$S''<_{T} S'= (B_{1}/\sigma)^{|A_1|}$ and by Lemma 
\ref{LEMBACenterSOnPowerImplies}
there exists a BA or central subuniverse on $B_{1}/\sigma$,
which contradicts our assumptions.

If $T=\TS$ then choose 
$D_{2}\le B_2''$ such that 
$D_{2}<_{\TBA,\TC} B_{2}'$.
Combining Corollary \ref{CORReverseHomomorphism} and Lemma \ref{LEMIntersectionPCLinearIsGood}(s) 
we obtain that $R\cap (B_{1}\times D_{2})\neq \varnothing.$ 
Hence, Lemmas \ref{LEMBACenterImplyIntersection} and \ref{LEMBACenterSImplyFactor}
imply the existence of both BA and central subuniverse on 
$S'= (B_{1}/\sigma)^{|A_1|}$ and by Lemma 
\ref{LEMBACenterSOnPowerImplies}
there exists a BA and central subuniverse on $B_{1}/\sigma$,
which contradicts our assumptions.

Suppose 
$T=\TD$.
Define a relation $S\le (B_{1}/\sigma)^{|A_{1}|}\times B_{2}'/\delta$ by 
$$S = \{(a_1/\sigma,\dots,a_{|A_1|}/\sigma,b/\delta)\mid 
\exists c\in B_{2}'\; \forall i\colon a_{i}\in B_{1},(a_i,c)\in R, (c,b)\in\delta\}.$$
%Put 
%$S_{0} = S\cap 
%((B_{1}/\sigma)^{|A_1|}\times B_{2}')$.
By the choice of $B_{2}'$ 
there exists $d\in B_{2}'/\delta$ such that 
$(B_{1}/\sigma)^{|A_1|}\times \{d\}\subseteq S$
but $(B_{1}/\sigma)^{|A_1|}\times B_{2}''/\delta\neq S$.
Since 
$\proj_{|A_{1}|+1}(S)/\delta
=\proj_{2}(R\cap (B_{1}\times B_{2}'))/\delta$,
Corollary \ref{CORIntersectionPCLinearIsGood}
%By Lemma \ref{LEMIntersectionPCLinearIsGoodForRelations}
%$\proj_{|A_{1}|+1}(S)/\delta$  
implies that $\proj_{|A_{1}|+1}(S)/\delta=B_{2}'/\delta$.
Hence, 
$S$ is a central relation.
Combining Lemmas \ref{LEMCentralRelationImplies}
and \ref{LEMBACenterSOnPowerImplies} 
we get a contradiction with the assumption that 
$B_{1}/\sigma$ and 
$B_{2}'/\delta$
 are BA and center free.
\end{proof}

\begin{lem}\label{LEMLeftLinkedStayFull}
Suppose 
$R\le_{sd} A_1\times A_2$,
$R\cap (B_{1}\times B_{2})\neq \varnothing$, 
$B_{1}\lll A_{1}$, $B_{2}\lll A_2$,
$\sigma$ is a congruence on $A_1$,
$B_{1}/\sigma$ is BA and center free,
$(\LeftLinked(R)\cap B_{1}^{2})/\sigma = (B_{1}/\sigma)^{2}$. 
Then 
$(\LeftLinked(R\cap (B_{1}\times B_{2})))/\sigma=(B_{1}/\sigma)^{2}$. 
\end{lem}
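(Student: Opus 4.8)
The goal is to show that restricting to the strong subuniverses $B_1 \lll A_1$, $B_2 \lll A_2$ does not destroy the property that the first coordinate is fully left-linked modulo $\sigma$. The natural strategy is to mimic the argument of Lemma~\ref{LEMCenterCanBePushedIn}: we already know how to ``push in'' a center of a subdirect relation to the strong subuniverse; here we must push in left-linkedness. First I would fix an arbitrary pair $(E, E')$ of blocks of $\sigma$ lying in $B_1/\sigma$ and aim to show that $E$ and $E'$ are connected by $\LeftLinked(R \cap (B_1 \times B_2))$. Since $(\LeftLinked(R) \cap B_1^2)/\sigma = (B_1/\sigma)^2$, there is a finite fence (alternating sequence of tuples of $R$ sharing second coordinates) inside $B_1 \times A_2$ connecting an element of $E$ to an element of $E'$; encode the relevant endpoints of such fences as a high-arity relation and intersect with the strong subuniverses, just as in Lemma~\ref{LEMCenterCanBePushedIn}.

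Concretely, I would build, for each fixed pair of $\sigma$-blocks $E, E'$ with $E,E' \subseteq B_1$, a relation $S \le (B_1/\sigma)^{|A_1|} \times \text{(something)}$ whose tuples record simultaneous fence-connections (one fence per coordinate) that stay inside $B_1 \times B_2$ after reduction; the point is that the ``full'' version of this relation, built over $B_1 \times A_2$, contains a tuple witnessing connection of $E$ to $E'$ in every coordinate. Then, applying Corollary~\ref{CORPropagateToRelations} and Corollary~\ref{CORIntersectionPCLinearIsGood} to control what happens when the $A_2$-coordinates are reduced to $B_2$, together with the fact that $B_1/\sigma$ is BA and center free, one shows the reduced relation still surjects onto the full $(B_1/\sigma)^{|A_1|}$. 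A failure of this would produce, via Lemmas~\ref{LEMBACenterSImplyPPDefinition}, \ref{LEMBACenterSImplyFactor}, \ref{LEMCentralRelationImplies} and \ref{LEMBACenterSOnPowerImplies}, a BA or central subuniverse on $B_1/\sigma$, contradicting the hypothesis—exactly the contradiction scheme used in the proof of Lemma~\ref{LEMCenterCanBePushedIn}, and in Subcases 2A/2B of Lemma~\ref{LEMIntersectionPCLinearIsGood}.

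The main technical obstacle is handling the induction over the ``good'' chain $B_2 \lll A_2$: as in Lemma~\ref{LEMCenterCanBePushedIn}, one must peel off one reduction $B_2'' <_{T(\delta)}^{A_2} B_2'$ at a time and split into the cases $T \in \{\TBA,\TC\}$, $T = \TS$, and $T = \TD$. The first two cases are routine applications of Lemmas~\ref{LEMBACenterImplyIntersection}, \ref{LEMBACenterSImplyFactor}, \ref{LEMBACenterSOnPowerImplies} and (for $\TS$) Lemma~\ref{LEMIntersectionPCLinearIsGood}(s). The $\TD$ case is where care is needed: one forms the appropriate central relation $S \le (B_1/\sigma)^{|A_1|} \times B_2'/\delta$, uses Corollary~\ref{CORIntersectionPCLinearIsGood} to argue the last coordinate projects onto all of $B_2'/\delta$ (so $S$ is central), and then derives a contradiction with ``$B_1/\sigma$ and $B_2'/\delta$ are BA and center free'' via Lemmas~\ref{LEMCentralRelationImplies} and \ref{LEMBACenterSOnPowerImplies}. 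I expect the proof to be essentially a transcription of the proof of Lemma~\ref{LEMCenterCanBePushedIn} with ``center'' replaced by ``left-linkedness fence''; the subtlety is choosing the auxiliary relation $S$ so that its full version really does witness $(B_1/\sigma)^2$-connectivity in every coordinate simultaneously, which follows from the hypothesis $(\LeftLinked(R) \cap B_1^2)/\sigma = (B_1/\sigma)^2$ applied coordinatewise and then combined using that $B_1 \lll A_1$.
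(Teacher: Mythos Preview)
Your plan has a real gap: you assert that the hypothesis $(\LeftLinked(R)\cap B_1^2)/\sigma=(B_1/\sigma)^2$ gives a fence \emph{inside $B_1\times A_2$}, but it does not. It only says the endpoints lie in $B_1$; the intermediate $A_1$-elements of the fence may leave $B_1$. Your induction peels off only the $B_2\lll A_2$ chain, so you never explain how the intermediate pivots are forced into $B_1$. Without that, your encoded relation $S$ does not even contain the tuple you want before you start restricting, and the whole contradiction scheme collapses. (Patching this would require a second, interleaved induction along the $B_1\lll A_1$ chain, which is considerably more work than you outline; also, your description of $S$ is muddled---you fix a single pair $E,E'$ yet still index over $(B_1/\sigma)^{|A_1|}$.)

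The paper sidesteps the intermediate-elements problem entirely. Set $R_n=(R\circ R^{-1})^n$, so $\LeftLinked(R)=\bigcup_n R_n$. If already $(R_1\cap B_1^2)/\sigma=(B_1/\sigma)^2$, then Lemma~\ref{LEMBAConLeftOrCenterOnRight} applied to $\{(a/\sigma,b):a\in B_1,(a,b)\in R\}$ produces a single $c\in A_2$ meeting every $\sigma$-block of $B_1$ via $R$; Lemma~\ref{LEMCenterCanBePushedIn} pushes that $c$ into $B_2$, and one common neighbour gives full left-linkedness at once. If instead $(R_1\cap B_1^2)/\sigma\neq(B_1/\sigma)^2$, take the maximal $n=2^k$ with $(R_n\cap B_1^2)/\sigma\neq(B_1/\sigma)^2$; then $(R_{2n}\cap B_1^2)/\sigma$ is full, so the same two lemmas (now applied to $R_n\le_{sd}A_1\times A_1$ with $B_1$ playing the role of $B_2$) yield a central element in $B_1$, forcing $(R_n\cap B_1^2)/\sigma$ to be a proper central relation on the BA-and-center-free algebra $B_1/\sigma$---a contradiction via Lemma~\ref{LEMCentralRelationImplies}. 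Thus Case 2 is impossible; the doubling trick is precisely what handles the ``intermediate pivots outside $B_1$'' issue you missed, and Lemma~\ref{LEMCenterCanBePushedIn} is used as a black box rather than being re-run.
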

\begin{proof}
By $R_{n}$ 
we denote the binary relation defined by 
$R_{n} = 
\underbrace{R\circ R^{-1}\circ R\circ R^{-1} \circ\dots\circ
R\circ R^{-1}}_{2n}.$
For sufficiently large $n$
we have 
$(R_{n}\cap B_{1}^{2})/\sigma = (B_{1}/\sigma)^{2}$.

Consider two cases:

Case 1. $(R_{1}\cap B_{1}^{2})/\sigma = (B_{1}/\sigma)^{2}$.
Applying Lemma \ref{LEMBAConLeftOrCenterOnRight}
to $S = \{(a/\sigma,b)\mid a\in B_{1}, (a,b)\in R\}$
we derive the existence of $c\in A_2$ 
such that for every $E\in B_{1}/\sigma$ 
we have $(E\times \{c\})\cap R\neq\varnothing$.
By Lemma \ref{LEMCenterCanBePushedIn}
$c$ can be chosen from $B_{2}$
and $(\LeftLinked(R\cap (B_{1}\times B_{2})))/\sigma=(B_{1}/\sigma)^{2}$.

Case 2. $(R_{1}\cap B_{1}^{2})/\sigma \neq (B_{1}/\sigma)^{2}$.
Consider the maximal $n=2^k$ such that 
$(R_{n}\cap B_{1}^{2})/\sigma \neq (B_{1}/\sigma)^{2}$.
Hence, 
$(R_{2n}\cap B_{1}^{2})/\sigma = (B_{1}/\sigma)^{2}$.
Applying Lemma \ref{LEMBAConLeftOrCenterOnRight}
to $S = \{(a/\sigma,b)\mid a\in B_{1}, (a,b)\in R_{n}\}$
we derive the existence of $c\in A_1$ 
such that for every $E\in B_{1}/\sigma$ 
we have $(E\times \{c\})\cap R_{n}\neq\varnothing$.
By Lemma \ref{LEMCenterCanBePushedIn},
$c$ can be chosen from $B_{1}$.
Hence, the relation 
$(R_{n}\cap B_{1}^2)/\sigma$ is central, which 
by Lemma \ref{LEMCentralRelationImplies} 
implies that 
$(R_{n}\cap B_{1}^2)/\sigma= (B_{1}/\sigma)^2$ 
and contradicts our assumption.
\end{proof}

\begin{lem}\label{LEMCongruenceEitherCutOrDoNothing}
Suppose, 
$\sigma$ is a dividing congruence for $B\lll A$, 
$\delta$ is a congruence on $A$, and
$(\delta\cap B^{2})/\sigma \neq B^{2}/\sigma$.
Then 
\begin{enumerate}
    \item[(1)] $(\delta\cap B^{2}) \subseteq (\sigma\cap B^{2})$;
    \item[(2)] $\sigma\supseteq \delta\cap\omega$;
\item[(3)] $(\delta\vee (\sigma \cap\omega))\cap B^{2} = \sigma\cap B^{2}$;
\item[(4)] $(\delta\vee (\sigma \cap\omega))\cap \omega  = \sigma\cap\omega$;
\end{enumerate}
where $\omega$ is 
the intersection of
%$\sigma_{1},\dots,\sigma_{s}$ are 
all the dividing
congruences
coming from  
$B\lll A$.
\end{lem}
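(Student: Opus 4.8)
The statement is Lemma~\ref{LEMCongruenceEitherCutOrDoNothing}, which extracts the crucial dichotomy for a single dividing congruence $\sigma$ attached to $B\lll A$: if a congruence $\delta$ on $A$ does not keep $B/\sigma$ intact (i.e.\ $(\delta\cap B^2)/\sigma\neq B^2/\sigma$), then in fact $\delta$ collapses $B$ all the way into $\sigma$. The plan is to deduce everything from Lemma~\ref{LEMSelfIntersectionPC}, applied with the sequence of dividing/strong subuniverses witnessing $B\lll A$ and with the congruence $\delta$ in the role of the ``$\delta$'' there. Concretely, I would write the chain $B=B_k<_{T_k(\sigma_k)}^A\dots<_{T_1(\sigma_1)}^A B_0=A$ realizing $B\lll A$, let $\sigma=\sigma_m$ be our dividing congruence (so $T_m=\TD$), and set $\omega=\sigma_1\cap\dots\cap\sigma_k$ as in the statement (the intersection of \emph{all} dividing congruences coming from $B\lll A$; the non-dividing $\sigma_i$ are full, so they contribute nothing, and it is harmless to intersect over all of them).

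**Step 1: part (1) and part (2).** By Lemma~\ref{LEMSelfIntersectionPC}, the set $((B\circ\delta)\cap B_{m-1})/\sigma_m$ is either $B_{m-1}/\sigma_m$ or $B_m/\sigma_m=B/\sigma_m$. Since $(\delta\cap B^2)/\sigma\neq B^2/\sigma$, the first alternative is impossible (if $((B\circ\delta)\cap B_{m-1})/\sigma_m=B_{m-1}/\sigma_m$ then a whole $\sigma$-block of $B_{m-1}$ meets $B$, so $\delta$ would relate any two elements of $B$ inside a common $\sigma^*$-block, forcing $(\delta\cap B^2)/\sigma$ to fill $B^2/\sigma$ — I would spell this out using that $B^2\subseteq\sigma^*$, which holds by the definition of $<_\TD$). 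Hence $((B\circ\delta)\cap B_{m-1})/\sigma_m=B/\sigma_m$, which is precisely the statement that $(B\circ\delta)\cap B\subseteq B$ up to $\sigma$, i.e.\ $\delta\cap B^2\subseteq\sigma\cap B^2$; that is part~(1). The ``additionally'' clause of Lemma~\ref{LEMSelfIntersectionPC} then gives $\sigma_m\supseteq\delta\cap\sigma_1\cap\dots\cap\sigma_{m-1}$, and intersecting further with the remaining $\sigma_i$ ($i>m$) only makes the right side smaller, so $\sigma\supseteq\delta\cap\omega$, which is part~(2).

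**Step 2: parts (3) and (4).** For part~(3), one inclusion is trivial: $(\sigma\cap\omega)\cap B^2\subseteq\sigma\cap B^2$, and by part~(1) $\delta\cap B^2\subseteq\sigma\cap B^2$, and $\sigma\cap B^2$ is a congruence-like relation on $B$ closed under the operations, so the join $(\delta\vee(\sigma\cap\omega))\cap B^2\subseteq\sigma\cap B^2$ — here I would note that the transitive closure stays inside $\sigma\cap B^2$ because $\sigma\cap B^2$ is transitive and both generators lie in it. The reverse inclusion $\sigma\cap B^2\subseteq\delta\vee(\sigma\cap\omega)$ holds since $\sigma\cap\omega\subseteq\sigma$ and... wait, that is not automatic; rather I would argue $\sigma\cap B^2\supseteq\sigma\cap\omega\cap B^2$ is the wrong direction, so instead I use that $\omega\subseteq\sigma$ is false in general but $\sigma\cap\omega\subseteq\sigma$ trivially gives $(\sigma\cap\omega)\cap B^2\subseteq\sigma\cap B^2$; to get equality I combine with part~(2): $\sigma\cap B^2\subseteq(\delta\cap\omega)\vee(\sigma\cap\omega)\cap B^2$? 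The clean route is: $\sigma\supseteq\delta\cap\omega$ (part 2) and $\sigma\supseteq\sigma\cap\omega$, hence $\sigma\supseteq(\delta\cap\omega)\vee(\sigma\cap\omega)=\omega\cap(\delta\vee(\sigma\cap\omega))$, while on $B$ we have $\delta\cap B^2\subseteq\sigma\cap B^2$, and a short computation shows both sides of~(3) equal $\sigma\cap B^2$. Part~(4) is the same computation with $B^2$ replaced by $\omega$: intersect~(2)-type containments with $\omega$ and use that $\delta\cap\omega\subseteq\sigma$, giving $(\delta\vee(\sigma\cap\omega))\cap\omega=\sigma\cap\omega$.

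**Main obstacle.** The genuine content is entirely in Step~1, and there the work is packaging the hypothesis $(\delta\cap B^2)/\sigma\neq B^2/\sigma$ into the exact form needed to invoke Lemma~\ref{LEMSelfIntersectionPC} and to rule out its first alternative. The subtlety is that $((B\circ\delta)\cap B_{m-1})/\sigma_m$ is a statement about $B\circ\delta$ (the $\delta$-saturation of $B$) intersected with the parent $B_{m-1}$, not directly about $\delta\cap B^2$; translating between ``$\delta$ merges two $B$-elements in a common $\sigma^*$-block'' and ``a $\delta$-block meets $B$ in more than one $\sigma$-class'' requires $B^2\subseteq\sigma^*$ and a little care with the direction of the saturation. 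Parts~(3)–(4) are then pure lattice bookkeeping on congruences restricted to $B^2$ and to $\omega$, and I expect them to go through mechanically once (1) and (2) are in hand.
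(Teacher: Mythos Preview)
Your Step~1 has a setup error: you write the chain $B=B_k<_{T_k(\sigma_k)}^A\cdots<_{T_1(\sigma_1)}^A B_0=A$ and then take $\sigma=\sigma_m$ for some $m\le k$. But that is impossible. For each dividing step in the chain we have $B\subseteq B_m\subseteq$ a single $\sigma_m$-block, so $|B/\sigma_m|=1$; whereas the hypothesis that $\sigma$ is a dividing congruence \emph{for $B$} means some $C<_{\TD(\sigma)}^A B$, which forces $|B/\sigma|>1$. So $\sigma$ is never one of the $\sigma_i$. The paper fixes this by \emph{appending} a step: from $(\delta\cap B^2)/\sigma\neq(B/\sigma)^2$ one picks $\sigma$-blocks $C_1,C_2$ meeting $B$ with $((C_1\cap B)\times(C_2\cap B))\cap\delta=\varnothing$, extends to $C_1\cap B<_{\TD(\sigma)}^A B\lll A$, and applies Lemma~\ref{LEMSelfIntersectionPC} at the new bottom index $m=k$. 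The presence of $C_2$ is precisely what rules out the first alternative; the ``additionally'' clause then gives~(2), and~(1) follows. Your overall plan for~(1)--(2) is right, but the chain has to be extended and your argument for excluding the first alternative (``a whole $\sigma$-block of $B_{m-1}$ meets $B$'') does not correspond to what $((B\circ\delta)\cap B_{m-1})/\sigma_m=B_{m-1}/\sigma_m$ actually says.

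Your Step~2 has a genuine mathematical gap. You assert that~(3) and~(4) are ``pure lattice bookkeeping'' once~(1)--(2) are known, but congruence lattices are not distributive: from $\delta\cap\omega\subseteq\sigma$ one cannot conclude $(\delta\vee(\sigma\cap\omega))\cap\omega\subseteq\sigma$, since a $(\delta\cup(\sigma\cap\omega))$-chain joining two $\omega$-equivalent elements can leave their $\omega$-block at the $\delta$-steps. (The same problem kills your argument for the $\subseteq$ direction of~(3): the join is computed in $A$, so intermediate points need not lie in $B$.) The paper instead sets $\delta'=\delta\vee(\sigma\cap\omega)$ and tries to apply part~(2) to $\delta'$ rather than to $\delta$. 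If $(\delta'\cap B^2)/\sigma\neq(B/\sigma)^2$ this works directly and gives $\delta'\cap\omega=\sigma\cap\omega$. If instead $(\delta'\cap B^2)/\sigma=(B/\sigma)^2$, an extra ingredient is needed: the paper applies Lemma~\ref{LEMLeftLinkedStayFull} to $R=\delta\circ(\sigma\cap\omega)$ to pull linkedness down to $R\cap B^2$, and then uses $B^2\subseteq\omega$ together with~(2) to force the intermediate points into $\sigma$ and reach a contradiction. Part~(3) follows from~(4) via $B^2\subseteq\omega$.
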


% \begin{cor}\label{CORCongruenceCutsMeanssmaller}
% Suppose 
% $B\lll A$,
% $\delta$ is a congruence on $A$, 
% $\sigma$ is a PCL subuniverse for $B\le A$,
% $(B^{2}\cap\delta)/\sigma\neq B^{2}/\sigma$.
% Then %$\sigma\supseteq \omega\cap \delta$, 
% $\sigma\supseteq \delta\cap\delta_{1}\cap \dots\cap \delta_{s}$,
% where 
% $\delta_{1},\dots,\delta_{s}$ are the PCL
% congruences
% from the definition of 
% $B\lll A$.
% \end{cor}

\begin{proof}
% Let $R = \{(a/\sigma,b/\sigma)\mid a,b\in B, (a,b)\in\delta\}$.
% Since $B/\sigma$ is BA and center free, 
% and $R$ is reflexive, 
% Lemma \ref{LEMLinkedImpliesBACenter} implies that 
% $R$ is not connected. 
% Choose 
Let us prove (2) first.
Consider two equivalence classes 
$C_{1}$ and $C_{2}$ of $\sigma$ such that 
$C_{1}\cap B\neq\varnothing$,
$C_{2}\cap B\neq\varnothing$,
and $((C_{1}\cap B)\times (C_{2}\cap B))\cap \delta=\varnothing$.
Then 
$C_1\cap B<_{\TD(\sigma)}^{A}B\lll A$ and by Lemma \ref{LEMSelfIntersectionPC}
for $m=k$ and $B_{k} = C_1\cap B$ 
we obtain $(((C_1\cap B)\circ\delta)\cap B)/\sigma = \{C_1\}$
and $\sigma\supseteq \delta\cap\omega$.

(1) follows immediately from (2).

Let us prove (4). 
%Let $\omega = \bigcup_{i=1}^{s}\sigma_{i}$.
Consider the binary relation 
$R = \delta\circ (\sigma\cap \omega)$.
Notice that 
$\LeftLinked(R) = \delta\vee (\sigma\cap \omega)$.
Put 
$\delta' = \LeftLinked(R)$.
Consider two cases:

Case 1. $(\delta'\cap B^{2})/\sigma = (B/\sigma)^{2}$.
By Lemma \ref{LEMLeftLinkedStayFull} 
$\LeftLinked(R\cap B^{2})/\sigma = (B/\sigma)^{2}$.
Then there exist $a_1,a_2,c\in B$ and $b_1,b_2\in A$
such that $(a_1,b_1),(a_2,b_2)\in\delta$, 
$(b_1,c),(b_2,c)\in\sigma\cap\omega$, 
$(a_1,a_2)\notin\sigma$.
Since $a_1,a_2,c\in B$, 
we have 
$(a_1,c), (a_2,c)\in\omega$.
Therefore, $(a_1,b_1),(a_2,b_2)\in\omega$, 
hence by (2) we have 
$(a_1,b_1),(a_2,b_2)\in\sigma$, which 
contradicts 
the assumption $(a_1,a_2)\notin\sigma$.

Case 2. $(\delta'\cap B^{2})/\sigma \neq (B/\sigma)^{2}$.
Applying (2) to $\delta'$ we obtain 
%$\sigma\supseteq \delta'\cap \omega$
%and  
$\sigma\supseteq \delta'\cap \omega$.
Since 
$\delta'\supseteq (\sigma\cap \omega)$,
we obtain $\delta'\cap \omega = \sigma\cap \omega$.

Condition (3) immediately follows from (4).
%we have $\delta'\cap B^{2} \supseteq \sigma\cap B_{2}$.
%Therefore,
%$\delta'\cap B^{2} = \sigma\cap B^{2}$.
\end{proof}

\begin{lem}\label{LEMMainExistenceOfIrreducibleCongruence}
Suppose 
$B\lll A$, 
$\sigma$ is a congruence on $A$ such that 
$|B/\sigma|>1$, 
$B/\sigma$ is BA and center free.
Then 
there exists
a dividing congruence $\delta$ 
for $B\lll A$ such that $\delta\supseteq \sigma$.
%such that $B^{2}\subseteq $|B/\delta|>1$ and 
%$B/\delta$ is BA and center free.
%, $T\in\{\TPC,\TL\}$, 
%and $C\le B$ such that
%$C<_{T(\delta)}^{A} B$.
\end{lem}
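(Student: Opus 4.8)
The plan is to proceed by induction on the length of a sequence $B = B_k <_{T_k(\sigma_k)}^A B_{k-1} <_{\dots} \dots <_{T_1(\sigma_1)}^A B_0 = A$ witnessing $B \lll A$, chosen with $k$ minimal. If $k = 0$ then $B = A$ and $A/\sigma$ is BA and center free with $|A/\sigma| > 1$, so the trivial congruence $0_A$ works as a dividing congruence for $A \lll A$ (with $A/\sigma$ playing the role in the definition of $<_{\TD}$, using $\sigma$ itself). More precisely, when $k=0$ one must actually produce a congruence $\delta \supseteq \sigma$ that is dividing for $A \lll A$, i.e.\ such that some $C <_{\TD(\delta)}^A A$; the natural candidate is a maximal congruence below some block-splitting compatible congruence above $\sigma$. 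So already in the base case we need: given $A/\sigma$ BA and center free of size $>1$, there is an irreducible congruence $\delta \supseteq \sigma$ with $\delta^* \supseteq A^2$ (automatic) such that $A/\delta$ is BA and center free. For this I would take $\delta$ maximal among congruences $\supseteq \sigma$ with $A/\delta$ of size $>1$; then $A/\delta$ has no nontrivial proper congruence, hence $\delta$ is irreducible, and it inherits BA/center-freeness from $A/\sigma$ via the surjection $A/\sigma \to A/\delta$ together with Lemma~\ref{LEMBACenterSImplyFactor} (a BA/central subuniverse of $A/\delta$ would pull back).

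For the inductive step, look at the last reduction $B = B_k <_{T_k(\sigma_k)}^A B_{k-1}$. If $T_k \in \{\TBA, \TC, \TS\}$: since $B_k \le_{T_k} B_{k-1}$ (for $\TS$, contains a BA-and-central subuniverse), and $B_k/\sigma$ is BA and center free of size $>1$, Lemma~\ref{LEMBACenterSImplyFactor} gives $B_k/\sigma \le_{T_k} B_{k-1}/\sigma$; but a proper BA, central, or S-subuniverse of $B_{k-1}/\sigma$ that is itself BA and center free and of size $>1$ would contradict Lemma~\ref{LEMBACenterSImplyPPDefinition}-type rigidity — more directly, $B_k/\sigma$ being a proper $\le_{T_k}$-subuniverse of $B_{k-1}/\sigma$ forces $B_{k-1}/\sigma$ to have a nontrivial BA or central subuniverse (namely $B_k/\sigma$ itself), which is impossible since $B_k/\sigma = B_{k-1}/\sigma$ is BA and center free. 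Hence $B_k/\sigma = B_{k-1}/\sigma$, so $|B_{k-1}/\sigma| > 1$ and $B_{k-1}/\sigma$ is BA and center free; apply the inductive hypothesis to $B_{k-1} \lll A$ and $\sigma$ to get the desired dividing congruence $\delta \supseteq \sigma$ for $B_{k-1} \lll A$, which is also a dividing congruence for $B \lll A$ since the witnessing sequences share all but the last step.

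The remaining and genuinely delicate case is $T_k = \TD$, so $B_k = B_{k-1} \cap E$ for a block $E$ of $\sigma_k$, with $B_{k-1}/\sigma_k$ BA and center free. Here I distinguish whether $(\sigma_k \cap B_{k-1}^2)/\sigma = B_{k-1}^2/\sigma$ or not. If $(\sigma_k \cap B_{k-1}^2)/\sigma = B_{k-1}^2/\sigma$, then $\sigma \vee (\sigma_k \cap B_{k-1}^2)$ collapses $B_{k-1}$ to a single $\sigma$-block inside each $\sigma_k$-block — one checks $B_k/\sigma = B_{k-1}/\sigma$ (because the whole block $E$ of $\sigma_k$, restricted to $B_{k-1}$, maps onto $B_{k-1}/\sigma$), so again $B_{k-1}/\sigma$ is BA and center free and we finish by induction. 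If instead $(\sigma_k \cap B_{k-1}^2)/\sigma \ne B_{k-1}^2/\sigma$, apply Lemma~\ref{LEMCongruenceEitherCutOrDoNothing} with the dividing congruence $\sigma_k$ for $B_{k-1} \lll A$ and the congruence $\delta' := \sigma$: conclusion (1) gives $\sigma \cap B_{k-1}^2 \subseteq \sigma_k \cap B_{k-1}^2$. Thus $\sigma \subseteq \sigma_k$ on $B_{k-1}$, which lets us transport the situation into $A/\sigma$: set $\bar\sigma_k = \sigma_k \vee \sigma$ (or the image of $\sigma_k$ in $A/\sigma$). One verifies $\bar\sigma_k$ is still irreducible (irreducibility of $\sigma_k$ descends since $\sigma \subseteq \sigma_k$ on the relevant part — via Lemma~\ref{LEMReverseHomomorphism} applied to the quotient map), $B_{k-1}/\bar\sigma_k \cong B_{k-1}/\sigma_k$ is BA and center free, and $B_{k-1}/\sigma$ splits by $\bar\sigma_k$ with $B_k/\sigma$ contained in a block. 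Hence $B_k/\sigma <_{\TD(\bar\sigma_k)}^{A/\sigma} B_{k-1}/\sigma$. This is not yet what we want (we need a congruence on $A$, not a dividing structure on $A/\sigma$), so the final move is: the preimage $\delta := $ (the pullback along $A \to A/\sigma$ of $\bar\sigma_k$) is a congruence on $A$ containing $\sigma$, is dividing for $B_{k-1} \lll A$ hence for $B \lll A$ by Lemma~\ref{LEMReverseHomomorphism}/\ref{CORReverseHomomorphism}, and $\delta \supseteq \sigma$ by construction. The main obstacle I anticipate is exactly this bookkeeping around $T_k = \TD$ when $\sigma$ and $\sigma_k$ interact nontrivially — ensuring that irreducibility, the block structure, and the BA-and-center-free condition are all simultaneously preserved under the quotient by $\sigma$ — and I expect Lemma~\ref{LEMCongruenceEitherCutOrDoNothing} together with Lemma~\ref{LEMReverseHomomorphism} to be precisely the tools that make it go through.
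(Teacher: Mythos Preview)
Your inductive scheme has a genuine gap in the step for $T_k \in \{\TBA, \TC, \TS\}$. You claim $B_k/\sigma = B_{k-1}/\sigma$, and your justification is that otherwise $B_{k-1}/\sigma$ would have a nontrivial BA or central subuniverse (namely $B_k/\sigma$), ``which is impossible since $B_k/\sigma = B_{k-1}/\sigma$ is BA and center free.'' That is circular: you are assuming the equality you are trying to prove. In fact nothing in the hypotheses prevents $B_k/\sigma$ from being a \emph{proper} $T_k$-subuniverse of $B_{k-1}/\sigma$; the only thing this would tell you is that $B_{k-1}/\sigma$ is \emph{not} BA and center free --- and then your inductive hypothesis for $B_{k-1}$ does not apply. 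So the induction stalls precisely in the case you considered easy. (Your $T_k = \TD$ case also has the roles of $\sigma$ and $\sigma_k$ swapped in the hypothesis check for Lemma~\ref{LEMCongruenceEitherCutOrDoNothing}, but that is a lesser issue.)

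The paper's proof avoids all of this by dispensing with the induction entirely. It takes $\delta \supseteq \sigma$ maximal with $|B/\delta| > 1$; then $B/\delta$ is BA and center free (as a quotient of $B/\sigma$, via Lemma~\ref{LEMReverseHomomorphism}), and one argues irreducibility directly: if $\delta = \bigcap_i S_i$ with each $S_i \supsetneq \delta$ stable under $\delta$, pick $i$ with $B^2 \not\subseteq S_i$; maximality of $\delta$ forces the congruence $\LeftLinked(S_i)$ to contain $B^2$, and then Lemma~\ref{LEMLeftLinkedStayFull} (this is where $B \lll A$ is actually used) gives that $S_i \cap B^2$ is linked modulo $\delta$, so Lemma~\ref{LEMLinkedImpliesBACenter} produces a BA or central subuniverse of $B/\delta$, a contradiction. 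Your base case $k=0$ is exactly this argument for $B=A$; the point is that Lemma~\ref{LEMLeftLinkedStayFull} lets the same linkedness argument go through for arbitrary $B \lll A$, making the induction on $k$ unnecessary.
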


\begin{proof}
Let $\delta\supseteq \sigma$ be a maximal congruence such that 
$|B/\delta|>1$.
It follows from 
Lemma \ref{LEMReverseHomomorphism}
that $B/\delta$ is BA and center free.
Let us show that $\delta$ is irreducible.
Assume that it is not true 
and $\delta$ can be represented as an intersection of binary 
relations $S_{1},\dots,S_{k}\supsetneq \delta$.
Then for some $i$ we have 
$B^{2}\not\subseteq S_{i}$.
Notice that 
$\LeftLinked(S_i)$ is a congruence that is larger than $\delta$, 
hence by the choice of $\delta$ we have 
$B^{2}\subseteq \LeftLinked(S_i)$.

By Lemma \ref{LEMLeftLinkedStayFull} 
$\LeftLinked(S_i\cap (B\times B))/\sigma = (B/\sigma)^{2}$.
Hence, 
$(S_i\cap (B\times B))/\sigma$ is a linked relation, which by Lemma 
\ref{LEMLinkedImpliesBACenter} implies the existence 
of a BA or central subuniverse on $B/\delta$ and contradicts our assumption.
Thus, the congruence $\delta$ is irreducible.

\end{proof}

\begin{lem}\label{LEMFactorByDelta}
Suppose $\delta$ is a congruence on $A$,
$C<_{T(\sigma)}^{A} B\lll A$, where 
$T\in\{\TPC,\TL\}$.
Then 
$C/\delta=B/\delta$, or
$C/\delta<_{\TS}B/\delta$, or
$C/\delta<_{T}^{A/\delta}B/\delta$.
Moreover, if
$C/\delta<_{T}^{A/\delta}B/\delta$
then $\delta\cap B^{2}\subseteq \sigma\cap B^{2}$.
\end{lem}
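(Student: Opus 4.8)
The plan is to factor out $\delta$ by analyzing how $\delta$ interacts with the dividing congruence $\sigma$ witnessing $C<_{T(\sigma)}^{A}B$. Recall that $C<_{T(\sigma)}^{A}B$ means: $B^{2}\subseteq\sigma^{*}$, $C=B\cap E$ for a block $E$ of $\sigma$, $B/\sigma$ is BA and center free, and (since $T\in\{\TPC,\TL\}$) $\sigma$ is a PC (resp.\ linear) congruence. First I would dispose of the trivial case $|B/\delta|=1$: then $C/\delta=B/\delta$ is a singleton and we are done. So assume $|B/\delta|>1$.

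The key dichotomy is whether $(\delta\cap B^{2})/\sigma$ equals $B^{2}/\sigma$ or not. If $(\delta\cap B^{2})/\sigma=B^{2}/\sigma$, then modulo $\delta$ the congruence $\sigma$ becomes trivial on $B$: every pair of elements of $B$ that is $\sigma^{*}$-related is already $\delta$-related, and in particular $B\subseteq E\circ\delta$ essentially collapses $B/\delta$ into (the image of) a single $\sigma$-block. More precisely, $C/\delta=B/\delta$ in this case, since $C=B\cap E$ and every element of $B$ is $\delta$-equivalent to an element of $E\cap B=C$; that gives the first alternative. The remaining case is $(\delta\cap B^{2})/\sigma\neq B^{2}/\sigma$, where Lemma~\ref{LEMCongruenceEitherCutOrDoNothing} applies with $\sigma$ as the dividing congruence and $\omega$ the intersection of all dividing congruences coming from $B\lll A$. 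From its conclusion~(1) we get $\delta\cap B^{2}\subseteq\sigma\cap B^{2}$, which is exactly the ``moreover'' clause; and conclusions~(2)--(4) tell us how $\delta$ sits relative to $\sigma$ and $\omega$.

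With $\delta\cap B^{2}\subseteq\sigma\cap B^{2}$ in hand, I would pass to the quotient $A/\delta$ and push the congruence $\sigma$ forward. Since $\delta\cap B^{2}\subseteq\sigma$, the relation $\sigma/\delta$ (the image of $\sigma$, or rather $\sigma\vee\delta$ factored by $\delta$) restricted to $B/\delta$ is still a congruence-like relation, and $C/\delta=(B/\delta)\cap(E/\delta)$ is an intersection of $B/\delta$ with a block. One then checks that $(B/\delta)/(\sigma/\delta)\cong B/\sigma$ via the natural map — this uses $\delta\cap B^{2}\subseteq\sigma$ so the map $b/\delta\mapsto b/\sigma$ is well defined and its fibers are exactly the $\sigma/\delta$-blocks — hence $(B/\delta)/(\sigma/\delta)$ is again BA and center free, and $\sigma/\delta$ inherits irreducibility and the PC/linear property (by Lemma~\ref{LEMReverseHomomorphism} applied to the quotient map, or directly since being PC/linear only depends on the algebra $A/\sigma\cong (A/\delta)/(\sigma/\delta)$ and the associated $\sigma^{*}$-structure). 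This yields $C/\delta<_{T(\sigma/\delta)}^{A/\delta}B/\delta$, giving the third alternative.

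The one gap to be careful about is the case distinction near the boundary: it can happen that $(\delta\cap B^{2})/\sigma\neq B^{2}/\sigma$ yet $B/\delta$ has a BA or central subuniverse, so $\sigma/\delta$ is no longer a \emph{dividing} congruence even though it is irreducible — this is precisely the $C/\delta<_{\TS}B/\delta$ alternative. I expect this to be the main obstacle: one must show that if $\sigma/\delta$ fails the BA-and-center-free condition on $B/\delta$, then $C/\delta$ contains a simultaneously BA and central subuniverse of $B/\delta$, i.e.\ $C/\delta<_{\TS}B/\delta$. Here I would use that $C/\delta$ is the intersection of $B/\delta$ with a block $E/\delta$ of $\sigma/\delta$, combined with Lemma~\ref{LEMBACenterSImplyFactor} and the structure of $\sigma^{*}$-blocks: a BA or central subuniverse $D$ of $B/\delta$, once intersected with the block structure, lands inside $C/\delta$ because blocks of $\sigma/\delta$ restricted to $B/\delta$ are themselves (by Lemma~\ref{LEMBACenterImplyIntersection}) of BA/central type inside $B/\delta$, and a BA-and-central subuniverse arises from the interaction. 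The bookkeeping of which type ($\TBA$, $\TC$, or both) one gets is the delicate part, but it is forced by Lemma~\ref{LEMBACenterSPossibleIntersections} together with the fact that $B/\sigma$ was BA and center free to begin with.
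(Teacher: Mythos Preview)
Your plan has two real gaps.

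First, in your ``$(\delta\cap B^{2})/\sigma=(B/\sigma)^{2}$'' case, the conclusion $C/\delta=B/\delta$ does not follow from your argument. That equality only says that any two $\sigma$-blocks meeting $B$ contain \emph{some} $\delta$-related pair; it does not say every $b\in B$ is $\delta$-equivalent to an element of $C$. (Concretely: with $B=\{0,1,2,3\}$, $\sigma$-blocks $\{0,1\},\{2,3\}$, $\delta$-blocks $\{0,2\},\{1\},\{3\}$, and $C=\{0,1\}$, one has $(\delta\cap B^{2})/\sigma=(B/\sigma)^{2}$ yet $3/\delta\notin C/\delta$.) This case can genuinely land in the $<_{\TS}$ alternative. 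The paper handles it differently: it sets $R=\{(a/\sigma,a/\delta):a\in B\}\le_{sd}(B/\sigma)\times(B/\delta)$ and splits on whether $\LeftLinked(R)=(B/\sigma)^{2}$. When it is, Lemma~\ref{LEMBAConLeftOrCenterOnRight} (using that $B/\sigma$ is BA and center free) produces a nonempty $U\le_{\TBA,\TC}B/\delta$ with $(B/\sigma)\times U\subseteq R$, hence $U\subseteq C/\delta$; this yields $C/\delta=B/\delta$ or $C/\delta<_{\TS}B/\delta$.

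Second, in your other case you push $\sigma$ forward as $(\sigma\vee\delta)/\delta$ and hope it witnesses $C/\delta<_{T}^{A/\delta}B/\delta$. But nothing guarantees $(\sigma\vee\delta)/\delta$ is irreducible on $A/\delta$, nor that $(A/\delta)/((\sigma\vee\delta)/\delta)\cong A/\sigma$ --- you only have $\delta\cap B^{2}\subseteq\sigma$, not $\delta\subseteq\sigma$. The paper does not reuse $\sigma$: it sets $\delta'=\delta\vee(\sigma\cap\omega)$ (where $\omega$ is the intersection of all dividing congruences from $B\lll A$), observes via Lemma~\ref{LEMCongruenceEitherCutOrDoNothing} that $\delta'\cap B^{2}=\sigma\cap B^{2}$ so $B/\delta'\cong B/\sigma$ is BA and center free, then applies Lemma~\ref{LEMMainExistenceOfIrreducibleCongruence} to obtain a \emph{fresh} dividing congruence $\delta''\supseteq\delta'\supseteq\delta$ for $B\lll A$. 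One checks $\delta''\cap\omega=\sigma\cap\omega$, giving $C/\delta<_{\mathcal T(\delta''/\delta)}^{A/\delta}B/\delta$ for some $\mathcal T\in\{\TPC,\TL\}$, and finally Lemma~\ref{LEMBridgeBetweenCongruences} applied to $\delta''\cap\omega=\sigma\cap\omega$ yields a bridge between $\sigma$ and $\delta''$, so $\mathcal T=T$ by Lemma~\ref{LEMNoBridgeBetweenDifferentTypes}. The construction of $\delta''$ is the step your outline is missing.
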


\begin{proof}
Let $R\le A/\sigma\times A/\delta$ be defined by 
$R = \{(a/\sigma,a/\delta)\mid a\in B\}$.
Consider two cases:

Case 1.
$\LeftLinked(R) = (B/\sigma)^{2}$.
By Lemma \ref{LEMBAConLeftOrCenterOnRight}
there exists $U\subseteq A/\delta$
such that 
$(B/\sigma)\times U\subseteq R$.
Notice that
$C/\delta\supseteq U$ hence 
$C/\delta=B/\delta$ or
$C/\delta<_{\TS}B/\delta$.

Case 2. 
$\LeftLinked(R) \neq (B/\sigma)^{2}$.
Let $\omega$ be the intersection of all the dividing congruences 
coming from 
$B\lll A$.
Since
$(\delta\cap B^{2})/\sigma\neq B^{2}/\sigma$,
by Lemma \ref{LEMCongruenceEitherCutOrDoNothing}(4)
$\sigma\cap \omega=\delta'\cap \omega$, where 
$\delta' = \delta\vee (\sigma \cap\omega)$.
Since $B/\delta'\cong B/\sigma$, 
$B/\delta'$ is BA and center free.
By Lemma \ref{LEMMainExistenceOfIrreducibleCongruence} applied to $\delta'$
there exists $\delta''\supseteq \delta'$ that is a dividing congruence
for $B\lll A$.
By Lemma \ref{LEMCongruenceEitherCutOrDoNothing}(2) 
$\delta''\cap \omega \subseteq \sigma$,
which together with 
$\delta''\cap \omega \supseteq \delta' \cap \omega = \sigma \cap \omega$
implies $\delta'' \cap \omega = \sigma \cap \omega$.
Hence
$C/\delta<_{\mathcal T(\delta''/\delta)}^{A/\delta} B/\delta$,
where $\mathcal T\in\{\TPC,\TL\}$.

To show that $T = \mathcal T$, we apply Lemma \ref{LEMBridgeBetweenCongruences} to 
$\delta'' \cap \omega = \sigma \cap \omega$ and obtain a bridge from 
$\delta''$ to $\sigma$.
This, by Lemma \ref{LEMNoBridgeBetweenDifferentTypes},
implies that 
congruences $\delta''$ and $\sigma$ must be of the same type 
and therefore $T= \mathcal T$.
%$|B/\delta|>1$. 
%Let $\delta'$ be an irreducible congruence 
%$\sigma\vee \delta$ 
\end{proof}

\subsection{Proof of the remaining statements from Section \ref{SUBSECTIONSTRONGSUBALGEBRASPROPERTIES}}
\label{SUBSECTIONProofOfTheRemainingProperties}

\begin{LEMUbiquityLEM}%DONE
Suppose $B\lll A$, $|B|>1$.
Then there exists $C<_{T}^{A} B$, where $T\in \{\TBA, \TC, \TL, \TPC\}$.
\end{LEMUbiquityLEM}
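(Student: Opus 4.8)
\textbf{Proof plan for Lemma~\ref{LEMUbiquity}.}

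The plan is to reduce the general case $B\lll A$ to the case $B=A$ by pushing a strong subuniverse found in an iterated factor back up through the chain, and then to handle the case $B=A$ (an algebra with a WNU, i.e.\ a Taylor algebra) by the Maróti--McKenzie / Tame Congruence style dichotomy together with the congruence-type results already proved in this section. First I would set up the induction on $|A|$. If $A$ is BA and center free (no proper nonempty $\TBA$ or $\TC$ subuniverse), then the heart of the matter is to produce an \emph{irreducible congruence} $\sigma$ on $A$ with the right properties; otherwise we immediately have a $\TBA$ or $\TC$ subuniverse, but that only finishes the case $B=A$, so for general $B$ we still need to descend the chain $B=B_s<_{T_s}B_{s-1}<\dots<B_0=A$.

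For the descent: suppose $B=B_s<_{T_s}B_{s-1}<\dots<B_0=A$ with $|B|>1$. Pick the smallest index $i$ such that $|B_i|>1$; so $B_{i-1}$ has $|B_{i-1}|>1$ as well and, if $i<s$ is impossible because all the $B_j$ with $j\le i$ are the whole chain of size $>1$... more carefully: we want $C<_T^A B$ for \emph{our} $B$, so I would instead induct on the length of the chain and on $|A|$ simultaneously. If $s=0$ then $B=A$ and we are in the base case below. If $s\ge 1$, consider $B_1<_{T_1(\sigma_1)}^A A$. If $T_1\in\{\TBA,\TC,\TS\}$ we may replace $A$ by $B_1$: since $B\lll^A B_1\lll A$ and the length dropped, induction inside $B_1$ (treated as the ambient algebra, which is smaller or we use the shorter chain) gives $C<_T^{B_1} B$, and by the remark that $A$ is irrelevant for these types, $C<_T^A B$; for $T_1=\TS$ one first replaces $B_1$ by the $\TBA\cap\TC$ subuniverse inside it. If $T_1=\TD(\sigma_1)$: factor everything by $\sigma_1$. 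By Corollary~\ref{CORPropagationModuloCongruence}(f), $B/\sigma_1\lll A/\sigma_1$, and $A/\sigma_1$ is BA and center free (by the definition of $<_{\TD}$). If $|B/\sigma_1|>1$, apply the result for $A/\sigma_1$ (smaller domain) to get $C'<_{T}^{A/\sigma_1}B/\sigma_1$ with $T\in\{\TBA,\TC,\TL,\TPC\}$, then pull back: $C'\circ\sigma_1<_T^A B$ by Corollary~\ref{CORPropagateFromFactor}(t) together with the fact that $\sigma_1\subseteq\sigma_1$ so a $\TD$-, $\TL$-, or $\TPC$-subuniverse modulo $\sigma_1$ lifts. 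If instead $|B/\sigma_1|=1$, then $B$ sits inside one $\sigma_1$-block intersected with $B_1$, i.e.\ $B\le B_1\cap E$ where $B_1\cap E=C_1<_{\TD(\sigma_1)}^A A$; but then $B\lll^A C_1\lll A$ with strictly shorter chain, and we recurse on $C_1$ as the ambient algebra (smaller or shorter), producing $C<_T^{C_1}B$ hence $C<_T^A B$.

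For the base case $B=A$, $|A|>1$: if $A$ has a proper nonempty $\TBA$ or $\TC$ subuniverse $C$, we are done with $T\in\{\TBA,\TC\}$. So assume $A$ is BA and center free. Since $A$ is a finite Taylor algebra with $|A|>1$, it has a maximal congruence $\theta\ne A^2$, and the simple algebra $A/\theta$ is Taylor. By the classification of minimal/simple Taylor algebras (Tame Congruence Theory: a simple Taylor algebra has type $\mathbf 2$ (affine) or type $\mathbf 3$ (Boolean); types $\mathbf 1,\mathbf 4,\mathbf 5$ are excluded by Taylorness), $0_{A/\theta}$ is either an abelian (affine) congruence or a congruence whose $A/\theta$ is polynomially complete-like. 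More directly, I would argue: take $\theta$ maximal; then $0_\theta$, viewed as a congruence on $A$, is irreducible in many cases, but to be safe take any irreducible congruence $\sigma$ below a maximal one — existence of an irreducible congruence on any algebra with more than one element follows because $0_A$ is the intersection of meet-irreducible (in the congruence lattice) congruences, and one checks a meet-irreducible congruence, after possibly composing with a bridge argument, is irreducible in the sense of this paper, OR one appeals to Lemma~\ref{LEMMainExistenceOfIrreducibleCongruence} applied with $\sigma$ a maximal congruence such that $A/\sigma$ is still BA and center free (if every maximal congruence quotient has a $\TBA$ or $\TC$ subuniverse, pull it back through the surjection by Corollary~\ref{CORReverseHomomorphism} to contradict $A$ being BA and center free). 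Thus there is an irreducible congruence $\sigma$ on $A$ with $A/\sigma$ BA and center free. By definition $\sigma$ is either a linear congruence or a PC congruence (Lemma~\ref{LEMLinearEquivalentConditions} distinguishes the cases). Also $A^2\subseteq\sigma^*$ may fail, but we only need $B^2=A^2\subseteq\sigma^*$, which holds by choosing $\sigma$ so that $\sigma^*=A^2$: indeed, if $\sigma$ is any irreducible congruence, $A/\sigma^*$ is a proper factor of $A/\sigma$ unless $\sigma^*=A^2$; iterate — replace $\sigma$ by an irreducible congruence on $A/\sigma^*$ lifted back, which strictly increases $\sigma^*$; since the chain is finite we reach $\sigma^*=A^2$. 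Now pick any block $E$ of $\sigma$ (note $E\ne A$ since $\sigma\ne A^2$) and set $C=A\cap E=E\subsetneq A$; then conditions 1--3 of the definition of $<_{\TD}^A$ hold with this $\sigma$, so $C<_{\TD(\sigma)}^A A$, and since $\sigma$ is linear or PC, $C<_{\TL}^A A$ or $C<_{\TPC}^A A$.

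\textbf{Main obstacle.} The subtle point is guaranteeing the existence of an irreducible congruence $\sigma$ (in the precise technical sense of this paper, not merely meet-irreducible in $\Congruences(A)$) with $\sigma^*=A^2$ and $A/\sigma$ BA and center free, using only results already available. I expect the right route is: choose $\sigma$ to be \emph{maximal} among congruences $\tau$ with $A/\tau$ BA and center free and $|A/\tau|>1$ (the trivial congruence $0_A$ qualifies since $A$ itself is assumed BA and center free). Maximality forces $\sigma^*=A^2$: if $\sigma^*\ne A^2$, then $\sigma^*/\sigma$ is a proper congruence on $A/\sigma$; pick a block $B'$ of $\sigma^*$ — then $B'/\sigma<_{\TD(\sigma)}$ would make $A/\sigma$ \emph{not} BA and center free only if there's a $\TBA/\TC$ inside, which there isn't, so actually $A/\sigma$ being BA and center free with a nontrivial $\sigma^*/\sigma$ is exactly the linear/PC situation and $\sigma$ is already irreducible — this is where Lemma~\ref{LEMMainExistenceOfIrreducibleCongruence} (with the trivial congruence as input, giving a dividing congruence $\delta\supseteq 0_A$) does the work. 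So in fact the cleanest phrasing: apply Lemma~\ref{LEMMainExistenceOfIrreducibleCongruence} with $B=A$ and $\sigma=0_A$ (valid since $A$ is BA and center free and $|A|>1$) to get a dividing congruence $\delta$ for $A\lll A$; by definition of dividing congruence, $\delta$ is irreducible, $A/\delta$ is BA and center free, and $A^2\subseteq\delta^*$; then $\delta$ is linear or PC, and any block of $\delta$ gives the required $C$. This dispatches the obstacle entirely within the paper's machinery.
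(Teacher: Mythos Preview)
Your proposal eventually lands on the right tool---Lemma~\ref{LEMMainExistenceOfIrreducibleCongruence}---but you apply it only in the special case $B=A$ and spend most of the argument on an induction reducing general $B\lll A$ to that case. This induction is unnecessary, and parts of it are shaky: when $T_1\in\{\TBA,\TC,\TS\}$ you want to ``replace $A$ by $B_1$'' and recurse, but the chain $B\lll^A B_1$ uses dividing congruences on $A$, not on $B_1$, so it is not clear that $B\lll B_1$ holds with $B_1$ as the ambient algebra. (Restricting an irreducible congruence on $A$ to $B_1$ need not be irreducible on $B_1$, and the $\sigma^*$ condition is relative to $A$.)

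The paper's proof avoids all of this: Lemma~\ref{LEMMainExistenceOfIrreducibleCongruence} already takes an arbitrary $B\lll A$ in its hypothesis. So the argument is simply: if $B$ has a proper nonempty BA or central subuniverse $C$, then $C<_{\TBA}^A B$ or $C<_{\TC}^A B$ and we are done. Otherwise $B$ is BA and center free, and we apply Lemma~\ref{LEMMainExistenceOfIrreducibleCongruence} with $\sigma=0_A$ (so $|B/\sigma|=|B|>1$ and $B/\sigma\cong B$ is BA and center free) to obtain a dividing congruence $\delta$ for $B\lll A$. By definition this $\delta$ is irreducible with $B^2\subseteq\delta^*$ and $B/\delta$ BA and center free, hence either linear or PC; any block $E$ of $\delta$ meeting $B$ gives $C=B\cap E<_{\TL}^A B$ or $C<_{\TPC}^A B$. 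Your ``Main obstacle'' paragraph already contains exactly this argument---you just didn't notice that the lemma's hypothesis $B\lll A$ is general, so no reduction to $B=A$ is needed.
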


\begin{proof}
If $B$ has a nontrivial BA or central subuniverse we take this subuniverse as $C$.
Otherwise, we apply Lemma \ref{LEMMainExistenceOfIrreducibleCongruence},
where $\sigma$ is the equality relation.
Then there exists an irreducible congruence $\delta$ for $B$.
It remains to choose any block $D$ of $B/\delta$ and 
put $C= D\cap B$.
\end{proof}

\begin{LEMIntersectALLLEM}%DONE
Suppose $B\lll A$, $D\lll A$. %, and $T\in\{\TBA,\TC,\TS,\TPC,\TL\}$.
Then 
\begin{enumerate}
    \item[(i)] $B\cap D\dot\lll A$;
    \item[(t)] $C<_{T(\sigma)}^{A}B \Rightarrow C\cap D\dot\le _{T(\sigma)}^{A} B\cap D$.
\end{enumerate}
\end{LEMIntersectALLLEM}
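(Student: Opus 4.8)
\textbf{Plan of proof for Lemma~\ref{LEMIntersectALL}.} The two parts should be derived from the machinery already in place, chiefly Lemma~\ref{LEMIntersectionPCLinearIsGood} together with the factorization lemmas of the preceding subsection. I would prove (i) by induction on the number of steps in the chain $D = D_0 >_{T_1} D_1 >_{T_2} \dots >_{T_s} D_s = (\text{terminal})$ witnessing $D\lll A$, actually it is cleaner to induct on the chain witnessing $B\lll A$ and reduce step by step. So write $B = B_k <_{T_k(\sigma_k)}^A B_{k-1} <\dots< B_1 <_{T_1(\sigma_1)}^A B_0 = A$. If $k=0$ then $B=A$ and $B\cap D = D\lll A$, done. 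For the inductive step, by the inductive hypothesis applied to the chain $B_{k-1}\lll A$ we have $B_{k-1}\cap D\dot\lll A$; if this is empty we are done, so assume $B_{k-1}\cap D\neq\varnothing$. Now I want to intersect with the last step $B_k<_{T_k(\sigma_k)} B_{k-1}$. The point is exactly part (t) for a single step, so (i) and (t) must be proved together in one induction.

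\textbf{The single-step analysis.} For part (t), assume $C<_{T(\sigma)}^A B$ and $B\cap D\neq\varnothing$ (otherwise $C\cap D\subseteq B\cap D=\varnothing$ and the dotted inequality holds vacuously); I must show $C\cap D\dot\le_{T(\sigma)}^A B\cap D$. Split on the type $T$. If $T\in\{\TBA,\TC\}$, then by Lemma~\ref{LEMBACenterImplyIntersection} $C\cap(B\cap D)\le_T B\cap D$, i.e. $C\cap D\le_T B\cap D$, and this is exactly $\le_{T(\sigma)}$ since for these types $\sigma$ is the full congruence. If $T=\TS$, pick $G\le C$ with $G<_{\TBA,\TC} B$; by Lemma~\ref{LEMIntersectionPCLinearIsGood}(s), since $D\lll A$ and $B\cap D\neq\varnothing$, we get $G\cap D\neq\varnothing$, and by the $\TBA,\TC$ case $G\cap D\le_{\TBA,\TC} B\cap D$, hence $C\cap D$ contains a BA-and-central subuniverse of $B\cap D$, giving $C\cap D\le_{\TS} B\cap D$ (with the convention $C\cap D = B\cap D$ allowed). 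If $T=\TD$ (covering $\TL$ and $\TPC$ since the congruence type is inherited), use Lemma~\ref{LEMIntersectionPCLinearIsGood}(d): either $|(B\cap D)/\sigma|=1$, in which case $C\cap(B\cap D)=B\cap D$ because $C=B\cap E$ for a $\sigma$-block $E$ and $B\cap D$ lies inside a single $\sigma$-block, so $C\cap D=B\cap D$; or $(B\cap D)/\sigma=B/\sigma$, in which case $C\cap D = (B\cap D)\cap E$ is the intersection of $B\cap D$ with a single $\sigma$-block $E$, and since $B/\sigma = (B\cap D)/\sigma$ is BA and center free (it is literally the same factor algebra), all the defining conditions of $<_{\TD(\sigma)}$ hold, so $C\cap D<_{\TD(\sigma)}^A B\cap D$, which refines to $<_{T(\sigma)}^A$ because $\sigma$ keeps its linear/PC character.

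\textbf{Assembling (i).} With (t) available for a single step, go back to the chain for $B\lll A$. Assuming $B_{k-1}\cap D\dot\lll A$ by induction (if empty, $B\cap D=\varnothing$ and we are done), apply the single-step (t) with $C=B_k$, $B=B_{k-1}$, and the relevant $D$ replaced by $B_{k-1}\cap D$: we get $B_k\cap (B_{k-1}\cap D)\dot\le_{T_k(\sigma_k)}^A B_{k-1}\cap D$, i.e. $B\cap D\dot\le_{T_k}^A (B_{k-1}\cap D)$, and since $B_{k-1}\cap D\dot\lll A$, prepending this step (when the inequality is strict) or just using $B\cap D = B_{k-1}\cap D$ (when equality holds) yields $B\cap D\dot\lll A$. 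One subtlety to check carefully: in applying (t) at the inductive step, the hypotheses require $D\lll A$ rather than $D\lll^A$-something-smaller; here I am applying (t) with the role of $D$ played by $B_{k-1}\cap D$, which is $\dot\lll A$ by induction, so I should state and prove (t) with the weaker hypothesis $D\dot\lll A$ (the $D=\varnothing$ case being trivial), which is harmless.

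\textbf{Main obstacle.} The genuinely nontrivial input is Lemma~\ref{LEMIntersectionPCLinearIsGood}, which is already proved in the excerpt, so the remaining work is bookkeeping: making sure the congruence label $\sigma$ and its type (linear vs.\ PC) are correctly transported through the intersection, and handling the degenerate cases ($|(B\cap D)/\sigma|=1$, empty intersections) with the dotted-inequality convention. The one place demanding a little care is the $\TD$ case of (t): I must confirm that when $(B\cap D)/\sigma = B/\sigma$, the block $E$ with $C = B\cap E$ also satisfies $C\cap D = (B\cap D)\cap E$ with $(B\cap D)\cap E$ a proper nonempty subset — nonemptiness follows from $(B\cap D)/\sigma=B/\sigma$ hitting the class $E/\sigma$, and properness from $C\subsetneq B$ combined again with $(B\cap D)/\sigma=B/\sigma$; if instead $(B\cap D)\cap E = B\cap D$ we simply land in the equality case of the dotted inequality. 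I expect no conceptual difficulty beyond this.
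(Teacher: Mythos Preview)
Your approach is essentially the paper's: prove (t) by a case split on $T$ using Lemmas~\ref{LEMBACenterImplyIntersection} and~\ref{LEMIntersectionPCLinearIsGood}, then derive (i) by chaining (t) along the steps of the sequence witnessing $B\lll A$. Two small remarks: in the $\TD$ subcase of (t) with $|(B\cap D)/\sigma|=1$ you overlooked that $C\cap D$ may also be $\varnothing$ (when the single $\sigma$-block containing $B\cap D$ is not the block $E$ defining $C$), which is still covered by the dotted inequality; and for (i) there is no need to substitute $B_{k-1}\cap D$ for $D$---the paper simply applies (t) with the same $D$ to each step $B_i<_{T_i}B_{i-1}$, obtaining $B\cap D\dot\lll^A D$, and then prepends $D\lll A$, which sidesteps the circularity you worried about.
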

\begin{proof}
Let us prove (t) first.
If $T\in\{\TBA,\TC\}$ then it follows
from Lemma \ref{LEMBACenterImplyIntersection}.
If $T=\TS$ then 
consider $E\le C$ such that $E<_{\TBA,\TC} B$.
If $C\cap D= \varnothing$, then the lemma holds.
Otherwise, 
Lemma \ref{LEMIntersectionPCLinearIsGood}(s) implies that 
$E\cap D\neq \varnothing$.
By Lemma \ref{LEMBACenterImplyIntersection}.
$E\cap D<_{\TBA,\TC} B\cap D$.
Hence $C\cap D<_{\TS}B\cap D$.
If $T\in\{\TPC,\TL\}$ then by Lemma \ref{LEMIntersectionPCLinearIsGood}(d) 
$(B\cap D)/\sigma$ is either empty, 
or of size 1, or equal to 
$B/\sigma$.
In the first case $C\cap D=\varnothing$ and we are done.
In the last case we have 
$C\cap D< _{T(\sigma)}^{A} B\cap D$.
In the second case we either have 
$C\cap D = B\cap D$, or 
$C\cap D=\varnothing$, which is what we need.

Let us prove (i).
Consider 
the sequence 
$B=B_{k}<_{T_{k}}B_{k-1}<_{T_{k-1}}\dots<_{T_{2}}
B_{1}<_{T_{1}}< A$
and apply (t) to each 
$B_{i}<_{T_{i}} B_{i-1}$.
Then we 
have $B\cap D\dot\lll^{A} D$, which together with 
$D\lll A$ implies 
$B\cap D\dot\lll A$.
\end{proof}

\begin{LEMPropagationLEM}%DONE
Suppose $f\colon \mathbf A\to \mathbf A'$ is a surjective homomorphism,
then 
\begin{enumerate}
\item[(f)] $C\lll^{A} B \Rightarrow f(C)\lll f(B)$;
\item[(b)] $C'\lll^{A'} B' \Rightarrow f^{-1}(C')\lll f^{-1}(B)$;    
\item[(ft)] $C<_{T(\sigma)}^{A} B\lll A \Longrightarrow 
(f(C)=f(B)
\text{ or } 
f(C)<_{S}f(B)
\text{ or }
f(C)<_{T}^{A'}f(B))$;
\item[(bt)] $C'<_{T(\sigma)}^{A'} B' \Rightarrow f^{-1}(C')<_{T(f^{-1}(\sigma))}^{A'}f^{-1}(B')$;
\item[(fs)] $T\in\{\TBA,\TC, \TS\}$ and $C<_{T}B$ $\Longrightarrow f(C)\le_{T}f(B)$;
%\item[(bs)] $T\in\{BA,C, S\}$ and $C'<_{T}B'$ $\Longrightarrow f^{-1}(C)<_{T}f^{-1}(B)$;
\item[(fm)] %$T\in\{\TMPC,\TML\}$, 
$C\le_{\mathcal{M}T}^{A}B\lll A$ and $f(B)$ is S-free $\Longrightarrow f(C)\le_{\mathcal{M}T}^{A'}f(B)$;
\item[(bm)] %$T\in\{\TMPC,\TML\}$ and 
$C'\le_{\mathcal{M}T}^{A'}B'\lll A'\Longrightarrow f^{-1}(C)\le_{\mathcal{M}T}^{A}f^{-1}(B)$.
\end{enumerate}
\end{LEMPropagationLEM}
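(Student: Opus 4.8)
The plan is to prove Lemma~\ref{LEMPropagation} by treating a surjective homomorphism $f\colon\mathbf A\to\mathbf A'$ via its kernel $\theta=\ker f$, so that $\mathbf A'\cong\mathbf A/\theta$ and $f^{-1}$ is the operation $X\mapsto X\circ\theta$ on subsets of $A'$. Under this identification, items (f), (ft), (fs), (fm) are exactly statements about factoring by the congruence $\theta$, i.e.\ they coincide with Corollary~\ref{CORPropagationModuloCongruence}(f),(t),(s),(m), and items (b), (bt), (bm) are the ``pullback'' statements. So the first step is to reduce everything to the two corollaries and a few already-proved factorization lemmas, and to establish the pullback direction directly.

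First I would prove (bt): given $C'<_{T(\sigma)}^{A'}B'$, we want $f^{-1}(C')<_{T(f^{-1}(\sigma))}f^{-1}(B')$. For $T\in\{\TBA,\TC,\TS\}$ this is Lemma~\ref{LEMReverseHomomorphism}; for $T\in\{\TL,\TPC,\TD\}$ it is again Lemma~\ref{LEMReverseHomomorphism}, since the key point is that $f$ induces isomorphisms $\mathbf A'/\sigma\cong \mathbf A/f^{-1}(\sigma)$, $\mathbf B'/\sigma\cong f^{-1}(\mathbf B')/f^{-1}(\sigma)$, $\mathbf C'/\sigma\cong f^{-1}(\mathbf C')/f^{-1}(\sigma)$, so all the defining clauses of each type transfer verbatim (including the bridge $S$ for $\TL$ and the ternary witness, and the condition ``$B/\sigma$ is BA and center free'' for $\TD$, which is preserved under isomorphism). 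From (bt) one gets (b) by applying (bt) along each link of a witnessing chain $C'=B'_m<_{T_m}B'_{m-1}<\dots<B'_0=B'$, using that $f^{-1}$ commutes with the required set operations; and (bm) follows by applying (bt) to each of the finitely many subuniverses $C'_i$ with $C'=\bigcap_i C'_i$ and $C'_i<_T^{A'}B'$, noting $f^{-1}(\bigcap_i C'_i)=\bigcap_i f^{-1}(C'_i)$ and $f^{-1}(C'_i)<_T f^{-1}(B')$ by (bt).

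Next I would handle the ``forward'' items. For (fs), $T\in\{\TBA,\TC,\TS\}$: this is Lemma~\ref{LEMBACenterSImplyFactor} applied with the congruence $\theta=\ker f$ (plus the observation $f(B)\cong B/\theta$, $f(C)\cong C/\theta$), giving $f(C)\le_T f(B)$. For (ft): apply Lemma~\ref{LEMFactorByDelta} when $T\in\{\TPC,\TL\}$, which gives precisely the trichotomy $f(C)=f(B)$ or $f(C)<_\TS f(B)$ or $f(C)<_T^{A'}f(B)$; when $T\in\{\TBA,\TC,\TS\}$ it follows from (fs) (with the case $f(C)=f(B)$ absorbed); and the remaining case $T=\TD$ is handled directly by Corollary~\ref{CORPropagationModuloCongruence}(t). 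Item (f) then follows from (ft) and (fs) by going link-by-link along a witnessing chain for $C\lll^A B$: at each step the image either collapses, or stays $\lll$-related by an $\TS$-step, or by a $T$-step, and in every case the chain of images witnesses $f(C)\lll f(B)$. Finally (fm): given $C=\bigcap_i C_i$ with $C_i<_T^A B$ and $f(B)$ S-free, apply (ft) to each $C_i<_T^A B\lll A$; since $f(B)$ is S-free the $\TS$-alternative is impossible, so each image satisfies $f(C_i)=f(B)$ or $f(C_i)<_T^{A'}f(B)$, and intersecting gives $f(C)=\bigcap_i f(C_i)$ of the form $\le_{\mathcal M T}^{A'}f(B)$ — here one must also check $f(C)\neq\varnothing$, which holds because $C\neq\varnothing$ and $f$ is a function, and that $f(\bigcap_i C_i)=\bigcap_i f(C_i)$, which holds because each $C_i\supseteq C$ and (as in the proof of Lemma~\ref{LEMMultiTypeStillStable} / Corollary~\ref{CORPropagationModuloCongruence}(m)) the relevant subuniverses are unions of $\theta$-blocks on a common domain.

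The main obstacle is item (fm), specifically the subtlety that $f$ applied to an intersection need not in general equal the intersection of the images; the argument must exploit that all the $C_i$ and $C$ contain a common nonempty set and, after reducing to the factorization picture, that each $C_i$ is saturated in the directions that matter — this is exactly where S-freeness of $f(B)$ is used to kill the bad $\TS$ case and where one leans on the machinery already developed for Corollary~\ref{CORPropagationModuloCongruence}(m) and Lemma~\ref{LEMMultiTypeStillStable}. Everything else is a routine transfer of the defining conditions of the six types across an isomorphism, together with chasing chains and finite intersections.
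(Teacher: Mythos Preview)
Your overall approach matches the paper's: items (bt), (b), (bm), (fs), (ft), (f) are handled exactly as in the paper via Lemmas~\ref{LEMReverseHomomorphism}, \ref{LEMBACenterSImplyFactor}, and \ref{LEMFactorByDelta}. Two remarks. First, your references to Corollary~\ref{CORPropagationModuloCongruence} are circular, since that corollary is derived from the present lemma; for (ft) with $T=\TD$ you should instead observe that any dividing congruence is either PC or linear and invoke Lemma~\ref{LEMFactorByDelta} directly.

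The substantive gap is in (fm). Your argument applies (ft) to each $C_i$ and, after S-freeness kills the $\TS$ branch, asserts that ``the relevant subuniverses are unions of $\theta$-blocks'', hence $f(\bigcap_i C_i)=\bigcap_i f(C_i)$. But this saturation claim is only justified in the branch $f(C_i)<_T^{A'}f(B)$, where the ``Moreover'' clause of Lemma~\ref{LEMFactorByDelta} gives $\theta\cap B^2\subseteq\sigma_i\cap B^2$. In the collapse branch $f(C_i)=f(B)$ you have no such inclusion, and $C_i$ need not be $\theta$-saturated on $B$; so the intersection--image swap is unjustified precisely when some factor collapses. The paper closes this gap by induction on the number $t$ of factors: if some $C_i$ satisfies $C_i/\theta=B/\theta$, one replaces $B$ by $C_i$ (still with $f(C_i)=f(B)$ S-free, and $C_i\lll A$), replaces the remaining $C_j$ by $C_j\cap C_i\le_{T(\sigma_j)}^A C_i$ via Lemma~\ref{LEMIntersectALL}(t), and applies the inductive hypothesis with $t-1$ factors. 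Only once every surviving factor lands in the non-collapsing branch does the saturation argument go through and yield $(C_1\cap\dots\cap C_t)/\theta=C_1/\theta\cap\dots\cap C_t/\theta$.
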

\begin{proof}
Let $\delta$ be the congruence defined by the homomorphism $f$, 
that is $f(\mathbf A)\cong \mathbf A/\delta$.  
(bt).  It follows from Lemma \ref{LEMReverseHomomorphism}.

(b). It is sufficient to apply (bt) several times.

(ft). It follows from Lemma \ref{LEMFactorByDelta}.

(f). It is sufficient to apply (ft) several times.

(fs). It follows from Lemma \ref{LEMBACenterSImplyFactor}.

%(bs). It follows from Lemma \ref{LEMReverseHomomorphism}.

(fm). 
Let $C = C_1\cap \dots\cap C_{t}$, 
where $C_{i}<_{T(\sigma_{i})}^{A} B$.
We prove by induction on $t$.
If $t=1$ then it follows from Lemma \ref{LEMFactorByDelta}.
Assume that 
$C_{i}/\delta = B/\delta$ for some $i$.
To simplify notations assume that $i=t$.
Then consider 
$D_{j} = C_{j}\cap C_{t}$ for $j=1,2,\dots,t-1$.
By Lemma \ref{LEMIntersectALL} 
$D_{j}\le_{T(\sigma_{i})}^{A}C_t$ for every $j$.
By the inductive assumption 
$C/\delta = (D_{1}\cap\dots\cap D_{t-1})/\delta\le_{\mathcal{M}T}^{A}C_{t}/\delta = B/\delta$.
Hence
$C/\delta\le_{\mathcal{M}T}^{A/\delta}B/\delta$.
Thus, it remains to consider the case when 
$C_{i}/\delta \neq B/\delta$ for every $i$.
By the additional condition to 
Lemma \ref{LEMFactorByDelta} 
we have 
$\delta\cap B^{2}\subseteq \sigma_{i}\cap B^{2}$.
By Lemma \ref{LEMFactorByDelta}
we have 
$C_{i}/\delta<_{T}^{A/\delta} B/\delta$.

Let us show that 
$(C_{1}\cap \dots\cap C_{t})/\delta =
C_{1}/\delta\cap\dots\cap C_{t}/\delta$.
The inclusion $\subseteq$ is obvious.
Let us prove $\supseteq$.
Suppose 
$E \in C_{1}/\delta\cap\dots\cap C_{t}/\delta$.
Since 
$(C_{i}\circ \delta)\cap B = C_{i}$ for every $i$, 
we have $E\cap B\subseteq C_{1}\cap \dots \cap C_{t}$.
Hence $E\in (C_{1}\cap \dots \cap C_{t})/\delta$.

% Put $\delta' = \delta\cap B^{2}$.
% Since
% $C_{i}\circ\delta' = C_{i}$ for every $i$,
% we have  
% % Then, it is sufficient to prove the 
% % equality 
% % $(\bigcap_{i=1}^{t} C_{i})/\delta = 
% % \bigcap_{i=1}^{t} C_{i}/\delta$, 
% % which is equivalent to 
% % $(\bigcap_{i=1}^{t} C_{i})/\delta = 
% % \bigcap_{i=1}^{t} C_{i}/\delta$, 
% \begin{align*}
% &C_{1}\cap \dots\cap C_{t} = 
% C_{1}\cap\dots \cap C_{t}\Leftrightarrow
% (C_{1}\cap \dots\cap C_{t})\circ\delta' = 
% (C_{1}\circ\delta')\dots(C_{t}\circ\delta')\Leftrightarrow\\
% &
% (C_{1}\cap \dots\cap C_{t})/\delta' = 
% C_{1}/\delta'\cap\dots\cap C_{t}/\delta'\Leftrightarrow
% (C_{1}\cap \dots\cap C_{t})/\delta =
% C_{1}/\delta\cap\dots\cap C_{t}/\delta
% \end{align*}
Thus, we showed that
$(C_{1}\cap \dots\cap C_{t})/\delta= 
C_{1}/\delta\cap\dots\cap C_{t}/\delta<_{\mathcal{M}T}^{A/\delta}B/\delta$.

(bm). It is sufficient to apply (bt) and consider the intersection of 
the corresponding PC and linear subuniverses.
\end{proof}

\begin{CORPropagateFromFactorCOR}%DONE
Suppose $\delta$ is a congruence on $\mathbf A$,
$\mathbf B,\mathbf C\le \mathbf A$. Then 
\begin{enumerate}
    \item[(f)] $C/\delta\lll^{A/\delta}
B/\delta \Longleftrightarrow C\circ\delta\lll^{A}
B\circ \delta$;
    \item[(t)] $C/\delta<_{T}^{A/\delta}
B/\delta \Longleftrightarrow C\circ\delta<_{T}^{A}
B\circ \delta$.
\end{enumerate}
\end{CORPropagateFromFactorCOR}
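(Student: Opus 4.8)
\textbf{Proof proposal for Corollary~\ref{CORPropagateFromFactor}.}
The plan is to derive both equivalences directly from Lemma~\ref{LEMPropagation} applied to the canonical projection $f\colon\mathbf A\to\mathbf A/\delta$. First I would observe that the set $C\circ\delta$ is exactly $f^{-1}(C/\delta)$: indeed $a\in f^{-1}(C/\delta)$ means $a/\delta\in C/\delta$, i.e.\ $a/\delta=c/\delta$ for some $c\in C$, i.e.\ $(c,a)\in\delta$ for some $c\in C$, which is precisely $a\in C\circ\delta$. The same holds with $B$ in place of $C$, so $B\circ\delta=f^{-1}(B/\delta)$ and $C\circ\delta=f^{-1}(C/\delta)$. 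Since $C\circ\delta$ and $B\circ\delta$ are unions of $\delta$-blocks, they are subuniverses of $\mathbf A$ containing the full $\delta$-blocks over their images, so applying $f$ to them recovers $C/\delta$ and $B/\delta$ on the nose: $f(C\circ\delta)=C/\delta$ and $f(B\circ\delta)=B/\delta$.

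For the forward direction of (f), assume $C\circ\delta\lll^A B\circ\delta$; this is literally $f^{-1}(C/\delta)\lll^A f^{-1}(B/\delta)$, so Lemma~\ref{LEMPropagation}(f) (the ``$f$'' clause, applied to the image under $f$) gives $f(f^{-1}(C/\delta))\lll f(f^{-1}(B/\delta))$, i.e.\ $C/\delta\lll^{A/\delta} B/\delta$. For the reverse direction, assume $C/\delta\lll^{A/\delta} B/\delta$; by Lemma~\ref{LEMPropagation}(b) applied to $f$ we obtain $f^{-1}(C/\delta)\lll f^{-1}(B/\delta)$, which is $C\circ\delta\lll^A B\circ\delta$. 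For (t) I would argue the same way using the ``single type'' clauses: the reverse implication of (t) is immediate from Lemma~\ref{LEMReverseHomomorphism} (equivalently Lemma~\ref{LEMPropagation}(bt)), since $C\circ\delta=f^{-1}(C/\delta)$ and $B\circ\delta=f^{-1}(B/\delta)$, and $f^{-1}$ preserves the type $T$ and the witnessing congruence. The forward implication of (t) uses Lemma~\ref{LEMPropagation}(ft): from $C\circ\delta<_{T(\sigma)}^A B\circ\delta$ we get that $f(C\circ\delta)$ equals $f(B\circ\delta)$, or is $<_{\TS}$, or is $<_T^{A/\delta}$ inside $f(B\circ\delta)=B/\delta$; the first two alternatives are excluded because $C\circ\delta\subsetneq B\circ\delta$ are unions of full $\delta$-blocks, so $f(C\circ\delta)=C/\delta\subsetneq B/\delta=f(B\circ\delta)$ and the type cannot collapse to $\TS$ when $T\in\{\TBA,\TC,\TS,\TPC,\TL,\TD\}$ — here one needs to be slightly careful and record that the hypothesis of (t) already fixes the type $T$, so by the structure of Lemma~\ref{LEMFactorByDelta}/\ref{LEMPropagation}(ft) the only surviving case is $C/\delta<_T^{A/\delta}B/\delta$.

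The main obstacle I anticipate is the bookkeeping in the forward direction of (t): one has to rule out the degenerate outcomes of Lemma~\ref{LEMPropagation}(ft) (the ``$f(C)=f(B)$'' and ``$f(C)<_{\TS}f(B)$'' cases), which requires knowing that $C\circ\delta$ and $B\circ\delta$ are genuinely distinct and $\delta$-saturated so that factoring by $\delta$ is, on these sets, the same as factoring the already-saturated picture — in other words the ``$\TS$'' escape hatch in (ft) arises only when a congruence genuinely shrinks a non-saturated set, which cannot happen here since $\delta$ is already quotiented out. Once that point is made cleanly, everything else is a direct translation through the identities $C\circ\delta=f^{-1}(C/\delta)$ and $f(C\circ\delta)=C/\delta$, with no further computation needed.
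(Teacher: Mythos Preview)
Your argument for (f) and for the implication $C/\delta<_T^{A/\delta}B/\delta\Rightarrow C\circ\delta<_T^A B\circ\delta$ in (t) is correct and matches the paper (the paper derives (f) from (t) by iteration rather than from Lemma~\ref{LEMPropagation}(f)/(b) directly, but either works).

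The gap is in the other direction of (t). You invoke Lemma~\ref{LEMPropagation}(ft) to pass from $C\circ\delta<_{T(\sigma)}^A B\circ\delta$ to a statement about $C/\delta$ and $B/\delta$, but clause (ft) carries the hypothesis $B\lll A$: in your setting that would be $B\circ\delta\lll A$, and the Corollary assumes only $\mathbf B,\mathbf C\le\mathbf A$, nothing more. The same applies to Lemma~\ref{LEMFactorByDelta}, which underlies (ft). So neither lemma is available here, and the trichotomy you want to invoke (equal, or $<_{\TS}$, or $<_T$) is not on the table. Your heuristic that ``the $\TS$ escape hatch arises only when a congruence genuinely shrinks a non-saturated set'' is in the right spirit, but it cannot replace the missing hypothesis; without $B\circ\delta\lll A$ the machinery behind (ft) --- in particular Lemmas~\ref{LEMCongruenceEitherCutOrDoNothing} and~\ref{LEMMainExistenceOfIrreducibleCongruence} --- does not apply. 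The paper avoids this by splitting on the type: for $T\in\{\TBA,\TC,\TS\}$ it uses Lemma~\ref{LEMBACenterSImplyFactor} (which needs no $\lll$ assumption and, together with $\delta$-saturation, immediately gives $C/\delta<_T B/\delta$), while for $T\in\{\TPC,\TL,\TD\}$ the witnessing congruence on $A$ corresponds, via the lattice isomorphism between congruences on $A/\delta$ and congruences on $A$ above $\delta$, to a witnessing congruence on $A/\delta$ --- this is what the paper means by ``immediately from the definition''. You should replace your appeal to (ft) by this case split.
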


\begin{proof}
(t,$\Rightarrow$). It follows from Lemma \ref{LEMReverseHomomorphism}. 

(t,$\Leftarrow$).
if $T\in\{\TBA,\TC,\TS\}$ then
it follows from Lemma \ref{LEMBACenterSImplyFactor}. 
If $T\in\{\TPC,\TL,\TD\}$ then it follows immediately from the definition.

(f). To prove (f) it is sufficient to apply (t) several times.
\end{proof}

\begin{CORPropagateMultiplyByCongruenceCOR}%DONE
Suppose $\delta$ is a congruence on $\mathbf A$. Then 
\begin{enumerate}
\item[(f)] $C\lll^{A}B  \Rightarrow C\circ \delta \lll^{A} B\circ \delta$;
    \item[(t)] $C<_{T(\sigma)}^{A}B\lll^{A}A\Longrightarrow (C\circ\delta= B\circ \delta \text{ or }
    C\circ\delta<_{\TS}^{A} B\circ \delta \text{ or }
    C\circ\delta<_{T}^{A} B\circ \delta)$;
    \item[(e)] $\delta\subseteq \sigma$ and $C<_{T(\sigma)}^{A}B\lll^{A}A\Longrightarrow 
    C\circ\delta<_{T}^{A} B\circ \delta$.
\end{enumerate}
\end{CORPropagateMultiplyByCongruenceCOR}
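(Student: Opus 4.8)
The plan is to reduce everything to Corollary \ref{CORPropagateFromFactor} by observing that $C\circ\delta$ is a union of $\delta$-blocks, hence $C\circ\delta = f^{-1}(f(C\circ\delta))$ where $f\colon\mathbf A\to\mathbf A/\delta$ is the canonical projection; more precisely $(C\circ\delta)/\delta = C/\delta$ and $(B\circ\delta)/\delta = B/\delta$, so $C\circ\delta = (C/\delta)\circ\delta$ in the notation of Corollary \ref{CORPropagateFromFactor}. First I would prove (f). Since $C\lll^A B$, Corollary \ref{CORPropagationModuloCongruence}(f) gives $C/\delta\lll^{A/\delta} B/\delta$, and then Corollary \ref{CORPropagateFromFactor}(f) applied with the pair $C/\delta, B/\delta\le \mathbf A/\delta$ (viewed inside $\mathbf A$) yields $(C/\delta)\circ\delta\lll^A (B/\delta)\circ\delta$, i.e. $C\circ\delta\lll^A B\circ\delta$. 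That is the whole argument for (f), modulo bookkeeping that $(C/\delta)\circ\delta = C\circ\delta$.

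For (t): from $C<_{T(\sigma)}^A B\lll^A A$, Corollary \ref{CORPropagationModuloCongruence}(t) gives that either $C/\delta = B/\delta$, or $C/\delta<_{\TS} B/\delta$, or $C/\delta<_{T}^{A/\delta} B/\delta$. In the first case $C\circ\delta = B\circ\delta$ and we are done. In the third case, Corollary \ref{CORPropagateFromFactor}(t) turns $C/\delta<_{T}^{A/\delta} B/\delta$ into $C\circ\delta<_{T}^{A} B\circ\delta$. The second case requires the $\TS$ version of Corollary \ref{CORPropagateFromFactor}(t); since $\TS$ is not literally in the list $\{\TBA,\TC,\TPC,\TL,\TD\}$ of that corollary as stated, I would instead unfold the $\TS$ case directly: $C/\delta<_{\TS} B/\delta$ means there is $D'\le C/\delta$ with $D'<_{\TBA,\TC} B/\delta$, and applying Corollary \ref{CORPropagateFromFactor}(t) (or Lemma \ref{LEMReverseHomomorphism} through $f^{-1}$) to $D'$ gives $D'\circ\delta<_{\TBA,\TC} B\circ\delta$ with $D'\circ\delta\le C\circ\delta$, hence $C\circ\delta<_{\TS} B\circ\delta$. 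This handles (t).

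For (e): here $\delta\subseteq\sigma$, and I would argue that multiplying by a congruence contained in $\sigma$ cannot collapse $C$ to $B$ nor downgrade its type to $\TS$. Concretely, for $T\in\{\TBA,\TC,\TS\}$ the congruence $\sigma$ is the full congruence and there is nothing new to check beyond (t) (the outcome $C\circ\delta<_T B\circ\delta$ is exactly what (t) gives once we rule out the degenerate alternatives, and since $\delta\subseteq\sigma$ one has $C\circ\delta\ne B\circ\delta$ when $C\ne B$). For $T\in\{\TPC,\TL,\TD\}$: the dividing/PC/linear congruence $\sigma$ satisfies $(C\circ\delta)/\delta = C/\delta$ and, because $\delta\subseteq\sigma$, the congruence $\sigma/\delta$ is still a nontrivial irreducible (resp. PC, linear) congruence on $\mathbf A/\delta$ with $C/\delta$ being its restriction to a single block of $B/\delta$; so the $\TS$ and "equal" alternatives of Corollary \ref{CORPropagationModuloCongruence}(t) cannot occur and we land in $C/\delta<_{T}^{A/\delta} B/\delta$, whence $C\circ\delta<_T^A B\circ\delta$ by Corollary \ref{CORPropagateFromFactor}(t). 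The main obstacle I anticipate is precisely this last point — verifying that $\delta\subseteq\sigma$ really forbids the degenerate cases in Corollary \ref{CORPropagationModuloCongruence}(t); this uses that $\delta\subseteq\sigma$ implies $(B\circ\delta)/\sigma$-type data is unchanged, so that the irreducibility and the $\sigma^*$-block structure descend faithfully to $\mathbf A/\delta$. Everything else is routine translation between the $\circ\delta$ and factor-algebra languages.
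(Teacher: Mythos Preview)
Your arguments for (f) and (t) are correct and match the paper's proof exactly: combine Corollary~\ref{CORPropagationModuloCongruence} with Corollary~\ref{CORPropagateFromFactor}. (Your extra unfolding of the $\TS$ case in (t) is unnecessary, since the convention in Section~\ref{SUBSECTIONSTRONGSUBALGEBRASPROPERTIES} makes Corollary~\ref{CORPropagateFromFactor}(t) apply to all six types, including $\TS$.)

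For (e) there is a genuine gap in your treatment of $T\in\{\TBA,\TC,\TS\}$. Your claim that ``since $\delta\subseteq\sigma$ one has $C\circ\delta\ne B\circ\delta$'' is simply false there: for these types $\sigma$ is the full congruence by convention, so $\delta\subseteq\sigma$ carries no information, and one can easily have $C\circ\delta=B\circ\delta$ (take any $C<_{\TBA}B$ and $\delta=A^2$). In fact (e) is only intended for $T\in\{\TD,\TPC,\TL\}$, as is clear from the paper's own proof invoking ``the definition of a dividing congruence''; you should drop the non-dividing cases entirely.

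For the dividing types your direct route---checking that $\sigma/\delta$ is still irreducible of the same type, that $(\sigma/\delta)^*\supseteq (B/\delta)^2$, and that $C/\delta$ is the intersection of $B/\delta$ with a single $\sigma/\delta$-block---is correct and gives $C/\delta<_{T(\sigma/\delta)}^{A/\delta}B/\delta$ directly. The paper takes a shorter path: apply the already-proved (t), then observe that the $\TS$ alternative is impossible because a witness $D<_{\TBA,\TC}B\circ\delta$ would yield, upon factoring by $\sigma\supseteq\delta$, a proper BA and central subuniverse $D/\sigma$ of $(B\circ\delta)/\sigma=B/\sigma$, contradicting the definition of a dividing congruence. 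Your approach requires re-verifying several pieces of the $<_{T}$ definition in the quotient; the paper's contradiction argument reuses (t) and a single observation about $B/\sigma$.
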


\begin{proof}
(f). Corollary \ref{CORPropagationModuloCongruence}(f)
implies  
$C/\delta\lll^{A/\delta}B/\delta$, then 
Corollary \ref{CORPropagateFromFactor}(f)
implies $C\circ \delta \lll^{A} B\circ \delta$.

(t).  
Again the proof is just a combination of 
Corollary \ref{CORPropagationModuloCongruence}(t)
and 
Corollary \ref{CORPropagateFromFactor}(t).

(e). We use (t) and notice that 
$C\circ\delta<_{\TS}^{A} B\circ \delta$
implies that 
$B/\sigma$ has BA and central subuniverse, which contradicts 
the definition of a dividing congruence.
\end{proof}

\begin{CORPropagateToRelationsCOR}
Suppose $R\le_{sd} A_{1}\times\dots\times A_{n}$,
$B_{i}\lll A_{i}$ for $i\in[n]$. Then 
\begin{enumerate}
    \item[(r)] $R\cap (B_1\times\dots\times B_{n}))\dot\lll R$;
    \item[(r1)] $\proj_{1}(R\cap (B_1\times\dots\times B_{n}))\dot\lll A_{1}$;
%    \item[(t)] $C_{1}<_{T}^{A_{1}}B_{1}
%    \Longrightarrow
%    \proj_{1}(R\cap (C_1\times B_{2}\dots\times C_{n}))\dot\le_{T} \proj_{1}(R\cap (B_1\times\dots\times B_{n}))$
    \item[(b)] $\forall i\colon  C_{i}\lll^{A_{i}}B_{i}\Longrightarrow (R\cap (C_1\times\dots\times C_{n}))\dot\lll^{R} 
    (R\cap (B_1\times\dots\times B_{n}))$;   
    \item[(b1)] $\forall i\colon  C_{i}\lll^{A_{i}}B_{i}\Longrightarrow \proj_{1}(R\cap (C_1\times\dots\times C_{n}))\dot\lll^{A_1} \proj_{1}(R\cap (B_1\times\dots\times B_{n}))$;
    \item[(m)] $\forall i\colon  C_{i}\le_{\mathcal{M}T}^{A_{i}}B_{i} \Longrightarrow R\cap (C_1\times\dots\times C_{n})\dot\le_{\mathcal{M}T}^{R} R\cap (B_1\times\dots\times B_{n})$;
    \item[(m1)] $\forall i\colon  C_{i}\le_{\mathcal{M}T}^{A_{i}}B_{i}$, 
    $\proj_{1}(R\cap (B_1\times\dots\times B_{n}))$ is S-free $\Longrightarrow$
    
    \;\;\;\;\;\;\;\;\;\;\;\;\;\;\;\;\;\;\;\;\;\;\;\;\;\;\;\;\;\;
\;\;\;\;\;\;\;\;\;\;\;\;\;\;\;\;\;\;\;\;\;\;\;$\proj_{1}(R\cap (C_1\times\dots\times C_{n}))\dot\le_{\mathcal{M}T}^{A_{1}} \proj_{1}(R\cap (B_1\times\dots\times B_{n}))$.
\end{enumerate}
\end{CORPropagateToRelationsCOR}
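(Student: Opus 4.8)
The statement is Corollary~\ref{CORPropagateToRelations}, which transfers the behaviour of strong/linear subuniverses through a subdirect relation $R$. The whole thing is obtained from Lemma~\ref{LEMPropagation} (Propagation) and Corollary~\ref{CORIntersectionPCLinearIsGood} / Lemma~\ref{LEMIntersectALL} by applying them to the projection homomorphisms of $R$. So the plan is: for each $i\in[n]$ let $f_i\colon \mathbf R\to \mathbf A_i$ be the projection onto the $i$-th coordinate, which is a surjective homomorphism since $R$ is subdirect. Then $f_i^{-1}(B_i)$ is a subuniverse of $\mathbf R$, and $R\cap (B_1\times\dots\times B_n)=\bigcap_{i\in[n]} f_i^{-1}(B_i)$.

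\textbf{Proof of (r).}
By Lemma~\ref{LEMPropagation}(b), $B_i\lll A_i$ implies $f_i^{-1}(B_i)\lll \mathbf R$ for every $i$. By Lemma~\ref{LEMIntersectALL}(i) applied repeatedly, the intersection $\bigcap_{i\in[n]} f_i^{-1}(B_i)$ is $\dot\lll \mathbf R$; that intersection is exactly $R\cap(B_1\times\dots\times B_n)$, giving (r). For (r1), note $\proj_1$ is itself a surjective homomorphism onto $\mathbf A_1$ (as $R$ is subdirect), so by Lemma~\ref{LEMPropagation}(f) the image $\proj_1(R\cap(B_1\times\dots\times B_n))\dot\lll A_1$ follows from (r). The dotted version simply accommodates the possibility that the intersection is empty, in which case there is nothing to prove.

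\textbf{Proof of (b), (b1).}
For the relative versions, first apply (r) to see $R\cap(B_1\times\dots\times B_n)\dot\lll R$; if it is empty then $R\cap(C_1\times\dots\times C_n)$ is empty too and we are done. Otherwise, use Corollary~\ref{CORReverseHomomorphism} / Lemma~\ref{LEMPropagation}(b) to get $f_i^{-1}(C_i)\lll^{\mathbf R} f_i^{-1}(B_i)$, intersect over $i$ using Lemma~\ref{LEMIntersectALL}(t) (iterated), and conclude $R\cap(C_1\times\dots\times C_n)\dot\lll^R R\cap(B_1\times\dots\times B_n)$. Applying $\proj_1$ and Lemma~\ref{LEMPropagation}(f) again yields (b1).

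\textbf{Proof of (m), (m1).}
For the multi-type versions, write each $C_i=\bigcap_j C_{i,j}$ with $C_{i,j}<_T^{A_i} B_i$. Apply Corollary~\ref{CORReverseHomomorphism} to obtain $R\cap(B_1\times\dots\times C_{i,j}\times\dots\times B_n)<_T^{R} R\cap(B_1\times\dots\times B_n)$ for each $i,j$, then intersect all of these (Lemma~\ref{LEMIntersectALL}(t)) to get $R\cap(C_1\times\dots\times C_n)\dot\le_{\mathcal{M}T}^{R} R\cap(B_1\times\dots\times B_n)$, which is (m). For (m1) we push forward along $\proj_1$ using Lemma~\ref{LEMPropagation}(fm); this is exactly where the hypothesis that $\proj_1(R\cap(B_1\times\dots\times B_n))$ is S-free is needed, since (fm) requires the image to be S-free to preserve the $\mathcal{M}T$ structure (otherwise the type could degenerate to $\TS$). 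The main subtlety — and the only place requiring care — is checking that the homomorphic image of an $\mathcal{M}T$ subuniverse stays $\mathcal{M}T$ rather than collapsing; but this is precisely the content of Lemma~\ref{LEMPropagation}(fm), so under the S-freeness hypothesis there is no genuine obstacle. The remaining verifications are routine bookkeeping of the dotted (possibly-empty) cases.
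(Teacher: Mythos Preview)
Your proposal is correct and follows essentially the same approach as the paper: set up the coordinate projections $f_i\colon\mathbf R\to\mathbf A_i$, pull back via Lemma~\ref{LEMPropagation}(b)/(bt), intersect via Lemma~\ref{LEMIntersectALL}, and push forward via Lemma~\ref{LEMPropagation}(f)/(fm). The only noteworthy difference is your argument for (b): you propose to build the chain step-by-step from $f_i^{-1}(C_i)\lll^{R} f_i^{-1}(B_i)$ and iterate Lemma~\ref{LEMIntersectALL}(t), whereas the paper simply observes that both $R':=R\cap(B_1\times\dots\times B_n)$ and $R'':=R\cap(C_1\times\dots\times C_n)$ are $\dot\lll R$ by (r), and since $R''\subseteq R'$ one has $R''=R'\cap R''\dot\lll^{R}R'$ directly from Lemma~\ref{LEMIntersectALL}(i)---a shorter route that avoids tracking the intermediate chain.
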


\begin{proof}
(r). Let $f_{i}:\mathbf{R}\to \mathbf A_{i}$ 
be the homomorphism sending each tuple to its $i$-th coordinate.
By Lemma \ref{LEMPropagation}(b)
$f_{i}^{-1}(B_{i})\lll R$ for every $i$.
Then by Lemma \ref{LEMIntersectALL}(i)
$R\cap (B_1\times\dots\times B_{n})=
\bigcap_{i=1}^{n} f_{i}^{-1}(B_{i}) \dot\lll 
R$.

(r1). Additionally to (r) we apply 
$f_{1}$ to the intersection and use Lemma 
\ref{LEMPropagation}(f).

(b). By (r) we have 
$R' := R\cap (B_1\times\dots\times B_{n}))\dot\lll R$
and 
$R'':= R\cap (C_1\times\dots\times C_{n}))\dot\lll R$.
By Lemma \ref{LEMIntersectALL}(i)
$R'' = R'\cap R''\dot\lll^{R} R'$.

(b1). Again, additionally to (b) 
we apply 
$f_{1}$ and use Lemma 
\ref{LEMPropagation}(f).

(m). By the definition of type $\mathcal{M}T$ 
each $C_{i}$ can be represented 
as
$C_{i,1}\cap\dots \cap C_{i,n_i}$ 
where 
$C_{i,j}\le_{T}^{A_{i}}B_{i}$ for all $i$ and $j$.
We will use notations $R'$ and $R''$ from (b).
By Lemma \ref{LEMPropagation}(b)
$f_{i}^{-1}(C_{i,j})\le_{T}^{R} f_{i}^{-1}(B_{i})$.
By Lemma \ref{LEMIntersectALL}(t) and property (r), that we already proved, we have
$$f_{i}^{-1}(C_{i,j})\cap R'\le_{T}^{R} f_{i}^{-1}(B_{i})\cap R' = R'.$$
Hence  
$R'' = R'\cap\bigcap_{i=1}^{t}\bigcap_{j=1}^{n_i} f_{i}^{-1}(C_{i,j})
\dot\le_{\mathcal{M}T}^{R} R'$.

(m1). It is sufficient to apply Corollary 
\ref{CORPropagationModuloCongruence}(m) to (m).
\end{proof}

\begin{THMMainStableIntersectionTHM}%DONE
Suppose 
\begin{enumerate}
    \item $C_{i}<_{T_{i}(\sigma_{i})}^{A} B_{i}\lll A$, where 
    $T_{i}\in\{\TBA, \TC,\TS,\TL,\TPC\}$ for $i=1,2,\dots,n$, $n\ge 2$;
    \item $\bigcap\limits_{i\in[n]} C_{i} = \varnothing$;
    \item $B_{j}\cap\bigcap\limits_{i\in[n]\setminus\{j\}} C_{i} \neq  \varnothing$
    for every $j\in[n]$.
\end{enumerate}
Then one of the following conditions hold:
\begin{enumerate}
    \item[(ba)] $T_1=\dots= T_n=\TBA$;
    \item[(l)] 
    %\begin{enumerate}
        %\item 
        $T_1=\dots= T_n=\TL$ and 
        for every $k,\ell\in [n]$
        %\item 
        there exists a bridge $\delta$ from 
    $\sigma_{k}$ and $\sigma_{\ell}$ 
    such that 
    $\widetilde \delta= \sigma_{k}\circ \sigma_{\ell}$;
    %\end{enumerate}
    \item[(c)] $n=2$ and $ T_1= T_2=\TC$;
    \item[(pc)] $n=2$, $ T_1= T_2=\TPC$,
    and $\sigma_{1} = \sigma_{2}$.
\end{enumerate}     
\end{THMMainStableIntersectionTHM}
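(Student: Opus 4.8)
The statement is the "core" structural dichotomy: when pairwise-compatible strong subuniverses of a common algebra $A$ have empty total intersection but every "all-but-one" intersection is nonempty, the types must all coincide and moreover in the linear case the defining congruences are mutually connected by bridges. My plan is to do induction on $|A|$ (or on $\sum|B_i|$), reducing the general case to repeated application of the two-variable analysis \textbf{Lemma~\ref{LEMTwoStableIntersection}} together with the "factor through a congruence" machinery of \textbf{Lemma~\ref{LEMIntersectionPCLinearIsGood}}, \textbf{Lemma~\ref{LEMFactorByDelta}} and \textbf{Corollary~\ref{CORPropagateMultiplyByCongruence}}.

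First I would dispose of the easy reductions. If some $C_i$ equals $B_i$ we can drop it (the hypotheses then say $\bigcap_{i\neq j}C_i=\varnothing$ for a shorter list, contradicting (3) unless only one index remains, which is impossible for $n\ge2$); so every inclusion $C_i<_{T_i}B_i$ is proper. Next, by condition (2) and (3) together with minimality, for any pair $k,\ell$ consider $D=\bigcap_{i\neq k,\ell}C_i$ (reading $D=A$ if $n=2$); then $C_k\cap C_\ell\cap D=\varnothing$ while $(C_k\cap D)\cap B_\ell\neq\varnothing$ and $(C_\ell\cap D)\cap B_k\neq\varnothing$, and by \textbf{Lemma~\ref{LEMIntersectALL}(t)} the restrictions $C_k\cap D<_{T_k(\sigma_k)}^A B_k\cap D\lll A$ and likewise for $\ell$ are still of the same type. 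Applying \textbf{Lemma~\ref{LEMTwoStableIntersection}} to this pair forces $T_k=T_\ell$ and, when that common type is $\TD$ (hence $\TL$ or $\TPC$), yields a bridge $\delta$ from $\sigma_k$ to $\sigma_\ell$ with $\widetilde\delta=\sigma_k\circ(\text{something})\circ\sigma_\ell$; but since everything lives on the same $A$ there is no relation to insert and $\widetilde\delta=\sigma_k\circ\sigma_\ell$. Also \textbf{Lemma~\ref{LEMTwoStableIntersection}} rules out $\TS$ as a common type. So all $T_i$ are equal, and the common value lies in $\{\TBA,\TC,\TL,\TPC\}$. The cases $\TBA$ and $\TL$ are exactly (ba) and (l) (for (l) the bridges are produced pairwise as above, and by \textbf{Lemma~\ref{LEMLinearEquivalentConditions}}/\textbf{Lemma~\ref{LEMNoBridgeBetweenDifferentTypes}} they are genuinely linear-type bridges).

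It remains to show that for the common type $\TC$ or $\TPC$ one must have $n=2$, and in the $\TPC$ case additionally $\sigma_1=\sigma_2$. For $\TPC$: once $n=2$, apply \textbf{Lemma~\ref{LEMBridgeFromRelation}}-style reasoning, or more directly \textbf{Lemma~\ref{LEMBridgeBetweenCongruences}} together with \textbf{Lemma~\ref{LEMIntersectionPCLinearIsGood}(d)}, to get a bridge from $\sigma_1$ to $\sigma_2$; since $\sigma_1,\sigma_2$ are PC, \textbf{Lemma~\ref{LEMPCBridgesAreTrivial}}/\textbf{Lemma~\ref{LEMBridgeTOPCCongruence}} forces $\widetilde\delta$ to be essentially an isomorphism $A/\sigma_1\cong A/\sigma_2$ and, since the bridge connects the equality relation inside a single $A$, that $\sigma_1=\sigma_2$. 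The impossibility of $n\ge3$ with common type $\TC$ or $\TPC$ is the crux: here I would suppose $n\ge3$, factor all data by the congruence $\sigma_n$ (using \textbf{Lemma~\ref{LEMIntersectionPCLinearIsGood}} to control what $C_i/\sigma_n$ and $B_i/\sigma_n$ look like — each is either collapsed to a point, stays $B_i/\sigma_n$, or is again of type $\TC$/$\TPC$), so that after factoring $C_n/\sigma_n$ becomes a block, reducing the relation $C_1\cap\dots\cap C_{n-1}$ restricted to that block; if the domain genuinely shrank we invoke the induction hypothesis on $A/\sigma_n$ (or on a block of $\sigma_n^*$), and if it did not shrink we get a contradiction with the hypotheses directly. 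The main obstacle I anticipate is precisely this bookkeeping: making the inductive descent clean — choosing the right congruence to factor by, verifying via \textbf{Lemma~\ref{LEMIntersectionPCLinearIsGood}} and \textbf{Lemma~\ref{LEMFactorByDelta}} that conditions (1)–(3) are preserved (or that their failure gives what we want), and handling the boundary between "$\TC$ and $\TPC$ cannot mix with more than two players" and the two-player residual case without circularity. Everything else is a fairly mechanical assembly of the pairwise lemma over all $\binom n2$ pairs.
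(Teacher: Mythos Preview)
Your pairwise reduction is exactly the paper's approach: pick two indices, absorb the remaining $C_i$'s into one side via Lemma~\ref{LEMIntersectALL}, and invoke Lemma~\ref{LEMTwoStableIntersection}. The paper writes it asymmetrically, setting $B_2'=B_2\cap C_3\cap\dots\cap C_n$ and $C_2'=C_2\cap B_2'$ while leaving $(C_1,B_1)$ untouched; your symmetric version with $D=\bigcap_{i\neq k,\ell}C_i$ is equivalent. This already yields equal types, rules out $\TS$, and produces the linear bridges with $\widetilde\delta=\sigma_k\circ\sigma_\ell$.

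For $\TPC$ with $n\ge3$ you overcomplicate matters. No induction on $|A|$ and no factoring is needed. The bridge you just built has $\widetilde\delta=\sigma_1\circ\sigma_2$; composing it with its reverse gives a bridge from $\sigma_1$ to itself, and if $\sigma_1\neq\sigma_2$ that self-bridge is nontrivial, so by Lemma~\ref{LEMLinearEquivalentConditions} $\sigma_1$ would be linear. Hence $\sigma_1=\sigma_2$. Now $C_1$ and $C_2$ sit inside blocks of the \emph{same} congruence; if those blocks coincided, any point of the nonempty set $C_1\cap B_2'$ would already lie in $C_2$, contradicting $C_1\cap C_2'=\varnothing$. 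So $C_1\cap C_2=\varnothing$, and condition~(3) for any $j\ge3$ fails. That is the paper's entire argument for (pc); your proposed descent through $A/\sigma_n$ is unnecessary here.

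Your plan for $\TC$ with $n\ge3$ has a genuine hole: there is no congruence $\sigma_n$ to factor by. By the paper's convention the ``congruence'' attached to types $\TBA,\TC,\TS$ is the full relation $A^2$, so factoring collapses everything, and Lemmas~\ref{LEMIntersectionPCLinearIsGood} and~\ref{LEMFactorByDelta} are about dividing congruences, not central subuniverses. None of the inductive/quotient machinery you list applies to the central type. The paper does not run such a descent either; it stays with the two-variable lemma. So the ``crux'' you identified is exactly where your outline ceases to be a proof, and the tools you propose for it are the wrong ones.
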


\begin{proof}
For $n=2$ it follows from Lemmas \ref{LEMTwoStableIntersection}
and \ref{LEMNoBridgeBetweenDifferentTypes}.
%Assume that $n\ge 2$
%and $T_1\neq T_2$. 
Put $B_{2}' = B_{2}\cap C_3\cap \dots \cap C_{n}$
and $C_{2}' = C_{2}\cap B_{2}'$.
By Lemma \ref{LEMIntersectALL}
$C_{2}'\le_{T_{2}(\sigma_{2})}B_{2}'\lll A$.
Also we have 
$C_{1}\cap C_{2}' = \varnothing$, 
$C_{1}\cap B_{2}'\neq\varnothing$,
and
$B_{1}\cap C_{2}'\neq\varnothing$.
Then Lemmas \ref{LEMTwoStableIntersection} and 
\ref{LEMNoBridgeBetweenDifferentTypes}
imply that $T_{1} = T_{2}$, and 
if $T_{1} \in \{\TPC,\TL\}$ then 
    there is a bridge $\delta$
    from $\sigma_{1}$ to $\sigma_2$
    such that 
    $\widetilde \delta = \sigma_{1}\circ\sigma_{2}$.
If  $T_1=\TPC$ then
%Lemma \ref{LEMBridgeTOPCCongruence} implies that 
$\sigma_{1}\circ\sigma_{2}=\sigma_{1} = \sigma_{2}$,
as otherwise composing the bridge $\delta$ with itself we would get a nontrivial 
bridge from $\sigma_{i}$ to $\sigma_{i}$.
Additionally, this implies that 
$n$ cannot be greater than 2 for $T_1=\TPC$, 
as in this case
the intersection $C_{1}\cap C_{2}$ must be empty.
    
Thus, we proved the required conditions for $T_{1}(\sigma_1)$ 
and $T_{2}(\sigma_2)$. Similarly, we can prove this for any 
$T_{i}(\sigma_i)$ and $T_{j}(\sigma_j)$.
\end{proof}

\begin{CORMainStableIntersectionCOR}%DONE
Suppose 
\begin{enumerate}
    \item $R\le_{sd} A_{1} \times \dots\times 
    A_{n}$;
    \item $C_{i}<_{T_{i}(\sigma_{i})}^{A_{i}} B_{i}\lll A_{i}$, where 
    $T_{i}\in\{\TBA, \TC,\TS,\TL,\TPC\}$ for $i=1,2,\dots,n$, $n\ge 2$;
    \item $R\cap (C_{1}\times\dots\times C_{n}) = \varnothing$;
    \item 
    $R\cap (C_{1}\times\dots\times C_{j-1}\times 
    B_{j}\times C_{j+1}\times\dots\times C_{n})  \neq  \varnothing$
    for every $j\in[n]$.
\end{enumerate}
Then one of the following conditions hold:
\begin{enumerate}
    \item[(ba)] $T_1=\dots=T_n=\TBA$;
    \item[(l)] 
    %\begin{enumerate}
        %\item 
        $T_1=\dots= T_n=\TL$ and 
        for every $k,\ell\in [n]$
        %\item 
        there exists a bridge $\delta$ from 
    $\sigma_{k}$ and $\sigma_{\ell}$ such that $\widetilde \delta=
    \sigma_{k}\circ \proj_{k,\ell}(R)\circ \sigma_{\ell}$;
    %\end{enumerate}
    \item[(c)] $n=2$ and $T_1=T_2=\TC$;
    \item[(pc)] $n=2$, $T_1=T_2=\TPC$, $A_{1}/\sigma_{1} \cong A_{2}/\sigma_{2}$,
    and the relation $\{(a/\sigma_1,b/\sigma_{2})\mid (a,b)\in R\}$ is bijective. 
    %, and for 
    %$\omega =\sigma_{1}\circ R\circ \sigma_{2}$, the relation $\omega(x_1,x_3)\wedge \omega(x_2,x_3)$ is a trivial bridge from $\sigma_1$ to $\sigma_{2}$.
\end{enumerate}      
\end{CORMainStableIntersectionCOR}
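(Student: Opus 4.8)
The plan is to reduce Corollary~\ref{CORMainStableIntersection} to Theorem~\ref{THMMainStableIntersection} by the same homomorphism trick used repeatedly in Section~\ref{SectionProofStrongSubalebras} (e.g. in Corollary~\ref{CORReverseHomomorphism} and the proof of Corollary~\ref{CORIntersectionPCLinearIsGood}). Namely, view $\mathbf R$ itself as the ambient algebra and pull back each $C_i<_{T_i(\sigma_i)}^{A_i}B_i\lll A_i$ along the coordinate projection $f_i\colon\mathbf R\to\mathbf A_i$.

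First I would set $\widehat B_i = f_i^{-1}(B_i)$ and $\widehat C_i = f_i^{-1}(C_i)$ as subuniverses of $\mathbf R$. By Lemma~\ref{LEMPropagation}(b) we get $\widehat B_i\lll R$, and by Lemma~\ref{LEMPropagation}(bt) we get $\widehat C_i<_{T_i(\widehat\sigma_i)}^{R}\widehat B_i$, where $\widehat\sigma_i = f_i^{-1}(\sigma_i)$ is the pulled-back congruence on $\proj_i(R)=A_i$ lifted to $\mathbf R$ — more precisely $\widehat\sigma_i = \{(a,b)\in R^2\mid (a_i,b_i)\in\sigma_i\}$. Now $\bigcap_{i\in[n]}\widehat C_i$ consists exactly of the tuples of $R$ lying in $C_1\times\dots\times C_n$, so hypothesis (3) of the corollary says $\bigcap_{i}\widehat C_i=\varnothing$, and hypothesis (4) says $\widehat B_j\cap\bigcap_{i\neq j}\widehat C_i\neq\varnothing$ for every $j$. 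Thus the three hypotheses of Theorem~\ref{THMMainStableIntersection} hold for $\widehat C_i<_{T_i(\widehat\sigma_i)}^{R}\widehat B_i$.

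Applying Theorem~\ref{THMMainStableIntersection} gives one of the four cases. In case (ba) we get $T_1=\dots=T_n=\TBA$, which is (ba) of the corollary; in case (c) we get $n=2$, $T_1=T_2=\TC$, which is (c). In case (l) we get $T_1=\dots=T_n=\TL$ and for every $k,\ell$ a bridge $\widehat\delta$ from $\widehat\sigma_k$ to $\widehat\sigma_\ell$ with $\widetilde{\widehat\delta}=\widehat\sigma_k\circ\widehat\sigma_\ell$; I would then push this bridge forward along the projection $\mathbf R\to\mathbf A_k\times\mathbf A_\ell$, i.e. existentially project $\widehat\delta$ onto the $(k,\ell)$-coordinates of $R$, to obtain a bridge $\delta$ from $\sigma_k$ to $\sigma_\ell$; since $\proj_{k,\ell}$ of a subdirect $R$ composes the two congruences through $\proj_{k,\ell}(R)$, one computes $\widetilde\delta = \sigma_k\circ\proj_{k,\ell}(R)\circ\sigma_\ell$, which is exactly (l). In case (pc) we get $n=2$, $T_1=T_2=\TPC$, and $\widehat\sigma_1=\widehat\sigma_2$; translating back through $f_1,f_2$ this says that two tuples of $R$ agreeing mod $\sigma_1$ in the first coordinate also agree mod $\sigma_2$ in the second and vice versa, which is precisely the statement that $\{(a/\sigma_1,b/\sigma_2)\mid(a,b)\in R\}$ is the graph of an isomorphism $\mathbf A_1/\sigma_1\cong\mathbf A_2/\sigma_2$, hence bijective; that gives (pc).

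The main obstacle is bookkeeping rather than mathematics: making sure that the pulled-back objects $\widehat\sigma_i$, $\widehat B_i$, $\widehat C_i$ really satisfy the hypotheses with the \emph{same} types $T_i$ (this is exactly what Lemma~\ref{LEMPropagation}(b),(bt) and Lemma~\ref{LEMReverseHomomorphism} deliver, since $\mathbf R/\widehat\sigma_i\cong \mathbf A_i/\sigma_i$ etc.), and then correctly converting the bridge $\widehat\delta$ on $\mathbf R^2\times\mathbf R^2$ back to a bridge on $\mathbf A_k^2\times\mathbf A_\ell^2$ while tracking what happens to $\widetilde{\;\cdot\;}$ under the projection — the extra factor $\proj_{k,\ell}(R)$ appears precisely because the common witnessing coordinate lives in $R$, not in $A_k$ or $A_\ell$ separately. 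The remark following the corollary already flags the only real subtlety (several restrictions on one coordinate), and it is handled by duplicating coordinates of $R$, which does not affect subdirectness.
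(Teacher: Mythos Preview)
Your proposal is correct and essentially identical to the paper's own proof: pull back along the coordinate projections $f_i\colon\mathbf R\to\mathbf A_i$ using Lemma~\ref{LEMPropagation}(b),(bt), apply Theorem~\ref{THMMainStableIntersection} inside $\mathbf R$, and then translate the bridge (in case~(l)) and the equality $\widehat\sigma_1=\widehat\sigma_2$ (in case~(pc)) back to the $A_i$'s. The only cosmetic difference is that the paper phrases the translation of $\widetilde{\widehat\delta}=\widehat\sigma_k\circ\widehat\sigma_\ell$ as a direct computation (``$(a,b)\in\widetilde\delta$ iff there is a tuple of $R$ with $k$-th coordinate $\sigma_k$-equivalent to $a$ and $\ell$-th coordinate $\sigma_\ell$-equivalent to $b$'') rather than calling it a push-forward, but the content is the same.
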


\begin{proof}
Again, let $f_{i}:\mathbf{R}\to \mathbf A_{i}$ 
be the homomorphism sending each tuple to its $i$-th coordinate.
Denote 
$C_{i}' = f_{i}^{-1}(C_{i})$
and
$B_{i}' = f_{i}^{-1}(B_{i})$.
By Lemma \ref{LEMPropagation}(b) and (bt) 
we have 
$C_{i}'\le{T_{i}(\sigma_{i}')}^{R} B_{i}'\lll R$, 
where $\sigma_{i}' = f_{i}^{-1}(\sigma_i)$.
Since $B_{i}'$ and $C_{i}'$ satisfy conditions 
of Theorem \ref{THMMainStableIntersection} we 
obtain most of the properties and the only nontrivial one is 
the fact that 
$\widetilde \delta =\proj_{k,\ell}(R)$ for any $k$ and $\ell$.
Notice that 
$\widetilde\delta$ for the bridge coming from 
Theorem \ref{THMMainStableIntersection} is equal to 
$\sigma_{k}'\circ\sigma_{\ell}'$.
Translating congruences $\sigma_{k}'$ 
and $\sigma_{\ell}'$ to $\sigma_{k}$ and $\sigma_{\ell}$,
we derive that 
$(a,b) \in \widetilde\delta$ if and only if 
there exists a tuple 
$(a_1,\dots,a_{n})\in R$ such that 
$(a_k,a)\in\sigma_{k}$ and 
$(a_{\ell},b)\in\sigma_{\ell}$.
This implies $\widetilde \delta= \sigma_{k}\circ \proj_{k,\ell}(R)\circ \sigma_{\ell}$. 
The additional condition for $T_{1} =\TPC$ follows 
from the fact that $\sigma_{1}' = \sigma_{2}'$.
\end{proof}

\begin{LEMMultiTypeStillStableLEM}
Suppose $C\le_{\mathcal{M}T}^{A}B$. %, where $\mathcal T\in\{\TPC,\TL,\TD\}$,
Then $C<_{T}^{A}\dots<_{T}^{A}B$ and $C\lll^{A} B$.
\end{LEMMultiTypeStillStableLEM}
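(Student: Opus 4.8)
\textbf{Proof plan for Lemma~\ref{LEMMultiTypeStillStable}.}
The statement asserts that if $C\le_{\mathcal{M}T}^{A}B$, i.e.\ $C = C_1\cap\dots\cap C_t$ with each $C_i<_{T}^{A}B$ (or $C=B$), then $C$ can be realized as a chain $C<_{T}^{A}\dots<_{T}^{A}B$ of single-step reductions of the \emph{same} type $T$, and in particular $C\lll^{A}B$. The plan is to induct on $t$. The base case $t\le 1$ is immediate: either $C=B$ (empty chain) or $C = C_1<_{T}^{A}B$ is already a one-step chain. For the inductive step, suppose $C = C_1\cap\dots\cap C_t$ with $t\ge 2$. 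The idea is to peel off $C_1$: consider $C' = C_2\cap\dots\cap C_t$, so $C\subseteq C'$ and $C = C_1\cap C'$.

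First I would handle the degenerate possibility $C=\varnothing$, which is excluded since $\mathcal{M}T$ subuniverses are required to be nonempty; so all partial intersections $C'$ are automatically nonempty (they contain $C$). Now I want to show $C<_{T}^{A}C'$ or $C=C'$, and then invoke the inductive hypothesis on $C' = C_2\cap\dots\cap C_t\le_{\mathcal{M}T}^{A}B$ (which has $t-1$ factors) to get a chain $C'<_{T}^{A}\dots<_{T}^{A}B$; concatenating gives the desired chain for $C$. The key tool is Lemma~\ref{LEMIntersectALL}(t): since $C_1<_{T(\sigma_1)}^{A}B$ and $C'\lll A$ (because $C'<_{T}^{A}\dots<_{T}^{A}B\lll A$ by the inductive hypothesis, or directly $C'\lll A$ since $B\lll A$ and the chain extends it), we get $C_1\cap C'\dot\le_{T(\sigma_1)}^{A}B\cap C' = C'$. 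As $C_1\cap C' = C\neq\varnothing$, the dot can be dropped, yielding $C\le_{T(\sigma_1)}^{A}C'$, i.e.\ either $C=C'$ or $C<_{T}^{A}C'$. In both cases the chain for $C'$ extends to a chain for $C$ of the same type $T$.

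The statement $C\lll^{A}B$ then follows immediately, since a chain of $<_{T}^{A}$-steps with $T\in\{\TL,\TPC,\TD\}$ (the allowed types for $\mathcal{M}T$) is in particular a sequence of $<_{T_i}^{A}$-steps with $T_i\in\{\TBA,\TC,\TS,\TD\}$ — note $\TL$ and $\TPC$ steps are special cases of $\TD$ steps by definition, so the chain witnesses $C\lll^{A}B$.

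The main obstacle is making sure Lemma~\ref{LEMIntersectALL}(t) applies cleanly: it requires $C_1<_{T(\sigma)}^{A}B$ and $D\lll A$ in order to conclude $C_1\cap D\dot\le_{T(\sigma)}^{A}B\cap D$, so I need $C'\lll A$ before I can use it, and that in turn I get from the inductive hypothesis applied to $C'$. One must be slightly careful about the ordering of the induction: I apply the inductive hypothesis to the $(t-1)$-fold intersection $C'$ \emph{first} to learn $C'\lll A$, and only then use Lemma~\ref{LEMIntersectALL}(t) to attach the step $C\le_T^A C'$. A subtle point worth checking is that when Lemma~\ref{LEMIntersectALL}(t) returns the degenerate option ``$C\cap D=\varnothing$'' it is not actually possible here since $C\neq\varnothing$, and when it returns ``$C=B\cap D$'' this is exactly the case $C=C'$, which contributes no step and is harmless. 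No deeper difficulty arises; the content is entirely in bookkeeping with the already-proved intersection lemma.
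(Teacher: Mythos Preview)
Your proposal is correct and essentially identical to the paper's proof: both peel off one $C_i$ at a time using Lemma~\ref{LEMIntersectALL}(t), building the chain of type-$T$ reductions. The paper simply writes it non-recursively by setting $D_j = C_1\cap\dots\cap C_j$ and noting $D_{j+1}\le_{T}^{A}D_j$ for each $j$, while you phrase the same argument as an explicit induction on $t$; the underlying mechanism and the one key lemma invoked are the same.
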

\begin{proof}
Suppose 
$C = C_{1}\cap \dots \cap C_{n}$ 
where $C_{i}<_{T}^{A} B$.
Put 
$D_{j} = \cap_{i=1}^{j} C_{i}$.
By Lemma \ref{LEMIntersectALL}(t) we have 
$D_{j+1}\le_{T}^{A} D_{j}$.
Since $D_{n} = C$ and $D_{1} = C_{1}$ we obtain the required property. 
\end{proof}

\begin{LEMLInearOnTheTopIsEasyLEM}
Suppose $\sigma$ is a linear congruence on $\mathbf A\in\mathcal V_{n}$ 
such that 
$\sigma^{*} = A^{2}$.
Then $\mathbf A/\sigma\cong \mathbf Z_{p}$ for some prime $p$.
\end{LEMLInearOnTheTopIsEasyLEM}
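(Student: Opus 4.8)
The plan is to unfold the definition of a linear congruence and combine it with the structural facts already established about bridges on irreducible congruences. Recall that $\sigma$ being linear gives us a prime $p$ and a relation $S \le (\sigma^*)^{[4]}$ such that for every block $B$ of $\sigma^*$ we have $(B/\sigma; S\cap(B/\sigma)^4)\cong(\mathbb Z_p^{n_B}; x_1-x_2=x_3-x_4)$ for some $n_B\ge 0$; moreover $S$ is a bridge from $\sigma$ to $\sigma$ with $\widetilde S = \proj_{1,2}(S)=\proj_{3,4}(S)=\sigma^*$. By hypothesis $\sigma^* = A^2$, so there is a single block, namely all of $A$, and hence $A/\sigma \cong \mathbb Z_p^{n}$ for a single integer $n := n_A \ge 0$, with $S/\sigma$ the relation $x_1-x_2=x_3-x_4$ on this group.

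First I would rule out $n=0$: if $A/\sigma$ were trivial then $\sigma = A^2$, contradicting that $\sigma$ is irreducible (an irreducible congruence is in particular nontrivial, since $A^2$ is excluded from being irreducible — equivalently $0_{\mathbf A/\sigma}$ must be irreducible on a nontrivial factor). So $n\ge 1$. Next I would show $n=1$. This is the crux of the argument and where the special WNU operation enters: we are working in $\mathcal V_n$, so $\mathbf A/\sigma = (A/\sigma; w^{\mathbf A/\sigma})$ carries a special idempotent WNU term operation $w$, and $A/\sigma$ is (polynomially equivalent to) the group $\mathbb Z_p^n$. I would analyze how an idempotent WNU operation can act on $\mathbb Z_p^n$: since $w$ is a term operation of the module structure (by affineness, via Lemmas~\ref{LEMAbelianEqualAffineForWNU} and the affine structure coming from $S$), $w$ must be of the form $w(x_1,\dots,x_n) = \sum_i r_i x_i$ with $\sum_i r_i = 1$ (idempotency) and all $r_i$ equal (WNU symmetry on $(y,x,\dots,x)$ etc.), forcing $r_i = 1/n \pmod p$ to exist, i.e. $\gcd(n,p)=1$; the ``special'' condition $w(x,\dots,x,y)=w(x,\dots,x,w(x,\dots,x,y))$ pins down the constant further. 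The key point I want to extract is that if $n\ge 2$ then $\mathbb Z_p^n$ has a proper nonzero submodule, which is invariant under every module term operation and in particular under $w$, hence is a proper subalgebra $\mathbf B < \mathbf A/\sigma$ stable under $\sigma$ — and more importantly it yields a proper congruence strictly between $0_{\mathbf A/\sigma}$ and $(0_{\mathbf A/\sigma})^* = (A/\sigma)^2$, which contradicts the \emph{irreducibility} of $\sigma$: indeed the nonzero proper submodules of $\mathbb Z_p^n$ can be intersected down to $0$ (two generic hyperplanes already intersect in codimension $2$, and finitely many hyperplanes through the origin cut out $0$), and each such submodule, viewed as a congruence of the affine algebra, is stable under $\sigma$ and properly contains $0_{\mathbf A/\sigma}$ — contradiction with irreducibility. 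Hence $n=1$ and $\mathbf A/\sigma \cong (\mathbb Z_p; x_1+\dots+x_n \bmod p) = \mathbf Z_p$, as required (here I use that the WNU on $\mathbb Z_p$ is forced to be $\sum x_i$ by idempotency and the above, and the definition of $\mathbf Z_p \in \mathcal V_n$).

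The main obstacle I anticipate is the rigorous passage from ``$S$ witnesses linearity'' to ``$A/\sigma$ is an affine module whose term operation $w$ is exactly coordinatewise addition''. One must justify that the bridge $S$ actually certifies that $\mathbf A/\sigma$ is Abelian/affine — this should follow from Lemma~\ref{LemBridgeEquivalentToAbelianness} applied to $S/\sigma$ (after composing with itself to symmetrize, as in the proof of Lemma~\ref{LEMNiceBridgeGivesAbelianGroup}), together with Lemma~\ref{LEMAbelianEqualAffineForWNU} — and then that an idempotent special WNU term operation on the module $\mathbb Z_p^n$ is forced into the claimed shape, using that module term operations are affine combinations with coefficients summing to the idempotency constant. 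I would lean on Lemma~\ref{LEMNiceBridgeGivesAbelianGroup} and Lemma~\ref{LEMNontrivialReflexiveBridgeImplies}(3) to get the module structure essentially for free, reducing the remaining work to the finite-dimensional linear-algebra observation that irreducibility of $\sigma$ forbids $\dim \ge 2$. The rest is bookkeeping about the normalization of the WNU and matching it with the definition of $\mathbf Z_p$.
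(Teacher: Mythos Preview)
Your approach is correct and takes a genuinely different route from the paper's proof.

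The paper's argument is very short: from linearity one has a bridge $\delta$ from $\sigma$ to $\sigma$ with $\widetilde\delta=\sigma^{*}=A^{2}$, hence $\widetilde\delta$ is linked, and Lemma~\ref{LEMMakePerfectCongruenceFromLinked} (cited there as ``LEMBuildingPerfectCongruence'') makes $\sigma$ a \emph{perfect} linear congruence. This yields $\zeta\le \mathbf A\times\mathbf A\times\mathbf Z_{p}$ with $\proj_{1,2}(\zeta)=A^{2}$; fixing the second coordinate to some $a\in A$ gives a bijective $\xi\le \mathbf A\times\mathbf Z_{p}$ which is the desired isomorphism $\mathbf A/\sigma\cong\mathbf Z_{p}$. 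The dimension-$1$ conclusion is thus hidden inside the perfect-linear machinery (essentially Lemma~\ref{ZetaPropertiesLEM}).

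You instead read off $A/\sigma\cong\mathbb Z_{p}^{n}$ directly from the definition of a linear congruence (single block since $\sigma^{*}=A^{2}$), then kill $n\ge 2$ by exhibiting subgroup congruences whose intersection is $0_{\mathbf A/\sigma}$, contradicting irreducibility, and finally pin down $w^{\mathbf A/\sigma}=\sum x_{i}$ via affineness plus the special idempotent WNU identities. This is more self-contained --- it avoids the perfect-linear detour and the auxiliary relation $\zeta$ --- at the cost of a few extra verifications. Two of those deserve a line of care in a final write-up: (i) ``$w$ is a term of the module'' should read ``$w$ is a \emph{polynomial} of the module'' (affineness is polynomial equivalence); idempotency then forces the constant term to vanish, WNU forces all coefficients equal to some $r$, and the special identity gives $r^{2}=r$, which together with $nr=1$ (so $r$ is a unit) yields $r=1$; and (ii) once $w=\sum x_{i}$ on the group $\mathbb Z_{p}^{n}$, every additive subgroup is visibly a congruence of $(\mathbb Z_{p}^{n};w)$, which is what legitimizes the irreducibility contradiction for $n\ge 2$. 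With those two clarifications your argument is complete.
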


\begin{proof}
Since $\sigma$ is linear, 
applying Lemma \ref{LEMBuildingPerfectCongruence}
a nontrivial bridge from $\sigma$ to $\sigma$ 
we derive that $\sigma$ is a perfect linear congruence.
Hence we have $\zeta\le \mathbf A\times \mathbf A\times \mathbf Z_{p}$ 
with $\proj_{1,2}(\zeta) = A^{2}$.
Choose some element $a\in A$ and put 
$\xi(x,z) = \zeta(x,a,z)$.
Then $\xi$ is a bijective relation
giving an isomorphism 
$\mathbf A/\sigma\cong \mathbf Z_{p}$.
\end{proof}

\begin{LEMPCOnTheTopIsEasyLEM}%DONE
Suppose $\sigma$ is a PC congruence on $\mathbf A$ and  
$\sigma^{*} = A^{2}$. 
Then $\mathbf A/\sigma$ is a PC algebra.
\end{LEMPCOnTheTopIsEasyLEM}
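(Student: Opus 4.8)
The statement is: if $\sigma$ is a PC congruence on $\mathbf A$ and $\sigma^{*}=A^{2}$, then $\mathbf A/\sigma$ is a PC algebra. Since, by the remark after the definition of PC/linear congruences, $\sigma$ is a PC congruence on $\mathbf A$ if and only if $0_{\mathbf A/\sigma}$ is a PC congruence on $\mathbf A/\sigma$, and since $\sigma^{*}=A^{2}$ translates to $(0_{\mathbf A/\sigma})^{*}=(A/\sigma)^{2}$, I may replace $\mathbf A$ by $\mathbf A/\sigma$ and assume from the start that $\sigma=0_{\mathbf A}$ is a PC congruence with $0_{\mathbf A}^{*}=A^{2}$; the goal is then to show $\mathbf A$ itself is a PC algebra, i.e. the clone generated by the basic operation $w$ together with all constants is the clone of \emph{all} operations on $A$.

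\textbf{Key steps.} First I would recall what being a PC congruence buys us here: $0_{\mathbf A}$ is irreducible (so $A$ has no nontrivial BA or central subuniverse, and more) and it is \emph{not} linear, so by Lemma~\ref{LEMLinearEquivalentConditions} there is no bridge $\delta$ from $0_{\mathbf A}$ to $0_{\mathbf A}$ with $\widetilde\delta\supsetneq 0_{\mathbf A}$; equivalently, every reflexive bridge $\delta$ from $0_{\mathbf A}$ to $0_{\mathbf A}$ with $\proj_{1,2}(\delta)=\proj_{3,4}(\delta)=A^{2}$ is trivial, i.e. of the form $\delta(x_1,x_2,x_3,x_4)=(x_1=x_3)\wedge(x_2=x_4)$ or $(x_1=x_4)\wedge(x_2=x_3)$ (Lemma~\ref{LEMPCBridgesAreTrivial} with $\sigma=0_{\mathbf A}$). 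The second ingredient I would use is the characterization of polynomial completeness for algebras with a Taylor/WNU term: a finite algebra is PC iff it is simple, has no nontrivial internal isomorphisms (no proper ``twin'' structure), and is not affine — or, in the most convenient packaging, iff it is simple and the only subalgebra of $\mathbf A^{2}$ containing the diagonal and projecting onto $A$ in both coordinates with $\proj_{1,2}=\proj_{3,4}=A^{2}$-type behaviour are the trivial ones (this is exactly the centralizer/bridge-triviality condition); cf. \cite{istinger1979characterization,lausch2000algebra}. Concretely: (1)~$\mathbf A$ is simple, because any nontrivial congruence $\theta$ would, together with irreducibility of $0_{\mathbf A}$ and $0_{\mathbf A}^{*}=A^{2}$, contradict that $0_{\mathbf A}$ cannot be written as a proper intersection of stable subalgebras — more precisely $0_{\mathbf A}=\theta\cap\theta'$ for complementary-in-the-right-sense congruences would violate irreducibility, and a non-maximal nontrivial $\theta$ can be refined; I would phrase this as: the congruence lattice of $\mathbf A$ has $0_{\mathbf A}$ meet-irreducible in the strong (bridge) sense, forcing $\mathbf A$ simple. (2)~$\mathbf A$ is not Abelian/affine: if it were, by Lemma~\ref{LEMAbelianEqualAffineForWNU} it would be affine, hence by Lemma~\ref{LemBridgeEquivalentToAbelianness} there is a bridge from $0_{\mathbf A}$ to $0_{\mathbf A}$ with $\widetilde\delta=A^{2}$, contradicting non-linearity (Lemma~\ref{LEMLinearEquivalentConditions}). (3)~$\mathbf A$ has no proper ``matrix power / twin'' decomposition: the relevant invariant is a subalgebra $D\le\mathbf A^{2}$ with $\proj_1(D)=\proj_2(D)=A$ that is not a graph of an automorphism and not $A^{2}$; such a $D$ together with the diagonal generates, after composing with itself, a reflexive bridge that by Lemma~\ref{LEMPCBridgesAreTrivial} must be trivial, so $D$ is forced to be the diagonal or all of $A^{2}$ — ruling out the twin/skew behaviour. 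Then by the classical structure theorem for congruence-distributive-or-Maltsev finite algebras (Quackenbush, Idziak, or the formulation in \cite{istinger1979characterization}) a finite algebra with a Taylor term that is simple, non-affine, and has no proper skew subalgebra of $\mathbf A^{2}$ is polynomially complete; I would cite this and conclude.

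\textbf{Main obstacle.} The delicate point is step (3) and marshalling exactly the right form of the polynomial-completeness criterion: the cleanest route is to invoke that for a finite algebra with a cube term (equivalently, here, with a WNU generating an edge term on a $2$-element homomorphic image being ruled out — which is what ``PC congruence, not linear'' gives us) polynomial completeness is equivalent to: $\mathbf A$ simple, $\mathbf A$ not affine, and $\mathbf A$ has no nontrivial binary internal relation of ``centralizer'' type; and then to check that all three reduce, via Lemmas~\ref{LEMPCBridgesAreTrivial}, \ref{LEMLinearEquivalentConditions}, \ref{LemBridgeEquivalentToAbelianness}, to the hypotheses already in hand. The risk is that the ambient algebra $\mathbf A$ here is only assumed idempotent with a WNU, not necessarily Maltsev or congruence-distributive, so I would need the version of Istinger–Nöbauer / Kaarli–Pixley completeness theory that applies to Taylor algebras with $0_{\mathbf A}$ a PC congruence (this is precisely how the notion was engineered), and I expect the write-up to spend most of its length pinning down that equivalence and citing \cite{istinger1979characterization,lausch2000algebra} for it, with the bridge lemmas doing the actual work of verifying the hypotheses.
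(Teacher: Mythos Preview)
Your route is genuinely different from the paper's, and the paper's is far shorter. Instead of assembling a classical polynomial-completeness criterion (simple, non-abelian, no skew structure) and then hunting for a version that applies in the bare idempotent-WNU setting, the paper uses the relational characterization directly: a finite algebra is PC iff every reflexive invariant relation (of any arity) is a conjunction of equalities $x_i=x_j$. Taking a minimal-arity counterexample $R\le(\mathbf A/\sigma)^m$, minimality forces every proper projection of $R$ to be full; picking $(a_1,\dots,a_m)\notin R$ and noting that each $\{a_i\}<_{\TPC}^{\mathbf A/\sigma}\mathbf A/\sigma$, Corollary~\ref{CORMainStableIntersection} applies verbatim and its PC clause forces $m=2$. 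Then $R$ is a reflexive binary subalgebra strictly between the diagonal and $(A/\sigma)^2$, which is impossible since $(0_{\mathbf A/\sigma})^*=(A/\sigma)^2$. That is the entire proof.

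The obstacle you flag is real, and the paper simply sidesteps it: the intersection theorem (Corollary~\ref{CORMainStableIntersection}) already does all the work that an external Istinger--N\"obauer/Kaarli--Pixley style theorem would do, and it is available in-house. Your step~(3) is also slightly off as written: graphs of non-identity automorphisms are subdirect in $\mathbf A^2$, do not contain the diagonal, and are \emph{not} excluded by $0_{\mathbf A}^*=A^2$, yet PC algebras can have non-trivial automorphisms; so whatever ``no skew/twin'' condition you need is not the one you stated. By contrast, your steps~(1) and~(2) are fine (simplicity is immediate from $0_{\mathbf A}^*=A^2$, and non-abelianness from Lemma~\ref{LEMLinearEquivalentConditions}), but once you try to push to higher arities you are essentially forced into the paper's argument anyway.
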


\begin{proof}
To show that 
$\mathbf A/\sigma$ is a PC algebra
it is sufficient to show that 
any reflexive $R\le (\mathbf A/\sigma)^{m}$ can be represented as a conjunction of equality relations. 
Assume the converse and 
consider a relation $R\le (\mathbf A/\sigma)^{m}$ of the minimal arity that is not like this. Then projection
of $R$ onto any subset of coordinates gives a full relation.
Choose some tuple 
$(a_1,\dots,a_m)\notin R$.
Notice that $\{a_{i}\}<_{\TPC}^{A/\sigma} A/\sigma$ for every $i$.
Applying Corollary \ref{CORMainStableIntersection} 
we derive that $m$ must be equal to 2 and using the fact that $R$ is reflexive, we derive that $R$ is the equality relation, which contradicts our assumptions. 
\end{proof}

% \begin{lem}\label{LEMConOneIsPcOrLinear}
% Suppose $B\lll A$, $B$ is S-free, 
% $T\in\{\TPC,\TL\}$,
% $\Sigma$ is the set of all congruences on $A$ of type $T$ 
% such that $\sigma^{*}\supseteq B^{2}$ and
% $\sigma\not\supseteq B^{2}$,
% $\delta$ is an irreducible congruence on $\mathbf A$ 
% such that 
% $\bigcap\limits_{\sigma\in\Sigma}\sigma \subseteq \delta$ and 
% $B^{2}\not\subseteq \delta$.
% Then $\delta\in \Sigma$.
% \end{lem}

%\begin{lem}\label{LEMMinimalContainingIsMinimal}
%Suppose $T\in\{\TML,\TMPC\}$, $b\in B$, $C\subseteq B$ is an inclusion minimal subset  
%such that 
%$b\in C$ and $C\le_{T}^{A}B$. 
%Then there is no $C'\subsetneq C$ such that
%$C'\le_{T}^{A}B$.
%\end{lem}

% \begin{lem}\label{LEMMultiplyBySmallCongruence}
% Suppose $C<_{\TD(\delta)}^{A}B\le A$,
% $\sigma\subseteq \delta$ is a congruence on $A$.
% Then $C\circ \sigma<_{\TD(\delta)}^{A}B\circ \sigma$.
% \end{lem}

% \begin{proof}
% Since $\sigma\subseteq \delta$, we have 
% $(B\circ \sigma)/\delta=B/\delta$
% and $C\circ \sigma = 
% (B\circ \sigma) \cap (C\circ \delta)$, which are the only things we
% need to check.
% \end{proof}

\begin{lem}\label{LEMPreserveLinkdnessOneStepAUX}
Suppose $R\le_{sd} \mathbf A_1\times\mathbf A_2$,
$C_{i}\le_{\TD(\sigma_{i})}^{A_{i}} B_{i}\lll A_1$ for $i=1,2$,
$S$ is the rectangular closure of $R$,
$R\cap (B_1\times C_2)\neq \varnothing$,
$R\cap (C_1\times B_2)\neq \varnothing$,
and 
$S\cap (C_1\times C_2)\neq \varnothing$. Then 
$R\cap (C_{1}\times C_{2})\neq \varnothing$.
\end{lem}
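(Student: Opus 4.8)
The statement concerns two dividing subuniverses $C_i \le_{\TD(\sigma_i)}^{A_i} B_i \lll A_i$, the relation $R \le_{sd} \mathbf A_1 \times \mathbf A_2$, and its rectangular closure $S$. We are given that intersecting $R$ with $B_1 \times C_2$ and with $C_1 \times B_2$ is nonempty, that $S \cap (C_1 \times C_2) \neq \varnothing$, and we want $R \cap (C_1 \times C_2) \neq \varnothing$. The plan is to argue by contradiction: assume $R \cap (C_1 \times C_2) = \varnothing$. Then $R$, together with the dividing subuniverses $C_1, C_2$, satisfies all the hypotheses of Corollary~\ref{CORMainStableIntersection} with $n = 2$ (conditions 3 and 4 hold precisely because of the two nonemptiness assumptions plus the contradiction hypothesis). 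Since $T_1 = T_2 = \TD$ does not literally appear among the conclusion cases, I first need to descend to the genuine type of the congruences: by Lemma~\ref{LEMUbiquity} applied inside $B_i/\sigma_i$ (or directly, since $B_i/\sigma_i$ is BA and center free and $\sigma_i$ is irreducible, by Lemma~\ref{LEMLinearEquivalentConditions}), each $\sigma_i$ is either a linear or a PC congruence, so $C_i <_{\TL}$ or $C_i <_{\TPC}$. Corollary~\ref{CORMainStableIntersection} then forces $T_1 = T_2$ and gives us either the (l) case (a bridge $\delta$ from $\sigma_1$ to $\sigma_2$ with $\widetilde\delta = \sigma_1 \circ \proj_{1,2}(R) \circ \sigma_2$) or the (pc) case ($A_1/\sigma_1 \cong A_2/\sigma_2$ and $\{(a/\sigma_1, b/\sigma_2) \mid (a,b) \in R\}$ bijective).

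The key step is then to transfer the information "$S \cap (C_1 \times C_2) \neq \varnothing$" — a fact about the rectangular closure — back into the failure of emptiness for $R$ itself. The point is that $S$, being the rectangular closure, can be written as $S = R \circ R^{-1} \circ R$ after suitable iteration, or more precisely $S = \proj_{1,2}$ of an expression quantifying intermediate elements tied through $\proj_{1,2}(R)$; so a tuple $(c_1, c_2) \in S \cap (C_1 \times C_2)$ gives a chain $c_1 \mathrel{R} d_2 \mathrel{R^{-1}} e_1 \mathrel{R} c_2$ with $d_2 \in A_2$, $e_1 \in A_1$. In the (pc) case the relation $\{(a/\sigma_1, b/\sigma_2)\}$ is bijective, so modulo $\sigma_1, \sigma_2$ the relation $R$ is a bijective map and its rectangular closure modulo these congruences coincides with $R$ itself modulo the congruences; combined with $S \cap (C_1 \times C_2) \neq \varnothing$ and the structure of PC-type dividing subuniverses (a single block of $\sigma_i^{*}$, and $C_i$ a block of $\sigma_i$ inside $B_i$), this pins $c_1/\sigma_1$ and $c_2/\sigma_2$ to the images of the classes defining $C_1, C_2$, forcing $R \cap (C_1 \times C_2) \neq \varnothing$. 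In the (l) case I would use that $R$ modulo $\sigma_1, \sigma_2$ behaves linearly — the bridge $\delta$ witnesses a $\mathbf Z_p$-module structure on the blocks (Lemma~\ref{LEMNontrivialReflexiveBridgeImplies}) — and in an affine/linear setting the rectangular closure of a subdirect relation equals $\proj_{1,2}$ applied to $R \circ R^{-1} \circ R$, which modulo the congruences is just $R$ again (an affine subspace is rectangular); so again $S$ and $R$ agree modulo $\sigma_1 \cap \sigma_2$-type data on the relevant blocks, and a point of $S \cap (C_1 \times C_2)$ lifts to a point of $R \cap (C_1 \times C_2)$, contradiction.

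A cleaner route for the (l) case, which I would try first: use Lemma~\ref{LEMPreserveLinkdness}-style reasoning or directly invoke Corollary~\ref{CORIntersectionPCLinearIsGood} applied to $R$ and to $S$. Namely, $\proj_1(S \cap (C_1 \times C_2))/\sigma_1$ and $\proj_1(R \cap (C_1 \times C_2))/\sigma_1$ are each (by Corollary~\ref{CORIntersectionPCLinearIsGood}, since $C_i$ come from dividing congruences and $B_i \lll A_i$) either empty, a singleton, or all of $C_1/\sigma_1 = \{C_1\}$; but $C_1$ is a single $\sigma_1$-class inside $B_1$ so these are singletons or empty. The rectangular closure operation commutes appropriately with the quotient by $\sigma_1, \sigma_2$ on the blocks $B_i/\sigma_i^{*}$ — because modulo those congruences the relation is linear/affine hence already rectangular — so the quotient of $S$ equals the quotient of $R$ there, and emptiness of one intersection transfers to the other. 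The main obstacle, and where I expect to spend most of the effort, is making the statement "the rectangular closure $S$, viewed modulo the congruences $\sigma_1$ and $\sigma_2$ restricted to the relevant blocks, coincides with $R$ viewed the same way" fully precise and correct, handling the asymmetry that $C_i$ is only assumed $\le_{\TD}$ (so equality $C_i = B_i$ is allowed) and that the rectangular closure mixes coordinates — one must check that no "new" tuple of $S$ lands in $C_1 \times C_2$ without a corresponding tuple of $R$ doing so, which is exactly the content of $S$ being rectangular combined with the linear or bijective structure on the quotients.
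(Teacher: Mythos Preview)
Your approach via Corollary~\ref{CORMainStableIntersection} is genuinely different from the paper's, and the PC branch does work: bijectivity of $\{(a/\sigma_1,b/\sigma_2):(a,b)\in R\}$ forces $R\circ R^{-1}\subseteq\sigma_1$, hence the quotient $\bar S$ of the rectangular closure equals $\bar R$; then the given tuple in $R\cap(C_1\times B_2)$ is pinned into $C_1\times C_2$ by bijectivity. But the linear branch has a real gap that is not merely a matter of detail. The affine $\mathbb Z_p^n$-structure from the definition of a linear congruence lives only on individual blocks of $\sigma_i^*/\sigma_i$; the full quotient $A_i/\sigma_i$ need not carry a Maltsev term, so $\bar R\le A_1/\sigma_1\times A_2/\sigma_2$ need not be rectangular. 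Worse, $S$ is computed over all of $A_1\times A_2$: a witness in $S\cap(C_1\times C_2)$ may come from an $R\circ R^{-1}$-path that leaves the block of $\sigma_1^*\times\sigma_2^*$ containing $B_1\times B_2$ and re-enters it, and nothing in your outline controls this. And even if you had $\bar S=\bar R$ on that block, you would only obtain $R\cap(E_1\times E_2)\ne\varnothing$ with $E_i$ the $\sigma_i$-class, not $R\cap((B_1\cap E_1)\times(B_2\cap E_2))$; the bijectivity trick that closed the PC case is unavailable. (Your ``cleaner route'' via Lemma~\ref{LEMPreserveLinkdness} is circular, and Corollary~\ref{CORIntersectionPCLinearIsGood} by itself gives no handle on the rectangular closure.)

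The paper sidesteps the PC/linear split entirely. It introduces $\delta_i$, the intersection of all dividing congruences appearing along the chain $C_i\lll A_i$, and passes to $C_i'=C_i\circ\delta_i$, $B_i'=B_i\circ\delta_i$, and $R'=\delta_1\circ R\circ\delta_2$. A first argument (using Theorem~\ref{THMMainStableIntersection} together with the fact that every dividing congruence along $B_i\lll A_i$ contains $\delta_i$) reduces the task to showing $R\cap(C_1'\times C_2')\ne\varnothing$. For that, the rectangular-closure hypothesis is read as information about $\LeftLinked(R')$, and two lemmas you never invoke do the work: Lemma~\ref{LEMLeftLinkedStayFull} handles the case where $\LeftLinked(R')$ is full on $(B_1')^2/\delta_1$, and Lemma~\ref{LEMCongruenceEitherCutOrDoNothing} shows that otherwise $\LeftLinked(R')\cap(B_1')^2\subseteq\sigma_1$, contradicting the $\LeftLinked$-connection between two distinct $\sigma_1$-blocks that $S\cap(C_1\times C_2)\ne\varnothing$ forces. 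These two lemmas are exactly what substitutes for the ``$\bar R$ is rectangular'' step you were hoping for.
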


\begin{proof}
Let $\delta_{i}$ be the intersection of all 
irreducible congruences 
coming from $C_{i}\lll A_{i}$ for $i=1,2$.
Let $C_{i}' = C_{i}\circ \delta_{i}$,
$B_{i}' = B_{i}\circ \delta_{i}$,
By Corollary \ref{CORPropagateMultiplyByCongruence}(e)
$C_{i}'<_{\TD(\sigma_{i})}^{A_{i}} B_{i}'\lll A_{i}$ 
for $i=1,2$.

Assume that $R\cap (C_{1}'\times 
C_{2}')\neq \varnothing$.
Notice that $C_{i} = C_{i}'\cap B_{i}$.
If $R\cap (C_{1}\times C_{2}) \neq \varnothing$, then we are done.
Otherwise, 
without loss of generality (we can switch 1 and 2 if it is not true)
there are 
$B_{1}\lll^{A_{1}} F_{1}<_{T(\xi)}^{A_{1}} E_{1}\lll A_{1}$
and
$B_{2}\lll^{A_{2}}  E_{2}\lll A_{2}$
such that 
$R\cap ((C_{1}'\cap F_{1})\times (C_{2}'\cap E_{2})) = \varnothing$
and 
$R\cap ((C_{1}'\cap E_{1})\times (C_{2}'\cap E_{2})) \neq \varnothing$.
Since 
$B_{1}'\cap F_{1}\supseteq B_{1}$, 
we have
$R\cap ((B_{1}'\cap F_{1})\times (C_{2}'\cap E_{2})) \neq \varnothing$.
and by Theorem \ref{THMMainStableIntersection}
$T=\TD$. 
Since $\xi\supseteq \delta_{1}$, 
$C_{1}'\cap F_{1}=C_{1}'\cap E_{1}$, which contradicts our assumptions.

Thus, we assume that 
$R\cap (C_{1}'\times 
C_{2}')= \varnothing$.
Let 
$R' = \delta_1\circ R\circ \delta_2$.
Notice that
$R'\cap (C_{1}'\times C_{2}') = \varnothing$.

Consider two cases:

Case 1.
$(\LeftLinked(R')\cap (B_{1}')^{2})/\delta_1=
(B_{1}')^{2}/\delta_1$.
By Lemma \ref{LEMLeftLinkedStayFull}
$\LeftLinked(R'\cap (B_{1}'\times C_{2}'))/\delta_1
=(B_{1}')^{2}/\delta_1$, which implies that 
$R'\cap (C_{1}'\times C_{2}')\neq\varnothing$.

Case 2. $(\LeftLinked(R')\cap (B_{1}')^{2})/\delta_1\neq
(B_{1}')^{2}/\delta_1$.
As $R'\cap (B_{1}'\times C_{2}')\neq \varnothing$, 
there should be 
$C_{1}''<_{\TD(\sigma_1)}^{A_{1}}B_{1}'$
such that 
$R'\cap (C_{1}''\times C_{2}')\neq\varnothing$.
Notice that $C_{1}''\neq C_{1}'$.
Since $S\cap(C_{1}\times C_{2})\neq\varnothing$, 
$C_{1}'\times C_{1}''\subseteq \LeftLinked(R')$, which contradicts
condition 1 of
Lemma \ref{LEMCongruenceEitherCutOrDoNothing}.
\end{proof}

\begin{lem}\label{LEMPreserveLinkdnessOneStep}
Suppose $R\le_{sd} \mathbf A_1\times\mathbf A_2$,
$C_{1}\le_\mathcal {\TD(\sigma)}^{A_{1}} B_{1}\lll A_1$, $B_{2}\lll A_2$,
$S$ is a rectangular closure of $R$,
$R\cap (B_1\times B_2)\neq \varnothing$,
$S\cap (C_1\times B_2)\neq \varnothing$.
Then 
$R\cap (C_{1}\times B_{2})\neq \varnothing$.
\end{lem}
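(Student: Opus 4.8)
The statement concerns a single reduction step on the first coordinate only: $C_1 <_{\TD(\sigma)}^{A_1} B_1 \lll A_1$ and $B_2 \lll A_2$ unchanged. I would like to reduce this to the symmetric-in-both-coordinates situation already handled by Lemma~\ref{LEMPreserveLinkdnessOneStepAUX}. The natural idea is to introduce a \emph{trivial} dividing-type restriction on the second coordinate: set $C_2 = B_2$ and $\sigma_2 = A_2^2$ (the full congruence), so that $C_2 \le_{\TD(\sigma_2)}^{A_2} B_2 \lll A_2$ holds vacuously with the ``dividing block'' being all of $B_2$. Then the hypotheses of Lemma~\ref{LEMPreserveLinkdnessOneStepAUX} translate as: $R \cap (B_1 \times C_2) = R\cap(B_1\times B_2) \neq \varnothing$ (given), $R \cap (C_1 \times B_2) \neq \varnothing$, and $S \cap (C_1 \times C_2) = S \cap (C_1 \times B_2) \neq \varnothing$ (given). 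The one missing hypothesis of the AUX lemma is $R \cap (C_1 \times B_2) \neq \varnothing$, which is exactly what we are trying to prove, so this is circular as stated — I need a different route.

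Instead, the plan is to prove the lemma \textbf{directly}, mirroring the two-case analysis in the proof of Lemma~\ref{LEMPreserveLinkdnessOneStepAUX} but now only on the left side. Let $\delta_1$ be the intersection of all irreducible (dividing) congruences coming from $C_1 \lll A_1$, put $C_1' = C_1 \circ \delta_1$ and $B_1' = B_1 \circ \delta_1$; by Corollary~\ref{CORPropagateMultiplyByCongruence}(e) we have $C_1' <_{\TD(\sigma)}^{A_1} B_1' \lll A_1$. Replace $R$ by $R' = \delta_1 \circ R$ (composition on the left only); note $S$, being the rectangular closure, is stable under $\delta_1$ on coordinate $1$ because $\delta_1 \subseteq \ConOne(S,1)$ once $S$ is rectangular (here I would use that a rectangular relation is stable under the congruence induced on each coordinate, together with the fact that $\delta_1$ is below $\sigma$ and hence below the congruences witnessing the dividing steps). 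Thus $S \cap (C_1' \times B_2) \neq \varnothing$. First I would dispose of the easy case: if $R \cap (C_1' \times B_2) \neq \varnothing$, then because $C_1 = C_1' \cap B_1$ and an analogue of the ``easy case'' argument inside Lemma~\ref{LEMPreserveLinkdnessOneStepAUX} (choosing a minimal $\TD$-restriction $F_1 <_{T(\xi)}^{A_1} E_1$ on the left that kills the intersection, invoking Theorem~\ref{THMMainStableIntersection} to force $T = \TD$, and using $\xi \supseteq \delta_1$ to get $C_1' \cap F_1 = C_1' \cap E_1$) we conclude $R \cap (C_1 \times B_2) \neq \varnothing$. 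So it remains to produce a tuple in $R \cap (C_1' \times B_2)$, and for this I run exactly the left-linkedness dichotomy.

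The core step is the dichotomy on $\LeftLinked(R')$ restricted to $(B_1')^2$ modulo $\delta_1$. If $(\LeftLinked(R') \cap (B_1')^2)/\delta_1 = (B_1')^2/\delta_1$, then Lemma~\ref{LEMLeftLinkedStayFull} (applied with the congruence $\delta_1$, using that $B_1'/\delta_1 \cong B_1/\sigma$-flavoured quotients are BA and center free by the definition of a dividing congruence) gives $\LeftLinked(R' \cap (B_1' \times B_2))/\delta_1 = (B_1')^2/\delta_1$, which is linked hence connects any two blocks; combined with $R'\cap(B_1'\times B_2)\neq\varnothing$ this yields a tuple in $R' \cap (C_1' \times B_2)$. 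In the other case, $(\LeftLinked(R') \cap (B_1')^2)/\delta_1 \neq (B_1')^2/\delta_1$: since $R' \cap (B_1' \times B_2) \neq \varnothing$ there is some $C_1'' <_{\TD(\sigma)}^{A_1} B_1'$ with $R' \cap (C_1'' \times B_2) \neq \varnothing$; but $S \cap (C_1 \times B_2) \neq \varnothing$ forces $C_1' \times C_1'' \subseteq \LeftLinked(R')$ (the rectangular closure links the two blocks of $\sigma$-type), contradicting condition~(1) of Lemma~\ref{LEMCongruenceEitherCutOrDoNothing}. \textbf{The main obstacle} I anticipate is the bookkeeping connecting $S$ (the rectangular closure of $R$, on both coordinates) with $R'$ and $\LeftLinked$: I must check carefully that stability of $S$ under $\delta_1$ on the first coordinate really holds, that $S\cap(C_1'\times B_2)\neq\varnothing$ follows from $S\cap(C_1\times B_2)\neq\varnothing$, and that ``$S$ links $C_1'$ and $C_1''$'' is correctly formalized as $C_1'\times C_1'' \subseteq \LeftLinked(R')$ — essentially reproducing, one-sidedly, the argument in Lemma~\ref{LEMPreserveLinkdnessOneStepAUX} and making sure no hidden use of the right-hand dividing structure sneaks in.
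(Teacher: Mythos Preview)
Your direct approach is workable in principle, but you have missed a much cleaner reduction. The paper does \emph{not} redo the AUX argument one-sidedly; instead it manufactures a genuine second-coordinate dividing step and then invokes Lemma~\ref{LEMPreserveLinkdnessOneStepAUX} as a black box.

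Concretely: assume $R\cap(C_1\times B_2)=\varnothing$. Since $S\cap(C_1\times B_2)\neq\varnothing$ forces $C_1\cap\proj_1(R)\neq\varnothing$, we have $R\cap(C_1\times A_2)\neq\varnothing$. Walking down the chain $B_2\lll A_2$, choose $C_2'$ and $B_2'$ with $B_2\lll^{A_2}C_2'<_T^{A_2}B_2'\lll A_2$ such that $R\cap(C_1\times B_2')\neq\varnothing$ but $R\cap(C_1\times C_2')=\varnothing$. Because $B_2\subseteq C_2'$ we have $R\cap(B_1\times C_2')\supseteq R\cap(B_1\times B_2)\neq\varnothing$, so Theorem~\ref{THMMainStableIntersection} (via Corollary~\ref{CORMainStableIntersection}) forces $T=\TD$. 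All hypotheses of Lemma~\ref{LEMPreserveLinkdnessOneStepAUX} are now satisfied for the pair $C_1<_{\TD(\sigma)}^{A_1}B_1$ and $C_2'<_{\TD}^{A_2}B_2'$ (for the rectangular-closure hypothesis note $S\cap(C_1\times C_2')\supseteq S\cap(C_1\times B_2)\neq\varnothing$), yielding $R\cap(C_1\times C_2')\neq\varnothing$, a contradiction.

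Your route would essentially reprove Lemma~\ref{LEMPreserveLinkdnessOneStepAUX} from scratch, and the bookkeeping obstacles you correctly flag (stability of $S$ under $\delta_1$, translating the $S$-hypothesis into a $\LeftLinked(R')$ statement, verifying that $B_1'/\delta_1$ is BA and center free for Lemma~\ref{LEMLeftLinkedStayFull}) are real and would need care. The paper's four-line argument sidesteps all of that by exploiting the chain $B_2\lll A_2$ to produce a nontrivial dividing reduction on the right, rather than the vacuous one $C_2=B_2$ you first tried and discarded.
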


\begin{proof}
Assume the converse.
Consider $C_{2}'$ and $B_{2}'$ such that 
$B_{2}\lll^{A_2} C_{2}'<_{T}^{A_{2}} B_{2}'\lll A_{2}$, 
$R\cap (C_{1}\times C_{2}') = \varnothing$,
and $R\cap (C_{1}\times B_{2}') \neq \varnothing$.
By Theorem \ref{THMMainStableIntersection}
$T = \TD$. 
Then by Lemma \ref{LEMPreserveLinkdnessOneStepAUX} 
$R \cap ( C_{1}\times C_{2}')\neq\varnothing$, which contradicts our 
assumptions.
\end{proof}
 
\begin{LEMPreserveLinkdnessLEM}
Suppose $R\le_{sd} \mathbf A_1\times\mathbf A_2$,
%$\mathcal T\in\{\TMPC,\TML\}$,
$C_{i}\le_\mathcal {\TMD}^{A_{i}} B_{i}\lll A_i$
for $i\in\{1,2\}$, 
$S$ is a rectangular closure of $R$,
$R\cap (B_1\times B_2)\neq \varnothing$,
$S\cap (C_1\times C_2)\neq \varnothing$.
Then 
$R\cap (C_{1}\times C_{2})\neq \varnothing$.
\end{LEMPreserveLinkdnessLEM}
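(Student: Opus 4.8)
The plan is to reduce the statement about multi-type subuniverses $C_i \le_{\TMD}^{A_i} B_i$ to the single-step version already proved in Lemma~\ref{LEMPreserveLinkdnessOneStep}, by peeling off the dividing congruences one coordinate at a time. Write $C_1 = C_{1,1}\cap\dots\cap C_{1,p}$ with each $C_{1,j}<_{\TD}^{A_1}B_1$, and similarly $C_2 = C_{2,1}\cap\dots\cap C_{2,q}$. By Lemma~\ref{LEMMultiTypeStillStable} each of $C_1\lll^{A_1}B_1$ and $C_2\lll^{A_2}B_2$, so we may arrange a chain $B_1 = F_0 \supsetneq F_1 \supsetneq \dots \supsetneq F_p = C_1$ with $F_{j}<_{\TD}^{A_1}F_{j-1}$, using Lemma~\ref{LEMIntersectALL}(t) to see that intersecting the successive $C_{1,j}$'s with what we already have preserves the $\TD$-type at each step; do the same on the second coordinate. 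The idea is then to walk along this double chain, at each step invoking the one-step lemma to push the nonemptiness of the rectangular closure intersection inward by one level.

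First I would handle a single step: suppose we already know $R\cap (F\times G)\neq\varnothing$ for some $B_1\lll^{A_1}F$, $B_2\lll^{A_2}G$, and we want to descend on the first coordinate from $F$ to $F'$ where $F'<_{\TMD(\sigma)}^{A_1}F$ — actually $F'<_{\TD}^{A_1}F$ in the peeled chain. We need to know that the rectangular closure $S$ of $R$ still meets $F'\times G$. Here the point is that the rectangular closure commutes well enough with reductions: since $S\cap(C_1\times C_2)\neq\varnothing$ and $C_1\subseteq F'$, $C_2\subseteq G$, we trivially have $S\cap(F'\times G)\neq\varnothing$, so the hypothesis of Lemma~\ref{LEMPreserveLinkdnessOneStep} (with $R$, $C_1$ replaced by $F'$, $B_1$ by $F$, $B_2$ by $G$) is met, giving $R\cap(F'\times G)\neq\varnothing$. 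Iterating down the first chain gives $R\cap(C_1\times B_2)\neq\varnothing$, and then iterating down the second chain — now with the first coordinate already fixed to $C_1$, again using Lemma~\ref{LEMPreserveLinkdnessOneStep} with the roles of the coordinates swapped — gives $R\cap(C_1\times C_2)\neq\varnothing$, which is what we want.

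The subtle point, and the one I expect to be the main obstacle, is making sure that at each intermediate stage the object playing the role of ``$C_1$'' is a genuine $\TD$-subuniverse of the object playing the role of ``$B_1$'', and that the rectangular-closure hypothesis is really available at that stage — i.e.\ that it suffices to know $S\cap(C_1\times C_2)\neq\varnothing$ once at the start rather than needing $S$-information at every intermediate level. The first concern is dispatched by Lemma~\ref{LEMMultiTypeStillStable} together with Lemma~\ref{LEMIntersectALL}(t) (intersection with a $\lll$-subuniverse preserves the type of a reduction), which is exactly what lets us realize $C_1\le_{\TMD}B_1$ as an honest descending $\TD$-chain rather than just an intersection. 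The second is handled as noted above: monotonicity of $S\cap(\cdot\times\cdot)$ in both arguments means the global hypothesis $S\cap(C_1\times C_2)\neq\varnothing$ implies all the weaker ones needed at intermediate steps, because $C_1$ and $C_2$ are the smallest sets in their respective chains. So the proof is really just: realize the multi-type subuniverses as $\TD$-chains via Lemmas~\ref{LEMMultiTypeStillStable} and~\ref{LEMIntersectALL}, then apply Lemma~\ref{LEMPreserveLinkdnessOneStep} once per link, first along coordinate~1 down to $C_1$ (keeping coordinate~2 at $B_2$) and then along coordinate~2 down to $C_2$ (keeping coordinate~1 at $C_1$).
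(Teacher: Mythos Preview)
Your proposal is correct and is exactly the approach the paper takes: its proof is the one-liner ``combine Lemma~\ref{LEMMultiTypeStillStable} and Lemma~\ref{LEMPreserveLinkdnessOneStep}'', and what you have written is precisely the unpacking of that combination --- realize each $C_i\le_{\TMD}B_i$ as a $\TD$-chain and walk down one coordinate at a time, the monotonicity of $S\cap(\cdot\times\cdot)$ supplying the rectangular-closure hypothesis at every step. (Your mention of Lemma~\ref{LEMIntersectALL}(t) is already subsumed by Lemma~\ref{LEMMultiTypeStillStable}, so you do not need to cite it separately.)
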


\begin{proof}
To prove the lemma it is sufficient to combine 
Lemma \ref{LEMMultiTypeStillStable}
and Lemma \ref{LEMPreserveLinkdnessOneStep}.
\end{proof}

\begin{lem}\label{LEMMultiplyByAllLinear}
$C_{1}<_{\mathcal M T}^{A} B_{1}\lll A$,
$T\in\{\TPC,\TL,\TD\}$,
$B_{2}\lll A$, 
$C_{1}\cap B_{2}=\varnothing$, 
$B_{1}\cap B_{2}\neq\varnothing$.
Then 
$(C_{1}\circ (\omega_{1}\cap \dots \cap \omega_{s})) \cap B_{2}=\varnothing$, 
where 
$\omega_{1},\dots,\omega_s$ are all congruences
of type $T$ 
on $\mathbf A$
such that $\omega_{i}^{*}\supseteq B_{1}^{2}$.
%the linear congruences
%on $\mathbf A$
%such that $\omega_{i}^{*}\supseteq B_{1}^{2}$.
\end{lem}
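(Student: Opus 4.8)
Looking at this statement, I need to prove Lemma~\ref{LEMMultiplyByAllLinear}: given $C_1 <_{\mathcal{M}T}^A B_1 \lll A$, $B_2 \lll A$, with $C_1 \cap B_2 = \varnothing$ and $B_1 \cap B_2 \neq \varnothing$, the set $C_1 \circ \sigma$ where $\sigma = \omega_1 \cap \dots \cap \omega_s$ (the intersection of all type-$T$ congruences with $\omega_i^* \supseteq B_1^2$) still misses $B_2$.

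\textbf{Approach.} The plan is to show that each individual congruence $\omega_i$ of type $T$ with $\omega_i^* \supseteq B_1^2$ already satisfies $(C_1 \circ \omega_i) \cap B_2 = \varnothing$, and then intersecting finitely many such congruences preserves this property since $C_1 \circ (\omega_i \cap \omega_j) \subseteq (C_1 \circ \omega_i) \cap (C_1 \circ \omega_j)$ — actually more care is needed here because $\circ$ does not distribute over $\cap$ in general, so the real work is to handle the full intersection $\sigma$ directly. The cleaner route: suppose toward a contradiction that $(C_1 \circ \sigma) \cap B_2 \neq \varnothing$, i.e.\ there exist $c \in C_1$ and $b \in B_2$ with $(c,b) \in \sigma$, hence $(c,b) \in \omega_i$ for every $i$. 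Note $b \in B_1 \circ \sigma$ too (since $c \in B_1$), and since $\omega_i^* \supseteq B_1^2$ while $b \in B_1 \circ \omega_i$, in fact $b$ lies in $B_1 \circ \omega_i^*$-nothing new-but the point is $(c,b)$ witnesses a relation between an element of $C_1$ and an element outside $C_1$.

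\textbf{Key steps.} First I would invoke Lemma~\ref{LEMMaximalMultExtention}: take $\sigma'$ to be the \emph{maximal} congruence on $\mathbf A$ such that $(C_1 \circ \sigma') \cap B_2 = \varnothing$ (this exists since $C_1 \cap B_2 = \varnothing$ means the trivial congruence works, and the union of a chain of such works by compactness of finite algebras). By Lemma~\ref{LEMMaximalMultExtention}, $\sigma' = \omega_1' \cap \dots \cap \omega_t'$ where each $\omega_j'$ is a congruence of type $T$ with $(\omega_j')^* \supseteq B_1^2$. So $\sigma'$ is an intersection of type-$T$ congruences each having $(\omega_j')^* \supseteq B_1^2$; hence by the \emph{definition} of $\sigma$ as the intersection of \emph{all} such congruences, we get $\sigma \subseteq \sigma'$. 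Therefore $C_1 \circ \sigma \subseteq C_1 \circ \sigma'$, and since $(C_1 \circ \sigma') \cap B_2 = \varnothing$, we conclude $(C_1 \circ \sigma) \cap B_2 = \varnothing$, as desired.

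\textbf{Main obstacle.} The subtle point is verifying that Lemma~\ref{LEMMaximalMultExtention} applies verbatim: it requires $C_1 <_{\mathcal{M}T}^A B_1 \lll A$, $B_2 \lll A$, $C_1 \cap B_2 = \varnothing$, and $B_1 \cap B_2 \neq \varnothing$ — exactly our hypotheses — and it also implicitly requires that such a maximal $\sigma'$ exists, which holds because the family of congruences $\nu$ with $(C_1 \circ \nu) \cap B_2 = \varnothing$ is closed under joins (if $(C_1\circ\nu_1)\cap B_2 = (C_1\circ\nu_2)\cap B_2 = \varnothing$ one must check $(C_1 \circ (\nu_1 \vee \nu_2)) \cap B_2 = \varnothing$; this is not completely automatic and is presumably handled inside the proof of Lemma~\ref{LEMMaximalMultExtention} or follows from the structure theory, so I would cite it rather than re-prove it). Once the maximal $\sigma'$ is in hand, the decomposition from Lemma~\ref{LEMMaximalMultExtention} does all the remaining work, and the inclusion $\sigma \subseteq \sigma'$ is immediate from how $\sigma$ is defined — so the entire proof reduces to a one-line application once the right maximal object is chosen. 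The only thing to be careful about is matching the type hypothesis: Lemma~\ref{LEMMaximalMultExtention} is stated for $T \in \{\TBA,\TC,\TS,\TPC,\TL,\TD\}$ via the convention, and here $T \in \{\TPC,\TL,\TD\}$, which is covered.
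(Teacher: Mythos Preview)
Your argument is circular. In the paper, Lemma~\ref{LEMMaximalMultExtention} is proved \emph{using} Lemma~\ref{LEMMultiplyByAllLinear}: in Case~2 of that proof one factors by the maximal $\sigma$, obtains $C_{1}/\sigma<_{\mathcal{M}T}B_{1}/\sigma$, and then invokes Lemma~\ref{LEMMultiplyByAllLinear} on $A/\sigma$ to get $(C_{1}/\sigma\circ(\delta_{1}\cap\dots\cap\delta_{r}))\cap B_{2}/\sigma=\varnothing$. So you cannot cite Lemma~\ref{LEMMaximalMultExtention} here; the logical dependency goes the other way.

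The paper's actual proof of Lemma~\ref{LEMMultiplyByAllLinear} proceeds by induction on the size of $B_{1}$ (downward, starting from $B_{1}=A$). One writes $C_{1}=C_{1}^{1}\cap\dots\cap C_{1}^{t}$ with $C_{1}^{i}<_{T(\sigma_{i})}^{A}B_{1}$, notes that each $\sigma_{i}$ is among the $\omega_{j}$, and pushes each factor out to $C_{1}^{i}\circ\sigma_{i}<_{T(\sigma_{i})}^{A}B_{1}\circ\sigma_{i}\lll A$ via Corollary~\ref{CORPropagateMultiplyByCongruence}. Then Theorem~\ref{THMMainStableIntersection} is applied to the empty intersection $\bigcap_{i}(C_{1}^{i}\circ\sigma_{i})\cap B_{1}\cap B_{2}=\varnothing$: either the $B_{1}$-factor is redundant (Case~1, giving the conclusion directly since $\bigcap_{i}(C_{1}^{i}\circ\sigma_{i})\supseteq C_{1}\circ\bigcap_{j}\omega_{j}$), or some step $B_{1}'<_{\TD}^{A}B_{1}''$ in the chain $B_{1}\lll A$ is essential (Case~2), in which case one applies the inductive hypothesis with the strictly larger $B_{1}''$ in place of $B_{1}$. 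This induction is the missing idea in your proposal.
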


\begin{proof}
We prove the claim by induction
on the size of $B_{1}$ starting with $B_{1} = A$. 
Thus, the inductive assumption is that the lemma holds for all 
greater $B_{1}$.

Let $C_{1} = C_{1}^{1}\cap \dots\cap C_{1}^{t}$, 
where $C_{1}^{i}<_{\TD(\sigma_{i})}^{A}B_{1}$ for every $i\in[t]$.
Notice that $\sigma_{i}\in\{\omega_1,\dots,\omega_s\}$.
By the definition of $<_{\TD}^{A}$ we have
$C_{1}^{i} = (C_{1}^{i}\circ \sigma_{i})\cap B_{1}$.
By Corollary \ref{CORPropagateMultiplyByCongruence}(e)
and (f) we have
$C_{1}^{i}\circ \sigma_{i}<_{\TD(\sigma_{i})}^{A}B_{1}\circ\sigma_{i}\lll A$.
Applying Theorem \ref{THMMainStableIntersection}
to $
(C_{1}^{1}\circ \sigma_{1})\cap\dots
\cap (C_{1}^{t}\circ \sigma_{t})\cap
B_{1}\cap B_{2}=\varnothing$ we obtain one of the two cases:

Case 1. $(C_{1}^{1}\circ \sigma_{1})\cap\dots
\cap (C_{1}^{t}\circ \sigma_{t})\cap B_{2}=\varnothing$. 
Since 
$$(C_{1}^{1}\circ \sigma_{1})\cap\dots
\cap (C_{1}^{t}\circ \sigma_{t})\supseteq 
(C_{1}^{1}\cap \dots \cap C_1^{t})\circ 
(\sigma_{1}\cap \dots\cap \sigma_{t})\supseteq 
C_{1}\circ 
(\omega_{1}\cap \dots\cap \omega_{s}),$$
we derive that the required intersection is empty and complete this case.

Case 2. There exist 
$B_{1}\lll^{A} B_{1}'<_{\TD}^{A}B_{1}''\lll^{A} A$ 
such that $\bigcap_{i\in[t]}(C_1^{i}\circ \sigma_{i})\cap B_{1}'\cap B_{2}=\varnothing$
and $\bigcap_{i\in[t]}(C_1^{i}\circ \sigma_{i})\cap B_{1}''\cap B_{2}\neq\varnothing$.
By Lemma \ref{LEMIntersectALL}(i)
$\bigcap_{i\in[t]}(C_1^{i}\circ \sigma_{i})\cap B_{2}\lll A$.
Applying the inductive assumption to 
$B_{1}''$ we derive that 
$(B_{1}'\circ \bigcap_{i\in[s]}\omega_{s}) \cap\bigcap_{i\in[t]}(C_1^{i}\circ \sigma_{i})\cap B_{2}=\varnothing$.
Since 
$$(B_{1}'\circ \bigcap_{i\in[s]}\omega_{s}) \cap \bigcap_{i\in[t]}(C_1^{i}\circ \sigma_{i})
\supseteq 
(C_{1}\circ \bigcap_{i\in[s]}\omega_{s}) \cap \bigcap_{i\in[t]}(C_1^{i}\circ \sigma_{i})
\supseteq C_{1}\circ \bigcap_{i\in[s]}\omega_{s}$$
we obtain the required condition.
\end{proof}

\begin{LEMMaximalMultExtentionLEM}
Suppose $C_{1}<_{\mathcal{M}T}^{A}B_1\lll A$, 
$B_{2}\lll A$, 
$C_{1}\cap B_{2} = \varnothing$, 
$B_{1}\cap B_{2}\neq \varnothing$,
$\sigma$ is a maximal congruence on $\mathbf A$ such that 
$(C_{1}\circ \sigma)\cap B_2 = \varnothing$.
Then 
$\sigma = \omega_{1}\cap\dots\cap \omega_{s}$, 
where 
$\omega_{1},\dots,\omega_s$ are %all
congruences
of type $T$ 
on $\mathbf A$
such that $\omega_{i}^{*}\supseteq B_{1}^{2}$.
\end{LEMMaximalMultExtentionLEM}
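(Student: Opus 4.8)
The statement asks us to identify a maximal congruence $\sigma$ with $(C_1\circ\sigma)\cap B_2=\varnothing$ and to show it is exactly the intersection of the type-$T$ congruences whose $*$-closure contains $B_1^2$. I would prove the two inclusions separately. The easier-looking direction is that $\omega_1\cap\dots\cap\omega_s\subseteq\sigma$: by Lemma~\ref{LEMMultiplyByAllLinear} applied to $C_1<_{\mathcal{M}T}^A B_1\lll A$ and $B_2$, we already know that $(C_1\circ(\omega_1\cap\dots\cap\omega_s))\cap B_2=\varnothing$, so $\omega_1\cap\dots\cap\omega_s$ is one of the congruences competing to be $\sigma$; by maximality of $\sigma$ we get $\omega_1\cap\dots\cap\omega_s\subseteq\sigma$. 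The reverse inclusion is the heart of the matter: I must show $\sigma$ is itself an intersection of type-$T$ congruences each with $\omega^*\supseteq B_1^2$.

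For that direction, the plan is to analyze the maximality of $\sigma$ via the machinery of stable intersections. Set $E=C_1\circ\sigma$; then $E\circ\sigma=E$, $E\cap B_2=\varnothing$, and $B_1\cap B_2\neq\varnothing$. First I would establish that $E\cap B_1$ behaves like a multi-type subuniverse of $B_1$: since $C_1<_{\mathcal{M}T}^{A}B_1$ and $\sigma$ is a congruence, Corollary~\ref{CORPropagateMultiplyByCongruence} together with Lemma~\ref{LEMMultiTypeStillStable} should give that $C_1\circ\sigma$ (or its intersection with appropriate reductions) is still controlled by congruences of type $T$. Then, because $\sigma$ is \emph{maximal} with $(C_1\circ\sigma)\cap B_2=\varnothing$, for any congruence $\tau\supsetneq\sigma$ we have $(C_1\circ\tau)\cap B_2\neq\varnothing$. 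I would use this to run the argument of Lemma~\ref{LEMParalPropertyFromCrucialInMultiType} / Lemma~\ref{LEMMaximalMultExtention}'s companion lemmas: pick $C<_{\mathcal{M}T}^{D}B\lll D$-type data witnessing the first place where shrinking kills the intersection, and apply Corollary~\ref{CORMainStableIntersection} (via Theorem~\ref{THMMainStableIntersection}) to $R\cap(\dots)=\varnothing$ with $R$ the graph $\{(a/\sigma, a)\mid a\in B_1\cap B_2\}$ or a suitable product relation. This forces the type to be $T$ and, crucially, pins $\sigma$ down as a meet of type-$T$ congruences: by Lemma~\ref{LEMMaximalMultExtention}'s own internal step (the use of the maximal-congruence argument together with Lemma~\ref{LEMMultiplyByAllLinear} in the reverse direction) we get that $\sigma$ cannot be properly smaller than $\omega_1\cap\dots\cap\omega_s$, hence equality. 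The condition $\omega_i^*\supseteq B_1^2$ comes out of the fact that each dividing congruence we extract is dividing \emph{for} $B_1\lll A$, so by definition $B_1^2\subseteq\sigma_i^*\subseteq\omega_i^*$.

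Concretely, the induction structure would mirror Lemma~\ref{LEMMultiplyByAllLinear}: induct on $|B_1|$ downward from $B_1=A$. In the base/inductive step, write $C_1=C_1^1\cap\dots\cap C_1^t$ with $C_1^i<_{\TD(\sigma_i)}^{A}B_1$; note $\sigma_i^*\supseteq B_1^2$, so each $\sigma_i\in\{\omega_1,\dots,\omega_s\}$ and hence $\omega_1\cap\dots\cap\omega_s\subseteq\sigma_1\cap\dots\cap\sigma_t$. Combining with the $\subseteq$ inclusion proved above we already have $\omega_1\cap\dots\cap\omega_s\subseteq\sigma$. For the other inclusion we must rule out that $\sigma$ is strictly bigger: if it were, then since $\sigma\supseteq\omega_1\cap\dots\cap\omega_s$ fails to be an intersection of type-$T$ congruences with the required $*$-property, we could still enlarge $C_1\circ(\omega_1\cap\dots\cap\omega_s)$ up to $C_1\circ\sigma$, but then applying Lemma~\ref{LEMMultiplyByAllLinear} to the data and a carefully chosen reduction of $B_1$ contradicts maximality unless $\sigma$ is exactly $\omega_1\cap\dots\cap\omega_s$.

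\textbf{Main obstacle.} The delicate point will be the reverse inclusion, specifically showing that a \emph{maximal} congruence with the stated emptiness property must be an intersection of type-$T$ congruences at all (rather than some larger congruence coming from mixing BA/central phenomena) — this is exactly where one needs Theorem~\ref{THMMainStableIntersection} to certify that only one type $T$ can occur, and where one must be careful that the relations fed into Corollary~\ref{CORMainStableIntersection} are subdirect and the reductions $\lll$-chains are set up so the hypothesis ``$C_i<_{T_i(\sigma_i)}B_i\lll A_i$'' literally applies. Getting the bookkeeping of which congruences survive the $\circ\sigma$ operation (using Corollary~\ref{CORPropagateMultiplyByCongruence}(e) and Lemma~\ref{LEMMultiTypeStillStable}) exactly right, so that $C_1\circ\sigma$ is still a multi-type-$T$ object with the same bound on the $*$-closure, is the technical crux.
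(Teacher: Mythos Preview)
Your ``easier direction'' is fine and is indeed how the paper finishes: once you have produced congruences $\omega_i$ of type $T$ with $\omega_i^*\supseteq B_1^2$ and with $(C_1\circ\bigcap_i\omega_i)\cap B_2=\varnothing$, maximality of $\sigma$ gives $\bigcap_i\omega_i\subseteq\sigma$. The problem is your reverse direction. Your argument there never becomes concrete: you talk about showing that $E=C_1\circ\sigma$ ``behaves like a multi-type subuniverse'' via Corollary~\ref{CORPropagateMultiplyByCongruence}, but that corollary does not give you $\mathcal{M}T$-structure for $C_1\circ\sigma$ unless you already know $\sigma$ sits below the defining congruences; your final paragraph (``if $\sigma$ were strictly bigger \dots\ contradicts maximality'') is circular, since enlarging $C_1\circ(\bigcap_i\omega_i)$ to $C_1\circ\sigma$ still keeps the intersection with $B_2$ empty and yields no contradiction. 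You also flag the BA/central obstruction as the ``main obstacle'' but never say how to dispose of it.

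The paper's proof resolves both issues with one move you are missing: \emph{factor by $\sigma$}. By Corollary~\ref{CORPropagationModuloCongruence}(m), either $B_1/\sigma$ fails to be S-free, or $C_1/\sigma\le_{\mathcal{M}T}^{A/\sigma}B_1/\sigma$. In the first case one gets $E<_{\TS}B_1/\sigma$, and Theorem~\ref{THMMainStableIntersection} forces $E$ to meet both $C_1/\sigma$ and $B_2/\sigma$, contradicting $(C_1/\sigma)\cap(B_2/\sigma)=\varnothing$. In the second case one applies Lemma~\ref{LEMMultiplyByAllLinear} \emph{in the quotient $A/\sigma$} to get type-$T$ congruences $\delta_1,\dots,\delta_r$ on $A/\sigma$ with $(C_1/\sigma\circ\bigcap_j\delta_j)\cap B_2/\sigma=\varnothing$. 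Their pullbacks $\omega_j$ to $A$ automatically satisfy $\omega_j\supseteq\sigma$, inherit type $T$ and $\omega_j^*\supseteq B_1^2$, and still give $(C_1\circ\bigcap_j\omega_j)\cap B_2=\varnothing$. Now maximality of $\sigma$ yields $\sigma=\bigcap_j\omega_j$ in one stroke, with no separate reverse inclusion to chase. The step you are missing is precisely this passage to $A/\sigma$: it is what makes the produced $\omega_j$ contain $\sigma$ for free and what turns the S-free issue into a clean dichotomy rather than a lingering obstacle.
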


\begin{proof}
By Corollary \ref{CORPropagationModuloCongruence}(m)
consider two cases:

Case 1. There exists $E<_{S}B_{1}/\sigma$.
By Theorem \ref{THMMainStableIntersection}
$E\cap B_{2}/\sigma\neq\varnothing$ and 
$E\cap C_{1}/\sigma\neq\varnothing$, 
hence $C_{1}/\sigma\cap B_{2}/\sigma\neq\varnothing$, which contradicts our assumptions.

Case 2. $C_{1}/\sigma\le_{\mathcal{M}T}B_{1}/\sigma$. 
Since $C_{1}/\sigma\cap B_{2}/\sigma=\varnothing$
and $B_{1}/\sigma\cap B_{2}/\sigma\neq\varnothing$, 
we have $C_{1}/\sigma<_{\mathcal{M}T}B_{1}/\sigma$.
By Lemma \ref{LEMMultiplyByAllLinear}
$(C_{1}/\sigma\circ (\delta_{1}\cap \dots\cap \delta_{r}))
\cap B_{2}/\sigma=\varnothing$,
where 
$\delta_{1},\dots,\delta_r$ are all the congruences
of type $T$ such that $\delta_{i}^{*}\supseteq B_{1}^{2}$.
%for  $B_{1}/\sigma\lll\mathbf A/\sigma$.
%such that 
%$\delta_{i}^{*}\supseteq (B_{1}/\sigma)^{2}$.
Extend each congruence $\delta_{i}$ to $\mathbf A$ so that 
$\mathbf A/\omega_{i}\cong (\mathbf A/\sigma)/\delta_{i}$.
Then $\omega_{i}^{*}\supseteq B_{1}^{2}$ and
%such that 
%$\omega_{i}^{*}$ contains $B_{1}^{2}$. 
$(C_{1}\circ(\omega_{1}\cap \dots\cap\omega_{r}))\cap B_{2}=\varnothing$.
Since $\omega_{1}\cap \dots\cap\omega_{r}\supseteq\sigma$ 
and $\sigma$ is a maximal congruence satisfying this condition, 
we obtain $\sigma = \omega_{1}\cap \dots\cap\omega_{r}$, 
which completes the proof.
\end{proof}

\bibliographystyle{plain}
\bibliography{refs}
\end{document}